\newcommand{\MyTitle}{On Time in Quantum Mechanics}
\title{\MyTitle}
\newcommand{\MyAuthor}{Nicola Vona}
\author{\MyAuthor}
\newcommand{\MySubject}{Time in Quantum Mechanics}
\newcommand{\MyPdfKeywords}{Time, Quantum Mechanics, Bohmian Mechanics, Uncertainty relations}  
\newcommand{\red}[1]{\textcolor{red}{#1}}
\newcommand{\blackdiamond}
	{\protect\tikz \protect\fill (0ex,0.5ex) -- (0.5ex,1ex) -- (1ex,0.5ex) -- (0.5ex,0ex) ;}
\LetLtxMacro\cite\citep
\newcommand\blfootnote[1]{%
  \begingroup
  \renewcommand\thefootnote{}\footnote{#1}%
  \addtocounter{footnote}{-1}%
  \endgroup
}
\LetLtxMacro\oldlistoftodos\listoftodos
\renewcommand{\listoftodos}{
\makeatletter
\let\oldchapter\chapter
\let\chapter\section
\protect\hypertarget{todolist}{} 
\oldlistoftodos
\let\chapter\oldchapter
\makeatother}
\LetLtxMacro\oldtodo\todo
\newcommand{\smalllinkedtodo}[2][]{\oldtodo[caption={#2}, #1]{%
\if@todonotes@inlinenote
#2
\else
\renewcommand{\baselinestretch}{0.85}\selectfont\footnotesize#2\par
\fi
\hfill\hyperlink{todolist}{$\uparrow$}
}}
\renewcommand{\todo}[2][]{\smalllinkedtodo[#1]{#2}} 
\definecolor{defaultmarkcolor}{rgb}{0,0.5,0}
\renewcommand{\markversion}[2][defaultmarkcolor]
	{
	\ifoptionfinal {\includeversion{#2}} {	
	 \newenvironment{#2}
			{\bigskip\bigskip\bigskip
				\todo[inline,color=#1,caption={Begin #2}]{\color{white}Begin #2}
					\color{#1!50!black} 
				 }
			{
				\todo[inline,color=#1,nolist
						]{\color{white}End #2}
				\bigskip\bigskip}
	}}
\numberwithin{equation}{section}
\newcommand{\povm}{\textsc{povm}}
\newcommand{\of}[1]{(#1)}
\newcommand{\vect}[1]{\mathbf{#1}}
\newcommand{\syst}[1]{\left\{ \begin{aligned} #1 \end{aligned}  \right.}    
\newcommand{\de}{d}
\newcommand{\thalf}[1][1]{\tfrac{#1}{2}} 
\newcommand{\abs}[1]{\lvert#1\rvert} 
\newcommand{\Abs}[1]{\left \lvert#1\right \rvert}  
\newcommand{\intdef}[2]{\int_{#1}^{#2}} 
\newcommand{\intdefde}[3]{\intdef{#1}{#2} \de #3}   
\newcommand{\1}{\mathbf{1}}  
\newcommand{\N}{\mathbb{N}}    
\newcommand{\R}{\mathbb{R}}    
\newcommand{\C}{\mathbb{C}}    
\newcommand{\Prob}{\mathbb{P}}    
\newcommand{\PP}{\Prob}
\newcommand{\norm} [1]{\| #1 \|} 
\newcommand{\intinde}[1]{\int \de #1}  
\newcommand{\der}[3]{\frac{\de^{#1}#2}{\de {#3}^{#1}}}   
\newcommand{\Der}[2]{\frac{\de^{#1}}{\de {#2}^{#1}}}    
\DeclareMathOperator{\erf}{Erf}   
\renewcommand\Re{\operatorname{Re}}
\renewcommand\Im{\operatorname{Im}}
\let\tempepsilon\varepsilon
\let\varepsilon\epsilon
\let\epsilon\tempepsilon
\newcommand{\cpc}{\ifmmode C^+\else$C^+$\fi}
\newcommand{\cpcI}{\ifmmode\cpc_I\else$\cpc_I$\fi}
\newcommand{\dkp}[1]{\dot{#1}} 
\newcommand{\ddkp}[1]{\ddot{#1}} 
\newcommand{\dddkp}[1]{\dddot{#1}} 
\newcommand{\dr}[1]{{#1}'} 
\newcommand{\ddr}[1]{{#1}''} 
\DeclareMathOperator{\supp}{supp}
\newcommand{\mean}[1]{\langle  #1  \rangle}
\newcommand{\meano}[1]{\langle  #1  \rangle_{\oldstylenums{0}}}
\newcommand{\var}[1]{\Var}
\newcommand{\Var}{\mathrm{Var}\,}
\newcommand{\Varo}{\mathrm{Var}_{\oldstylenums{0}}\,}
\newcommand{\Pio}{\Pi^{\oldstylenums{0}}}
\newcommand{\Po}{P_{\oldstylenums{0}}}
\newcommand{\tPo}{\tilde{P}_{\oldstylenums{0}}}
\newcommand{\talpha}{{\tilde\alpha}}
\newcommand{\tbeta}{{\tilde\beta}}
\newcommand{\CV}{{\mathcal C_V}}
\newtheoremstyle{lemma}
	{\topsep}
	{\topsep}
	{\itshape}
	{0pt}
	{\bfseries\sffamily}
	{.}
	{5pt plus 1pt minus 1pt}
	{\thmname{#1}\thmnumber{ #2}\thmnote{ (#3)}}
\theoremstyle{lemma}
\newtheorem{definition}{Definition}[chapter]
\newtheorem{theorem}{Theorem}[chapter]
\newtheorem{lemma}{Lemma}[chapter]
\newtheorem{corollary}{Corollary}[chapter]
\newtheorem{hypothesis}{Hypothesis}[chapter]
\newtheorem{remark}{Remark}[chapter]
\LetLtxMacro\oldproof\proof
\newcommand{\newproof}[1][]{\ifstrempty{#1}{\oldproof}{\oldproof[Proof (#1)]}}
\LetLtxMacro\proof\newproof
\newcommand{\QED}{}
\titleformat{\chapter}[display]
	{\filleft\sffamily\Large}
	{\itshape\chaptertitlename{} \thechapter}
	{1ex}
	{
	\bfseries\textsf}
	[]
\titlespacing{\chapter}{3cm}{0pt}{7\baselineskip}[0pt]
\titleformat{\section}[hang]{\bfseries\sffamily\large}{\thesection}{2ex}{}[]
\titleformat{\subsection}[hang]{\sffamily\bfseries}{}{0pt}{}[]
\titleformat{\subsubsection}[hang]{\sffamily
\itshape}{}{0pt}{}[]
\titlespacing*{\subsubsection}{0pt}{\baselineskip}{0pt}[0pt]
\titleformat{\paragraph}[runin]{\itshape}{}{0pt}{}[.]
\newlength{\pagespacing}
\newcommand{\question}[1]{\begin{center}\emph{#1}\end{center}}
\def\expandafter\UrlBreaks\expandafter{\UrlBreaks
  \do\a\do\b\do\c\do\d\do\e\do\f\do\g\do\h\do\i\do\j%
  \do\k\do\l\do\m\do\n\do\o\do\p\do\q\do\r\do\s\do\t%
  \do\u\do\v\do\w\do\x\do\y\do\z\do\A\do\B\do\C\do\D%
  \do\E\do\F\do\G\do\H\do\I\do\J\do\K\do\L\do\M\do\N%
  \do\O\do\P\do\Q\do\R\do\S\do\T\do\U\do\V\do\W\do\X%
  \do\Y\do\Z}
\DeclareMathAlphabet{\mathpzc}{OT1}{pzc}{m}{it}
\let\mathcal\mathpzc
\DeclareMathSymbol{\Gamma}{\mathalpha}{letters}{"00}
\DeclareMathSymbol{\Delta}{\mathalpha}{letters}{"01}
\DeclareMathSymbol{\Theta}{\mathalpha}{letters}{"02}
\DeclareMathSymbol{\Lambda}{\mathalpha}{letters}{"03}
\DeclareMathSymbol{\Xi}{\mathalpha}{letters}{"04}
\DeclareMathSymbol{\Pi}{\mathalpha}{letters}{"05}
\DeclareMathSymbol{\Sigma}{\mathalpha}{letters}{"06}
\DeclareMathSymbol{\Upsilon}{\mathalpha}{letters}{"07}
\DeclareMathSymbol{\Phi}{\mathalpha}{letters}{"08}
\DeclareMathSymbol{\Psi}{\mathalpha}{letters}{"09}
\DeclareMathSymbol{\Omega}{\mathalpha}{letters}{"0A}
\begin{document}

\selectlanguage{english}


\begin{titlepage}
\pagestyle{empty}%
\includepdf{./Cover/Front}%
\cleardoublepage%
\setcounter{page}{1}%
\sffamily%
\begin{center}%
\mbox{}\\[3\baselineskip]
{\large Nicola Vona}\\[\baselineskip]
{\LARGE \textbf{\red On Time in Quantum Mechanics}}\\[1.62\baselineskip]
Dissertation an der Fakultät für\\
Mathematik, Informatik und Statistik der\\
Ludwig-Maximilians-Universität München\\[2\baselineskip]
\vfill
\begin{minipage}[t]{.8\textwidth}
\mbox{}\\
{\newlength\rowsep
\setlength\rowsep{\the\baselineskip}
\begin{tabular}{@{} l @{\hspace{.5em}} l @{\quad} l @{\hspace{.5em}} l @{}}
Submission: & 9 January 2014\hspace*{2em}& 
Defense:  & 7 February 2014
\\[1.62\rowsep]
\makebox[0pt][l]{Supervisor and First Reviewer:} & & \makebox[0pt][l]{Prof.\,Dr.\ Detlef Dürr} &\\[\rowsep]
\makebox[0pt][l]{Second Reviewer:} & &  \makebox[0pt][l]{Prof.\,Dr.\ Peter Pickl}&\\[\rowsep]
\makebox[0pt][l]{External Reviewer:} & &
\makebox[0pt][l]{%
\begin{minipage}[t][0pt][t]{\textwidth}
Prof.\,Dr.\ Stefan Teufel\\
Fachbereich Mathematik,\\
Universität Tübingen
\end{minipage}}&
\end {tabular}
}%
\end{minipage}
\end{center}

%

\clearpage\mbox{}
\end{titlepage}

\selectlanguage{ngerman}
\begin{center}
{\large \textbf{Eidesstattliche Versicherung}}\\
(Siehe Promotionsordnung vom 12.07.11, \S8, Abs.\ 2 Pkt.\ 5.)\\[2\baselineskip]
\end{center}

Hiermit erkläre ich an Eidesstatt, dass die Dissertation von mir selbstständig, ohne unerlaubte Beihilfe angefertigt ist. 
Kapitel 2, 3, 5 und 6 enthalten Resultate, die aus der Zusammenarbeit mit verschiedenen Koautoren entstammen (für Details siehe den Anfang des entsprechenden Kapitels).
\\[2\baselineskip]
Name, Vorname:\quad Vona, Nicola\\[2\baselineskip]
München, 12. Februar 2014

\selectlanguage{english}

\cleardoublepage

\section*{Abstract}
Although time measurements are routinely performed in laboratories,  their theoretical description is still an open problem.
Similarly, also the validity and the status of the energy-time uncertainty relation is unsettled.

In the first part of this work the necessity of positive operator valued measures (\povm{}) as descriptions of every quantum experiment is reviewed, as well as the suggestive role played by the  probability current in time measurements.
Furthermore, it is shown that no \povm{} exists, which approximately  agrees with the  probability current on a very natural set of wave functions; nevertheless, the choice of the set is crucial, and on more restrictive sets the probability current does provide a good arrival time prediction.
Some ideas to experimentally detect quantum effects in time measurements are discussed.

In the second part of the work the energy-time uncertainty relation is considered, in particular for a model of alpha decay for which the variance of the energy can be calculated explicitly, and the variance of time can be estimated.
This estimate is tight for systems with long lifetimes, in which case the uncertainty relation is shown to be satisfied.
Also the  linewidth-lifetime relation is shown to hold, but contrary to the common expectation, it is found that the two relations behave independently, and therefore it is not possible to interpret one as a consequence of the other.

To perform the mentioned analysis quantitative scattering estimates are necessary.
To this end, bounds of the form $\|\1_Re^{-iHt}\psi\|_2^2 \leq C t^{-3}$ have been derived, where $\psi$ denotes the initial state, $H$ the Hamiltonian,  $R$ a positive constant,  and $C$ is explicitly known.
As intermediate step, bounds on the derivatives of the $S$-matrix in the form $\|\1_K S^{(n)}\|_\infty  \leq  C_{n,K} $ have been established, with $n=1,2,3$,  and the constants $C_{n,K}$ explicitly known.

\cleardoublepage

\selectlanguage{ngerman}
\section*{Zusammenfassung}
Obwohl Zeitmessungen täglich in vielen Laboren durchgeführt werden, ist ihre theoretische Beschreibung noch unklar.
Gleichermaßen sind Gültigkeit und Bedeutung der Energie-Zeit-Unschärfe ungeklärt.

Der erste Teil dieser Arbeit diskutiert die Notwendigkeit von \emph{\foreignlanguage{english}{positive operator valued measures}} (\povm{}) 
zur Beschreibung
von allen Quantenexperimenten, sowie die bedeutende Rolle des Wahrscheinlichkeitsstroms in Zeitmessungen.
Außerdem, wird 
gezeigt, dass kein \povm{} existiert, der den Wahrscheinlichkeitsstrom jeder Wellenfunktion in einer natürlichen Menge annähert.
Die Wahl dieser Menge ist aber entscheidend, und auf beschränkten Mengen ist der Wahrscheinlichkeitsstrom eine gute Vorhersage für Zeitmessungen.
Einige Ideen sind diskutiert, wie man Zeitexperimente durchführen kann, um Quanteneffekten zu detektieren.

Der zweite Teil dieser Arbeit beschäftigt sich mit der Energie-Zeit-Unschärfe, insbesondere für ein Modell von Alpha-Zerfall, wobei man die Energievarianz explizit 
berechnen
kann, und die Zeitvarianz 
abschätzt.
Diese Abschätzung ist für Systeme mit langen Lebensdauern gut, und in diesem Fall wird gezeigt, dass die Energie-Zeit-Unschärfe gilt.
Ebenso wird gezeigt, dass die \emph{\foreignlanguage{english}{linewidth-lifetime relation}}  gilt.
Im allgemein wird angenommen, dass diese zwei Relationen dieselben sind. 
Im Gegensatz dazu, 
wird in der Dissertation aber gezeigt, dass sie sich unabhängig voneinander verhalten.

Für diese Resultate, braucht man quantitative Streuabschätzungen.
Zu diesem Zweck werden Schranken in der Form $\|\1_Re^{-iHt}\psi\|_2^2 \leq C t^{-3}$ in der Dissertation gezeigt, wo $\psi$ der Anfangszustand ist, $H$ der Hamiltonoperator, $R$ eine positive Konstante, und $C$ explizit bekannt ist.
Als Zwischenschritt werden Schranken für die Ableitungen der $S$-Matrix in der Form $\|\1_K S^{(n)}\|_\infty  \leq  C_{n,K} $ bewiesen, wobei $n=1,2,3$,  und die Konstanten  $C_{n,K}$  explizit bekannt sind.

\selectlanguage{english}

\cleardoublepage

\tableofcontents

\cleardoublepage

\chapter{Preface}

\question{What is the right statistics for the measurements\\of arrival time of a quantum particle?}

\question{Is the energy-time uncertainty relation valid?}

\question{What is the origin of the linewidth-lifetime\\relation for metastable states?}

These three questions are the topic of this thesis.
They are very basic and  everybody would reasonably expect their answers to be completely clear.
Surprisingly, the opposite is true: the description of time measurements in quantum mechanics is an open problem that attracts the interest of the scientific community since decades.
This situation is even more surprising if one takes into account that time measurements are routinely performed in laboratories.
The trick used to describe actual experiments despite the mentioned theoretical difficulties is to note that usual experiments are performed in far-field regime, for which a semiclassical analysis suffices.
Nevertheless, the developments in detector technology promise to soon allow for near-field investigations \citep[for example for photons see][]{ZhangSlyszVerevkin2003,PearlmanCrossSlysz2005,RenHofmann2011}: the old problem of the description of time measurements in quantum mechanics becomes now a timely  issue in foundational research. 

The outcomes of any quantum measurement are described by a positive operator valued measure (\povm{}), therefore this must be the case for time measurements too.
On the other  hand, many otherwise respectable physicists would bet that, at least in most cases, the arrival time statistics of a quantum particle is given by the flux of its probability current, which is not a \povm.   
Chapter~\ref{ch:BookChapter} extensively illustrates  how the \povm{}s arise in the description of every quantum experiment, as well as the suggestive role played by the quantum flux in time measurements.
Furthermore, Chapter~\ref{ch:PRL} investigates the question whether a \povm{} exists, which agrees approximately with the quantum flux values on a reasonable set of wave functions. 
The answer to this question is negative for a very natural set of wave functions, but it is important to remark that the choice of the set is crucial, and that the use of a different set changes the picture drastically.
On a more restrictive set the quantum flux might provide a good arrival time prediction.
For example, it is possible to find a \povm{} that agrees with the quantum flux  on the scattering states, and it is conjectured that the same is true for the wave functions with high energy. 
Some numerical evidence is provided that supports this conjecture.
Besides, Chapter~\ref{ch:Experiment} presents some preliminary ideas about how to devise an  experiment able to detect quantum effects in time measurements.

Getting the right object to describe time measurements is of course relevant to understand if the energy-time uncertainty relation holds.
Although the first question has not yet been clarified, many results on the validity of the uncertainty relation already exist, based on general properties.
Nevertheless, they are far from providing a comprehensive framework \citep[see for example][]{Busch1990,MugaMayatoEgusquiza2008,MugaMayatoEgusquiza2009}.
The alpha decay of a radioactive nucleus is a case study for the uncertainty relation because of the uncertainty on the energy of the emitted alpha particle and on the instant of emission.
In Chapter~\ref{ch:ET} a model for this process is considered, for which the variance of the energy can be calculated explicitly, and the variance of time can be estimated.
The result will be that, whenever the error on this estimate is small enough to be successfully used, then the uncertainty relation is satisfied.
Unfortunately, the error bound becomes small enough only for systems with lifetimes much longer than any physical element, but this circumstance does not prevent from using the result on a principle level.

The fact that the status of the energy-time uncertainty relation is still an open question is at odd with its ubiquity.
For instance, consider the well known  linewidth-lifetime relation
\begin{align}
  \Gamma\tau=\hbar,
\end{align}
that connects the full width at half maximum $\Gamma$ of the energy distribution of the decay product with the lifetime $\tau$ of the unstable state that produces it.
This relation is very often explained as an instance of the energy-time uncertainty relation \citep[see for example][]{Rohlf}, although \citet{KrylovFock1947} provided some arguments against this explanation.
In Chapter~\ref{ch:ET} the quantities $\Gamma$ and $\tau$ are estimated for the mentioned model of alpha decay.
It is shown that, by adjusting the potential and the initial state, it is possible to make the product $\Gamma\tau$ arbitrarily close to $\hbar$, while at the same time the product of the energy and time variances  gets arbitrarily large.
This explicitly confirms the thesis that the two relations are indeed independent and one can not be interpreted  as a consequence of the other.

For the analysis of Chapter~\ref{ch:ET} one needs control over the whole time evolution of the metastable state considered in the model.
For a long time span, the decay approximately follows an exponential law, for which explicit bounds are known \citep{Skibsted86}.
At later times the exponential behavior gets superseded by the power like decay typical of the scattering regime, for which many results are available, for example in the form of dispersive estimates \citep{JensenKato1979,JSS,Rauch1978,Schlag}.
In the easiest case, denoting by $\psi$ the initial state, by $H$ the Hamiltonian, and by $R$ and $C$ two positive constants, these estimates can be written as
\begin{equation}\label{eq:main_acIntroPhD}
\|\1_Re^{-iHt}\psi\|_2^2
	\leq C t^{-3} .
\end{equation}
Little is known quantitatively about the constant $C$. 
Chapter~\ref{ch:Scattering} is devoted to proving quantitative bounds of this form, in terms of the initial wave function, the potential, and the spectral properties of the Hamiltonian.
This goal will be achieved first by proving bounds on the derivatives of the $S$-matrix in the form
\begin{equation}
\|\1_K S^{(n)}\|_\infty  \leq  C_{n,K}  ,
\end{equation}
with $n=1,2,3$,  and the constants $C_{n,K}$ 
explicitly known.
These bounds are obtained by means of the theory of entire functions.
Then, they are used in the usual stationary phase approach to get the bound~\ref{eq:main_acIntroPhD}.

\newpage

\section*{Acknowledgments}
\addcontentsline{toc}{section}{Acknowledgments} 

A large part of this thesis is the fruit of a daily teamwork with Robert Grummt. I enjoyed very much this collaboration, from which I have learned the important truth  that
\[
1+1>2.
\]
My advisor Detlef Dürr deserves all of my gratitude for having donated me \emph{a wonderfully clear world outlook}, and for having shown me that it is \emph{still} possible to do science with the high critical standards everybody would expect from it.
I also wish to thank Günter Hinrichs for his patience and for his constant will to help.
The many conversations I had with Catalina Curceanu, Shelly Goldstein, Arun Ravishankar, Harald Weinfurter, and Nino Zanghì have been very important to me, as well as those with all the members of the research group.
In particular, I learned from Dirk Deckert the importance of having a goal clearly formulated since the beginning of the work, and from Martin Kolb the importance of reading a lot.
I thank V.~Bach, M.~Goldberg, L.~Rozema, W.~Schlag, E.~Skibsted, and M.~Zworski  for helpful correspondence, and gratefully acknowledge the financial support of the Elite Network of Bavaria, of the \textsc{daad stibet} Program, of the \textsc{lmu} Graduate Center, and of the \textsc{cost} action ``Fundamental Problems in Quantum Physics".

When the project behind this thesis 
came out, it was my lovely wife Giorgia who convinced me of its potentiality.
It is thanks to her if I started it and to her endless support if I could complete it.
She gave me precious advices, she constantly listened to me -- no matter how boring that can be -- but most of all \emph{she is my family}.

\subsection*{Notice}
Chapter~\ref{ch:BookChapter} will appear as  \citep{VonaDurr2013};
Chapter~\ref{ch:PRL} has been published as \citep{VonaHinrichsDurr2013};
Chapters~\ref{ch:ET} and~\ref{ch:Scattering} present the results of the preprints \citep{GrummtVona2014,GrummtVona2014a}.

The calculation of the trajectories shown in Figs.~\ref{fig:NegativeCurrent} and~\ref{fig:InterferenceSumDifference} is based on the code by Klaus von Bloh available at 
\emph{\href{http://demonstrations.wolfram.com/CausalInterpretationOfTheDoubleSlitExperimentInQuantumTheory/}{\nolinkurl{http://demonstrations.wolfram.com/CausalInterpretationOfTheDoubleSlitExperimentInQuantumTheory/}}}.

\chapter[The Role of the Probability Current\\for Time Measurements]{The Role of the Probability Current for Time Measurements}
\label{ch:BookChapter}
\blfootnote{This chapter will appear as \citep{VonaDurr2013}.}
\section{Introduction}
Think of a very simple experiment, in which a particle is sent towards a detector.
\question{When will the detector click?}
Imagine to repeat the experiment many times, starting a stopwatch at every run.
The instant at which the particle hits the detector will be different each time, forming a statistics of \emph{arrival times}.
Experiments of this kind are routinely performed in almost any laboratory, and are the basis of many common techniques, collectively known as time-of-flight methods (\textsc{tof}).
In spite of that, how to theoretically  describe an arrival time measurement is a very debated topic since the early days of quantum mechanics \citep{Pauli1958}.
It is legitimate to wonder why it is so easy to speak about a position measurement at a fixed time, and so hard to speak about a time measurement at a fixed position.
An overview of the main attempts and a discussion of the several difficulties they involve can be found in \citep{MugaLeavens2000,MugaMayatoEgusquiza2008,MugaMayatoEgusquiza2009}.

In the following, we will discuss the theoretical description of time measurements with particular emphasis on the role of the probability current.

\section{What is a Measurement?}
We will start recalling the general description of a measurement in quantum mechanics in terms of \emph{positive operator valued measures} (\povm s).
This framework is less common than the one based on self-adjoint operators, but is more general and more explicit than the latter.

\subsection{Linear Measurements -- \povm s}\label{sec:Povms}
When we speak about a \emph{measurement}, what are we speaking about?\\
A measurement is a situation in which a physical system of interest interacts with a second physical system, the apparatus, that is used to inquire into the former.
In general, we are  interested in those cases in which the experimental procedure is fixed and independent of the state of the system to be measured given as input; these cases are called \emph{linear measurements}.
The meaning of this name will be clarified in the following.
The analysis of the general properties of a linear measurement, and of the general mathematical description of such a process, has been carried out mostly by \citet{Ludwig1983a}, and finds a natural completion within Bohmian mechanics \citep{DuerrGoldsteinZanghi2004}.
In the following, we will present a simplified form of this analysis \citep{DurrTeufel2009,NielsenChuang2000}.

We will denote by $x$ the configuration of the system and by $\psi_0$ its initial state, element of the Hilbert space $L^2(\R^{3n})$, while we will use  $y$ for the configuration of the apparatus and $\Phi_0\in L^2(\R^{3N})$ for  its ready state; moreover, we will denote by $(0,T)$ the interval during which the interaction constituting the measurement takes place.
The evolution of the composite system is a usual quantum process, so the state at time $T$ is 
\begin{equation}
\Psi_T \of{x,y}  =  (U_T\,  \Psi_0)\of{x,y}  =  U_T\, (\psi_0\,\Phi_0) \of{x,y}   ,
\end{equation}
where $U_T$ is a  unitary operator on $L^2(\R^{3(N+n)})$.
We call such an interaction a \emph{measurement} if for every initial state $\psi_0$ it is possible to write the final state $\Psi_T$ as
\begin{equation}
\Psi_T \of{x,y}  =  \sum_\alpha  \psi_\alpha\of x \, \Phi_\alpha\of y  ,
\end{equation}
with the states $\Phi_\alpha$ normalized and clearly distinguishable, i.e.\ with supports $G_\alpha = \{ y \,|\, \Phi_\alpha\of y \neq 0 \}$ macroscopically separated.
This means that after the interaction it is enough to ``look'' at the position of the apparatus pointer to know the state of the apparatus.
Each support $G_\alpha$ corresponds to a different result of the experiment, that we will denote by  $\lambda_\alpha$.
One can imagine each support to have a label with the value $\lambda_\alpha$ written on it: if the position of the pointer at the end of the measurement is inside the region $G_\alpha$, then the result of the experiment is $\lambda_\alpha$.
The probability of getting the outcome $\lambda_\alpha$ is
\begin{align}\label{eq:ProbAlpha}
\mathbb{P}_\alpha 
	&= \intinde{x} \intdefde{G_\alpha}{}{y}  | \Psi_T\of{x,y} |^2
	\nonumber\\
	&= \intinde{x} \intdefde{G_\alpha}{}{y}  | \psi_\alpha\of x\,  \Phi_\alpha\of y |^2
	= \int   | \psi_\alpha\of{x} |^2\,  \de{x}  ,
\end{align}
indeed $\Phi_{\alpha'}\of{y} =0$ $\forall y\in G_\alpha$, $\alpha'\neq \alpha$, and  the $\Phi_\alpha$ are normalized.
Consider now the projectors $P_\alpha$ that act on the Hilbert space  $L^2(\R^{3(N+n)})$
 of the composite system  and project to the subspace {$L^2(\R^{3n}\times G_\alpha)$}  corresponding to the pointer in the position $\alpha$, i.e.\ in particular
\begin{equation}
P_\alpha \Psi_T =  \psi_\alpha\,  \Phi_\alpha  .
\end{equation}
Through the projectors $P_\alpha$ we can  define the operators $R_\alpha$ such that
\begin{equation}
P_\alpha \Psi_T =  \psi_\alpha\,  \Phi_\alpha  =  (R_\alpha \psi_0) \; \Phi_\alpha  ,
\end{equation}
that means $R_\alpha \psi_0 = \psi_\alpha$.
Finally, we can also define the operators $O_\alpha =  R_\alpha^\dag R_\alpha$.
These operators are directly connected to the probability \eqref{eq:ProbAlpha} of getting the outcome $\alpha$
\begin{equation}\label{eq:ProbO}
\PP_\alpha = \norm{\psi_\alpha}^2  = \braket{\psi_0 | O_\alpha \psi_0} .
\end{equation}
Therefore, \emph{the operators $O_\alpha$ together with the set of  values $\lambda_\alpha$ are sufficient to determine any statistical quantity related to the experiment}.
The fact that any experiment of the kind we have considered can be completely described by a set of linear operators, explains the origin of the name \emph{linear measurement}.
Equation\eqref{eq:ProbO} implies also that the operators $O_\alpha$ are \emph{positive}, i.e.\
\begin{equation}
\braket{\psi_0 | O_\alpha \psi_0}  \geq 0
	\qquad \forall \psi_0 \in L^2(\R^{3n})  .
\end{equation}
In addition, they constitute a \emph{decomposition of the unity}, i.e.\ 
\begin{equation}
\sum_\alpha O_\alpha = \1  ,
\end{equation}
as a consequence of the unitarity of $U_T$ and of the orthonormality of the states $\Phi_\alpha$, that imply
\begin{align}
1 = \norm{\psi_0\Phi_0}^2  
	&= \norm{\Psi_T}^2  
\nonumber\\	
	&= \sum_\alpha \norm{\psi_\alpha}^2
	= \sum_\alpha \braket{\psi_0 | O_\alpha \psi_0}
	\quad \forall \psi_0 \in L^2(\R^{3n})  .
\end{align}
A set of operators with these features is called discrete \emph{positive operator valued measure}, or simply \povm.  It is a measure on the discrete set of values $\lambda_\alpha$. In case the value set is a continuum, the \povm\  is a Borel-measure on that continuum, taking values in the set of positive linear operators.

It is important to note that in the derivation of the \povm{} structure the orthonormality of the states $\Phi_\alpha$ and the unitarity of the overall evolution play a crucial role, while in general, the states $\psi_\alpha$ do not need to be neither orthogonal nor distinct.

In case the operators $O_\alpha$ happen to be orthogonal projectors, then the usual measurement formalism of standard quantum mechanics is recovered by defining  the selfadjoint operator $\hat A = \sum_\alpha \lambda_\alpha \, O_\alpha$.
Physically, this condition is achieved  for example in a reproducible measurement, i.e.\ one in which the repetition of the measurement using the final state $\psi_\alpha$ as input, gives the result $\alpha$ with certainty.

We remark that calculating the action of a \povm{} on a given initial state requires that that state is evolved for the duration of the measurement together with an apparatus, and therefore its evolution in general differs from the evolution of the system alone.
This circumstance is evident if one thinks that the state of the system after the measurement will depend on the measurement outcome.%
\footnote{It will be an eigenstate of the selfadjoint operator corresponding to the measurement, in case it exists.}
Usually, if the measurement is not explicitly modeled, this evolution is considered as a black box that takes a state as input and gives an outcome and another state as output.
It is important to keep in mind that the measurement formalism always entails such a departure from the autonomous evolution of the system, even if not explicitly described.

\subsection{Not only \povm s\label{sec:NotPOVMs}}
Although a linear measurement is a very general process, there are many quantities that are not measurable in this sense.
An easy example is the probability distribution of the position $|\psi|^2$.
Indeed, suppose to have a device that shows the result $\lambda_1$ if the input is a particle in a state for which the position is distributed according to $|\psi_1|^2$, and $\lambda_2$ if it is in a state with distribution $|\psi_2|^2$.
If the process is described by a \povm, the linearity of the latter requires that when the state $\psi_1+\psi_2$ is given as input, the result is \emph{either $\lambda_1$ or $\lambda_2$}, as for example the result of a measurement of spin on the state $\ket{\mathrm{up}}+\ket{\mathrm{down}}$ is either ``up'' or ``down''.
On the contrary, if the device was supposed to measure the probability distribution of the position, the result had to be $\lambda_+$, corresponding to $|\psi_1 + \psi_2|^2$,  possibly distinct both from $\lambda_1$ and from $\lambda_2$.

To overcome a limitation of this kind, the only possibility is to give up on linearity, accepting as measurement also processes different than the one devised in the previous section.
These processes use additional information about the $x$-system, for example giving a result dependent on previous runs, or adjusting the interaction according to the state of the $x$-system.
In particular, to measure the probability distribution of the position one exploits the fact that $|\psi\of{x}|^2 = \braket{\psi | O_x | \psi}$, where $O_x = \ket x \bra x$ is the density of the \povm{} corresponding to a position measurement.
Instead of measuring directly $|\psi|^2$, one measures $x$, and repeats the measurement on many systems prepared in the same state $\psi$.
The distribution $|\psi|^2$ is then recovered from the statistics of the results of the position measurements.
The additional information needed in this case is that all the $x$-systems used as input were prepared in the same state.
The outcome shown by the apparatus depends then on the preparation procedure of the input state: if we change it, we have to notify the change to the apparatus, that needs to know how to collect together the single results to build the right statistics.

For other physical quantities not linearly measurable, like for example the wave function, a similar, but more refined strategy is required.
This strategy is known as \emph{weak measurement} \citep{AharonovAlbertVaidman1988}.
An apparatus to perform a weak measurement is characterized first of all by having a very weak interaction with the $x$-system; loosely speaking, we can say that the states $\psi_\alpha$ are very close to the initial state $\psi_0$.
As a consequence of such a small disturbance, the information conveyed to the $y$-system by the interaction is very little.
The departure from linearity is realized in a way similar to that of the measurement of $|\psi|^2$: the single run does not produce any useful information because of the weak coupling, therefore the experiment is repeated many times on many $x$-systems prepared in the same initial state $\psi_0$; the result of the experiment is recovered from a statistical analysis of the collected data.

The advantage of this arrangement is that the output state $\psi_\alpha$ can be used as input for a following  linear measurement of usual kind (\emph{strong}), whose reaction is almost as if its input state was directly $\psi_0$.
In this case the experiment yields a joint statistics for the two measurements, and it is especially interesting to \emph{postselect} on the value of the strong measurement, i.e.\ to arrange the data in sets depending on the result of the strong measurement and to look at the statistics of the outcomes of the weak measurement inside each class.
For example, a weak measurement of position followed by a strong measurement of momentum, postselected on the value zero for the momentum, allows to measure the wave function \citep{LundeenSutherlandPatel2011}.

The nonlinear character of weak measurements becomes apparent if one understands the many repetitions they involve in terms of a calibration.
Indeed, one can think of the last run as the actual measurement, and of all the previous runs as a way for the apparatus to collect information about the $x$-system used in the last run, profiting from the knowledge that it was prepared exactly as the $x$-systems of the previous runs. 
The $x$-systems used in the preliminary phase can be then considered part of the apparatus, used to build the joint statistics needed to decide which outcome to attribute to the last strong measurement.
For example, the result of the experiment could be the average of the previous weak measurements postselected  on the strong value obtained in the last run.
If we then change the initial wave function $\psi_0$ to some $\psi
\mathrlap
{\smash{'}}_0$, the calibration procedure has to be repeated.
In this case, the apparatus itself depends on the state of the $x$-system to be measured, breaking linearity.

\newpage

\section{Time Statistics}\label{sec:TimeStat}
Now we finally come to our topic: time measurements.
At first, we have to note that there are several different experiments that can be called time measurements: measurements of dwell times, sojourn times, and so on.
We will refer in the present discussion exclusively to \emph{arrival times}, although it is possible to recast everything to fit any other kind of time measurement.
More precisely, we will consider the situation described at the beginning: a particle is prepared in a certain initial state and a stopwatch is set to zero; the particle is left evolving in presence of a detector at a fixed position; the stopwatch is read when the detector clicks.
The time read on the stopwatch is what we call \emph{arrival time}.

A measurement of this kind is necessarily linear, and we can ask for the statistics of its outcomes given the initial state of the particle.
If, for example, we measure the position at the fixed time $t$, then we can predict the statistics of the results by calculating the quantity
\begin{equation}
\braket{ \psi_t | x }  \braket{ x | \psi_t }  .
\end{equation}
Which calculation do we have to perform to predict the statistics of the stopwatch readings with the detector at a fixed position?

\subsection{The Semiclassical Approach}
Arrival time measurements are routinely performed in actual experiments, and they are normally treated semiclassically: essentially, they are interpreted as momentum measurements.
The identification with momentum measurements is motivated by the fact that the detector is at a distance $L$ from the source usually much bigger than the uncertainty on the initial position of the particles, so one can assume that each particle covers the same length $L$.
Hence, the randomness of the arrival time must be a consequence of the uncertainty on the momentum, and the time statistics must be given by the momentum statistics.
For a free particle in one dimension, the connection between  time and momentum is provided by the classical relation $p\of{t}  = m \,L/t$.
By a change of variable, this relation implies that the probability density of an arrival at time $t$ is
\begin{equation}\label{eq:TraditionalDensity}
\frac{1}{\hbar}\Abs{\der{}{p\of{t}}{t}}
	\left|\tilde\psi\left(\frac{p(t)}{\hbar}\right)\right|^2  
=  \frac{m L}{\hbar t^2}   
		\left|\tilde\psi\left(\frac{p(t)}{\hbar}\right)\right|^2  ,
\end{equation}
where $\tilde\psi$ is the Fourier transform of the wave function $\psi$.

This semiclassical approach is justified by the distance $L$ being very big, that is true for most experiments so far performed.
On the other hand, we tacitly assumed that the particle moves on a straight line with constant velocity $v$, whose ignorance is the source of the arrival time randomness: such a classical picture is inadequate to describe the behavior of a quantum particle in general conditions, and is expected to fail in future, near-field experiments.
A deeper analysis is needed.

\subsection{An Easy but False Derivation}\label{sec:EasyDerivation}

Consider that the particle crosses the detector at time $t$ with certainty.
This implies that the particle is on one side of the detector before $t$, and on the other side after $t$. 
One can therefore think that it is possible to connect the statistics of arrival time to the probability that the particle is on one side of the detector at different times. Because the latter is known, this seems like a good strategy.

For simplicity we will consider only the one dimensional case, that already entails all the relevant features that we want to discuss.%
\footnote{The same treatment is possible in three dimensions, provided that the detector is sensitive only to the arrival time and not to the arrival position, and that the detecting surface divides the whole space in two separate regions (i.e.\ it is a closed surface or it is unbounded).}
The detector is located at the origin; we will assume the evolution of the particle in presence of the detector to be very close to that of the particle alone.
We consider the easiest possible case: a free particle, initially placed on the negative half-line and moving towards the origin, i.e.\ prepared in a state $\psi_0$ such that
\begin{align}
\psi_0\of x &\approx 0  \qquad  \forall x\geq0; \label{eq:NegativeX}
\\
\tilde\psi\of p &=0  \qquad  \forall p\leq0  ,
\end{align}
where $\tilde\psi$ denotes the Fourier transform of $\psi_0$, and eq.\ \eqref{eq:NegativeX} is a shorthand for $\intdef{0}{\infty}\psi_0\of x \de x \ll 1$.
The particle can only have positive momentum, therefore it will get at some time to the right of the origin and thus it has to cross the detector from the left to the right.

One might think that the probability to have a crossing at a time $\tau$ later than $t$ is equal to the probability that the particle at $t$ is still in the left region,
\begin{equation}\label{eq:LeftProb}
\PP (\tau\geq t)
= \PP (x\leq 0;t)
= \intdefde{-\infty}{0}{x}  \abs{\psi_t\of{x}}^2  .
\end{equation}
Conversely, the probability that the particle arrived at the detector position before $t$ is
\begin{equation}
\PP (\tau<t)
= 1 - \PP (\tau\geq t)
= \intdefde{0}{\infty}{x}  \abs{\psi_t\of{x}}^2  .
\end{equation}
Therefore, the probability density $\Pi\of{t}$ of a crossing at $t$ is
\begin{equation}
\Pi\of t = \Der{}{t}  \PP (\tau<t)
= \intdefde{0}{\infty}{x} \partial_t \abs{\psi_t\of{x}}^2  .
\end{equation}
We can now make use of the continuity equation for the probability
\begin{equation}
\partial_t\, \bigl(|\psi_t\of x |^2\bigr) +  \partial_x  j\of{ x, t}  = 0  ,
\end{equation}
that is a consequence of the Schrödinger equation, with the probability current
\begin{equation}\label{eq:J}
j\of{ x, t} \coloneqq  \tfrac{\hbar}{m}\,  \Im \psi_t^*\of x\ \partial_x\psi_t\of x  .
\end{equation}
Substituting,
\begin{equation}\label{eq:QFlux1D}
\Pi\of{t}
=
-\intdefde{0}{\infty}{x} \partial_x  j\of{ x, t} 
=
 j\of{ x=0, t}   .
\end{equation}
Thus,  the probability density $\Pi\of{t}$ of an arrival at the detector at time $t$ is equal to the probability current $j\of{ x=0, t}$, \emph{provided everything so far has been correct}. Well, it hasn't. Eq.\ \eqref{eq:LeftProb} is problematical. It is only correct if the right hand side is a monotonously decreasing function of time, or, equivalently, if the current in \eqref{eq:QFlux1D} is always positive. But that is in general not the case and it is most certainly not guaranteed by asking that the momentum be positive. Indeed, even considering only free motion and positive momentum, there are states for which the current is not always positive, a circumstance  known as \emph{backflow} (for an example, see the appendix).
But a probability distribution must necessarily be positive, hence, \emph{the current can not be equal to the statistics of the results of any linear measurement}, i.e.\ there is no \povm{} with density $O_t$ such that
\begin{equation}
\braket{ \psi_0 | O_t | \psi_0 } = j\of{ x=0, t}  .
\end{equation}

This problem is well known \citep{Allcock1969b} and has given rise to a long debate, aiming at finding a quantum prediction for the arrival time distribution with the needed \povm{} structure \citep{MugaLeavens2000}.
%

One might wonder: How can it be that the momentum is only positive, and yet the probability that the particle is in the left region is not necessarily decreasing?
A state with only positive momentum is such that, if we \emph{measure} the momentum, then we find a positive value with certainty.
This is not the same as saying that the particle \emph{moves} only from the left to the right when we do not measure it.
Actually, in strict quantum-mechanical terms, it does not even make sense to speak about the momentum of the particle when it is not measured, as it does not make sense to speak about its position if we do not measure it, and therefore there is no way of conceiving how the particle moves in this framework.
Think for example of a double slit setting: we can speak about the position of the particle at the screen, but we can not say through which slit the particle went.

Although the quantum-mechanical momentum is only positive, the conclusion that the particle  moves only once from the left to the right is unwarranted.
Even more: it simply does not mean anything.

\subsection{The Moral}
The problem with the simple derivation of the arrival time statistics is quite instructive, indeed it forces us to face the fact that quantum mechanics is really about measurement outcomes, and therefore it is a mistake to think of quantum-mechanical quantities as of quantities intrinsic to the system under study and independent of the measurement apparatus.

\newpage

\section{The Bohmian View}\label{sec:BohmView}
Bohmian mechanics is a theory of the quantum phenomena alternative to quantum mechanics, but giving the same empirical predictions \citep[see][]{DurrTeufel2009,DurrGoldsteinZanghi2013}.
The two theories share at their foundation the Schrödinger equation.
Quantum mechanics complements it by some further axioms like the collapse postulate, and describes all the objects around us only in terms of wave functions.
On the contrary, according to Bohmian mechanics the world around us is composed by actual point particles moving on continuous paths, that are determined by  the  wave function.
The Schrödinger equation is in this case supplemented by a guiding equation that specifies the relation between the wave function and the motion of the particles.
The usual quantum mechanical formalism is recovered in Bohmian mechanics as an effective description of measurement situations \citep[see][]{DurrTeufel2009}.

The main difference between quantum and Bohmian mechanics is that the first one is concerned only with measurement outcomes, while the second one gives account of the \emph{physical reality} in any situation.
Although every linear experiment corresponds to a \povm{} according to quantum mechanics as well as to Bohmian mechanics \citep{DuerrGoldsteinZanghi2004}, for the former \povm s are the fundamental objects the theory is all about, while for the latter they are only very convenient tools that occur when the theory is used to make predictions.

We saw already how interpreting quantum-mechanical quantities as intrinsic properties of a system is mistaken, and how the framework of quantum mechanics is limited to measurement outcomes.
In Bohmian mechanics the particle has a definite trajectory, so it makes perfectly sense to speak about its position or velocity also when they are not measured, and it is perfectly meaningful to argue about the way the particle \emph{moves}.
In doing so, one has just to mind the difference between the outcomes of hypothetical  (quantum) measurements, and actual (Bohmian) quantities.

\subsection{The Easy Derivation Again\dots}
Let's review the derivation of section \ref{sec:EasyDerivation} from the point of view of Bohmian mechanics.

To find out the arrival time of a Bohmian particle it is sufficient to literally follow its motion and to register the instant when it actually arrives at the detector position.
A Bohmian trajectory $Q\of t$ is determined by the wave function through the equation
\begin{equation}\label{eq:Bohm}
\dot Q \of t = \frac { j \of{Q\of t, t}  }    { | \psi\of{Q\of t, t}|^2 } ,
\end{equation}
with $j$  defined in eq.\ \eqref{eq:J}.
Hence, the Bohmian velocity, that is the actual velocity with which the Bohmian particle moves, is not directly related to the quantum-mechanical momentum, that rather encodes only information about the possible results of a hypothetical momentum measurement.
Even if the probability of finding a negative momentum in a measurement is  zero, the Bohmian particle can still  have negative velocity and arrive at the detector from behind, or even cross it more than once.%
\footnote{Note that the notion of \emph{multiple crossings} of the same trajectory is genuinely Bohmian, with no analog in quantum mechanics.}
It is in these cases that the current becomes negative.

\begin{figure}
\centering \includegraphics[width=.8\textwidth]{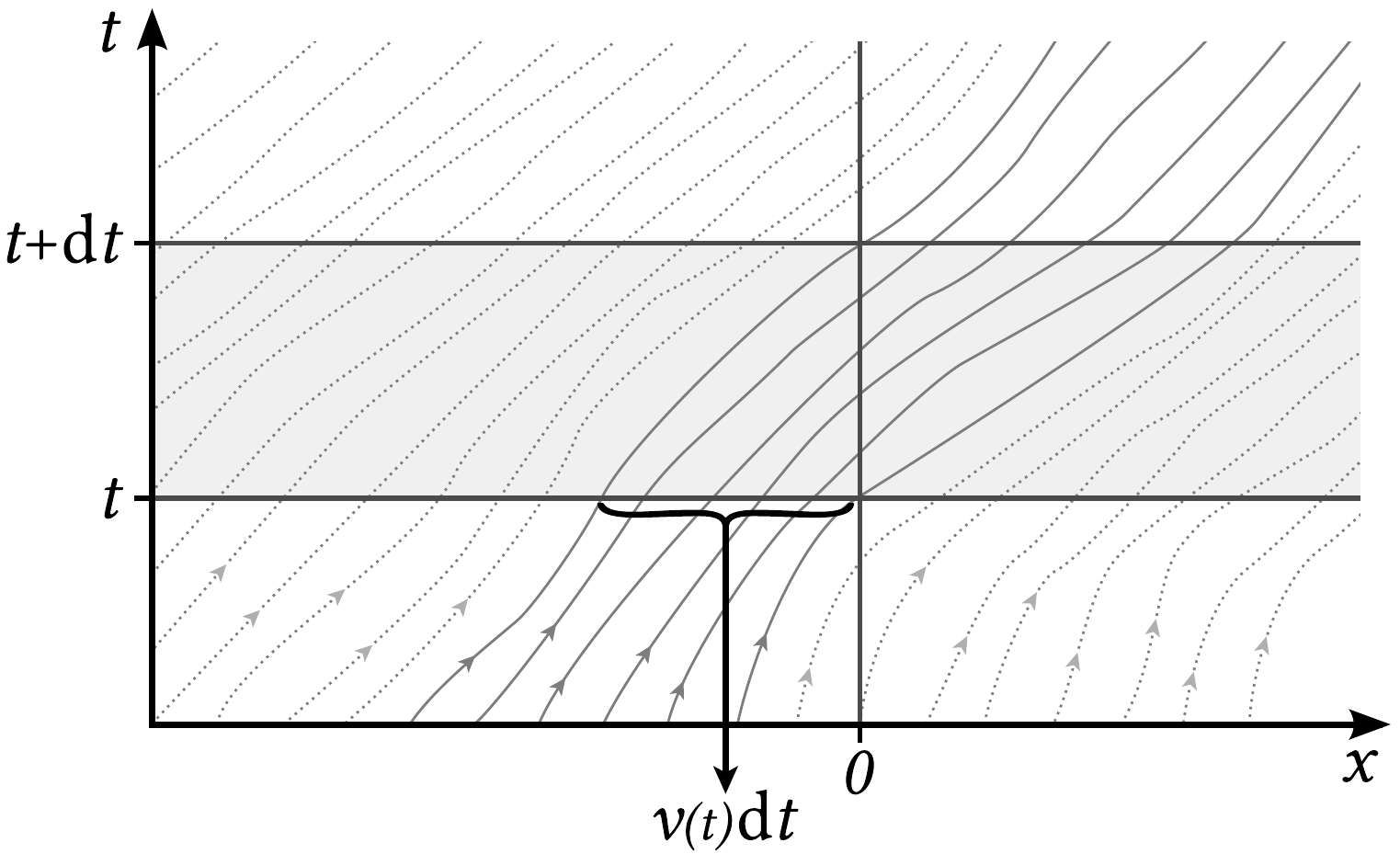}  
\caption{Bohmian trajectories in the vicinity of the detector, placed at $x=0$.
The trajectories, that cross the detector between the times $t$ and $t+\de t$, are those that at time $t$ have a distance from the detector smaller than the distance they cover during the interval $\de t$, that is $v\of t \, \de t$.}
\label{fig:BoltzmannCylinder}
\end{figure}

We can now repeat the derivation of section \ref{sec:EasyDerivation} using the Bohmian velocity instead of the quantum-mechanical momentum.
We consider again an initial state $\psi_0$ such that $\psi_0\of x \approx 0$ if $x\geq0$, but we do not ask anymore the momentum to be positive: we rather ask the Bohmian velocity to stay positive for every time after the initial state is prepared.
The particle crosses the detector between the times $t$ and $t+\de t$ if at time $t$ they are separated by a distance less than $v\of{x=0, t} \,\de t$ (cf.\ fig.\ \ref{fig:BoltzmannCylinder}).
The probability that at time $t$ the particle is in this region is $ v\of{0, t} \, |\psi\of{0, t}|^2\, \de t$, thus the probability density of arrival times is simply 
\begin{equation}
\Pi\of{t} =  v\of{0, t}\, |\psi\of{0, t}|^2 = j\of{0,t}.
\end{equation}

If  the velocity does not stay positive, it is still true that the particle crosses the detector during $(t, t+\de t)$ if at $t$ they are closer than $v\of{x=0, t} \,\de t$, but now this distance can also be negative.
In this case the current $j\of{0,t}$ still entails information about the crossing probability, but it also contains information about the direction of the crossing.
To get a probability distribution from the current we have to clearly specify how to handle the crossings from behind the detector and the multiple crossings of the same trajectory.
For example, one can count only the first time that every trajectory reaches the detector position, disregarding any further crossing, getting the so-called \emph{truncated current} \citep{DaumerDurrGoldstein1997,GrublRheinberger2002}.

The Bohmian analysis is readily generalized to three dimensions with an arbitrarily shaped detector, in which case also the arrival position is found.
More complicated situations, like the presence of a potential, or an explicit model for the detector, can be easily handled too.
Note that the presence of the detector can in principle be taken into account by use of the so-called \emph{conditional wave function} \citep{DuerrGoldsteinZanghi1992,PladevallOriolsMompart2012}, that allows to calculate the actual Bohmian arrival time in exactly the same way as described in this section, although the apparatus needs to be explicitly considered.

\subsection{Is the Bohmian Arrival Time Measurable\\in an Actual Experiment?}

Any distribution calculated from the trajectories conveys some aspects of the actual motion of the Bohmian particle.
Such a distribution does not need in principle to have any connection with the results of a measurement, similarly to the Bohmian velocity that is not directly connected to the results of a momentum measurement.
The Bohmian level of the description is the one we should refer to when arguing about intrinsic properties of the system rather than measurement outcomes.
Since, in the framework of Bohmian mechanics, an intrinsic arrival time exists, namely that of the Bohmian particle, one should ask the intrinsic question that constitutes the title of this section rather than asking the apparatus dependent question
\question{When will the detector click?}
We do not mean that the latter question is irrelevant, to the contrary, it points towards the prediction of experimental results, that is of course of high value.
We shall continue the discussion of the latter topic in section \ref{sec:WhenClicks}.

\subsubsection{Linear measurement of the Bohmian arrival time}\label{sec:BohmPOVM}
We now ask if a linear measurement exists, such that its outcomes are the first arrival times of a Bohmian particle. 
For sure, this can not be exactly true, indeed, if this was the case, then the outcomes of such an experiment would be distributed according to the truncated current, that depends explicitly on the trajectories and is not sesquilinear with respect to the initial wave function as needed for a \povm{}.

However, it is reasonable to expect it to be approximately correct for some set of ``good'' wave functions.
That is motivated by the following considerations.
A typical position detector is characterized by a set of sensitive regions $\{ A_i \subset\R \}_{i=0,\dots,N}$, each triggering a different result.
If the measurement is performed at a fixed time $t$, and  if we get the answer $i$, then the Bohmian particle is at that time somewhere inside the region $A_i$.
A time measurement is usually performed with a very similar set up: one uses a position detector with just one sensitive region $A_0$ (in our case located around the origin) and waits until it fires.
In the ideal case, the reaction time of the detector is very small, and we can consider that the click occurs right after the Bohmian particle entered the sensitive region.
As a consequence, if the Bohmian trajectories cross the detector region only once and do not turn back in its vicinity, then we can expect the response of the actual detector to be very close to the quantum current.
This puts forward the set of wave functions such that the Bohmian velocity stays positive as a natural candidate for the set of good wave functions.
Surprisingly, it can be shown that \emph{there exists no} \povm{} \emph{which approximates the Bohmian arrival time statistics on all functions in this set} (see Chapter~\ref{ch:PRL}).

On the other hand, it is easy to see that the Bohmian arrival time is approximately given by a measurement of the momentum for all \emph{scattering states}, i.e.\ those states that reach the detector only after a very long time, so that they are well approximated by local plane waves.
Numerical evidence for a similar statement for the states with positive Bohmian velocity and high energy was also produced (see Chapter~\ref{ch:PRL}), but a precise determination of the set of good wave functions on which the Bohmian arrival time can be measured is still missing.

An explicit example of a model detector whose outcomes in appropriate conditions approximate the Bohmian arrival time can be found in \citep{DamboreneaEgusquizaHegerfeldt2002}.

\subsubsection{Nonlinear measurement}
An alternative to a linear measurement that directly detects the arrival time of a Bohmian particle is the reconstruction of its statistics from a set of measurements by a nonlinear procedure.

A first possibility in this direction starts by rewriting  the probability current \eqref{eq:J} as
\begin{equation}
 j\of{ x, t} =     
 \bra{ \psi_t}  \  
 	\tfrac{1}{2m}\,   \bigl(  \ket x \bra x \hat p + \hat p \ket x \bra x  \bigr)
\ \ket{\psi_t}  ,
\end{equation}
where $\hat p = -i\hbar\,\partial_x$ is the momentum operator.
The operator 
\begin{equation}
\hat j  \coloneqq  \tfrac{1}{2m}\,   \bigl(  \ket x \bra x \hat p + \hat p \ket x \bra x  \bigr)
\end{equation}
is selfadjoint, therefore it could be possible to measure the current at the position $x$ and at time $t$ by measuring the average value at time $t$ of the operator $\hat j$.
Unfortunately, the operational meaning of this operator is unclear.

A viable solution is offered by weak measurements.
As showed by \citet{Wiseman2007}, it is possible to measure the Bohmian velocity, and therefore the current, by a sequence of two position measurements, the first weak and the second strong, used for postselection.
Wiseman's proposal has been implemented with small modifications in an experiment with photons%
\footnote{This experiment did not, of course, show the existence of a pointlike particle actually moving on the detected paths, but only the measurability of the Bohmian trajectories for a quantum system.}
\citep{KocsisBravermanRavets2011}.
A detailed analysis of the weak measurement of the Bohmian velocity and of the quantum current has been carried out by  \citet{TraversaAlbaredaDi-Ventra2012}.

It is worth noting that the weak measurement of the Bohmian velocity, if intended as a calibration of a non-linear measurement as explained in sec.\ \ref{sec:NotPOVMs}, gives rise to a genuine measurement, i.e.\ one whose outcome reveals the actual velocity possessed by the particle in that run
\citep{DurrGoldsteinZanghi2009}.

\section{When will the Detector Click?}\label{sec:WhenClicks}

We still have to answer the question we  posed at the beginning:
\question{When will the detector click?}

Surely, for any given experiment there is a \povm{} that describes the statistics of its outcomes.
Such an object will depend on the details of the specific physical system and of the measurement apparatus used for the experiment.
That is true not only for time measurements, but for any measurement, and for quantum mechanics as for Bohmian mechanics.
Yet, we can speak for example of the position measurement in general terms, with no reference to any specific setting, as it was disclosing  an intrinsic property of the system.
How can that be?

One can speak of the position measurement and of its \povm{} in general terms because a \povm{} happens to exist, that has all the symmetry properties expected for a position measurement and that does not depend on any external parameter.
That suggests that some kind of intrinsic position exists independently of the measurement details.
Recalling how the \povm{}s have been introduced in sec.\ \ref{sec:Povms}, it is readily clear that they inherently involve an external system (the apparatus) in addition to the system under consideration, and therefore they encode the results of an interaction rather than the values of an intrinsic property.
We also saw in sec.\ \ref{sec:EasyDerivation} how interpreting quantum-mechanical statistics as intrinsic objects leads to a mistake.
It is therefore very important to keep in mind that all \povm{}s describe the interaction with an apparatus.
Having this clear, it still makes sense to look for a \povm{} that does not explicitly depend on any external parameter, meaning with this simply that one does not want to give too much importance to the details of the apparatus.
Such a \povm{} may be regarded for example as the limiting element of a sequence of finer and finer devices, and it does not necessarily correspond to any realizable experiment.
Nevertheless, the fortunate circumstance that occurs for position measurements, for which such an idealized \povm{} exists, does not need to come about for all physical quantities one can think of.

For the arrival time it is possible to show that some \povm{}s exist that have the transformation properties expected for a time measurement \citep{Ludwig1983a}, but in three dimensions it is not possible to arrive at a unique expression in the general case, i.e.\ to something independent of any external parameter.
To do so, one needs to restrict the analysis to detectors shaped as infinite planes, or similarly to restrict the problem to one dimension \citetext{\citealp{Kijowski1974,Werner1986}; see also \citealp{Giannitrapani1997,EgusquizaMuga1999,MugaLeavens2000}}.
In this case, for arrivals at the origin, one finds the \povm{} 
\begin{align}
&K \of{t_1, t_2} 
=
\sum_{\alpha=\pm 1}  \intdefde{t_1}{t_2}{T} \ket{T,\alpha} \bra{T,\alpha}  ,
\\
&\text{with}\quad
\braket{p | T,\alpha}
=
\sqrt{ \frac{|p|}{m h} }\ 
	\theta\of{\alpha p}\ 
	e^{ \frac{i p^2 T} {2m\hbar} }    ,
\end{align}
that corresponds to the probability density of an arrival at time $t$
\begin{equation}\label{eq:KijowskiProbDensity}
\sum_{\alpha=\pm 1}  
| \braket{t,\alpha | \psi_0} |^2 
=
\sum_{\alpha=\pm 1}  
\frac{1} {m h}\,
\left|  
\intdefde{0}{\alpha\infty}{p}
\smash{\sqrt{ |p| }}\ 
\braket{ p | \psi_t}
\right |^2   .
\end{equation}
Note that $K$ is not a projector valued measure because $\braket{T,+ | T,-}\neq0$.
For scattering states $K$ becomes proportional to the momentum operator, and the density \eqref{eq:KijowskiProbDensity} gets well approximated by the probability current \citep{Delgado1998}.
The general conditions under which this approximation holds are still not clear.

\subsection{The Easy Derivation, Once Again}

The analysis of sec.\ \ref{sec:BohmPOVM} of the measurability of the Bohmian arrival time translates quite easily in an approximate derivation of the response of a detector: essentially what we tried to do in sec.\ \ref{sec:EasyDerivation}, just right.

Consider again the setting described in sec.\ \ref{sec:EasyDerivation}, but with an initial state such that the Bohmian velocity stays positive.
That is equivalent to ask that the probability current stays positive, and therefore that the probability that the particle is on the left of the detector decreases monotonically in time.
As described in sec.\ \ref{sec:BohmPOVM}, thinking of the arrival time detector as of a position detector with only one sensitive region $A_0$ around the origin, it is reasonable to expect that for some set of good wave functions the detector will click right when the particle enters $A_0$.
Hence, the probability of a click at time $t$ is approximately equal to the increase of the probability that the particle is inside $A_0$ at that time, i.e.\ to the probability current through the detector.
Therefore, for the good wave functions, the probability current  is expected to be a good approximation of the statistics of the clicks of an arrival time detector.
As remarked in sec.\ \ref{sec:BohmPOVM} the set of the good wave functions is not exactly known, although it is clear that the scattering states are among its elements, and possibly also the states with positive probability current and high energy.

%
%
%
%

\section*{Appendix: Example of Backflow}
\addcontentsline{toc}{section}{Appendix: Example of Backflow}

\begin{figure}
\begin{minipage}{\textwidth/2}%
\centering
\subfloat[\label{subfig:RhoTofX}]{\includegraphics[width=\textwidth]
{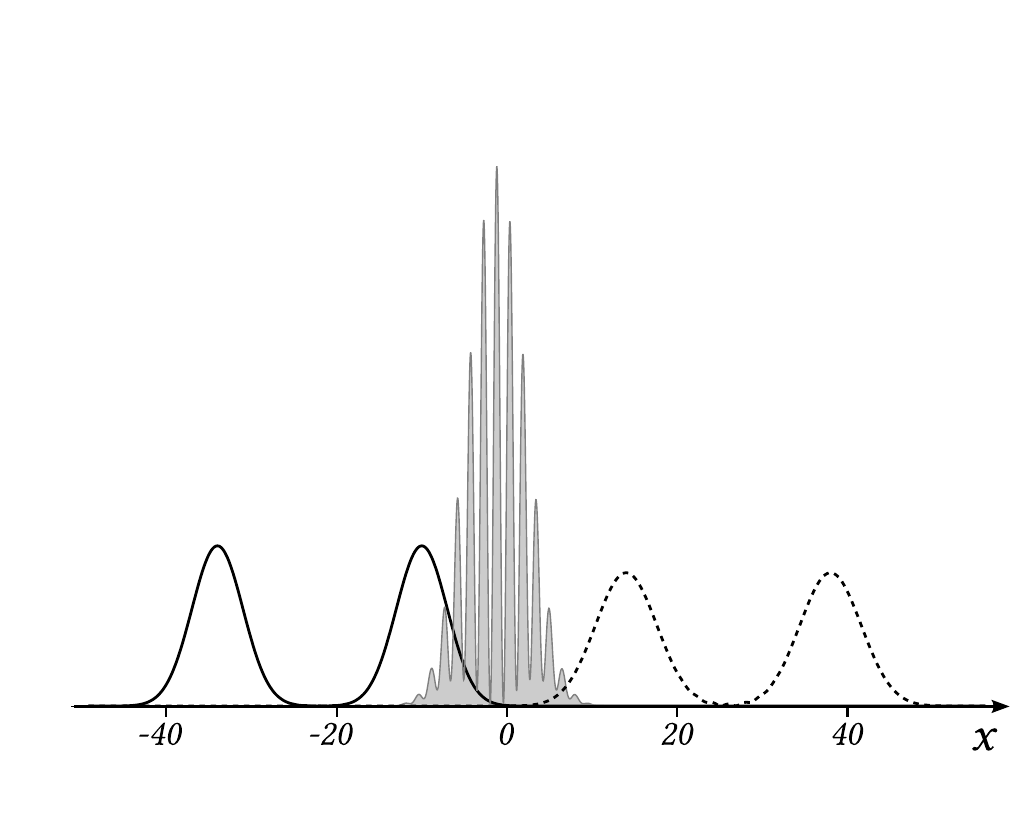}}%
\\
\subfloat[\label{subfig:RhoofTandX}]{\includegraphics[width=\textwidth]{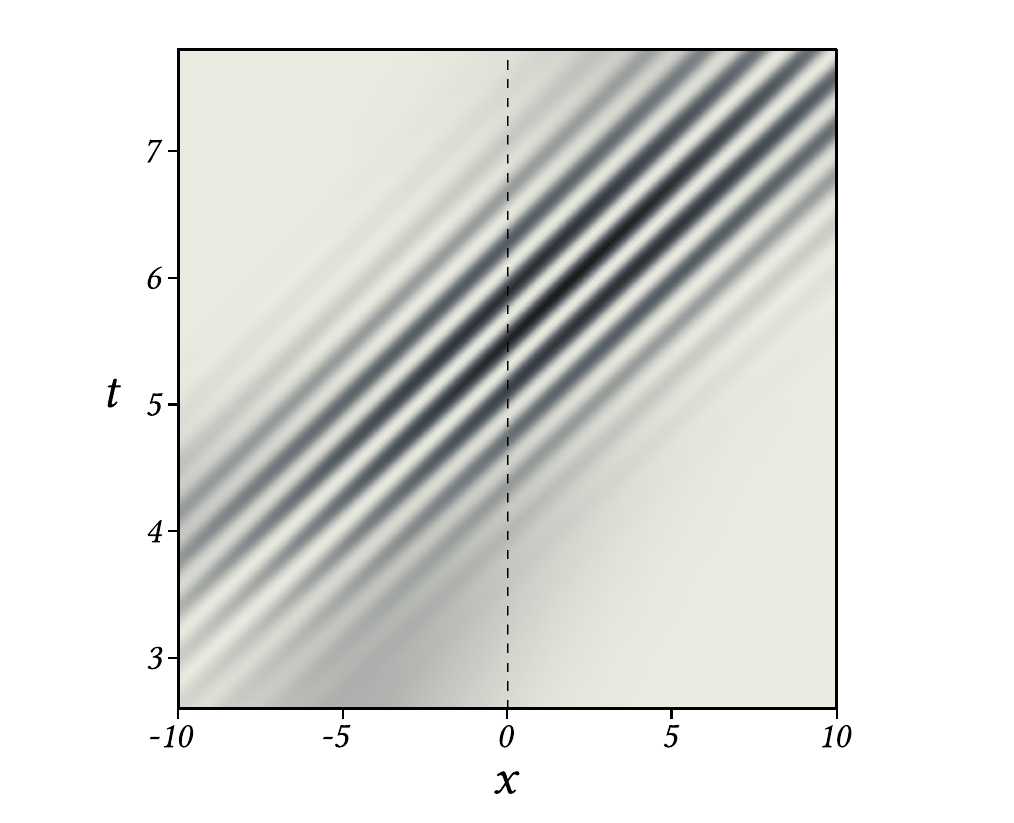}} 
\\
\subfloat[\label{subfig:CurrentAtScreen}]{\includegraphics[width=\textwidth]{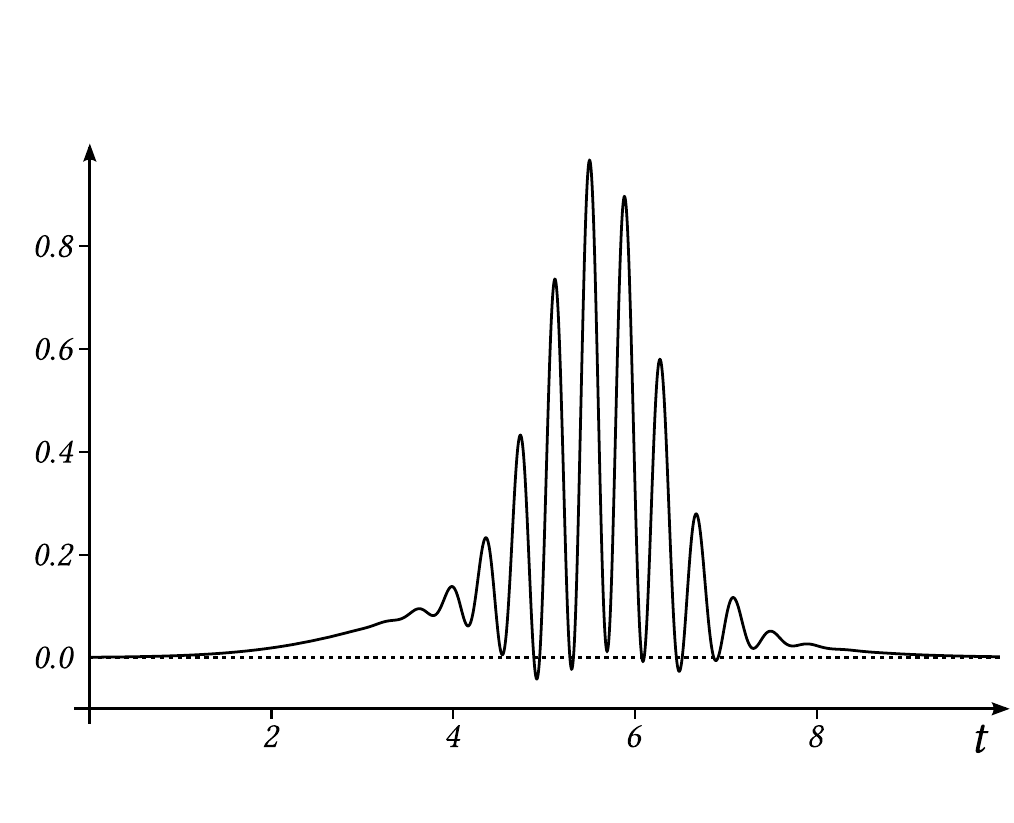}} 
\end{minipage}%
\centering
\begin{minipage}{\textwidth/2}%
\subfloat[\label{subfig:Traj}]{\includegraphics[width=\textwidth]{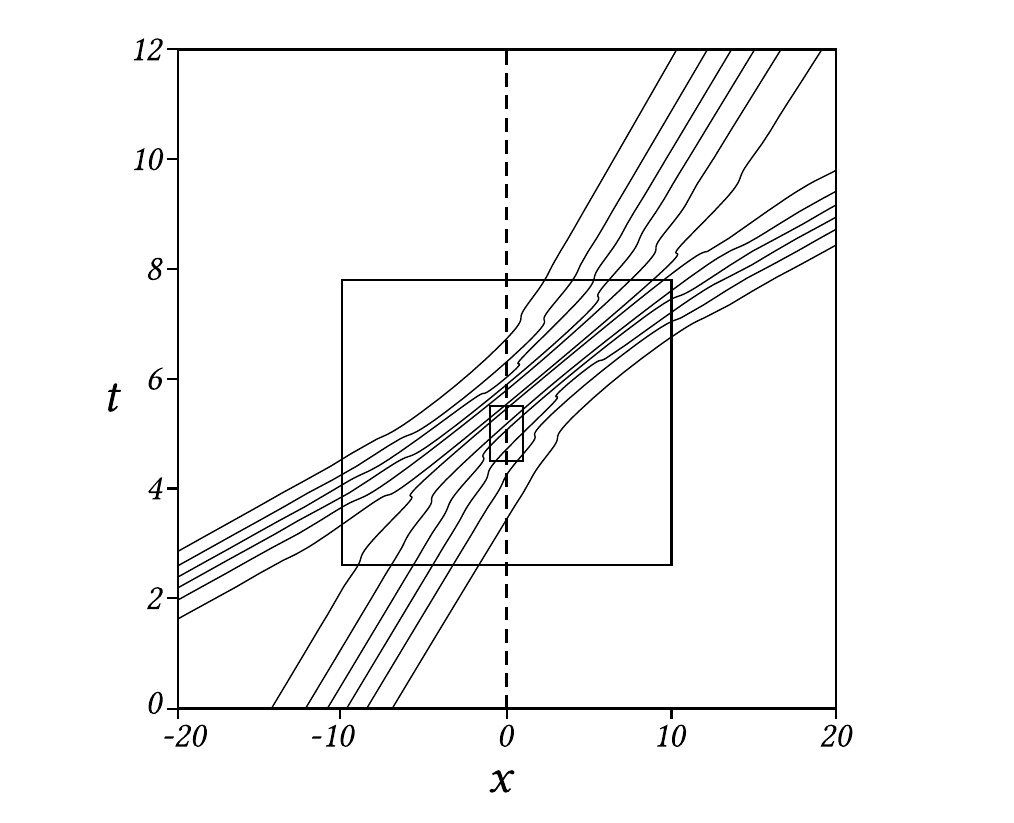}}%
\\
\subfloat[\label{subfig:TrajCloseUp}]{\includegraphics[width=\textwidth]{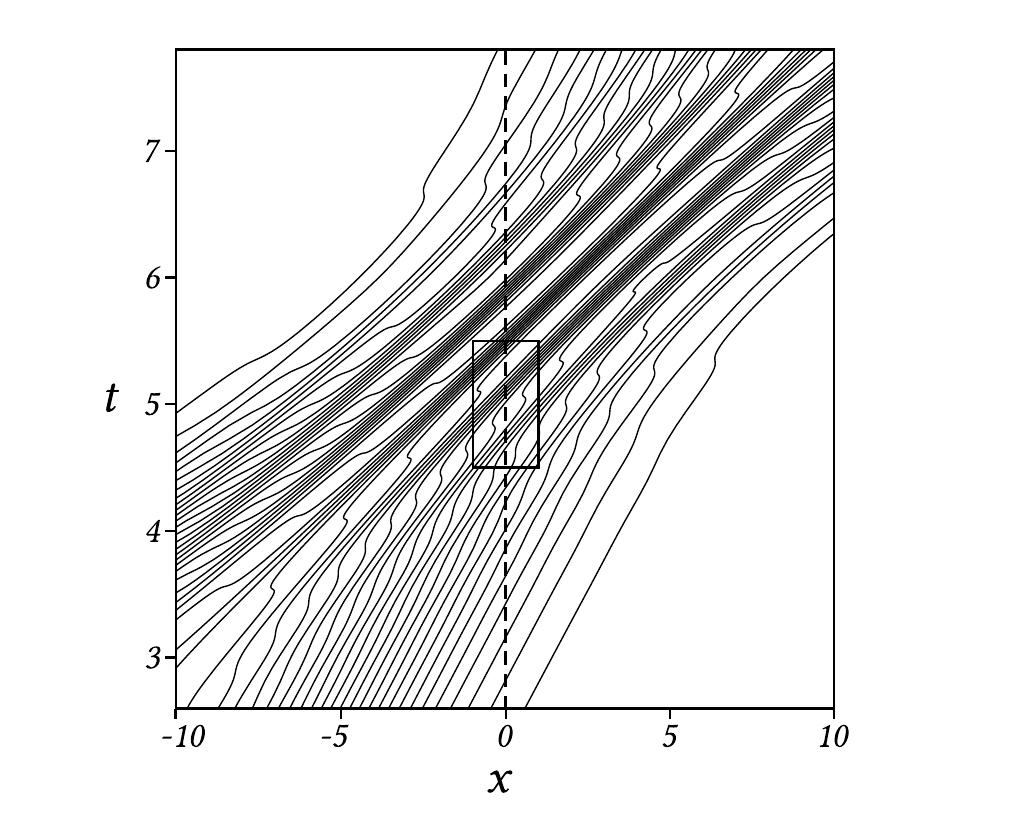}} 
\\
\subfloat[\label{subfig:TrajVeryCloseUp}]{\includegraphics[width=\textwidth]{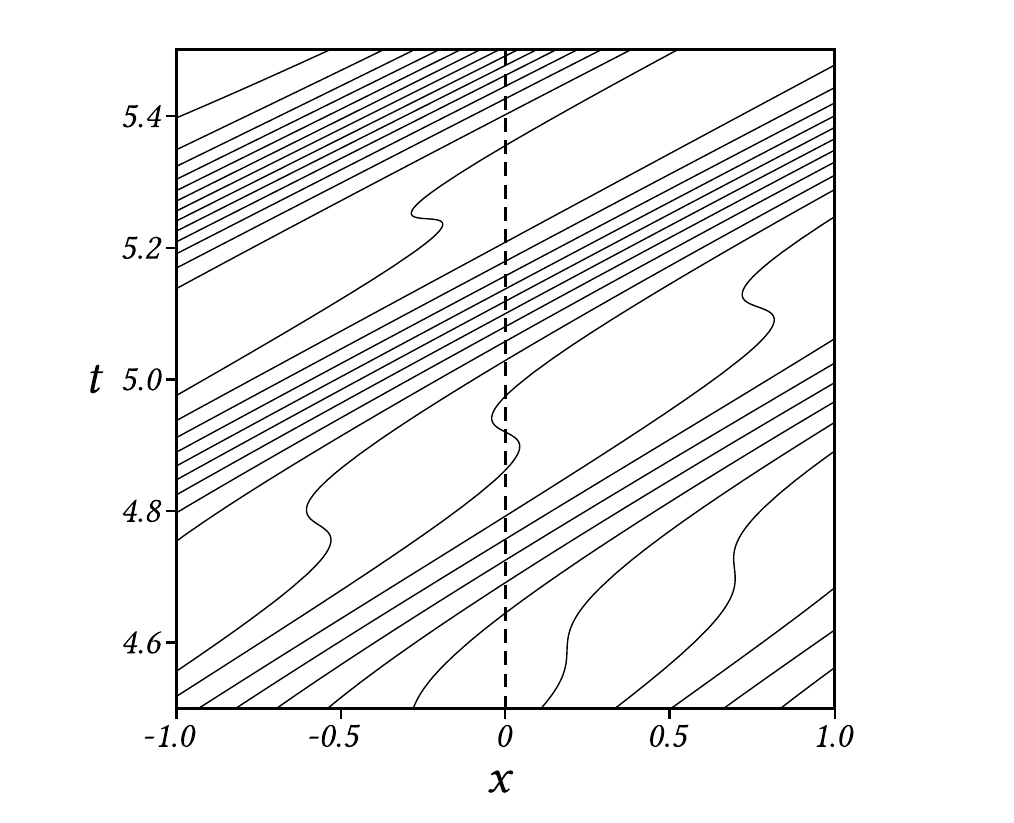}} 
\end{minipage}%
\caption[]{
{Example of backflow: superposition of two Gaussian packets (for the parameters see text).
The dashed line always represents the detector.
\subref{subfig:RhoTofX} 
Probability density of the position at time $t=0$ (solid), $t=5.2$ (filled gray), and $t=12$ (dotted).
\subref{subfig:RhoofTandX} 
Probability density of the position as a function of position and time. 
\subref{subfig:CurrentAtScreen} 
Probability current at the screen as a function of time. 
\subref{subfig:Traj} 
Overall structure of the Bohmian trajectories; the rectangles are magnified in \subref{subfig:TrajCloseUp} and \subref{subfig:TrajVeryCloseUp}.
\thisfloatpagestyle{empty}}}
\label{fig:NegativeCurrent}
\end{figure}

We mentioned that, even for states freely evolving and with support only on positive momenta, the quantum current can become negative.
We provide now a simple example of this circumstance, depicted in fig.\ \ref{fig:NegativeCurrent}.
We use units such that $\hbar=1$, and choose the mass to be one.

We consider the superposition of two Gaussian packets, both with initial standard deviation of position equal to $3$, corresponding to a standard deviation of momentum of $1/6$.
The first packet is  initially centered in $x=-10$ and moves with average momentum $p=2$, while the second packet is centered in $x=-34$ and has  momentum $p=6$.
The probability of negative momentum is in this case negligible.
The second packet overcomes the first when they are both in the region around the origin, where the detector is placed.
In this area the two packets interfere, but then they separate again (cf.\ fig.\ \ref{subfig:RhoTofX}).

In fig.\ \ref{subfig:Traj} the Bohmian trajectories are shown on a big scale.
One can see that they never cross, but rather switch from one packet to the other. 
Moreover, they are almost straight lines, except for the interference region.
In that region, it is interesting to look at a higher number of trajectories, making apparent that the trajectories bunch together, resembling the interference fringes (cf.\ fig.\ \ref{subfig:RhoofTandX} and \ref{subfig:TrajCloseUp}).

Looking at the trajectories more in detail (fig.\ \ref{subfig:TrajVeryCloseUp}), one can see that they suddenly jump from one fringe to the next, somewhen even inverting the direction of their motion.
In this case, it can happen that the particle crosses the detector backwards, leading to a negative current, as shown in fig.\ \ref{subfig:CurrentAtScreen}.

One could argue that Gaussian packets always entail negative momenta, and that this could be the cause of the negative current.
To show that this is not the case, we can compare the probability to have negative momentum 
\begin{equation}
\PP (p<0) = \intdef{-\infty}{0} |\tilde\psi\of p|^2 \, \de p \approx 10^{-33}
\end{equation}
with  the probability to have a negative Bohmian velocity 
\begin{equation}
\PP(v\of {t} <0) = \int_{K_t} \rho\of{x,t} \,\de x  ,
\end{equation}
where $K_t \coloneqq \{  x \in \R  |   j\of{ x, t}  < 0 \}$.
For instance, at time $t=5.2$ this probability is $0.008$ (numerically calculated), therefore the negative current can not be caused by the negative momenta.

\chapter[What Does One Measure When One Measures\\the Arrival Time of a Quantum Particle?]{What Does One Measure When One Measures the Arrival Time of a Quantum Particle?}
\label{ch:PRL}
\blfootnote{This chapter has been published as \citep{VonaHinrichsDurr2013}.}
\section{Introduction}

\subsection{Measurement of Time in Quantum Mechanics} 

Consider the following experiment: a one particle wave function is prepared at time zero in a certain bounded region  $G$ of space; the wave evolves freely, and  around that region are particle detectors waiting for the particle to arrive.
The times and locations at which detectors click are random, without doubts. 
We ask: What is the distribution of these random events?

The measurement of time in quantum mechanics is an old and recurrent theme, mostly because no time observable as self adjoint operator exists \citep{Pauli1958,EgusquizaMuga1999}. Time is therefore not observable in the orthodox quantum mechanical sense, but since clocks exist and  time measurements are routinely done in quantum mechanical experiments, the situation draws attention. 
We stress that the time measurements we discuss in the present chapter are really meant as clock readings triggered by the click of a waiting particle detector.
Usual experiments of this kind are performed in far-field regime, where a semi-classical analysis that connects the arrival time to the momentum operator is sufficient \cite[see for example][sec 10.1]{MugaLeavens2000}.
However, with faster detectors at hand \citep[for example for photons see][]{ZhangSlyszVerevkin2003,PearlmanCrossSlysz2005,RenHofmann2011} it will be soon possible to investigate the near-field regime, where a deeper analysis is needed.
It is important to remark that what is usually called ``time of flight'' in the context of cold-atoms experiments is not a time measurement in the sense described, but rather a measurement of the position probability density after a time of free evolution.

It follows easily from Born's statistical law that ordinary quantum
  measurements are described by  \povm{s},  positive operator valued measures, \citep{Ludwig1983a, DuerrGoldsteinZanghi2004,DurrGoldsteinZanghi2013}.
This fact motivated a longstanding quest for an arrival time \povm{} derived from first principles and independent of the details of the measurement interaction \citep{Kijowski1974,Werner1986,Werner1987,MugaSalaPalao1998,EgusquizaMuga1999,MugaLeavens2000}.

But what classifies an actual experiment as an arrival time measurement?
Surely not the fact that its outcomes are distributed according to a certain \povm{}, otherwise an appropriate computer program could also be called ``arrival time measurement''.
In fact,  the quest for an arrival time \povm{} cannot be grounded in the belief that there exists some  \emph{true} arrival time, 
{whose distribution is conceived as a \povm{} only because instruments readings are distributed according to a \povm{}. }
Indeed, quantum measurements in general do not  actually measure a preexisting value of an underlying quantity, and outcomes rather result from the interaction of the system with the experimental set-up.

One should rather think that any measurement that one would call arrival time measurement must necessarily satisfy some symmetry requirements, and that these requirements identify a class of \povm{}s \citep{Kijowski1974,Werner1986}.
The elements of this class correspond to different realizations of the measurement interaction, and must be treated on a case-by-case basis.

\subsection{The Integral Flux Statistics}
In the simplified case in which the arrival position is not detected---or, similarly, if we restrict to one dimension---a general and easy analysis is possible for the initial states such that the probability that the particle is inside the region $G$ decreases monotonically with time.
To satisfy this requirement it is sufficient that the wave function $\psi$ of the particle belongs to the set
\begin{equation}
\cpc \coloneqq \{  \psi\:  |\ 
{\vect j}_{\psi}({\vect x},t)  \cdot \de{\vect S} \geq 0,
\ 
\forall {\vect x}\in\partial G, \ \forall t \geq  0  \}  ,
\end{equation}
where
\begin{equation}\label{fluxexpression}
{\vect j}_{\psi}({\vect x},t) 
\coloneqq
 \frac{\hbar}{m}\, \Im \left( \psi^*({\vect x},t)\, \nabla \psi({\vect x},t) \right)   
\end{equation} 
is the probability current, $\partial G$ is the boundary of $G$, and $\de{\vect S}$ is the surface element directed outwards.

In these conditions, the probability that the particle crosses  $\partial G$ later than time $t$ is equal to the probability that the particle is inside $G$ at $t$.
Therefore, the probability for an arrival at $\partial G$ during the time interval $\de t$ is given by the \emph{integral flux statistics}
\begin{equation}\label{eq:IntegralProb}
\PP (\de t) 
=
\frac{\de}{\de t} 
	\left( \int_{G} \abs{\psi(\vect x,t)}^2\,  {\de^3\vect x}  \right)
	 \, \de t
=
\left( \int_{\partial G} {\vect j}_{\psi}({\vect x},t)  \cdot \de{\vect S}\right)\, \de t  .
\end{equation}

The previous analysis, together with the fact that any quantum measurement is described by a \povm{}, raises the following question:
\begin{quote}
\emph{Does there exist a  \povm{} which agrees with the 
integral flux statistics \eqref{eq:IntegralProb} on the set \cpc{}?}
\end{quote}
We answer this question in the next sections.

\subsection{Bohmian arrival times}
The flux statistics is most naturally understood in the context of Bohmian mechanics.

In the experiment introduced above the Bohmian particle moves along the continuous trajectory $\vect X(t)$, and arrives at the detector at the time at which $\vect X(t)$ crosses it, therefore a ``true arrival time'' does exist, namely that of the Bohmian particle.

We recall that the Bohmian trajectories are the flux lines of the probability, i.e.\ 
\begin{equation}\label{BM}
\dot{\vect X}(t)
=\frac{{\vect j}_{\psi}\bm({\vect X}(t),t\bm)}{|\psi\bm({\vect X}(t),t)\bm)|^2}  .
\end{equation}
The particle's wave function in Eq.\ \eqref{BM} can in principle also be the so called conditional wave function, which takes into account the interaction with the detector \citep{DuerrGoldsteinZanghi1992,DurrGoldsteinZanghi2013,PladevallOriolsMompart2012}.
A Bohmian particle can in general cross the surface $\partial G$ several times and the probability for having a first arrival at the surface element $\de \vect S$ during the time interval $\de t$ is
\begin{equation}
\label{POPtruncatedflux}
\PP (\de \vect S, \de t) = 
\tilde{\vect j}_{\psi}  \cdot \de{\vect S}\: \de t  ,
\end{equation}
where 
\begin{equation}
\label{truncatedflux}
 { \tilde{\vect j}}_{\psi}({\vect x},t)
 \coloneqq
 \syst{ & {\vect j}_{\psi}({\vect x},t) && \mbox{if $(t,{\vect x})$ is a first exit from
$G$}  
\\ &0&& \mbox{otherwise} }  
\end{equation}
is the so called truncated current \citep{DaumerDurrGoldstein1997,DurrGoldsteinZanghi2013}.
A first exit event $(t,\vect x)$  from the region $G$ is such that the Bohmian trajectory crosses $\partial G$ for the first time since $t=0$ through the point $\vect x \in \partial G$ at time $t$.

In case each Bohmian trajectory crosses the detector surface only once---i.e.\ the wave function belongs to the set $\cpc$---then every exit is a first exit, and  the first arrival statistics is given by the simpler expression
\begin{equation}
\label{POPRAN}
\PP (\de \vect S, \de t) = 
{\vect j}_{\psi}  \cdot \de{\vect S}\: \de t,
\end{equation}
which we shall call the \emph{flux statistics}.
Note that this gives the statistics for both arrival time and arrival position.

Now one may ask if it is possible to design an experiment whose results disclose the  ``true arrival times''.
The outcomes of such an experiment would be distributed according to Eq.~\eqref{POPtruncatedflux}.
 Unfortunately, this is  impossible, since the truncated flux depends explicitly on the trajectory of the particle, and is not sesquilinear with respect to the wave function as needed for a \povm{} \cite[see also][]{RuggenthalerGrublKreidl2005}.
Hence, according to Bohmian mechanics the ``true arrival time'' exists, but its statistics is not given by a \povm{}, so there is no experiment able to measure it (note that this statement is not in contradiction with the fact that the Bohmian trajectories and the quantum flux are detectable  in weak measurements \citep{Wiseman2007,KocsisBravermanRavets2011,TraversaAlbaredaDi-Ventra2012}, or through deconvolution from absorption signals and other limiting operations \citep{DamboreneaEgusquizaHegerfeldt2002,RuschhauptMugaHegerfeldt2009}).
From this circumstance one may  jump to the conclusion that Bohmian mechanics must be false. 
That conclusion is however unwarranted. 
The measurement analysis in Bohmian mechanics yields straightforwardly that the statistics of measurement outcomes are always given by \povm{}s \citep{DuerrGoldsteinZanghi2004,DurrGoldsteinZanghi2013}. There is no inconsistency here.
Observing that a \povm{} is defined on the whole of the Hilbert space, we see that our previous request of measurability was rather strong, in that we allowed any initial state for the particle, even the very bizarre ones.
As a consequence, it is reasonable to restrict our quest for measurability to a subset of  good wave functions, as for example \cpc{}.
Now we may ask the following question:
\begin{quote}
\emph{Does there exist a  \povm{} which agrees with the 
flux statistics \eqref{POPRAN} on the set \cpc{}?}
\end{quote}
This question slightly generalizes that asked in the previous section.

\section{No Go Theorem for the Arrival Time \textsc{Povm}}
For simplicity we consider a particle moving in one dimension with a detector only at one place.
That restricts our analysis to random times only, and makes \eqref{eq:IntegralProb} and \eqref{POPRAN} equivalent, which is sufficient for the purpose at hand; the generalization to three dimensions is straightforward.
We consider that the detector is placed at $D>0$ and that it is active  during the time interval $I=(0, T)$.
The one particle wave is prepared at time zero well located around the origin.

We introduce the set of wave functions
\begin{equation}
\cpcI \coloneqq \{  \psi\:  |\ 
{ j}_{\psi}(D,t) \geq  0,
\ 
\forall t \in I  \}  .
\end{equation}
On these wave functions the flux statistics is the first arrival time statistics.
We want to find out if a \povm{} density $ O_t$ exists, such that
\begin{equation}\label{eq:JPOVM}
\braket{ \psi |  O_t  |  \psi }
= j_{\psi}(D,t)
\quad \forall t \in I,\ \  
 \forall \psi \in \cpcI  .
\end{equation}
In the following we will use the notation
\begin{equation}
j_+ \coloneqq  j_{\psi+\varphi} ,
\qquad
j_- \coloneqq  j_{\psi-\varphi} .
\end{equation}
By sesquilinearity   of \eqref{fluxexpression} we have
\begin{equation}
\label{eq:JLinearity}
j_\psi + j_\varphi  =  \thalf  \bigl(j_+  +  j_- \bigr)  .
\end{equation}
Similarly,
\begin{multline}
\label{eq:POVMLinearity}
\braket{ \psi |  O_t  |  \psi }  +  \braket{ \varphi |  O_t  |  \varphi } = \\
	=  \thalf \bigl(  \braket{ \psi + \varphi |  O_t  |  \psi + \varphi }   
		+  \braket{ \psi - \varphi |  O_t  |  \psi - \varphi } \bigr)  .
\end{multline}

\begin{sidewaysfigure}
\hfill%
\subfloat[\label{subfig:InterferenceSumDifferencePlus}]{\includegraphics[width=.35\columnwidth]{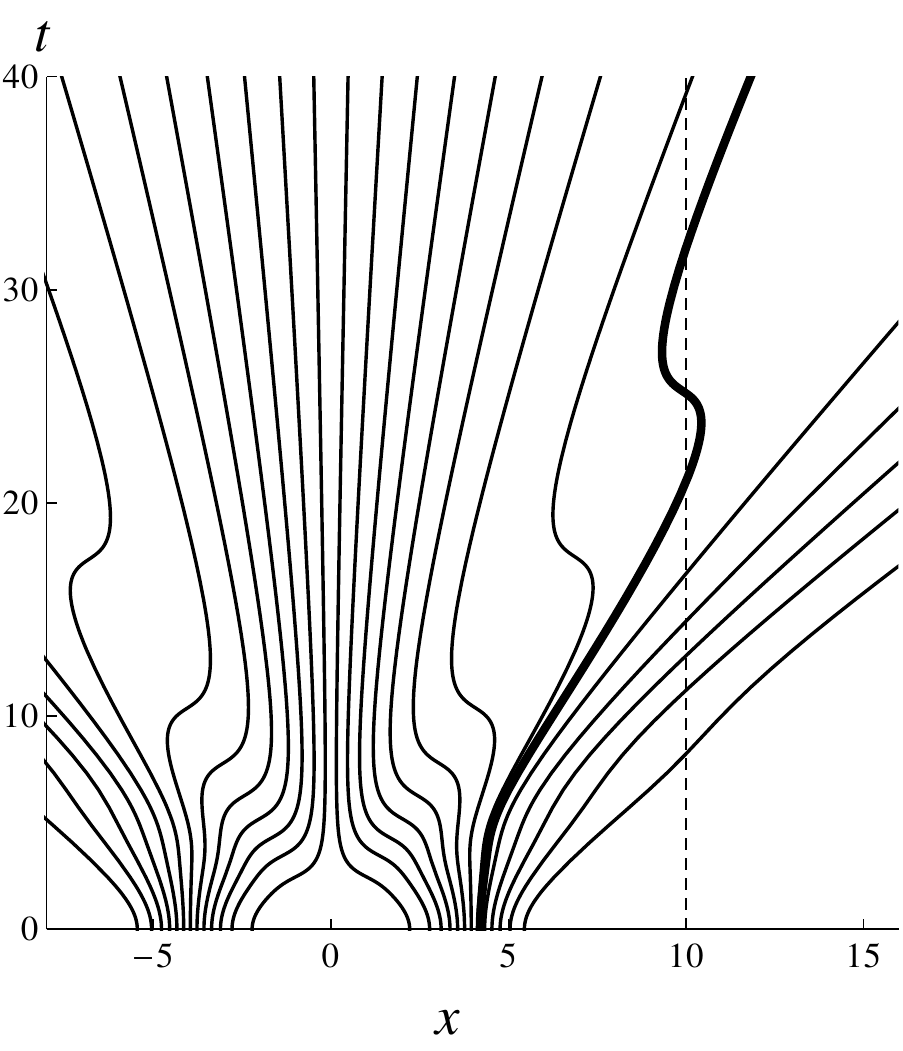}}%
\hfill
\subfloat[\label{subfig:InterferenceSumDifferenceMinus}]{\includegraphics[width=.35
\columnwidth]{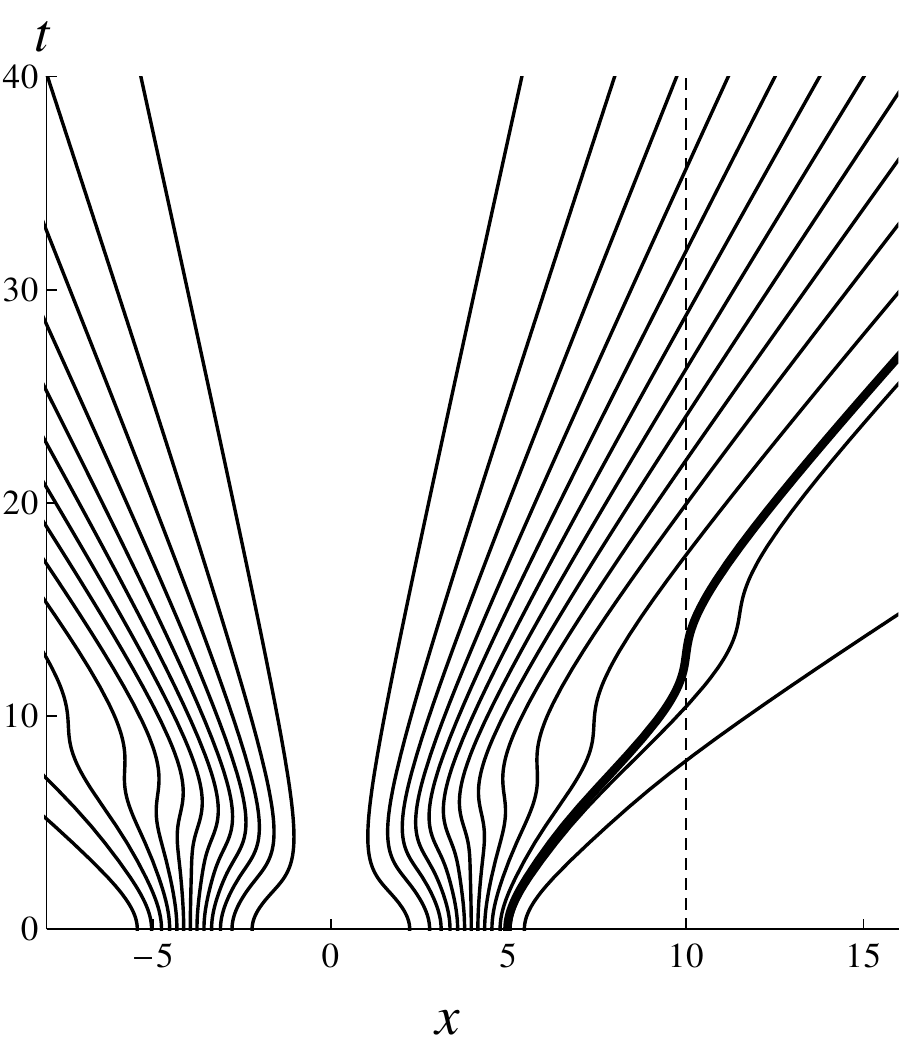}}%
\hfill\mbox{}%
\caption[]{ 
Bohmian trajectories for \subref{subfig:InterferenceSumDifferencePlus} $g_1+g_2$ ($\notin\cpcI$),  and \subref{subfig:InterferenceSumDifferenceMinus} $g_1-g_2$ ($\in\cpcI$), with $g_{1,2}$ Gaussian packets with unitary  position variance, zero mean momentum, and initial mean position equal to $4$ and $-4$, respectively. 
The detector is located at $D=10$ (dashed line).
The initial positions are distributed according to $|\psi|^2$; one further trajectory is shown (bold line), such that it crosses the screen at a minimum of the current.
The units are such that $m=\hbar=1$.
\label{fig:InterferenceSumDifference}}
\end{sidewaysfigure}

Consider now two wave functions $\psi$ and $\varphi$ in $\cpcI$  such that also $\psi+\phi$ is in $\cpcI$, while $j_-(D, t_-)<0 $ for some $t_-\in I$. 
Such functions exist, and an example built with Gaussian wave packets is given in Fig.\ \ref{fig:InterferenceSumDifference}  \citep[see also][for a proposal of a realistic experiment to detect the presence of negative current in similar conditions]{PalmeroTorronteguiMuga2013}.
Requiring \eqref{eq:JPOVM}, we have for every $t$ in $I$ (omitting the argument $D$ in $j$)
\begin{gather}
\braket{ \psi |  O_t  |  \psi } =  j_\psi\of{t},
\qquad
\braket{ \varphi |  O_t  |  \varphi } =  j_\varphi\of{t},
\nonumber\\
\text{and }
\braket{ \psi + \varphi |  O_t  |  \psi + \varphi } =  j_+\of{t}  .
\end{gather}
Substituting in \eqref{eq:JLinearity} and using \eqref{eq:POVMLinearity} we thus get
\begin{equation}
\braket{ \psi - \varphi |  O_t  |  \psi - \varphi }
= j_-\of{t}
\quad \forall t \in I\,.
\end{equation}
But $\braket{ \psi - \varphi |  O_t  |  \psi - \varphi } $ is positive for all $t$ in $I$, while $j_- $ becomes negative at $t_- \in I$, hence a contradiction.
Therefore, \emph{a \povm{} satisfying \eqref{eq:JPOVM} on all functions in \cpcI{} does not exist}.

We can strengthen the result.
Let $O_\psi(t) \coloneqq \braket{ \psi |  O_t  |  \psi }$, 
$\epsilon_\psi\of{t} \coloneqq O_\psi(t) - j_\psi(t)$, 
and let $\epsilon_\pm \coloneqq O_{\psi\pm\varphi}(t) - j_\pm(t)$.
By linearity, i.e.\ subtracting Eq.~\eqref{eq:JLinearity} and \eqref{eq:POVMLinearity},
\begin{equation}
\epsilon_\psi + \epsilon_\varphi  =  \thalf  (\epsilon_+  +  \epsilon_-)  ,
\end{equation}
that implies
\begin{equation}
2\abs{\epsilon_\psi} + 2\abs{\epsilon_\varphi} + \abs{\epsilon_+}
	\geq\abs{2\epsilon_\psi + 2\epsilon_\varphi  - \epsilon_+}
	= \abs{ \epsilon_-}  .
\end{equation}
At a time $t_-$ such that $j_-\of{t_-}<0$, we have $\abs{\epsilon_-\of{t_-}}>\abs{j_-\of{t_-}}$ and thus
\begin{equation}
2\abs{\epsilon_\psi\of{t_-}} + 2\abs{\epsilon_\varphi\of{t_-}} + \abs{\epsilon_+\of{t_-}}
	>\abs{j_-\of{t_-}}  .
\end{equation}
The value $\abs{j_-\of{t_-}}$ is in general not bounded, therefore the error
between any \povm{} and the flux statistics can be arbitrarily large.
 The conclusion is therefore that \emph{there exists no \povm{} which approximates the flux statistics on all functions in \cpcI{}}.

\subsection{The Argument is a Set Argument}
We wish to stress that in the previous section we showed that it is impossible to design an experiment that measures the Bohmian arrival time \emph{on all wave functions in a certain set}, namely \cpcI.
The choice of the set that we consider is crucial, and on a different set our argument may not apply.
To illustrate this point, we present an exaggerated example.
Consider the set of wave functions
\begin{equation}
\cpc_\mathrm G
\coloneqq \{  \psi \in\cpcI \:  |\ 
\psi \text{ is a Gaussian}  \}  .
\end{equation}
For every $\psi$ and $\phi$ in $\cpc_\mathrm G$, neither $\psi + \phi$ nor $\psi - \phi$ is in $\cpc_\mathrm G$, and our argument does not apply.
Of course, the set $\cpc_\mathrm G$ is absolutely artificial and serves only to highlight that our impossibility result depends heavily on the choice of the class of allowed wave functions.

\section{Scattering States}\label{sec:PRL:ScatteringStates}
A class of functions very important from the experimental point of view is that of scattering states, i.e.\ states that reach the detector in far field regime.
These wave functions are particularly important because usual time measurements  are performed in these conditions {\cite[see for example][]{MugaLeavens2000}}.
For these states \citep{BrenigHaag1959,Dollard1969} (in units such that $m=\hbar=1$)
\begin{equation}\label{eq:ScatteringState}
\psi_{\!t}(x)
\approx \frac  {e^{i x^2 /2t}}   {(it)^{1/2}}\ \,
	\tilde\psi \bigl(\tfrac{x}{t} \bigr),
\qquad 
x\approx D,\ 
\forall t \in I  ,
\end{equation}
where $\tilde\psi$ is the Fourier transform of the initial wave function.
As a consequence, it can be shown that \citep{DurrTeufel2009}
\begin{equation}\label{eq:ScatteringCurrent}
j_{\psi}(D,t)
\approx \frac{D}{t^2}\,
	\Abs{\tilde\psi \bigl(\tfrac{D}{t} \bigr)}^2 ,
\qquad \forall t \in I  ,
\end{equation}
and therefore all scattering states are in \cpcI.
A linear combination of scattering states is still a scattering state, indeed Eq.~\eqref{eq:ScatteringState} and \eqref{eq:ScatteringCurrent} apply to the combination as well.
Therefore, no contradiction arises asking for a \povm{} that agrees with the flux statistics on scattering states.
An example of such a \povm, at least approximately, is given by the momentum operator.
This follows from Eq.~\eqref{eq:ScatteringCurrent}, that shows that Bohmian arrival time measurements on scattering states are nothing else than momentum measurements.
In conclusion, our negative result about \cpcI{}  does not forbid to interpret actual, far field time measurements in terms of the flux statistics.

\section{High-Energy Wave Functions}
As already remarked, the set for which we ask accordance between the flux statistics and the \povm{} is crucial.
We found that the set of scattering states presents no problem, but it would be of course much more interesting to identify a subset of \cpcI{}, such that it is possible to measure the Bohmian arrival time also in near field conditions.
We do not have any proof that such a set exists, nevertheless we believe that the subset of \cpcI{} of wave functions with high energy is a good candidate, at least in an approximate sense.

To support our conjecture, we performed some numerical investigations.%
\footnote{See the Appendix at the end of the Chapter for more details.
Our conjecture is supported also by the results obtained by \citet{YearsleyDownsHalliwell2011} for a wide class of clock models.}
We considered as model system the superposition of two Gaussian packets $g_1$ and $g_2$, with equal standard deviation of position  $\sigma$.
If $g_1$ and $g_2$ are both elements of \cpcI, then the eventual negative current of their superpositions must be caused by interference, that is in turn either due to the spreading of the packets, or to their different velocities.

To study the effect of the spreading, we first set the mean momenta of the two packets to be the same, and equal to $k$.
Varying $\sigma$, we found that a threshold $k_\sigma$ exists, such that for $k$ smaller than $k_\sigma$ it is possible that $g_1+g_2$ is in $\cpcI$, but $g_1-g_2$ is not, while for $k$ larger than $k_\sigma$ both $g_1+g_2$ and $g_1-g_2$ are in $\cpcI$.
The threshold $k_\sigma$ increases with decreasing $\sigma$, as expected from the fact that a smaller $\sigma$ means a larger momentum variance, and therefore a larger probability of small momentum.

We examined the effect of a difference in the velocities of the two packets considering the closest packet to the screen to have a fixed momentum $k_1$ well above the value $k_\sigma$, and varying the momentum $k_2$ of the second packet. 
We found that, if $k_2$ is sufficiently far away from $k_1$, then neither $g_1+g_2$ nor $g_1-g_2$ is in \cpcI.
On the contrary, for $k_2$ close to $k_1$, it can happen that the sum is in \cpcI{} and the difference is not, or the other way round.
However, the interval {of $k_2$ values around $k_1$ for } which this happens shrinks (relatively to $k_1$) with growing $k_1$, as well as the maximal value of the negative current.

For the subset of \cpcI{} of  wave functions with high energy it is therefore not true that it is possible to find a \povm{} that agrees \emph{exactly} with the flux statistics, indeed our main argument still applies.
Nevertheless, our numerical study supports the conjecture that it is possible to find  a \povm{} that \emph{approximately} agrees with the flux statistics, with a better agreement for higher energies.

\section{Conclusions}
We showed that  \emph{no \povm{} exists, that approximates the flux statistics on all  functions in \cpcI{}}.
Moreover, the error $\epsilon_\psi$ between a candidate \povm{} and the flux statistic can be very large on any  wave function in \cpcI{}, even for simple states like Gaussians or sum of Gaussians.
{As a consequence, the flux statistics cannot be used to predict the outcomes of an arrival time experiment conducted with wave functions in \cpcI{}.
However, this negative result is very sensitive to the choice of the set and the flux statistic might provide a good arrival time prediction for a more restrictive set of wave functions.
For example, it is indeed possible to find a \povm{} that agrees with the flux statistics on the subset of \cpcI{} composed by scattering states.
Similarly, we conjecture that a \povm{} exists that approximates the flux statistics on the subset of \cpcI{} of wave functions with high energy. 
We produced some numerical evidence to support this conjecture.
}

\newpage
\section*{Appendix: Numerical Investigation}
\addcontentsline{toc}{section}{Appendix: Numerical Investigation} 
We present here the numerical calculations that we performed to investigate if a \povm{} can exist, that agrees with the flux statistics  on the subset of \cpcI{} of wave functions with high energy.
Our model system was the superposition of two Gaussian packets $g_1$ and $g_2$ with equal initial standard deviation of position  $\sigma$.
We used units are such that $m=\hbar=1$, and we considered the time interval $I=(0,T)$, where 
\begin{equation}
T=\frac {x_2-3\sigma}    {k_2+5/12}  ,
\end{equation}
$x_2$ is the initial mean position of the furthest packet from the screen, and $k_2$ is its mean momentum; the term $5/12$ has been inserted by hand to ensure that $T$ is reasonably small also when $k_2$ is zero.

\subsection{Effect of the Spreading}

\begin{figure}[b] 
\centering
\subfloat[\label{subfig:Sigma1}]{\includegraphics[width=.45\columnwidth]{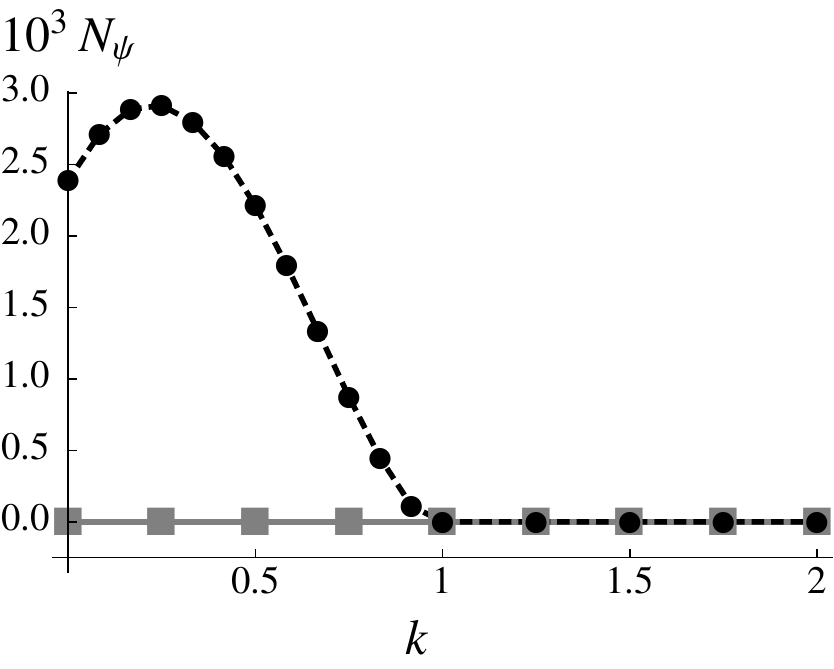}}  
\hfil
\subfloat[\label{subfig:Threshold}]{\includegraphics[width=.45\columnwidth]{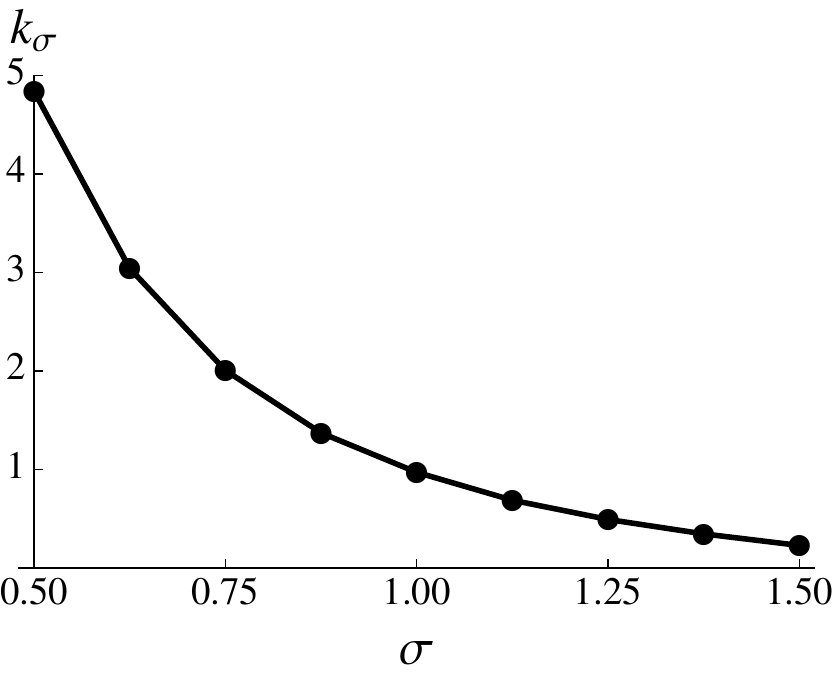}}  
\caption[]{ 
\subref{subfig:Sigma1} 
Numerically integrated negative current $N_\psi$ for $\psi=g_1+g_2$ (circles, dashed black line) and for $\psi=g_1-g_2$ (squares, solid gray line), as functions of the mean momentum $k$ of the two packets.
\subref{subfig:Threshold} 
Threshold $k_\sigma$ as a function of the position variance $\sigma$ of the two packets.
\label{fig:SpreadingIntegral}}
\end{figure}

We studied the effect of the spreading setting the mean momenta of the two packets to be both equal to $k$.
We quantified the total amount of negative current during the interval $I$ by
\begin{align}
N_\psi &\coloneqq
\intdefde{I}{}{t}
| j_{\psi}(D,t)| \
\chi_< (t),
\nonumber\\
&\text{with}\quad\chi_< (t) \coloneqq
\syst{1, &&&j_{\psi}(D,t)<0
\\
0, &&&j_{\psi}(D,t)\geq0  .}  
\end{align}
In Fig.\ \ref{subfig:Sigma1} we plotted $N_{g_1+g_2}$ and $N_{g_1-g_2}$ as functions of $k$, for two Gaussian packets with unitary  position variance, zero mean momentum, and initial mean position equal to $4$ and $-4$, respectively; the detector is located at $D=10$. 
For $k$ bigger than one no negative current is present, and both $g_1+g_2$ and $g_1-g_2$ are in \cpcI{}.

We denoted this threshold value by $k_\sigma$, and we found that it decreases as $\sigma$ increases, as shown in Fig.\ \ref{subfig:Threshold}.

\subsection{Effect of Different Velocities}

\begin{figure}[b] 
\centering%
\includegraphics[width=.72\columnwidth]{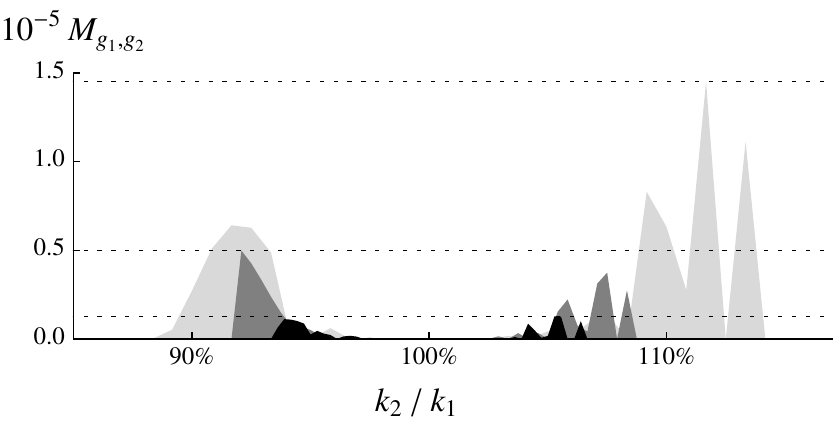}  
\caption{
Numerical values of $M_{g_1,g_2}$  as a function of the ratio of the momenta of the two packets, for  $k_1 = 20$  (light gray), $40$ (gray), and $60$ (black). 
The horizontal dashed lines highlight the maximal values of $M_{g_1,g_2}$.
\label{fig:DifferentVelocitiesIntegral}}
\end{figure}

To study the effect of a difference in the velocities of the two packets we considered $g_1$ and $g_2$ to have again unitary position variance, but different mean momenta $k_1$ and $k_2$, respectively. 
The packet $g_1$ was initially centered around zero, and we considered the values $20$, $40$, and $60$ for its mean momentum $k_1$.
The initial mean position $x_2$ of the packet $g_2$ was such that the two maxima cross in $D+\sigma_t$, where $\sigma_t$ is the position variance at the time of the crossing, and $D=40$ is the detector position.
Consequently, $x_2$ ranged approximately between $-20$ and $-80$, depending on $k_2$.

We studied the quantity
\begin{equation}
M_{g_1,g_2} \coloneqq
(  N_{g_1+g_2} + N_{g_1-g_2}  )\  \chi_0  ,
\end{equation}
where
\begin{equation}
\chi_0 \coloneqq
\syst{0, &&&\text{if either $N_{g_1+g_2}$ or $N_{g_1-g_2}$ is zero,}
\\&&&\text{but not both of them,}
\\
1, &&&\text{otherwise} .}  
\end{equation}
Therefore,  $M_{g_1,g_2}$ is zero when both $g_1+g_2$ and $g_1-g_2$ are in \cpcI{}, as well as when none of them is, while $M_{g_1,g_2}$ is different from zero when one combination is in \cpcI{} and the other one is not.
The results are presented in Fig.\ \ref{fig:DifferentVelocitiesIntegral}, from which it is evident that the interval on which $M_{g_1,g_2}$ is different from zero narrows (relatively to $k_1$) with growing $k_1$, and at the same time the maximal value of $M_{g_1,g_2}$ decreases.

\chapter[How to Reveal the Limits of the\\Semiclassical Approach]{How to Reveal the Limits of the Semiclassical Approach}
\label{ch:Experiment}
The usual experimental conditions in which time measurements are performed  allow to use the semiclassical description borrowed from momentum measurements described in  Sec.~\ref{sec:TimeStat}.
This is due to the fact that the distances involved are such that the detection always happens in far-field regime, and this is true even for \emph{dedicated} time experiments \citep[see for example][]{SzriftgiserGuery-OdelinArndt1996}.
It is then very interesting to devise an experiment in which the limits of the semiclassical approach are clearly exceeded.
This chapter sketches some ideas in this sense; they should be intended as a basis for discussion, not as complete proposals.
The main result of this chapter is the identification of the relevant quantities that have to be considered in the analysis.

\section{Overtaking Gaussians}\label{sec:OvertakingGaussians}


\begin{sidewaysfigure}
\centering
\hfill\subfloat[\label{subfig:DoubleGaussianX0}]{\includegraphics [width=.3\textwidth]{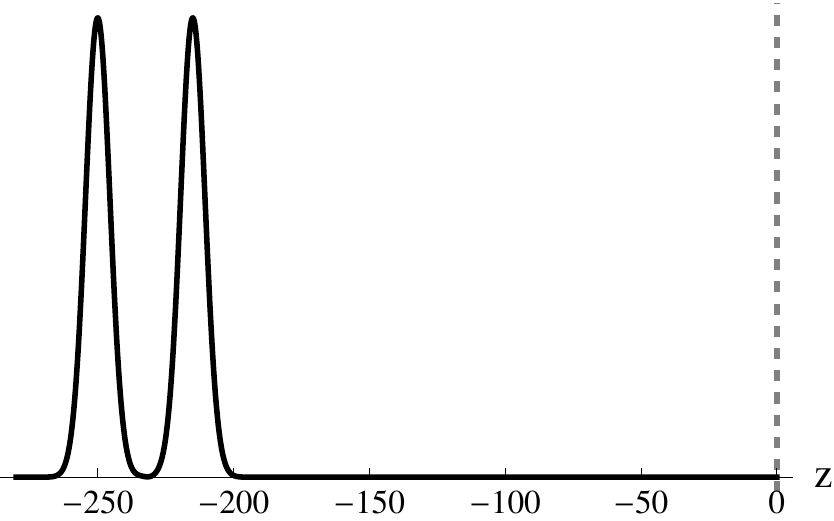}}%
\hfill\subfloat[\label{subfig:DoubleGaussianK}]{\includegraphics [width=.3\textwidth]{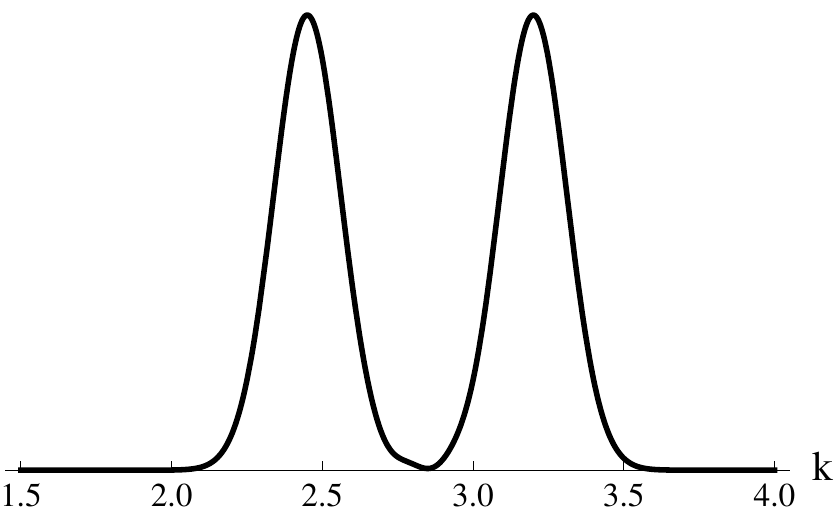}}%
\hfill\subfloat[\label{subfig:DoubleGaussianT}]{\includegraphics [width=.3\textwidth]{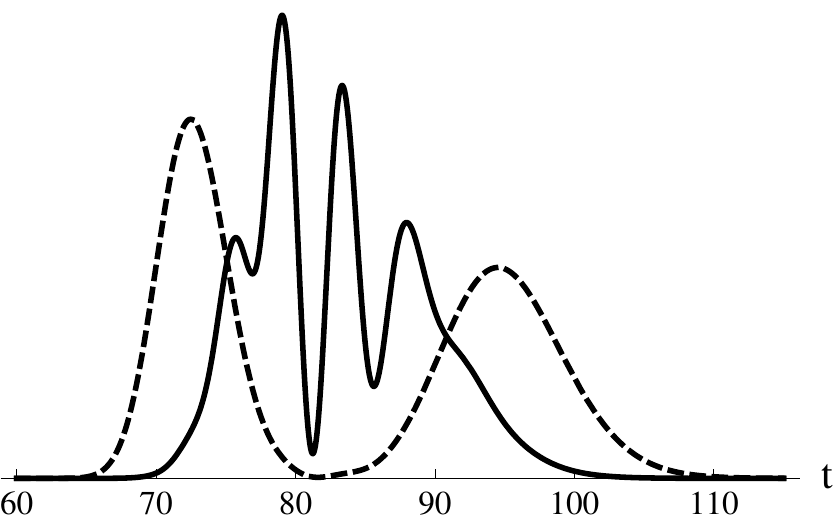}}%
\hfill\mbox{}%
\\
\hfill\subfloat[\label{subfig:DoubleGaussianX1}]{\includegraphics [width=.3\textwidth]{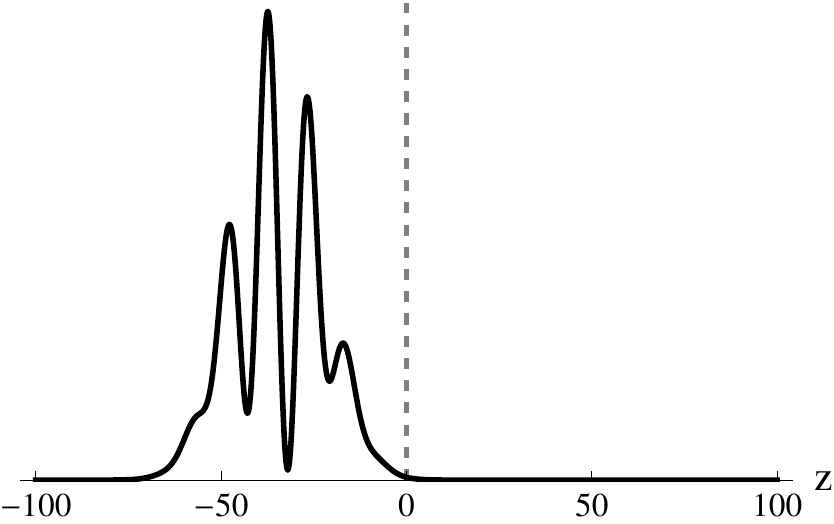}}%
\hfill\subfloat[\label{subfig:DoubleGaussianX2}]{\includegraphics [width=.3\textwidth]{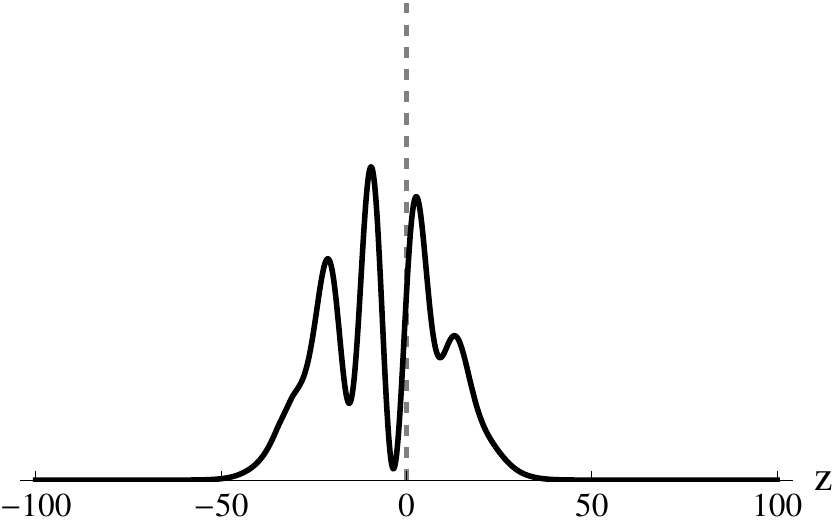}}%
\hfill\subfloat[\label{subfig:DoubleGaussianX3}]{\includegraphics [width=.3\textwidth]{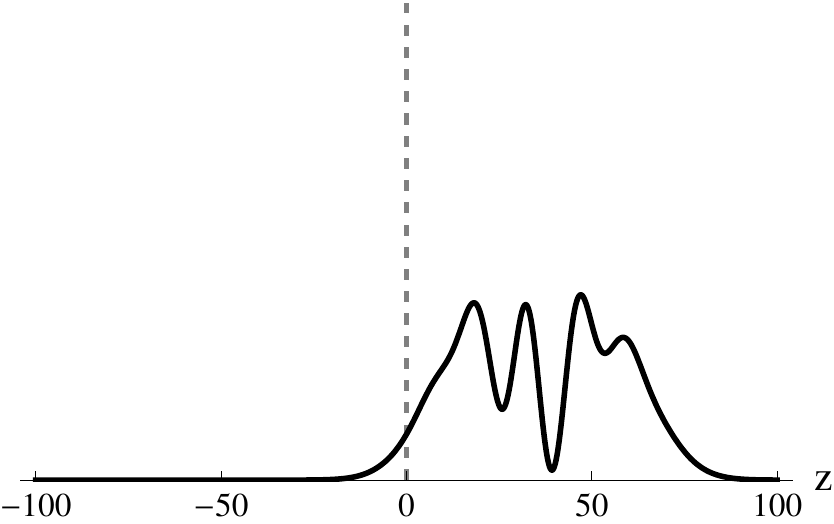}}%
\hfill\mbox{}
\caption[]{
Superposition of two free Gaussian packets in one dimension.
The units are such that $\hbar$ and $m$ are equal to one;
 $\sigma_1=\sigma_2=4.5$, $L=215$, $\delta=35$, $v_1=2.45$, $v_2=3.2$.
The first row shows the probability density of: \subref {subfig:DoubleGaussianX0}~initial position, \subref{subfig:DoubleGaussianK}~momentum, and \subref {subfig:DoubleGaussianT}~arrival time (the solid line is calculated from the quantum current as explained in Secs.~\ref{sec:TimeStat} and~\ref{sec:BohmView}; the dashed line is calculated using Eq.~\eqref{eq:TraditionalDensity}, i.e.\ the semiclassical approach).
The pictures~\subref{subfig:DoubleGaussianX1}, \subref{subfig:DoubleGaussianX2}, and~\subref{subfig:DoubleGaussianX3}~show the the probability density of the position at the times $70$, $80$, and $95$ respectively.
The gray dashed line always represents the detector.
In this example the two packets do not cross each other exactly at the detector position, but at $z=-47$.
}
\label{fig:DoubleGaussian}
\end{sidewaysfigure}

Consider the following experiment.
A particle of mass $m$ moves in one dimension $z$ following the free Schrödinger dynamics; initially, the particle is concentrated on the left of the detector, that is placed at $z=0$ (see Fig.~\ref{fig:DoubleGaussian}) and is able to register the instant of arrival. 
Let the initial wave function of the particle be the sum of two Gaussians, one centered in $z_1=-L$, and the second one in $z_2=-L-\delta$, and let them move with mean velocities such that
\begin{equation}
\frac{v_1}{v_2}  =  \frac{L}{L+\delta}
\end{equation}
i.e.\ the two centers arrive at the detector together at the time $T = L/v_1$.
Properly choosing the parameters of the system we can make sure that the two Gaussians are well separated in momentum and initially in position too.
In this way no interference pattern is present, neither in the momentum distribution nor in the position distribution at time zero.
Then, any semiclassical analysis that makes use of the hypothesis of constant velocity will provide a prediction for the arrival time distribution that shows no interference at all (see Eq.~\eqref{eq:TraditionalDensity}).
On the other hand, in the region in which the fastest packet overtakes the slowest one, the position distribution will surely show interference, and so will an actual measurement of the arrival time.

In order to have two packets well separated both in momentum and in initial position, their standard deviations in position $\sigma_i$ must be such that
\begin{equation}\label{eq:Cond}
\left\{
\begin{aligned}
&3(\sigma_1+\sigma_2) \lesssim \delta
\\
& 3 \left(  \frac{1}{2\sigma_1} + \frac{1}{2\sigma_2} \right)  \lesssim  \abs{k_2 - k_1} 
	= \frac{m}{\hbar} (v_2-v_1)  
	=  \frac{\delta}{\hbar L} m v_1  .
\end{aligned}
\right.  
\end{equation}
One additional condition is that the interference fringes must be small enough to be distinguished if compared to the envelope of the signal.
Assuming $\sigma_1\approx\sigma_2$,  the two Gaussian packets are completely overlapped when they are in  the detector region, therefore the envelope of the signal is about ${6\sigma_1\of{T}}/{ v_1}$.
The interference fringes in position are approximately given by $\cos \abs{k_1-k_2}z$, therefore the interference fringes in time  approximately have period ${2\pi}/({\abs{k_1-k_2} v_1} )$.
To have visible interference, the fringe period must be smaller than the envelope signal, that implies
\begin{equation}
 \abs{k_1-k_2}  \gtrsim \frac{\pi}{3\, \sigma_1\of{T}} ,
\end{equation}
that is automatically realized if Eq.~\eqref{eq:Cond} is fulfilled.
Finally, the fringes must be separated enough to be visible to the detector.
If the detector has time sensitivity $\tau$, then we need that
\begin{equation}
\frac{2\pi}{\abs{k_1-k_2} v_1} 
 \gtrsim\tau,
\end{equation}
This condition and Eq.~\eqref{eq:Cond} with $\sigma_1\approx\sigma_2$ can be written together as
\begin{equation}\label{eq:Cond2}
\left\{
\begin{aligned}
&\sigma_1 \lesssim \frac{\delta}{6} ,
\\
&m v_1 \gtrsim \frac{3\hbar L}{\delta\sigma_1}  ,
\\
&\frac{m v_1^2}{2} \lesssim \frac{\pi\hbar L}{\delta\tau}  .
\end{aligned}
\right.  
\end{equation}
To summarize, the described experiment needs that:
\begin{enumerate}
\item The initial  position distribution is the sum of two well separated pulses, for which no interference is present.
\item The farthest pulse must be faster than the closest one, so that they meet in the detector region. The velocity difference must be big enough to have two well separated pulses in momentum too, so that no interference is present.
\item The interference fringes that arise in the detector region must be separated enough to be resolved in time by the detector.
\end{enumerate}
It is interesting to note that for large times the position probability density $\abs{\psi\of{x,t}}^2$ is essentially determined by the momentum probability density, thanks to well known scattering results.
Therefore, the request of having no interference both in initial position and in momentum can be viewed as the request of having no interference in position both at very short and at  very long times.
This means that the two Gaussian packets must be initially well separated in position, then they have to interfere, and finally they have to separate again.
The detector must be placed in the middle region.

One possible implementation of this experiment is to use photons propagating through a dispersive medium.
Nevertheless, present-day detectors are still too slow to allow for such a simple setting, although the technological progress in this field might soon change the situation  \citep[see for example][]{ZhangSlyszVerevkin2003,PearlmanCrossSlysz2005,RenHofmann2011}.
Massive particles do not seem more promising in this sense, as the last condition in Eq.~\eqref{eq:Cond2} shows that a bigger mass requires a better time sensitivity $\tau$ of the detector.
In any case, stroboscopic techniques might be of help, in which a window is periodically opened in front of the detector. 
Varying the frequency and the phase of the opening function one can recover the period of the interference fringes, even if $\tau$ is very poor.
For particles, this might be carried out using optical mirrors.

\section{Converging Double Slit}
\begin{figure}
\newcommand{\filename}{DoubleSlit}
\centering \includegraphics[width=.75\textwidth]{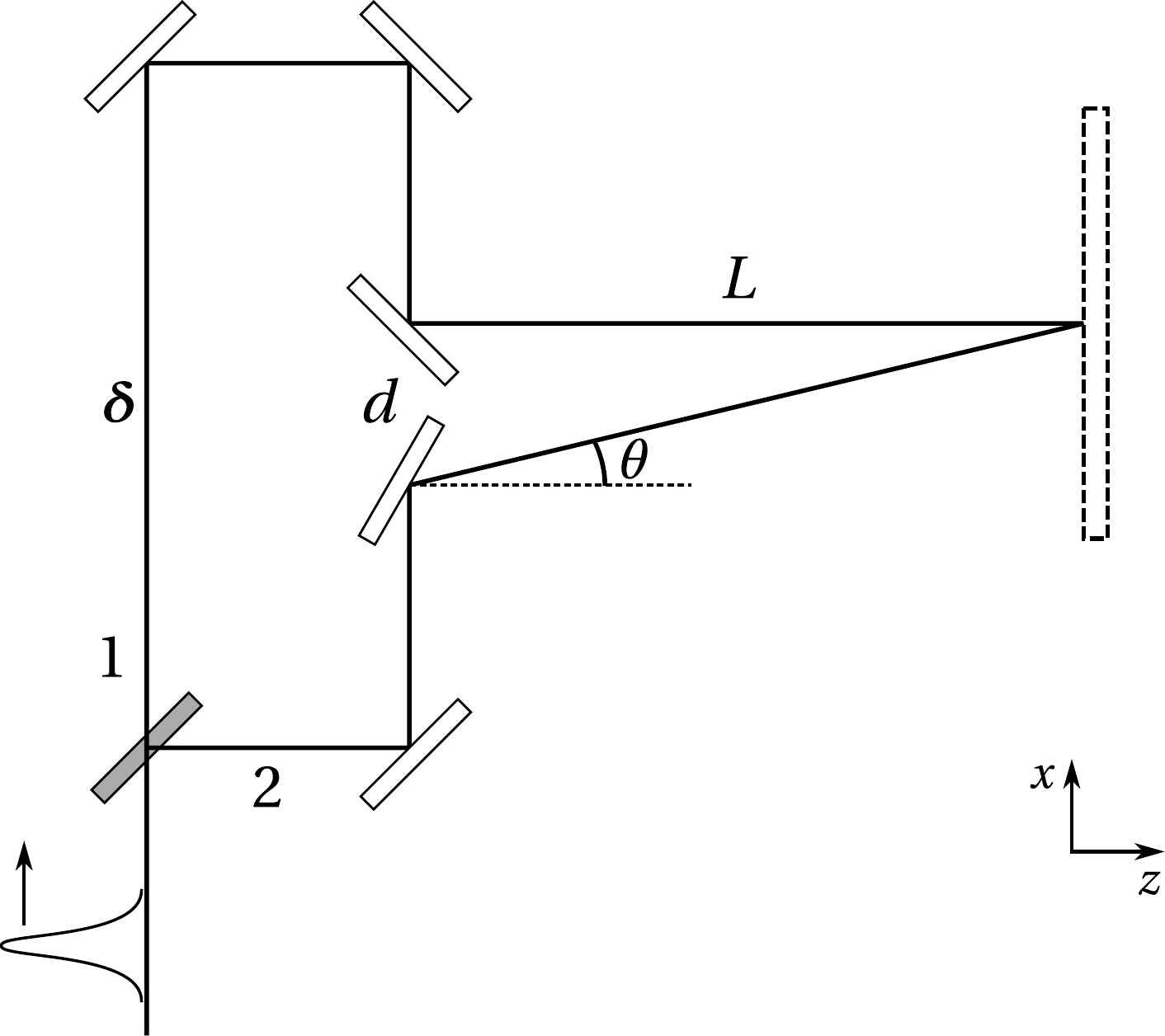}  
\caption{Converging double slit experiment.
A pulse is divided in two by a beam splitter (gray); the first beam covers a longer path and is send towards the detector (dashed rectangle) perpendicularly  to it; the second beam covers a shorter path and arrives at the detector with an angle $\theta$.
We denote by $d$ the separation between the slits, by $L$ the distance between the slits and the detector, and by $\delta$ the difference between the length of the two arms.}
\label{fig:\filename}
\end{figure}

We will now illustrate a setting that is more suited to photons.
The basic idea is to use two packets in two dimensions propagating in non-parallel directions, and then to project them on one dimension.
In this way, the two packets can move with equal velocity, and still have different apparent ones.
Moreover, in two dimensions on can also consider the joint distribution of arrival time and position at the screen.

Figure~\ref{fig:DoubleSlit} shows the arrangement.
A source produces a Gaussian packet that is split in two; each part travels a different length and is then sent towards a screen along non-parallel directions.%
\footnote{The different propagation length and the beam splitter introduce a phase difference between the two beams; nevertheless, this phase can be completely controlled and represents no problem.}
We take as origin of the coordinates $\{x,z\}$ the center of the last mirror of the arm 2, and as time zero that at which the pulse 2 passes at the origin; we consider also the coordinates $\{x',z'\}$ obtained from $\{x,z\}$ by rotation of $\theta$, i.e.\ $x'=\cos \theta \, x - \sin \theta\, z$, and $z'=\cos \theta \, z + \sin \theta\, x$.
We assume that the pulses leaving from the mirrors move in the longitudinal and transverse directions independently, i.e.\ they can be written as products of functions of these coordinates.
Moreover, we assume that when they pass through $z=0$ they have the same total velocity $v$ and the same standard deviations of position $\sigma_{\!l}$ and $\sigma_{\!t}$ in the longitudinal and transversal direction respectively.
The arm~1 is longer than the arm~2 by $\delta$, hence the Gaussian~1 will start in $(x=d, z=0)$ at the time $t_1=\delta/v$.
Let $v,m,\sigma>0$, 
\begin{align}
A_t (\sigma)&\coloneqq \sigma\,  \left(1 +  \frac{i\hbar t}{2m\sigma^2}  \right),
\\
g_{\sigma}\of{y,t}
&\coloneqq
\frac1{ ( 2\pi)^{1/4} A_t^{1/2}}\
\exp\left[{-\frac{y^2}{(4 \sigma A_t) }}\right],
\\
G_{\sigma}\of{y,t}
&\coloneqq
\exp\left[{ \frac{im v}{2\hbar}  \left( 2y-  v  t\right)}\right]
g_{\sigma}\of{y-vt,t}.
\end{align}
The function $g_\sigma$ describes a freely evolving Gaussian packet with zero initial velocity, and $G_\sigma$ one with velocity $v$.
Then, the initial wave function of the particle is 
\begin{align}
\Psi\of{x,z,t} \coloneqq& g_{\sigma_t}(x-d,t-t_1) G_{\sigma_l}(z,t-t_1)
\\
	&+ g_{\sigma_t}(x',t) G_{\sigma_l}(z',t).
\end{align}

Consider now that the detector is not sensitive to the arrival position $x$, but only to the arrival time.
Then, the semiclassical approach prescribes to use Eq.~\eqref{eq:TraditionalDensity} for  the probability density of the arrival time.
Denoting by $\tilde \Psi\of{k_x,k_z}$ the Fourier transform of $\Psi\of{x,z,0}$ in both space variables, we get as probability density of the arrival time
\begin{equation}
\frac{mL}{\hbar t^2}
\int_{-\infty}^{\infty}
\left|\tilde \Psi\left(\frac{p_x}{\hbar},\frac{mL}{\hbar t}\right)\right|^2
\de {p_x}.
\end{equation}

This quantity has to be compared to a probability density derived taking into account the complete quantum nature of the phenomenon. 
We can consider the flux of the quantum probability current 
\begin{equation}
\vec j\of{ x,z, t} \coloneqq  \frac{\hbar}{m}\,  \Im \Psi^*(x,z,t)\ \nabla\Psi(x,z,t)  
\end{equation}
through the detector surface (see Secs.~\ref{sec:TimeStat} and~\ref{sec:BohmView}), that is
\begin{equation}\label{eq:DoubleSlitCurrentFlux}
\intdef {z=L}{}  \vec j\of{ x,z, t}  \cdot \de \vec \sigma
=
\intdef{-\infty}{\infty}    j_z\of{x,z=L, t}\   \de x.
\end{equation}
In Chapter~\ref{ch:PRL} it is shown that the use of the quantum flux requires some caution; nevertheless, in the present application we need only a quantity that reproduces the quantum interference to contrast the semiclassical result, and no special accuracy is needed.

If the detector is also able to register the arrival position $x$, then a more refined analysis is needed.
We need the probability density of getting a click at a given time and at a given position along $x$.
If the detector is very far away from the slits, then different arrival times correspond to different initial momenta in the $z$ direction, while different arrival positions $x$ correspond to different angles of the initial momentum, i.e.\ to different $x$-components of the momentum.
Then, letting
\begin{align}
k(y,t) &\coloneqq \frac{my}{\hbar t},
\end{align}
the probability density of an arrival at time $t$ at the position $x$ is given by
\begin{equation}\label{eq:FirstTraditionalDistrXT}
\left|\tilde \Psi\left(k(x,t),k(L,t)\right)\right|^2
\left| \frac{d k(L,t)}{dt}\, 
\frac{d k(x,t)}{d x} \right| .
\end{equation}
Alternatively, one can maintain that the probability density of having an arrival at time $t$ at the position $x$ is the joint probability of having the momentum ${mL}/{t}$ along $z$ and of being at the position $x$ at time $t$, that gives, denoting by  $\mathcal F_z$  the Fourier transform in the variable $z$ alone,
\begin{equation}\label{eq:SecondTraditionalDistrXT}
\left| \mathcal F_z [\Psi]\of{x,k(L,t),t}  \right|^2 
\left|  \frac{d k(L,t)}{d t} \right| .
\end{equation}
The densities \eqref{eq:FirstTraditionalDistrXT} and \eqref{eq:SecondTraditionalDistrXT} have to be compared to the component $j_z\of{x,z=L, t} $ of the quantum  current along the direction $z$ at time $t$ and at the position $x$.


\section{Pauli Birefringence}

A further possibility to let a wave packet overtake another one on a small distance, is to exploit birefringence.%
\footnote{I owe to Harald Weinfurter some of the ideas presented in this section.}
An example is the propagation of a polarized photon through a birefringent crystal, or of a neutron in an homogeneous magnetic field; in both cases the two polarization/spin channels propagate at different speeds.
We consider the crystal/field to be infinitely extended, i.e.\ we neglect the change of refraction index before and after it.

We model the situation by means of a Pauli particle of mass $m$ evolving in presence of a constant magnetic field $B$ parallel to the propagation direction $z$.
The Hamiltonian of the system is
\begin{equation}
H
= \frac{1}{2 m} p^2 - B \sigma_z\,,
\qquad
\sigma_z
=
\begin{pmatrix}
1 & 0\\
0 & -1
\end{pmatrix},
\end{equation}
and the two $z$-spin channels evolve according to the two independent Schrödinger equations
\begin{equation}
i\hbar\, \partial_t \psi_{z,s} \of{z,t} = - \left(  \frac{\hbar^2}{2 m}\, \Delta + s B  \right)\, \psi_{z,s} \of{z,t} ,
\qquad s=\pm.
\end{equation}
In momentum representation, the time evolution of each $z$-spin channel is simply givrn by the multiplication by a phase factor.
Letting $\tilde\psi$ be the Fourier transform of $\psi$ with respect to $z$, we have
\begin{equation}
\tilde\psi_{z,s}\of{k,t}
=
e^{- \frac{i}{\hbar} (E(k)+sB) t }\
\tilde\psi_{z,s}\of{k,t=0} ,
\qquad
E(k) \coloneqq \frac{\hbar^2k^2}{2m}  .
\end{equation}

Suppose now that before the detector a spin filter is placed, that lets through only the $x,+$ spin component.
Then, the detector will interact with the wave function
\begin{equation}\label{eq:PsiXPlus}
\psi_{x,+} \of{z,t}  =  \frac{1}{\sqrt 2} \left( \psi_{z,+} \of{z,t} +\psi_{z,-} \of{z,t}   \right).
\end{equation}
Note that $\psi_{x,+}$ does not fulfill an autonomous Schrödinger equation, because some probability gets exchanged between the $x,+$ and the $x,-$ spin channels.
The wave functions $\psi_{z,+}$ and $\psi_{z,-}$ contained in $\psi_{x,+}$ can give rise to interference, but it should be noted that the interference here is of a ``different kind''  than that in the case illustrated in Sec.~\ref{sec:OvertakingGaussians}: the interference between $\psi_{z,+}$ and $\psi_{z,-}$ does not suppress the probability in the minima and enhance it in the maxima, but rather the probability gets moved between the $x,+$ and the $x,-$ channels.
Of course, this does not exclude that each of the functions $\psi_{z,+}$ and $\psi_{z,-}$ singly is a sum of packets giving rise to ``normal'' interference.

The wave function $\psi_{x,+}$ in momentum representation reads
\begin{equation}
\tilde\psi_{x,+} \of{k,t}  =  \frac{1}{\sqrt 2} \left( \tilde\psi_{z,+} \of{k,t} +\tilde\psi_{z,-} \of{k,t}   \right),
\end{equation}
therefore, the probability density of having in this spin channel the momentum $p=\hbar k$  is
\begin{equation}
\Pi_{x,+}\of{p,t}
\coloneqq
\frac{1}{2\hbar}  \left |   
	e^{- \frac{i}{\hbar} B t } \tilde\psi_{z,+}\of{k,0}
	+
	e^{ \frac{i}{\hbar} B t } \tilde\psi_{z,-}\of{k,0}
	\right|^2.
\end{equation}
Note that in the general case $\Pi_{x,+}$ oscillates in time and is therefore difficult to apply the usual semiclassical argument, in which it is assumed that the particle moves with constant velocity.
Such an argument makes sense only if time enters in $\Pi_{x,+}$ exclusively through $p$, by virtue of the classical relation $p\of{t}  = m \,L/t$.
Nevertheless, if in analogy to Sec.\ref{sec:OvertakingGaussians} we assume that $\tilde\psi_{z,+}\of{k,0}$ and $\tilde\psi_{z,+}\of{k,0}$ have well separated $k$-supports, then
\begin{equation}
\Pi_{x,+}\of{p,t}
\approx
\frac{1}{2\hbar}   \left(
	\left |  \tilde\psi_{z,+}\of{k,0} \right|^2
	+  \left | \tilde\psi_{z,-}\of{k,0} \right|^2
	\right).
\end{equation}
Hence, the semiclassical analysis is again applicable, giving rise to the probability density of arrival time 
\begin{equation}\label{eq:TDensityPauliSemi}
\frac{mL}{2\hbar^2 t^2}
\left(
	\left |  \tilde\psi_{z,+}\left(\frac{mL}{\hbar t},0\right) \right|^2
	+  \left | \tilde\psi_{z,-}\left(\frac{mL}{\hbar t},0\right) \right|^2
	\right) .
\end{equation}
This density shows no interference pattern.
If in addition $\psi_{z,+}\of{z,0}$ and $\psi_{z,+}\of{z,0}$ have well separated $z$-supports, then we can also be sure that no analogous argument which exchanges the role of position and momentum can give rise to an interference pattern in the time density.

The probability density~\eqref{eq:TDensityPauliSemi} can then be contrasted with the probability current of the channel $x,+$ at the detector position, that is the probability current of the wave function~\eqref{eq:PsiXPlus}.
If the wave functions $\psi_{z,+}$ and $\psi_{z,-}$ cross at the detector region, an interference pattern will appear in the quantum current.

\chapter{On the Energy-Time Uncertainty Relation}
\label{ch:ET}
\blfootnote{The results presented in this chapter are the product of a teamwork with Robert Grummt \citep{GrummtVona2014}.}
\section{Introduction}
The \emph{energy-time uncertainty relation}
\begin{equation}\label{eq:ET}
\Var E \, \Var T \geq \frac{\hbar^2}{4}
\end{equation}
is one of the most famous formulas of quantum mechanics.
But how can it be that this formula is so widely used if the description of time measurements is still an open problem?
How should $\Var T$ that appears in the relation be understood?

For Schrödinger's wave functions one can write the general uncertainty relations
\begin{equation}\label{eq:GeneralUncertainty}
\Var  A \, \Var  B \geq \frac 1 4 | \langle  [A,B] \rangle   |^2,
\end{equation}
where $A$, $B$ are self-adjoint operators, $\Var A$, $\Var B$ are their variances, and $\mean A$, $\mean B$ their means. 
Nevertheless, Eq.~\eqref{eq:ET} cannot be a consequence of this general formula  as no self-adjoint time operator exists~\citep{Pauli1958}.
Despite the difficulties with the treatment of time measurements, many results on the validity of~\eqref{eq:ET} already exist based on general properties.
For these results a variety of situations and of meanings of  the symbol $\Var T$ is considered, and a comprehensive framework is still missing \citep[see for example][]{Busch1990,MugaMayatoEgusquiza2008,MugaMayatoEgusquiza2009}.
For instance, the results in \cite{SrinivasVijayalakshmi1981,Kijowski1974,Giannitrapani1997,Werner1986}  rest on the assumption that the detection happens on the whole time interval $(-\infty,\infty)$, which is appropriate for describing scattering experiments, but  cannot be applied in general.

In this chapter we study the validity of Eq.~\eqref{eq:ET} for the alpha decay of a radioactive nucleus, that is a case study for the uncertainty relation because of the uncertainty on the energy of the emitted alpha particle and on the instant of emission.
In a typical experiment one has a sample  containing unstable nuclei, surrounded by detectors waiting for the decay product to hit them.
The setting is prepared at time zero and the number of decay events is counted starting at that time, so one can not consider the detection window to extend to $-\infty$.
In this case the mentioned results do not apply, and the uncertainty relation~\eqref{eq:ET} could in principle be violated.
This circumstance is indeed general \citep[see][]{LahtiYlinen1987} and easily understood by looking at a particle in a box in an eigenstate of the momentum, for which  $\Var P=0$, while $\Var X$ cannot exceed the size of the box, thereby violating the position-momentum uncertainty relation.
Nevertheless, the  energy-time uncertainty relation is often used for alpha decay \citep[see for example][]{Rohlf} to connect the energy spread of the alpha particle to the lifetime of the nucleus. 

We will start from Gamow's model~\citep{Gamow1928}, where the alpha particle at time zero is trapped inside a barrier potential but subsequently escapes via tunneling and then hits a detector.
We calculate $\Var E$ exactly,  obtain an approximation for $\Var T$, and estimate the error made with this approximation. For potentials with long lifetimes the error is small enough to check the validity of the energy-time uncertainty relation~\eqref{eq:ET}, and we find that it holds. 

To calculate $\Var T$ we used the flux of the probability current through the detecting surface as probability density function for the arrival time of the alpha particle at the detector. From the discussion of Chapter~\ref{ch:BookChapter} we know that the flux of the probability current in general does not have the needed properties to be a probability density function. 
Nevertheless, its use in this case is justified by the fact that the distance between the detector and the decaying nucleus is much bigger than the nucleus itself, therefore the measurement is practically performed under scattering conditions (cf.\ Sec.~\ref{sec:PRL:ScatteringStates}).

Because we have explicit control over $\Var E$ and $\Var T$, we can try to calculate them for physical systems, but unfortunately, for physically reasonable lifetimes, we get that our error bounds are not good enough. As mentioned above, we determine an approximation of $\Var T$ and calculate the error made with this approximation. The error estimates decrease with growing lifetime and if we calculate it for the longest lived element, i.e.\ Bismuth 209 \citep[$1.16\times 10^{-27}\;\mathrm{s}^{-1}$, see][]{Bismuth}, we get that the error is too big for the approximation on $\Var T$ to be reliable or to check Eq.~\eqref{eq:ET}. However, for even longer lived systems that are not physical  the error becomes small enough for us to check the validity of the energy-time uncertainty relation~\eqref{eq:ET}.
The relation between the error and the lifetime is of technical nature, therefore there is no apparent physical explanation for this.

The circumstance that the estimates of this chapter are useful to show the validity of~\eqref{eq:ET} only for unphysical potentials does not prevent from using the result on a principle level.
In particular, we will compare Eq.~\eqref{eq:ET} with the so-called \emph{linewidth-lifetime relation}, which, letting $\tau$ denote the lifetime of the unstable nucleus and $\Gamma$ the full width at half maximum of the probability density function of the energy of the decay product, reads
\begin{align}\label{eq:LL}
  \Gamma\tau=\hbar.
\end{align}
Since $\Gamma$ expresses an uncertainty on energy and $\tau$ on time, this relation is very often explained as an instance of the energy-time uncertainty relation \citep[see for example][]{Rohlf}, although \citet{KrylovFock1947} provided some arguments against this explanation.
To support this arguments, we will estimate the quantities $\Gamma$ and $\tau$ for our model.
We will find that, by adjusting the potential and the initial state, it is possible to make the product $\Gamma\tau$ arbitrarily close to $\hbar$, while at the same time the product of the energy and time variances  gets arbitrarily large.
This explicitly confirms the thesis that the two relations are indeed independent and one can not be interpreted  as a consequence of the other.

\newpage
\section{Assumptions and Definitions}

Throughout this work we will use units in which $\hbar=1$ and the mass $m=1/2$.
Moreover, for ease of notation we introduce for a function $\phi$
\begin{align}
  \dr\phi(k,r)\coloneqq \partial_r\phi(k,r)
  \quad\text{and}\quad
  \dkp\phi(k,r)\coloneqq \partial_k\phi(k,r).
\end{align}

\subsection{Alpha Decay Model}
The theoretical study of alpha decay goes back to Gamow~\citep{Gamow1928}, whose model is based on the one dimensional Schrödinger equation. 
We will summarize his key insight for the three dimensional Schrödinger equation
\begin{align}
	i\partial_t\Psi=(-\Delta+V)\Psi \eqqcolon \textsf{\textit H} \Psi,
\end{align}
with rotationally symmetric $V$ having compact support in $[0,R_V]$ because in the following we will work in this setting. We will only be concerned with the case of zero angular momentum to avoid the angular momentum barrier potential, which would not have compact support. In this case the three dimensional Schrödinger equation is equivalent to the one dimensional problem
\begin{align}
  i\partial_t\psi=\left(-\partial_r^2+V\right)\psi
  \eqqcolon H \psi
  \qquad\text{with}\quad
  \Psi(r,\theta,\phi)=\frac{\psi(r)}{r}.
\end{align}

Gamow's key insight was that eigenfunctions $f(k_0,r)$ of the stationary Schrödinger equation
\begin{align}\label{eq:SchroedingerGamow}
 \left (-\partial_r^2+V(r)\right)f(k_0,r)=k_0^2f(k_0,r)
\end{align}
that satisfy the boundary conditions $f(k_0,r)=e^{ik_0r}$ for $r\geq R_V$ and $f(k_0,0)=0$ have complex eigenvalues
\begin{align}
  k_0^2=(\alpha_0-i\beta_0)^2=\alpha_0^2-\beta_0^2-i2\alpha_0\beta_0\eqqcolon E-i\frac{\gamma}{2}
\end{align}
for some $\alpha_0,\beta_0>0$. 
So, the function $f(k_0,r)$ yields a solution 
\begin{equation}
f_t(k_0,r) \coloneqq e^{-ik_0^2 t} f(k_0,r)
\end{equation}
of the time-dependent Schrödinger equation
\begin{align}
  i\partial_tf_t(k_0,r)=\left(-\partial_r^2+V(r)\right)f_t(k_0,r)=\left(E-i\frac{\gamma}{2}\right)f_t(k_0,r)
\end{align}
which decays exponentially in time with lifetime $1/\gamma$ since
\begin{align}\label{eq:ExponentialDecay}
  |f_t(k_0,r)|^2=e^{-\gamma t}|f(k_0,r)|^2 .
\end{align}

Clearly, Gamow's description does not immediately connect with quantum mechanics because it contains complex eigenvalues and exponentially increasing eigenfunctions, that are not square integrable. 
Skibsted analyzed in~\citep{Skibsted86}  the sense in which Gamow's model of alpha decay carries over to quantum mechanics.
We summarize his main result in the next Lemma.

\begin{lemma}[Lemma~3.5 of \protect{\citep{Skibsted86}}]\label{lem:SkibstedET}
Let the three dimensional potential $V$ be rotationally symmetric, compactly supported in $[0,R_V]$, with $\|r V(r)\|_1<\infty$, and let it have no bound states. 
Moreover, let $t\geq0$, $R\geq R_V$, $R_2(t)=2\alpha_0 t+R$, and 
\begin{equation}\label{eq:fR}
f_R(r)\coloneqq \1_R f(k_0,r). 
\end{equation}
Then,
\begin{equation}
\|e^{-iHt}f_R-e^{-ik_0^2t}f_{R_2(t)}\|_2\leq  K(\alpha_0,\beta_0,t) \|f_R\|_2 ,\label{eq:SkibstedMain}
\end{equation}
\begin{multline}
\frac{\sqrt\pi}{4}K(\alpha_0,\beta_0,t)=
\\=
  \left(\frac{\beta_0}{\alpha_0}\right)^{\frac{1}{2}}
  +\left(\frac{\beta_0}{\alpha_0}\right)^{\frac{1}{4}}
  \sqrt{\frac{3\pi}{16}\sqrt{\gamma t}
  \left(\frac{1+20\sqrt{{\beta_0}/{\alpha_0}}}{1+10\sqrt{{\beta_0}/{\alpha_0}}}\right)^2
  +\frac{3}{40}}\,.
\end{multline}
\end{lemma}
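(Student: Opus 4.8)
\medskip

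\noindent The statement is a quantitative scattering estimate, and the plan is to prove it along the standard template for such results, keeping every constant explicit. Write $H_0\coloneqq-\partial_r^2$ for the free half-line Hamiltonian with the Dirichlet condition at $r=0$ inherited from the radial reduction. The first observation is that $f_R$ has almost all of its $L^2$-mass near its outer edge $r=R$: since $f(k_0,r)=e^{i\alpha_0 r}e^{\beta_0 r}$ for $r\geq R_V$, one computes
\[
  \norm{f_R}_2^2=\frac{e^{2\beta_0 R}}{4\beta_0}\bigl(1+O(e^{-2\beta_0(R-R_V)})\bigr),
\]
and the fraction of this mass lying in $[0,R_V]=\supp V$ is only $O(e^{-2\beta_0(R-R_V)})$ of the total. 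Because the target bound is stated \emph{relative} to $\norm{f_R}_2$, this already suggests the estimate should be uniform in $R\geq R_V$, and it isolates the role of the potential: by Cook's method, $\norm{(e^{-iHt}-e^{-iH_0t})f_R}_2\leq\int_0^t\norm{Ve^{-iH_0s}f_R}_2\,\de s$, and since $e^{-iH_0s}f_R$ is an outgoing state with group velocity $2\alpha_0$, this difference is dominated by the (exponentially small in $\beta_0(R-R_V)$) overlap of the dispersed state with $\supp V$. The hypothesis that $V$ has no bound states is what guarantees there is no trapped component of $f_R$ that would fail to disperse. Thus it suffices to analyse $e^{-iH_0t}f_R$ and compare it with $e^{-ik_0^2 t}f_{R_2(t)}$.

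\medskip

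\noindent For the free part I would use the explicit half-line propagator. Represent $f_R$ through its sine transform, so that $e^{-iH_0 t}f_R$ becomes an oscillatory integral in the spectral variable $k$ against $e^{-ik^2 t}$; equivalently, convolve with the Gaussian kernel $e^{i(r-r')^2/4t}/\sqrt{4\pi it}$ together with its Dirichlet image. The complex eigenvalue $k_0=\alpha_0-i\beta_0$ is an asset here: $\beta_0>0$ makes the relevant $k$-integral absolutely convergent and permits a contour deformation through the stationary point $k=\alpha_0$. The key structural fact is that the Gamow profile is invariant under the combined operation ``translate outward by $2\alpha_0 t$ and multiply by $e^{-ik_0^2t}$'': the shift turns $e^{\beta_0 r}$ into $e^{2\alpha_0\beta_0 t}e^{\beta_0 r}$, while $\abs{e^{-ik_0^2t}}=e^{-2\alpha_0\beta_0 t}$ cancels it exactly (recall $\gamma/2=2\alpha_0\beta_0$). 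Consequently, writing $e^{-ik_0^2t}f_{R_2(t)}$ as a contour integral of the same type and subtracting, the bulk contributions cancel by this invariance, and what remains are two boundary layers: one at the moving wavefront $r\approx R_2(t)=R+2\alpha_0 t$, coming from the two different truncation radii, and one near $r=0$ produced by the Dirichlet reflection and by the cut at $R_V$.

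\medskip

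\noindent The last step is to bound these two layers in $L^2$, relative to $\norm{f_R}_2$, with explicit constants. The wavefront layer is governed by the complex Gaussian $e^{ir^2/4t}$: a stationary-phase estimate carried out with an \emph{explicit} remainder (not merely its order) shows that the dispersed front has transverse scale $\sim\sqrt t$, so that its size compared with the decay length $\beta_0^{-1}$ of the profile enters as $(\beta_0^2 t)^{1/4}\sim(\gamma t\,\beta_0/\alpha_0)^{1/4}$; this is the origin of the term $(\beta_0/\alpha_0)^{1/4}\sqrt{\tfrac{3\pi}{16}\sqrt{\gamma t}\,(\cdots)^2}$ in $K$, with the squared bracket absorbing the correction to the Gaussian approximation. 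The layer near the origin contributes a term whose size is controlled by $(\beta_0/\alpha_0)^{1/2}$ (this piece need not be sharp---indeed the left-hand side of \eqref{eq:SkibstedMain} vanishes at $t=0$ while $K(\alpha_0,\beta_0,0)>0$), and a crude universal bound on the remaining lower-order pieces produces the additive $\tfrac{3}{40}$. Adding the wavefront and origin contributions and factoring gives the stated form of $K(\alpha_0,\beta_0,t)$.

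\medskip

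\noindent I expect the genuinely hard part to be exactly the \emph{quantitative} stationary-phase step: one must estimate the error in the Gaussian approximation of the oscillatory integral defining $e^{-iH_0t}f_R$ with all constants explicit and uniform in both $R$ and $t$, while simultaneously treating the Dirichlet boundary term at $r=0$ cleanly (the odd extension of a truncated exponential is not itself a convenient function, so the reflected Gaussian must be estimated directly). The remaining ingredients---the Cook/Duhamel reduction, the computation of $\norm{f_R}_2$, and the absolute convergence and contour shift granted by $\beta_0>0$---are routine once that estimate is in hand.
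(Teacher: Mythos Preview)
The paper does not prove this lemma: it is quoted verbatim as Lemma~3.5 of Skibsted (1986) and used as a black box. So there is no ``paper's own proof'' to compare against, only Skibsted's original argument.

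That said, your proposed route differs from Skibsted's and has a genuine gap. Your plan is to replace $H$ by $H_0$ via Cook's estimate and then analyse the free evolution. But the constant $K(\alpha_0,\beta_0,t)$ in the statement is independent of $R$ and of $R_V$, and the lemma must hold already for $R=R_V$; in that regime the ``exponentially small in $\beta_0(R-R_V)$'' you invoke to dismiss $\int_0^t\norm{Ve^{-iH_0s}f_R}_2\,\de s$ is not small at all. The interaction with $V$ is not a perturbation here. Relatedly, your formula $\norm{f_R}_2^2=e^{2\beta_0 R}/(4\beta_0)\,(1+O(\cdot))$ is off: a Wronskian computation using $f(k_0,0)=F(k_0)=0$ (which holds precisely because $k_0$ is a resonance) gives the \emph{exact} identity $\norm{f_R}_2^2=e^{2\beta_0 R}/(2\beta_0)$, with no correction term. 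This exactness is a hint that the Gamow structure should be exploited through the spectral theory of $H$ itself, not $H_0$.

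Indeed, the ingredient the paper quotes from Skibsted (his Lemma~3.2) is the explicit generalized Fourier transform of $f_R$ with respect to $H$:
\[
\hat f_R(k)=-\tfrac12\Bigl[\tfrac{e^{i(k_0-k)R}}{k-k_0}\,\bar S(k)+\tfrac{e^{i(k_0+k)R}}{k+k_0}\Bigr],
\]
so that $e^{-iHt}f_R$ is obtained by multiplying by $e^{-ik^2t}$ and inverting. The comparison with $e^{-ik_0^2 t}f_{R_2(t)}$ then becomes a contour/stationary-phase estimate on this one-dimensional $k$-integral, with the pole at $k=k_0$ producing the moving truncated Gamow function and the residual integral producing $K$. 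Your intuition about a moving wavefront layer of width $\sim\sqrt{t}$ and the resulting $(\gamma t)^{1/4}$ scaling is on the right track, but it should be implemented in the $H$-spectral representation rather than via a free-evolution reduction.
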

In the following we will always assume that 
\begin{equation}
R\geq R_V.
\end{equation}
If $\beta_0\ll\alpha_0$, we see that $K(\alpha_0,\beta_0,t)\ll1$ for several lifetimes $1/\gamma$. 
So for this time span Eq.~\eqref{eq:SkibstedMain} implies that
\begin{align}
  e^{-iHt}f_R\approx e^{-ik_0^2t}f_{R_2(t)},
\end{align}
or in other words $e^{-iHt}f_R$ undergoes approximate exponential decay.

We would like to use $f_R$ as model for the decaying state, but $\Var E$ is not defined for it. Let us show why, assuming that $V$ does not have bound states. Consider the mean energy
\begin{align}
  \langle f_R,Hf_R\rangle=\|H^{\frac{1}{2}}f_R\|_2^2=\|k\hat f_R\|_2^2,
\end{align}
where $\hat f_R$ is the generalized Fourier transform of $f_R$. From Lemma~3.2 in~\citep{Skibsted86} we know that 
\begin{align}\label{eq:SkibstedDistribution}
  \hat f_R=-\frac{1}{2}\left[\frac{e^{i(k_0-k)R}}{k-k_0}\bar S(k)+\frac{e^{i(k_0+k)R}}{k+k_0}\right],
\end{align}
where $S$ is the $S$-matrix. Multiplied with $k$ this function is not square integrable and therefore neither the mean energy $\langle f_R,Hf_R\rangle$ nor the energy variance is defined for $f_R.$ In fact this argument shows that $f_R$ is not in the form domain of $H$, because for this to be the case,
$\langle f_R,Hf_R\rangle$ needs to be finite. While this is completely unproblematic for Skibsted in~\cite{Skibsted86}, it presents a problem for us, since we want to calculate $\Var E$.

Clearly the sharp truncation of $f_R$ causes the tails of the generalized Fourier transform to be so slow in decay that $k\hat f_R(k)$ is not square integrable. We can solve the problem by using a Gaussian cutoff, which is why we will work with the initial wave function
\begin{align}\label{eq:InitialState}
  \psi(r) \coloneqq  f(k_0,r)\left[\1_{R}+\1_{[R,\infty)}\exp\left(-\frac{(r-R)^2}{2\sigma^2}\right)\right]
\end{align}
for some $\sigma>0$. Note that we do not normalize the Gaussian, because we want the wave function to be continuous at $r=R$. For notational convenience we introduce
\begin{align}
  g_R(r) \coloneqq f(k_0,r)\1_{[R,\infty)}\exp\left(-\frac{(r-R)^2}{2\sigma^2}\right),
\end{align}
so that
\begin{align}
  \psi(r)=f_R(r)+g_R(r).
\end{align}
Clearly,  for $\sigma$  small enough $\|g_R\|_2$ is small and the result of Lemma~\ref{lem:SkibstedET} carries over to $  e^{-iHt}\psi$, i.e.
\begin{align}\label{eq:SkibstedSmooth}
  e^{-iHt}\psi\approx e^{-ik_0^2t}f_{R_2(t)}
\end{align}
for several lifetimes.

The following Lemma proves that $H\psi\in L^2(\mathbb R^+)$ so that $\Var E$ exists and is finite for the wave function $\psi$.
\begin{lemma}
  Let the three dimensional potential $V$ be rotationally symmetric with $\|V\|_2<\infty$. Then $\psi$ lies in the domain of self-adjointness of $H$.
\end{lemma}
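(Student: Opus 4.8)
The plan is to identify the domain of self-adjointness of $H$ explicitly and then verify directly that $\psi$ sits inside it. Since $V$ is rotationally symmetric and compactly supported in $[0,R_V]$ with $\|V\|_2<\infty$, it is an infinitesimally small perturbation of the Dirichlet Laplacian $H_0\coloneqq-\partial_r^2$ on $L^2(\mathbb R^+)$, whose domain is $D(H_0)=H^2(\mathbb R^+)\cap H^1_0(\mathbb R^+)$: for $u\in D(H_0)$ one has $\|Vu\|_2\leq\|V\|_2\|u\|_\infty$, and the one-dimensional estimate $\|u\|_\infty^2\leq 2\|u\|_2\|u'\|_2$ together with $\|u'\|_2^2=\langle u,H_0u\rangle$ yields $\|Vu\|_2\leq\varepsilon\|H_0u\|_2+C_\varepsilon\|u\|_2$ for every $\varepsilon>0$. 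By the Kato--Rellich theorem $H=H_0+V$ is then self-adjoint with $D(H)=D(H_0)$. It therefore suffices to check $\psi\in H^2(\mathbb R^+)$ and $\psi(0)=0$.

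The boundary condition is immediate, since $\psi(0)=f(k_0,0)=0$ by the defining condition of the Gamow function. For $\psi\in L^2(\mathbb R^+)$ write $\psi=f_R+g_R$ as in~\eqref{eq:InitialState}: the term $f_R=\1_R f(k_0,\cdot)$ is supported in $[0,R]$ and bounded there, because $f(k_0,\cdot)$ is a $W^{2,1}_{\mathrm{loc}}$ solution of the Sturm--Liouville equation~\eqref{eq:SchroedingerGamow} with $L^1_{\mathrm{loc}}$ coefficient and hence is $C^1$ on compacts; the term $g_R$ carries the Gaussian factor $\exp(-(r-R)^2/2\sigma^2)$, which for $r\geq R_V$ beats the at most exponential growth $|f(k_0,r)|=e^{\beta_0 r}$, so $g_R$ decays faster than any exponential. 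Hence both are in $L^2(\mathbb R^+)$, and so is $\psi$.

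For $\psi\in H^2(\mathbb R^+)$ the decisive observation is that no singular distributional term appears at the matching point $r=R$: the Gaussian cutoff equals $1$ and has vanishing derivative at $r=R$, so $\psi$ and $\psi'$ are continuous there, $\psi'$ is locally absolutely continuous on all of $\mathbb R^+$, and $\psi''$ is an honest locally $L^2$ function with no $\delta$-contribution. On $[0,R]$ we have $\psi=f(k_0,\cdot)$, hence $\psi''=(V-k_0^2)f$, which lies in $L^2([0,R])$ since $f$ is bounded there and $V\in L^2$; on $[R,\infty)$ we have $V\equiv0$ (recall $R\geq R_V$) and $\psi=f(k_0,\cdot)\exp(-(\cdot-R)^2/2\sigma^2)$, whose second derivative is a polynomial times $e^{\beta_0 r}e^{-(r-R)^2/2\sigma^2}$ and is therefore in $L^2([R,\infty))$. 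Combined with $\psi,\psi''\in L^2$ this gives $\psi'\in L^2$ as well, so $\psi\in H^2(\mathbb R^+)\cap H^1_0(\mathbb R^+)=D(H)$; simultaneously $H\psi=-\psi''+V\psi$ equals $k_0^2\psi$ on $[0,R]$ and $-\partial_r^2\!\left(f(k_0,\cdot)\exp(-(\cdot-R)^2/2\sigma^2)\right)$ on $[R,\infty)$, so $H\psi\in L^2(\mathbb R^+)$ and $\Var E$ is finite.

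The only point that requires more than bookkeeping is the behaviour near $r=0$: if $\|V\|_2$ is read as the three-dimensional norm, $V$ need not be square integrable in the one-dimensional sense near the origin, and then the integrability of $Vf$ near $0$ must be extracted from the Dirichlet condition $f(k_0,0)=0$, which forces $f(k_0,r)=O(r)$ and compensates a mild singularity of $V$. With this remark the Kato--Rellich step goes through (alternatively one may argue directly via Sturm--Liouville theory that $r=0$ is in the limit-point case while $r=\infty$ trivially is, so that $H$ with the Dirichlet condition at $0$ is self-adjoint on the domain described above). I expect this near-origin estimate to be the main, though minor, obstacle; the rest is the routine verification sketched above.
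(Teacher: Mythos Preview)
Your proposal is correct and follows essentially the same route as the paper: Kato--Rellich via the Sobolev estimate $\|u\|_\infty^2\leq 2\|u\|_2\|u'\|_2$ to identify $D(H)=D(H_0)$, then direct verification that $\psi(0)=0$, $\psi'\in L^2$, $\psi'$ is absolutely continuous (since the Gaussian cutoff matches value and first derivative at $r=R$), and $\psi''\in L^2$ (using the Schr\"odinger equation on $[0,R]$ and explicit Gaussian decay on $[R,\infty)$). Your closing worry about $\|V\|_2$ being the three-dimensional norm is unfounded---in the paper all norms of $V$ are one-dimensional norms of the radial profile $V(r)$, so the near-origin issue you flag does not arise.
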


\begin{proof}
  We start by determining the domain of self-adjointness of $H$ via the Kato-Rellich theorem~\cite[Theorem~X.12]{RS2}. For this purpose define
  \begin{align}
	H_0\coloneqq -\frac{d^2}{dr^2}
  \end{align}
  on $\{\phi\in L^2(\mathbb R^+)\,|\,\phi(0)=0\}$ and let $\mathcal D(H_0)$ denote its domain of self-\\adjointness. From~\cite[p.~144]{RS2} we get
  \begin{align}
	\mathcal D(H_0)
	=\big\{\phi\in L^2(\mathbb R^+)\,|\,
	&\phi(0)=0,\dr\phi\in L^2(\mathbb R^+),
	\nonumber\\
	&\dr\phi\text{ abs. continuous},\ddr\phi\in L^2(\mathbb R^+)\big\}.
  \end{align}
  From the proof of Lemma~\ref{lem:Selfadjointness} we see that $\mathcal D(H_0)\subset Q(H_0)$, so that by Eq.~\eqref{eq:DensityArgument2} we have
  \begin{align}
  	\|\phi\|_\infty\leq\sqrt{2\|\phi\|_2\|\dr\phi\|_2}
  \end{align}
  for all $\phi\in\mathcal D(H_0)$. With the help of the fact that for arbitrary $A,B>0$ and all $\epsilon>0$ there is a $c_\epsilon>0$ such that
  \begin{align}\label{eq:proof_domain}
	\sqrt{AB}=A\sqrt{B/A}\leq\epsilon B+c_\epsilon A,
  \end{align}
  we then arrive at
  \begin{align}
	\|\phi\|_\infty\leq\epsilon\|\dr\phi\|_2+c_\epsilon\|\phi\|_2.
  \end{align}
  Using this, Cauchy-Schwarz, and Eq.~\eqref{eq:proof_domain} again, we obtain
  \begin{align}
	\|V\phi\|_2
	&\leq\|V\|_2\|\phi\|_\infty\\
	&\leq\epsilon\|\dr\phi\|_2+c_\epsilon\|\phi\|_2\\
	&\leq\epsilon\sqrt{\|\phi\|_2\|H_0\phi\|_2}+c_\epsilon\|\phi\|_2\\
	&\leq\epsilon\|H_0\phi\|_2+c_\epsilon\|\phi\|_2,
  \end{align}
  thereby proving that $V$ is infinitesimally $H_0$-bounded. The Kato-Rellich theorem~\cite[Theorem~X.12]{RS2} then shows
  that $H$ is self-adjoint on $\mathcal D(H_0)$.
  
  To prove that $\psi\in\mathcal D(H_0)$, recall that $f(k_0,r)$ is the solution of the stationary Schrödinger equation~\eqref{eq:SchroedingerGamow}, which satisfies the boundary conditions $f(k_0,r)=e^{ik_0r}$ for $r\geq R_V$ and $f(k_0,0)=0$ with $k_0=\alpha_0-i\beta_0$ for some $\alpha_0,\beta_0>0$. For notational convenience set
  \begin{align}
	\chi(r)\coloneqq \1_{R}+\1_{[R,\infty)}\exp\left(-\frac{(r-R)^2}{2\sigma^2}\right),
  \end{align}
  so that $\psi(r)=f(k_0,r)\chi(r)$. 
The boundary conditions on $f$ imply that 
\begin{equation}
\psi(0)=f(k_0,0)=0.
\end{equation}
Now,
  \begin{align}
	\dr\psi(r)&=\dr f(k_0,r)\chi(r)+f(k_0,r)\dr\chi(r),\\
	\dr\chi(r)&=-\1_{[R,\infty)}\frac{(r-R)}{\sigma^2}\exp\left(-\frac{(r-R)^2}{2\sigma^2}\right)
  \end{align}
  and from Theorem~XI.57 in~\cite{RS3} we know that $\dr f(k_0,r)$ is continuous in $r$. This and the boundary conditions on $f(k_0,r)$ yield the estimate
  \begin{align}
	\|\dr\psi\|_2
	&\leq\|\dr f(k_0,r)\chi(r)\|_2+\|f(k_0,r)\dr\chi(r)\|_2\\
	&\leq\|\1_R\dr f(k_0,r)\|_\infty R+|k_0|\,\left\|\1_{[R,\infty)}\exp\left(ik_0r-\frac{(r-R)^2}{2\sigma^2}\right)\right\|_2\nonumber\\
	&\qquad+\left\|\1_{[R,\infty)}\frac{(r-R)}{\sigma^2}\exp\left(ik_0r-\frac{(r-R)^2}{2\sigma^2}\right)\right\|_2\\
	&<\infty.
  \end{align}
  In order to show the absolute continuity of $\dr\psi$, it is sufficient to ensure that  for all $r\in\mathbb R^+$
  \begin{align}
    \dr\psi(r)
    =\dr\psi(R)+\int_R^r \ddr\psi(r') \,dr'    .    
  \end{align}
Observe that $\dr f$ and $\ddr f$ exist for all $r\in\R^+$ because $f$ is a solution of the Schrödinger equation in the ordinary sense.
Moreover, $\dr \chi$ exists and is continuous for all $r\in\R^+$, but it is not differentiable in $r=R$, so $\ddr \chi$ and $\ddr\psi$ exist in the weak sense for all $r\in\R^+$ and in the ordinary sense  for $r\neq R$.
Now consider the function
\begin{equation}
\phi(x,r) 
\coloneqq  \dr\psi(x)+\int_{x}^{r} \ddr\psi(r') \,dr'  ,
\end{equation}
defined for $x,r<R$ and for $x,r>R$.
Consider $r>R$, then
\begin{equation}
\phi(x,r) = \dr\psi(r) 
\quad \forall x>R,
\end{equation}
therefore
\begin{equation}
\lim_{x\to R^+}\phi(x,r) = \dr\psi(r) 
\quad \forall r>R.
\end{equation}
Similarly, one gets
\begin{equation}
\lim_{x\to R^-}\phi(x,r) = \dr\psi(r) 
\quad \forall r<R.
\end{equation}
Due to the continuity of $\dr\psi$, from the definition of $\phi$ we have
\begin{equation}
\lim_{x\to R^\pm} \phi(x,r)     =\dr\psi(R)+\int_R^r \ddr\psi(r') \,dr'   ,
\end{equation} 
from which we get the absolute continuity of $\dr\psi$.

It remains to show that $\|\ddr\psi\|_2<\infty$. 
Clearly,
  \begin{align}
	\|\ddr\psi\|_2\leq\|\ddr f(k_0,r)\chi(r)\|_2+2\|\dr f(k_0,r)\dr\chi(r)\|_2+\|f(k_0,r)\ddr\chi(r)\|_2.
  \end{align}
  The same arguments which led to $\|\dr\psi\|_2<\infty$ can be applied to show the square integrability of $\dr f\dr\chi$. 
In the weak sense,
\begin{equation}
\ddr\chi(r) = \1_{[R,\infty)}\left[\frac{(r-R)^2}{\sigma^4}-\frac{1}{\sigma^2}\right]
	\exp\left(-\frac{(r-R)^2}{2\sigma^2}\right),
\end{equation}
that together with the boundary conditions on $f(k_0,r)$ gives 
\begin{equation}
\|f(k_0,r)\ddr\chi(r)\|_2 <\infty .
\end{equation}
To handle $\|\ddr f(k_0,r)\chi(r)\|_2$, we use the fact that $f(k_0,r)$ satisfies the Schrödinger equation~\eqref{eq:SchroedingerGamow}, which gives
  \begin{align}
	\|\ddr f(k_0,r)\chi(r)\|_2
	&\leq\|V(r)f(k_0,r)\chi(r)\|_2+|k_0|^2\|f(k_0,r)\chi(r)\|_2\\
	&\leq \|V\|_2 \|\psi\|_\infty+|k_0|^2\|\psi\|_2\\
	&<\infty.
  \end{align}
Thus we see that $\|\ddr\psi\|_2<\infty$, which finishes the proof.
\QED
\end{proof}

\subsection{Assumptions on the Potential}\label{sec:AssumptionsPotentialET}
Throughout the chapter we require the potential $V$ to satisfy the assumptions stated in Section~\ref{sec:assumptions}. For convenience we repeat them here: we consider a non-zero, three-dimensional, rotationally symmetric potential $V=V(r)$, that is real, with support contained in~$[0,R_V]$, such that $\sup( \supp V) = R_V$, and $\|V\|_1<\infty$ (note that this implies $\|r V(r)\|_1<\infty$).
We also assume that the potential admits the asymptotic expansion
\begin{equation}
V(r)   \sim   \sum_{n=0}^M  d_n  (R_V - r)^{\delta_n},   
	\qquad \text{as }{r\to R_V^-},
\end{equation}
with $0\leq M<\infty$,  $-1<\delta_0<\dots<\delta_N$,  $d_n\in\R$,  and $d_n$ not all zero. 

In addition to the assumptions of Section~\ref{sec:assumptions} we will assume that $\|V\|_2<\infty$ and that the potential has no bound states, nor virtual states, nor a zero resonance.%
\footnote{In presence of bound states, the current can be a constant, and its use as arrival time statistic is not reasonable.
Moreover, if the potential has a zero-resonance, then the probability that the particle is in the interior of the detector surface decays as $t^{-1}$ \citep[see][]{JensenKato1979}, and the probability current through the detector has then no variance nor mean.}
We also assume that among all  resonances $k_n=\alpha_n-i\beta_n$, $k_0$ is such that $\alpha_0$ and $\beta_0$ are the minimal ones.
For notational convenience we introduce
\begin{equation}
	\beta\coloneqq\beta_0,  \quad  \alpha\coloneqq\alpha_0,
	\text{ and}\quad
	\gamma \coloneqq 4\alpha\beta.
\end{equation}

\subsection{Time Distribution}
The time variance will be calculated using the flux of the quantum current through the detector surface, which we consider to be a sphere of radius $R$ around the origin.
Note that the cut-off radius $R$ is equal to the detector radius, that is a good choice to model all experiments in which one starts with a bulk of material, and the only information available is that the decay products did not hit the detector yet.
Setting
\begin{align}
  \Psi_t(r,\theta,\phi)\coloneqq e^{-i\textsf{\textit H}t}\Psi(r,\theta,\phi)
  \quad\text{and}\quad
  \psi_t(r)\coloneqq e^{-iHt}\psi(r),
\end{align}
the probability current is 
\begin{align}
  \vec J(r,t)
 & =\frac{2}{\|\Psi\|_2^2} \Im \left[  \bar\Psi_t(r,\theta,\phi)  
	\nabla \Psi_t(r,\theta,\phi)\right]
	\\
 & =\frac{2}{4\pi\|\psi\|_2^2} \Im \left[  \frac{\bar\psi_t(r)}{r}
	\nabla \left(\frac{\psi_t(r)}{r}\right)\right],
\end{align}
hence it is zero in the angular directions, while in the radial direction
\begin{align}
J_r (r,t)&=\frac{1}{2\pi\|\psi\|_2^2} \Im \left[  \frac{\bar\psi_t(r)}{r}
	\partial_r \left(\frac{\psi_t(r)}{r}\right)\right]
\\
&=  \frac {1} {2\pi \|\psi\|_2^2 r^2}\Im \left(  \bar\psi_t(r)  \partial_r \psi_t(r)   \right)
\nonumber\\
&\qquad\qquad-\frac {1} {2\pi \|\psi\|_2^2 r^3}\Im \left(  |\psi_t(r) |^2  \right)
\\
&=  \frac {1} {2\pi \|\psi\|_2^2 r^2}\Im \left(  \bar\psi_t(r)  \partial_r \psi_t(r)   \right)
.
\end{align}
Let
\begin{equation}
j(r,t)\coloneqq  \frac {2} {\|\psi\|_2^2}   \Im \left(  \bar\psi_t(r)  \partial_r \psi_t(r)   \right),
\end{equation}
then the flux of the probability current $\vec J(r,t)$ through the detector is simply 
\begin{equation}\label{eq:Flux}
4\pi R^2  \, J_r(R,t)  = j(R,t) .
\end{equation}
The arrival time probability density $\Pi_T$, being defined as the flux~\eqref{eq:Flux} through the detector surface normalized to one on the time interval $(0,\infty)$, then reads
\begin{equation}\label{eq:DefArrivalTimeDistribution}
\Pi_T(t)  
=   \frac {j(R,t)}  {\int_0^\infty    j(R,t')  d t'}.
\end{equation}
Now, the mean arrival time is
\begin{align}
  \mean{t} \coloneqq \int_0^\infty t\,\Pi_T(t)\,dt
\end{align}
and the time variance is
\begin{align}\label{eq:VarT}
  \Var{T}\coloneqq\int_0^\infty (t-\mean{t})^2 \Pi_T(t)\,dt.
\end{align}

We further simplify the expressions for $\mean{t}$ and $\Var T$ in the following Lemma, which shows that rather than $\Pi_T$, the relevant object is
\begin{equation}\label{eq:TrueSurvivalProb}
 \frac {\|\1_{R}   e^{-iHt}  \psi\|_2^2}  {\|\1_{R}   \psi\|_2^2},
\end{equation}
which we will call \emph{non-escape probability}.
\begin{lemma}\label{lem:SimpleVariance}
  Let $t>0$, then
  \begin{equation}\label{eq:TDistribution}
	\Pi_T(t)   =    - \partial_t  \frac {\|\1_{R}   e^{-iHt}  \psi\|_2^2}  {\|\1_{R}   \psi\|_2^2}.
  \end{equation}
  Moreover,
  \begin{align}
	\mean{t}
	&= \int_0^\infty \frac {\|\1_{R}   e^{-iHt}  \psi\|_2^2}  {\|\1_{R}   \psi\|_2^2}\,dt,\label{eq:SimpleMean}\\
	\Var{T}
	&=2\int_0^\infty t \frac {\|\1_{R}   e^{-iHt}  \psi\|_2^2}  {\|\1_{R}   \psi\|_2^2}\,dt-\mean{t}^2.\label{eq:SimpleVariance}
  \end{align}
\end{lemma}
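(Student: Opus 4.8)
The plan is to derive everything from the probability continuity equation together with the boundary condition $\psi_t(0)=0$, and then to integrate by parts twice.

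First I would establish the identity
\[
\frac{d}{dt}\,\|\1_R e^{-iHt}\psi\|_2^2 \;=\; -\,\|\psi\|_2^2\; j(R,t).
\]
Since $\psi$ lies in the domain of self-adjointness of $H$ by the preceding Lemma, so does $\psi_t\coloneqq e^{-iHt}\psi$ for every $t\geq0$; hence $\psi_t(0)=0$, $\partial_r\psi_t$ is absolutely continuous, and the boundary values $\psi_t(R)$ and $\partial_r\psi_t(R)$ are well defined. Differentiating $\langle\psi_t,\1_R\psi_t\rangle$ and using $i\partial_t\psi_t=(-\partial_r^2+V)\psi_t$, the potential contributions cancel because $V$ is real and $\supp V\subset[0,R_V]\subset[0,R]$; integrating by parts twice over $(0,R)$ leaves only boundary terms at $r=0$ and $r=R$, and the one at $r=0$ vanishes by $\psi_t(0)=0$. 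What survives is $-2\,\Im\bigl(\bar\psi_t(R)\,\partial_r\psi_t(R)\bigr)=-\|\psi\|_2^2\,j(R,t)$, which is the asserted identity.

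Next I would compute the normalisation appearing in the denominator of \eqref{eq:DefArrivalTimeDistribution}. Integrating the identity above from $0$ to $\infty$ and using that $V$ has no bound states -- so that $\psi$ lies in the absolutely continuous subspace of $H$ and $\|\1_R e^{-iHt}\psi\|_2\to0$ as $t\to\infty$ -- gives $\int_0^\infty j(R,t')\,dt'=\|\1_R\psi\|_2^2/\|\psi\|_2^2$. Substituting this together with the identity back into \eqref{eq:DefArrivalTimeDistribution} yields \eqref{eq:TDistribution}. For the mean and the variance, put $u(t)\coloneqq\|\1_R e^{-iHt}\psi\|_2^2/\|\1_R\psi\|_2^2$, so that $\Pi_T=-\dot u$ with $u(0)=1$ and $u(t)\to0$; then $\mean t=-\int_0^\infty t\,\dot u\,dt$ and $\int_0^\infty t^2\Pi_T(t)\,dt=-\int_0^\infty t^2\dot u\,dt$. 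Integrating by parts once, respectively twice, and discarding the boundary terms, yields \eqref{eq:SimpleMean} and $\int_0^\infty t^2\Pi_T(t)\,dt=2\int_0^\infty t\,u(t)\,dt$; then \eqref{eq:SimpleVariance} follows from $\Var T=\int_0^\infty t^2\Pi_T(t)\,dt-\mean t^2$ together with $\int_0^\infty\Pi_T=1$ and $\int_0^\infty t\,\Pi_T=\mean t$.

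The hard part is justifying that the boundary terms $t\,u(t)$ and $t^2 u(t)$ vanish as $t\to\infty$ and that the integrals on the right-hand sides converge: this needs a quantitative decay rate $u(t)=O(t^{-3})$, i.e.\ precisely a dispersive bound of the form $\|\1_R e^{-iHt}\psi\|_2^2\leq C t^{-3}$. This is exactly what is bought by the assumptions that $V$ has no bound states, no virtual states, and no zero resonance, and it is the content of the estimates developed in Chapter~\ref{ch:Scattering}; indeed, a zero resonance alone would already force $u(t)\sim t^{-1}$ and leave $\Var T$ infinite. The remaining technical points -- differentiating under the $L^2$-norm, boundedness of the integrands near $t=0$, and the interchange of limits -- are routine given that $\psi_t$ stays in the domain of $H$, and I would not belabour them.
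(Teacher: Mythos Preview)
Your proposal is correct and follows essentially the same route as the paper: derive $j(R,t)=-\partial_t\|\1_R\psi_t\|_2^2/\|\psi\|_2^2$ (the paper phrases this via the continuity equation and $j(0,t)=0$, you via direct differentiation and integration by parts in $r$---these are the same computation), compute the normalisation, and then integrate by parts in $t$ for $\mean{t}$ and $\mean{t^2}$, invoking the $t^{-3}$ dispersive estimate (the paper cites Theorem~\ref{thm:main_ac}) to kill the boundary terms. One trivial slip: $\int_0^\infty t^2\Pi_T\,dt=2\int_0^\infty t\,u\,dt$ requires only one integration by parts, not two.
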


\begin{proof}
With the help of the continuity equation for the probability, which reads
\begin{equation}
\partial_r j(r,t)  +  \partial_t  \frac {|\psi_t(r)|^2 }  {\|\psi\|_2^2}   = 0,
\end{equation}
and the fact that $j(0,t)=0$ for all times, the current can be written as
\begin{equation}\label{eq:AuxCurrent}
j(R,t) 
= \int_0^{R}  \partial_{r}  j(r,t) d r
=  - \int_0^{R}  \partial_t  \frac {|\psi_t(r)|^2 }  {\|\psi\|_2^2}  d r
=  - \partial_t  \frac {\|\1_{R}   \psi_t\|_2^2}  {\|\psi\|_2^2}  .
\end{equation}
This together with Theorem~\ref{thm:main_ac} gives
\begin{equation}\label{eq:AuxCurrentNormalization}
\int_0^\infty    j(R,t)  d t
=  -  \left[  \frac {\|\1_{R}   \psi_t\|_2^2}  {\|\psi\|_2^2}   \right]_0^\infty
=  \frac {\|\1_{R}   \psi\|_2^2}  {\|\psi\|_2^2}.
\end{equation}
Plugging Eqs.~\eqref{eq:AuxCurrent} and~\eqref{eq:AuxCurrentNormalization} into Eq.~\eqref{eq:DefArrivalTimeDistribution} for the arrival time probability density, we obtain Eq.~\eqref{eq:TDistribution}.

Using integration by parts we obtain
\begin{align}
  \mean{t} &= \int_0^\infty t\,\Pi_T(t)\,dt
\\
&  =-\left[t\frac {\|\1_{R}   e^{-iHt}  \psi\|_2^2}  {\|\1_{R}   \psi\|_2^2}\right]_0^\infty
  +\int_0^\infty \frac {\|\1_{R}   e^{-iHt}  \psi\|_2^2}  {\|\1_{R}   \psi\|_2^2}\,dt.
\end{align}
The boundary term clearly vanishes for $t=0$ and for $t\to\infty$ it vanishes because of Theorem~\ref{thm:main_ac}, which proves Eq.~\eqref{eq:SimpleMean}.

The variance can be expressed as
\begin{align}
  \Var T=\mean{t^2}-\mean{t}^2.
\end{align}
Using integration by parts we get
\begin{align}
  \mean{t^2}
  &=\int_0^\infty t^2 \Pi_T(t)\,dt\nonumber\\
  &=-\left[t^2\frac {\|\1_{R}   e^{-iHt}  \psi\|_2^2}  {\|\1_{R}   \psi\|_2^2}\right]_0^\infty
  +2\int_0^\infty t \frac {\|\1_{R}   e^{-iHt}  \psi\|_2^2}  {\|\1_{R}   \psi\|_2^2}\,dt,
\end{align}
where the boundary terms vanish for the same reasons as before. This proves Eq.~\eqref{eq:SimpleVariance}.
\QED
\end{proof}

\begin{remark}
For a sample of radioactive matter initially containing $N_0$ atoms, the number of undecayed atoms $N(t)$ in the sample at time $t$ is equal  to $N_0$ times the non-escape probability, i.e.
\begin{equation}
N(t) = N_0 \frac {\|\1_{R}   e^{-iHt}  \psi\|_2^2}  {\|\1_{R}   \psi\|_2^2},
\end{equation}
therefore the activity $-dN/dt$ is equal to $N_0\Pi_T$.
\end{remark}

\newpage

\section{Main Results}

\subsection{Approximate Time Distribution}
Due to Eq.~\eqref{eq:SkibstedSmooth}, an approximate time variance is obtained from  the approximate arrival time density
\begin{align}
  \Pio_T(t)\coloneqq - \partial_t  \frac {\|\1_{R}   e^{-ik_0^2t}f_{R_2(t)}\|_2^2}  {\|\1_{R}   \psi\|_2^2},
\end{align}
that corresponds to the non-escape probability 
\begin{equation}\label{eq:ApproxSurvivalProb}
 \frac {\|\1_{R}   e^{-ik_0^2t}  f_{R_2(t)}\|_2^2}  {\|\1_{R}   \psi\|_2^2}.
\end{equation}
We call the approximate time variance $\Varo{T}$ and the approximate mean time $\meano{t}$. Analogously to Lemma~\ref{lem:SimpleVariance} we get
\begin{align}
	\meano{t}
	&= \int_0^\infty \frac {\|\1_{R}   e^{-ik_0^2t}  f_{R_2(t)}\|_2^2}  {\|\1_{R}   \psi\|_2^2}\,dt ,
\label{eq:SimpleMeano}
\\
  \Varo{T}
	&=2\int_0^\infty t \frac {\|\1_{R}   e^{-ik_0^2t}f_{R_2(t)}\|_2^2}  {\|\1_{R}   \psi\|_2^2}\,dt-\meano{t}^2 .
	\label{eq:SimpleVaro}
\end{align}

To get an estimate on the error that we make by approximating $\Var{T}$ with $\Varo{T}$ we will use Lemma~\ref{lem:SkibstedET}, but this will only work up to several lifetimes.
To control the long-time behavior of the wave function, we will use the quantitive bounds given in the next Corollary.
Since it is simply the application of the general estimates from Theorem~\ref{thm:main_ac} to the particular situation we are looking at right now, we shift its proof to the Appendix. 
To state the Corollary we will at first define some shorthands for certain compositions of the constants given in Section~\ref{sec:MainResult}.



\newpage

\vspace*{-3\baselineskip}

\parbox[t][1pt]{\textwidth}{
\begin{definition}\label{def:CorollaryConstants}
For $K>0$ let
\begin{align}
   M_{K,\infty}(0) &\coloneqq  e^{\beta R}\left[\frac{2}{\alpha}+\frac{\sigma}{\sqrt 2}E_{\beta,\sigma/\sqrt 2}\right] ,\\
     M_{K,\infty}(1)  &\coloneqq  e^{\beta R}\Biggl[\frac{2^2}{\alpha^2}+\frac{2R+C_{1,K}}{\alpha}+\sigma^2\nonumber\\
     &\qquad
     +\left(R+\beta\sigma^2+\frac{C_{1,K}}{2}\right)\frac{\sigma}{\sqrt 2}E_{\beta,\sigma/\sqrt 2}\Biggr], \\
    M_{K,\infty}(2) &\coloneqq  e^{\beta R}\bigg[\frac{2^4}{\alpha^3}
    +\frac{2^2(2R+C_{1,K})}{\alpha^2}+\left(R^2+RC_{1,K}
    +\frac{C_{2,K}}{2}\right)\frac{2}{\alpha}\nonumber\\
   &\qquad +\sigma^2\left(2R+C_{1,K}+\beta\sigma^2\right)\nonumber\\
	&\qquad+\left(\frac{C_{2,K}}{2}+C_{1,K}(R+\beta\sigma^2)+\sigma^2+(R+\beta\sigma^2)^2\right)\frac{\sigma}{\sqrt 2}E_{\beta,\sigma/\sqrt 2}\bigg], 
\\
M_{1}(0) &\coloneqq e^{\beta R}\left[2\log\left(\frac{2}{\beta}\right)+\frac{\pi}{2}+\frac{\pi\sigma}{2^{3/2}}E_{\beta,\sigma/\sqrt 2}\right], \\
M_{1}(1)&\coloneqq e^{\beta R}\Bigg[\left(2\log\left(\frac{2}{\beta}\right)+\frac{\pi}{2}\right)\left(R+\frac{C_1}{2s}\right)+\frac{\pi}{\beta}\nonumber\\
	&\qquad+\frac{\pi\sigma^2}{2}+\left(R+\beta\sigma^2+\frac{C_1}{2s}\right)\frac{\pi\sigma}{2^{3/2}}E_{\beta,\sigma/\sqrt 2}\Bigg],\\
M_{1}(2) &\coloneqq e^{\beta R}\Bigg[\left(2\log\left(\frac{2}{\beta}\right)+\frac{\pi}{2}\right)\left(R^2+\frac{C_1}{s}R+\frac{C_2}{2s^2}\right)
	\nonumber\\
	&\qquad+\frac{\pi}{\beta}\left(2R+\frac{C_1}{s}\right)+\frac{4}{\beta^2}
		+\frac{\pi\sigma^2}{2}\left(2R+\frac{C_1}{s}+\beta\sigma^2\right)\nonumber\\
	&\qquad+\left(\frac{C_2}{2s^2}+\frac{C_1}{s}(R+\beta\sigma^2)+\sigma^2+(R+\beta\sigma^2)^2\right)\frac{\pi\sigma}{2^{3/2}}E_{\beta,\sigma/\sqrt 2}\Bigg], 
\\
  	\tilde{c}_3 &\coloneqq 27\frac{2^{10}}{\alpha^5}M_{K,\infty}^2(0) z_{ac,K}^2(2)
	+23\pi^2\frac{2^6}{\alpha^3}M_{K,\infty}^2(1) z_{ac,K}^2(1)\nonumber\\
	&\quad	+27\frac{2^2}{\alpha}M_{K,\infty}^2(2) z_{ac,K}^2(0),
	\label{eq:DefinitionC3Tilde}
	\\
	\tilde{c}_4 &\coloneqq 276 \frac{M_{1}^2(0)}{s^5}\left(1+\frac{2^4}{\alpha^2}\right)^4\left(z_{ac}^2(2)+s^2z_{ac}^2(1)+s^4z_{ac}^2(0)\right)\nonumber\\
	&\quad+304\frac{M_{1}^2(1)}{s^3}\left(1+\frac{2^4}{\alpha^2}\right)^3\left(z_{ac}^2(1)+s^2z_{ac}^2(0)\right)\nonumber\\
	&\quad+14 \frac{M_{1}^2(2)}{s}\left(1+\frac{2^4}{\alpha^2}\right)^2z_{ac}^2(0).
  \end{align}
\end{definition}
}

\newpage

\begin{corollary}\label{cor:main}
Let $t>0$, $K=\alpha/4$, and $s<K\leq1$. 
Then, for $n=0,1,2$
  \begin{align}
s_K&=1,
\label{eq:SkIsOne}
\\
\|\1_K{\hat \psi^{(n)}}\|_\infty &\leq  M_{K,\infty}(n)\label{eq:PsiHat1},
\\
\|{\hat \psi^{(n)}}w\|_1	&\leq  M_{1}(n),
\label{eq:PsiHat2}
\\
\|\1_Re^{-iHt}\psi\|_2^2
	&\leq \tilde{c}_3t^{-3}+\tilde{c}_4t^{-4} .
	\label{eq:CorollaryBound}
  \end{align}
\end{corollary}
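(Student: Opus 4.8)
The plan is to obtain all four estimates by specializing the general dispersive bound of Theorem~\ref{thm:main_ac} (Section~\ref{sec:MainResult}) to the concrete initial state $\psi = f_R + g_R$ of Eq.~\eqref{eq:InitialState}. That theorem controls $\|\1_R e^{-iHt}\psi\|_2^2$ by a sum of the form $c_3 t^{-3} + c_4 t^{-4}$, in which $c_3$ is a fixed linear combination, with explicit numerical coefficients and powers of $\alpha$, of products $\|\1_K\hat\psi^{(m)}\|_\infty^2\, z_{ac,K}(n)^2$, while $c_4$ is an analogous combination of $\|\hat\psi^{(m)}w\|_1^2\, z_{ac}(n)^2$ together with negative powers of $s$ and factors $(1+2^4/\alpha^2)$; it also presupposes control of the $S$-matrix derivatives via the constants $C_{n,K}$ and $C_n$. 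So the corollary reduces to: (i) verifying $s_K = 1$; (ii) bounding the six weighted norms $\|\1_K\hat\psi^{(n)}\|_\infty$ and $\|\hat\psi^{(n)}w\|_1$, $n=0,1,2$, by the explicit expressions $M_{K,\infty}(n)$, $M_1(n)$ of Definition~\ref{def:CorollaryConstants}; and (iii) inserting these into the general formula, which by construction of $\tilde c_3$, $\tilde c_4$ collapses to~\eqref{eq:CorollaryBound}.

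For (i), $s_K$ is the spectral quantity of Section~\ref{sec:MainResult} that equals $1$ exactly when $H$ has no resonance and no zero-energy singularity in the window $(0,K)$. Under the standing assumptions $H$ has no bound states, no virtual states, and no zero resonance, and every resonance $k_n = \alpha_n - i\beta_n$ has $\alpha_n \geq \alpha_0 = \alpha$; since $K = \alpha/4 < \alpha$, the window $(0,K)$ is free of obstructions, giving $s_K = 1$, which is~\eqref{eq:SkIsOne}.

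For (ii), I would write $\hat\psi = \hat f_R + \hat g_R$. For $\hat f_R$ I would quote Lemma~3.2 of~\citep{Skibsted86}, i.e.\ Eq.~\eqref{eq:SkibstedDistribution}. For $\hat g_R$ I would use that $g_R$ is supported in $[R,\infty)\subset[R_V,\infty)$, where $f(k_0,r) = e^{ik_0 r}$ and the generalized eigenfunctions reduce to combinations of $e^{\pm ikr}$ carrying an $S(k)$ coefficient; the generalized Fourier integral then splits into two Gaussian integrals $\int_R^\infty e^{i(k_0\mp k)r}e^{-(r-R)^2/(2\sigma^2)}\,dr$, evaluated along a shifted contour, which produce the complementary-error-function factor that defines $E_{\beta,\sigma/\sqrt 2}$. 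Differentiating these closed forms $n$ times in $k$ yields, on the $f_R$ side, denominators $|k\mp k_0|^{-(n+1)}$ and derivatives of $S$ up to order $n$, and on the $g_R$ side polynomial factors in $R$, $\beta\sigma^2$, $\sigma$ together with derivatives of $S$. On $|k|\leq K = \alpha/4$ each $|k\mp k_0|$ is bounded below by a fixed multiple of $\alpha$, the $S$-derivatives by $C_{n,K}$, and the common prefactor $|e^{i(k_0\mp k)R}| = e^{\beta R}$ factors out; crude termwise estimation then reproduces exactly $M_{K,\infty}(n)$, hence~\eqref{eq:PsiHat1}. The weighted $L^1$ bounds~\eqref{eq:PsiHat2} come out the same way, except that integrating $|k\mp k_0|^{-1}$ against $w$ over all of $\R$ produces the logarithmic and angular constants $2\log(2/\beta)$, $\pi/2$ in place of the $\alpha^{-1}$ factors, the $S$-derivatives are controlled by the $C_n$ and negative powers of $s$, and the Gaussian part contributes $\tfrac{\pi\sigma}{2^{3/2}}E_{\beta,\sigma/\sqrt 2}$; collecting terms gives $M_1(n)$.

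For (iii), substitute $s_K = 1$ and the bounds $M_{K,\infty}(n)$, $M_1(n)$ into the $c_3$- and $c_4$-constants of Theorem~\ref{thm:main_ac}; comparison with Eq.~\eqref{eq:DefinitionC3Tilde} and the definition of $\tilde c_4$ shows that the numerical coefficients and powers of $\alpha$, $\beta$, $\sigma$, $s$ match term by term, so~\eqref{eq:CorollaryBound} follows, the presence of $t^{-4}$ (rather than a clean $t^{-3}$) being inherited from Theorem~\ref{thm:main_ac}. The main work is in the bookkeeping of step (ii): tracking how each $k$-derivative distributes over the two terms of~\eqref{eq:SkibstedDistribution} and over the Gaussian factor, which denominators acquire which powers of $|k\mp k_0|$, and all the numerical constants, so that the collected expressions coincide with Definition~\ref{def:CorollaryConstants} verbatim. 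A secondary point, and the only place the potential hypotheses are genuinely used, is justifying the sharp lower bounds on $|k\mp k_0|$ for $|k|\le K$ and the identification $s_K = 1$.
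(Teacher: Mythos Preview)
Your proposal is correct and follows essentially the same route as the paper's proof. The paper likewise splits $\hat\psi=\hat f_R+\hat g_R$, invokes Lemma~\ref{lem:psi} (which is precisely the Skibsted-formula computation you outline) for the $f_R$-part, bounds $\hat g_R^{(n)}$ via the explicit form of $\bar\psi^+$ on $[R,\infty)$ together with Theorems~\ref{th:SBoundsK} and~\ref{th:GlobalSBounds}, reads off $s_K=1$ from Definition~\ref{def:SKAndKTildeAndS} (your condition should be $\alpha_n\ge 2K=\alpha/2$, not ``in the window $(0,K)$'', but the conclusion is the same), and then substitutes $K=\alpha/4$, $s_K=1$, $\lambda=0$, $P_{ac}=\1$ into Theorem~\ref{thm:main_ac} to identify the resulting constants with $\tilde c_3$, $\tilde c_4$.
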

Lemma~\ref{lem:SkibstedET} and Corollary~\ref{cor:main} allow us to estimate the error on the variance of time. 
The result is given in the following Lemma, which is proven in Section~\ref{sec:ETVariances}.
\begin{lemma}\label{lem:ErrorMeanVariance}
Let  $A>0$,
\begin{align}
    E_{\beta,\sigma}  &\coloneqq \sqrt\pi e^{\beta^2\sigma^2}\left(1+\erf(\beta\sigma)\right),
    \label{eq:EBetaSigma}
    \displaybreak[0]\\
\omega_{(0,A)} 
&\coloneqq
 \left( 2 + \sqrt{E_{\beta,\sigma} \beta\sigma } \right)
	\nonumber\\
	&\times\left[  \frac{4\sqrt{54\beta}}{5}   \, A^{5/4}  
	+\left (  \frac {\sqrt 6 \beta^{1/4} } {\sqrt{5\pi}\alpha^{1/4}}  +  \frac {4\sqrt{\beta}} {\sqrt{\pi\alpha}}  
		+ \sqrt{E_{\beta,\sigma} \beta\sigma } \right) A \right]  ,
\label{eq:BoundIntegralZeroA}
\displaybreak[0]\\
\omega_{[A,\infty)} 
&\coloneqq
 2\beta e^{-2\beta R}
	\left(   \frac{\tilde{c}_3}{2} A^{-2} + \frac{\tilde{c}_4}{3} A^{-3}  \right)  +  \frac{e^{-\gamma A} }{\gamma},
\label{eq:BoundIntegralAInfty}
\displaybreak[0]\\
\zeta_{(0,A)} 
&\coloneqq
\left( 2 + \sqrt{E_{\beta,\sigma} \beta\sigma } \right)
	\nonumber\\
	&\times\left[  \frac{4\sqrt {54\beta}}{9}   \, A^{9/4}  
	+  \frac{1}{2}   \left (  \frac {\sqrt 6 \beta^{1/4} } {\sqrt{5\pi}\alpha^{1/4}}  +  \frac {4\sqrt \beta} {\sqrt{\pi\alpha}}  
		+ \sqrt{E_{\beta,\sigma} \beta\sigma } \right) A^2\right]  ,
\label{eq:BoundIntegralZeroAVariance}
\displaybreak[0]\\
\zeta_{[A,\infty)} 
&\coloneqq
2\beta e^{-2\beta R}
	\left(   \tilde{c}_3 A^{-1} + \frac{\tilde{c}_4}{2} A^{-2}  \right)  +  \frac{e^{-\gamma A} }{\gamma^2}   (1+\gamma A),
\label{eq:BoundIntegralAInftyVariance}
\end{align}
and
\begin{gather}
\omega  \coloneqq \omega_{(0,A)}  +   \omega_{[A,\infty)},
\qquad
\zeta   \coloneqq  \zeta_{(0,A)}  +   \zeta_{[A,\infty)},
\\
\epsilon_T \coloneqq 2\zeta  + \omega^2  +  \frac{2}{\gamma}  \omega .
\label{eq:EpsilonT}
\end{gather}
Then, for the wave function $\psi$ the following error estimates hold
\begin{align}%
\left|  \mean{t} -   \meano{t}  \right|   &\leq  \omega ,
\label{eq:ErrorMean}
\\
\left|  \Var{T}- \Varo  T   \right|
	&\leq  \epsilon_T.
\label{eq:ErrorVariance}
\end{align}
\end{lemma}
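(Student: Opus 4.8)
The plan is to work directly with the non-escape probabilities. Set $P(t) \coloneqq \|\1_R e^{-iHt}\psi\|_2^2 / \|\1_R\psi\|_2^2$ and $P_0(t) \coloneqq \|\1_R e^{-ik_0^2 t}f_{R_2(t)}\|_2^2 / \|\1_R\psi\|_2^2$; by Lemma~\ref{lem:SimpleVariance} and Eqs.~\eqref{eq:SimpleMeano}--\eqref{eq:SimpleVaro} one has $\mean t = \int_0^\infty P(t)\,dt$, $\Var T = 2\int_0^\infty t\,P(t)\,dt - \mean t^2$, and likewise for the approximate quantities with $P_0$ in place of $P$. I would first record two elementary facts. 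Since $f_R$ and $g_R$ have disjoint supports, $\|\1_R\psi\|_2 = \|f_R\|_2$; and since $R_2(t) = 2\alpha_0 t + R \geq R$ and $k_0^2 = E - i\gamma/2$, one has $\1_R f_{R_2(t)} = f_R$ and $|e^{-ik_0^2 t}|^2 = e^{-\gamma t}$, whence $P_0(t) = e^{-\gamma t}$ \emph{exactly}; in particular $\meano t = 1/\gamma$ and $\Varo T = 1/\gamma^2$. I would also use the explicit Gaussian integral $\|g_R\|_2^2 = \tfrac\sigma2 e^{2\beta R} E_{\beta,\sigma}$ together with the lower bound $\|\1_R\psi\|_2^2 \geq e^{2\beta R}/(2\beta)$, which give $\|g_R\|_2/\|f_R\|_2 \leq \sqrt{E_{\beta,\sigma}\beta\sigma}$ and $\|\1_R\psi\|_2^{-2} \leq 2\beta e^{-2\beta R}$.

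The error bounds then follow by splitting $\int_0^\infty$ at the parameter $A$ and estimating $|P(t) - e^{-\gamma t}|$ differently on the two pieces. On $(0,A)$ I would invoke Skibsted's Lemma~\ref{lem:SkibstedET}: writing $\psi = f_R + g_R$ and using unitarity of $e^{-iHt}$ gives $\|e^{-iHt}\psi - e^{-ik_0^2 t}f_{R_2(t)}\|_2 \leq K(\alpha_0,\beta_0,t)\|f_R\|_2 + \|g_R\|_2$. Feeding this into the identity $\bigl|\|a\|_2^2 - \|b\|_2^2\bigr| \leq \|a-b\|_2\,(\|a\|_2 + \|b\|_2)$ with $a = \1_R e^{-iHt}\psi$, $b = \1_R e^{-ik_0^2 t}f_{R_2(t)}$, and bounding $\|\1_R e^{-iHt}\psi\|_2 \leq \|\psi\|_2 \leq \|f_R\|_2 + \|g_R\|_2$ together with $\|\1_R e^{-ik_0^2 t}f_{R_2(t)}\|_2 = e^{-\gamma t/2}\|f_R\|_2 \leq \|f_R\|_2$, one obtains the pointwise estimate $|P(t) - e^{-\gamma t}| \leq \bigl(K(\alpha_0,\beta_0,t) + \sqrt{E_{\beta,\sigma}\beta\sigma}\bigr)\bigl(2 + \sqrt{E_{\beta,\sigma}\beta\sigma}\bigr)$. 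Applying $\sqrt{x+y}\leq\sqrt x+\sqrt y$ inside the formula for $K$ shows that $K(\alpha_0,\beta_0,t)$ is bounded by an explicit constant plus a constant multiple of $\beta^{1/2}t^{1/4}$ — the $\alpha_0$ cancels in the $t^{1/4}$-growing term — so integrating over $(0,A)$ via $\int_0^A t^{1/4}\,dt = \tfrac45 A^{5/4}$ reproduces $\omega_{(0,A)}$, while integrating against the extra weight $t$ via $\int_0^A t^{5/4}\,dt = \tfrac49 A^{9/4}$ and $\int_0^A t\,dt = \tfrac12 A^2$ reproduces $\zeta_{(0,A)}$.

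On $(A,\infty)$ I would simply use $|P(t) - e^{-\gamma t}| \leq P(t) + e^{-\gamma t}$, bound $P(t) \leq 2\beta e^{-2\beta R}\bigl(\tilde{c}_3 t^{-3} + \tilde{c}_4 t^{-4}\bigr)$ by Corollary~\ref{cor:main} together with $\|\1_R\psi\|_2^{-2} \leq 2\beta e^{-2\beta R}$, and evaluate the elementary integrals $\int_A^\infty t^{-3}\,dt$, $\int_A^\infty t^{-4}\,dt$, $\int_A^\infty e^{-\gamma t}\,dt$, $\int_A^\infty t\,e^{-\gamma t}\,dt$, etc., which produce $\omega_{[A,\infty)}$ and $\zeta_{[A,\infty)}$. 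Summing the two regimes yields $|\mean t - \meano t| \leq \int_0^\infty |P(t) - e^{-\gamma t}|\,dt \leq \omega$ and $\int_0^\infty t\,|P(t) - e^{-\gamma t}|\,dt \leq \zeta$, which is Eq.~\eqref{eq:ErrorMean}. For the variance I would combine
\[
|\Var T - \Varo T| \leq 2\int_0^\infty t\,|P(t) - e^{-\gamma t}|\,dt + \bigl|\mean t^2 - \meano t^2\bigr|
\]
with $\bigl|\mean t^2 - \meano t^2\bigr| = |\mean t - \meano t|\,(\mean t + \meano t) \leq \omega\,(2/\gamma + \omega)$, using $\meano t = 1/\gamma$ and $\mean t \leq 1/\gamma + \omega$; this gives exactly $|\Var T - \Varo T| \leq 2\zeta + \omega^2 + \tfrac2\gamma\omega = \epsilon_T$, i.e.\ Eq.~\eqref{eq:ErrorVariance}.

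I expect the main obstacle to be the short-time estimate on $(0,A)$: the long-time part is routine once Corollary~\ref{cor:main} is available, and reassembling everything into $\omega$ and $\epsilon_T$ is pure bookkeeping, but propagating the intricate explicit form of $K(\alpha_0,\beta_0,t)$ through the difference-of-squares inequality while keeping every constant explicit — and in particular checking that the $\alpha_0$-dependence of the $t^{1/4}$ growth drops out, leaving a bound governed by $\beta$ (equivalently $\gamma$) alone — requires care. Everything else rests on the two auxiliary norm estimates for $f_R$ and $g_R$ recorded at the outset.
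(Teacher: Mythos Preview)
Your proposal is correct and follows essentially the same route as the paper. The paper separates the pointwise estimate on $|P(t)-e^{-\gamma t}|$ into a standalone lemma (Lemma~\ref{lem:SurvivalProbError}) and then integrates it, whereas you do both steps inline; and for $|\mean t^2-\meano t^2|$ the paper uses the algebraic identity $\meano t^2-\mean t^2=-(\mean t-\meano t)^2+2\meano t(\meano t-\mean t)$ while you factor as $|\mean t-\meano t|\,(\mean t+\meano t)$---both yield the identical bound $\omega^2+\tfrac{2}{\gamma}\omega$. All constants, integrals, and the handling of Skibsted's $K(\alpha_0,\beta_0,t)$ match.
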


\subsection{Validity of the uncertainty relation}
We will now see that there are $\beta$ and $\sigma$ values for which the error estimate $\epsilon_T$ is sufficiently small to check if the uncertainty relation holds.
For these values we will find that the uncertainty relation is satisfied. 
We start by defining
\begin{equation}
\Po \coloneqq \Var E\,\Varo T,
\qquad
\epsilon_P \coloneqq \Var E\,\epsilon_T,
\end{equation}
so that
\begin{equation}
|  \Var E\,\Var T  -  \Po | \leq  \epsilon_P.
\end{equation}
Then, we have the following possibilities:
\begin{description}
\item[\normalfont $\Po-\epsilon_P\geq 1/4$:] this implies that $\Var E \, \Var T\geq 1/4$ and we can  state that the uncertainty relation holds;
\item[\normalfont$\Po+\epsilon_P< 1/4$:] this implies that $\Var E \, \Var T< 1/4$ and  we can  state that the uncertainty relation is violated;
\item[\normalfont$1/4\in (\Po-\epsilon_P, \Po+\epsilon_P{]}$:] in this case we are not able to check the validity of the uncertainty relation.
\end{description}
This situation is summarized in the following
\begin{definition}\label{def:ErrorSmallEnough}
We say that the error $\epsilon_P$ on the product $\Var E \, \Var T$ for the wave function $\psi$ is small enough to allow us to make statements on the validity of the uncertainty relation if $\Po-\epsilon_P\geq 1/4  $ or $\Po+\epsilon_P< 1/4$.
\end{definition}

We will need the next hypothesis, whose validity will be discussed in  Section~\ref{sec:DiscussionHyp}.
Recall that $\nu_{\tilde K}$ was introduced in Definition~\ref{def:SKAndKTildeAndS} as the smallest non-negative integer such that $\alpha_n \geq 2 \tilde K=12 \|V\|_1$ for all $n\geq\nu_{\tilde K}$.

\begin{definition}\label{def:CV}
Let $\CV$ be the set of all one-parameter families of potentials $\{V_b\}_{b\in[0,\infty)}$ satisfying the properties:
\begin{enumerate}
\item For every finite $b\geq0$ the potential $V_b$ satisfies the assumptions of Section~\ref{sec:AssumptionsPotentialET}.\label{item:PotRequirements}
\item There are two constants $c_{1,2}>0$ so that $c_1\leq\alpha(b)\leq c_2$ for all $b\geq0$.\label{item:AlphaConstants}
\item $\lim_{b\to\infty}\beta(b)=0$.\label{item:BetoToZero}
\item $r_0(b) = \sum_{n=0}^\infty \frac{5\beta_n(b)}{\alpha_n^2(b)+\beta_n^2(b)} = O(1)$ as $b\to\infty$.\label{item:RZeroBetaOrder}
\item $\nu_{\tilde K} =  O\left(\left(\log{\beta(b)}\right)^2 \right)$ as $b\to\infty$.  \label{item:NuTildeBetaOrder}
\end{enumerate}
\end{definition}
\begin{hypothesis}\label{hyp:BetaLimit}
The set $\CV$ is not empty.
\end{hypothesis}
Physically, the most important thing is Property~\ref{item:BetoToZero} of Definition~\ref{def:CV}, that means that it is possible to consider potentials that give rise to resonances with arbitrary long lifetime.
For simplicity we give also the following 
\begin{definition}\label{def:BetaLimit}
By $\lim_{\beta\to0}$ we denote the following: pick any family of potentials $\{V_b\}_{b\in[0,\infty)}\in\CV$ and calculate the limit $\lim_{b\to\infty}$.
\end{definition}
Using this notation, we can rewrite Property~\ref{item:NuTildeBetaOrder} of Definition~\ref{def:CV} as
\begin{align}
  \nu_{\tilde K} =  O\left(\left(\log{\beta}\right)^2 \right) ,
  \qquad\text{as } \beta\to0.  \label{eq:NTildeET}
\end{align}

We can now state our Theorem.

\begin{theorem}\label{thm:MainUncertainty}
Let the assumptions of Corollary~\ref{cor:main} be satisfied and consider the wave function $\psi$.
\begin{enumerate}
\item \label{statement:UncertaintyHolds}
 Let the error $\epsilon_P$ be small enough to allow us to make statements on the validity of the uncertainty relation (cf.\ Definition~\ref{def:ErrorSmallEnough}), then 
\begin{equation}
\Var E \, \Var T\geq 1/4.
\end{equation}
\item\label{statement:ErrorSomewhenSmall}
 Let moreover $\sigma=\beta$ and Hypothesis~\ref{hyp:BetaLimit} be satisfied, then
\begin{equation}
\lim_{\beta\to 0}  (\Po -\epsilon_P ) = \infty.
\end{equation}
\end{enumerate}
\end{theorem}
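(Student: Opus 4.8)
The argument rests on one exact computation, one elementary logical step, and one quantitative inequality. First I would evaluate the \emph{approximate} distribution explicitly. Because $R_2(t)=2\alpha t+R\ge R$ one has $\1_R f_{R_2(t)}=\1_R f(k_0,\cdot)=f_R$, and since $k_0^2=E-i\gamma/2$ one has $|e^{-ik_0^2t}|^2=e^{-\gamma t}$; together with $\|\1_R\psi\|_2=\|f_R\|_2$ this shows that the approximate non-escape probability \eqref{eq:ApproxSurvivalProb} equals $e^{-\gamma t}$ exactly. Inserting this in \eqref{eq:SimpleMeano}--\eqref{eq:SimpleVaro} gives $\meano{t}=1/\gamma$ and $\Varo T=2/\gamma^2-1/\gamma^2=1/\gamma^2$, hence $\Po=\Var E/\gamma^2$ and $\epsilon_P=\Var E\,\epsilon_T$.

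For statement~\ref{statement:UncertaintyHolds} I would combine $|\Var E\,\Var T-\Po|\le\epsilon_P$ with Definition~\ref{def:ErrorSmallEnough}. If $\Po-\epsilon_P\ge1/4$ then $\Var E\,\Var T\ge\Po-\epsilon_P\ge1/4$ and we are done; so it suffices to exclude the other alternative $\Po+\epsilon_P<1/4$, i.e.\ to prove the \emph{unconditional} bound
\[
\Po+\epsilon_P=\Var E\bigl(\Varo T+\epsilon_T\bigr)=\Var E\Bigl(\tfrac1{\gamma^2}+\epsilon_T\Bigr)\ \ge\ \tfrac14 .
\]
This is where the explicit expression for $\Var E$ (computed earlier in the chapter) enters: when $\epsilon_T$ is negligible it reduces to $\Var E\ge\gamma^2/4$, i.e.\ $\sqrt{\Var E}\ge\gamma/2=1/(2\,\meano t)$, which is exactly the statement that pure exponential decay saturates the energy--time relation; when $\epsilon_T$ is not negligible --- which occurs precisely when the Gaussian tail $g_R$ is heavy and $\Var E$ may itself be small --- the large value of $\epsilon_T$ given by Lemma~\ref{lem:ErrorMeanVariance} compensates. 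Turning this trade-off into the displayed inequality, uniformly over the admissible potentials and over $\sigma$, is the technical core and the main obstacle of part~\ref{statement:UncertaintyHolds}; after substituting the explicit forms of $\Var E$ and of the constants entering $\epsilon_T$ it becomes an inequality in the finitely many parameters $\alpha,\beta,\sigma,R$.

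For statement~\ref{statement:ErrorSomewhenSmall} fix any family $\{V_b\}\in\CV$ (nonempty by Hypothesis~\ref{hyp:BetaLimit}), set $\sigma=\beta$, and write
\[
\Po-\epsilon_P=\Var E\Bigl(\tfrac1{\gamma^2}-\epsilon_T\Bigr)=\frac{\Var E\,\bigl(1-\gamma^2\epsilon_T\bigr)}{\gamma^2},\qquad \gamma^2=16\alpha^2\beta^2 .
\]
Since $\gamma^2\to0$, it suffices to prove (i) $\gamma^2\epsilon_T\to0$ and (ii) $\liminf\Var E>0$ as $\beta\to0$. For (i) I would use $\epsilon_T=2\zeta+\omega^2+\tfrac2\gamma\omega$ with $\omega,\zeta$ built from $\omega_{(0,A)},\omega_{[A,\infty)},\zeta_{(0,A)},\zeta_{[A,\infty)}$ and the free parameter $A$, choosing $A\asymp\beta^{-1}\log(1/\beta)$: then $\gamma A=4\alpha\log(1/\beta)\to\infty$ kills the exponential pieces $e^{-\gamma A}/\gamma$ and $e^{-\gamma A}(1+\gamma A)/\gamma^2$, while the remaining polynomial-in-$A$ pieces are dominated by powers $\beta^{-a}$ with $a<1$ in $\omega$ and $a<2$ in $\zeta$ (up to polylogarithmic factors), so that $\omega=o(\beta^{-1})$, $\zeta=o(\beta^{-2})$, whence $\gamma^2\epsilon_T=2\gamma^2\zeta+\gamma^2\omega^2+2\gamma\omega\to0$. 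Here Definition~\ref{def:CV} does the work: Property~\ref{item:AlphaConstants} keeps $\alpha$ and the prefactors (including $E_{\beta,\beta}\to\sqrt\pi$) bounded, Property~\ref{item:RZeroBetaOrder} keeps $r_0$ and hence the constants $z_{ac,K}$ of order one, and Property~\ref{item:NuTildeBetaOrder} keeps $\nu_{\tilde K}$ --- and therefore the composite constants $\tilde c_3,\tilde c_4$ and the $M$'s of Definition~\ref{def:CorollaryConstants} --- growing at most polylogarithmically in $1/\beta$, which is harmless against the powers of $\beta$ above. For (ii) the explicit $\Var E$ with $\sigma=\beta$ is bounded below (in fact it diverges, being dominated for small $\beta$ by the jump of $\psi''$ at $r=R$, whose size is $e^{\beta R}/\sigma^2=e^{\beta R}/\beta^2$). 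Combining (i) and (ii) with $\gamma^2\to0$ gives $\Po-\epsilon_P\to\infty$. I expect the uniform bookkeeping of the many constants in $\epsilon_T$ under the choice of $A$ --- rather than any conceptual difficulty --- to be the main chore of part~\ref{statement:ErrorSomewhenSmall}.
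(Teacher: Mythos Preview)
Your logical reduction for Statement~\ref{statement:UncertaintyHolds} is exactly the paper's: the assertion is equivalent to the unconditional bound $\Po+\epsilon_P\ge 1/4$. However, you leave the proof of this inequality at the level of ``it becomes an inequality in finitely many parameters''. The paper's actual argument is more structured: after the change of variables $\tilde\alpha=\alpha\sigma$, $\tilde\beta=\beta\sigma$ it constructs explicit lower bounds $\tilde P_0\le\Po$ and $\tilde\epsilon_P\le\epsilon_P$ depending only on $(\tilde\alpha,\tilde\beta)$, and then shows that the region $\{\tilde P_0<1/4\}$ lies entirely in $\tilde\beta>\tilde\beta_0:=10^{-3/4}$ while $\{\tilde\epsilon_P<1/4\}$ lies entirely in $\tilde\beta\le\tilde\beta_0$, so the two regions are disjoint. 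This separation-by-a-threshold is the concrete mechanism your sketch is missing.

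For Statement~\ref{statement:ErrorSomewhenSmall} there is a genuine error. You assert that Property~\ref{item:NuTildeBetaOrder} keeps $\tilde c_3,\tilde c_4$ ``growing at most polylogarithmically in $1/\beta$''. This is false for $\tilde c_4$. Property~\ref{item:NuTildeBetaOrder} bounds $\nu_{\tilde K}$ polylogarithmically, but $1/s=\sum_{n<\nu_{\tilde K}}1/\beta_n$ already contains the term $1/\beta_0=1/\beta$, so $1/s\ge 1/\beta$ and in fact $1/s=O(\beta^{-1}(\log\beta)^2)$. Since $\tilde c_4$ contains $M_1^2(0)/s^5$ (and $M_1(n)$ itself involves $1/s^n$), one finds $\tilde c_4=O(\beta^{-5}(\log\beta)^{12})$, not polylogarithmic. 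With your choice $A\asymp\beta^{-1}\log(1/\beta)$ the term $2\beta e^{-2\beta R}\tilde c_4 A^{-2}/2$ in $\zeta_{[A,\infty)}$ is of order $\beta^{-2}(\log\beta)^{10}$, hence $\gamma^2\zeta$ is of order $(\log\beta)^{10}$ and does \emph{not} tend to zero; your claim $\gamma^2\epsilon_T\to 0$ fails. The paper's remedy is to take $A=\beta^{-18/17}$, the exponent $18/17$ being chosen to balance the $\beta^{1/2}A^{5/4}$ piece of $\omega_{(0,A)}$ against the $\beta\,\tilde c_4 A^{-3}\sim\beta^{-4}(\log\beta)^{12}A^{-3}$ piece of $\omega_{[A,\infty)}$; this yields $\epsilon_T=O(\beta^{-32/17}(\log\beta)^{12})$ and $\epsilon_P=O(\beta^{-4+2/17}(\log\beta)^{12})$, which is just enough to beat $\Po=O(\beta^{-4})$. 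Once you correct the order of $\tilde c_4$ and re-optimise $A$, your decomposition $\Po-\epsilon_P=\Var E\,\gamma^{-2}(1-\gamma^2\epsilon_T)$ goes through and is essentially the paper's argument.
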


The second statement of the Theorem implies that there actually exist values of $\beta$ and $\sigma$ for which $\Po-\epsilon_P\geq 1/4$  and therefore our error estimate is small enough to check the validity of the uncertainty relation. 
Unfortunately, as we mentioned in the Introduction, this range of parameters requires $\beta$ to be smaller than the value corresponding to the longest lived physical element, i.e.~Bismuth (recall that the lifetime is connected with $\beta$ by $1/(4\alpha\beta)$).

\subsection[Linewidth-lifetime relation]{The energy time uncertainty relation and the linewidth-lifetime relation are different}
The linewidth-lifetime relation~\eqref{eq:LL} has been verified in many experiments, and its validity is often explained making reference to the time-energy uncertainty relation \cite[see e.g.][]{Rohlf}).
In the following we will see that, with $\sigma=\beta$, it is possible to find values of $\beta$ such that the product of the linewidth and the lifetime  is arbitrarily close to $1$, while at the same time the product of $\Var E$ and $\Var T$ is arbitrarily large and hence far from $1/4$.
Therefore, the validity of the linewidth-lifetime relation cannot be a consequence of the validity of the time-energy uncertainty relation, as asserted by \citet{KrylovFock1947}.

In order to prove this statement, we have at first to give a precise definition for the lifetime and for the linewidth of a generic state.

\begin{definition}[Lifetime]\label{def:Lifetime}
Let $\Prob(T\leq t)$ be the arrival time cumulative distribution function, i.e.\ the probability that the decay product reaches the detector at a time $T$ not later than $t$, and let it be continuous. 
The lifetime is the time $\tau$ such that
\begin{equation}
\Prob(T\leq \tau)  =  1 - \frac{1}{e}.
\end{equation}
\end{definition}
In other words, the lifetime is the time at which a fraction $1/e$ of the initial sample has decayed.
In the usual case in which $\Prob(T\leq t)=1-e^{-\nu t}$, then $\tau=1/\nu$.

\begin{definition}[Linewidth]\label{def:Linewidth}
Let $\Pi_E$ be the probability density function of the energy of the decay product, let it be continuous, and let $M$ be its maximal value.
The linewidth $\Gamma$ is the distance between those solutions of the equation
\begin{equation}
\Pi_E(E)  =  \frac{M}{2}
\end{equation}
that lie furthest apart.
\end{definition}
If $\Pi_E$ has just one peak, then $\Gamma$ is its full width at half maximum; in particular, if $\Pi_E$ is of Breit-Wigner shape, i.e.
\begin{equation}
\Pi_E \propto \frac{1} {(E-E_0)^2+ G^2}  ,
\end{equation}
then $\Gamma=G$.

With these definitions we can state the following
\begin{theorem}\label{thm:ETvsLL}
Let $\sigma=\beta$, the assumptions of Corollary~\ref{cor:main} and Hypothesis~\ref{hyp:BetaLimit} be satisfied, then for the wave function $\psi$
\begin{equation}
\lim_{\beta\to 0}   \Gamma \tau = 1 ,
\end{equation}
while
\begin{equation}
\lim_{\beta\to 0}  \Var E\, \Var T= \infty  .
\end{equation}
\end{theorem}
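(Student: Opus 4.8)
The plan is to compute the two limits separately using the approximate (Gamow-type) quantities, which are controlled by the error estimates of Lemma~\ref{lem:ErrorMeanVariance} together with Statement~\ref{statement:ErrorSomewhenSmall} of Theorem~\ref{thm:MainUncertainty}. For the second limit I would simply invoke Theorem~\ref{thm:MainUncertainty}: since $\lim_{\beta\to0}(\Po-\epsilon_P)=\infty$ and $\abs{\Var E\,\Var T-\Po}\leq\epsilon_P$, we get $\Var E\,\Var T\geq\Po-\epsilon_P\to\infty$, hence $\lim_{\beta\to0}\Var E\,\Var T=\infty$. So the real work is the linewidth-lifetime limit $\lim_{\beta\to0}\Gamma\tau=1$.

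For the lifetime, I would use that on the relevant time span $e^{-iHt}\psi\approx e^{-ik_0^2t}f_{R_2(t)}$ (Eq.~\eqref{eq:SkibstedSmooth}), so the non-escape probability \eqref{eq:TrueSurvivalProb} is close to the exponential $e^{-\gamma t}$ coming from \eqref{eq:ExponentialDecay}; Lemma~\ref{lem:SkibstedET} and Corollary~\ref{cor:main} quantify the deviation, which vanishes in the limit $\beta\to0$ after rescaling time by $1/\gamma$. Consequently the arrival time cumulative distribution $\Prob(T\leq t)$ converges (in the rescaled variable $\gamma t$) to $1-e^{-\gamma t}$, so by Definition~\ref{def:Lifetime} the lifetime satisfies $\gamma\tau\to1$, i.e.\ $\tau\sim1/\gamma=1/(4\alpha\beta)$. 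For the linewidth, I would use the explicit expression \eqref{eq:SkibstedDistribution} for the generalized Fourier transform $\hat f_R$ and the fact that the energy distribution is $\Pi_E\propto\abs{\hat\psi(k)}^2\,\abs{dk/dE}$ with $E=k^2$; near $k=\alpha$ the factor $1/(k-k_0)=1/(k-\alpha+i\beta)$ produces a Breit-Wigner peak in $k$ of half-width $\beta$, which translates (via $E=k^2$, so $dE=2k\,dk\approx2\alpha\,dk$) into a peak in energy of full width at half maximum $\Gamma\approx2\alpha\cdot2\beta=4\alpha\beta=\gamma$. The contribution of the Gaussian tail $g_R$ and of the second term $e^{i(k_0+k)R}/(k+k_0)$ in \eqref{eq:SkibstedDistribution} is negligible near the resonance, and its effect on $\Gamma$ vanishes as $\beta\to0$ because the peak becomes infinitely narrow and tall relative to the smooth background; this needs the hypothesis $\sigma=\beta$ to ensure the Gaussian cutoff width shrinks with the linewidth. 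Combining, $\Gamma\tau\approx\gamma\cdot(1/\gamma)=1$, and a careful bookkeeping of the error terms (using Hypothesis~\ref{hyp:BetaLimit} to keep $\alpha$ bounded away from $0$ and $\infty$ and to control $r_0$ and $\nu_{\tilde K}$) gives $\lim_{\beta\to0}\Gamma\tau=1$.

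The main obstacle I anticipate is making the linewidth argument rigorous: one must show that the maximum $M$ of $\Pi_E$ is attained at (and only near) the resonant energy and that the ``half-maximum'' level set genuinely has the Breit-Wigner width, i.e.\ that neither the background term nor the Gaussian smoothing shifts the half-maximum points by more than $o(\beta)$. This requires uniform (in $\beta$) control of $\hat\psi$ away from $k=\alpha$ — precisely the kind of bound provided by $\|\1_K\hat\psi^{(n)}\|_\infty\leq M_{K,\infty}(n)$ and $\|\hat\psi^{(n)}w\|_1\leq M_1(n)$ in Corollary~\ref{cor:main} — together with the observation that as $\beta\to0$ the peak value $M\sim1/\beta^2$ dominates any $O(1)$ background, so the half-maximum equation localizes to an arbitrarily small neighborhood of the resonance where the Breit-Wigner form is exact to leading order. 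The rest is routine asymptotic analysis of the explicit formula \eqref{eq:SkibstedDistribution}.
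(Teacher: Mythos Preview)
Your proposal is correct and follows essentially the same route as the paper: the second limit is obtained verbatim from Statement~\ref{statement:ErrorSomewhenSmall} of Theorem~\ref{thm:MainUncertainty} via $\Var E\,\Var T\geq\Po-\epsilon_P$, while the first limit is split into a lifetime estimate (Lemma~\ref{lem:Lifetime}, combining Skibsted's bound with Corollary~\ref{cor:main} exactly as you describe) and a linewidth estimate (Lemma~\ref{lem:Linewidth}, the Breit--Wigner analysis of \eqref{eq:SkibstedDistribution} that you sketch, including the localization of the maximum away from $k=0$ via the $\|\1_K\dkp{\hat\psi}\|_\infty$ bound). One small correction: after dividing by $\|\psi\|_2^2=O(\beta^{-1})$ the peak value of $\Pi_E$ scales like $1/\beta$, not $1/\beta^2$; this still dominates the $O(1)$ background, so your argument goes through unchanged.
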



\section{Discussion of Hypothesis~\ref{hyp:BetaLimit}}\label{sec:DiscussionHyp}
Hypothesis~\ref{hyp:BetaLimit} holds if the requirements in Definition~\ref{def:CV} are satisfied. 
Therefore, we will now give arguments why there exist potentials that satisfy them.

\subsection{Properties~\ref{item:PotRequirements}-\ref{item:BetoToZero} in Definition~\ref{def:CV}}

\begin{figure}
\centering%
\hfill%
\subfloat[\label{fig:potentialET}]{\includegraphics [width=.45\textwidth]{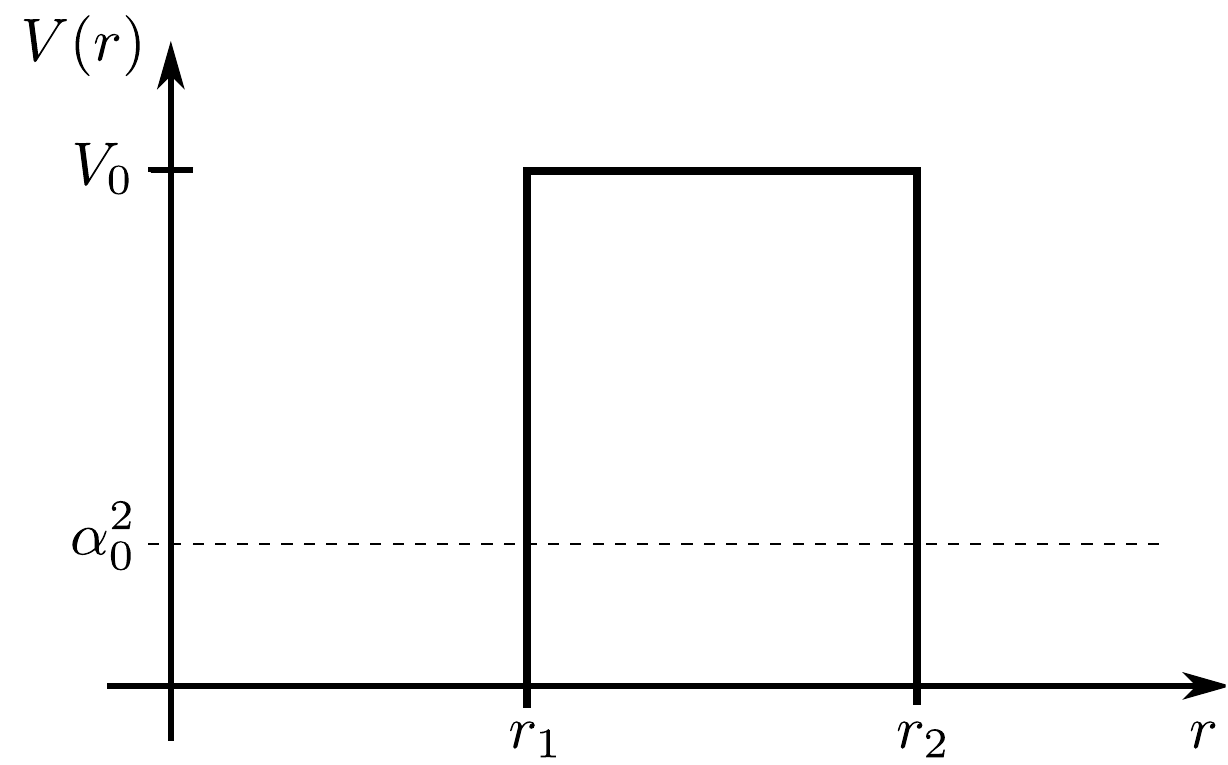}}%
\hfill
\subfloat[\label{fig:ResonanceMotion}]{\begin{overpic}[width=.45\textwidth
]{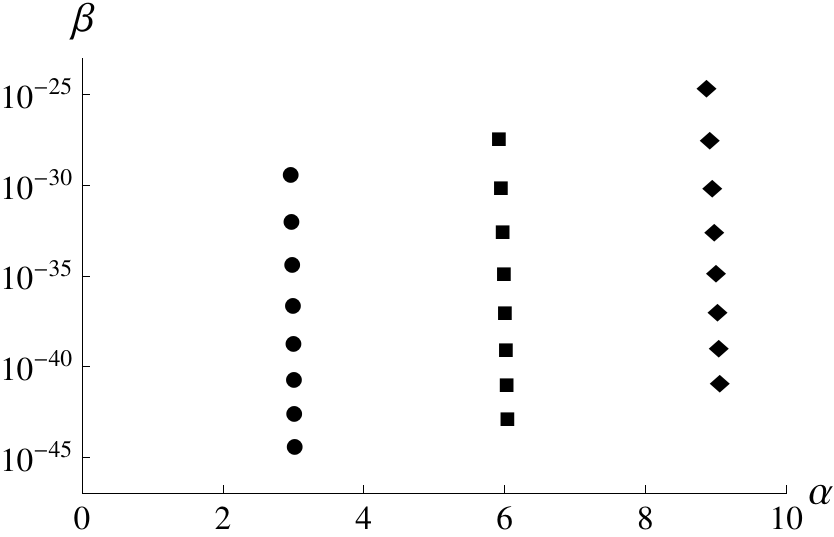}
\put(85,58){\footnotesize\clap{$V_0=230$}}
\put(93,34){\raisebox{-15pt}{\clap{\tikz\draw [->,thick,>=stealth] (0pt,30pt) -- (0pt,0pt);}}}
\put(85,10){\footnotesize\clap{$V_0=580$}}
\end{overpic}}
\hfill\mbox{}%
\caption[]{
\subref {fig:potentialET} {Example of barrier potential.} 
\subref{fig:ResonanceMotion} The plot shows how the first three resonances (\protect\raisebox{-.2ex}{\large$\bullet$},  $\blacksquare$, and $\blackdiamond$) of the barrier potential shown in Fig.~\ref{fig:potentialET} for $r_1=1$ and $r_2=2$ move as $V_0$ is increased from $230$ to $580$ in steps of $50$.}
\label{fig:}
\end{figure}

Consider the barrier potential shown in Fig.~\ref{fig:potentialET} as a family parametrized by $V_0\geq1$; Property~\ref{item:PotRequirements} is then immediate, except for the fact that $\alpha$ and $\beta$ are both minimal for all $V_0\geq1$.

This potential is simple enough to allow us to calculate its Jost function explicitly, that will also be parametrized by $V_0\geq1$ (see Eq.~\eqref{eq:JostBarrier}).
Using this explicit formula we have numerically calculated  the location of the first three resonances for eight increasing values of $V_0$ and  depicted them in Fig.~\ref{fig:ResonanceMotion}. 
We found that their real parts change negligibly, while their imaginary parts tend to zero. Moreover, $k_0(V_0)$ always has the smallest imaginary and real part. 
Thus, Properties~\ref{item:AlphaConstants} and~\ref{item:BetoToZero} of Definition~\ref{def:CV} appear to be fulfilled.
From the physical point of view the reason for this is that when the barrier is high enough then the resonances get close to the bound states of the infinitely high barrier.

\subsection{Property~\ref{item:RZeroBetaOrder} in Definition~\ref{def:CV}}
The scattering length $a$ is defined as   \citep[see][page 136]{RS3}
\begin{equation}
a=\frac{\dkp S(0)}{2iS(0)}  .
\end{equation}
From Eq.~\eqref{eq:ScatteringLength} we see that for potentials without bound and virtual states
\begin{align}\label{eq:r0_ScatteringLength}
  |a|=\left|R_V-\frac{2}{5}r_0\right|\geq\frac{2}{5}r_0-R_V .
\end{align}
For the barrier potential, using the explicit form of the Jost function and the relation $S(k)=F(-k)/F(k)$ one gets
\begin{align}
  \lim_{V_0\to\infty}a(V_0)=-r_2,
\end{align}
which together with Eq.~\eqref{eq:r0_ScatteringLength} shows that Property~\ref{item:RZeroBetaOrder} is satisfied.

\subsection{Property~\ref{item:NuTildeBetaOrder} in Definition~\ref{def:CV}}
According to Definition~\ref{def:SKAndKTildeAndS}, we have that $\nu_{\tilde K}$ is the number of resonances that lie in the stripe $\{z\in\mathbb C\,|\,0\leq\Re z\leq12\|V\|_1, \Im k\leq0\}$. 
Clearly, this number depends on the distribution of the zeros of the Jost function. 
Unfortunately, results from inverse scattering theory like~\citep{Koro,Weikard} suggest that there are little restrictions on this distribution: Korotyaev for example proves in~\cite{Koro} that resonances and potentials are in one-to-one correspondence, so that one can interpret resonances as variables which fix the potential. Hence, they can be put essentially everywhere and the potential just changes accordingly. On the other hand, Zworski proved a formula \cite[Theorem~6 in][]{Zworski1987} for the location of the $n$-th resonance that holds up to an error that becomes small for growing $n$; however, no bound on the error is given. According to this formula we would have
\begin{align}\label{eq:Ntilde}
  \nu_{\tilde K} \approx C\|V\|_1
\end{align}
with the constant $C>0$ depending only on the size of the potential's support and on the behavior of the potential at $r=R_V$  \citep[see][]{Zworski1987}. 
Assuming Property~\ref{item:PotRequirements} of Definition~\ref{def:CV} and Eq.~\eqref{eq:Ntilde} to be satisfied, then $\nu_{\tilde K}$ changes only through $\|V\|_1$ when $\beta\to0$, and we have to study how $\|V\|_1$ behaves in this limit.

A relation between the inverse of the lifetime $\gamma=4\alpha\beta$ and an integral of the potential was famously obtained by G. Gamow~\cite{Gamow1928,Segre1965}. He found the following formula \citep[see][Chapter~7]{Segre1965}
\begin{align}\label{eq:gamow}
  \gamma=4\alpha\beta=\frac{\alpha}{R_N}\exp\left(-2\int_{r_1}^{r_2}\sqrt{V(r)-(\alpha^2-\beta^2)}\,dr\right)  ,
\end{align}
where $V(r)$ is assumed to be shaped like a barrier through which the alpha particle needs to tunnel in order for alpha decay to occur, and $\alpha^2-\beta^2$ is the energy of the alpha particle. The radii $r_1$ and $r_2$ are such that $V(r)-(\alpha^2-\beta^2)\geq0$ if $r\in[r_1,r_2]$ and $R_N$ denotes the nuclear radius. 
Applying Eq.~\eqref{eq:gamow} to the barrier potential shown in Fig.~\ref{fig:potentialET} and  assuming that $V_0-(\alpha^2-\beta^2)\geq V_0/4$ as well as $r_2-r_1\geq\sqrt{r_2-r_1}$, which is justified by the fact that in physical examples the barrier is very thick and much higher than the energy of the alpha particle, we get 
\begin{align}
  \log\frac{1}{\beta}
  =\log(4R_N)+2\sqrt{V_0-(\alpha^2-\beta^2)}(r_2-r_1)
  \geq\sqrt{\|V\|_1},
\end{align}
Hence, Gamow's formula, Eq.~\eqref{eq:gamow}, suggests that
\begin{align}\label{eq:gamow4}
  \|V\|_1\leq\left(\log\frac{1}{\beta}\right)^2=(\log\beta)^2,
\end{align}
that together with Eq.~\eqref{eq:Ntilde} gives exactly Property~\ref{item:NuTildeBetaOrder} in the form of Eq.~\eqref{eq:NTildeET}.

To transform the previous argument into a proof one needs an explicit bound on $\nu_{\tilde K}$ in the form of Eq.~\eqref{eq:Ntilde}, and a rigorous version of Eq.~\eqref{eq:gamow}.
We see two ways to derive the former. 
First, by modification of the proof of Levinson's Theorem \cite[see][]{RS3}, which connects the number of bound states $N$ with the Jost function $F$. In this proof the number of bound states is calculated via the complex contour integral
\begin{align}\label{eq:Levinson}
  N=\frac{1}{2\pi i}\int_{\mathcal C}\frac{\dkp F(z)}{F(z)}\,dz,
\end{align}
where the contour $\mathcal C$ is a closed semi circle in the upper half plane that encloses all bound states. The bound states are zeros of $F$ and thereby poles of the integrand, so that Eq.~\eqref{eq:Levinson} is a direct consequence of the residue theorem. For the purposes of getting a handle on $\nu_{\tilde K},$ we can use Eq.~\eqref{eq:Levinson}, but choose as contour the boundary of the region $\{z\in\mathbb C\,|\,0\leq\Re z\leq12\|V\|_1, \Im k\leq0\}$. The difficulty is now to derive bounds for $\dkp F(z)/F(z)$ along this contour, which yield  bounds for~$\nu_{\tilde K}.$ The second way we can think of to derive a rigorous bound on $\nu_{\tilde K}$ is via a well known result from inverse scattering theory, namely the Marchenko equation \citep[see][]{Koro}. Using this equation, one can calculate the potential from the $S$-matrix and thereby from the resonances. Thus, it might be possible to characterize the potential class for which the resonances in $\{z\in\mathbb C\,|\,0\leq\Re z\leq12\|V\|_1, \Im k\leq0\}$ only have imaginary parts above a certain value. Then one can employ the bounds on the number of resonances $n(r)$ in a ball of radius $r,$ obtained in Lemma~\ref{lem:BoundOnNumberOfZeros}, to get a bound for $\nu_{\tilde K}.$

The proof of Eq.~\eqref{eq:gamow} is, to our knowledge, still an open problem. Moreover, Eq.~\eqref{eq:gamow4} will not hold for general potentials, but only for barrier-like ones as considered by Gamow. For other potentials, Eq.~\eqref{eq:gamow} is not satisfied, so that the relation between $\|V\|_1$ and $\beta$ might be different. For example, in~\cite{Diplom} it was shown that the resonances of the one-dimensional ``potential table'' $V(x)=V_0\1_{[-a,a]}(x)$ satisfy
\begin{align}\label{eq:gamow2}
  k_n=\sqrt{V_0+\frac{(n+1)^2\pi^2}{4a^2}}-i\frac{(n+1)^2\pi^2}{4a^3\sqrt{V_0^2+V_0\frac{(n+1)^2\pi^2}{4a^2}}}+O(V_0^{-3/2}).
\end{align}
Using $\|V\|_1=2aV_0$ and assuming that $V_0$ is large, we have for $\beta=-\Im k_0$ that
\begin{align}\label{eq:gamow3}
	\|V\|_1^2+\|V\|_1\frac{\pi^2}{2a}\approx \frac{\pi^4}{4a^4\beta^2},
\end{align}
which is a completely different relationship between $\|V\|_1$ and $\beta$ than Eq.~\eqref{eq:gamow4}. Note that this difference is not due to the fact that we are looking at a one-dimensional potential rather than a three-dimensional one with rotational symmetry.
Indeed, in the one-dimensional situation the resonances satisfy the equation \cite{Diplom}
\begin{align}
	\exp\left(i4a\sqrt{k^2-V_0}\right)=\left(\frac{k+\sqrt{k^2-V_0}}{k-\sqrt{k^2-V_0}}\right)^2
\end{align}
and in the analogous three-dimensional situation, where the potential reads $V(r)=V_0\1_a(r)$, following \cite{Diplom} it is easy to verify that the resonances satisfy
\begin{align}
  \exp\left(i2a\sqrt{k^2-V_0}\right)=\frac{k+\sqrt{k^2-V_0}}{k-\sqrt{k^2-V_0}}.
\end{align}
Hence, every resonance of the three-dimensional potential appears also in the one-dimensional situation, so that there is a $n$ for which Eq.~\eqref{eq:gamow2} captures the location of the first resonance of the three-dimensional potential.

\vfill\newpage

\section{Energy- and Time-Variance}\label{sec:ETVariances}

In this section we explicitly calculate the variance of energy and time that will be extensively used in the proofs of Theorems~\ref{thm:MainUncertainty} and~\ref{thm:ETvsLL}. 

\begin{lemma}\label{lem:e}
  Let $  E_{\beta,\sigma}$ be defined as in Eq.~\eqref{eq:EBetaSigma}.
Then, for the wave function $\psi$
  \begin{align}\label{eq:Evariance}
	\Var E
	&=\frac{
	2\alpha^2\beta^2E_{\beta,\sigma}^2
	+\frac{\beta^2}{2\sigma^2}(1+E_{\beta,\sigma}^2)
	+\frac{\beta}{2\sigma}\left(\beta^2+4\alpha^2+\frac{3}{2\sigma^2}\right)E_{\beta,\sigma}}
	{\left(1+\beta\sigma E_{\beta,\sigma}\right)^{2}}  .
  \end{align}
\end{lemma}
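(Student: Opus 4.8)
The plan is to compute $\Var E = \langle\psi, H^2\psi\rangle/\|\psi\|_2^2 - (\langle\psi, H\psi\rangle/\|\psi\|_2^2)^2$ directly, working entirely in the spectral (generalized Fourier) representation of $H$, where $H$ acts by multiplication by $k^2$. Since $\psi = f_R + g_R$ with the Gamow eigenfunction truncated at $R$ and smoothly extended by a Gaussian, the first task is to obtain the generalized Fourier transform $\hat\psi(k)$ in closed form. The transform of $f_R$ is already recorded in the excerpt as Eq.~\eqref{eq:SkibstedDistribution}; I would compute the transform of $g_R$ by the same method (expanding $f(k_0,r)=e^{ik_0 r}$ on $[R,\infty)$, so that $g_R$ is a shifted Gaussian times a plane wave, whose generalized Fourier transform against the scattering eigenfunctions is an explicit Gaussian-type integral producing error functions). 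Adding the two pieces gives $\hat\psi$.

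Next I would reduce the three required quantities — $\|\psi\|_2^2 = \|\hat\psi\|_2^2$, $\langle\psi,H\psi\rangle = \|k\hat\psi\|_2^2$, and $\langle\psi,H^2\psi\rangle = \|k^2\hat\psi\|_2^2$ — to one-dimensional integrals in $k$ over $(0,\infty)$. A cleaner route, avoiding the slowly-decaying tails of $\hat\psi$ coming from the sharp cutoff in $f_R$, is to compute $\langle\psi,H\psi\rangle$ and $\langle\psi,H^2\psi\rangle$ in position space instead: use $H\psi = -\psi'' + V\psi$, and note that on $[0,R]$ one has $-\psi''+V\psi = k_0^2\psi$ because $\psi=f(k_0,\cdot)$ there solves Eq.~\eqref{eq:SchroedingerGamow}, while on $[R,\infty)$ the potential vanishes and $\psi=g_R$ is an explicit Gaussian-modulated plane wave, so $-\psi''$ is again explicit. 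Thus $H\psi$ is piecewise elementary and all inner products $\langle\psi,\psi\rangle$, $\langle H\psi,\psi\rangle$, $\langle H\psi,H\psi\rangle$ reduce to Gaussian integrals of the form $\int_R^\infty p(r-R)\,e^{-(r-R)^2/\sigma^2}e^{\pm 2\beta r}\,dr$ for low-degree polynomials $p$; each of these evaluates to a combination of $E_{\beta,\sigma} = \sqrt\pi e^{\beta^2\sigma^2}(1+\erf(\beta\sigma))$ and elementary prefactors, by completing the square. One must be careful about the boundary term at $r=R$ when integrating $\langle H\psi,H\psi\rangle$ by parts, since $\psi'$ is continuous but $\psi''$ jumps there; keeping track of it is routine but is where sign errors are most likely.

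Assembling the pieces, $\|\psi\|_2^2$ comes out proportional to $1+\beta\sigma E_{\beta,\sigma}$ (the $\1_R f$ part contributes the "$1$", up to the $e^{2\beta R}$ normalization that cancels in the ratio, and the Gaussian tail contributes the $\beta\sigma E_{\beta,\sigma}$ term), which explains the denominator $(1+\beta\sigma E_{\beta,\sigma})^2$ in Eq.~\eqref{eq:Evariance}. The numerators of $\langle H\psi,\psi\rangle$ and $\langle H\psi,H\psi\rangle$ will be polynomials in $\alpha,\beta,1/\sigma$ with coefficients that are polynomials in $E_{\beta,\sigma}$ of degree at most two; forming $\|k^2\hat\psi\|_2^2\,\|\hat\psi\|_2^2 - \|k\hat\psi\|_2^4$ and dividing by $\|\psi\|_2^4$ should collapse to the stated expression. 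The main obstacle I anticipate is purely computational bookkeeping: correctly tracking the $r=R$ boundary contributions and the cross terms between $f_R$ and $g_R$ (the latter involve $\int_R^\infty$ of $e^{ik_0 r}$ against $\overline{e^{ik_0 r}}e^{-(r-R)^2/2\sigma^2}$, i.e. $e^{2\beta r}$ weights), and then verifying that the many Gaussian-integral outputs recombine into exactly the compact rational function of $\alpha,\beta,\sigma,E_{\beta,\sigma}$ claimed — there is no conceptual difficulty, only the risk of arithmetic slips, so I would organize the computation by isolating the "box" part ($\1_R f$), the "tail" part ($g_R$), and the cross part separately before summing.
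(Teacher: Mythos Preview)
Your position-space route is exactly what the paper does: split $\langle\psi,H\psi\rangle$ and $\langle H\psi,H\psi\rangle$ into the $[0,R]$ part (where $H\psi=k_0^2\psi$) and the $[R,\infty)$ part (where $-\psi''$ is an explicit Gaussian-modulated exponential), evaluate the resulting error-function integrals, and combine with $\|\psi\|_2^2=\|f_R\|_2^2+\|g_R\|_2^2$. One simplification you are missing: $f_R$ and $g_R$ have disjoint supports, so there are no cross terms anywhere, and since $\psi'$ is continuous at $r=R$ (both one-sided derivatives equal $ik_0 e^{ik_0 R}$) the weak $\psi''$ is just the piecewise pointwise $\psi''$ with no boundary delta, so no integration by parts or boundary bookkeeping is needed.
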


\begin{proof}
  Note that
  \begin{align}\label{eq:thmE1}
	\Var E=\frac{\langle\psi,H^2\psi\rangle}{\|\psi\|^2}-\frac{\langle\psi,H\psi\rangle^2}{\|\psi\|^4}.
  \end{align}
  First look at
  \begin{align}
	\langle\psi,H\psi\rangle
	&=\langle\psi,\1_{R}H\psi\rangle+\langle\psi,\1_{[R,\infty)}H\psi\rangle\\
	&=k_0^2\|f_{R}\|_2^2-\langle\psi,\1_{[R,\infty)}\ddr\psi\rangle.\label{eq:thmE2}
  \end{align}
  From Lemma~3.1 in~\cite{Skibsted86} we have
  \begin{align}\label{eq:thmE3}
	\|f_{R}\|_2^2=\frac{e^{2\beta R}}{2\beta}
  \end{align}
  and since for $r\geq R$
  \begin{align}\label{eq:thmE4}
	-\ddr\psi(r)
	=\left[\frac{1}{\sigma^2}-\left(ik_0-\frac{r-R}{\sigma^2}\right)^2\right]
	\exp\left(ik_0r-\frac{(r-R)^2}{2\sigma^2}\right)
  \end{align}
in the weak sense,  the second summand in eq.~\eqref{eq:thmE2} is readily calculated. One finds elementary error function integrals, which is why we omit the details and directly give the result
  \begin{align}
	\langle\psi,H\psi\rangle
	=\frac{e^{2\beta R}}{2\beta}\left(\alpha^2
	+\frac{\beta}{2\sigma}(1+2\alpha^2\sigma^2)\sqrt\pi e^{\beta^2\sigma^2}(1+\erf(\beta\sigma))\right).
  \end{align}
  Now,
  \begin{align}
	\langle\psi,H^2\psi\rangle
	&=\langle H\psi,H\psi\rangle
	=\langle H\psi,\1_{R}H\psi\rangle+\langle H\psi,\1_{[R,\infty)}H\psi\rangle\\
	&=|k_0|^4\|f_{R}|_2^2+\langle \ddr\psi,\1_{[R,\infty)}\ddr\psi\rangle.
  \end{align}
  The first summand is again obtained from eq.~\eqref{eq:thmE3} and the second one by integrating the modulus square of eq.~\eqref{eq:thmE4} over $[R,\infty)$, yielding error function integrals again. Omitting the details, one arrives at
  \begin{align}
	\langle\psi,H^2\psi\rangle
	&=\frac{e^{2\beta R}}{2\beta}\bigg[\frac{1}{2\sigma^2}(\beta^2+2\alpha^4\sigma^2)\nonumber\\
   &\quad+\frac{\beta}{4\sigma^3} (3+12\alpha^2\sigma^2+4\alpha^4\sigma^4)\sqrt\pi e^{\beta^2\sigma^2}(1+\erf(\beta\sigma))\bigg].
  \end{align}
  Moreover, using the fact that $f(k_0,r)=e^{ik_0r}$ for $r\geq R,$ we have
  \begin{align}\label{eq:NormGaussianTail}
	\|g_R\|_2^2
	=\int_R^\infty \exp\left(2\beta r-\frac{(r-R)^2}{\sigma^2}\right)\,dr
	=\frac{\sigma}{2}e^{2\beta R}E_{\beta,\sigma}
  \end{align}
  and this together with Eq.~\eqref{eq:thmE3} gives us
  \begin{align}
	\|\psi\|_2^2
	=\|f_R\|_2^2+\|g_R\|_2^2
	=\frac{e^{2\beta R}}{2\beta}\left[1+\beta\sigma E_{\beta,\sigma}\right].
	\label{eq:NormInitialWaveFunction}
  \end{align}
  Plugging $\langle\psi,H^2\psi\rangle,\langle\psi,H\psi\rangle$ and $\|\psi\|^2_2$ into the formula for the variance, Eq.~\eqref{eq:thmE1}, we obtain the assertion of the Lemma.
  \QED
\end{proof}

In contrast to the energy variance, $\Var T$ can not be calculated directly.
We will approximate it by $\Varo T$, which is defined in Eq.~\eqref{eq:SimpleVaro} and determined in the next Lemma. Recall that $\gamma = 4\alpha\beta$.
\begin{lemma}\label{lem:t}
The probability density 
  \begin{align}
	\Pio_T(t)=- \partial_t  \frac {\|\1_{R}   e^{-ik_0^2t}  f_{R_2(t)}\|_2^2}  {\|\1_{R}   \psi\|_2^2},
  \end{align}
  has the mean
\begin{equation}
\meano{t}  =   \frac{1}{\gamma}
\end{equation}
and the variance
  \begin{equation}\label{eq:Tvariance}
 	\Varo{T}= \frac{1}{\gamma^2}.
  \end{equation}
\end{lemma}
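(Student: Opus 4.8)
The plan is to compute the non-escape probability associated to the approximate evolution explicitly, differentiate it to get $\Pio_T$, and then read off the mean and variance from the resulting density. The key observation is that $f_{R_2(t)} = \1_{R_2(t)} f(k_0,r)$ by definition (cf.\ Eq.~\eqref{eq:fR}), so that $\1_R f_{R_2(t)} = \1_R f(k_0,r)$ whenever $R_2(t) = 2\alpha t + R \geq R$, which holds for all $t\geq 0$ since $\alpha>0$. Therefore $\1_R e^{-ik_0^2 t} f_{R_2(t)} = e^{-ik_0^2 t}\1_R f(k_0,r)$, and using $|e^{-ik_0^2 t}|^2 = e^{-2\,\Im(k_0^2)\,t}$ together with $k_0^2 = E - i\gamma/2$ (so $\Im(k_0^2) = -\gamma/2$), we get
\begin{equation}
\|\1_R e^{-ik_0^2 t} f_{R_2(t)}\|_2^2 = e^{-\gamma t}\,\|\1_R f(k_0,r)\|_2^2 = e^{-\gamma t}\,\|f_R\|_2^2 .
\end{equation}
Dividing by $\|\1_R\psi\|_2^2 = \|f_R\|_2^2$ (recall $\psi = f_R + g_R$ with $g_R$ supported in $[R,\infty)$, so $\1_R\psi = f_R$), the approximate non-escape probability \eqref{eq:ApproxSurvivalProb} is simply $e^{-\gamma t}$.

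First I would record this simplification, then apply $\Pio_T(t) = -\partial_t e^{-\gamma t} = \gamma e^{-\gamma t}$, i.e.\ the approximate arrival-time distribution is exactly exponential with rate $\gamma$. From here the mean and variance are the standard moments of an exponential law: by the formulas \eqref{eq:SimpleMeano} and \eqref{eq:SimpleVaro} of the approximate analogue of Lemma~\ref{lem:SimpleVariance},
\begin{equation}
\meano{t} = \int_0^\infty e^{-\gamma t}\,dt = \frac{1}{\gamma},
\qquad
\Varo{T} = 2\int_0^\infty t\,e^{-\gamma t}\,dt - \meano{t}^2 = \frac{2}{\gamma^2} - \frac{1}{\gamma^2} = \frac{1}{\gamma^2}.
\end{equation}
Alternatively one can compute directly $\int_0^\infty t\,\Pio_T(t)\,dt$ and $\int_0^\infty (t-1/\gamma)^2\,\Pio_T(t)\,dt$ against the density $\gamma e^{-\gamma t}$; both routes give the same elementary Gamma-integrals.

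There is essentially no hard step here: the only thing to be slightly careful about is the bookkeeping of the cutoff, namely verifying that $\1_R$ acting on $f_{R_2(t)}$ really does recover $\1_R f(k_0,r)$ for \emph{all} $t\geq 0$ (this uses $R_2(t)\geq R$, which is where $\alpha_0>0$ enters) and that $\|\1_R\psi\|_2 = \|f_R\|_2$ exactly, so that the normalization constants cancel cleanly. Once that is in place, the lemma reduces to the moments of an exponential distribution. If anything is an "obstacle" it is purely notational — making sure the reader sees that the Gaussian tail $g_R$ plays no role in the \emph{approximate} quantities because it lives entirely outside the ball of radius $R$.
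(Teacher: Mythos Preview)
Your proposal is correct and follows essentially the same route as the paper: reduce the approximate non-escape probability to $e^{-\gamma t}$ via the identities $\1_R f_{R_2(t)}=\1_R f(k_0,r)$ and $\|\1_R\psi\|_2=\|f_R\|_2$, obtain $\Pio_T(t)=\gamma e^{-\gamma t}$, and read off the exponential moments using \eqref{eq:SimpleMeano}--\eqref{eq:SimpleVaro}. If anything, you spell out more carefully than the paper why $R_2(t)\geq R$ and why $g_R$ drops out of $\1_R\psi$; the paper simply cites these identities as Eq.~\eqref{eq:Exp}.
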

\begin{proof}
Due to the fact that
\begin{equation}\label{eq:Exp}
\|\1_{R}   \psi\|_2^2  =  \| f_{R}\|_2^2 ,
\qquad
{\|\1_{R}   e^{-ik_0^2t}  f_{R_2(t)}  \|_2^2}
= e^{-\gamma t}  \| f_{R} \|_2^2
\end{equation}
we have
\begin{equation}\label{eq:SkibstedFactor}
	\Pio_T(t)=-\partial_t\frac  {\|\1_{R}   e^{-ik_0^2t}  f_{R_2(t)}  \|_2^2}  {\|\1_{R}   \psi\|_2^2} = -\partial_te^{-\gamma t}=\gamma e^{-\gamma t}.
\end{equation}
Using integration by parts, the mean then calculates to 
\begin{equation}
\int_0^\infty    t\,   \gamma  e^{-\gamma t}   d t  
=\int_0^\infty    e^{-\gamma t}   d t  
=  \frac{1}{\gamma},
\end{equation}
and the variance
\begin{align}
\Varo{T}
&=\meano{t^2} -\meano{t}^2 
\nonumber\\
&=2 \int_0^\infty  t\,   e^{-\gamma t}   d t  - \frac{1}{\gamma^2}
=  \frac{1}{\gamma^2}.
\end{align}
\QED
\end{proof}

To estimate the error on  the time variance made by using $\Varo T$ as approximation, we start by estimating in the following Lemma the pointwise difference between the true non-escape probability $\Pi_T$ and $\Pio_T$.
For early times, say $t\in(0,A)$, we will control this difference  using the results of Skibsted given in Lemma~\ref{lem:SkibstedET}.
At late times, i.e.\ for $t\in[A,\infty)$, we can use the scattering estimates of  Corollary \ref{cor:main}. 
\begin{lemma}\label{lem:SurvivalProbError}
Let $t>0$  and
\begin{align}
\xi_{(0,A)} (t)
&\coloneqq
\left( 2 + \sqrt{E_{\beta,\sigma} \beta\sigma } \right)
	\nonumber\\
	&\times\left(   \sqrt {54 \,  \beta}  \, t^{1/4}  +  \sqrt{\frac{6}{5\pi}}\frac {\beta^{1/4} } {\alpha^{1/4}}  +  \frac {4\sqrt \beta} {\sqrt{\pi\alpha}}  
	+ \sqrt{E_{\beta,\sigma} \beta\sigma }  \right)  ,  
\\
\xi_{[A,\infty)} (t)
&\coloneqq
2\beta e^{-2\beta R}	\left(   \tilde{c}_3 t^{-3} + \tilde{c}_4 t^{-4}  \right)  +  e^{-\gamma t} .
\end{align}
Then,
\begin{multline}
\left|   \frac {\|\1_{R}   e^{-iHt}  \psi\|_2^2}  {\|\1_{R}   \psi\|_2^2}   
		 -  \frac {\|\1_{R}   e^{-ik_0^2t}  f_{R_2(t)}\|_2^2}  {\|\1_{R}   \psi\|_2^2}   \right|
\\
\leq   \xi_{(0,A)}(t)  \1_{(0,A)}  +  \xi_{[A,\infty)} (t)  \1_{[A,\infty)}  .
\label{eq:BoundProbDifference}
\end{multline}
\end{lemma}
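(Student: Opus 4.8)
The plan is to split the time axis at $A$ and estimate the pointwise difference of the two non-escape probabilities on each piece by a different tool. Throughout, write $N(t) \coloneqq \|\1_R e^{-iHt}\psi\|_2^2 / \|\1_R\psi\|_2^2$ for the true non-escape probability and $\Po(t) \coloneqq \|\1_R e^{-ik_0^2 t} f_{R_2(t)}\|_2^2 / \|\1_R\psi\|_2^2$ for the approximate one; we must bound $|N(t) - \Po(t)|$. Recall from Eq.~\eqref{eq:Exp} in the proof of Lemma~\ref{lem:t} that $\|\1_R\psi\|_2^2 = \|f_R\|_2^2 = e^{2\beta R}/(2\beta)$, so $\Po(t) = e^{-\gamma t}$ exactly, and from Eq.~\eqref{eq:NormInitialWaveFunction} that $\|f_R\|_2^2/\|\psi\|_2^2 = 1/(1 + \beta\sigma E_{\beta,\sigma})$, whence $\sqrt{\|f_R\|_2/\|\psi\|_2}$-type factors will collapse into the $\bigl(2 + \sqrt{E_{\beta,\sigma}\beta\sigma}\,\bigr)$ prefactor appearing in both $\xi_{(0,A)}$ and $\zeta_{(0,A)}$.

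For \emph{early times} $t \in (0,A)$ I would use Skibsted's bound (Lemma~\ref{lem:SkibstedET}), but applied to $\psi$ rather than $f_R$. Since $\psi = f_R + g_R$ and $\|g_R\|_2$ is small, one first transfers Eq.~\eqref{eq:SkibstedMain} to $\psi$, picking up an extra $\|g_R\|_2$-term; using Eq.~\eqref{eq:NormGaussianTail} for $\|g_R\|_2^2 = \tfrac{\sigma}{2} e^{2\beta R} E_{\beta,\sigma}$, this is exactly where the $\sqrt{E_{\beta,\sigma}\beta\sigma}$ contributions enter. Then the difference of the squared norms is estimated by the elementary inequality $\bigl|\,\|u\|_2^2 - \|v\|_2^2\,\bigr| \leq \|u-v\|_2\,(\|u\|_2 + \|v\|_2) \leq \|u-v\|_2\,(2\|v\|_2 + \|u-v\|_2)$ applied with $u = \1_R e^{-iHt}\psi$, $v = \1_R e^{-ik_0^2 t} f_{R_2(t)}$. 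Dividing by $\|\1_R\psi\|_2^2 = \|f_R\|_2^2$ and inserting the explicit form of $K(\alpha_0,\beta_0,t)$ from Lemma~\ref{lem:SkibstedET}, I would bound each summand of $K$ by the stated powers of $t$: the $(\beta/\alpha)^{1/2}$ term is $t$-independent, the $(\beta/\alpha)^{1/4}\sqrt{\gamma t}\,(\dots)$ term contributes the $\sqrt{54\beta}\,t^{1/4}$ piece after bounding the bracket $\bigl((1+20\sqrt{\beta/\alpha})/(1+10\sqrt{\beta/\alpha})\bigr)^2 \leq 4$ and using $\gamma = 4\alpha\beta$, and the $(\beta/\alpha)^{1/4}\sqrt{3/40}$ term gives the $\sqrt{6/(5\pi)}\,\beta^{1/4}/\alpha^{1/4}$ and $4\sqrt{\beta}/\sqrt{\pi\alpha}$ pieces. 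Collecting everything under the common prefactor $\bigl(2 + \sqrt{E_{\beta,\sigma}\beta\sigma}\,\bigr)$ yields $\xi_{(0,A)}(t)$.

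For \emph{late times} $t \in [A,\infty)$ I would simply use the triangle inequality $|N(t) - \Po(t)| \leq N(t) + \Po(t)$. The second term is $\Po(t) = e^{-\gamma t}$, the pure exponential. For the first term, $N(t) = \|\1_R e^{-iHt}\psi\|_2^2 / \|f_R\|_2^2$, I apply the scattering bound Eq.~\eqref{eq:CorollaryBound} of Corollary~\ref{cor:main}, $\|\1_R e^{-iHt}\psi\|_2^2 \leq \tilde c_3 t^{-3} + \tilde c_4 t^{-4}$, and divide by $\|f_R\|_2^2 = e^{2\beta R}/(2\beta)$, i.e.\ multiply by $2\beta e^{-2\beta R}$; this produces exactly $\xi_{[A,\infty)}(t) = 2\beta e^{-2\beta R}(\tilde c_3 t^{-3} + \tilde c_4 t^{-4}) + e^{-\gamma t}$. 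Combining the two regimes via the indicator functions $\1_{(0,A)}$ and $\1_{[A,\infty)}$ gives Eq.~\eqref{eq:BoundProbDifference}.

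The main obstacle is the early-time step: transferring Skibsted's estimate cleanly from $f_R$ to $\psi = f_R + g_R$ while keeping track of how $\|g_R\|_2$ combines with $\|f_R\|_2$ in the normalization, and verifying that the somewhat crude bounds on the three summands of $K(\alpha_0,\beta_0,t)$ (in particular $t^{1/4}$ as a uniform-in-$(0,A)$ surrogate and the bound $4$ on the rational bracket) are consistent with the precise constants in the definition of $\xi_{(0,A)}$. The late-time step is essentially bookkeeping, given Corollary~\ref{cor:main}.
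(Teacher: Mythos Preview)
Your proposal is correct and follows exactly the paper's two-regime strategy: Skibsted's Lemma~\ref{lem:SkibstedET} plus a $g_R$-correction for $t\in(0,A)$, and the crude triangle inequality $|N(t)-\Po(t)|\le N(t)+\Po(t)$ together with Corollary~\ref{cor:main} for $t\in[A,\infty)$.

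Two small bookkeeping corrections on the early-time step. First, your chain $\|u\|_2+\|v\|_2\le 2\|v\|_2+\|u-v\|_2$ would leave an extra $\|u-v\|_2^2/\|f_R\|_2^2$ contribution not present in $\xi_{(0,A)}$; the paper instead bounds the sum directly via $\|\1_R e^{-iHt}\psi\|_2\le\|\psi\|_2$ (unitarity) and $\|\1_R e^{-ik_0^2t}f_{R_2(t)}\|_2=e^{-\gamma t/2}\|f_R\|_2\le\|f_R\|_2$, giving $\|u\|_2+\|v\|_2\le 2\|f_R\|_2+\|g_R\|_2$ and hence precisely the prefactor $2+\sqrt{E_{\beta,\sigma}\beta\sigma}$ after division by $\|f_R\|_2^2$. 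Second, the term $4\sqrt\beta/\sqrt{\pi\alpha}$ arises from the $(\beta/\alpha)^{1/2}$ summand of $K$, not from the $\sqrt{3/40}$ piece; and the paper then uses $\sqrt{x+y}\le\sqrt x+\sqrt y$ to split the remaining square root into the $t^{1/4}$ and constant contributions.
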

\begin{proof}
We first prove the bound that we will use for $t\in(0,A)$.
Observing that
\begin{equation}
6  \left(  \frac {1+20\sqrt {\beta/\alpha}}  {1+10\sqrt {\beta/\alpha}}  \right)^2  \beta
	\leq 54 \,  \beta  ,
\end{equation}
we get from Lemma~\ref{lem:SkibstedET} that for $t\geq0$
\begin{multline}
\left\|  e^{-iHt}  f_{R}   - e^{-ik_0^2t}  f_{R_2(t)}  \right\|_2
\\
\leq \|f_{R}\|_2 \left( \sqrt{  54\beta \sqrt t  +  \frac {6\sqrt {\beta} } {5\pi\sqrt\alpha}}  +  \frac {4\sqrt {\beta}} {\sqrt{\pi\alpha}}    \right )  .
\end{multline}
Using this together with Eq.~\eqref{eq:thmE3} for $\|f_R\|^2_2$ and Eq.~\eqref{eq:NormGaussianTail} for $\|g_R\|_2^2$ we get
\begin{align}
&\left|   \frac {\|\1_{R}   e^{-iHt}  \psi\|_2^2}  {\|\1_{R}   \psi\|_2^2}   
		 -  \frac {\|\1_{R}   e^{-ik_0^2t}  f_{R_2(t)}\|_2^2}  {\|\1_{R}   \psi\|_2^2}   \right|
\\
&= 
\frac 1  {\|f_{R}\|_2^2}   
\left(   \|\1_{R}   e^{-iHt}  \psi\|_2   +   \|\1_{R}   e^{-ik_0^2t}  f_{R_2(t)}\|_2   \right)  
\nonumber\\
&\qquad\qquad
\times  \left|   \|\1_{R}   e^{-iHt}  \psi\|_2   -   \|\1_{R}   e^{-ik_0^2t}  f_{R_2(t)}\|_2   \right|
\\
&\leq  \frac {\|\psi\|_2 + \|f_{R}\|_2}   {\|f_{R}\|_2^2}   
	\left\|  \1_{R}  \left( e^{-iHt}  \psi   - e^{-ik_0^2t}  f_{R_2(t)}   \right)  \right\|_2
\\
&\leq  \frac {2 \|f_{R}\|_2 + \|g_R\|_2}   {\|f_{R}\|_2^2}   
	\left\|  e^{-iHt}  \psi   - e^{-ik_0^2t}  f_{R_2(t)}  \right\|_2
\\
&\leq  \frac {2 + \sqrt{E_{\beta,\sigma} \beta\sigma }  }   {\|f_{R}\|_2}   
	\left(  \left\|  e^{-iHt}  f_{R}   - e^{-ik_0^2t}  f_{R_2(t)}  \right\|_2
	+  \left\|  e^{-iHt} g_R \right\|_2  \right)
\\
&\leq   \left( 2 + \sqrt{E_{\beta,\sigma} \beta\sigma } \right)
	\left(   \sqrt{  54\beta \sqrt t  +  \frac {6\sqrt {\beta} } {5\pi\sqrt\alpha}}  +  \frac {4\sqrt {\beta}} {\sqrt{\pi\alpha}}  
	+ \sqrt{E_{\beta,\sigma} \beta\sigma }  \right)  .
\end{align}
For $X,Y\geq 0$,
\begin{equation}
X^2 + Y^2 \leq (X+Y)^2 ,
\end{equation}
taking the square root, and choosing $X=\sqrt x$, $Y=\sqrt y$, with $x,y \geq 0$, we have
\begin{equation}
\sqrt{x+y} \leq \sqrt x + \sqrt y ,
\end{equation} 
hence
\begin{equation}
\left|   \frac {\|\1_{R}   e^{-iHt}  \psi\|_2^2}  {\|\1_{R}   \psi\|_2^2}   
		 -  \frac {\|\1_{R}   e^{-ik_0^2t}  f_{R_2(t)}\|_2^2}  {\|\1_{R}   \psi\|_2^2}   \right|
\leq   \xi_{(0,A)} (t) .
\label{eq:BoundProbDifferenceZeroA}
\end{equation}

We now prove the bound used for $t\in[A,\infty)$.
Using Corollary~\ref{cor:main} and Eq.~\eqref{eq:Exp} for $\|\1_{R}   e^{-ik_0^2t}  f_{R_2(t)}\|_2^2$ we get  that
\begin{align}
\hspace{5em}&\hspace{-5em}\left|   \frac {\|\1_{R}   e^{-iHt}  \psi\|_2^2}  {\|\1_{R}   \psi\|_2^2}   
		 -  \frac {\|\1_{R}   e^{-ik_0^2t}  f_{R_2(t)}\|_2^2}  {\|\1_{R}   \psi\|_2^2}   \right|
\\
&\leq    \frac {\|\1_{R}   e^{-iHt}  \psi\|_2^2}  {\|\1_{R}   \psi\|_2^2}   
		 +  \frac {\|\1_{R}   e^{-ik_0^2t}  f_{R_2(t)}\|_2^2}  {\|\1_{R}   \psi\|_2^2}  
\\
&\leq  \xi_{[A,\infty)}.
\label{eq:BoundProbDifferenceAInfty}
\end{align}
\QED
\end{proof}

Having control over the difference between $\Pi_T$ and $\Pio_T$ we can now prove Lemma~\ref{lem:ErrorMeanVariance}, which provides an estimate on the difference between $\Var T$ and $\Varo T$.

\begin{proof}[of Lemma~\ref{lem:ErrorMeanVariance}]
Consider at first the mean.
Recalling Eq.~\eqref{eq:SimpleMean} from Lemma~\ref{lem:SimpleVariance} and Eq.~\eqref{eq:SimpleMeano}, we have 
\begin{align}
\left|  \mean{t} -   \meano{t}    \right|
\leq  \int_0^\infty   \left|   
		\frac {\|\1_{R}   e^{-iHt}  \psi\|_2^2}  {\|\1_{R}   \psi\|_2^2}   
		-  \frac {\|\1_{R}   e^{-ik_0^2t}  f_{R_2(t)}\|_2^2}  {\|\1_{R}   \psi\|_2^2}   
	\right| d t   .
\end{align}
Substituting Eq.~\eqref{eq:BoundProbDifference} from Lemma~\ref{lem:SurvivalProbError} and performing the integral we immediately get  Eq.~\eqref{eq:ErrorMean}.

Now consider the variance. Using Eqs.~\eqref{eq:SimpleVariance} and~\eqref{eq:SimpleVaro} for $\Var T$ and $\Varo T$ we get
\begin{multline}
\left|  \Var{T}- \Varo  T   \right|
\\=
\left| 
 2 \int_0^\infty    t
		  \left( \frac {\|\1_{R}   e^{-iHt}  \psi\|_2^2}  {\|\1_{R}   \psi\|_2^2}   
		 -  \frac {\|\1_{R}   e^{-ik_0^2t}  f_{R_2(t)}\|_2^2}  {\|\1_{R}   \psi\|_2^2}   \right)  d t
+  \meano{t}^2 - \mean{t}^2
  \right|
\end{multline}
From the error estimate on the mean given in  Eq.~\eqref{eq:ErrorMean} we have
\begin{equation}
\left| \meano{t}^2 - \mean{t}^2  \right|
	=  \left | - \bigl( \mean{t} - \meano{t} \bigr)^2  
		+  2 \meano{t} \, \bigl( \meano{t} - \mean{t} \bigr)  \right|
	\leq  \omega^2  +  \frac{2}{\gamma}  \omega   ,
\end{equation}
therefore
\begin{multline}
\left|  \Var{T}- \Varo  T   \right|
\\
\leq
 2 \int_0^\infty    t
		  \left|  \frac {\|\1_{R}   e^{-iHt}  \psi\|_2^2}  {\|\1_{R}   \psi\|_2^2}   
		 -  \frac {\|\1_{R}   e^{-ik_0^2t}  f_{R_2(t)}\|_2^2}  {\|\1_{R}   \psi\|_2^2}   \right|  d t
+ \omega^2  +  \frac{2}{\gamma}  \omega  .
\end{multline}
Using again the bound~\eqref{eq:BoundProbDifference} for the non-escape probability and integrating we get Eq.~\eqref{eq:ErrorVariance}.
\QED
\end{proof}


\section{Proof of  Theorem~\ref{thm:MainUncertainty}}



\subsection{Proof of Statement~\ref{statement:UncertaintyHolds}}

First, we sketch the idea behind the proof.
The approximate time variance $\Varo T=1/\gamma^2$ is independent of $\sigma$, while the energy variance~\eqref{eq:Evariance} can be made very small by making $\sigma$ very big.
Therefore, the same is true for the approximate product $\Po=\Var E /\gamma^2$; this  suggests a possible violation of the uncertainty relation.
On the other side, by increasing $\sigma$ the error $\epsilon_P=\epsilon_T\Var E$ grows very fast and soon becomes too big to make statements on the validity of the uncertainty relation.

The statement of the theorem in symbolical form is
\begin{equation}
\bigl[
	\Po-\epsilon_P\geq 1/4  
	\lor
	\Po+\epsilon_P< 1/4
\bigr]
\Rightarrow
	\Po-\epsilon_P\geq 1/4  ,
\end{equation}
that is equivalent to
\begin{equation}
\Po+\epsilon_P \geq 1/4.
\end{equation}
The quantities $\Po$ and $\epsilon_P$ are functions of the parameters $\alpha$, $\beta$, and $\sigma$, therefore a sufficient condition for this inequality to be true is that the parameter regions corresponding to $\Po<1/4$ and to $\epsilon_P<1/4$ do not intersect.
This sufficient condition stays sufficient if we make the regions bigger by using a $\tPo \leq \Po$ and an $\tilde{\epsilon}_P \leq \epsilon_P$ in place of $\Po$ and $\epsilon_P$.

To find the approximations $\tPo$ and $\tilde{\epsilon}_P$ we will benefit from the fact that the expression~\eqref{eq:Evariance} for $\Var E$ and the expression~\eqref{eq:EpsilonT}  for $\epsilon_T$ are sums of positive terms, therefore we can simply drop some terms from each sum.

We start considering $\Po$.
From the energy variance Eq.~\eqref{eq:Evariance} we get
\begin{align}
	\Po
	&=\frac{
	2\alpha^2\beta^2E_{\beta,\sigma}^2
	+\frac{\beta^2}{2\sigma^2}(1+E_{\beta,\sigma}^2)
	+\frac{\beta}{2\sigma}\left(\beta^2+4\alpha^2+\frac{3}{2\sigma^2}\right)E_{\beta,\sigma}}
	{16\alpha^2\beta^2 \left(1+\beta\sigma E_{\beta,\sigma}\right)^{2}}
	\\
	&=
	\frac{1}
	{8\left(\beta\sigma  +E_{\beta,\sigma}^{-1} \right)^{2}}
	\left[
	{1
	+\frac{1+E_{\beta,\sigma}^{-2}}{4\alpha^2\sigma^2}
	+\frac{E_{\beta,\sigma}^{-1}}{4\alpha^2\beta\sigma^3}\left(\beta^2\sigma^2+4\alpha^2\sigma^2+\frac{3}{2}\right)}
	\right] .
  \end{align}
We can simplify this expression with the change of variables
\begin{equation}
\talpha \coloneqq \alpha\sigma,
\qquad\qquad
\tbeta \coloneqq \beta\sigma,
\end{equation}
and with the definition
\begin{equation}
E_{\tbeta} \coloneqq \sqrt\pi e^{\tbeta^2}\bigl(1+\erf(\tbeta)\bigr) =E_{\beta,\sigma},
\end{equation}
getting
\begin{align}
	\Po
	&=
	\frac{1}
	{8\left(\tbeta  +E_{\tbeta}^{-1} \right)^{2}}
	\left[
	{1
	+\frac{1+E_{\tbeta}^{-2}}{4\talpha^2}
	+\frac{E_{\tbeta}^{-1}}{4\talpha^2\tbeta}
		\left(\tbeta^2+4\talpha^2+\frac{3}{2}\right)}
	\right] .
  \end{align}
These variables are particularly convenient because they transform the parameters of the problem from $(\alpha,\beta,\sigma)$ to only $(\talpha,\tbeta)$.
Notice that 
\begin{equation}
\frac{e^{-\tbeta^2}}{2\sqrt\pi} 
\leq
E_{\tbeta}^{-1} 
\leq
\frac{e^{-\tbeta^2}}{\sqrt\pi} ,
\end{equation}
therefore defining
\begin{equation}
\tPo \coloneqq 
	\frac{\pi}
	{8\left(\sqrt\pi\, \tbeta  + e^{-\tbeta^2} \right)^{2}}
	\left[
	{1
	+\frac{4\pi+e^{-2\tbeta^2} }{16\pi\talpha^2}
	+\frac{e^{-\tbeta^2} }{8\sqrt\pi\talpha^2\tbeta}
		\left(\tbeta^2+4\talpha^2+\frac{3}{2}\right)}
	\right] ,
\end{equation}
we get
\begin{equation}
\Po \geq \tPo.
\end{equation}

\begin{figure}
\centering
\begin{overpic}[width=.65\textwidth
]{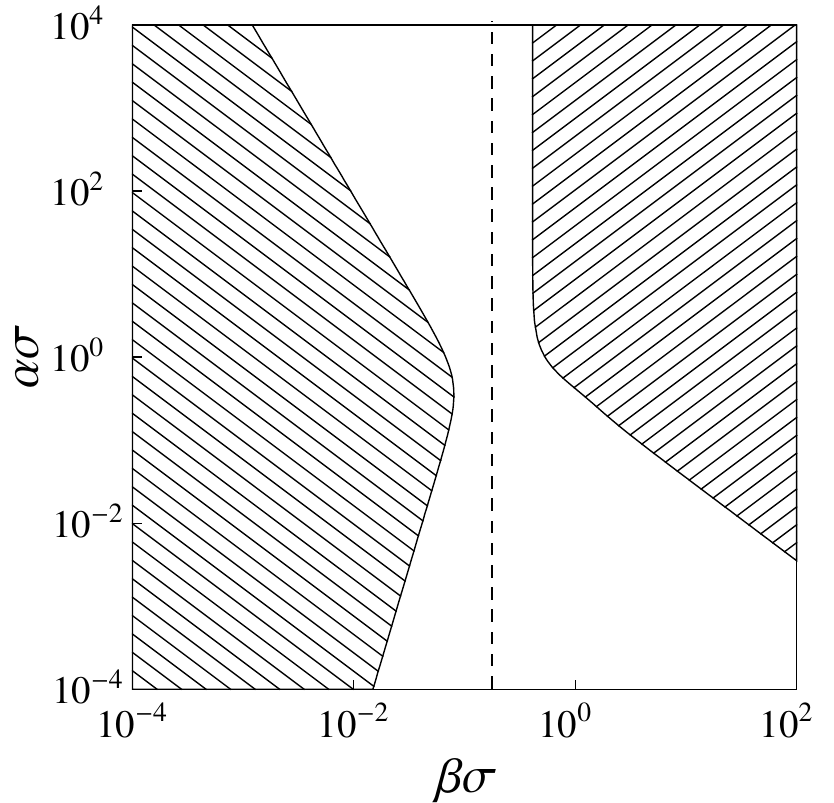}
\put(68,63){\large{\fboxsep0pt\framebox{\fboxsep3pt\colorbox{white}{$\tPo<1/4$}}}}
\put(21,45){\large{\fboxsep0pt\framebox{\fboxsep3pt\colorbox{white}{$\tilde{\epsilon}_P<1/4$}}}}
\end{overpic}
\caption{Regions where $\tPo<1/4$ and where $\tilde{\epsilon}_P<1/4$.
The dashed line corresponds to $\beta\sigma = \tbeta_0 =  10^{-3/4}$.}
\label{fig:Regions}
\end{figure}

We now need to characterize the region of the $(\talpha,\tbeta)$-plane where $\tPo<1/4$.
Figure~\ref{fig:Regions} suggests that this region does not extend in $\tbeta$ further than the value $\tbeta_0 \coloneqq 10^{-3/4}$.
To verify this conjecture we consider the border of this region, that is characterized by the equation
\begin{equation}\label{eq:BorderP}
\tPo(\talpha,\tbeta) = 1/4.
\end{equation}
We solve this equation for $\talpha$, considering $\tbeta$ a parameter.
With the definitions
\begin{align}
a_\tbeta 
	&\coloneqq 
		1
		+\frac {e^{-\tbeta^2}}  {2\sqrt\pi \tbeta}
		-\frac  {2}{\pi}  \left(\sqrt\pi\, \tbeta  + e^{-\tbeta^2} \right)^{2}   ,
\\
b_\tbeta 
	&\coloneqq 
		\frac{4\pi+e^{-2\tbeta^2} }{16\pi}
		+ \frac{e^{-\tbeta^2} }{8\sqrt\pi\tbeta}
			\left(\tbeta^2 +\frac{3}{2}\right)  ,
\end{align}
we can rewrite Eq.~\eqref{eq:BorderP} as
\begin{equation}
a_\tbeta \, \talpha^2 +  b_\tbeta  = 0 .
\end{equation}
Observing that $b_\tbeta >0$, we have that this equations has no solutions in case $a_\tbeta>0$.
For $\tbeta\leq\tbeta_0$, by direct calculation we verify that
\begin{equation}
a_\tbeta 
	\geq 1
		+\frac {e^{-\tbeta_0^2}}  {2\sqrt\pi \tbeta_0}
		-\frac  {2}{\pi}  \left(\sqrt\pi\, \tbeta_0  + 1 \right)^{2} 
	>0,
\end{equation}
therefore in the region $\tbeta\leq\tbeta_0$ there is no $\talpha$ that solves Eq.~\eqref{eq:BorderP}, and $\tPo$ is always greater than $1/4$ there.

We now turn to analyze the error on the product  $\epsilon_P$.
From Eq.~\eqref{eq:EpsilonT} we have $\epsilon_T\geq 2 \zeta$, therefore
\begin{align}
\epsilon_P  
&=  \epsilon_T \Var E 
	= \gamma^2  \Po  \epsilon_T
	\geq \gamma^2 \tPo  \epsilon_T
\\
&\geq
	\frac{\pi \gamma^2  \epsilon_T}
		{16\left(\pi\, \tbeta^2  + 1 \right)}
	\left(  1 + \frac{1 }{4\talpha^2} \right)
\geq
	\frac{\pi \gamma^2  \zeta}
		{8\left(\pi\, \tbeta^2  + 1 \right)}
	\left(  1 + \frac{1 }{4\talpha^2} \right)  .
\label{eq:LowerBoundEpsilonP}
\end{align}
Now for $A\geq0$ let
\begin{equation}
\tilde\zeta_A
\coloneqq
\frac{\sqrt\pi}{2}   e^{\beta^2\sigma^2} \beta\sigma  A^2
+ 2\beta e^{-2\beta R}  \tilde{c}_3 A^{-1}  ,
\end{equation}
then Lemma~\ref{lem:ErrorMeanVariance} gives
 \begin{equation}
\zeta  \geq  \tilde \zeta_A .
\end{equation}
In particular, the last inequality is true for the $A$ that minimizes $\tilde\zeta_A$, that is such that
\begin{equation}
\frac{d\tilde\zeta_A}{dA}  =  
\sqrt\pi e^{\beta^2\sigma^2} \beta\sigma  A
- 2\beta e^{-2\beta R}  \tilde{c}_3 A^{-2}
=0,
\end{equation}
that implies
\begin{equation}
A  =  \left(    
	\frac  
		{2 e^{-2\beta R}  \tilde{c}_3}    
		{\sqrt\pi e^{\beta^2\sigma^2} \sigma}  \right)^{1/3}
	=  \left(    
	\frac  
		{2   \tilde{c}_3}    
		{\sqrt\pi \sigma}  \right)^{1/3}
	e^{-\frac{2}{3}\beta R-\frac{1}{3}\beta^2\sigma^2 }   .
\end{equation}
Substituting that into $\tilde\zeta_A$ we get
\begin{equation}
\frac{3 \pi^{1/6}}{2^{1/3}}
\beta \sigma^{1/3} 
\tilde{c}_3^{2/3}
e^{\frac{1}{3} \beta^2\sigma^2 -  \frac{4}{3} \beta R}  
\eqqcolon
\tilde\zeta  .
\end{equation}
From the definition of $\tilde{c}_3$, Eq.~\eqref{eq:DefinitionC3Tilde}, we see that
\begin{equation}
\tilde{c}_3
\geq
27\frac{2^2}{\alpha}  M_{K,\infty}^2(2)\,  z_{ac,K}^2(0)    .
\end{equation}
From Definition~\ref{def:smallz} and Eq.~\eqref{eq:BoundSDotK} we get
\begin{equation}
z_{ac,K}(0)  \geq 2,
\end{equation}
while Definition~\ref{def:CorollaryConstants} implies
\begin{equation}
M_{K,\infty}(2)
\geq
e^{\beta R}
\frac{\sigma^3}{\sqrt 2}E_{\beta,\sigma/\sqrt 2}
\geq
\frac{\sigma^3}{\sqrt 2}
e^{\beta R + \frac{1}{2}  \beta^2 \sigma^2},
\end{equation}
therefore
\begin{equation}
\tilde{c}_3
\geq
27\frac{8}{\alpha}  \sigma^6  e^{2\beta R + \beta^2 \sigma^2}  .
\end{equation}
Substituting that into $\tilde\zeta$ we get
\begin{equation}
\tilde\zeta
\geq
\frac{54 \cdot 2^{2/3} \pi^{1/6}}{\alpha^{2/3}}
\beta \sigma^{13/3} 
e^{ \beta^2 \sigma^2 }  
=
\frac{54 \cdot 2^{2/3} \pi^{1/6}}{\talpha^{2/3}}
\tbeta \sigma^{12/3} 
e^{ \tbeta^2 }  .
\end{equation}
Using this and setting
\begin{equation}
\tilde{\epsilon}_P (\talpha,\tbeta)
\coloneqq
\left(
27 \cdot 2^{2/3} \pi^{1/6} 
\frac{\pi   \tbeta^2 }
		{\pi\, \tbeta^2  + 1 }
\tbeta
e^{ \tbeta^2 }  
\right)
\left(  4\talpha^2 + 1  \right)
\talpha^{-2/3}
\end{equation}
then Eq.~\eqref{eq:LowerBoundEpsilonP}  gives
\begin{equation}
\epsilon_P \geq \tilde{\epsilon}_P  .
\end{equation}

We now study the region where $\tilde{\epsilon}_P \leq 1/4$.
Figure~\ref{fig:Regions} suggests that this region is completely on the left of $\tbeta_0$.
For $\tbeta=\tbeta_0$ we get by direct calculation
\begin{equation}
\tilde{\epsilon}_P (\talpha,\tbeta_0) 
\geq
\frac{4}{5}
\left(  4\talpha^2 + 1  \right)
\talpha^{-2/3}  .
\end{equation}
The function $\left(  4\talpha^2 + 1  \right)  \talpha^{-2/3} $ is greater than $4 \talpha^{4/3}$, but also than $\talpha^{-2/3}$, and these two bounds cross at $\talpha=1/2$, therefore
\begin{equation}
\tilde{\epsilon}_P (\talpha,\tbeta_0)  \geq  \frac{2^{11/3} }{5} >\frac{1}{4} .
\end{equation}
for all values of $\talpha$.
Observing that $\tilde{\epsilon}_P$ grows with growing $\tbeta$, we can conclude that
\begin{equation}
\epsilon_P \geq \tilde{\epsilon}_P (\talpha,\tbeta) >\frac{1}{4} 
\qquad  \forall \talpha\geq0,  \forall \tbeta\geq\tbeta_0.
\end{equation}

Hence, we have that the region where $\epsilon_P$ is less than $1/4$ and that where $\Po$  is less than $1/4$ do not overlap.
\qed


\subsection{Proof of Statement~\ref{statement:ErrorSomewhenSmall}}

To prove Statement~\ref{statement:ErrorSomewhenSmall} in Theorem~\ref{thm:MainUncertainty} we need to know how $\Po$ and $\epsilon_P$ behave as $\beta$ goes to $0$. 
This implies that we need to know this behavior for $\Var E$, $\Varo T$, and $\epsilon_T$, and therefore also for $\tilde c_3$ and $\tilde c_4$ (recall Lemma~\ref{lem:ErrorMeanVariance}).
This information will be determined in the next Lemmas, to prove which we will make use of the following auxiliary result.

\begin{lemma}\label{lem:NormV}
Let the one-parameter family of potentials $\{V_b\}_{b\in[0,\infty)}$ be in the set $\CV$.
Then, 
\begin{equation}
\exists B>0, \epsilon>0 :
\|V_b\|_1 > \epsilon,  \ \forall b >B.
\end{equation}
\end{lemma}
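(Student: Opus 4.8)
The plan is to argue by contradiction, exploiting that a potential with small $L^1$-norm is too weak to support a resonance whose real part is bounded away from zero while its imaginary part is small. Suppose the assertion fails. Then for every $n\in\N$ there is $b_n>n$ with $\|V_{b_n}\|_1\leq 1/n$, so $b_n\to\infty$ and $\|V_{b_n}\|_1\to0$. By Property~\ref{item:PotRequirements} of Definition~\ref{def:CV} each $V_{b_n}$ has a resonance $k_0(b_n)=\alpha(b_n)-i\beta(b_n)$, which by definition means that the solution $f\bigl(k_0(b_n),\cdot\bigr)$ of the stationary equation~\eqref{eq:SchroedingerGamow} with $f\bigl(k_0(b_n),r\bigr)=e^{ik_0(b_n)r}$ for $r\geq R_{V_{b_n}}$ satisfies $f\bigl(k_0(b_n),0\bigr)=0$. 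Since $V_{b_n}$ has compact support, this $f$ is the unique solution of the Volterra equation
\begin{equation}\label{eq:NormV_Volterra}
f(k,r)=e^{ikr}+\int_r^{R_{V_b}}\frac{\sin\bigl(k(r'-r)\bigr)}{k}\,V_b(r')\,f(k,r')\,dr',\qquad 0\leq r\leq R_{V_b},
\end{equation}
and the resonance condition $f\bigl(k_0(b),0\bigr)=0$ is therefore equivalent to
\begin{equation}\label{eq:NormV_rescond}
\int_0^{R_{V_b}}\frac{\sin\bigl(k_0(b)\,r\bigr)}{k_0(b)}\,V_b(r)\,f\bigl(k_0(b),r\bigr)\,dr=-1 .
\end{equation}

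The next step is to turn~\eqref{eq:NormV_rescond} into an explicit lower bound for $\|V_b\|_1$. Write $\beta:=\beta(b)$ and $k_0:=k_0(b)$. Using $|e^{ik_0 r}|=e^{\beta r}\leq e^{\beta R_{V_b}}$ and $|\sin(k_0 x)|\leq\cosh(\beta|x|)\leq e^{\beta R_{V_b}}$ for $|x|\leq R_{V_b}$, a Gr\"onwall estimate applied to~\eqref{eq:NormV_Volterra} gives
\begin{equation}\label{eq:NormV_groenwall}
\bigl\|f(k_0,\cdot)\bigr\|_{L^\infty([0,R_{V_b}])}\leq e^{\beta R_{V_b}}\exp\!\left(\frac{\cosh(\beta R_{V_b})}{|k_0|}\,\|V_b\|_1\right) .
\end{equation}
Inserting this into~\eqref{eq:NormV_rescond} and bounding $|\sin(k_0 r)/k_0|\leq\cosh(\beta R_{V_b})/|k_0|$ shows that $1\leq\frac{\cosh(\beta R_{V_b})}{|k_0|}\|V_b\|_1\,e^{\beta R_{V_b}}\exp\!\bigl(\frac{\cosh(\beta R_{V_b})}{|k_0|}\|V_b\|_1\bigr)$; a short case distinction according to whether $\cosh(\beta R_{V_b})\|V_b\|_1/|k_0|$ is $\leq1$ or $>1$, together with $\cosh x\leq e^x$, then yields in both cases
\begin{equation}\label{eq:NormV_lowerbound}
\|V_b\|_1\geq\frac{|k_0(b)|}{e}\,e^{-2\beta(b)R_{V_b}}\geq\frac{\alpha(b)}{e}\,e^{-2\beta(b)R_{V_b}}\geq\frac{c_1}{e}\,e^{-2\beta(b)R_{V_b}},
\end{equation}
where the last two inequalities use $|k_0(b)|\geq\alpha(b)\geq c_1>0$, which holds by Property~\ref{item:AlphaConstants}.

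Finally one has to control the exponent $\beta(b)R_{V_b}$ along $b_n$. By Property~\ref{item:BetoToZero} we have $\beta(b_n)\to0$, so it suffices that the support radius $R_{V_{b_n}}$ not grow too fast. When the family has a common support---as for the barrier family of Section~\ref{sec:DiscussionHyp}---$R_{V_b}$ is constant and $\beta(b_n)R_{V_{b_n}}\to0$ at once; in general one combines Property~\ref{item:BetoToZero} with Property~\ref{item:RZeroBetaOrder} and the scattering-length identity~\eqref{eq:r0_ScatteringLength}, which expresses $R_{V_b}$ through the scattering length and the bounded quantity $r_0(b)$, to get $\beta(b)R_{V_b}\leq C$ for $b$ large. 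Granting this, \eqref{eq:NormV_lowerbound} gives $\|V_{b_n}\|_1\geq (c_1/e)\,e^{-2C}>0$ for all large $n$, contradicting $\|V_{b_n}\|_1\to0$; this proves the lemma. The Volterra/Gr\"onwall computation leading to~\eqref{eq:NormV_lowerbound} is routine; I expect the genuine obstacle to be the last step, namely extracting from Definition~\ref{def:CV}---via Properties~\ref{item:RZeroBetaOrder} and~\ref{item:NuTildeBetaOrder}---a bound of the form $R_{V_b}=O\bigl(1/\beta(b)\bigr)$.
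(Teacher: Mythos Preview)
Your approach is essentially the same as the paper's: argue by contradiction that a potential with small $\|V\|_1$ cannot support a resonance with $\alpha\geq c_1$ and small $\beta$. The paper's execution is a bit more direct, though. Instead of the Volterra equation for $f$ and a Gr\"onwall bound, it uses the integral equation for the Jost function itself, $F_b(k)=1+\int_0^{R_V}e^{ikr}V_b(r)\varphi_b(k,r)\,dr$, together with the ready-made bound $|\varphi_b(k,r)|\leq 4e^{4\|r'V_b\|_1}\frac{r}{1+|k|r}e^{|\Im k|r}$ from Lemma~\ref{lem:BoundsOnFAndPhi}. Plugging in $k=k_0(b_\epsilon)$ with $\|V_{b_\epsilon}\|_1<\epsilon$ and $\beta(b_\epsilon)<\epsilon$ gives $|F_{b_\epsilon}(k_0(b_\epsilon))|\geq 1-4R_V e^{6\epsilon R_V}\epsilon$, which is bounded away from zero for small $\epsilon$ and contradicts $F_{b_\epsilon}(k_0(b_\epsilon))=0$. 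No Gr\"onwall and no case distinction are needed; the quantitative lower bound~\eqref{eq:NormV_lowerbound} you derive is not used.

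On the point you flag at the end: the paper's proof in fact has the same gap. It treats $R_V$ as a constant independent of $b$, which makes $4R_Ve^{6\epsilon R_V}\epsilon\to0$ automatic; this is correct for the barrier family of Section~\ref{sec:DiscussionHyp} (where $R_V=r_2$ is fixed), but is not guaranteed by Definition~\ref{def:CV} alone. Your attempted fix via the scattering-length identity does not close it either, because Definition~\ref{def:CV} gives no control on $a(b)$; the relation $R_{V_b}=\tfrac{2}{5}r_0(b)-a(b)$ only trades one uncontrolled quantity for another. So your honesty about the obstacle is warranted, and the paper simply leaves this implicit.
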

\begin{proof}
Let us assume that the statement of the Lemma is false.
Then,
\begin{equation}
\forall B>0, \epsilon>0,
\ \exists b_\epsilon > B
:
\|V_{b_\epsilon}\|_1 < \epsilon.
\end{equation}
Form Property~\ref{item:BetoToZero} of Definition~\ref{def:CV} we have that $\lim_{b\to\infty}\beta(b)=0$, i.e.\
\begin{equation}
\forall \epsilon_\beta>0,
\ \exists B_\beta>0 
:
\beta(b) < \epsilon_\beta,
\ \forall b > B_\beta.
\end{equation}
Given $\epsilon$, we choose $\epsilon_\beta=\epsilon$, to which a certain $B_\beta$ corresponds; then, we choose $B=B_\beta$.
All together this gives
\begin{equation}
\forall \epsilon>0,
\ \exists b_\epsilon>0
:
\|V_{b_\epsilon}\|_1 < \epsilon,\
\beta(b_\epsilon) < \epsilon .
\end{equation}
Then,
\begin{equation}
\|rV_{b_\epsilon}\|_1
\leq R_V \|V_{b_\epsilon}\|_1
< R_V \epsilon.
\end{equation}
Consider now the integral equation~\eqref{eq:IntEqF} for the Jost function $F_b$, that is
\begin{equation}
F_b(k)
=
1  +   \int_0^{R_V}    e^{i k r}  V_b(r)  \varphi_b(k,r)   \,  dr  ,
\qquad \forall k\in\C
\end{equation}
and the bound for the generalized eigenfunctions $\varphi_b$ given in Eq.~\eqref{eq:BoundPhi}, i.e.\
\begin{equation}
| \varphi_b(k,r) |  
	\leq 4 e^{4\| r'V_b(r') \|_1} 
	\frac {r}  {1+|k| r}   e^{|\Im k| r}  ,
\qquad \forall k\in\C, \ r\geq0.
\end{equation}
For $b=b_\epsilon$ we can write
\begin{equation}
| \varphi_{b_\epsilon}(k,r) |  
	\leq 4 r e^{4R_V \epsilon + |\Im k| r}  ,
\end{equation}
and
\begin{align}
|F_{b_\epsilon}(k)|
&\geq
\left| 1  -   \left| \int_0^{R_V}    e^{i k r}  V_{b_\epsilon}(r)  \varphi_{b_\epsilon}(k,r)   \,  dr  \right| \right|
\\
&\geq
1  -  \left| \int_0^{R_V}    e^{i k r}  V_{b_\epsilon}(r)  \varphi_{b_\epsilon}(k,r)   \,  dr  \right| 
\\
&\geq
1  -  \int_0^{R_V}    e^{|\Im k| r}  |V_{b_\epsilon}(r)|  |\varphi_{b_\epsilon}(k,r) |  \,  dr  
\\
&\geq
1  - 4R_V e^{2( |\Im k| +2\epsilon) R_V}   \epsilon
.
\end{align}
In particular, for $k=k_0(b_\epsilon)$ we get
\begin{equation}\label{eq:FEpsilon}
|F_{b_\epsilon}(k_0(b_\epsilon))|
\geq
1  - 4R_V e^{2(\beta(b_\epsilon) +2\epsilon) R_V}   \epsilon
\geq
1  - 4R_V e^{6\epsilon R_V}   \epsilon,
\end{equation}
therefore for $\epsilon$ small enough we can make the right hand side of Eq.~\eqref{eq:FEpsilon} as close to one as wanted, therefore we have that
\begin{equation}
\exists b>0
:
|F_{b}(k_0(b))|  \geq  1/2.
\end{equation}
On the other side, by definition of resonance,
\begin{equation}
|F_{b}(k_0(b))|  = 0,
\ \forall b \geq 0,
\end{equation}
hence a contradiction.
\QED
\end{proof}

\begin{lemma}\label{lem:C3C4BetaOrder}
Let $\sigma=\beta$, the assumptions of Corollary~\ref{cor:main} and Hypothesis~\ref{hyp:BetaLimit} be satisfied, then as $\beta\to0$
\begin{align}
  \tilde{c}_3&= O(1),
  \label{eq:B0OrderC3}\\
  \tilde{c}_4&= O\left(\frac{1}{\beta^5}\left[\log\left(\frac{1}{\beta}\right)\right]^{12}\right)
  \label{eq:B0OrderC4}.
  \end{align}
\end{lemma}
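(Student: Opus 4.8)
The plan is to substitute $\sigma=\beta$ throughout Definition~\ref{def:CorollaryConstants}, reduce $\tilde{c}_3$ and $\tilde{c}_4$ to polynomials in a handful of elementary quantities, and then read off the order in $\beta$ of each of those quantities from the explicit formulas of Section~\ref{sec:MainResult} together with the defining properties of the family $\CV$. With $\sigma=\beta$ one has $\beta\sigma=\beta^{2}\to0$, so $E_{\beta,\sigma}=\sqrt\pi e^{\beta^{4}}\bigl(1+\erf(\beta^{2})\bigr)\to\sqrt\pi$ and likewise $E_{\beta,\sigma/\sqrt2}=O(1)$. Property~\ref{item:AlphaConstants} of Definition~\ref{def:CV} gives $c_{1}\le\alpha\le c_{2}$, hence $\alpha,\alpha^{-1},K=\alpha/4,K^{-1}$ and $e^{\beta R},e^{2\beta R}$ are all $O(1)$ (the cut-off radius $R$, and with it $R_V$, being bounded along the family). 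Feeding these into $M_{K,\infty}(n)$ and $M_{1}(n)$ collapses them to polynomial expressions in $\beta$, $\log(1/\beta)$, the $S$-matrix derivative constants $C_{1,K},C_{2,K}$ (resp.\ $C_{1},C_{2}$) and the localisation parameter $s^{-1}$; consequently $\tilde{c}_3$ and $\tilde{c}_4$ become polynomials in these together with the $z_{ac,K}(n)$ (resp.\ $z_{ac}(n)$). So everything reduces to the $\beta\to0$ asymptotics of $C_{1,K},C_{2,K},C_{1},C_{2},s^{-1},z_{ac,K}(n),z_{ac}(n)$.

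Next I would pin down those asymptotics. Their $b$-dependence enters only through $\alpha$, $\beta$, through $r_{0}$, through $\nu_{\tilde K}$, and through $\|V\|_{1}$ (recall $\tilde K=6\|V\|_{1}$); Property~\ref{item:RZeroBetaOrder} gives $r_{0}=O(1)$, Property~\ref{item:NuTildeBetaOrder} gives $\nu_{\tilde K}=O\bigl((\log(1/\beta))^{2}\bigr)$, and the Gamow-type estimate~\eqref{eq:gamow4} gives $\|V\|_{1}=O\bigl((\log(1/\beta))^{2}\bigr)$. For the $K$-localised objects, the entire-function bounds of Section~\ref{sec:MainResult} on $\{|k|\le K=\alpha/4\}$ see no zero of the Jost function, since every resonance has real part $\ge\alpha_{0}=\alpha=4K$ (minimality of $\alpha_{0}$, Section~\ref{sec:AssumptionsPotentialET}) and so stays at distance $\ge3K$ from that disc; this keeps $C_{1,K},C_{2,K},z_{ac,K}(n)$ uniformly $O(1)$, whence $M_{K,\infty}(n)=O(1)$ and $\tilde{c}_3=O(1)$, which is~\eqref{eq:B0OrderC3}. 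For the global objects one tracks the precise powers: the $\log(2/\beta)$ present in every $M_{1}(n)$, together with $C_{1},C_{2}$ and $z_{ac}(n)$ growing like a fixed power (up to the cube) of $\nu_{\tilde K}=O((\log(1/\beta))^{2})$, produce logarithmic factors that assemble—after the squarings $M_{1}^{2}(n)$, $z_{ac}^{2}(n)$ in the definition of $\tilde{c}_4$—to $[\log(1/\beta)]^{12}$; the negative powers of $\beta$ come from the $\pi/\beta$ and $4/\beta^{2}$ terms in $M_{1}(1),M_{1}(2)$ and from the $s^{-5},s^{-3},s^{-1}$ prefactors with $s$ taken proportional to $\beta$, and combine to $\beta^{-5}$. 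This gives~\eqref{eq:B0OrderC4}.

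The main obstacle is exactly this last bookkeeping: conceptually there is nothing deep, but one must keep exact track of a long chain of nested constants ($\tilde{c}_4$ through $M_{1}(n)$ through $C_{n},s,z_{ac}(n)$ through $\|V\|_{1},\nu_{\tilde K},r_{0}$) and verify that no term is worse than $\beta^{-5}[\log(1/\beta)]^{12}$—in particular that the potentially dangerous $K$-localised constants really stay bounded, which is where the assumption that $\alpha_{0}$ is the minimal resonance real part is used. A secondary subtlety is that~\eqref{eq:gamow} is only heuristic, so the bound $\|V\|_{1}=O((\log(1/\beta))^{2})$ must be read as part of what membership in $\CV$ delivers (via Property~\ref{item:NuTildeBetaOrder} and the one-to-one correspondence between resonances and potentials discussed in Section~\ref{sec:DiscussionHyp}) rather than proved here from scratch.
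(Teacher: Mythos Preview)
Your overall strategy---substitute $\sigma=\beta$, reduce everything to the building blocks $r_0$, $s_K$, $s$, $C_{n,K}$, $C_n$, $q$, $z_{ac,K}(n)$, $z_{ac}(n)$, and read off their $\beta\to0$ orders---is exactly what the paper does. However, your execution of the bookkeeping misidentifies where the growth sits, and leans on an estimate that is not part of the hypotheses.

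First, the global $S$-matrix constants do \emph{not} grow. From Theorem~\ref{th:GlobalSBounds} one has e.g.\ $C_2=4\{3+2s^2[r_0/\alpha+(3R_V+r_0)^2+R_Vq]\}$; since $s\le\beta\to0$, $r_0=O(1)$ (Property~\ref{item:RZeroBetaOrder}), $\alpha\asymp 1$ (Property~\ref{item:AlphaConstants}), and $q=O(1)$ (this needs $\|V\|_1$ bounded \emph{below}, which is Lemma~\ref{lem:NormV}, not an upper bound), one gets $C_n=O(1)$ for $n=1,2,3$. The same substitution into Definition~\ref{def:smallz} gives $z_{ac}(n)=O(1)$. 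Your claim that $C_1,C_2,z_{ac}(n)$ grow like powers of $\nu_{\tilde K}$ is therefore wrong.

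Second, $s$ is \emph{not} proportional to $\beta$. From Definition~\ref{def:SKAndKTildeAndS}, under the assumptions of Section~\ref{sec:AssumptionsPotentialET} one has $1/s=\sum_{n<\nu_{\tilde K}}\beta_n^{-1}\le \nu_{\tilde K}/\beta$, so by Property~\ref{item:NuTildeBetaOrder}
\[
\frac{1}{s}=O\!\left(\beta^{-1}(\log\beta)^2\right).
\]
All twelve logarithms in~\eqref{eq:B0OrderC4} come from the factors $1/s^{k}$ (inside $M_1(n)$ via $C_n/s^n$ and as the prefactors $s^{-5},s^{-3},s^{-1}$ in $\tilde c_4$) together with the single explicit $\log(2/\beta)$ in each $M_1(n)$; none come from $C_n$ or $z_{ac}(n)$. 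Concretely, $M_1(n)=O(\beta^{-n}(\log\beta)^{2n+1})$, and then each of the three terms in $\tilde c_4$ is $O(\beta^{-5}(\log\beta)^{12})$.

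Third, your invocation of~\eqref{eq:gamow4} is both unnecessary and illegitimate here: that inequality is explicitly presented in Section~\ref{sec:DiscussionHyp} as heuristic (the underlying Gamow formula~\eqref{eq:gamow} is stated to be unproven), and it is \emph{not} among the defining properties of $\CV$. The proof never needs an upper bound on $\|V\|_1$; what it needs is $q=O(1)$, for which $\|V\|_1$ bounded away from zero suffices, and that is supplied by Lemma~\ref{lem:NormV}. Your argument for the $K$-localised constants is fine in substance (it amounts to $s_K=1$, Eq.~\eqref{eq:SkIsOne}), but the global part should be redone with the correct allocation above.
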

\begin{proof}
The quantities $\tilde{c}_3$ and $\tilde{c}_4$ depend on $z_{ac}(n)$, $z_{ac,K}(n)$, $M_{K,\infty}(n)$, and $M_1(n)$, which in turn are combinations of $r_0$, $s_K$, $s$, $C_{n,K}$, $C_n$, and $q= \frac{1}{2\|V\|_1}+6R_V$ (see Definitions~\ref{def:SKAndKTildeAndS} and~\ref{def:smallz}, and the definitions given in Theorems~\ref{th:SBoundsK} and~\ref{th:GlobalSBounds}), so we first determine how the latter quantities behave as $\beta\to0$. 
Wherever we use the order-notation in this proof we always refer to the limit $\beta\to0$.

First, $s_K=1$ because of Eq.~\eqref{eq:SkIsOne}, and $r_0=O(1)$ because of Property~\ref{item:RZeroBetaOrder} of Definition~\ref{def:CV}.
Moreover, Lemma~\ref{lem:NormV} implies that $1/\|V\|_1$ is bounded from above, therefore $q=O(1)$.
Under the assumptions on the potential stated in Section~\ref{sec:AssumptionsPotentialET}, Definition~\ref{def:SKAndKTildeAndS} for $s$ becomes
\begin{equation}
  \frac{1}{s}
  = \sum_{n=0}^{\nu_{\tilde K}-1}   \frac {1}  {\beta_n} ,
\end{equation}
hence
\begin{align}
  s&\leq\beta ,
\\
  \frac{1}{s}
  & \leq \frac {\nu_{\tilde K}}  {\beta} 
  =  O\left( \beta^{-1} \left(\log{\beta}\right)^2 \right)  ,
\end{align}
having used Property~\ref{item:NuTildeBetaOrder} of Definition~\ref{def:CV}.
Using these results in the definition of the constants $C_{n,K}$ given in Theorem~\ref{th:SBoundsK} and of the constants $C_{n}$ given in Theorem~\ref{th:GlobalSBounds}  we get
\begin{equation}
C_{n,K}  = O(1),
\qquad
C_{n}  = O(1),
\qquad n=1,2,3.
\end{equation}
Similarly, from  Definition~\ref{def:smallz} we get
\begin{equation}
z_{ac,K}(n)  = O(1),
\qquad
z_{ac}(n)  = O(1),
\qquad n=1,2,3.
\end{equation}

We now turn to the constants $M_{K,\infty}(n)$ and $M_1(n)$ given in Definition~\ref{def:CorollaryConstants}.
Recalling that $\sigma=\beta$, the only quantity that needs to be determined is
\begin{align}
  E_{\beta,\sigma/\sqrt 2}=E_{\beta,\beta/\sqrt 2}=\sqrt \pi e^{\beta^4/2}\left[1+\erf\left(\frac{\beta^2}{\sqrt 2}\right)\right]=O(1)  .
 \label{eq:OrderEBB}
\end{align}
Then, we have
\begin{equation}
M_{K,\infty}(n)  = O(1),
\qquad
M_1(n) = O\left( \beta^{-n}  (\log\beta)^{2n+1} \right),
\qquad n=0,1,2.
\end{equation}

Substituting these results into the definitions of $\tilde c_3$ and $\tilde c_4$ given in Definition~\ref{def:CorollaryConstants}, we  get the statement of the Lemma.
\QED
\end{proof}

\begin{lemma}\label{lem:BetaOrderPAndRelatives}
Let $\sigma=\beta$,  the assumptions of Corollary~\ref{cor:main} and Hypothesis~\ref{hyp:BetaLimit} be satisfied, then for the wave function $\psi$ we have 
\begin{align}
\Var E &= O\left( \beta^{-2}\right),
\label{eq:B0OrderEnergyVariance}
\\
\Varo T &= O\left( \beta^{-2}\right),
\label{eq:B0OrderVaroT}
\\
\epsilon_T  &=   O\left(\beta^{-2+2/17}(\log\beta)^{12}\right)   ,
\qquad
\text{as } \beta\to0.
\label{eq:B0OrderErrorTimeVariance}
\end{align}
\end{lemma}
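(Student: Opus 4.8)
The first two estimates follow by inspection of explicit formulas. For $\Var E$ I would start from Lemma~\ref{lem:e}, set $\sigma=\beta$, and note that $E_{\beta,\beta}=\sqrt\pi\,e^{\beta^4}(1+\erf(\beta^2))$ is bounded above and below by positive constants as $\beta\to0$ (the same elementary estimate as in Eq.~\eqref{eq:OrderEBB}), while $\alpha$ stays in a compact set away from $0$ by Property~\ref{item:AlphaConstants} of Definition~\ref{def:CV}. With $\sigma=\beta$ the denominator $(1+\beta\sigma E_{\beta,\sigma})^2=(1+\beta^2E_{\beta,\beta})^2$ tends to $1$, and in the numerator the term $\tfrac{\beta}{2\sigma}\cdot\tfrac{3}{2\sigma^2}E_{\beta,\sigma}=\tfrac{3}{4\beta^2}E_{\beta,\beta}$ dominates every other summand; hence $\Var E=O(\beta^{-2})$. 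For $\Varo T$ there is nothing to prove beyond Lemma~\ref{lem:t}: $\Varo T=1/\gamma^2=1/(16\alpha^2\beta^2)$ together with $\alpha=\Theta(1)$ gives $\Varo T=O(\beta^{-2})$.

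The substance is the third estimate. By Lemma~\ref{lem:ErrorMeanVariance}, $\epsilon_T=2\zeta+\omega^2+\tfrac{2}{\gamma}\omega$ with $\omega=\omega_{(0,A)}+\omega_{[A,\infty)}$ and $\zeta=\zeta_{(0,A)}+\zeta_{[A,\infty)}$, and the splitting radius $A>0$ is still at our disposal. The plan is to read off the $\beta$-order of each summand of $\omega$ and $\zeta$ as a function of $A$, then to choose $A$ as a power of $\beta$ that equalises the two worst contributions. Inserting $\sigma=\beta$, using that $E_{\beta,\beta}$ and $E_{\beta,\beta/\sqrt2}$ are $O(1)$ and $\alpha=\Theta(1)$, and feeding in $\tilde c_3=O(1)$ and $\tilde c_4=O(\beta^{-5}(\log(1/\beta))^{12})$ from Lemma~\ref{lem:C3C4BetaOrder}, Eqs.~\eqref{eq:BoundIntegralZeroA}--\eqref{eq:BoundIntegralAInftyVariance} reduce, up to constants and logarithmic factors, to
\begin{align}
\omega&=O(\beta^{1/2}A^{5/4}+\beta^{1/4}A+\beta A^{-2}+\beta^{-4}A^{-3}(\log\beta)^{12}+\gamma^{-1}e^{-\gamma A}),\nonumber\\
\zeta&=O(\beta^{1/2}A^{9/4}+\beta^{1/4}A^{2}+\beta A^{-1}+\beta^{-4}A^{-2}(\log\beta)^{12}+\gamma^{-2}(1+\gamma A)e^{-\gamma A}).\nonumber
\end{align}

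Next I would take $A=\beta^{-18/17}$, the value at which the exponents of the ``small-time'' term $\beta^{1/2}A^{9/4}$ and the ``large-time'' term $\beta^{-4}A^{-2}(\log\beta)^{12}$ of $\zeta$ coincide, both equalling $-32/17=-2+2/17$ (writing $A=\beta^{-p}$, this is $p=18/17$). Since $\gamma=4\alpha\beta$ and $18/17>1$, we get $\gamma A=\Theta(\beta^{-1/17})\to\infty$, so every exponential term decays faster than any power of $\beta$ and is negligible. A short check of the remaining monomials then gives $\zeta=O(\beta^{-2+2/17}(\log\beta)^{12})$ and $\omega=O(\beta^{-14/17}(\log\beta)^{12})$ (the binding terms of $\omega$ being $\beta^{1/2}A^{5/4}$ and $\beta^{-4}A^{-3}(\log\beta)^{12}$). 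Consequently $\omega^2=O(\beta^{-28/17}(\log\beta)^{24})$ and $\tfrac{2}{\gamma}\omega=O(\beta^{-31/17}(\log\beta)^{12})$ are both $o(\beta^{-2+2/17}(\log\beta)^{12})$, since $-28/17>-32/17$, $-31/17>-32/17$, and the polynomial gap beats any logarithmic factor. Adding the three pieces yields $\epsilon_T=O(\beta^{-2+2/17}(\log\beta)^{12})$.

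The only delicate part is the verification in the last step: one must genuinely enumerate every monomial hidden in $\omega_{(0,A)},\omega_{[A,\infty)},\zeta_{(0,A)},\zeta_{[A,\infty)}$ — including the cross terms produced on squaring $\omega$ — and confirm that $A=\beta^{-18/17}$ is the exponent minimising the worst of them, i.e.\ that any other power of $\beta$ makes either $\beta^{1/2}A^{9/4}$ or $\beta^{-4}A^{-2}(\log\beta)^{12}$ exceed $\beta^{-2+2/17}$. The rest is routine bookkeeping, and the exponent $12$ of the logarithm is inherited unchanged from $\tilde c_4$ through Lemma~\ref{lem:C3C4BetaOrder}.
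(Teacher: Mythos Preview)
Your proof is correct and follows essentially the same route as the paper: inspect the explicit formulas for $\Var E$ and $\Varo T$, then insert the $\beta$-orders of $\tilde c_3$, $\tilde c_4$ from Lemma~\ref{lem:C3C4BetaOrder} into the expressions of Lemma~\ref{lem:ErrorMeanVariance}, and balance competing monomials by choosing $A=\beta^{-18/17}$. The only cosmetic difference is that the paper motivates this value of $A$ by equalising the dominant terms of $\omega$ (solving $-1+3m=-3/4-5/4\,m$ with $A=\beta^{-1-m}$) and then remarks that the same $A$ is optimal for $\zeta$, whereas you balance the dominant terms of $\zeta$ directly; either way one lands on $A=\beta^{-18/17}$ and the same orders $\omega=O(\beta^{-14/17}(\log\beta)^{12})$, $\zeta=O(\beta^{-32/17}(\log\beta)^{12})$. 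Your explicit treatment of the exponential remainders $\gamma^{-1}e^{-\gamma A}$ and $\gamma^{-2}(1+\gamma A)e^{-\gamma A}$ is a small improvement over the paper, which silently drops them.
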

\begin{proof}
From Eq.~\eqref{eq:Evariance} for $\Var E$, using  Eq.~\eqref{eq:OrderEBB}, we get immediately
\begin{equation}
\Var E  =  O\left( \beta^{-2}\right),
\quad\text{as } \beta\to0.
\end{equation}
Recalling that $\Varo T = 1/ \gamma^2 = 1/(4 \alpha\beta)^2$, we have also that
\begin{equation}
\Varo T  =  O\left( \beta^{-2}\right),
\quad\text{as } \beta\to0.
\end{equation}

\begin{figure}
\hfill%
\subfloat[\label{subfig:Beta0ErrorMean}]{\includegraphics [width=.45\textwidth]{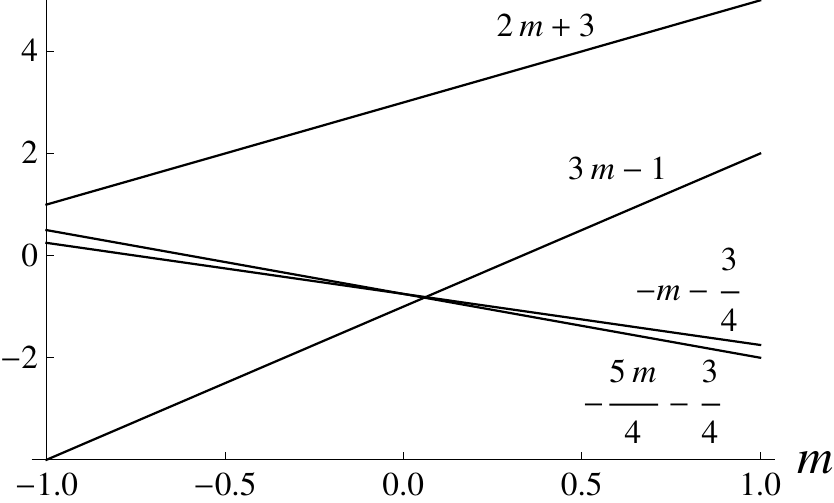}}%
\hfill%
\subfloat[\label{subfig:Beta0ErrorsMeanCloseUp}]{\includegraphics [width=.45\textwidth]{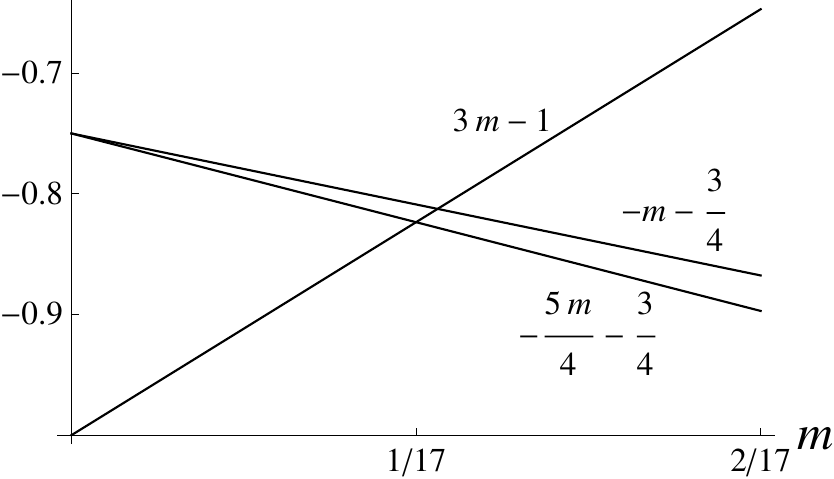}}%
\hfill\mbox{}
\caption[]{
\subref {subfig:Beta0ErrorMean} 
	Plot of the $\beta$-order of $\omega$ given by Eq.~\eqref{eq:DeltaBeta0Orders} as a function of $m$; 
\subref{subfig:Beta0ErrorsMeanCloseUp} Close up view of the optimal region.}
\label{fig:Beta0ErrorMean}
\end{figure}

We now turn to the $\beta$-order of the error $\epsilon_T$.
Substituting the formulas~\eqref{eq:B0OrderC3} and~\eqref{eq:B0OrderC4} for the $\beta$-order of the constants $\tilde{c}_3$ and $\tilde{c}_4$ into Lemma~\ref{lem:ErrorMeanVariance} we get
\begin{align}
\omega_{(0,A)}  &= O\left(\beta^{1/2} A^{5/4} \right)  + O\left(\beta^{1/4} A \right) ,
\\
\omega_{[A,\infty)}  
	&=  O\left(\beta \, A^{-2}\right) + O\left(\beta^{-4}(\log\beta)^{12} A^{-3} \right),
\\
\zeta_{(0,A)} &= O\left(\beta^{1/2}   A^{9/4} \right) + O\left( \beta^{1/4}  A^2 \right) ,
\\
\zeta_{[A,\infty)} 
	&=  O\left(\beta \, A^{-1} \right) + O\left(\beta^{-4}(\log\beta)^{12} A^{-2} \right) ,
\qquad\text{as }\beta\to 0;
\end{align}
therefore,
\begin{align}
\omega 
	&= O\left(\beta^{-4}(\log\beta)^{12} A^{-3} \right) + O\left(\beta \, A^{-2} \right)
	\nonumber\\
	&\qquad	+ O\left( \beta^{1/4}  A \right)   + O\left( \beta^{1/2}   A^{5/4} \right)  ,
\\
\zeta 
	&=  O\left(\beta^{-4}(\log\beta)^{12} A^{-2} \right) +  O\left(\beta \, A^{-1} \right)
	\nonumber\\
	&\qquad	+O\left(  \beta^{1/4}  A^2 \right)  +  O\left( \beta^{1/2}   A^{9/4} \right) ,
\\
&\qquad\text{as }\beta\to 0.
\end{align}
For every $A>0$ we can write $A = \beta^{-1-m}$ with $m\in\R$, hence
\begin{align}
\omega 
	&= O\left(\beta^{-1+3m}(\log\beta)^{12}\right)  
		+ O\left(\beta^{3+2m}\right) 
	\nonumber\\
	&\qquad	+ O\left( \beta^{-3/4-m} \right)    
		+ O\left( \beta^{-3/4-5/4m}  \right)    ,
	\label{eq:DeltaBeta0Orders}
\\
\zeta 
	&=  O\left(\beta^{-2+2m}(\log\beta)^{12}\right) 
		+  O\left(\beta^{2+m}\right) 
	\nonumber\\
	&\qquad	+O\left(  \beta^{-7/4-2m}  \right)  
		+  O\left( \beta^{-7/4-9/4m}  \right)   ,
\\
&\qquad\text{as }\beta\to 0.
\end{align}
Plotting the exponents of every term as functions of $m$ (Fig.~\ref{fig:Beta0ErrorMean}), it is easy to see that the choice that minimizes $\omega$ is such that
\begin{equation}
-1+3m = -\frac{3}{4} - \frac{5}{4} m,
\end{equation}
i.e.\ $m=1/17$.
Then, the value for the parameter $A$ that minimizes in terms of $\beta$-orders the error $\omega$ on the mean  is 
\begin{equation}\label{eq:OptimalA}
A = \beta^{-18/17}.
\end{equation}
In the same way one sees that this value minimizes $\zeta$ too.
Substituting we get
\begin{align}
\omega 
	&= O\left(\beta^{-14/17}(\log\beta)^{12}\right)    ,
\\
\zeta 
	&=  O\left(\beta^{-32/17}(\log\beta)^{12}\right) ,
\qquad\text{as }\beta\to 0,
\end{align}
and recalling that $\gamma=4\alpha\beta$,
\begin{equation}
\epsilon_T 
=  2\zeta  + \omega^2  +  \frac{2}{\gamma}  \omega
=   O\left(\beta^{-32/17}(\log\beta)^{12}\right)   ,	
\qquad\text{as }\beta\to 0.
\end{equation}
\QED
\end{proof}

We are now ready to prove Statement~\ref{statement:ErrorSomewhenSmall} in Theorem~\ref{thm:MainUncertainty}.

\begin{proof}[of Statement~\ref{statement:ErrorSomewhenSmall} in Theorem~\ref{thm:MainUncertainty}]
From Lemma~\ref{lem:BetaOrderPAndRelatives} we can calculate the $\beta$-order of the approximate product $\Po$ and of its error $\epsilon_P$, indeed
\begin{align}
\Po 
& = \Var E \,\Varo T 
=  O \left( \beta^{-4}  \right) ,
\label{eq:B0OrderP}
\\
\epsilon_P 
&= \epsilon_T \Var E
=   O\left(\beta^{-4+2/17}(\log\beta)^{12}\right)   ,
\qquad\text{as }\beta\to 0.
\end{align}
Then,
\begin{equation}
\Po - \epsilon_P 
=  O \left( \beta^{-4}  \right)   \left[  1  - O\left(\beta^{2/17}(\log\beta)^{12}\right)  \right] 
=  O \left( \beta^{-4}  \right) ,
\qquad\text{as }\beta\to 0,
\end{equation}
and the statement of the theorem follows immediately.
\QED
\end{proof}

\section{Proof of Theorem~\ref{thm:ETvsLL}}

To prove Theorem~\ref{thm:ETvsLL} we will use Lemma~\ref{lem:BetaOrderPAndRelatives}, but we also need estimates of $\Gamma$ and $\tau$, for which we need pointwise bounds on  $\Prob(T\leq t)$ and $\Pi_E$.

\begin{lemma}\label{lem:Lifetime}
Let $\sigma=\beta$, the assumptions of Corollary~\ref{cor:main} and Hypothesis~\ref{hyp:BetaLimit} be satisfied, then for the lifetime $\tau$ of the wave function $\psi$ we have
\begin{equation}
\tau  =  
	\frac{1}{\gamma}  
	\left[  1  +  O\left(  \beta^{4/17} \left(\log\beta\right)^{12} \right)  \right],
\qquad\text{as }\beta\to0.
\end{equation}
\end{lemma}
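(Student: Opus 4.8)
\textbf{Proof proposal for Lemma~\ref{lem:Lifetime}.}

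The plan is to compare the true arrival-time cumulative distribution function $\Prob(T\leq t)$ with the approximate one coming from the exponential decay law, and to show that the defining equation $\Prob(T\leq\tau)=1-1/e$ is solved by $\tau=1/\gamma$ up to an error controlled by the $\beta$-order of $\omega$. First I would record that, by Lemma~\ref{lem:SimpleVariance} (Eq.~\eqref{eq:TDistribution}), the non-escape probability $\|\1_R e^{-iHt}\psi\|_2^2/\|\1_R\psi\|_2^2$ equals $1-\Prob(T\leq t)$, and that by Lemma~\ref{lem:t} (Eq.~\eqref{eq:SkibstedFactor}) the approximate non-escape probability is exactly $e^{-\gamma t}$. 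Hence the approximate cumulative distribution is $1-e^{-\gamma t}$, which equals $1-1/e$ precisely at $t=1/\gamma$. So the content of the lemma is that the true $\tau$ differs from $1/\gamma$ by a relative amount $O(\beta^{4/17}(\log\beta)^{12})$.

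The key estimate is the pointwise bound of Lemma~\ref{lem:SurvivalProbError}, together with the choice $A=\beta^{-18/17}$ from Lemma~\ref{lem:BetaOrderPAndRelatives}, which gave $\omega=O(\beta^{-14/17}(\log\beta)^{12})$ for $\sigma=\beta$. I would integrate Eq.~\eqref{eq:BoundProbDifference} to get a \emph{uniform-in-$t$} bound on $\bigl|\Prob(T\leq t)-(1-e^{-\gamma t})\bigr|$: on $(0,A)$ the integrand $\xi_{(0,A)}$ integrates to something of order $\omega_{(0,A)}$, and on $[A,\infty)$ the tail $\xi_{[A,\infty)}$ integrates to something of order $\omega_{[A,\infty)}$, so the total sup-norm difference between the two CDFs is bounded by $\omega=O(\beta^{-14/17}(\log\beta)^{12})$. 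Actually one can do slightly better by a direct pointwise argument, but the integrated bound already suffices. Call this bound $D_\beta$.

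Now I would solve the implicit equation. Write $g(t):=\Prob(T\leq t)$; it is continuous (by hypothesis) and satisfies $|g(t)-(1-e^{-\gamma t})|\le D_\beta$ for all $t\ge0$. Evaluating at $t=(1/\gamma)(1\pm\delta)$ gives $1-e^{-\gamma t}=1-e^{-(1\pm\delta)}$, and since $e^{-1}-e^{-(1-\delta)}\approx -e^{-1}\delta$ and $e^{-1}-e^{-(1+\delta)}\approx e^{-1}\delta$ to leading order, choosing $\delta$ a fixed multiple of $D_\beta$ (say $\delta = 3eD_\beta$, valid once $D_\beta$ is small, i.e.\ once $\beta$ is small) makes $g$ cross the level $1-1/e$ inside the interval $((1/\gamma)(1-\delta),(1/\gamma)(1+\delta))$ by the intermediate value theorem. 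Hence $\tau=(1/\gamma)(1+O(D_\beta))$. Finally I substitute $D_\beta=O(\beta^{-14/17}(\log\beta)^{12})$ and $1/\gamma=1/(4\alpha\beta)=O(\beta^{-1})$: the \emph{relative} error is $O(\gamma D_\beta)=O(\beta\cdot\beta^{-14/17}(\log\beta)^{12})=O(\beta^{3/17}(\log\beta)^{12})$. To reach the stated exponent $4/17$ I would redo the optimization of the split point $A$ for this particular functional rather than reusing the one optimized for $\omega$: the quantity to minimize here is $\gamma$ times the integrated error, whose $A$-dependence is $\beta\cdot[\beta^{-4}(\log\beta)^{12}A^{-3}+\beta A^{-2}+\beta^{1/4}A+\beta^{1/2}A^{5/4}]$, and balancing $\beta^{-3+3m}$ against $\beta^{1/4-3/4-5m/4}$ (with $A=\beta^{-1-m}$, using $\gamma\sim\beta$ and $\sigma=\beta$) gives a slightly different $m$ and the exponent $4/17$; I expect the bookkeeping to mirror exactly Fig.~\ref{fig:Beta0ErrorMean} and Lemma~\ref{lem:BetaOrderPAndRelatives}.

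\textbf{Main obstacle.} The routine part is the IVT argument; the delicate part is getting the exponent right, i.e.\ verifying that the sup-norm distance between the two cumulative distributions carries \emph{the same} $\beta$-orders term by term as $\omega$ did (so that the optimal split point is governed by the same four competing powers of $\beta$), and then correctly propagating the extra factor $\gamma\sim\beta$ from converting absolute error in $\tau$ to relative error. One must also check that the smallness of $D_\beta$ needed for the IVT step is automatic as $\beta\to0$, which follows from Lemma~\ref{lem:BetaOrderPAndRelatives} since the exponent of $\beta$ in $\gamma D_\beta$ is positive.
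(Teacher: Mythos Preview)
Your overall structure (compare the true CDF with $1-e^{-\gamma t}$, then use an IVT/bracketing argument to locate $\tau$) matches the paper. The gap is in how you obtain the uniform bound $D_\beta$ on the CDF difference.

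You write that you would \emph{integrate} Eq.~\eqref{eq:BoundProbDifference} to obtain a uniform-in-$t$ bound on $\bigl|\Prob(T\le t)-(1-e^{-\gamma t})\bigr|$, and that this integral is $\omega$. But the non-escape probability \emph{is} $1-\Prob(T\le t)$, so the left-hand side of Eq.~\eqref{eq:BoundProbDifference} already equals $\bigl|\Prob(T\le t)-(1-e^{-\gamma t})\bigr|$ pointwise. Integrating that bound produces the error on the \emph{mean} (that is precisely how $\omega$ was defined in Lemma~\ref{lem:ErrorMeanVariance}), not a sup-norm bound on the CDF. To get $D_\beta$ you must take the supremum over $t$ of the right-hand side of Eq.~\eqref{eq:BoundProbDifference}; since $\xi_{(0,A)}$ is increasing on $(0,A)$ and $\xi_{[A,\infty)}$ is decreasing on $[A,\infty)$, this gives
\[
D_\beta \le \xi \coloneqq \xi_{(0,A)}(A)+\xi_{[A,\infty)}(A),
\]
which is exactly what the paper uses. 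With $\sigma=\beta$ and $A=\beta^{-18/17}$ one finds $\xi=O\bigl(\beta^{4/17}(\log\beta)^{12}\bigr)$ directly, and the IVT step then yields $\tau=(1/\gamma)\bigl(1+O(\xi)\bigr)$; no re-optimization of $A$ is needed.

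Two further symptoms of the same confusion: (i) your $D_\beta=\omega=O(\beta^{-14/17}(\log\beta)^{12})$ diverges as $\beta\to0$, so the IVT argument cannot even be run with it; (ii) your line ``relative error is $O(\gamma D_\beta)$'' contradicts your own conclusion $\tau=(1/\gamma)(1+O(D_\beta))$, which says the relative error is $O(D_\beta)$. The extra factor of $\gamma$ you inserted masks the divergence but has no justification. Replace the integration step by the supremum step and the argument goes through with the stated exponent.
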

\begin{proof}
At first, notice that from Eq.~\eqref{eq:TDistribution} we have
\begin{equation}
\Prob(T\leq t) 
= \int_0^{t}  \Pi_T(t') d t'
=   1   -   \frac {\|\1_{R}   e^{-iHt}  \psi\|_2^2}  {\|\1_{R}   \psi\|_2^2}  ,
\end{equation}
therefore the lifetime $\tau$ is such that
\begin{equation}\label{eq:TauEq}
\frac {\|\1_{R}   e^{-iH\tau}  \psi\|_2^2}  {\|\1_{R}   \psi\|_2^2} 
=
\frac{1}{e}.
\end{equation}
Moreover, from Eq.~\eqref{eq:Exp} we have
\begin{align}
  \frac {\|\1_{R}   e^{-ik_0^2t}  f_{R_2(t)}\|_2^2}  {\|\1_{R}   \psi\|_2^2}  
  =  e^{-\gamma t},
\end{align}
and using Lemma~\ref{lem:SurvivalProbError} we get that for any $A>0$ 
\begin{equation}
\left|   \frac {\|\1_{R}   e^{-iHt}  \psi\|_2^2}  {\|\1_{R}   \psi\|_2^2}   
		 -  e^{-\gamma t}  \right|
\leq   \xi_{(0,A)}(t)  \1_{(0,A)}  +  \xi_{[A,\infty)} (t)  \1_{[A,\infty)}  .
\end{equation}
We use the fact that
\begin{alignat*}{2}
\xi_{(0,A)} (t)  &\leq  \xi_{(0,A)} (A) ,  \qquad&&\text{for }t\in(0,A),  
\\
\xi_{[A,\infty)} (t)  &\leq  \xi_{[A,\infty)} (A) ,  \qquad&&\text{for }t\in{[A,\infty)},
\end{alignat*}
and define
\begin{equation}
\xi \coloneqq  \xi_{(0,A)}(A)   +  \xi_{[A,\infty)} (A) ,
\end{equation}
getting
\begin{equation}
\left|   \frac {\|\1_{R}   e^{-iHt}  \psi\|_2^2}  {\|\1_{R}   \psi\|_2^2}   
		 -  e^{-\gamma t}  \right|
\leq  \xi .
\label{eq:BoundPointwiseTimeError}
\end{equation}
It is convenient to consider the equation
\begin{equation}
e^{-\gamma t} =  \frac{1}{e}  ,
\end{equation}
whose solution is $1/\gamma$.
Then, the lifetime $\tau$ can not be greater than the solution of the equation
\begin{equation}
e^{-\gamma t}  +  \xi  =  \frac{1}{e}  ,
\end{equation}
nor less than the solution of the equation
\begin{equation}
e^{-\gamma t}  -  \xi  =  \frac{1}{e}  ,
\end{equation}
which are
\begin{align}
&\frac{1}{\gamma}  \left(   1 +  \log  \frac{1}{1-e  \xi}   \right),
\\
&\frac{1}{\gamma}  \left[   1 -  \log  \left(  1+  e  \xi \right)  \right]
\end{align}
respectively.
Using the bounds 
\begin{align}
\log  \frac{1}{1-x}  
&=  \int_0^x   \frac{1}{1-x'}  d x'
\leq 2x,
\qquad\text{for } 0  <  x  \leq \frac 1 2,
\\
\log  (1+x) 
& \leq x  \leq  2x,  
\qquad\text{for } x>0,
\end{align}
we get
\begin{equation}\label{eq:BoundTau}
1 -   2  e \xi  
\leq \gamma \tau \leq
1 +   2  e \xi   ,
\end{equation}
that implies
\begin{equation}
\tau  =  
	\frac{1}{4\alpha\beta}  
	\bigl(  1  +  O\left( \xi \right)  \bigr),
\qquad\text{as }\beta\to0.
\end{equation}

To determine the behavior of $\xi$ as $\beta$ goes to zero we substitute in Lemma~\ref{lem:SurvivalProbError} the formulas~\eqref{eq:B0OrderC3} and~\eqref{eq:B0OrderC4} for the $\beta$-order of the constants $\tilde{c}_3$ and $\tilde{c}_4$ and set $\sigma=\beta$, getting
\begin{align}
\xi_{(0,A)} (t)  &=  O\left( \beta^{1/2}\right)   t^{1/4}  + O\left(\beta^{1/4}  \right),  
\\
\xi_{[A,\infty)} (t)
&=  O(\beta)\,  t^{-3} + O\left(\beta^{-4}(\log\beta)^{12}\right)  t^{-4}  ,
\qquad\text{as }\beta\to 0.
\end{align}
As suggested by Eq.~\eqref{eq:OptimalA}, we set 
\begin{equation}
A = \beta^{-18/17}  ,
\end{equation}
therefore  
\begin{equation}
\xi  =   O\left(  \beta^{4 /17}   \left(\log\beta\right)^{12}\right)  ,
\qquad\text{as }\beta\to 0.
\end{equation}
Recalling Eq.~\eqref{eq:BoundPointwiseTimeError}, we see that the arrival time cumulative distribution function pointwise converges to $1-e^{-\gamma t} $. 
Moreover, using Eq.~\eqref{eq:BoundTau}  we get the proposition.
\QED
\end{proof}

\begin{lemma}\label{lem:Linewidth}
Let $\sigma=\beta$ and let $\Gamma$ denote the linewidth of the wave function $\psi$. Then, 
   \begin{align}\label{eq:Linewidth}
	\Gamma=\gamma+O(\beta^2),
	\qquad\text{as }\beta\to0.
  \end{align}
\end{lemma}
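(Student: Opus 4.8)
The line\-width $\Gamma$ is the full width at half maximum of the energy density $\Pi_E$ of the decay product, so the first task is to obtain an explicit enough expression for $\Pi_E$ on the wave function $\psi=f_R+g_R$. Since $\psi$ is in the domain of $H$ and the relevant spectral measure is absolutely continuous (no bound states, no zero resonance, by the assumptions of Section~\ref{sec:AssumptionsPotentialET}), the energy density is $\Pi_E(E)\propto |\hat\psi(k)|^2\,\frac{dk}{dE}$ with $E=k^2$, so $\Pi_E(E)\propto |\hat\psi(\sqrt E)|^2/(2\sqrt E)$. The generalized Fourier transform of $f_R$ is given in closed form in Eq.~\eqref{eq:SkibstedDistribution}, namely $\hat f_R(k)=-\tfrac12\bigl[\tfrac{e^{i(k_0-k)R}}{k-k_0}\bar S(k)+\tfrac{e^{i(k_0+k)R}}{k+k_0}\bigr]$, and since $g_R$ has $L^2$ norm $\|g_R\|_2^2=\tfrac\sigma2 e^{2\beta R}E_{\beta,\sigma}$ which, with $\sigma=\beta$, is $O(\beta)$, its contribution to $\hat\psi$ is a small perturbation uniformly in $k$ on compact sets. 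So to leading order, near the resonance energy $E_0=\alpha^2-\beta^2$, the density is governed by the pole term $\bigl|\tfrac{\bar S(k)}{k-k_0}\bigr|^2$, i.e.\ by $\tfrac1{|k-k_0|^2}=\tfrac1{(\alpha-k)^2+\beta^2}$ (using $|\bar S(k)|=1$ on the real axis, $S$ being unitary there).

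Next I would pass from the $k$-variable Lorentzian to the $E$-variable line shape. Writing $k=\sqrt E$ and expanding around $E_0$: $(\alpha-\sqrt E)^2+\beta^2$. Near $E=\alpha^2$ one has $\sqrt E\approx \alpha+\tfrac{E-\alpha^2}{2\alpha}$, so $(\alpha-\sqrt E)^2\approx\tfrac{(E-\alpha^2)^2}{4\alpha^2}$, giving $\Pi_E(E)\propto \bigl[\tfrac{(E-E_0)^2}{4\alpha^2}+\beta^2\bigr]^{-1}$ up to the slowly varying factor $1/(2\sqrt E)$ and up to corrections from the second (non-resonant) term in $\hat f_R$, from $g_R$, and from the variation of $\bar S$ and of $1/(2\sqrt E)$ across the peak. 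This is a Breit--Wigner shape with $G=2\alpha\beta$ in the notation of Definition~\ref{def:Linewidth}, hence $\Gamma=2\cdot 2\alpha\beta=4\alpha\beta=\gamma$ to leading order. The error terms must then be shown to be $O(\beta^2)$: each of the neglected contributions (the $1/(k+k_0)$ term, the $g_R$ term, the deviation of $1/(2\sqrt E)$ and $|\bar S(\sqrt E)|$-weighted corrections over an energy window of width $\sim\beta$ around $E_0$) is smooth and of relative size $O(\beta)$ on that window, and since $\gamma=4\alpha\beta$ itself is $O(\beta)$ this yields an absolute correction to $\Gamma$ of order $\beta^2$.

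The main obstacle I anticipate is controlling the half-maximum equation $\Pi_E(E)=M/2$ rigorously rather than just reading off the Lorentzian: one must show that $\Pi_E$ really has a single dominant peak near $E_0$ of height $M=O(1/\beta^2)\cdot(\text{const})$, that no spurious far-away solutions of $\Pi_E(E)=M/2$ exist (so that ``the solutions lying furthest apart'' are exactly the two flanks of this one peak), and that the two flank solutions sit at $E_0\pm\tfrac{\gamma}{2}(1+O(\beta))$. For the first point one uses that other resonances $k_n$ have $\alpha_n,\beta_n$ bounded below away from $k_0$ (Section~\ref{sec:AssumptionsPotentialET}) so their pole contributions are $O(1)$, negligible against the $O(\beta^{-2})$ peak; for the second, one notes $\Pi_E$ is integrable with total mass $1$, so it cannot stay at level $M/2=O(\beta^{-2})$ except on a set of measure $O(\beta^2)$, confining the extreme half-maximum crossings to a neighbourhood of size $O(\beta)$ of $E_0$. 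Within that neighbourhood the explicit Lorentzian approximation plus the $O(\beta)$ relative error estimates give the two crossings, and their separation is $\gamma+O(\beta^2)$. Assembling these pieces yields Eq.~\eqref{eq:Linewidth}.
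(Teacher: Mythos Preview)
Your approach coincides with the paper's: isolate the resonant pole term $\rho(k)=-\tfrac12\,\tfrac{e^{i(k_0-k)R}}{k-k_0}\bar S(k)$ in $\hat\psi$, show that $\Pi_E$ is uniformly close to the Lorentzian $L(E)=|\rho(\sqrt E)|^2/(2\alpha\|\psi\|_2^2)$, and read off the FWHM. Two points need repair.

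First, the peak height is $M=O(1/\beta)$, not $O(1/\beta^2)$: you must divide $|\hat\psi(\alpha)|^2\approx e^{2\beta R}/(4\beta^2)$ by $\|\psi\|_2^2=\tfrac{e^{2\beta R}}{2\beta}(1+O(\beta^2))$, giving $\Pi_E(\alpha^2)\approx 1/\gamma$. Second, and this is the real gap, your integrability argument does not rule out far-away half-maximum crossings: a bound on the \emph{measure} of $\{E:\Pi_E(E)\geq M/2\}$ says nothing about its \emph{diameter}; the set could be a union of tiny intervals far apart, and Definition~\ref{def:Linewidth} takes the outermost crossings. What works instead (and what the paper does) is a pointwise sandwich $|\Pi_E(E)-L(E)|\leq\delta$ with a uniform $\delta=O(1)$ on $E\geq(\alpha/4)^2$, obtained by bounding the non-resonant term $1/(k+k_0)$ and $\hat g_R$. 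Since $m\coloneqq L(\alpha^2)-\delta\leq\max\Pi_E\leq L(\alpha^2)+\delta\eqqcolon M$, every solution of $\Pi_E=\tfrac12\max\Pi_E$ lies in the level set $\{L+\delta\geq m/2\}$ and the linewidth is at least the length of $\{L-\delta\geq M/2\}$; both are computed in closed form and give $\Gamma=4\alpha\beta(1+O(\beta))$. The region $E<(\alpha/4)^2$ needs a separate word because of the $1/(2\sqrt E)$ factor: there $\hat\psi(0)=0$ (no zero resonance) yields $|\hat\psi(\sqrt E)|\leq\|\1_K\dkp{\hat\psi}\|_\infty\sqrt E$, hence $\Pi_E=O(\beta)$, safely below $M/2$.
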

\begin{proof}
  Note that $E=k^2$ implies
  \begin{align}
	\frac{|\hat\psi(k)|^2}{\|\psi\|_2^2}\,dk=\frac{|\hat\psi(E^{1/2})|^2}{\|\psi\|_2^2}\,\frac{1}{2\sqrt E}dE,
  \end{align}
  so that the probability density for energy reads
  \begin{align}
	\Pi_E(E)
	=\frac{|\hat\psi(E^{1/2})|^2}{\|\psi\|_2^2}\,\frac{1}{2\sqrt E}.
  \end{align}
  Let $K=\alpha/4.$ We now look at $\Pi_E$ on $[0,K^2)$ and $[K^2,\infty)$ separately and show that for $\beta$ small enough it attains its maximum on the latter interval. 
  
Corollary~\ref{cor:main} shows that $\hat\psi^{(n)}\in L^\infty_{loc}\cap L^1_w$ for $n=0,1,2$ so that Lemma~\ref{lem:psihat0} applies to it; using the assumption that the Hamiltonian has no zero-resonance we then get  for $E\in [0,K^2)$
  \begin{align}
	\Pi_E(E)=\frac{|\hat\psi(E^{1/2})|^2}{\|\psi\|_2^2}\,\frac{1}{2\sqrt E}\leq\frac{\|\1_K\dkp{\hat\psi}\|_\infty^2}{2\|\psi\|_2^2}  K.
  \end{align}
Plugging $\sigma=\beta$ into Eq.~\eqref{eq:NormInitialWaveFunction} for $\|\psi\|_2^2$ and into the bound on $\|\1_K\dkp{\hat\psi}\|_\infty$ given by Eq.~\eqref{eq:PsiHat1} with $n=1$ and using Property~\ref{item:AlphaConstants} of Definition~\ref{def:CV}, we see that for $E\in [0,K^2)$
\begin{equation}
\label{eq:LemLinewidth0}
	\Pi_E(E)= O(\beta),  
	\qquad\text{as } \beta\to0.
\end{equation}
  
  Let us now look at the probability density for energies in $[K^2,\infty)$. From Eq.~\eqref{eq:SkibstedDistribution} for $\hat f_R$ and setting
  \begin{align}\label{eq:LemLinewidth01}
	\rho(k)=-\frac{1}{2}\frac{e^{i(k_0-k)R}}{k-k_0}\bar S(k)
  \end{align}
we obtain
  \begin{align}
	\hat\psi(k)
	&=\hat f_R(k)+\hat g_R(k)
	=\rho(k) -\frac{1}{2}\frac{e^{i(k_0+k)R}}{k+k_0} +  \hat g_R(k).
  \end{align}
  We will see that $\rho(k)$ gives the main contribution to $\hat\psi(k)$, and therefore to $\Pi_E(E)$.
To this end, consider
  \begin{align}\label{eq:LemLinewidth1}
	&\left|\frac{1}{\sqrt E}|\hat\psi(E^{1/2})|^2-\frac{1}{\alpha}|\rho(E^{1/2})|^2\right|\nonumber\\
	&\quad=\left|k^{-\frac{1}{2}}|\hat\psi(k)|-\alpha^{-\frac{1}{2}}|\rho(k)|\right|\left|k^{-\frac{1}{2}}|\hat\psi(k)|+\alpha^{-\frac{1}{2}}|\rho(k)|\right|.
  \end{align}
  We start bounding the factor with the sum, just by bounding the summands separately.  Since $\hat\psi$ contains $\hat g_R$, we need a bound on it. From Eqs.~\eqref{eq:PsiPlus},~\eqref{eq:phiRBiggerRs} and~\eqref{eq:fboundary} we see that for $r\geq R_V$
  \begin{align}
	f(k_0,r)&=e^{i k_0r},\\
	\bar \psi^+(k,r)&=\frac{1}{2i}(e^{ikr}-S(-k)e^{-ikr})
  \end{align}
  and hence
  \begin{align} 
	|\hat g_R(k)|
	&=\left|\int_R^\infty f(k_0,R)\exp\left(-\frac{(r-R)^2}{2\sigma^2}\right)\bar\psi^+(k,r)\,dr\right|\\
	&\leq\int_R^\infty\exp\left(\beta r-\frac{(r-R)^2}{2\sigma^2}\right)\,dr\\
	&=e^{\beta R}\frac{\sigma}{\sqrt 2}E_{\beta,\sigma/\sqrt 2}.\label{eq:cor1}
  \end{align}
  Using this, the fact that $|S|=1$, and Eq.~\eqref{eq:OrderEBB}  we find that 
  \begin{align}
	k^{-\frac{1}{2}}|\hat\psi(k)|
	&\leq \frac{1}{\sqrt k}\left[\frac{e^{\beta R}}{|k-k_0|}+\frac{\beta e^{\beta R}}{\sqrt{2}}E_{\beta,\beta/\sqrt 2}\right]
	\nonumber\\
	&\leq \frac{e^{\beta R}}{\sqrt K}\left[\frac{1}{\beta}+\frac{\beta}{\sqrt 2}E_{\beta,\beta/\sqrt 2}\right]=O(\beta^{-1}),\\
	|\rho(k)| &\leq\frac{e^{\beta R}}{2\beta}=O(\beta^{-1}),
	\quad\text{as }\beta\to0,
\label{eq:Piece2}
  \end{align}
  
  Let us now estimate the factor with the difference in Eq.~\eqref{eq:LemLinewidth1}. Consider
  \begin{align}
	&\left|k^{-\frac{1}{2}}|\hat\psi(k)|-\alpha^{-\frac{1}{2}}|\rho(k)|\right|\nonumber\\
	&\leq\left|k^{-\frac{1}{2}}\hat\psi(k)-\alpha^{-\frac{1}{2}}\rho(k)\right|\nonumber\\
	&\leq\left|\left(k^{-\frac{1}{2}}-\alpha^{-\frac{1}{2}}\right)\rho(k)+k^{-\frac{1}{2}}\frac{1}{2}\frac{e^{i(k_0+k)R}}{k+k_0}+k^{-\frac{1}{2}}\hat g_R(k)\right|\nonumber\\
	&\leq\left|k^{-\frac{1}{2}}\alpha^{-\frac{1}{2}}\frac{\alpha-k}{k^{\frac{1}{2}}+\alpha^{\frac{1}{2}}}\rho(k)+k^{-\frac{1}{2}}\frac{1}{2}\frac{e^{i(k_0+k)R}}{k+k_0}+k^{-\frac{1}{2}}\hat g_R(k)\right|\nonumber\\
	&\leq \frac{K^{-\frac{1}{2}}\alpha^{-\frac{1}{2}}}{K^{\frac{1}{2}}+\alpha^{\frac{1}{2}}}\frac{e^{\beta R}}{2}\frac{|k-\alpha|}{\sqrt{(k-\alpha)^2+\beta^2}}+\frac{e^{\beta R}}{2\sqrt K(K+\alpha)}+\frac{\beta e^{\beta R}}{\sqrt {2K}}E_{\beta,\beta/\sqrt 2}\nonumber\\
	&\leq \frac{K^{-\frac{1}{2}}\alpha^{-\frac{1}{2}}}{K^{\frac{1}{2}}+\alpha^{\frac{1}{2}}}\frac{e^{\beta R}}{2}+\frac{e^{\beta R}}{2\sqrt K(K+\alpha)}+\frac{\beta e^{\beta R}}{\sqrt {2K}}E_{\beta,\beta/\sqrt 2}
	\nonumber\\
	&=O(1),
	\quad\text{as }\beta\to0.
\label{eq:Piece1}
  \end{align}
  Plugging these inequalities into Eq.~\eqref{eq:LemLinewidth1} and letting
  \begin{align}
	\delta
	&\coloneqq \frac{e^{\beta R}}{2\|\psi\|_2^2}
	\left(\frac{1}{2}\frac{K^{-\frac{1}{2}}\alpha^{-\frac{1}{2}}}{K^{\frac{1}{2}}+\alpha^{\frac{1}{2}}}+\frac{1}{2\sqrt K(K+\alpha)}+\frac{\beta}{\sqrt {2K}}E_{\beta,\beta/\sqrt 2}\right)\nonumber\\
	&\qquad\qquad\times\left(\frac{1}{\sqrt K}\left[\frac{1}{\beta}+\frac{\beta}{\sqrt 2}E_{\beta,\beta/\sqrt 2}\right]+\frac{1}{2\alpha^{\frac{1}{2}}\beta}\right)
  \end{align}
  we obtain that
  \begin{align}\label{eq:LemLinewidth2}
	\left|\Pi_E(E)-\frac{1}{2\alpha}\frac{|\rho(E^{1/2})|^2}{\|\psi\|_2^2}\right|
	\leq\delta.
  \end{align}
  From Eq.~\eqref{eq:NormInitialWaveFunction} for $\|\psi\|_2^2$, we see that 
  \begin{align}\label{eq:LemLinewidth21}
	\|\psi\|_2^{-2}
&={2\beta}{e^{-2\beta R}}\left[1+\sqrt\pi\beta^2 e^{\beta^4}(1+\erf(\beta^2))\right]^{-1}
\\&=O(\beta),
 \quad\text{as }\beta\to0,
  \end{align}
  which together with Eq.~\eqref{eq:Piece1} implies that
  \begin{align}\label{eq:LemLinewidth22}
	\delta=O(1),
	\quad\text{as }\beta\to0.
  \end{align}
Evaluating $\rho$ in $\alpha$ we see that there constants $B,C>0$ such that
 \begin{align}
	\frac{1}{2\alpha}\frac{|\rho(\alpha)|^2}{\|\psi\|_2^2}
	&=\frac{1}{4\alpha\beta\left[1+\sqrt\pi\beta^2 e^{\beta^4}(1+\erf(\beta^2))\right]}
	\\&\geq \frac{C}{\beta},
	\quad\forall \beta<B,
  \end{align}
 which together with Eqs.~\eqref{eq:LemLinewidth2} and~\eqref{eq:LemLinewidth22} shows that
  \begin{align}\label{eq:LemLinewidth3}
	\Pi_E(\alpha^2)\geq \frac{C}{\beta},
	\quad\forall \beta<B.
  \end{align}
  Considering Eq.~\eqref{eq:LemLinewidth0} and the fact that $\alpha^2\in[K^2,\infty)$, we can conclude that the probability density $\Pi_E$ attains its maximum in $[K^2,\infty)$ for $\beta$ small enough.

\begin{figure}
\hfill%
\subfloat[\label{subfig:Linewidth}]{\includegraphics [width=.45\textwidth]{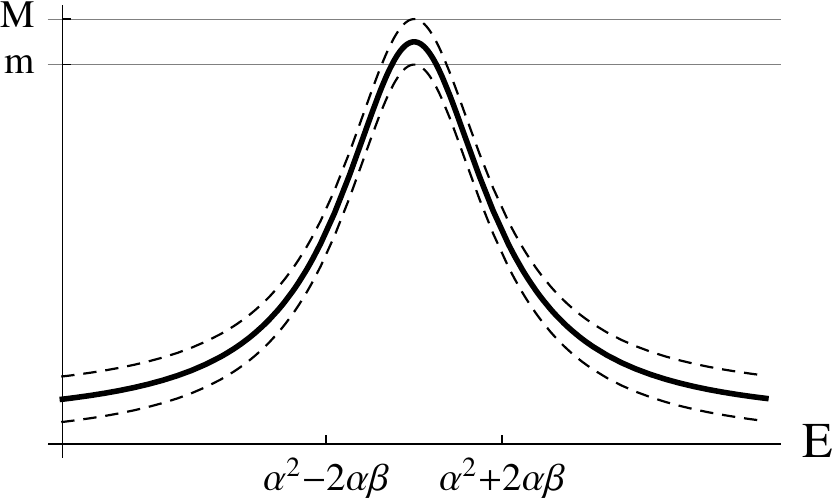}}%
\hfill%
\subfloat[\label{subfig:LinewidthCloseup}]{\includegraphics [width=.45\textwidth]{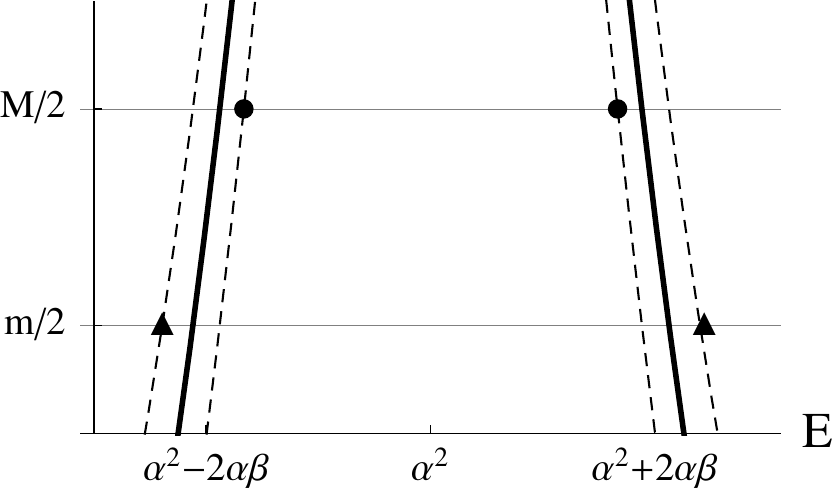}}%
\hfill\mbox{}
\caption[]{
\subref {subfig:Linewidth} 
The thick line is a plot of $|\rho(E^{1/2})|^2/2\alpha\|\psi\|_2^2$ and the dashed lines are a plot of $|\rho(E^{1/2})|^2/2\alpha\|\psi\|_2^2\pm\delta$. 
According to Eq.~\eqref{eq:LemLinewidth2}, the function $\Pi_E$ lies between the dashed lines. 
The constant $M$ is therefore the largest possible maximum of $\Pi_E$ and $m$ is the smallest possible maximum. 
\subref{subfig:LinewidthCloseup} 
A closeup of Fig.~\ref{subfig:Linewidth} is plotted to show that the distance between the two \protect\raisebox{-.1ex}{\large$\bullet$} gives a lower bound on the linewidth $\Gamma$ of $\psi$ and the distance between the two $\blacktriangle$ gives an upper bound.
}
\end{figure}

  We now determine the linewidth of $\Pi_E$. The basic idea is that for small enough $\beta$ the linewidth of $\Pi_E(E)$ is approximately the linewidth of $|\rho(E^{1/2})|^2/2\alpha\|\psi\|_2^2$, because the difference~\eqref{eq:LemLinewidth2} between these two functions is small compared to the maximum of $\Pi_E$, which according to Eq.~\eqref{eq:LemLinewidth3} approximately  $\beta^{-1}$ for $\beta$ small enough.
According to Eq.~\eqref{eq:LemLinewidth2} the function $\Pi_E$ lies between the two functions $|\rho(E^{1/2})|^2/(2\alpha\|\psi\|_2^2)\pm\delta$ (cf.\ Fig.~\ref{subfig:Linewidth}).
These  functions attain their maximum for $E^{1/2}=\alpha$.
Now let
  \begin{align}
	m\coloneqq \frac{|\rho(\alpha)|^2}{2\alpha\|\psi\|_2^2}-\delta=\frac{e^{2\beta R}}{8\alpha\beta^2\|\psi\|_2^2}-\delta,
	\label{eq:LemLinewidth4}
	\\
	M\coloneqq \frac{|\rho(\alpha)|^2}{2\alpha\|\psi\|_2^2}+\delta=\frac{e^{2\beta R}}{8\alpha\beta^2\|\psi\|_2^2}+\delta.
	\label{eq:LemLinewidth5}
  \end{align}
The linewidth of $\Pi_E$ is therefore bounded from above by the distance between the two solutions of (cf.\ Fig.~\ref{subfig:LinewidthCloseup})
  \begin{align}\label{eq:LemLinewidth6}
	\frac{|\rho(E^{1/2})|^2}{2\alpha\|\psi\|_2^2}+\delta=\frac{m}{2}  ,
  \end{align}
  and bounded from below by the distance between the two solutions of
  \begin{align}\label{eq:LemLinewidth7}
	\frac{|\rho(E^{1/2})|^2}{2\alpha\|\psi\|_2^2}-\delta=\frac{M}{2}.
  \end{align}
  First, let us look at Eq.~\eqref{eq:LemLinewidth6}. Using Eq.~\eqref{eq:LemLinewidth01} for $\rho$ it is straightfoward to see that the two solutions of Eq.~\eqref{eq:LemLinewidth6} are
  \begin{align}
	E_U^{\pm}&=\left(\alpha\pm\left[\frac{e^{2\beta R}}{8\alpha\|\psi\|_2^2}\frac{1}{m/2-\delta}-\beta^2\right]^{1/2}\right)^2,
  \end{align}
  so that the upper bound on the linewidth reads
  \begin{align}
	E_U^+-E_U^-
	&=4\alpha\beta\left[\frac{e^{2\beta R}}{8\alpha\beta^2\|\psi\|_2^2}\frac{1}{m/2-\delta}-1\right]^{1/2}\nonumber\\
	&=4\alpha\beta\left[\frac{e^{2\beta R}}{8\alpha\beta^2\|\psi\|_2^2}\left(\frac{e^{2\beta R}}{16\alpha\beta^2\|\psi\|_2^2}-\frac{3}{2}\delta\right)^{-1}-1\right]^{1/2}\nonumber\\
	&=4\alpha\beta\left[\frac{1}{1-24\alpha\beta^2\|\psi\|_2^2e^{-2\beta R}\delta}+\frac{24\alpha\beta^2\|\psi\|_2^2e^{-2\beta R}\delta}{1-24\alpha\beta^2\|\psi\|_2^2e^{-2\beta R}\delta}\right]^{1/2}.
  \end{align}
Similarly to Eq.~\eqref{eq:LemLinewidth21}, we  get $\|\psi\|^2_2=O(\beta^{-1})$ as $\beta\to0$, that together with Eq.~\eqref{eq:LemLinewidth22} for $\delta$ gives
  \begin{align}
	24\alpha\beta^2\|\psi\|_2^2e^{-2\beta R}\delta=O(\beta),
	\quad\text{as }\beta\to0,
  \end{align}
  and hence
  \begin{align}\label{eq:LemLinewidth8}
	\Gamma\leq E_U^+-E_U^-=4\alpha\beta(1+O(\beta)),
	\quad\text{as }\beta\to0.
  \end{align}
  Let us now consider Eq.~\eqref{eq:LemLinewidth7}. In the same way as before we see that its two solutions are
  \begin{align}
	E_L^{\pm}=\left(\alpha\pm\left[\frac{e^{2\beta R}}{8\alpha\|\psi\|_2^2}\frac{1}{M/2+\delta}-\beta^2\right]^{1/2}\right)^2,
  \end{align}
  so that the lower bound on the linewidth satisfies
  \begin{align}\label{eq:LemLinewidth9}
	E_L^+-E_L^-
	&=4\alpha\beta\left[\frac{e^{2\beta R}}{8\alpha\beta^2\|\psi\|_2^2}\frac{1}{M/2+\delta}-1\right]^{1/2}\nonumber\\
	&=4\alpha\beta\left[\frac{1}{1+24\alpha\beta^2\|\psi\|_2^2e^{-2\beta R}\delta}-\frac{24\alpha\beta^2\|\psi\|_2^2e^{-2\beta R}\delta}{1+24\alpha\beta^2\|\psi\|_2^2e^{-2\beta R}\delta}\right]^{1/2}\nonumber\\
	&=4\alpha\beta(1-O(\beta)),
	\qquad\text{as }\beta\to0.
  \end{align}
Collecting Eqs.~\eqref{eq:LemLinewidth8} and~\eqref{eq:LemLinewidth9} we get the assertion of the Lemma.
  \QED
\end{proof}

We can finally prove Theorem~\ref{thm:ETvsLL}.

\begin{proof}[of Theorem~\ref{thm:ETvsLL}]
Lemma~\ref{lem:Lifetime} and Lemma~\ref{lem:Linewidth} together give that
\begin{equation}
\Gamma\tau  =  
 1  +  O\left(  \beta^{4/17} \left(\log\beta\right)^{12} \right)  
 \to1,
\qquad\text{as }\beta\to0,
\end{equation}
while the fact that
\begin{equation}
\Var E\,  \Var T  \geq  \Po - \epsilon_P , 
\end{equation}
togeher with Statement~\ref{statement:ErrorSomewhenSmall} of Theorem~\ref{thm:MainUncertainty}, gives
\begin{equation}
\lim_{\beta\to 0}  \Var E\, \Var T= \infty  .
\end{equation}
\QED
\end{proof}


\section*{Appendix: Proof of Corollary~\ref{cor:main}}
\addcontentsline{toc}{section}{Appendix: Proof of Corollary~\ref{cor:main}} 
By Definition~\ref{def:SKAndKTildeAndS} we immediately have Eq.~\eqref{eq:SkIsOne}.
To prove the estimates on the norms of $\hat\psi^{(n)}$ observe that
  \begin{align}
  	\|\1_K{\hat \psi}^{(n)}\|_\infty\leq\|\1_K{\hat f}_R^{(n)}\|_\infty+\|\1_K{\hat g}_R^{(n)}\|_\infty
  \end{align}
Since $\|\1_K{\hat f}_R^{(n)}\|_\infty$ has been determined in Lemma~\ref{lem:psi}, we are left with calculating $\|\1_K{\hat g}_R^{(n)}\|_\infty$.
We use for $\hat g_R$  the bound given in Eq.~\eqref{eq:cor1}; similarly,
  \begin{align}
	|\dkp{\hat g}_R(k)|
	&=\Biggl|\int_R^\infty \exp\left(ik_0r-\frac{(r-R)^2}{2\sigma^2}\right)
	\nonumber\\
	&\times\frac{1}{2i}\left(ir(e^{ikr}+S(-k)e^{-ikr})+\dkp S(-k)e^{-ikr}\right)\,dr\Biggr|\\
	&\leq\int_R^\infty \exp\left(\beta r-\frac{(r-R)^2}{2\sigma^2}\right)\left(r+\frac{|\dkp S(-k)|}{2}\right)\,dr\\
	&\leq e^{\beta R}\left[\sigma^2+\left(R+\beta\sigma^2+\frac{|\dkp S(-k)|}{2}\right)\frac{\sigma}{\sqrt 2}E_{\beta,\sigma/\sqrt 2}\right]\label{eq:cor2}
  \end{align}
  and
  \begin{align}
	|\ddkp{\hat g}_R(k)|
	&=\bigg|\int_R^\infty \exp\left(ik_0r-\frac{(r-R)^2}{2\sigma^2}\right)\frac{1}{2i}\bigg[-r^2(e^{ikr}-S(-k)e^{-ikr})\nonumber\\
	&\qquad-2ir\dkp S(-k)e^{-ikr}-\ddkp S(-k)e^{-ikr}\bigg]\,dr\bigg|\\
	&\leq e^{\beta R}\bigg[\sigma^2\left(2R+|\dkp S(-k)|+\beta\sigma^2\right)\nonumber\\
	&\qquad+\left(\frac{|\ddkp S(-k)|}{2}+|\dkp S(-k)|(R+\beta\sigma^2)+\sigma^2+(R+\beta\sigma^2)^2\right)\frac{\sigma}{\sqrt 2}E_{\beta,\sigma/\sqrt 2}\bigg].\label{eq:cor3}
  \end{align}
From these inequalities, using the bounds on $\|\1_KS^{(n)}\|_\infty$ from Theorem~\ref{th:SBoundsK}, we obtain
  \begin{align}
	\|\1_K\hat g_R\|_\infty
	&\leq e^{\beta R}\frac{\sigma}{\sqrt 2}E_{\beta,\sigma/\sqrt 2},
	\displaybreak[0]\\
	\|\1_K\dkp{\hat g}_R\|_\infty
	&\leq e^{\beta R}\left[\sigma^2+\left(R+\beta\sigma^2+\frac{C_{1,K}}{2}\right)\frac{\sigma}{\sqrt 2}E_{\beta,\sigma/\sqrt 2}\right],
	\displaybreak[0]\\
	\|\1_K\ddkp{\hat g}_R\|_\infty
	&\leq e^{\beta R}\bigg[\sigma^2\left(2R+C_{1,K}+\beta\sigma^2\right)\nonumber\\
	&\qquad+\left(\frac{C_{2,K}}{2}+C_{1,K}(R+\beta\sigma^2)+\sigma^2+(R+\beta\sigma^2)^2\right)\frac{\sigma}{\sqrt 2}E_{\beta,\sigma/\sqrt 2}\bigg] .
  \end{align}
  These bounds together with the bounds on $\|\1_K{\hat f}_R^{(n)}\|_\infty$ given in Lemma~\ref{lem:psi}  imply Eq.~\eqref{eq:PsiHat1}.

We will now prove Eq.~\eqref{eq:PsiHat2}.
Note that
\begin{align}
	\|{\hat \psi}^{(n)}w\|_1
	&\leq\|{\hat f}_R^{(n)}w\|_1+\|{\hat g}_R^{(n)}\|_\infty\|w\|_1\\
	&=\|{\hat f}_R^{(n)}w\|_1+\|{\hat g}_R^{(n)}\|_\infty\int_0^\infty\frac{1}{1+r^2}\,dr\\
	&=\|{\hat f}_R^{(n)}w\|_1+\frac{\pi}{2}\|{\hat g}_R^{(n)}\|_\infty.
  \end{align}
From the inequalities~\eqref{eq:cor1}-\eqref{eq:cor3}, using the bounds on $\|S^{(n)}\|_\infty$ from Theorem~\ref{th:GlobalSBounds}, we obtain
\begin{align}
	\|\hat g_Rw\|_1
	&\leq\frac{\pi}{2}\|\hat g_R\|_\infty
	\leq e^{\beta R}\frac{\pi\sigma}{2^{3/2}}E_{\beta,\sigma/\sqrt 2},\\
	\|\dkp{\hat g}_Rw\|_1
	&\leq\frac{\pi}{2}\|\dkp{\hat g}_R\|_\infty
	\\&\leq e^{\beta R}\left[\frac{\pi\sigma^2}{2}+\left(R+\beta\sigma^2+\frac{C_{1}}{2s}\right)\frac{\pi\sigma}{2^{3/2}}E_{\beta,\sigma/\sqrt 2}\right],\\
	\|\ddkp{\hat g}_Rw\|_1
	&\leq\frac{\pi}{2}\|\ddkp{\hat g}_R\|_\infty\nonumber\\
	&\leq e^{\beta R}\bigg[\frac{\pi\sigma^2}{2}\left(2R+\frac{C_1}{s}+\beta\sigma^2\right)\nonumber\\
	&\qquad+\left(\frac{C_{2}}{2s^2}+\frac{C_1}{s}(R+\beta\sigma^2)+\sigma^2+(R+\beta\sigma^2)^2\right)\frac{\pi\sigma}{2^{3/2}}E_{\beta,\sigma/\sqrt 2}\bigg].
  \end{align}  
  These bounds together with the bounds on  $\|{\hat f}_R^{(n)}w\|_1$ given in Lemma~\ref{lem:psi}  imply Eq.~\eqref{eq:PsiHat2}.

  Equation~\eqref{eq:CorollaryBound} is then an immediate consequence of Theorem~\ref{thm:main_ac}.
\qed

\chapter{On Quantitative Scattering Estimates}
\label{ch:Scattering}
\blfootnote{The results presented in this chapter are the product of a teamwork with Robert Grummt \citep{GrummtVona2014a}.}
\section{Introduction}

The analysis of the previous chapter rests upon the ability to control, besides the exponential decay regime \citep[see][for a rigorous proof]{Skibsted86}, also the polynomial decay regime that takes over at late times~\citetext{\citealp{SimonRes,Peres}; see \citealp{Rothe} for the recent experimental observation of this change of regime}.
The needed quantitative estimates have been given in Corollary~\ref{cor:main}.
The present chapter is devoted to proving Theorem~\ref{thm:main_ac}, of which Corollary~\ref{cor:main} is a specialization.
As a middle step,  bounds on the derivatives of the $S$-matrix in the form
\begin{equation}
\|\1_K S^{(n)}\|_\infty  \leq  C_{n,K}  ,
\end{equation}
will be also proven, with $n=1,2,3$,  and the constants $C_{n,K}$ explicitly known.
Note that the following analysis is not restricted to the model of alpha decay considered in Chapter~\ref{ch:ET}, but is of general validity.


A quantum mechanical particle with wave function $\Psi$ scattering off a rotationally symmetric, compactly supported potential $V$ in three dimensions is described by the Schrödinger equation
\begin{align}\label{eq:Schroedinger3D}
	i\partial_t\Psi=H\Psi=(-\Delta+V)\Psi,
\end{align}
where $H$ is the Hamiltonian, with domain $\mathcal D(H)$.
A common way to study the scattering behavior of this equation is via dispersive estimates \cite{JensenKato1979,JSS,Rauch1978,Schlag}.
If $P_{ac}$ denotes the projector on the absolutely continuous spectral subspace of the Hamiltonian $H$, $R>0$ and $\1_R\coloneqq\1_{[0,R]}$, then it is well known that these dispersive estimates can be brought in the form%
\footnote{Note that this holds only if $H$ does not have a zero resonance (see Definition \ref{def:ZeroResVirtualStates}), while if it has it then $t^{-3}$ must be replaced by $t^{-1}$.}
\begin{align}\label{eq:main_heur}
  \|\1_R e^{-iHt}P_{ac}\Psi\|_2^2\leq C t^{-3}  ,
\end{align}
but little is known quantitatively about the constant $C$. 
The  main result of this chapter (Theorems~\ref{thm:main_ac} and~\ref{thm:main_e}) are quantitative bounds on the constant $C$, depending on the initial wave function $\Psi,$ the potential $V$ and spectral properties of $H$.

To achieve this we use the well known method of stationary phase applied to the expansion of $e^{-iHt}P_{ac}\Psi$ in generalized eigenfunctions, in combination with a detailed analysis of the $S$-matrix in the complex momentum plane.
Such an analysis is of interest in its own right, and our main result in this regard are Theorems~\ref{th:SBoundsK} and~\ref{th:GlobalSBounds}, which provide quantitive bounds on the $S$-matrix and its derivatives.
To obtain the needed detailed knowledge about the analytic properties of the $S$-matrix, we restrict to  rotationally symmetric, compactly supported potentials. This allows us to employ the scattering theory of Res Jost (see~\cite[Chapter~12]{Newton1966} for a textbook exposition), which in particular expresses the $S$-matrix in terms of one analytic function, the so-called Jost function. 
Expressing the Jost function in terms of its zeros via the Hadamard factorization, and using the fact that the zeros of the Jost function coincide with bound states, virtual states and resonances of $H$ (see Section~\ref{sec:PhysMeaningZeros} for a detailed discussion), we are then able to relate our scattering bounds explicitly to the spectral properties of $H.$ 

A discussion of analytic properties of the $S$-matrix and in particular of the Hadamard factorization of the Jost function is also found in~\cite{Regge,Newton1966}. Regge's paper~\cite{Regge} contains most ideas needed for arriving at the Hadamard factorization, but they are not worked out rigorously. Newton, on the other hand, gives more details in~\cite[Chapter~12]{Newton1966} yet he does not provide a full-fledged proof either. His discussion is our starting point. We work out all details needed for proving the Hadamard factorization of the Jost function in full rigor. 
In particular, we show that although the genus of the zeros of the Jost function is one, it is possible to write its Hadamard factorization as if the genus was zero.
The convergence of the genus-zero factorization of the Jost function is not granted by the general theory of entire functions, and the justification for using it is missing in Regge and Newton.
This was recognized by Boas~\cite{Boas}.
Moreover, we also explicitly show some well known properties of the Jost function of which we have not been able to find proofs, as for instance the fact that it is an analytic function of exponential type.

As a side result to our study of the Jost function, we also obtain an explicit quantitative bound on the number $n(r)$ of zeros of the Jost function within a ball of radius $r$ (Lemma \ref{lem:BoundOnNumberOfZeros}). 
Bounds in any dimension have been given by Zworski~\cite{Zworski1987,Zworski1989}, who proved that $n(r)\leq C_n(r+1)^n$, where $n$ denotes the dimension, but without explicit control over the constant $C_n.$


\section{Statement of main result}\label{sec:MainResult}

To state our main result (Theorems~\ref{thm:main_ac} and~\ref{thm:main_e}) rigorously, we first need to introduce the setting in which we work and the notation that we use.

\subsection{Assumptions on the potential and Definitions}\label{sec:assumptions}
For two functions $f,g : \R\to\C$  we will use the notation
\begin{equation}
f(x) \sim g(x)  \qquad \text{as }x\to x_0
\end{equation}
to mean that the following limit exists and 
\begin{equation}
\lim_{x\to x_0} \frac{f(x)}{g(x)} = 1 .
\end{equation}

Throughout the chapter we consider a non-zero, three-dimensional, rotationally symmetric potential $V=V(r)$, that is real, with support contained in~$[0,R_V]$, such that $\sup( \supp V) = R_V$, and $\|V\|_1<\infty$.
We also assume that the potential admits the asymptotic expansion
\begin{equation}\label{eq:VExpansionAtRs}
V(r)   \sim   \sum_{n=0}^M  d_n  (R_V - r)^{\delta_n},   
	\qquad \text{as }{r\to R_V^-},
\end{equation}
with $0\leq M<\infty$,  $-1<\delta_0<\dots<\delta_N$, $d_n\in\R$, and $d_n$ not all zero.

We will only be concerned with the case of zero angular momentum, to avoid the angular momentum barrier potential, which would not have compact support. In this case the three-dimensional Schrödinger Equation~\eqref{eq:Schroedinger3D} is equivalent to the one dimensional problem
\begin{align}\label{eq:Schroedinger1D}
  i\partial_t\psi=(-\partial_r^2+V)\psi
  \quad\text{with}\quad
  \Psi(r,\theta,\phi)=\frac{\psi(r)}{r}.
\end{align}
The self-adjointness of $(-\partial_r^2+V)$ is ensured by the next

\begin{lemma}\label{lem:Selfadjointness}
  Let the potential $V$ satisfy $\|V\|_1<\infty$ and let
  \begin{align}
	H_0=-\frac{d^2}{dr^2}
  \end{align}
  denote the self-adjoint free Schrödinger operator acting on $\{\phi\in L^2(\mathbb R^+)\,|\,\phi(0)=0\}$. Then, $V$ is infinitesimally form-bounded with respect to $H_0$, $H=H_0+V$ can be constructed by the standard quadratic form technique, and its form domain $Q(H)$ is equal to the form domain $Q(H_0)$ of the free operator.
\end{lemma}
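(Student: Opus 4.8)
The plan is to verify the hypotheses of the KLMN theorem for the symmetric quadratic form $q_V(\phi)\coloneqq\langle\phi,V\phi\rangle$, i.e.\ to show that it is infinitesimally form-bounded with respect to $H_0$, and then to invoke that theorem to produce $H=H_0+V$ as a form sum with $Q(H)=Q(H_0)$. First I would recall that the form domain of the half-line Dirichlet operator is $Q(H_0)=\{\phi\in L^2(\mathbb R^+)\,|\,\dr\phi\in L^2(\mathbb R^+),\ \phi(0)=0\}$, the closure of $C_c^\infty(0,\infty)$ in the form norm $(\|\dr\phi\|_2^2+\|\phi\|_2^2)^{1/2}$, on which $\langle\phi,H_0\phi\rangle=\|\dr\phi\|_2^2$; in one dimension each such $\phi$ has a continuous representative, so the boundary value $\phi(0)$ is well defined.

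The analytic core is the Sobolev-type pointwise bound $\|\phi\|_\infty^2\le 2\|\phi\|_2\|\dr\phi\|_2$ for $\phi\in Q(H_0)$, which follows from $|\phi(r)|^2=\int_0^r\partial_s|\phi(s)|^2\,ds=2\int_0^r\Re(\overline{\phi(s)}\,\dr\phi(s))\,ds\le 2\|\phi\|_2\|\dr\phi\|_2$ by the fundamental theorem of calculus (using $\phi(0)=0$) and Cauchy--Schwarz; this is proved first for $\phi\in C_c^\infty(0,\infty)$ and then extended to all of $Q(H_0)$ by density in the form norm. Since $\supp V\subset[0,R_V]$ and $\|V\|_1<\infty$, this gives $|q_V(\phi)|\le\|V\|_1\|\phi\|_\infty^2\le 2\|V\|_1\|\phi\|_2\langle\phi,H_0\phi\rangle^{1/2}$. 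Applying the elementary inequality $\sqrt{AB}\le\epsilon B+c_\epsilon A$ with $B=\langle\phi,H_0\phi\rangle$ and $A=4\|V\|_1^2\|\phi\|_2^2$, one concludes that for every $a>0$ there is $b_a>0$ with $|q_V(\phi)|\le a\langle\phi,H_0\phi\rangle+b_a\|\phi\|_2^2$, which is precisely infinitesimal form-boundedness (and $q_V$ is symmetric because $V$ is real).

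With this established, the KLMN theorem (see~\cite[Theorem~X.17]{RS2}) furnishes a unique self-adjoint, bounded-below operator $H$ whose associated form is $\langle\phi,H_0\phi\rangle+q_V(\phi)$ with form domain $Q(H)=Q(H_0)$; this is exactly the standard quadratic-form construction of $H=H_0+V$, and the claimed equality of form domains is part of the conclusion. I expect the only delicate point to be the approximation argument underlying the pointwise Sobolev estimate --- namely identifying $Q(H_0)$ with $H^1_0(\mathbb R^+)$, checking that its elements are continuous up to $r=0$, and verifying that the bound passes to the form-norm limit --- but this is routine once that identification is in place.
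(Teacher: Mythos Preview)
Your proposal is correct and follows essentially the same route as the paper: both prove the Sobolev bound $\|\phi\|_\infty^2\le 2\|\phi\|_2\|\dr\phi\|_2$ on $Q(H_0)$ via the fundamental theorem of calculus for $C_c^\infty$ functions and density in the form norm, use it together with $|\langle\phi,V\phi\rangle|\le\|V\|_1\|\phi\|_\infty^2$ and Young's inequality to obtain infinitesimal form-boundedness, and then invoke the KLMN theorem. The only cosmetic difference is where the $\epsilon$--inequality is inserted (before or after multiplying by $\|V\|_1$), which is immaterial.
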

\begin{proof}
  For ease of notation introduce $\dr\psi(r)\coloneqq \frac{d}{dr}\psi(r)$. Then, the form corresponding to $H_0$ and its form domain read
  \begin{align}
	h_0(\phi,\psi)
	&=\langle\dr\phi,\dr\psi\rangle,\\
	Q(H_0)
	&=\{\psi\in L^2(\mathbb R^+)|\dr\psi\in L^2(\mathbb R^+), \psi(0)=0\}.\label{eq:FormDomain}
  \end{align}
  It is well known that $h_0$ is closed on $Q(H_0)$ under the norm
  \begin{align}
  	\|\psi\|_{+1}=\sqrt{h_0(\psi,\psi)+\|\psi\|_2^2}=\sqrt{\|\psi'\|_2^2+\|\psi\|_2^2}.
  \end{align}
  So in order to see that $V$ is infinitesimally form-bounded with respect to $H_0$, we need to show that for all $\epsilon>0$ and $\psi\in Q(H_0)$ there is $c_\epsilon\in\mathbb R$ such that
  \begin{align}
	|\langle\psi,V\psi\rangle|\leq\epsilon h_0(\psi,\psi)+c_\epsilon\|\psi\|_2^2.
  \end{align}

  Now, assume that $\|\psi\|_\infty^2\leq2\|\psi\|_2\|\psi'\|_2$ for all $\psi\in Q(H_0)$, then using the fact that for arbitrary $a,b>0$ and all $\epsilon>0$ there is a $c_\epsilon>0$ such that $ab=a^2\sqrt{b^2/a^2}\leq\epsilon b^2+c_\epsilon a^2$, we get
  \begin{align}
	\|\psi\|_\infty^2\leq\epsilon\|\psi'\|_2^2+c_\epsilon\|\psi\|_2^2.
  \end{align}
  This implies for all $\psi\in Q(H_0)$ that
  \begin{align}
	|\langle\psi,V\psi\rangle|
	&\leq\|V\|_1\|\psi\|_\infty^2
	\\
	&\leq\epsilon\|\psi'\|_2^2+c_\epsilon\|\psi\|_2^2
	=\epsilon h_0(\psi,\psi)+c_\epsilon\|\psi\|_2^2,
  \end{align} 
  thereby proving the infinitesimally form-boundedness of $V$ with respect to $H_0$. The rest of the lemma then follows directly from the KLMN theorem~\cite[Theorem~X.17]{RS2}.

  It remains to prove that $\|\psi\|_\infty^2\leq2\|\psi\|_2\|\psi'\|_2$ for all $\psi\in Q(H_0)$, which will follow from a standard argument given in the proof of Theorem~8.5 in~\cite{LiebLoss}. Due to Theorem~7.6 in~\cite{LiebLoss}, $C^\infty_0\coloneqq\{f\in C^\infty(\mathbb R^+)|f(0)=0\}$ is dense in $Q(H_0)$ with respect to the norm $\|\cdot\|_{+1}$. Hence, there exists a sequence $\psi_m\in C^\infty_0\cap Q(H_0)$ that converges to $\psi\in Q(H_0)$. For this sequence we have
  \begin{align}\label{eq:DensityArgument}
	\psi_m^2(r)=2\int_0^r\psi_m(r')\dr\psi_m(r')\,dr'.
  \end{align}
  The convergence $\psi_m\to\psi$ in the norm $\|\cdot\|_{+1}$ implies that $\psi_m\to\psi$ and $\psi'_m\to\psi'$ in the norm $\|\cdot\|_2$, and thereby we have
  \begin{align}
	&\hspace{-2em}\left|\int_0^r(\psi(r')\dr\psi(r')-\psi_m(r')\dr\psi_m(r'))\,dr'\right|\\
	&\leq\int_0^r   \left|  \psi(\dr\psi-\dr\psi_m)+ \dr\psi_m(\psi-\psi_m)   \right|\,dr'\\
	&\leq  \langle|\psi|,|\dr\psi-\dr\psi_m|\rangle+\langle|\dr\psi_m|,|\psi-\psi_m|\rangle \\
	&\leq \|\psi\|_2\|\dr\psi-\dr\psi_m\|_2+\|\dr\psi\|_2\|\psi-\psi_m\|_2\\
	&\to 0\quad\text{as }m\to\infty,
  \end{align}
  so that the right hand side of Eq.~\eqref{eq:DensityArgument} converges pointwise to
  \begin{align}
	2\int_0^r\psi(r')\dr\psi(r')\,dr'
  \end{align}
  Moreover, via Theorem~2.7 in~\cite{LiebLoss} we can ensure that the left hand side of Eq.~\eqref{eq:DensityArgument} converges pointwise almost everywhere to $\psi^2(r)$ by passing to a subsequence. Thus, we have proven that for all $\psi\in Q(H_0)$
  \begin{align}
	\psi^2(r)=2\int_0^r\psi(r')\dr\psi(r')\,dr'
  \end{align}
  and thereby
  \begin{align}
	\|\psi\|_\infty^2
	&=2\sup_{r\in\mathbb R^+}\left|\int_0^r\psi(r')\dr\psi(r')\,dr'\right|\\
	&\leq2\sup_{r\in\mathbb R^+}\int_0^r |\psi(r')|  \,|\dr\psi(r')|\,dr'\\
	&\leq2\langle|\psi|,|\psi'|\rangle\\
	&\leq2\|\psi\|_2\|\psi'\|_2.\label{eq:DensityArgument2}   \mbox{\qedhere}
  \end{align}
  \QED
\end{proof}

To study the scattering behavior induced by the Schrödinger Eq.~\eqref{eq:Schroedinger1D}, we use generalized eigenfunctions, which solve the time-independent Schrödinger equation
\begin{align}\label{eq:Schroedinger}
  (-\partial_r^2+V(r))\phi(k,r)=k^2\phi(k,r).
\end{align}
To keep notation short, we write
\begin{align}
  \dr\phi(k,r)\coloneqq \partial_r\phi(k,r)
  \quad\text{and}\quad
  \dkp\phi(k,r)\coloneqq \partial_k\phi(k,r).
\end{align}

The following definitions and equations can all be found in Chapter~12 of Newton's book~\cite{Newton1966}. Following his exposition, we define the regular eigenfunctions $\varphi(k,r)$ as the solutions of Schrödinger's Eq.~\eqref{eq:Schroedinger} that satisfy the boundary conditions
\begin{align}
	\varphi(k,0)=0
	\quad\mathrm{as\;well\;as}\quad
	\dr \varphi(k,0)=1
\end{align}
and we define the irregular eigenfunctions $f(k,r)$ as the solutions of Schrödinger's Eq.~\eqref{eq:Schroedinger} that satisfy the boundary condition
\begin{align}\label{eq:fboundary}
	f(k,r)=e^{ikr}
	\quad\mathrm{for}\quad
	r\geq R_V.
\end{align}
Note that this boundary condition as it is formulated hinges on the assumption that the potential $V$ has compact support. Later we will use this property of $f(k,r)$ in an essential way. The Jost function $F$ is defined as the Wronskian of $f$ and $\varphi$, i.e.\
\begin{align}\label{eq:Jost}
  F(k)=W(f(k,r),\varphi(k,r))\coloneqq f(k,r)\dr\varphi(k,r)-\dr f(k,r)\varphi(k,r).
\end{align}
We define 
\begin{align}
  \psi^+(k,r)\coloneqq \frac{k\varphi(k,r)}{F(k)},
  \label{eq:PsiPlus}
\end{align}
and $P_{ac}$ and $P_e$ to be the projections on the subspace of absolute continuity of $H$ and on the span of all eigenvectors of $H$, respectively.
Then the generalized eigenfunction expansion of a funciton $\psi\in L^2(\R^+)$ reads
\begin{align}
  P_{ac}\psi(r)&=\int_0^\infty\hat\psi(k)\psi^+(k,r)\,dk,
  \intertext{with}
  \hat\psi(k)
  	&\coloneqq \mathcal F \psi(k)
    	\coloneqq\int_0^\infty\psi(r)\bar\psi^+(k,r)\,dr.
\label{eq:Fourier}
\end{align}
We also need the relation
\begin{align}\label{eq:phi}
  \varphi(k,r)=\frac{1}{2ik}(F(-k)f(k,r)-F(k)f(-k,r)),
\end{align}
which is an immediate consequence of the Jost function's definition in Eq.~\eqref{eq:Jost} and of the fact that $f(k,r)$ and $f(-k,r)$ span the solution space of Schrödinger's Eq.~\eqref{eq:Schroedinger}. In particular, Eq.\ \eqref{eq:phi} evaluated for $r\geq R_V$ reads
\begin{equation}\label{eq:phiRBiggerRs}
  \varphi(k,r)=\frac{1}{2ik}(F(-k)e^{ikr}-F(k)e^{-ikr}), 
  \qquad r \geq R_V.
\end{equation}
The S-matrix element for zero angular momentum can now be expressed (see~\cite[Chapter~12]{Newton1966} for details) as
\begin{align}\label{eq:SRatioF}
  S(k)=\frac{F(-k)}{F(k)}.
\end{align}

In~\cite[Chapter~12]{Newton1966} it is also shown that the functions $f(k,r),\varphi(k,r)$ and $F(k)$ admit analytic extensions to the whole complex $k$-plane. Therefore, we can make
\begin{definition}\label{def:ZeroResVirtualStates}
	A resonance is a zero of the Jost function $F(k)$ in $\{k\in\mathbb C| \Im k<0,\Re k\neq 0\}.$ We say that the potential has a zero resonance, if and only if $F(0)=0.$ A virtual state is defined to be a zero of $F(k)$ in $\{k\in\mathbb C| \Im k<0,\Re k= 0\}.$
\end{definition}
Moreover, bound states of the potential correspond to the zeros of $F(k)$ in $\{k\in\mathbb C| \Im k>0,\Re k= 0\}.$ 
The resonances  appear in couples symmetric about the imaginary axis and are infinitely many, while there are just finitely many virtual and bound states~\cite{Newton1966,Regge,Rollnik}. 
For further discussion about the zeros of the Jost function and their physical meaning see Section~\ref{sec:PhysMeaningZeros}.

We will also use the symmetry relations \cite[pages 339, 340]{Newton1966}
\begin{align}
F(k) &= \bar F(-\bar k),
\\
f(k,r) &= \bar f (-\bar k,r).
\end{align}

Finally, we introduce the weight function
\begin{align}
  w(x)\coloneqq \frac{1}{1+x^2}
\end{align}
and say $\psi\in L^1_w$ if and only if $\|\psi w\|_1<\infty.$
Moreover, we call $L^\infty_{loc}$ the space of functions $\psi$ such that $\|\1_R \psi\|_\infty$ is finite for every $R>0$.

\subsection{Main result}

The main result (Theorems~\ref{thm:main_ac} and~\ref{thm:main_e}) rests upon bounds on the derivatives of the $S$-matrix given in Theorems~\ref{th:SBoundsK} and~\ref{th:GlobalSBounds}. To state these bounds, we need

\begin{definition}\label{def:SKAndKTildeAndS}
Let $\alpha_n$, $\beta_n$, $\eta_m$, $\kappa_l>0$.
We number the zeros of the Jost function other than $k=0$ with increasing modulus; among them, we denote the bound states by $i \eta_m$, the virtual states by $-i \kappa_l$, and the resonances by $k_n=\alpha_n-i \beta_n$ and $-\bar k_n$. 
Let $N<\infty$ be the number of the bound states and $N'<\infty$ that of the virtual states, then we define
\begin{equation}
\frac {1} {\eta}  \coloneqq \sum_{m=0}^{N-1}    \frac {1}  {\eta_m}     ,
\qquad
\frac {1} {\kappa}  \coloneqq \sum_{l=0}^{N'-1}   \frac {1}  {\kappa_l}
\end{equation}
and for given $K>0$, let  $\nu_K$ be the smallest non-negative integer such that $\alpha_n \geq 2 K$ for all $n\geq\nu_K$. Then
\begin{equation}\label{eq:DefSK}
\frac {1} {s_K}  \coloneqq  
		\frac {1} {\eta}  +  \frac {1} {\kappa}  +  \sum_{n=0}^{\nu_K-1}   \frac {1}  {\beta_n}
\end{equation}
in case the right hand side is not zero, and $s_K\coloneqq 1$ otherwise.
Let $\tilde K\coloneqq 6\|V\|_1$ and
\begin{equation}\label{eq:DefS}
\frac {1} {s}  \coloneqq  \frac {1} {s_{\tilde K}}   .
\end{equation}
\end{definition}

\begin{theorem}\label{th:SBoundsK}
Let $\alpha \coloneqq   \min_{n\in\N^0} \alpha_n,$ $K>0$ and
\begin{align}
  r_0\coloneqq \sum_{n=0}^\infty   \frac {5\beta_n}  { \alpha_n^2 + \beta_n^2}.
\end{align}
Then
\begin{align}
\|\1_K \dkp S\|_\infty  
&\leq 
	\frac {2}  {s_K}     \left[  1+ s_K (R_V+r_0)  \right]
	\eqqcolon \frac{C_{1,K}}{s_K} ,
\label{eq:BoundSDotK} 
\displaybreak[0]\\
\|\1_K \ddkp S\|_\infty  
&\leq 
	\frac {4}  {s_K^2}  \left\{  3 + 2 s_K^2  \left[ \frac{r_0}{\alpha}  +  (R_V+r_0)^2  \right]  \right\}
	\eqqcolon \frac{C_{2,K}}{s_K^2}  ,
\label{eq:BoundSDotDotK} 
\displaybreak[0]\\
\|\1_K \dddkp S\|_\infty  
&\leq 
	\frac {4}  {s_K^3}  \Biggl\{  15  + 6 s_K (R_V+r_0) + 12 s_K^2 \frac{r_0}{\alpha} 
\nonumber\\		
&
		+ s_K^3  \left[ \frac{7 r_0}{\alpha} + \frac{12 r_0}{\alpha} (R_V+r_0) + 8 (R_V+r_0)^3 \right]  \Biggr\}   
		\eqqcolon \frac{C_{3,K}}{s_K^3}.
\label{eq:BoundSDotDotDotK} 
\end{align}
\end{theorem}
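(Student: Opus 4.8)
The plan is to obtain each of the three bounds by differentiating the $S$-matrix representation $S(k)=F(-k)/F(k)$ the appropriate number of times, and then to control the resulting expressions using a Hadamard-type factorization of the Jost function $F$ in terms of its zeros. Concretely, writing $S=F(-\cdot)/F(\cdot)$, the logarithmic derivative is $S'/S = -F'(-k)/F(-k) - F'(k)/F(k)$, so on the real axis (where $|S|=1$) one has $|\1_K S'|_\infty \le 2\,\sup_{|k|\le K}|F'(k)/F(k)|$. The factorization I would use is the genus-zero product $F(k) = F(\infty)\,e^{i\mu k}\prod_j (1 - k/k_j)$ over the zeros $k_j$ (bound states $i\eta_m$, virtual states $-i\kappa_l$, resonances $k_n,-\bar k_n$), which the paper's surrounding discussion establishes is legitimate despite the genus of the zeros being one; here $\mu$ is essentially $R_V$ up to the contribution of the exponential type. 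Differentiating the logarithm gives
\begin{equation}
\frac{F'(k)}{F(k)} = i\mu - \sum_j \frac{1}{k_j - k},
\end{equation}
so the first bound reduces to estimating $|i\mu| + \sum_j |k_j-k|^{-1}$ uniformly for $|k|\le K$.

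First I would split the sum over zeros into the bound states, the virtual states, and the resonances, and for each family bound $|k_j - k|^{-1}$ for $k$ in the disk of radius $K$. For bound states $i\eta_m$ and virtual states $-i\kappa_l$ lying on the imaginary axis, $|i\eta_m - k| \ge \eta_m$ and similarly for the $\kappa_l$, giving the $1/\eta + 1/\kappa$ pieces of $1/s_K$. For the resonances $k_n = \alpha_n - i\beta_n$: if $\alpha_n \ge 2K$ then $|k_n - k| \ge \alpha_n - K \ge \alpha_n/2 \ge$ (a comparable multiple of) $\beta_n$ when combined with the a priori geometry, but more cheaply one bounds $|k_n - k|^{-1}$ by pairing $k_n$ with its mirror $-\bar k_n$: the two terms $\frac{1}{k_n-k} + \frac{1}{-\bar k_n - k}$ combine to something of size $\lesssim \beta_n/(\alpha_n^2+\beta_n^2)$ for small $|k|$, and this is exactly why $r_0 = \sum 5\beta_n/(\alpha_n^2+\beta_n^2)$ appears. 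For the finitely many resonances with $\alpha_n < 2K$ (indices $n < \nu_K$) one simply uses $|k_n - k|^{-1} \le 1/\beta_n$, producing the remaining $\sum_{n<\nu_K} 1/\beta_n$ term of $1/s_K$. Assembling, $|F'/F| \le \mu + (R_V+r_0)\text{-type terms} + 1/s_K$, which after keeping track of the constant $\mu \le R_V$ (from the exponential type of $F$, established earlier) gives precisely $C_{1,K}/s_K = \tfrac{2}{s_K}[1 + s_K(R_V+r_0)]$ once one also uses that the $|S'|$ bound carries a factor $2$.

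For the second and third derivatives I would proceed by repeated differentiation of $S'/S$. Writing $L(k) := F'(k)/F(k)$, one has $S'/S = -L(-k) - L(k)$, and then $S'' = S\big[(S'/S)^2 + (S'/S)'\big]$ with $(S'/S)' = L'(-k) - L'(k)$, and $S''' $ a similar but longer polynomial in $S'/S$, $(S'/S)'$, $(S'/S)''$. Since $L'(k) = \sum_j (k_j-k)^{-2}$ and $L''(k) = 2\sum_j (k_j-k)^{-3}$, the same three-family splitting bounds $|L'|$ by a combination of $1/\alpha$-weighted $r_0$ terms and $1/s_K^2$, and $|L''|$ by $1/s_K^3$-type terms; the book-keeping of how $r_0$, $\alpha$, $R_V$, and $s_K$ combine at each order is exactly what produces the displayed constants $C_{2,K}$ and $C_{3,K}$. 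The mirror-pairing trick for resonances must be applied to the squared and cubed sums as well, which is where the factors of $r_0/\alpha$ (rather than $r_0$) enter — one gains an extra power of $1/\alpha_n \le 1/\alpha$ from the higher-order denominators.

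The main obstacle I anticipate is not any single estimate but the convergence and legitimacy of the genus-zero Hadamard product together with the precise value of the exponential-type constant: the zeros $k_j$ have $\sum |k_j|^{-1} = \infty$ in general (genus one), so the naive product $\prod(1-k/k_j)$ need not converge, and one must invoke the specific cancellation between mirror resonance pairs $k_n$ and $-\bar k_n$ (as noted in the text, recognized by Boas) to make $\sum_n \beta_n/(\alpha_n^2+\beta_n^2) < \infty$, i.e.\ $r_0 < \infty$, which is what rescues convergence. So the real work is (i) citing/using the earlier-established Hadamard factorization and the bound on $r_0$, (ii) identifying $\mu$ with $R_V$ via the exponential-type bound on $F$, and (iii) carefully organizing the combinatorics of differentiating $F(-k)/F(k)$ up to third order while tracking constants. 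The individual inequalities $|k_j - k|^{-1}\le(\cdots)$ are elementary; it is the accounting that is delicate.
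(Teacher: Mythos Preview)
Your overall architecture matches the paper's: Hadamard factorization of $F$, the relation $S=F(-\cdot)/F$, differentiate, split the zeros into bound states, virtual states, near resonances ($n<\nu_K$), and far resonances ($n\geq\nu_K$). But there is a real gap in how you pass from $S'/S$ to the estimates.

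You write $|S'|\le 2|F'/F|$ and then claim that the mirror pairing $\frac{1}{k_n-k}+\frac{1}{-\bar k_n-k}$ has size $\lesssim \beta_n/(\alpha_n^2+\beta_n^2)$. That pairing identity reads
\[
\frac{1}{k-k_n}+\frac{1}{k+\bar k_n}=\frac{2k+2i\beta_n}{(k-k_n)(k+\bar k_n)},
\]
so its modulus carries a $2k$ in the numerator in addition to $2\beta_n$. For $k$ near $K$ and $\alpha_n\approx 2K$ with $\beta_n$ tiny, this is of order $1/K$, not $\beta_n/|k_n|^2$. Summing the $k$-contribution over far resonances gives an extra term $\sim K\sum_n|k_n|^{-2}$, which converges but is \emph{not} part of the constants $C_{1,K},C_{2,K},C_{3,K}$ in the theorem. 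So your bound on $|F'/F|$ is genuinely looser than what is claimed, and the argument as written does not recover the stated constants.

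The missing observation is that on the real axis $|S|=1$ forces $S'/S$ to be purely imaginary; concretely, $F(-k)=\overline{F(k)}$ gives $S'/S=\overline{F'/F}-F'/F=-2i\,\Im(F'/F)$. Hence $|S'|=2|L|$ with $L:=\Im(F'/F)$, and it is $L$, not $|F'/F|$, that one must bound. Taking the imaginary part of each resonance term produces exactly $-\beta_n/|k-k_n|^2-\beta_n/|k+k_n|^2$: the $\beta_n$ appears in the numerator for free, with no $k$-dependent residual, and the far-resonance tail is then controlled by $r_0$ uniformly on $[0,K]$. The paper organizes the higher derivatives the same way, via $\ddot S=-2iS\dot L-4SL^2$ and $\dddot S=-2iS\ddot L-12SL\dot L+8iSL^3$, bounding $|L|,|\dot L|,|\ddot L|$ separately (Lemma~\ref{lem:LBounds}) from the explicit Hadamard expansion of $\Im(F'/F)$. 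Once you replace $|F'/F|$ by $L=\Im(F'/F)$ throughout, your splitting and bookkeeping become exactly the paper's and the constants come out as stated.
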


Note that $r_0<\infty$ as shown in Lemma~\ref{lem:R0}.
The bounds in Theorem \ref{th:SBoundsK} are valid for any $K$, but for big values of $K$, the bounds given in Theorem~\ref{th:GlobalSBounds} are more convenient.


\begin{theorem}\label{th:GlobalSBounds}
Let
\begin{align}
  q\coloneqq \frac{1}{2\|V\|_1} + 6R_V,
  \quad r_0\coloneqq \sum_{n=0}^\infty   \frac {5\beta_n}  { \alpha_n^2 + \beta_n^2}
  \quad\text{and}\quad  \alpha \coloneqq  \min_{n\in\N^0} \alpha_n.
\end{align}
Then
\begin{align}
\|\dkp S\|_\infty  
&\leq 
	\frac {2}  {s}     \left[  1+ s (3R_V+r_0)  \right]
	\eqqcolon \frac{C_{1}}{s}  ,
\displaybreak[0]\\
\|\ddkp S\|_\infty  
&\leq 
	\frac {4}  {s^2}  \left\{  3 + 2 s^2  \left[ \frac{r_0}{\alpha}  +  (3 R_V+r_0)^2 +R_V q \right]  \right\}
	\eqqcolon \frac{C_{2}}{s^2}  ,
\displaybreak[0]\\
\|\dddkp S\|_\infty  
&\leq 
	\frac {4}  {s^3}  \Biggl\{  15  + 6 s (R_V+r_0) + 12 s^2 \frac{r_0}{\alpha} 
\nonumber\\		
&
		+ s^3  \left[ \frac{7 r_0}{\alpha} + \frac{12 r_0}{\alpha} (R_V+r_0) 
			+ 8 (3R_V+r_0)^3  +18 R_V q^2 \right]  \Biggr\}   
		\eqqcolon \frac{C_{3}}{s^3}.
\end{align}
\end{theorem}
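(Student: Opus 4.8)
\textbf{Proof plan for Theorem~\ref{th:GlobalSBounds}.}
The plan is to mirror the strategy behind Theorem~\ref{th:SBoundsK}, but to exploit the improved decay of the correction terms in the Jost function for large $|k|$, which is exactly what permits replacing the $K$-localised bounds by global ones. Recall from Eq.~\eqref{eq:SRatioF} that $S(k)=F(-k)/F(k)$, so the derivatives of $S$ are rational expressions in $F$ and its derivatives; the task reduces entirely to controlling $F(k)$, $\dkp F(k)$, $\ddkp F(k)$, $\dddkp F(k)$ uniformly in $k\in\R$, together with a uniform lower bound on $|F(k)|$. For the lower bound one writes the Hadamard factorisation of $F$ over its zeros (bound states $i\eta_m$, virtual states $-i\kappa_l$, resonances $k_n$ and $-\bar k_n$); since $F$ never vanishes on the real axis (no zero resonance, by assumption), the product over zeros is bounded away from zero, and the quantity $1/s$ in Definition~\ref{def:SKAndKTildeAndS} is precisely the constant that collects the contributions $1/\eta+1/\kappa+\sum 1/\beta_n$ over all zeros whose real part is below the threshold $\tilde K=6\|V\|_1$. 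The resonances with $\alpha_n\geq 2\tilde K$ are handled separately using a direct integral-equation estimate for $F$ rather than the factorisation.

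First I would establish, from the integral equation $F(k)=1+\int_0^{R_V}e^{ikr}V(r)\varphi(k,r)\,dr$ and the a priori bound $|\varphi(k,r)|\leq 4e^{4\|r'V(r')\|_1}\tfrac{r}{1+|k|r}e^{|\Im k|r}$ (the estimate quoted in Chapter~\ref{ch:ET}, Eq.~\eqref{eq:BoundPhi}), that on the real axis $|F(k)-1|\leq C/(1+|k|)$ with an explicit $C$ proportional to $\|V\|_1$; this is where the constant $q=\tfrac{1}{2\|V\|_1}+6R_V$ enters, as the scale beyond which $|F(k)|\geq 1/2$ say, directly. Differentiating the integral equation in $k$ (bringing down factors of $ir$, and using $\dkp\varphi$, which solves an inhomogeneous version of Schr\"odinger's equation) yields analogous decaying bounds for $\dkp F$, $\ddkp F$, $\dddkp F$, with the extra powers of $R_V$ and of $q$ that appear in the stated constants $C_1,C_2,C_3$ — note the terms $R_Vq$ and $18R_Vq^2$ that are absent in the $K$-localised Theorem~\ref{th:SBoundsK}, reflecting precisely this high-$|k|$ regime. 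Then I would combine: for $|k|$ below the threshold use the factorisation bounds (giving the $r_0$, $1/\alpha$, $1/s$ dependence exactly as in Theorem~\ref{th:SBoundsK}), and for $|k|$ above it use the integral-equation bounds; the maximum of the two is dominated by the single global expression claimed, after absorbing $3R_V$ in place of $R_V$ inside the $s$-dependent brackets (the factor $3$ tracking the three distinct sources of the exponential-type growth of $f$, $\varphi$ and the shift by the support width).

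Concretely, the key steps in order are: (i) write $\dkp S = -(\dkp F(-k)\,F(k)+F(-k)\,\dkp F(k))/F(k)^2$ and similarly for the higher derivatives via the quotient rule, so that everything is expressed through $|F|^{-1}$, $|F|^{-2}$, $|F|^{-3}$ and the derivatives $F^{(j)}(\pm k)$; (ii) derive the uniform lower bound $|F(k)|\geq 1/(s\cdot \text{const})$ from the Hadamard product together with the high-$|k|$ integral-equation bound, invoking Lemma~\ref{lem:R0} for $r_0<\infty$ and Lemma~\ref{lem:BoundOnNumberOfZeros} for the convergence of the product; (iii) derive uniform upper bounds on $|F^{(j)}(k)|$, $j=0,1,2,3$, again splitting into the factorisation regime and the integral-equation regime, carefully tracking where $R_V$, $q$, $r_0/\alpha$ appear; (iv) substitute into the quotient-rule expressions and collect terms, checking that the resulting constants are bounded by the displayed $C_1/s$, $C_2/s^2$, $C_3/s^3$. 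The main obstacle I expect is step (iii): getting the derivative bounds on $F$ with \emph{explicit} constants requires differentiating the Hadamard factorisation termwise and controlling sums like $\sum_n \beta_n/|k-k_n|^2$ uniformly in real $k$ — this is where the definition of $r_0$ and the separation $\alpha=\min_n\alpha_n$ are designed to make the series telescope into a finite explicit bound, and verifying that the genus-one zeros can nonetheless be treated by a genus-zero product (the subtle point emphasised in the introduction, following Boas) is what makes the bookkeeping delicate rather than routine.
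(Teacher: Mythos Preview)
Your two-regime architecture --- factorisation for small $k$, integral equation for large $k$, then combine --- is exactly what the paper does. But your implementation differs from the paper's at every key step, and one of those differences creates a real difficulty.

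The paper does not use the quotient rule on $S=F(-k)/F(k)$ directly. Instead it works throughout with the logarithmic derivative $L(k)\coloneqq\Im(\dkp F(k)/F(k))$ and the identities $\dkp S=-2iSL$, $\ddkp S=-2iS\dkp L-4SL^2$, $\dddkp S=-2iS\ddkp L-12SL\dkp L+8iSL^3$, which together with $|S|=1$ reduce everything to bounds on $|L|$, $|\dkp L|$, $|\ddkp L|$. The payoff is that the Hadamard factorisation gives an \emph{explicit series} for $L$ and its $k$-derivatives (Lemma~\ref{lem:Hadamard}), so in the low-$k$ regime one bounds these series term by term and \emph{never needs a lower bound on $|F|$ at all}. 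This is why your step~(ii) --- extracting $|F(k)|\geq 1/(s\cdot\text{const})$ from the Hadamard product --- is both unnecessary and problematic: near a resonance $k\approx\alpha_n$ with small $\beta_n$ the factor $|1-k/k_n|$ in the product is of order $\beta_n/|k_n|$, and controlling the full product explicitly from below would be delicate, whereas the $L$-route sidesteps the issue entirely.

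For the high-$k$ regime ($k\geq\tilde K=6\|V\|_1$) the paper bounds $|L|$, $|\dkp L|$, $|\ddkp L|$ via $|\dkp F|/|F|$ etc., using $|F(k)|\geq 1/2$ from the Lippmann-Schwinger equation (not from Hadamard). But the derivative bounds on $F$ come from the integral equation for the \emph{irregular} solution $f$ (Eq.~\eqref{eq:Lipp}), not the one for $\varphi$ you propose: the paper introduces $y(k,r)=e^{-ikr}f(k,r)$ precisely because differentiating the $\varphi$-based equation in $k$ brings down the unbounded factor $ir\,e^{ikr}$, whereas the Born series for $\dkp y$, $\ddkp y$, $\dddkp y$ yields clean bounds of the form $|y^{(n)}(k,r)|\leq c_n(q_k)e^{Q_k(r)}/k^n$ (Lemmas~\ref{lem:y'}--\ref{lem:y'''}). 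Evaluating at $r=0$ gives $|F^{(n)}(k)|$, and the constant $q$ emerges exactly here (Lemma~\ref{lem:LBoundsBigK}). Finally the paper simply \emph{adds} the low-$k$ bound (Theorem~\ref{th:SBoundsK} with $K=\tilde K$) and the high-$k$ bound, which is what produces the $3R_V$ in place of $R_V$ and the extra $R_Vq$, $18R_Vq^2$ terms --- not a ``three sources of exponential growth'' interpretation.
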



\pagebreak

\begin{figure}
  \centering
  \begin{overpic}[width=.65\textwidth
  ]{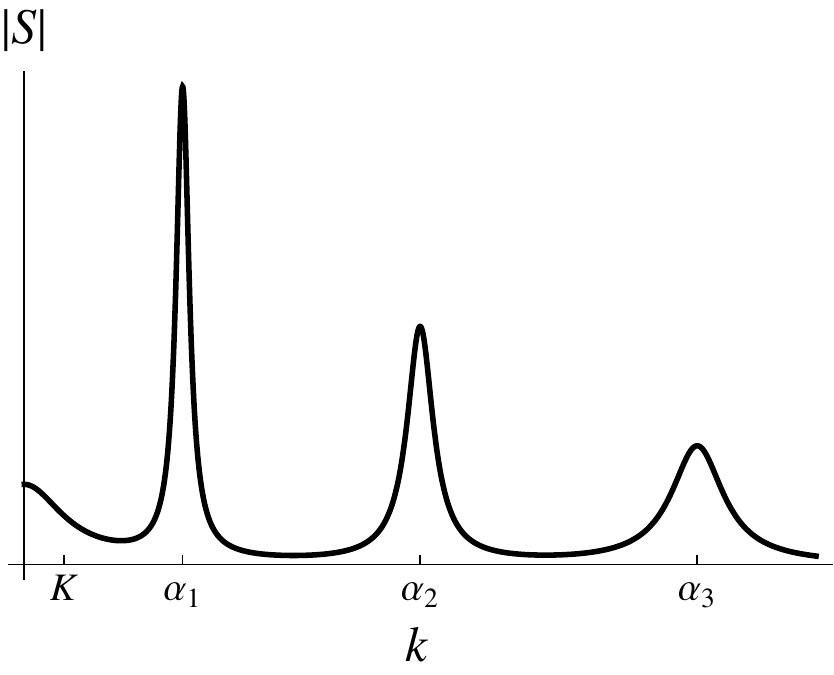}
  \put(3,80.3){.}
  \end{overpic}
  \caption{Schematic plot of $|\dkp S(k)|,$ where resonances are dominant.}
  \label{fig:KRemark}
\end{figure}
\begin{remark}
Let us explain, why we look at bounds for $k\in[0,K)$ and $k\in[0,\infty)$ rather than only for $k\in[0,\infty)$ or for $k\in[0,K)$ and $k\in[K,\infty).$ 
In the proof of Theorem \ref{thm:main_ac} we will find that one gets much tighter bounds treating the region around $k=0$ more carefully.
For this reason, bounds for $k\in[0,K)$ and $k\in[K,\infty)$ would be most useful. 
However, physically interesting situations are mainly those in which resonances dominate the scattering behavior and in such a situation the bounds for $k\in[0,\infty)$ are as good as bounds for $k\in[K,\infty),$ just easier to prove.
Let us briefly explain why they are equally good in case resonances are dominant.
The absolute value of the derivatives of the $S$-matrix have peaks centered around the real part of every resonance and around $k=0$ due to the bound and virtual states (see Lemma~\ref{lem:Hadamard} and Fig.~\ref{fig:KRemark}). 
If the resonances are dominant, then the peak at $k=0$ is smaller than  some of the resonance peaks.
This circumstance is discussed in Section~\ref{sec:Discussion}, where we also explain that in this case a good choice for $K$ is $\alpha_0/4$, hence $\|\1_{[K,\infty)}S^{(n)}\|_\infty=\|S^{(n)}\|_\infty.$ 
In contrast, if bound and virtual states are dominant, then bounds for $k\in[K,\infty)$ rather than for $k\in[0,\infty)$ may be advantageous.
\end{remark}

To keep the statement of our main result as concise as possible we define auxiliary constants.
These constants contain the radius $R$ that appear in our estimate \ref{eq:main_heur}.
The first number in their argument is the order of the derivative of the $S$-matrix in which they are used, the second one being just an index.


\begin{definition}\label{def:smallz}
  Let $R>0.$ 
Using the constants introduced in Theorems \ref{th:SBoundsK} and \ref{th:GlobalSBounds}, define
  \begin{align}
		z_{ac,K}(0,0)
		&\coloneqq \frac{1}{2}\left(2Rs_K+C_{1,K}\right),\displaybreak[0]\\
		z_{ac,K}(0,1)
		&\coloneqq 1,\displaybreak[0]\\
		z_{ac,K}(1,0)
		&\coloneqq \frac{1}{4}\left(2R^2s_K^2+2RC_{1,K}s_K+C_{2,K}\right),\displaybreak[0]\\
		z_{ac,K}(1,1)
		&\coloneqq \frac{1}{2}\left(2Rs_K+C_{1,K}\right),\displaybreak[0]\\
		z_{ac,K}(1,2)
		&\coloneqq 1,\displaybreak[0]\\
		z_{ac,K}(2,0)
		&\coloneqq \frac{1}{6}(2R^3s_K^3+3R^2s_K^2C_{1,K}+3Rs_KC_{2,K}+C_{3,K}),\displaybreak[0]\\
		z_{ac,K}(2,1)
		&\coloneqq \frac{1}{2}(2R^2s_K^2+2Rs_KC_{1,K}+C_{2,K}),\displaybreak[0]\\
		z_{ac,K}(2,2)
		&\coloneqq 2Rs_K+C_{1,K},\displaybreak[0]\\
		z_{ac,K}(2,3)
		&\coloneqq 2  ,
\displaybreak[0]\\
  	z_{ac,K}(n)& \coloneqq \sum_{m=0}^{n+1}z_{ac,K}(n,m) ,
  \end{align}

  and
  \begin{align}
  	z_{e,K}(0)
	&\coloneqq \sqrt 2,\displaybreak[0]\\
	z_{e,K}(1)
	&\coloneqq \frac{1}{\sqrt 2}
	\bigg[2s_K+(2Rs_K+C_{1,K})\eta_0\bigg],\displaybreak[0]\\
	z_{e,K}(2)
	&\coloneqq \frac{1}{\sqrt 2}
	\left(C_{2,K}\eta_0^2+2\eta_0s_K\left(C_{1,K}+Rs_K\right)\left(R\eta_0+1\right)+4s_K^2\right).
  \end{align}
  Define $z_{ac}(n,m)$, $z_{ac}(n)$, and $z_e(n)$ in exactly the same way, but with index $K$ omitted everywhere. 
\end{definition}

Recalling Definition \ref{def:SKAndKTildeAndS}, we now state the main result in Theorems~\ref{thm:main_ac} and~\ref{thm:main_e}.

\newpage

\begin{theorem}\label{thm:main_ac}
  Let $t>0,$ $R\geq R_V$, $K>s$, and 
\begin{equation}
\lambda  \coloneqq
	\begin{cases}
		0,&\mathrm{if}\;F(0)\neq0\\
		1,&\mathrm{if}\;F(0)=0.
	\end{cases}
\end{equation}  
 Then there are constants $c_n,$ such that
  \begin{equation}\label{eq:main_ac}
	\|P_{ac}\1_Re^{-iHt}P_{ac}\psi\|_2^2
	\leq \lambda(c_1t^{-1}+c_2t^{-2})+c_3t^{-3}+c_4t^{-4}
  \end{equation}
  for all $\psi$ with $\hat \psi^{(m)}\in L^\infty_{loc}\cap L^1_w$ where $m=0,1,2.$ If $s,s_K,K\leq 1,$ the constants satisfy
  \begin{align}
	c_1 &\leq \frac{81\pi^2}{K}\frac{|\hat\psi(0)|^2}{s_K^2}z_{ac,K}^2(0),\displaybreak[0]\\
	c_2 &\leq 
	\frac{53\pi^2}{K^3}\frac{|\hat\psi(0)|^2}{s_K^4}z_{ac,K}^2(1)
	+\frac{53\pi^2}{K}\frac{\|\1_K\dkp{\hat\psi}\|_\infty^2}{s_K^2}z_{ac,K}^2(0),\displaybreak[0]\\
	c_3 &\leq \frac{27}{K^5} \frac{\|\1_K{\hat\psi}\|_\infty^2}{s_K^6}z_{ac,K}^2(2)
	+\frac{23\pi^2}{K^3}\frac{\|\1_K\dkp{\hat\psi}\|_\infty^2}{s_K^4}z_{ac,K}^2(1)
	\nonumber\\
	&\quad+\frac{27}{K}\frac{\|\1_K\ddkp{\hat\psi}\|_\infty^2}{s_K^2}z_{ac,K}^2(0),\displaybreak[0]\\
	c_4 &\leq 276 \frac{\|{\hat\psi}w\|_1^2}{s^5}\left(1+\frac{1}{K^2}\right)^4\left(z_{ac}^2(2)+s^2z_{ac}^2(1)+s^4z_{ac}^2(0)\right)\nonumber\\
	&\quad+304\frac{\|\dkp{\hat\psi}w\|_1^2}{s^3}\left(1+\frac{1}{K^2}\right)^3\left(z_{ac}^2(1)+s^2z_{ac}^2(0)\right)\nonumber\\
	&\quad+14 \frac{\|\ddkp{\hat\psi}w\|_1^2}{s}\left(1+\frac{1}{K^2}\right)^2z_{ac}^2(0).
  \end{align}
Bounds on the constants $c_n$ without the assumption $s,s_K,K\leq1$ are given in Eqs.~\eqref{eq:C1Pure}-\eqref{eq:C4Pure}.
\end{theorem}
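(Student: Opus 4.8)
The plan is to carry out the stationary phase analysis of the generalized eigenfunction expansion
\begin{align}
(P_{ac}e^{-iHt}P_{ac}\psi)(r)=\int_0^\infty \hat\psi(k)\,e^{-ik^2t}\,\psi^+(k,r)\,dk,
\end{align}
splitting the $k$-integral at the threshold $K$ into a low-energy part $[0,K)$ and a high-energy part $[K,\infty)$, and estimating the two contributions separately. On $[0,K)$ I would substitute $\psi^+(k,r)=k\varphi(k,r)/F(k)$, pull out the factor $\hat\psi(k)$, and repeatedly integrate by parts in $k$ against $e^{-ik^2t}$, each integration by parts producing a factor $t^{-1}$ together with a derivative acting on the amplitude $k\varphi(k,r)\hat\psi(k)/F(k)$; the boundary term at $k=0$ is what generates the $\lambda(c_1t^{-1}+c_2t^{-2})$ contribution (it vanishes when $F(0)\neq0$ because then the amplitude carries an extra factor $k$, and survives with a $k^{-1}$-type singularity precisely when $F(0)=0$, i.e.\ when there is a zero resonance). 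Here I would feed in Lemma~\ref{lem:psihat0}-type control of $\hat\psi$ near the origin and the bounds on $\|\1_K S^{(n)}\|_\infty$ from Theorem~\ref{th:SBoundsK}, which translate into bounds on the derivatives of $\varphi/F$ via the representation~\eqref{eq:phi}; the constants $z_{ac,K}(n,m)$ in Definition~\ref{def:smallz} are exactly the combinatorial bookkeeping of the binomial expansion of these repeated derivatives, and the powers of $s_K$ come from each differentiation of $1/F$.

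For the high-energy part $[K,\infty)$ I would instead change variables to $E=k^2$, write the integrand using the $L^1_w$-weight $w(k)=(1+k^2)^{-1}$, and again integrate by parts — but now the gain is cleaner because the Jost function is bounded below away from $k=0$ (with the quantitative lower bound behind the definition of $s$ and $\tilde K=6\|V\|_1$), so only the global $S$-matrix bounds of Theorem~\ref{th:GlobalSBounds} and the constants $z_{ac}(n)$ enter. Two integrations by parts give $t^{-3}$-type decay in the $L^2$-norm squared after applying Cauchy--Schwarz in the form $\|\int\cdots\|_2\le\|\,\cdot\,w^{-1}\|_\infty\|\,\cdot\,w\|_1$ (or Plancherel against the generalized transform), which is why the $c_4$-bound is phrased in terms of $\|\hat\psi^{(m)}w\|_1$. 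Collecting the low- and high-energy pieces and using $\|P_{ac}\1_R e^{-iHt}P_{ac}\psi\|_2\le\|\1_R e^{-iHt}P_{ac}\psi\|_2$ gives the five terms $\lambda(c_1t^{-1}+c_2t^{-2})+c_3t^{-3}+c_4t^{-4}$; the extra simplification under $s,s_K,K\le1$ is merely the observation that then $s_K^{-j}$ and $(1+K^{-2})^j$ dominate and can be used to absorb subleading powers, yielding the displayed closed forms.

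The main obstacle, and the part that requires the most care, is getting \emph{explicit} constants at every step rather than just the qualitative $O(t^{-3})$ statement. Concretely this means: (i) controlling the derivatives $\partial_k^m\psi^+(k,r)$ uniformly in $r\in[0,R]$ in terms of $R$, the Jost function data, and the $S$-matrix derivative bounds — this is where the representation~\eqref{eq:phi}/\eqref{eq:phiRBiggerRs}, the boundary condition $f(k,r)=e^{ikr}$ for $r\ge R_V$, and the lower bounds on $|F(k)|$ all have to be combined, and where sloppiness in the constant propagates multiplicatively; (ii) honestly tracking the boundary terms in the iterated integration by parts on $[0,K)$, in particular showing that when $F(0)\neq0$ the boundary contributions at $k=0$ really do cancel against the $k$-prefactor so that $\lambda$ can be set to zero; and (iii) the combinatorics of the Leibniz expansions, which is precisely what the auxiliary constants $z_{ac,K}(n,m)$, $z_{ac}(n)$ in Definition~\ref{def:smallz} are designed to absorb — verifying that the numerical coefficients ($27$, $53\pi^2$, $81\pi^2$, $276$, $304$, $14$, \dots) that appear are in fact upper bounds for these sums is tedious but routine once the structure is fixed. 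I would defer the proof of Theorem~\ref{thm:main_e} (the bound state part) to a parallel argument, since there the decay is exponential and the analysis is comparatively elementary. The remaining bounds without the $s,s_K,K\le1$ hypothesis, Eqs.~\eqref{eq:C1Pure}--\eqref{eq:C4Pure}, are obtained by simply not performing the absorbing estimates and keeping all powers of $s_K$, $s$, and $(1+K^{-2})$ explicit.
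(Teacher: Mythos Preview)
Your outline has the right high-level ingredients (stationary phase, a split at $K$, feeding in the $S$-matrix bounds of Theorems~\ref{th:SBoundsK} and~\ref{th:GlobalSBounds}), but it diverges from the paper's argument in a structurally important way and contains a real gap at precisely the point you flag as ``the main obstacle''.

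The paper does \emph{not} work in position space with $\psi^+(k,r)$ and then try to bound $\partial_k^m\psi^+(k,r)$ uniformly for $r\in[0,R]$. Instead it first passes to the generalized Fourier side: Lemma~\ref{lem:IntKernAc} uses the Wronskian identity $\partial_r W(\bar\psi^+(k',r),\psi^+(k,r))=(k'^2-k^2)\bar\psi^+(k',r)\psi^+(k,r)$ to rewrite the quantity as
\[
\|P_{ac}\1_Re^{-iHt}P_{ac}\psi\|_2^2=\int_0^\infty\Bigl|\int_0^\infty Z_{ac}(k,k')\,\hat\psi(k)\,e^{-ik^2t}\,dk\Bigr|^2dk',
\]
where $Z_{ac}(k,k')=W(\bar\psi^+(k',R),\psi^+(k,R))/(k'^2-k^2)$ depends \emph{only} on the boundary values at $r=R\geq R_V$. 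Since $\psi^+(k,R)=\tfrac{1}{2i}(S(k)e^{ikR}-e^{-ikR})$ is completely explicit there, $Z_{ac}$ and all its $k$-derivatives are expressible purely in terms of $S$ and its derivatives (Lemma~\ref{lem:Zac}). This is the key device: it eliminates any need to know or bound $\varphi(k,r)$ or $\partial_k^m\varphi(k,r)$ for $r<R_V$, which your approach would require and for which the paper provides no estimates. Your obstacle~(i) is therefore not merely ``tedious'' but would demand genuinely new input.

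Two further discrepancies. First, the kernel $Z_{ac}(k,k')$ carries an apparent singularity at $k=k'$ (removable, via Taylor expansion of $S$), and handling it is what forces the paper's four-region decomposition of the $(k,k')$-plane (Fig.~\ref{fig:regions}) with a $\delta$-stripe around the diagonal; your purely one-variable split at $K$ has no counterpart to this, and the choice $\delta=s$ is what fixes the $s$-powers in $c_4$. Second, your account of the $\lambda$-terms is off: the boundary terms from the two integrations by parts \emph{all vanish} (Lemma~\ref{lem:ZacInt}), including at $k=0$ in both the $\lambda=0$ and $\lambda=1$ cases. The $\lambda(c_1t^{-1}+c_2t^{-2})$ contribution arises instead from the \emph{bulk} integral over $[0,K)$, because when $F(0)=0$ neither $Z_{ac}(0,k')$ nor $\hat\psi(0)$ vanishes (Lemmas~\ref{lem:psihat0},~\ref{lem:Z0ac}), so the integrand lacks the extra powers of $k$ needed to extract additional $t$-decay from the factors $(2tk^2+i)^{-j}$.
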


\begin{remark}
For $t>T$, Eq.~\eqref{eq:main_ac} implies 
\begin{equation}
	\|P_{ac}\1_Re^{-iHt}P_{ac}\psi\|_2^2
	\leq \lambda \left(c_1+\frac{c_2}{T}\right) t^{-1} + (1-\lambda) \left(c_3+\frac{c_4}{T}\right) t^{-3},
\end{equation}
which is of the form of Eq.~\eqref{eq:main_heur}.
However, the bound \eqref{eq:main_ac} is preferable because it allows a higher degree of accuracy on intermediate time scales, for example if $c_3\ll c_4$.
\end{remark}

\begin{remark}
The restriction $K>s$ in Theorem~\ref{thm:main_ac} is set to avoid unessential complications in the proof, where we divide the integration region of several integrals according to Fig.~\ref{fig:regions}. This division is easier if $K>\delta$ and since we fix $\delta=s$ in the course of the proof, we end up with the condition $K>s.$ 
Besides this restriction the value of $K$ can be chosen freely, and it influences the size of the constants in Theorem \ref{thm:main_ac}. 
A choice for $K$ meaningful for many potentials that respects the condition $K>s$ is presented in Section~\ref{sec:Discussion}, however for some potentials a value of $K<s$ might lead to better results.
In this case, the restriction $K>s$ can be removed with slight but cumbersome changes in the proof. 
\end{remark}

\begin{remark}
	It is worth observing that $\hat\psi$ depends not only on the initial state, but also on the potential through the generalized eigenfunctions. So in general $\|\1_K{\hat\psi}^{(n)}\|_\infty$ and $\|{\hat\psi}^{(n)}w\|_1$ will depend on $s_K$ and $s,$ too (see Lemma  \ref{lem:psi} for an example).
\end{remark}

\begin{theorem}\label{thm:main_e}
  Let $t,K>0$, $R\geq R_V$, and
\begin{equation}
\lambda  \coloneqq
	\begin{cases}
		0,&\mathrm{if}\;F(0)\neq0\\
		1,&\mathrm{if}\;F(0)=0.
	\end{cases}
\end{equation}   
Then there are constants $c_n>0,$ such that
  \begin{align}
	\|P_{e}\1_Re^{-iHt}P_{ac}\psi\|_2^2
	&\leq \lambda(c_1t^{-1}+c_2t^{-2})+c_3t^{-3}+c_4t^{-4}
  \end{align}
  for all $\psi$ with $\hat\psi^{(m)}\in L^\infty_{loc}\cap L^1_w$ and $m=0,1,2.$ The constants satisfy
  \begin{align}
	c_1&\leq
	\frac{81\pi^2}{2}\frac{|\hat\psi(0)|^2}{\eta_0}z_{e,K}^2(0)N,\displaybreak[0]\\
	c_2&\leq
	\frac{105\pi^2}{4}\left[\frac{|\hat\psi(0)|^2}{\eta_0^3s_K^2}z_{e,K}^2(1)+\frac{\|\1_K\dkp{\hat\psi}\|_\infty^2}{\eta_0}z_{e,K}^2(0)\right]N ,\displaybreak[0]\\
	c_3&\leq
	\Biggl[9\frac{\|\1_K{\hat\psi}\|_\infty^2}{\eta_0^5s_K^4}z_{e,K}^2(2)
	+166\frac{\|\1_K\dkp{\hat\psi}\|_\infty^2}{\eta_0^3s_K^2}z_{e,K}^2(1)
	\nonumber\\
	&\quad+9\|\1_K\ddkp{\hat\psi}\|_\infty^2\frac{z_{e,K}^2(0)}{\eta_0}\Biggr]N ,\displaybreak[0]\\
	c_4&\leq
	\Bigg[\frac{27}{2}\frac{\|\hat\psi w\|_1^2}{\eta_0^5s^4}\left(1+\frac{1}{K^2}\right)^4\left(z_{e}^2(2)+\eta_0^2s^2z_{e}^2(1)+\eta_0^4s^4z_{e}^2(0)\right)\nonumber\\
	&\quad+12\frac{\|\dkp{\hat\psi} w\|_1^2}{\eta_0^3s^2}\left(1+\frac{1}{K^2}\right)^3\left(z_{e}^2(1)+\eta_0^2s^2z_{e}^2(0)\right)\nonumber\\
	&\quad+\frac{9}{8}\frac{\|\ddkp{\hat\psi} w\|_1^2}{\eta_0}\left(1+\frac{1}{K^2}\right)^2z_{e}^2(0)\Bigg]N.
  \end{align}
\end{theorem}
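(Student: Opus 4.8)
The plan is to mirror the proof of Theorem~\ref{thm:main_ac}, the only new ingredient being the replacement of the absolutely continuous projection by the finite-rank eigenprojection $P_e$. First I would write $P_e=\sum_{m=0}^{N-1}|\phi_m\rangle\langle\phi_m|$, where the $\phi_m$ are the $L^2$-normalised bound-state eigenfunctions of $H$ with eigenvalues $-\eta_m^2$; these are multiples of the Jost solutions $f(i\eta_m,\cdot)$ and decay like $e^{-\eta_m r}$ for $r\geq R_V$, with $\eta_0$ the slowest rate. Orthonormality gives
\begin{equation}
\|P_e\1_Re^{-iHt}P_{ac}\psi\|_2^2
= \sum_{m=0}^{N-1}\bigl|\langle \1_R\phi_m,\, e^{-iHt}P_{ac}\psi\rangle\bigr|^2 ,
\end{equation}
so everything reduces to estimating the $N$ scalar overlaps; the factor $N$ in the statement is just the count of terms, and since $\eta_0=\min_m\eta_m$ one can bound every term by the $m=0$ expression.

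Next I would insert the generalized eigenfunction expansion~\eqref{eq:Fourier} of $P_{ac}\psi$, obtaining for each $m$ the oscillatory integral
\begin{equation}
\langle \1_R\phi_m,\, e^{-iHt}P_{ac}\psi\rangle
= \int_0^\infty \hat\psi(k)\,e^{-ik^2t}\,G_m(k)\,dk,
\qquad
G_m(k)\coloneqq \int_0^R\overline{\phi_m(r)}\,\psi^+(k,r)\,dr .
\end{equation}
This is structurally the same object treated in the proof of Theorem~\ref{thm:main_ac}, with the $L^2_r$-valued amplitude $\psi^+(k,\cdot)$ there replaced by the scalar amplitude $G_m(k)$. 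Using $\psi^+(k,r)=k\varphi(k,r)/F(k)$ together with the Hadamard factorization of the Jost function (Lemma~\ref{lem:Hadamard}), the bounds on $\varphi(k,r)$ uniform on $[0,R]$, and the bounds on $1/F$ and its derivatives that also underlie Theorems~\ref{th:SBoundsK} and~\ref{th:GlobalSBounds}, I would prove explicit bounds on $G_m(k)$, $\dkp G_m(k)$, $\ddkp G_m(k)$ on $[0,K)$ and on the corresponding weighted $L^1$ norms on $[K,\infty)$; these are exactly what gets packaged into the constants $z_{e,K}(n)$ and $z_e(n)$ of Definition~\ref{def:smallz}, with the factors $\eta_0$ and $R$ entering through the truncated eigenfunction $\1_R\phi_m$. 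The presence or absence of a zero resonance ($\lambda=1$ or $0$) enters precisely as in Theorem~\ref{thm:main_ac}: since $G_m(k)$ carries the prefactor $k/F(k)$, its behavior at $k=0$ is regular when $F(0)\neq0$ but picks up the extra singular contribution responsible for the slower $t^{-1}$, $t^{-2}$ terms when $F(0)=0$.

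Finally I would run the stationary-phase / repeated integration-by-parts argument on $\int_0^\infty \hat\psi(k)G_m(k)e^{-ik^2t}\,dk$, splitting the $k$-integration into a neighbourhood of the stationary point $k=0$ and the complementary region (as in Fig.~\ref{fig:regions}), passing to the energy variable $E=k^2$ near $0$ to isolate the $t^{-1/2}$ or $t^{-3/2}$ leading behavior, and integrating by parts twice in the outer region to gain two further powers of $t$ at the cost of $\dkp{\hat\psi}$, $\ddkp{\hat\psi}$ and $\dkp G_m$, $\ddkp G_m$, with the weight $w$ controlling the large-$k$ tail. Squaring, summing the $N$ terms, and collecting constants yields the stated bounds on $c_1,\dots,c_4$. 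I expect the main obstacle to be the bookkeeping of explicit constants in the bounds on $G_m$ and its $k$-derivatives — in particular controlling $1/F(k)$ near $k=0$ and near the real parts of the resonances through the Hadamard product — together with the careful separate treatment of the neighbourhood of $k=0$ in the zero-resonance case; the reduction step and the integration by parts away from $k=0$ are routine once Theorems~\ref{th:SBoundsK}–\ref{th:GlobalSBounds} and Lemma~\ref{lem:Hadamard} are available.
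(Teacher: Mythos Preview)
Your overall architecture is right, but you are missing the one step that makes the explicit constants in the statement attainable. You propose to bound
\[
G_m(k)=\int_0^R\overline{\phi_m(r)}\,\psi^+(k,r)\,dr
\]
by writing $\psi^+=k\varphi/F$ and using uniform bounds on $\varphi(k,r)$ and on $1/F(k)$ (via the Hadamard product) over the whole interval $[0,R]$. The paper does not do this, and for good reason: instead it uses the Wronskian identity
\[
\frac{d}{dr}W(\bar\phi_m(r),\psi^+(k,r))=((i\eta_m)^2-k^2)\,\bar\phi_m(r)\,\psi^+(k,r),
\]
which, together with $\psi^+(k,0)=\phi_m(0)=0$, collapses the $r$-integral to a \emph{boundary value} at $r=R\geq R_V$. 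There both $\phi_m(R)=e^{-\eta_m R}$ and $\psi^+(k,R)=\tfrac{1}{2i}(S(k)e^{ikR}-e^{-ikR})$ are explicit, giving the closed form (Lemma~\ref{lem:ZeIntKern})
\[
Z_e(k,m)=\sqrt{\tfrac{\eta_m}{2}}\Bigl[S(k)\,\tfrac{e^{ikR}}{k+i\eta_m}+\tfrac{e^{-ikR}}{k-i\eta_m}\Bigr].
\]
The constants $z_{e,K}(n)$, $z_e(n)$ in Definition~\ref{def:smallz} are read off directly from this formula together with the $S$-matrix bounds of Theorems~\ref{th:SBoundsK}--\ref{th:GlobalSBounds}; no control of $1/F(k)$ or of $\varphi(k,r)$ for $r<R_V$ is ever needed. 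Your direct-integral approach could in principle produce \emph{some} bound, but it cannot recover these specific constants and would require separate control of $|1/F(k)|$ and its $k$-derivatives on the real axis, which the paper never establishes.

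Two smaller points. First, since $Z_e(k,m)$ is a function of $k$ and the discrete index $m$, there is no diagonal singularity $k=k'$; consequently the paper splits only into the two regions $[0,K)$ and $[K,\infty)$ (Eq.~\eqref{eq:thm_maine4}), not the four regions of Fig.~\ref{fig:regions}. Second, the paper does not pass to the energy variable near $k=0$; it stays in $k$ and integrates by parts twice via $e^{-ik^2t}=-\partial_k^2e^{-ik^2t}/(2t(2tk^2+i))$, with the vanishing of the boundary terms at $k=0$ in the $\lambda=1$ case checked by an explicit cancellation between $Z_e(0,m)$, $\dkp Z_e(0,m)$, $\hat\psi(0)$ and $\dkp{\hat\psi}(0)$ (Lemma~\ref{lem:ZeInt}).
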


Together, Theorems~\ref{thm:main_ac} and~\ref{thm:main_e} yield the desired bound on the probability $\|\1_Re^{-iHt}P_{ac}\psi\|_2^2$ to find the particle inside a ball of radius $R,$ because
\begin{align}
   \|\1_Re^{-iHt}P_{ac}\psi\|_2^2=\|P_{ac}\1_Re^{-iHt}P_{ac}\psi\|_2^2+\|P_e\1_Re^{-iHt}P_{ac}\psi\|_2^2.
\end{align}

\begin{remark}
Note that the bounds in our theorems depend on $r_0$ that contains the location of all resonances, and seems therefore difficult to access.
Nevertheless, we will now see that is connected to the scattering length  \citep[see][page 136]{RS3}
\begin{equation}
a=\frac{\dkp S(0)}{2iS(0)},
\end{equation} 
that is experimentally measurable.
From Eq.~\eqref{eq:L} of Lemma \ref{lem:Hadamard}, choosing $k=0$, one immediately gets  \citep[see also][]{Koro}
\begin{equation}\label{eq:ScatteringLengthA}
a 
= - R_V 
	- \sum_{m=0}^{N-1}    \frac {1}  {\eta_m}     
	+ \sum_{l=0}^{N'-1}   \frac {1}  {\kappa_l}
	+ \sum_{n=0}^\infty  \frac {2\beta_n}  { \alpha_n^2 + \beta_n^2}   ,
\end{equation}
which implies
\begin{equation}\label{eq:ScatteringLength}
r_0 
\leq  \frac{5}{2} |a|   
	+ \frac{5}{2} \left | R_V
	+\sum_{m=0}^{N-1}    \frac {1}  {\eta_m}     
	- \sum_{l=0}^{N'-1}   \frac {1}  {\kappa_l}  \right|   .
\end{equation}

Note also that, although the scattering length is physically measured from the scattering cross section at zero energy, it actually depends on all resonances, not only on the first few.\end{remark}

As mentioned in the introduction, we also give a quantitative bound on the number of zeros of the Jost function inside a ball of radius $|k|$.
It is a direct consequence of Lemma \ref{lem:BoundsOnFAndPhi}.
A related result can be found in \cite{Zworski1987,Zworski1989} where the inequality  $n(r)\leq C_n(r+1)^n$ was proven, with $n$ denoting the dimension, but without explicit control over the constant $C_n.$

\begin{lemma}\label{lem:BoundOnNumberOfZeros}
Let  $n(|k|)$ be the number of zeros of the Jost function with modulus  not greater than $|k|$. Then,
\begin{equation}\label{eq:BoundOnNumberOfZeros}
n(|k|)  \leq  
	\frac {1}  {\log 2}
	\left [  4 R_V  |k|   + \log \left(  4\|rV(r)\|_1  e^{4\|rV(r)\|_1}  +1  \right)  \right]   .
\end{equation}
\end{lemma}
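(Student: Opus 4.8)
\textbf{Proof plan for Lemma~\ref{lem:BoundOnNumberOfZeros}.}
The plan is to deduce the zero-counting bound from a Jensen-type estimate, using the quantitative exponential-type bound on the Jost function that (according to the excerpt) is established in Lemma~\ref{lem:BoundsOnFAndPhi}. First I would recall from that Lemma the pointwise bound on $|F(k)|$ for complex $k$: since $F(k) = 1 + \int_0^{R_V} e^{ikr} V(r)\varphi(k,r)\,dr$ and the regular solution obeys $|\varphi(k,r)| \leq C \frac{r}{1+|k|r} e^{|\Im k| r} e^{4\|r'V(r')\|_1}$ (the bound \eqref{eq:BoundPhi}-type estimate used elsewhere in the paper), one gets an inequality of the shape
\begin{equation}
|F(k)| \leq 1 + 4\|rV(r)\|_1\, e^{4\|rV(r)\|_1}\, e^{2R_V|\Im k|} \leq \Big(1 + 4\|rV(r)\|_1 e^{4\|rV(r)\|_1}\Big) e^{2R_V|k|}.
\end{equation}
Call $M \coloneqq 4\|rV(r)\|_1 e^{4\|rV(r)\|_1} + 1$, so that $|F(k)| \leq M e^{2R_V|k|}$ on all of $\C$, and note $F(0) = 1$ (the boundary conditions on $f$ and $\varphi$ give $F(0)=W(f(0,r),\varphi(0,r))$, which is normalized to $1$; if instead one only knows $F(0)\neq 0$ a harmless rescaling handles it).

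Next I would apply the standard Jensen-inequality consequence for the counting function: for an entire function $g$ with $g(0)\neq 0$, the number $n(r)$ of zeros in $\{|k|\leq r\}$ satisfies $n(r)\log 2 \leq \log\frac{1}{|g(0)|} + \log\big(\max_{|k|=2r}|g(k)|\big)$, which follows from $\int_0^{2r} \frac{n(\rho)}{\rho}\,d\rho \geq n(r)\log 2$ together with Jensen's formula. Taking $g=F$, $|F(0)|=1$, and inserting the exponential-type bound at radius $2r$ with $r=|k|$ gives
\begin{equation}
n(|k|)\log 2 \leq \log\big(M e^{2R_V\cdot 2|k|}\big) = 4R_V|k| + \log M,
\end{equation}
which is exactly \eqref{eq:BoundOnNumberOfZeros} after dividing by $\log 2$ and writing $M$ out. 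So the argument is short once the growth bound on $F$ is in hand.

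The main obstacle is therefore not the counting step itself but making sure the input is clean: I need the explicit constant in the exponential-type estimate for $F$ to match the constants appearing in \eqref{eq:BoundOnNumberOfZeros}, in particular that the exponential rate is $2R_V$ (coming from the support $[0,R_V]$ and the factor $e^{|\Im k|r}$ in the Volterra estimate for $\varphi$ evaluated at $r\leq R_V$, doubled when one passes from radius $r$ to $2r$ in Jensen), and that the prefactor is precisely $4\|rV(r)\|_1 e^{4\|rV(r)\|_1}+1$. If Lemma~\ref{lem:BoundsOnFAndPhi} is stated with the bound $|\varphi(k,r)|\leq 4 e^{4\|r'V(r')\|_1}\frac{r}{1+|k|r}e^{|\Im k|r}$, then $|\int_0^{R_V} e^{ikr}V(r)\varphi(k,r)\,dr| \leq 4 e^{4\|rV\|_1}\int_0^{R_V}|V(r)|\frac{r}{1+|k|r}e^{2|\Im k|r}\,dr \leq 4\|rV\|_1 e^{4\|rV\|_1} e^{2R_V|k|}$, giving exactly the claimed $M$. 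The only care needed is the elementary bound $e^{2|\Im k|r}\leq e^{2R_V|k|}$ for $r\leq R_V$ and $|\Im k|\leq |k|$, and the $\frac{r}{1+|k|r}\leq r$ bound so that $\int_0^{R_V} r|V(r)|\,dr = \|rV(r)\|_1$. Everything else is the standard Jensen/counting-function lemma, which I would cite or reprove in two lines.
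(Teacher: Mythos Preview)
Your approach is exactly the paper's: the proof there simply combines the bound $|F(k)|\le (4\|rV\|_1 e^{4\|rV\|_1}+1)e^{2R_V|k|}$ from Lemma~\ref{lem:BoundsOnFAndPhi} with the Jensen-type inequality $n(|k|)\log 2 \le \log\max_\theta |F(2|k|e^{i\theta})|$ (cited as Eq.~(2.5.11) in Boas). You have spelled out both ingredients rather than citing them, but the argument is the same.

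One small correction: your claim that $F(0)=1$ is not right. From $F(k)=f(k,0)$ one has $F(0)=f(0,0)$, which depends on the potential; equivalently, the integral representation gives $F(0)=1+\int_0^{R_V}V(r)\varphi(0,r)\,dr$, which is not $1$ in general and can even vanish (the zero-resonance case). The Jensen step therefore carries an extra $-\log|F(0)|$ term in general, and this is not absorbed by a ``harmless rescaling'' without changing the constant. The paper's terse proof does not discuss this point either, so your write-up is no less complete than the original in this respect.
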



\section[{Application of the main result\\\qquad to meta-stable states}]{Application of the main result\\to meta-stable states}\label{sec:Discussion}
We consider as example the alpha-decay of long-lived elements treated by Skibsted in~\cite{Skibsted86}.
There the meta-stable state is modeled via the truncated Gamow function $f_R\coloneqq \1_Rf(k_0,\cdot)$ associated to the first resonance $k_0=\alpha_0-i\beta_0,$ with $\alpha_0,\beta_0>0$ and $f$ defined by Eq.\ \eqref{eq:fboundary}. 
Skibsted showed in \cite{Skibsted86} that the velocity with which the alpha-particle escapes the nucleus is $2\alpha_0$, while the lifetime of the meta-stable state is $(4\alpha_0\beta_0)^{-1}$. 
Comparison with empirical data shows that  $\alpha_0\approx 1$, while the lifetime is very large and therefore $\beta_0\ll 1$.

Let us determine the norms $\|\hat f_R^{(n)}w\|_1$ and $\|\1_K\hat f_R^{(n)}\|_\infty$ that appear in Theorems \ref{thm:main_ac} and \ref{thm:main_e}.
\begin{lemma}\label{lem:psi}
  Let $R\geq R_V,$ $K\in[0,\tfrac{\alpha_0}{2}),$ then the truncated Gamow function $f_R\coloneqq \1_R f(k_0,\cdot)$ satisfies
  \begin{align}
	\|\1_K\hat f_R\|_\infty
	&\leq e^{\beta_0R}\frac{2}{\alpha_0} ,
\label{eq:psiK.0}
	\displaybreak[0]\\
	\|\1_K\dkp{\hat f}_R\|_\infty
	&\leq e^{\beta_0R}\left[\frac{2^2}{\alpha_0^2}+\frac{1}{\alpha_0}\left(2R+\frac{C_{1,K}}{s_K}\right)\right] ,
\label{eq:psiK.1}
	\displaybreak[0]\\
	\|\1_K\ddkp{\hat f}_R\|_\infty
	&\leq e^{\beta_0R}\Biggl[\frac{2^4}{\alpha_0^3}+\left(2R+\frac{C_{1,K}}{s_K}\right)\frac{2^2}{\alpha_0^2}
	\nonumber\\	
	&\qquad+\left(R^2+R\frac{C_{1,K}}{s_K}+\frac{C_{2,K}}{2s_K^2}\right)\frac{2}{\alpha_0}\Biggr] ,
\label{eq:psiK.2}
  \end{align}
  and
  \begin{align}
	\|\hat f_R w\|_1
	&\leq e^{\beta_0R}\left[2\log\left(\frac{2}{\beta_0}\right)+\frac{\pi}{2}\right] ,
\label{eq:psi1.0}
\displaybreak[0]\\
	\|\dkp{\hat f}_R w\|_1
	&\leq e^{\beta_0R}\left[\left(2\log\left(\frac{2}{\beta_0}\right)+\frac{\pi}{2}\right)\left(R+\frac{C_1}{2s}\right)+\frac{\pi}{\beta_0}\right] ,
\label{eq:psi1.1}
	\displaybreak[0]\\
	\|\ddkp{\hat f}_R w\|_1
	&\leq e^{\beta_0R}\Biggl[\left(2\log\left(\frac{2}{\beta_0}\right)+\frac{\pi}{2}\right)\left(R^2+\frac{C_1}{s}R+\frac{C_2}{2s^2}\right)
	\nonumber\\
	&\qquad+\frac{\pi}{\beta_0}\left(2R+\frac{C_1}{s}\right)+\frac{4}{\beta_0^2}\Biggr].
\label{eq:psi1.2}
  \end{align}
  Moreover, if there is a zero resonance, we have
  \begin{align}\label{eq:psi00}
	|\hat f_R(0)|=\frac{e^{\beta_0R}}{\sqrt{\alpha_0^2+\beta_0^2}}.
  \end{align}
\end{lemma}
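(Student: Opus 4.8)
The plan is to work directly from the explicit formula for the generalized Fourier transform of the truncated Gamow function given by Skibsted, Eq.~\eqref{eq:SkibstedDistribution} (valid under the standing assumption $R\geq R_V$),
\[
\hat f_R(k) = -\frac12\left[\frac{e^{i(k_0-k)R}}{k-k_0}\,\bar S(k) + \frac{e^{i(k_0+k)R}}{k+k_0}\right],
\]
and to estimate $\hat f_R,\dkp\hat f_R,\ddkp\hat f_R$ by brute differentiation, using the following elementary facts valid for real $k\geq 0$: $|e^{i(k_0\mp k)R}| = e^{\beta_0 R}$ (since $\Im k_0 = -\beta_0$); $|\bar S(k)| = |S(k)| = 1$ by unitarity, and $|\dkp^m\bar S(k)| = |\dkp^m S(k)|$ for real $k$ (a point about the meaning of $\bar S$ I would fix at the outset); $|k\mp k_0| = \sqrt{(k-\alpha_0)^2+\beta_0^2}$ with $|k+k_0|\geq|k-k_0|$ for $k\geq 0$; and the $S$-matrix derivative bounds $\|\1_K\dkp S\|_\infty\leq C_{1,K}/s_K$, $\|\1_K\ddkp S\|_\infty\leq C_{2,K}/s_K^2$ from Theorem~\ref{th:SBoundsK} and $\|\dkp S\|_\infty\leq C_1/s$, $\|\ddkp S\|_\infty\leq C_2/s^2$ from Theorem~\ref{th:GlobalSBounds}. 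Each $\dkp$ applied to $e^{i(k_0\mp k)R}$ pulls out a factor $\mp iR$, each $\dkp$ applied to $(k\mp k_0)^{-1}$ raises the pole order by one, and the derivatives of the $\bar S$-factor are controlled by the constants above; the Leibniz rule then produces a fixed, small number of terms of the shape $R^{a}\,|\dkp^{b}\bar S(k)|\,|k\mp k_0|^{-j}$.

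For the $L^\infty$ bounds on $[0,K)$: the restriction $K<\alpha_0/2$ forces $|k\mp k_0|\geq\alpha_0-k>\alpha_0/2$, so every pole factor $|k\mp k_0|^{-j}$ is bounded by $(2/\alpha_0)^j$. Collecting the terms from differentiating $\hat f_R$ zero, one, and two times, bounding each as above and using $|k+k_0|^{-1}\leq|k-k_0|^{-1}\leq 2/\alpha_0$, yields \eqref{eq:psiK.0}--\eqref{eq:psiK.2} directly; the terms in which a derivative hits $\bar S$ produce exactly the summands containing $C_{1,K}/s_K$ and $C_{2,K}/s_K^2$.

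For the $L^1_w$ bounds on $[0,\infty)$: I would again use $|k+k_0|^{-1}\leq|k-k_0|^{-1}$ to reduce every term to the model form $R^{a}\,|\dkp^{b}S(k)|\,|k-k_0|^{-j}\,(1+k^2)^{-1}$ with $b\in\{0,1,2\}$, $j\in\{1,2,3\}$, pull out $\|S^{(b)}\|_\infty\leq C_b/s^{b}$ (with $C_0:=1$, $\|S\|_\infty=1$), and then estimate $I_j:=\int_0^\infty|k-k_0|^{-j}(1+k^2)^{-1}\,dk$ by splitting the integration at the curve $(k-\alpha_0)^2+\beta_0^2=1$. On the inner region one drops the weight ($(1+k^2)^{-1}\leq1$) and evaluates the elementary integrals $\int_{-\sqrt{1-\beta_0^2}}^{\sqrt{1-\beta_0^2}}(u^2+\beta_0^2)^{-j/2}\,du$: for $j=1$ this equals $2\operatorname{arcsinh}\!\big(\sqrt{1-\beta_0^2}/\beta_0\big)=2\log\!\big((1+\sqrt{1-\beta_0^2})/\beta_0\big)\leq 2\log(2/\beta_0)$; for $j=2$ it is $\leq\pi/\beta_0$; for $j=3$ it is $\leq 2/\beta_0^2$. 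On the outer region $|k-k_0|^{-j}\leq1$ and $\int_0^\infty(1+k^2)^{-1}\,dk=\pi/2$. Assembling these (and absorbing the generously estimated $\pi/2$ contributions into the neighbouring terms) gives \eqref{eq:psi1.0}--\eqref{eq:psi1.2}. Finally, for \eqref{eq:psi00}: evaluating the explicit formula at $k=0$ gives $\hat f_R(0) = -\tfrac{e^{ik_0R}}{2k_0}\big(1-\bar S(0)\big)$; in the presence of a zero resonance $F$ has a simple zero at the origin, so $S(0)=\lim_{k\to0}F(-k)/F(k)=-1$, hence $\bar S(0)=-1$ and $|\hat f_R(0)| = e^{\beta_0 R}/|k_0| = e^{\beta_0 R}/\sqrt{\alpha_0^2+\beta_0^2}$.

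The proof is conceptual only at one spot — the identity $S(0)=-1$ at a zero resonance; everything else is bookkeeping. Accordingly, the main obstacle is organizational: keeping track of which $S$-derivative bound applies where ($C_{n,K}/s_K^n$ on $[0,K)$ versus $C_n/s^n$ globally), which power of $2/\alpha_0$ or which model integral $I_j$ belongs to each of the many Leibniz terms, and assembling the collected coefficients into exactly the stated right-hand sides without arithmetic slips.
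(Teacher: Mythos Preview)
Your proposal is correct and follows essentially the same approach as the paper: start from Skibsted's explicit formula \eqref{eq:SkibstedDistribution}, differentiate by Leibniz, use $|k\pm k_0|\geq\alpha_0/2$ on $[0,K)$ together with Theorem~\ref{th:SBoundsK} for the $L^\infty$ bounds, split around $|k-k_0|=1$ with Theorem~\ref{th:GlobalSBounds} for the $L^1_w$ bounds, and invoke $S(0)=-1$ at a zero resonance for \eqref{eq:psi00}. The only deviation is in the weighted integrals with $j\geq2$: the paper does not split but simply drops $w\leq1$ and extends to all of $\mathbb R$ (e.g.\ $\int_0^\infty|k-k_0|^{-2}w\,dk\leq\int_{-\infty}^\infty|k-k_0|^{-2}\,dk=\pi/\beta_0$), which yields the stated constants $\pi/\beta_0$ and $4/\beta_0^2$ directly rather than leaving extra $\pi/2$ tails to ``absorb''.
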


\begin{proof}
  Lemma 3.2 in~\cite{Skibsted86} shows that
  \begin{align}\label{eq:lem_skib}
	\hat f_R(k)=-\frac{1}{2}\left[\frac{e^{i(k_0-k)R}}{k-k_0}\bar S(k)+\frac{e^{i(k_0+k)R}}{k+k_0}\right]
  \end{align}
  with $k_0=\alpha_0-i\beta_0.$ From this we can already conclude that for $k\in[0,K]$
  \begin{align}
	|\hat f_R|
	\leq\frac{e^{\beta_0R}}{2}\left[\frac{1}{|k-k_0|}+\frac{1}{|k+k_0|}\right]
	\leq e^{\beta_0R}\frac{1}{|k-k_0|}
	\leq e^{\beta_0R}\frac{2}{\alpha_0},
  \end{align}
  which proves Eq.~\eqref{eq:psiK.0}. Moreover, we immediately obtain
  \begin{align}
	\dkp{\hat f}_R(k)
	&=\frac{1}{2}\frac{e^{i(k_0-k)R}}{(k-k_0)^2}\left[(1+iR(k-k_0))\bar S(k)
	-(k-k_0)\dkp{\bar S}(k)\right]
	\nonumber\\
	&\quad+\frac{1}{2}\frac{e^{i(k_0+k)R}}{(k+k_0)^2}(1-iR(k+k_0)),
	\\
	\ddkp{\hat f}_R(k)
	&=-\frac{1}{2}\frac{e^{i(k_0-k)R}}{(k-k_0)^3}\bigg[\left(1+iR(k-k_0)-\frac{R^2}{2}(k-k_0)^2\right)\bar S(k)\nonumber\\
	&\quad-(1+iR(k-k_0))\dkp{\bar S}(k)(k-k_0)
	+\frac{1}{2}\ddkp{\bar S}(k)(k-k_0)^2\bigg]\nonumber\\
	&\quad-\frac{1}{2}\frac{e^{i(k_0+k)R}}{(k+k_0)^3}\left(1-iR(k+k_0)-\frac{R^2}{2}(k+k_0)^2\right),
  \end{align}
  and this implies, along the same lines as before, Eqs.~\eqref{eq:psiK.1} and \eqref{eq:psiK.2}.
  
  
  Now, let us consider
  \begin{align}
	\|\hat f_R w\|_1\leq e^{\beta_0R}\int_0^\infty\frac{1}{|k-k_0|}w(k)\,dk.
  \end{align}
  The weight function $w$ is needed for the integral to converge, while it is unessential in the region around $k=\alpha_0,$ where $|k-k_0|^{-1}$ is biggest. Hence, we split the integral in a region where $|k-k_0|^{-1}\geq 1$, i.e. the interval $[\alpha_0-(1-\beta_0^2)^{1/2},\alpha_0+(1-\beta_0^2)^{1/2}]$, and the rest. If we call the rest $B,$ we have
  \begin{align}
	\|\hat f_R w\|_1
	&\leq e^{\beta_0R}\int_{\alpha_0-\sqrt{1-\beta_0^2}}^{\alpha_0+\sqrt{1-\beta_0^2}}\frac{1}{|k-k_0|}w(k)\,dk
	\nonumber\\
	&\qquad+e^{\beta_0R}\int_B\frac{1}{|k-k_0|}w(k)\,dk\\
	&\leq e^{\beta_0R}\int_{\alpha_0-\sqrt{1-\beta_0^2}}^{\alpha_0+\sqrt{1-\beta_0^2}}\frac{1}{|k-k_0|}\,dk
	+e^{\beta_0R}\int_0^\infty w(k)\,dk\\
	&=e^{\beta_0R}\left[2\log\left(\frac{1}{\beta_0}+\frac{1}{\beta_0}\sqrt{1-\beta_0^2}\right)+\frac{\pi}{2}\right]\\
	&\leq e^{\beta_0R}\left[2\log\left(\frac{2}{\beta_0}\right)+\frac{\pi}{2}\right]
  \end{align}
  confirming Eq.~\eqref{eq:psi1.0}. Similarly, we get
  \begin{align}
	\|\dkp{\hat f}_R w\|_1
	\leq\frac{e^{\beta_0R}}{2}\left[\int_0^\infty\frac{2}{|k-k_0|^2}w(k)\,dk
	+\int_0^\infty\frac{2R+|\dkp S(k)|}{|k-k_0|}w(k)\,dk\right].
  \end{align}
  The second integral can be estimated in the same way as $\|\hat f_R w\|_1$ by using Theorem~\ref{th:GlobalSBounds} on the S-Matrix. The first integral satisfies
  \begin{align}
	\int_0^\infty\frac{2}{|k-k_0|^2}w(k)\,dk
	\leq\int_{-\infty}^\infty\frac{2}{|k-k_0|^2}\,dk
	=\frac{2\pi}{\beta_0}.
  \end{align}
  Hence, Eq.~\eqref{eq:psi1.1} and analogously Eq.~\eqref{eq:psi1.2}.
  
  To derive Eq.~\eqref{eq:psi00}, we only need to evaluate Eq.~\eqref{eq:lem_skib} at $k=0$ and use the fact that $S(0)=-1$ in the presence of a zero resonance, which has been shown in~\cite[page 356]{Newton1966}.
\QED
\end{proof}

\pagebreak

\subsection{Uranium 238}\label{sec:Uranium}

\begin{figure}
  \centering
  \includegraphics[width=.6\textwidth]{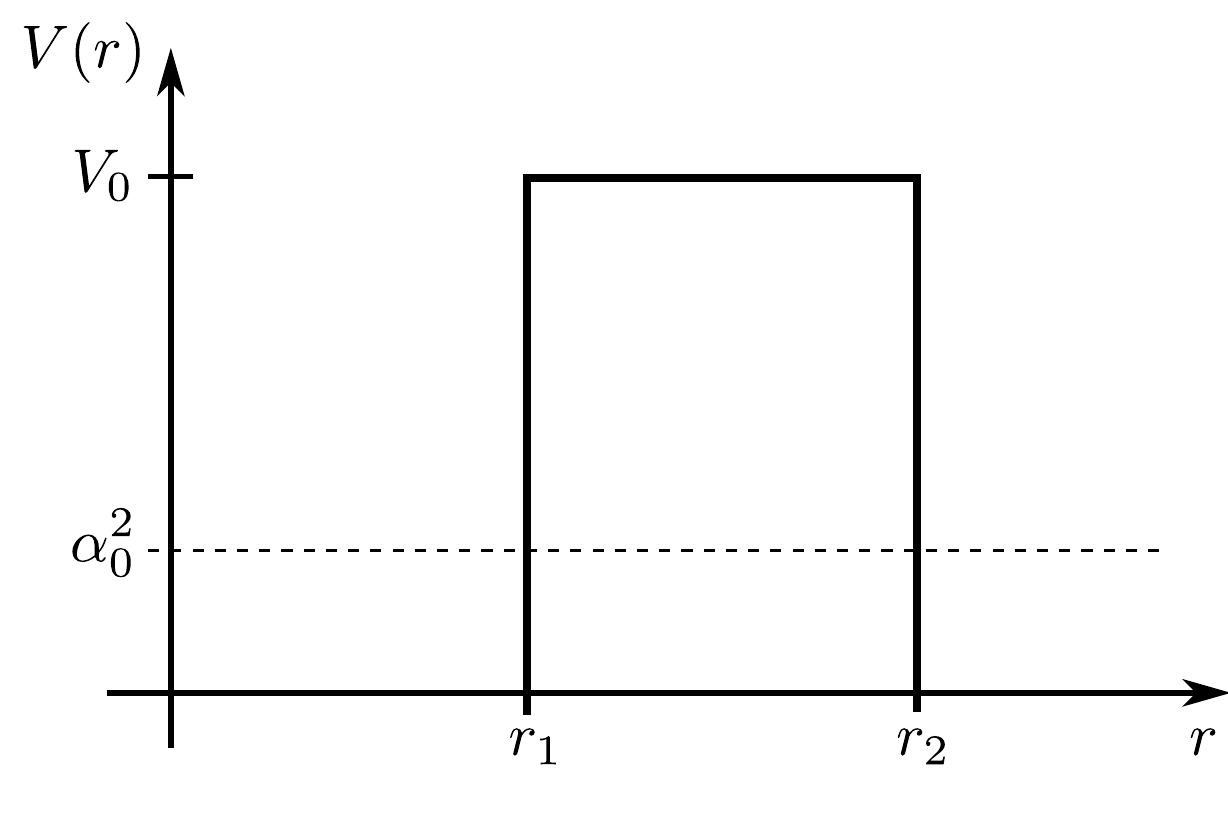}
  \caption{Potential barrier.}
  \label{fig:potential}
\end{figure}

For concreteness we consider now the alpha decay of Uranium~238, which we model using the potential shown in Fig.~\ref{fig:potential}. 
As parameter values we choose $r_1=1,$ $r_2=R_V=3$ and $V_0=480,$ because in natural units they correspond to the nuclear radius of Uranium, three times the nuclear radius and approximately the strength of the Coulomb repulsion $V_{\text{Coulomb}}=36\,\mathrm{MeV}$ experienced by an alpha particle sitting at $r=r_1$ (in SI-units we have $V_0=48\,\mathrm{MeV}$, $r_1=7.2 \,\mathrm{fm}$, and $r_2=21.6\,\mathrm{fm}$). The parameter $V_0$ was chosen, so that the decay rate $4\alpha_0\beta_0$ of the first resonance is in good agreement with the empirically measured decay rate, as is discussed later.

It is clear that the potential does not have any bound states, therefore we have to consider only Theorem~\ref{thm:main_ac}, with $P_{ac} = \1$.
This theorem provides an estimate on the survival probability once the radius $R$ is understood as the radius of a detector waiting for the alpha particle to hit it. Therefore, we use the value $ R=1.4\times 10^{14}$, that corresponds to $1\,\mathrm{m}$.
It should be noted that to have a probability it is necessary to divide both sides of Eq.~\eqref{eq:main_ac} by the $L^2$-norm of the initial wave function.

Due to the simplicity of the potential that we consider, we can determine the Jost function explicitly; it reads
\begin{align}\label{eq:JostBarrier}
  F(k)=
  &e^{ikr_2}\bigg[e^{-ikr_1}\cos((r_2-r_1)\sqrt{k^2-V_0})\nonumber\\
  &-i\frac{k}{\sqrt{k^2-V_0}}\cos(kr_1)\sin((r_2-r_1)\sqrt{k^2-V_0})\nonumber\\
  &-\frac{\sqrt{k^2-V_0}}{k}\sin(kr_1)\sin((r_2-r_1)\sqrt{k^2-V_0})\bigg],
\end{align}
and from this all parameters that appear in the bounds of Theorem~\ref{thm:main_ac} can be determined. 

The first resonance numerically calculates to
\begin{align}
  k_0=3.0040-i\,1.4068\times 10^{-39}.
\end{align}
The decay rate $4\alpha_0\beta_0$ in SI-units is then $2.5682\times 10^{-18}\, \mathrm{s}^{-1},$ which is in good agreement with the experimental value $4.9160\times 10^{-18} \,\mathrm s^{-1}$ and thereby justifies our choice of parameters.

\begin{figure}
\hfill\subfloat[\label{fig:virtualstates}]{\includegraphics [width=.45\textwidth]{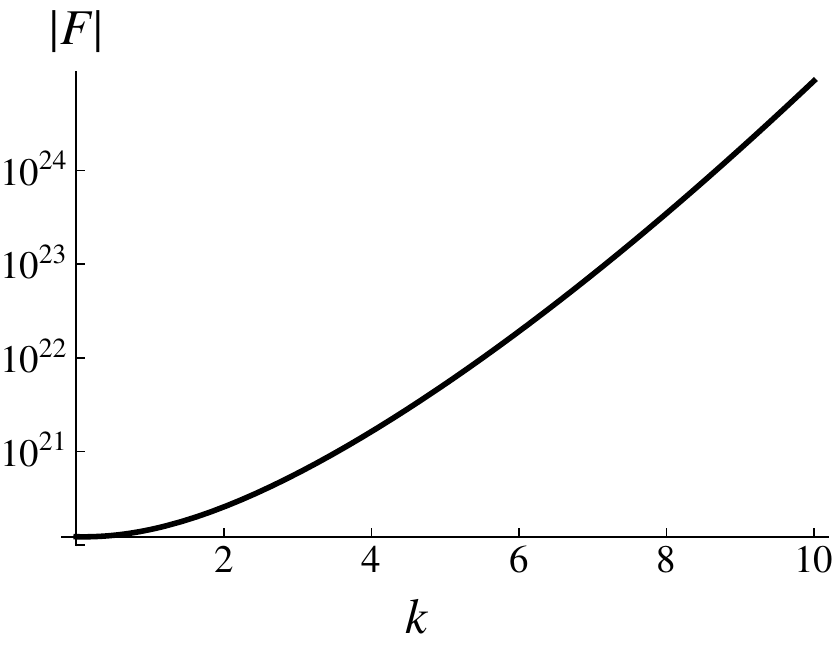}}%
\hfill\mbox{}\\
\hfill\subfloat[\label{fig:Ntilde}]{\includegraphics [width=.4\textwidth]{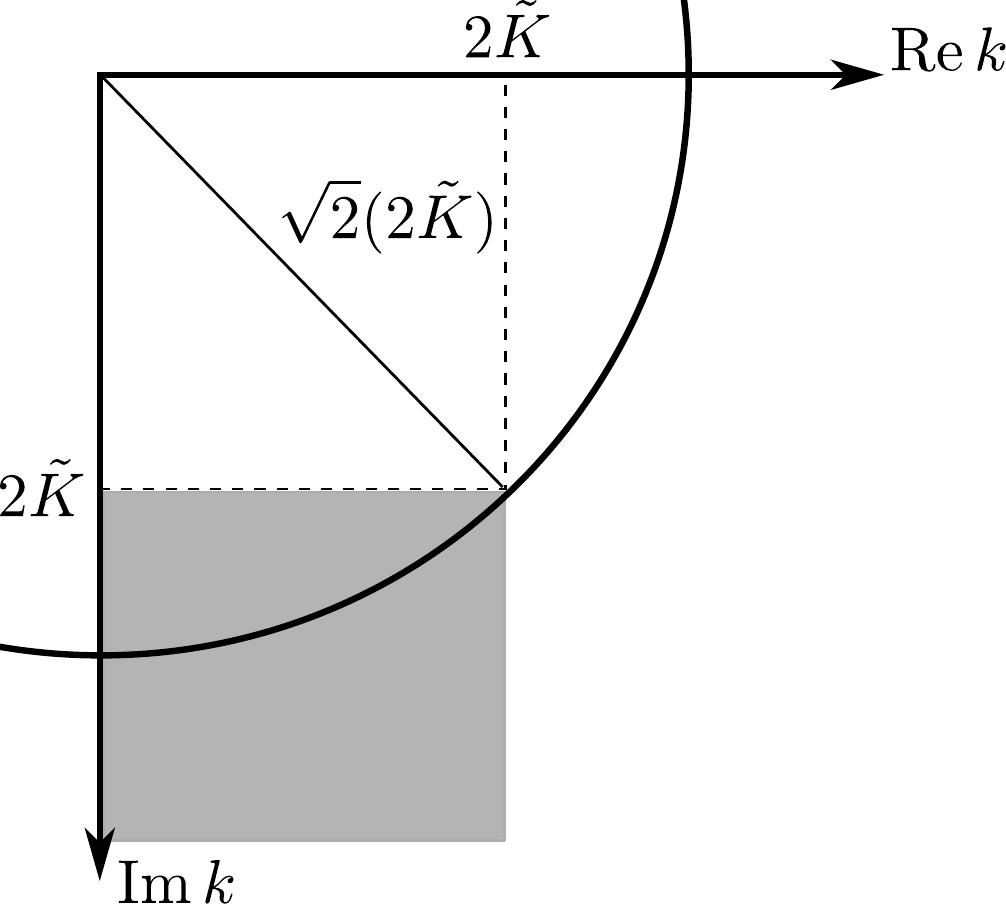}}%
\hfill\subfloat[\label{fig:Ntilde2}]{\includegraphics [width=.5\textwidth]{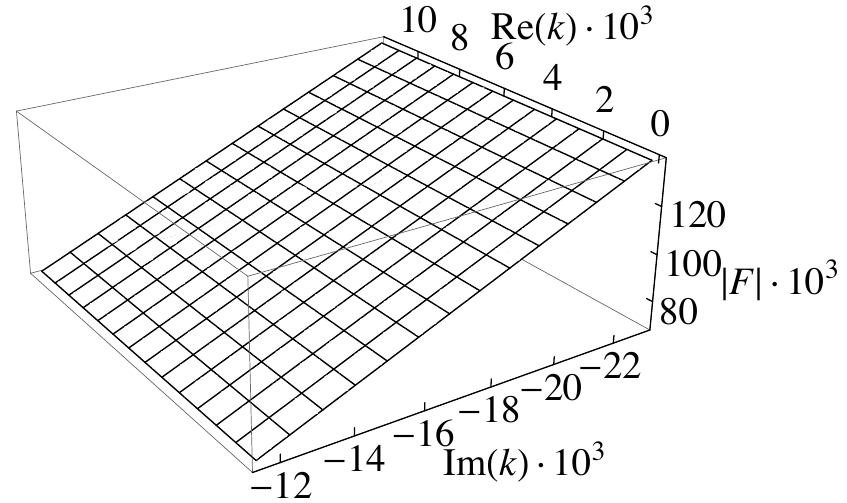}}%
\hfill\mbox{}%
\caption[]{%
\subref{fig:virtualstates}~Plot of $|F(-ik)|$ for $k\geq 0,$ showing that the model potential for Uranium 238 does neither have virtual states nor a zero resonance. \subref{fig:Ntilde}~Plot of the complex $k$-plane, to illustrate how the number of zeros $n(\sqrt 2(2\tilde K))$ in the ball of radius $\sqrt 2(2\tilde K)$ can be used to estimate $\nu_{\tilde K}$.  
\subref{fig:Ntilde2}~Plot of $|F(k)|$ in the shaded region shown in~\subref{fig:Ntilde}.}
\end{figure}

We need to know whether the potential has any virtual states or a zero resonance. For this purpose we plot $|F(-ik)|$ for $k\geq0.$ From Fig.~\ref{fig:virtualstates} it can be seen that the Jost function $F(k)$ does not have zeros on the negative imaginary axis, so that the potential has neither virtual states nor a zero resonance. 

Letting $\nu_K$ be the smallest integer such that $\alpha_n\geq 2K$ for all $n\geq\nu_K$, the parameter $s_K$ that appears in Theorem \ref{thm:main_ac} is in our case
\begin{align}
  \frac{1}{s_K} = 
  \left\{
  	\begin{aligned}
		&1, && \text{if } \nu_K =0,
		\\
		&\sum_{n=0}^{\nu_K-1}   \frac {1}  {\beta_n} , &&\text{otherwise.}
	\end{aligned}
  \right.
\end{align}
To minimize $1/s_K$ we therefore choose $K=\alpha_0/4,$ so $\nu_K=0$ and $1/s_K=1$. 
Similarly, with  $\nu_{\tilde K}$ being the smallest integer such that $\alpha_n\geq 2\tilde K=12\|V\|_1$ for all $n\geq\nu_{\tilde K}$, we have
\begin{align}
  \frac{1}{s}
  = \sum_{n=0}^{\nu_{\tilde K}-1}   \frac {1}  {\beta_n} 
  \leq\frac{\nu_{\tilde K}}{\beta_0},
\label{eq:OneOverS}
\end{align}
under the assumption that $\beta_0\leq\beta_n$ for all $0<n<\nu_{\tilde K}$. 
Therefore, we need a handle on $\nu_{\tilde K}$.
Lemma~\ref{lem:BoundOnNumberOfZeros} is of help here because it gives a bound on the number of zeros $n(r)$ in the ball of radius $r.$ Since $\nu_{\tilde K}$ is the number of zeros in $\{z|\Re z\leq2\tilde K,\Im z\leq0\},$ we need to ensure that there are no zeros in the shaded region shown in Fig.~\ref{fig:Ntilde} below the ball of the radius $\sqrt 2(2\tilde K).$ As can be seen from Fig.~\ref{fig:Ntilde2}, $|F(k)|>0$ in this region, so that
\begin{align}
	\nu_{\tilde K}
	&\leq n(2^{\frac{3}{2}}\tilde K)
	\\
	&\leq\frac{1}{\log 2}\left[4R_V2^{\frac{3}{2}}\tilde K+\log\left(4\|rV(r)\|_1e^{4\|rV(r)\|_1}+1\right)\right]
	\\
	&=2.9314\times 10^{5} ,
\end{align}
and
\begin{equation}
\frac{1}{s} \leq 2.0837\times 10^{44}.
\end{equation}

\begin{figure}
 \centering
 \includegraphics[width=.6\textwidth]{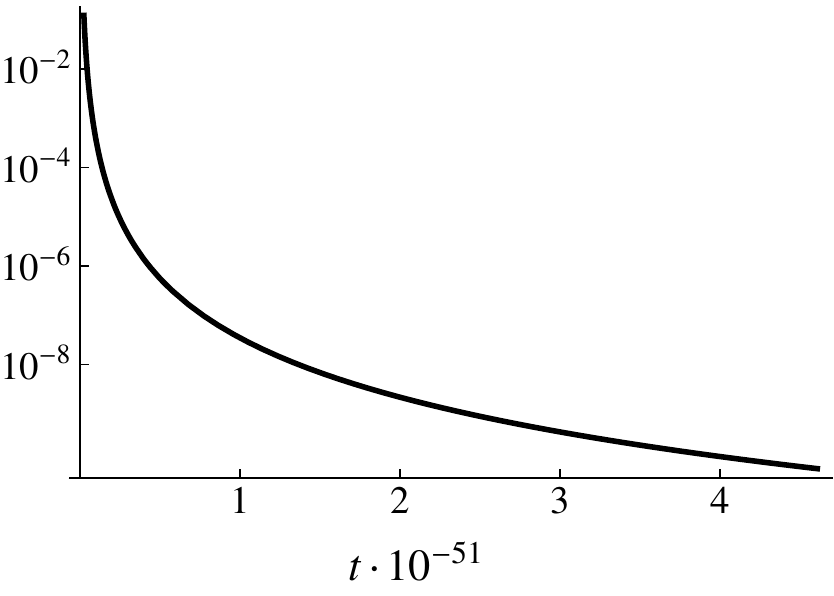}
\caption{\\\mbox{}\qquad Plot of  $\frac{1}{ \| f_R\|_2^2}  (c_3 t^{-3}+c_4 t^{-4})$ for $\frac{1}{10}\,(4\alpha_0\beta_0)^{-4/3}<t<20(4\alpha_0\beta_0)^{-4/3}.$}
 \label{fig:bound}
\end{figure}

To calculate $r_0$, consider Eqs.~\eqref{eq:L} and~\eqref{eq:r0}, from which we get
\begin{align}
  r_0=\frac{5}{2}\left(R_V-\Im\frac{\dkp F(0)}{F(0)}\right)=0.1141.
\end{align}
Together with $\|V\|_1=V_0(r_2-r_1)=960$ we now have everything that is needed to calculate the constants appearing in Theorems~\ref{th:SBoundsK} and \ref{th:GlobalSBounds}. They are given by
\begin{align}
	C_{1,K}&=8.2282
	&C_{2,K}&=89.8853
	&C_{3,K}&=1109.6900\\
	C_1&=2.0000
	&C_2&=12.0000
	&C_3&=60.0000,
\end{align}
where we have assumed that $\alpha_0=\min_{n>0}\{\alpha_n\}_n$. 
Using these values, Definition~\ref{def:smallz}, and Lemma~\ref{lem:psi} we can finally calculate
\begin{align}
	c_3&=3.3519\times 10^{89},
	\\
	c_4&=1.2293\times 10^{235}.
\end{align}
From Theorem \ref{thm:main_ac} we then have the bound on the survival probability 
\begin{align}\label{eq:DiscA}
  \frac{ \|\1_Re^{-iHt}f_R\|_2^2}  { \| f_R\|_2^2}
  	\leq \frac {c_3} { \| f_R\|_2^2}  t^{-3}
		+ \frac {c_4} { \| f_R\|_2^2}   t^{-4},
\end{align}
with \cite[Lemma 3.1]{Skibsted86}
\begin{equation}
{ \| f_R\|_2^2} = \frac {e^{2\beta_0 R}} {2\beta_0} = 3.5541\times 10^{38}.
\end{equation} 
Figure~\ref{fig:bound}  shows that the bound \eqref{eq:DiscA} becomes useful for $t>(4\alpha_0\beta_0)^{-4/3}$ with $(4\alpha_0\beta_0)^{-1}$ being one lifetime.

Note that, in contrast to the fact that $1/s\gg1$, we find that
\begin{align}
  z_{ac}(0)
  &=1+\frac{1}{2}\left(2Rs+2(1+2sR_V+sr_0)\right)
  	=5.1141,
\end{align}
and similarly all other parameters $z_{ac,K}(m),$ $z_{ac}(m)$ and $z_{e}(m)$ are much smaller than $1/s$.
Therefore, the bounds on the constants $c_3$ and $c_4$ are dominated by $1/s$, while the parameters $z_{ac,K}(m),$ $z_{ac}(m)$ and $z_{e}(m)$ play a minor role.

\vfill\newpage


\section{Proof of Theorem~\ref{th:SBoundsK}}\label{sec:SMatrixK}
For  $K>0$, $s_K$ defined in Eq.~\eqref{eq:DefSK}, and $n\in\{1,2,3\}$, we want to establish the bounds
\begin{align}
\|\1_K S^{(n)}\|_\infty  &\leq  C_{n,K}s_K^{-n} .
\end{align}
Our starting point is the expression of the $S$-matrix in terms of the Jost function $F$
\begin{equation}
S(k)=\frac{F(-k)}{F(k)}.
\end{equation}
We will exploit the fact that $F$ is an entire function, which implies that it is possible to write it as a product of factors that depend only on the location of the zeros.
Such a representation is called Hadamard factorization, and it is the main tool we will use to prove Theorem~\ref{th:SBoundsK}.

In order to write the Hadamard factorization of the Jost function, we need to determine some important parameters: the \emph{order},  the \emph{type} \citep[page 8]{Boas1954}, the \emph{convergence exponent of its zeros}, and  the \emph{genus of its zeros} \citep[page 14]{Boas1954}.
We recall their definitions here.
For an entire function $f$, let 
\begin{equation}
M(|k|)\coloneqq \sup_{\theta\in[0,2\pi]} | f( |k| e^{i\theta} ) |.
\end{equation}
The function $f$ is order $\rho$ ($0\leq\rho\leq\infty$) if and only if for every positive $\epsilon$, but for no negative~$\epsilon$
\begin{equation}
M(|k|)  =  O\left(  e^{|k|^{\rho+\epsilon}}  \right)  ,
	\qquad \text{as }{|k|\to\infty}.
\end{equation}
If the order of $f$ is finite and not zero, then $f$ is of finite type $\tau$ ($0\leq\tau\leq\infty$) if and only if for every positive $\epsilon$, but for no negative $\epsilon$
\begin{equation}
M(|k|)  =  O\left(  e^{(\tau+\epsilon) |k|^{\rho}}  \right)  ,
	\qquad \text{as }{|k|\to\infty}.
\end{equation}
For example, the function $e^k$ is of order one and type one.
An entire function $f$ of order one and finite type or of order less than one is  said to be of  \emph{exponential type}.
Let $z_n$ be the zeros of the entire function $f$ not lying on the origin.
Their convergence exponent is defined as the infimum of the positive numbers $\alpha$ such that  
\begin{equation}\label{eq:ConvergenceExponent}
\sum_{n=0}^\infty  \frac{1}{|z_n|^\alpha}  <  \infty  ,
\end{equation}
while their genus is the smallest integer $p\geq0$ such that Eq.\  \eqref{eq:ConvergenceExponent}  is verified for $\alpha=p+1$.
Given the zeros of $f$, consider the products
\begin{align}
\pi_0(k) &\coloneqq  	\prod_{n=0}^\infty   
	\left(   1 - \frac{k}{z_n}  \right) ,
\\
\pi_p(k) &\coloneqq 	\prod_{n=0}^\infty   
	\left(   1 - \frac{k}{z_n}  \right) 
	\exp \left({ \frac{k}{z_n} + \frac{k^2}{2 z_n^2} + \dots + \frac{k^p}{p z_n^p} } \right) ,
\quad p\geq 1 .
\end{align}
If the zeros of $f$ are of genus $p$, then the product $\pi_p$ is called \emph{canonical product of the zeros of f} \citep[page 18]{Boas1954}.

\begin{figure}
\centering 
\newcommand{\lem}[1]{\\\mbox{\textit{(Lemma~\ref{#1})}}}
\newlength{\leveldistance}
\setlength{\leveldistance}{\the\dimexpr\baselineskip*3\relax}
\resizebox{\textwidth}{!}{%
\begin{tikzpicture}[
node distance=\the\leveldistance, 
auto,
arrow/.style={->,>=stealth', thick},
every text node part/.style={align=center}
]
\normalsize
\sffamily
\node (Hadamard) 
	{\mbox{\textlarger[2]{Hadamard}}\lem{lem:GenusAndHadamard}};
\node[below=of Hadamard] (Pre) 
	{Preliminary\\Factorization\lem{lem:PreHadamard}}
	edge[arrow]  (Hadamard);
\node [left=of Pre, yshift={\the\leveldistance}]  
	{Pfluger\lem{lem:Pfluger}}
	edge[arrow]  (Hadamard);
\node [right=of Pre, yshift={\the\leveldistance}]  
	{$r_0<\infty$\lem{lem:R0}}
	edge[arrow]  (Hadamard);
\node[below right=of Pre] (Normal) 
	{Normal Convergence\\of Products\lem{lem:NormalConvergency}}
	edge[arrow]  (Pre);
\node[below left=of Pre] (Order) 
	{Order\lem{lem:ExpType}}
	edge[arrow]  (Pre);
\node[below left=of Order] (Bound) 
	{Bound on $|F|$\lem{lem:BoundsOnFAndPhi}}
	edge[arrow]  (Order);
\node[below right=of Order] (Asymptotics) 
	{Asymptotics of $F$\lem{lem:LogFBigK}}
	edge[arrow]  (Order);
\end{tikzpicture}
}
\caption{Overview of the Lemmas needed to write the Jost function in Hadamard's form.}
\label{fig:LemmaOverview}
\end{figure}

With these definitions, we can write the Hadamard factorization of  $f$.
Let $f$ be of order $\rho$, its zeros of genus $p$, $Q$ a polynomial of degree not greater than $\rho$, and let $f$ have an $m$-fold zero in the origin.
Then $f$ can be written in the form  \citep[2.7.2, page 22; see also page 18]{Boas1954}
\begin{equation}\label{eq:HadamardGeneral}
f(k)  =   k^m  e^{Q(k)}  \pi_p(k).
\end{equation}
We will write such a representation for the  Jost function $F$.
Moreover, we will show that $F$ is of exponential type (Lemma \ref{lem:ExpType}); as a consequence, we will be able to determine the coefficients of $Q$ using a theorem due to Pfluger (see Lemma \ref{lem:Pfluger}).
To arrive at the Hadamard factorization of $F$ we need several intermediate Lemmas, whose structure is depicted in Fig.~\ref{fig:LemmaOverview}.

\subsection{Hadamard factorization of the Jost function}
To determine the order and type of the Jost function the following two lemmas are crucial.
They elaborate some results presented in \citep{Newton1966}.

\begin{lemma}\label{lem:BoundsOnFAndPhi}
Let $\nu=\Im k$.
Then, the Jost function $F$ and the regular eigenfunctions $\varphi$ satisfy the bounds
\begin{align}
| \varphi(k,r) |  
	&\leq 4 e^{4\| r'V(r') \|_1} 
	\frac {r}  {1+|k| r}   e^{|\nu| r}  ,
\label{eq:BoundPhi}
\\
| F(k) |
	&\leq  \left(  4 \| rV(r) \|_1 e^{4\| rV(r) \|_1}  + 1  \right)
	\,  e^{2 R_V |k|}
.
\label{eq:BoundF}
\end{align}
\end{lemma}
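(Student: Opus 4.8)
\textbf{Proof plan for Lemma~\ref{lem:BoundsOnFAndPhi}.}

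The starting point is the Volterra integral equation satisfied by the regular solution $\varphi(k,r)$. Since $\varphi$ solves the Schr\"odinger equation~\eqref{eq:Schroedinger} with the boundary conditions $\varphi(k,0)=0$, $\dr\varphi(k,0)=1$, one has the standard representation
\begin{equation*}
\varphi(k,r) = \frac{\sin(kr)}{k} + \int_0^r \frac{\sin\bigl(k(r-r')\bigr)}{k}\, V(r')\,\varphi(k,r')\,\de r' .
\end{equation*}
First I would record the elementary bound $\bigl|\sin(kr)/k\bigr| \le C\, r\,(1+|k|r)^{-1} e^{|\nu|r}$ for a numerical constant $C$, valid on all of $\C$; this is the kernel estimate that drives everything. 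Substituting it into the integral equation and writing $h(k,r) \coloneqq |\varphi(k,r)| \,(1+|k|r)\, r^{-1} e^{-|\nu|r}$, the inequality becomes a Gronwall-type estimate of the form $h(k,r) \le C + C\int_0^r r'|V(r')|\, h(k,r')\,\de r'$ once one checks the monotonicity of the auxiliary factors in $r'\le r$ (the ratio $(1+|k|r')^{-1}$ against $(1+|k|r)^{-1}$ and the exponential $e^{|\nu|(r'-r)}\le 1$ both go the right way). Gronwall's lemma then yields $h(k,r)\le C\exp\bigl(C\|r'V(r')\|_1\bigr)$, which is exactly~\eqref{eq:BoundPhi} with $C=4$; the constant $4$ is obtained by being slightly careful with the numerical kernel bound.

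Next I would pass from $\varphi$ to $F$ via the Jost-function formula. The cleanest route is to use that $F(k)$ can be written as a boundary value: since $f(k,r)=e^{ikr}$ for $r\ge R_V$ and $F$ is the Wronskian $W(f,\varphi)$, evaluating at $r=R_V$ gives
\begin{equation*}
F(k) = e^{ikR_V}\bigl(\dr\varphi(k,R_V) - ik\,\varphi(k,R_V)\bigr),
\end{equation*}
or alternatively the integral representation $F(k) = 1 - \int_0^{R_V} e^{ikr}\,V(r)\,\varphi(k,r)\,\de r$ coming from Eq.~\eqref{eq:IntEqF}. Using the latter is simplest: on the support of $V$ we have $|e^{ikr}|=e^{-\nu r}\le e^{|\nu|R_V}\le e^{R_V|k|}$, and $|\varphi(k,r)|\le 4 e^{4\|r'V(r')\|_1}\, r\,(1+|k|r)^{-1} e^{|\nu|r}\le 4 e^{4\|r'V(r')\|_1}\, r\, e^{R_V|k|}$ since $r(1+|k|r)^{-1}\le r$. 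Hence
\begin{equation*}
|F(k)| \le 1 + 4 e^{4\|r'V(r')\|_1} e^{2R_V|k|}\int_0^{R_V} r\,|V(r)|\,\de r = 1 + 4\|rV(r)\|_1 e^{4\|rV(r)\|_1} e^{2R_V|k|},
\end{equation*}
and bounding $1\le e^{2R_V|k|}$ gives~\eqref{eq:BoundF}. One small subtlety to settle is the factor $\|rV(r)\|_1$ versus $\|r'V(r')\|_1$ in the exponent versus $\|V\|_1$: since $\supp V\subset[0,R_V]$ these are all controlled by each other, and the assumption $\|V\|_1<\infty$ guarantees finiteness; I would just keep $\|rV(r)\|_1$ throughout to match the statement.

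The main obstacle is not conceptual but bookkeeping: getting the numerical constant to come out exactly as $4$ (and the prefactor in~\eqref{eq:BoundF} exactly as $4\|rV(r)\|_1 e^{4\|rV(r)\|_1}+1$) requires a careful choice in the elementary estimate of $|\sin(kr)/k|$ and in the iteration of the Volterra series, rather than the lazy Gronwall constant. Concretely, I expect one wants the bound $|\sin z|\le |z| e^{|\Im z|}$ together with $|\sin z|\le e^{|\Im z|}$, combined as $|\sin z|\le 2|z|(1+|z|)^{-1} e^{|\Im z|}$ with an explicit $2$, and then the series $\sum_n (2\|rV\|_1)^n/n! = e^{2\|rV\|_1}$ — tracking exactly where the factors of $2$ accumulate is the delicate part. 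The rest (Gronwall monotonicity, the Wronskian identity, plugging $r\le R_V$) is routine. Since both bounds are needed downstream only up to their exponential type $2R_V$ in $|k|$ and their dependence on $\|rV(r)\|_1$, an alternative acceptable to the reader would be to prove them with unspecified absolute constants and then remark that the stated constants follow from the sharper kernel estimate; but I would do the sharp version to keep the explicit-constants program of the chapter intact.
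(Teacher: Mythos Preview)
Your proposal is correct and follows essentially the same route as the paper: the Volterra equation for $\varphi$ plus the kernel estimate $|\sin(kr)/k|\le 4r(1+|k|r)^{-1}e^{|\nu|r}$ iterated through a Born/Gronwall argument, then the integral representation of $F$ with the $\varphi$-bound plugged in. The paper resolves your bookkeeping worry about the constant $4$ by simply quoting the explicit inequality $\bigl|\tfrac{\sin k(y-x)}{k}\bigr|\le\tfrac{4y}{1+|k|y}e^{|\nu|y+\nu x}$ for $y\ge x\ge 0$ from Reed--Simon~III (p.~139), set $x=0$; this gives $C=4$ directly and the rest of your argument goes through verbatim.
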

\begin{proof}
For $r\in\R^+$ and $k\in\C$, the eigenfunctions $\varphi$ are solutions of the Lippmann-Schwinger equation \citep[Eq.\ 12.4, page 330]{Newton1966}
\begin{equation}\label{eq:IntEqPhi}
\varphi(k,r)
=
\frac{\sin k r}{k}
+\int_0^r    \frac{\sin k (r-r')}{k}   V(r')  \varphi(k,r')   \,  dr'  .
\end{equation}
Writing the solution of this equation as a Born series, it is possible to prove the bound \citep[Eq.\ 12.8, page 332]{Newton1966}
\begin{equation}\label{eq:PhiCBound}
| \varphi(k,r) |  
\leq  e^{q_k(r)} 
	\frac {C r}  {1+|k| r}   e^{|\nu| r} ,
\end{equation}
where  
\begin{equation}
q_k(r) \coloneqq \int_0^r   \frac {C r'}  {1+ |k| r'}    |V(r')|\, dr'  ,
\end{equation}
and the constant $C$ is such that \citep[see][Eq.\ 12.6, page 331]{Newton1966}
\begin{equation}
\left |  \frac{\sin kr} {k}  \right|
\leq
\frac {C r}  {1+ |k| r}  e^{|\nu| r }
,  \qquad  r\geq 0  .
\end{equation}
From the bound \citep[page 139]{RS3}
\begin{equation}
\left |  \frac{\sin k(y-x)} {k}  \right|
\leq
\frac {4 y}  {1+ |k| y}  e^{|\nu| y + \nu x}
,  \qquad  y\geq x\geq 0  ,
\end{equation}
setting $x=0$ and $y=r\geq0$, 
we get
\begin{equation}
\left |  \frac{\sin k r} {k}  \right|
\leq
\frac {4 r}  {1+ |k| r}  e^{|\nu| r }   .
\end{equation}
As a consequence, we can choose $C=4$.
Observing that $q_k(r)  \leq  4 \| r' V(r') \|_1$, from \eqref{eq:PhiCBound} we get \eqref{eq:BoundPhi}.

The integral equation \eqref{eq:IntEqPhi}, together with the relation between $\varphi$ and $F$, Eq.\ \eqref{eq:phiRBiggerRs}, gives the integral equation for $F$ \citep[Eq.\ 12.36, page 341]{Newton1966}
\begin{equation}\label{eq:IntEqF}
F(k)
=
1  +   \int_0^{R_V}    e^{i k r}  V(r)  \varphi(k,r)   \,  dr  .
\end{equation}
Using this and the bound \eqref{eq:BoundPhi} we get
\begin{align}
|F(k)|
&\leq
	1  +  4  e^{4\| r'V(r') \|_1}    \int_0^{R_V}  	\frac {r}  {1+|k| r}   e^{2|\nu| r}   |V(r)| \,  dr  
\\
&\leq
	1  +  4{\| rV(r) \|_1}  e^{4\| rV(r) \|_1}      e^{2|\nu| R_V}
\\
&\leq  \left(  4 \| rV(r) \|_1 e^{4\| rV(r) \|_1}  + 1  \right)
	\,  e^{2 R_V |k|}   .
\end{align}
\QED
\end{proof}

Before giving the next Lemma, we use the results so far obtained to prove Lemma \ref{lem:BoundOnNumberOfZeros}.
\begin{proof}[of Lemma \ref{lem:BoundOnNumberOfZeros}]
The bound \eqref{eq:BoundOnNumberOfZeros} is a direct consequence of the bound \eqref{eq:BoundF}, together with Eq.\ (2.5.11) of \citep{Boas1954}, that says
\begin{equation}
n(|k|) {\log 2}
\leq
\log \max_{\theta\in[0,2\pi)}  |  F(2|k|  e^{i\theta})  | .
\end{equation}
\QED
\end{proof}

\begin{lemma}\label{lem:LogFBigK}
As ${|k|\to\infty}$, the Jost function $F$ satisfies the asymptotic formulas
\begin{align}
\log | F(-i |k| ) |   & \sim    2R_V |k|  ,
\label{eq:LogFDirectionMinusI}
\\
\log | F(i |k| ) |   &\sim  \frac  {1 }  {2|k|}  \int_0^{R_V}  V(r)  \,  dr  
,
\label{eq:LogFDirectionPlusI}
\end{align}
and the limit
\begin{equation}
\lim_{|k|\to\infty} \log | F(\pm|k|) |  =  0 .
\label{eq:LogFDirectionRealAxis}
\end{equation}
\end{lemma}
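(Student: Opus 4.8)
The asymptotics are controlled by the integral representation of the Jost function, Eq.~\eqref{eq:IntEqF}, namely
\begin{equation*}
F(k) = 1 + \int_0^{R_V} e^{ikr}\,V(r)\,\varphi(k,r)\,dr,
\end{equation*}
together with the Born series for $\varphi$. The first step is to record from the proof of Lemma~\ref{lem:BoundsOnFAndPhi} that $\varphi(k,r) = \tfrac{\sin kr}{k} + O(k^{-2})$ uniformly on $[0,R_V]$ as $|k|\to\infty$, where the error is controlled by the geometric-series estimate underlying \eqref{eq:PhiCBound}; more precisely, iterating \eqref{eq:IntEqPhi} once gives $\varphi(k,r) = \tfrac{\sin kr}{k} + \int_0^r \tfrac{\sin k(r-r')}{k}V(r')\tfrac{\sin kr'}{k}\,dr' + (\text{remainder})$, and the bound \eqref{eq:BoundPhi} shows the remainder decays one further power of $|k|$ faster on each of the three rays.

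\textbf{The three rays.} For $k = -i|k|$ we have $e^{ikr} = e^{|k|r}$ and $\sin(kr)/k = (e^{|k|r}-e^{-|k|r})/(2|k|)$, so the leading term of the integrand behaves like $e^{2|k|r}|V(r)|/(2|k|)$; since $\sup\supp V = R_V$, Laplace's method (Watson's lemma applied near the endpoint $r=R_V$, using the asymptotic expansion \eqref{eq:VExpansionAtRs} of $V$ near $R_V$) yields that $\int_0^{R_V} e^{ikr}V(r)\varphi(k,r)\,dr$ grows like a constant times $e^{2|k|R_V}|k|^{-(2+\delta_0)}$ up to lower order, and in particular $\log|F(-i|k|)| \sim 2R_V|k|$, which is \eqref{eq:LogFDirectionMinusI} — the polynomial prefactor and the ``$1$'' are absorbed in the logarithm. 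For $k = i|k|$ both $e^{ikr}=e^{-|k|r}$ and $\sin(kr)/k$ are exponentially small away from $r=0$, so the integral is dominated by a neighbourhood of $r=0$ where $e^{ikr}\tfrac{\sin kr}{k} = e^{-|k|r}\cdot\tfrac{1-e^{-2|k|r}}{2|k|}\,e^{|k|r}\cdot(\dots)$; expanding carefully one finds $\int_0^{R_V}e^{ikr}V(r)\tfrac{\sin kr}{k}\,dr = \tfrac{1}{2|k|}\int_0^{R_V}V(r)\,dr + o(|k|^{-1})$ (the cross term $e^{-2|k|r}$ integrates to $O(|k|^{-2})$), the higher Born terms contribute $O(|k|^{-2})$, and hence $F(i|k|) = 1 + \tfrac{1}{2|k|}\int_0^{R_V}V + o(|k|^{-1})$, giving $\log|F(i|k|)| \sim \tfrac{1}{2|k|}\int_0^{R_V}V(r)\,dr$, which is \eqref{eq:LogFDirectionPlusI}. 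For $k = \pm|k|$ real, $|e^{ikr}| = 1$ and $|\sin kr/k| \le \min(r, 1/|k|)$, so $|\int_0^{R_V}e^{ikr}V(r)\varphi(k,r)\,dr| \le C/|k| \to 0$; thus $F(\pm|k|)\to 1$ and $\log|F(\pm|k|)|\to 0$, which is \eqref{eq:LogFDirectionRealAxis}.

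\textbf{Main obstacle.} The routine parts are the real-axis limit and the gross exponential rate on the imaginary axis. The delicate point is getting the \emph{sharp} constant in \eqref{eq:LogFDirectionPlusI}: one must show that the full Born series for $\varphi$, not just its first term, contributes only at order $|k|^{-2}$ on the ray $k=i|k|$, and that the endpoint-at-zero Laplace analysis produces exactly $\tfrac{1}{2}\int_0^{R_V}V$ with no stray factor. This is where the uniform bound \eqref{eq:BoundPhi} is used essentially: it certifies that each successive term in the iteration of \eqref{eq:IntEqPhi} gains a factor $O(\|rV\|_1/(1+|k|))$, so the tail of the series is $O(|k|^{-2})$ uniformly. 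A secondary technical nuisance is justifying the endpoint expansion near $R_V$ for \eqref{eq:LogFDirectionMinusI} using \eqref{eq:VExpansionAtRs}; but since only the logarithm is needed there, a crude two-sided bound $c\,e^{2R_V|k|}|k|^{-N} \le |F(-i|k|)| \le C\,e^{2R_V|k|}$ (the lower bound coming from isolating the dominant endpoint contribution and noting it cannot be cancelled for large $|k|$) suffices, and this is the cleanest route to avoid grinding through Watson's lemma in full.
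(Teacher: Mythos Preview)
Your approach matches the paper's: substitute the Born expansion of $\varphi$ into the integral representation \eqref{eq:IntEqF} and analyze each ray. The paper cites Newton's Eq.~12.12 for $\varphi(k,r)=\frac{\sin kr}{k}+o(e^{|\nu|r}/|k|)$, invokes Watson's lemma with assumption~\eqref{eq:VExpansionAtRs} on the $-i|k|$ ray, does a direct Taylor step on the $i|k|$ ray, and for the real axis simply quotes $F(|k|)\to 1$ from Reed--Simon together with the symmetry $\bar F(\bar k)=F(-k)$.

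Two statements in your write-up are wrong as written and should be repaired. First, the remainder $\varphi-\sin kr/k$ is \emph{not} $O(k^{-2})$ uniformly on $[0,R_V]$: on the imaginary rays it carries the exponential factor $e^{|\nu|r}$ (this is what \eqref{eq:BoundPhi} actually gives), so the correct bound is $O(e^{|\nu|r}/|k|^2)$. You need this on the $-i|k|$ ray, otherwise the error term would appear negligible compared to $e^{2|k|R_V}$ when in fact it is of the same exponential order, just one power of $|k|$ smaller. Second, your heuristic for $k=i|k|$ is backwards: the product $e^{ikr}\sin(kr)/k=(1-e^{-2|k|r})/(2|k|)$ is approximately $1/(2|k|)$ \emph{away} from $r=0$ and vanishes \emph{at} $r=0$, so the integral is dominated by the whole interval (yielding $\frac{1}{2|k|}\int_0^{R_V}V$), not by a neighborhood of the origin. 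Finally, your proposed shortcut on the $-i|k|$ ray (``noting it cannot be cancelled'') still needs a lower bound on the endpoint contribution, which is precisely what Watson's lemma with~\eqref{eq:VExpansionAtRs} provides; the paper uses it explicitly and there is no real saving in trying to avoid it.
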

\begin{proof}
We follow the presentation of \citep[page 361]{Newton1966}.

Let $\nu = \Im k$, then Eqs.\ \eqref{eq:IntEqPhi} and \eqref{eq:BoundPhi} imply that \citep[Eq.\ 12.12]{Newton1966}
\begin{equation}\label{eq:OPhi}
\varphi(k,r) =   \frac  {\sin kr}  {k}   +  o\left(   \frac {e^{|\nu| r}}  {|k|} \right)  ,
\qquad  \text{as } {|k|\to\infty}  .
\end{equation}
Substituting this in \eqref{eq:IntEqF}, and considering only the direction $k=-i |k|$ gives
\begin{equation}
F(-i |k|)  \sim
	\int_0^{R_V}  \frac {e^{2|k| r}}  {2|k|} V(r) \,  dr  
	\qquad  \text{as } {|k|\to\infty}  .
\end{equation}
For ${|k|\to\infty}$ the integral is dominated by $r=R_V$, therefore it is convenient to write
\begin{equation}
F(-i |k|)   \sim
	e^{2R_V |k|}
	\int_0^{R_V}  \frac {e^{-2|k| (R_V-r)}}  {2|k|}  V(r)  \,  dr  ,
	\qquad  \text{as } {|k|\to\infty}  ,
\end{equation}
which implies
\begin{multline}
\log | F(-i |k| ) |   \sim    2R_V |k|  
	+\log  \left( \int_0^{R_V}  \frac {e^{-2|k| (R_V-r)}}  {2|k|}  V(r)  \,  dr  \right) ,
		\\  \text{as } {|k|\to\infty}  .
\end{multline}
This gives \eqref{eq:LogFDirectionMinusI}, provided that the integral does not go to zero as $e^{-2R_V |k|}$ or faster.
This is shown using Watson's lemma  \citep[see e.g.][Lemma 11.1, page 283]{KingBillinghamOtto2003} and  Assumption~\eqref{eq:VExpansionAtRs}, that give
\begin{equation}
\int_0^{R_V}  {e^{-2|k| (R_V-r)}}  V(r)  \,  dr
\sim
\sum_{n=0}^{M} \frac {\Gamma(\delta_n+1) d_n } {|k|^{1+\delta_n}},
	\qquad  \text{as } {|k|\to\infty}  .
\end{equation}
Similarly, for the direction $k=i|k|$ we get
\begin{equation}
F(i |k|)   
=	1  +    \frac  {1 }  {2|k|}  \int_0^{R_V}  V(r)  \,  dr  ,
	\qquad  \text{as } {|k|\to\infty}  ,
\end{equation}
that, using Taylor's expansion, gives \eqref{eq:LogFDirectionPlusI}.

For \eqref{eq:LogFDirectionRealAxis} it is enough to use the fact that $\lim_{|k|\to\infty} F(|k|)=1$ \citep[Th.\ XI.58e, page 140]{RS3} and the symmetry property of $F$ \citep[12.32a, page 340]{Newton1966}
\begin{equation}
\bar F (\bar k) = F(-k).
\end{equation}
\QED
\end{proof}

From the previous Lemmas we get

\begin{lemma}\label{lem:ExpType}
The Jost function has order one and type $2R_V$, and is therefore of exponential type.
Moreover, the convergence exponent of its zeros is one.
\end{lemma}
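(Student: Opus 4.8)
The goal is to establish three facts about the Jost function $F$: that it has order one, that it has type exactly $2R_V$, and that the convergence exponent of its zeros equals one. The first two follow from combining the upper bound of Lemma~\ref{lem:BoundsOnFAndPhi} with the lower asymptotics of Lemma~\ref{lem:LogFBigK}, while the third is obtained from the first two via the general principle that, for an entire function of finite positive order, the convergence exponent of the zeros never exceeds the order.

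\textbf{Order.} For the upper bound, recall from Eq.~\eqref{eq:BoundF} of Lemma~\ref{lem:BoundsOnFAndPhi} that
\begin{equation}
|F(k)| \leq \left( 4\|rV(r)\|_1 e^{4\|rV(r)\|_1} + 1 \right) e^{2R_V|k|},
\end{equation}
so $M(|k|) = O\bigl(e^{C|k|}\bigr)$ as $|k|\to\infty$ with $C = 2R_V$; hence $M(|k|) = O\bigl(e^{|k|^{1+\epsilon}}\bigr)$ for every $\epsilon>0$, which gives that the order $\rho$ is at most one. For the matching lower bound I would invoke Eq.~\eqref{eq:LogFDirectionMinusI} of Lemma~\ref{lem:LogFBigK}: along the ray $k = -i|k|$ one has $\log|F(-i|k|)| \sim 2R_V|k|$, so that $M(|k|) \geq |F(-i|k|)|$ is \emph{not} $O\bigl(e^{|k|^{1-\epsilon}}\bigr)$ for any $\epsilon>0$. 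Therefore $\rho$ is exactly one, not less.

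\textbf{Type.} Since the order is finite and nonzero, the notion of type applies. The same upper bound shows $M(|k|) = O\bigl(e^{(2R_V+\epsilon)|k|}\bigr)$ for every $\epsilon>0$, so the type $\tau$ is at most $2R_V$. Conversely, Eq.~\eqref{eq:LogFDirectionMinusI} shows $\log M(|k|) \geq \log|F(-i|k|)| \sim 2R_V|k|$, so for no negative $\epsilon$ can $M(|k|) = O\bigl(e^{(2R_V+\epsilon)|k|}\bigr)$ hold; hence $\tau = 2R_V$ exactly. This makes $F$ an entire function of order one and finite type, i.e.\ of exponential type.

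\textbf{Convergence exponent of the zeros.} Here I would cite the standard fact from the theory of entire functions \citep[e.g.][Theorem~2.5.18]{Boas1954} that the convergence exponent $\rho_1$ of the zeros of an entire function satisfies $\rho_1 \leq \rho$; with $\rho = 1$ this yields $\rho_1 \leq 1$. For the reverse inequality, note that $F$ has infinitely many resonances (as recalled after Definition~\ref{def:ZeroResVirtualStates}, citing \citep{Newton1966,Regge,Rollnik}), so the zero set $\{z_n\}$ is infinite; if $\rho_1$ were strictly less than one, then $\sum |z_n|^{-1}$ would converge. But a convergent sum $\sum|z_n|^{-1}$ together with the exponential-type growth already forces, by the Hadamard factorization with a genus-zero canonical product, that $\log M(|k|) = o(|k|)$ along generic rays in a way incompatible with $\tau = 2R_V > 0$; more directly, one uses that for an entire function of exponential type with genuinely linear growth in some direction, the counting function $n(r)$ of the zeros grows at least linearly, which is exactly the content of Lemma~\ref{lem:BoundOnNumberOfZeros} read as a lower bound together with the explicit resonance asymptotics. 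From $n(r) \gtrsim r$ one gets $\sum |z_n|^{-1} = \infty$, hence $\rho_1 \geq 1$, and therefore $\rho_1 = 1$.

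\textbf{Main obstacle.} The delicate point is the lower bound showing the type is not smaller than $2R_V$ and, relatedly, that the convergence exponent is not smaller than one. The upper bounds are immediate from Lemma~\ref{lem:BoundsOnFAndPhi}, but the lower bounds rest entirely on the asymptotic analysis of Lemma~\ref{lem:LogFBigK}, which in turn relies on Watson's lemma and the asymptotic expansion~\eqref{eq:VExpansionAtRs} of the potential near $R_V$ to guarantee that the relevant integral does not decay too fast. I expect the bookkeeping to verify that $\log|F(-i|k|)| \sim 2R_V|k|$ genuinely controls $\log M(|k|)$ from below — rather than being swamped by behavior in other directions — to be the step requiring the most care, though it is already essentially packaged in the statement of Lemma~\ref{lem:LogFBigK}.
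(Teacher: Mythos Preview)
Your treatment of the order and the type is correct and matches the paper's argument exactly: the upper bound comes from Eq.~\eqref{eq:BoundF}, the lower bound from Eq.~\eqref{eq:LogFDirectionMinusI}, and both conclusions follow.

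The gap is in your lower bound $\rho_1\geq 1$ for the convergence exponent. Your first line of attack --- that a genus-zero canonical product would force $\log M(|k|)=o(|k|)$ --- does not work, because the Hadamard factorization carries an exponential prefactor $e^{Q(k)}$ with $\deg Q\leq\rho=1$; the function $e^{ck}$ itself has no zeros at all yet has positive type. Your fallback --- reading Lemma~\ref{lem:BoundOnNumberOfZeros} as a lower bound on $n(r)$ --- is backwards: that lemma gives only an \emph{upper} bound on $n(r)$. The linear lower bound on $n(r)$ you want is established only later, in Lemma~\ref{lem:GenusAndHadamard}, via Pfluger's theorem, and that lemma in turn uses the present one, so invoking it here would be circular.

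The paper avoids this difficulty with a short trick: it passes to the auxiliary function $g(k^2)\coloneqq F(k)F(-k)$, which is entire in $k^2$ with zeros $\{z_n^2\}$. The same bounds \eqref{eq:BoundF}, \eqref{eq:LogFDirectionMinusI}, \eqref{eq:LogFDirectionPlusI} show $g$ has order $\tfrac12$. Since $\tfrac12$ is not an integer, the convergence exponent of the zeros of $g$ equals its order \citep[2.8.2]{Boas1954}, so $\sum|z_n^2|^{-\alpha}=\sum|z_n|^{-2\alpha}$ converges for $\alpha>\tfrac12$ and diverges for $\alpha<\tfrac12$, which is exactly $\rho_1=1$ for the zeros of $F$. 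This sidesteps the $e^{Q(k)}$ obstruction entirely by working with a function of fractional order, where the Hadamard polynomial is forced to be constant.
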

\begin{proof}

From the bound \eqref{eq:BoundF} we see that the Jost function has order  not greater than one, while from the asymptotic formula \eqref{eq:LogFDirectionMinusI} we get that the order can not be less than one, therefore it must be $\rho=1$.
The same reasoning gives $\tau=2R_V$.

Let $z_n$ denote the zeros of the Jost function $F$ other than $k=0$.
Consider the function
\begin{equation}\label{eq:GDef}
g(k^2)  \coloneqq  F(k)  F(-k),
\end{equation}
that is an entire function of $k^2$, whose zeros are $\{z_n^2\}_{n}$.
Following the proof of the order of $F$, Eqs.\ \eqref{eq:BoundF}, \eqref{eq:LogFDirectionMinusI}, and \eqref{eq:LogFDirectionPlusI} imply that $g$ is of order \nicefrac{1}{2}.
For a function of fractional order the convergence exponent of the zeros is equal to the order \citep[2.8.2, page 24]{Boas1954}, therefore
\begin{equation}
\sum_{n=0}^\infty  \frac{1}{|z_n^2|^\alpha}  
	\begin{cases}
		<  \infty  , & \alpha> 1/2 ,\\
		= \infty  , & \alpha< 1/2 ,
	\end{cases}
\end{equation}
that shows that the convergence exponent of the zeros of $F$ is one.
\QED
\end{proof}

The only parameter missing to write the Hadamard factorization of the Jost function $F$ is the genus of its zeros, that will be determined in Lemma \ref{lem:GenusAndHadamard}. 
However, we will write a product form for $F$ already in Lemma \ref{lem:PreHadamard}.
To that end, we will need to combine different infinite products, for which we will use the following notion of convergence.
A product of continuous functions $\prod_n f_n$ is called \emph{normally convergent} if $\sum_n (f_n-1)$ is normally convergent, i.e.\ if  every point $k\in\C$ has a neighborhood  $U_k$ such that $\sum_n \sup_{k'\in U_k} |f_n(k')-1|<\infty$ (see \citealp[1.2.1, page 7]{Remmert1998}, and \citealp[3.3.1, page 104]{Remmert1991}).
We apply this notion to our case in the next Lemma.

\begin{lemma}\label{lem:NormalConvergency}
Let $k_n=\alpha_n-i\beta_n$, with $\alpha_n,\beta_n>0$, be the resonance zeros of the Jost function $F$.
If the quantity
\begin{equation}
r_0  \coloneqq  \sum_{n=0}^\infty   \frac {5\beta_n}  { \alpha_n^2 + \beta_n^2}    
\end{equation}
is finite, then the products
\begin{align}
&	\prod_{n=0}^\infty  
	\left(   1 - \frac{k}{k_n}  \right) 
	\left(   1 + \frac{k}{\bar k_n}  \right)   ,
\label{eq:NormalConvergencyP0}
\\
&	 \prod_{n=0}^\infty  e^{ 2i \beta_n k/ | k_n |^2}  ,
\label{eq:NormalConvergencyExp}
\\
&	\prod_{n=0}^\infty  
	\left(   1 - \frac{k}{k_n}  \right) 
	\left(   1 + \frac{k}{\bar k_n}  \right)   
	e^{2i \beta_n k/ | k_n |^2} ,
\label{eq:NormalConvergencyP1}
\end{align}
are normally convergent.
Moreover,
\begin{equation}\label{eq:ConnectionProd0Prod1}
\prod_{n=0}^\infty  
	\left(   1 - \frac{k}{k_n}  \right) 
	\left(   1 + \frac{k}{\bar k_n}  \right)   
	e^{2i \beta_n k/ | k_n |^2}
= \exp \left(\frac{2}{5} i r_0 k \right)   \,
	\prod_{n=0}^\infty  
	\left(   1 - \frac{k}{k_n}  \right) 
	\left(   1 + \frac{k}{\bar k_n}  \right) .
\end{equation}
\end{lemma}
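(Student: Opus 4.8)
The plan is to prove normal convergence of the three products by bounding, for each of them, the sum of the absolute values of the summands $f_n-1$ uniformly on compact sets, and then to derive the identity \eqref{eq:ConnectionProd0Prod1} by combining the exponential factors with the already-established absolutely convergent product.

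\textbf{Normal convergence of \eqref{eq:NormalConvergencyExp}.} This is the easiest: the summand is $e^{2i\beta_n k/|k_n|^2}-1$, and for $|k|\leq \rho$ on a fixed disc we have $|2i\beta_n k/|k_n|^2|\leq 2\rho\beta_n/|k_n|^2$, which tends to zero; using $|e^z-1|\leq|z|e^{|z|}$ we get $\sum_n \sup_{|k|\leq\rho}|e^{2i\beta_n k/|k_n|^2}-1|\leq C\rho\sum_n \beta_n/|k_n|^2 \leq \tfrac{C\rho}{5} r_0 < \infty$, using $\beta_n/(\alpha_n^2+\beta_n^2)=\beta_n/|k_n|^2$ and the hypothesis $r_0<\infty$.

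\textbf{Normal convergence of \eqref{eq:NormalConvergencyP0}.} Here the $n$-th factor is
\[
\left(1-\frac{k}{k_n}\right)\left(1+\frac{k}{\bar k_n}\right)
= 1 + k\left(\frac{1}{\bar k_n}-\frac{1}{k_n}\right) - \frac{k^2}{|k_n|^2},
\]
since $k_n\bar k_n = |k_n|^2$. Now $\tfrac{1}{\bar k_n}-\tfrac{1}{k_n} = \tfrac{k_n-\bar k_n}{|k_n|^2} = \tfrac{-2i\beta_n}{|k_n|^2}$, so the factor equals $1 - 2i\beta_n k/|k_n|^2 - k^2/|k_n|^2$; hence $|f_n-1|\leq 2\rho\beta_n/|k_n|^2 + \rho^2/|k_n|^2$ on $|k|\leq\rho$. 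The first term sums to a multiple of $r_0$ as before. For the second, one must observe $\sum_n 1/|k_n|^2<\infty$: this follows from Lemma~\ref{lem:ExpType}, which says the convergence exponent of the zeros of $F$ is one, so $\sum_n |z_n|^{-\alpha}<\infty$ for every $\alpha>1$ — but convergence at exactly $\alpha=2$ needs a little more. Actually the cleaner route is to note $g(k^2)=F(k)F(-k)$ has order $1/2$, so its zeros $\{z_n^2\}$ have convergence exponent $1/2$, giving $\sum_n |z_n^2|^{-1}=\sum_n|z_n|^{-2}<\infty$ directly; I would cite this from the proof of Lemma~\ref{lem:ExpType}. (Alternatively, and this is perhaps what the paper intends, one pairs $k_n$ with $-\bar k_n$: the quadratic term $k^2/|k_n|^2$ does not telescope, so one genuinely needs $\sum |k_n|^{-2}<\infty$.) With both sums finite, normal convergence of \eqref{eq:NormalConvergencyP0} follows.

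\textbf{Normal convergence of \eqref{eq:NormalConvergencyP1} and the identity.} For the product \eqref{eq:NormalConvergencyP1}, whose $n$-th factor is the product of those in \eqref{eq:NormalConvergencyP0} and \eqref{eq:NormalConvergencyExp}, one uses that a finite product of normally convergent products (of factors all tending to $1$) is normally convergent — concretely, if $a_n,b_n\to1$ with $\sum|a_n-1|,\sum|b_n-1|<\infty$ then $\sum|a_nb_n-1|\leq\sum|a_n-1||b_n|+\sum|b_n-1|<\infty$ since $|b_n|$ is bounded. This gives \eqref{eq:NormalConvergencyP1}. For the identity \eqref{eq:ConnectionProd0Prod1}: by normal convergence all three products converge and may be regrouped, so
\[
\prod_{n=0}^\infty\left(1-\frac{k}{k_n}\right)\left(1+\frac{k}{\bar k_n}\right)e^{2i\beta_n k/|k_n|^2}
= \left(\prod_{n=0}^\infty e^{2i\beta_n k/|k_n|^2}\right)\prod_{n=0}^\infty\left(1-\frac{k}{k_n}\right)\left(1+\frac{k}{\bar k_n}\right),
\]
and the exponential product equals $\exp\!\big(\sum_n 2i\beta_n k/|k_n|^2\big)=\exp\!\big(\tfrac{2}{5}i r_0 k\big)$ by the definition $r_0=\sum_n 5\beta_n/|k_n|^2$ and continuity of $\exp$ together with absolute (hence unconditional) convergence of the series. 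The main obstacle is the bookkeeping around $\sum_n|k_n|^{-2}<\infty$: one must make sure this is genuinely available from the earlier material (it is, via the order-$1/2$ function $g$), since it is what makes the genus-zero-style canonical products converge even though the zeros have genus one — exactly the subtle point the introduction flagged as missing in Regge and Newton.
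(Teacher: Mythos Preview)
Your proof is correct and follows the same overall strategy as the paper: establish normal convergence by bounding $\sup_K|f_n-1|$ and summing, then derive \eqref{eq:ConnectionProd0Prod1} by factor-by-factor multiplication of normally convergent products and continuity of the exponential. The one technical difference concerns the term $|k|^2/|k_n|^2$ in the bound for the factors of \eqref{eq:NormalConvergencyP0}. You control it by invoking $\sum_n|k_n|^{-2}<\infty$ via the order-$\tfrac12$ function $g$ from the proof of Lemma~\ref{lem:ExpType}; the paper instead observes that $\tfrac{|k|^2+2\beta_n|k|}{|k_n|^2}\sim\tfrac{2\beta_n|k|}{|k_n|^2}$ as $n\to\infty$, so that for $n$ large the whole expression is dominated by a constant times $\beta_n|k|/|k_n|^2$ and the tail sum is controlled by $r_0$ alone. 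The paper's asymptotic implicitly uses $\beta_n\to\infty$, a standard fact about resonances of compactly supported potentials (cf.\ the Appendix and \citep[p.~361]{Newton1966}); your route instead draws only on material already proved in the chapter. Both are valid --- your detour through $\sum|k_n|^{-2}<\infty$ is correct but unnecessary once $\beta_n\to\infty$ is granted. For \eqref{eq:NormalConvergencyP1} you deduce normal convergence from that of the two constituent products, while the paper bounds its factors directly; either works.
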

\begin{proof}
Let $k=\mu+i\nu$ and  $\gamma_{0,k}$ the straight line connecting the origin to $k$, then we have the bounds:
\begin{multline}
\left| \left(   1 - \frac{k}{k_n}  \right)  
	\left(   1 + \frac{k}{\bar k_n}  \right)   -1  \right |
= 
\left|  \frac {k^2+2i\beta_n k}   {|k_n|^2}  \right |
\leq 
\frac {|k|^2+2\beta_n |k|}   {|k_n|^2} 
\sim
\frac {2\beta_n |k|}   {|k_n|^2} ,
\\\quad\text{as }{n\to\infty},
\end{multline}
\begin{align}
 \left|    e^{ 2i\beta_n k / |k_n|^2 }   -1   \right |
	&= \left|  \int_{\gamma_{0,k}}  \frac{d}{d k'}   \left(  e^{ 2i\beta_n k' / |k_n|^2 }  \right)   \, dk'  \right |
\nonumber\\
	&\leq  \frac {2\beta_n |k|}   {|k_n|^2 } e^{ 2\beta_n |\nu| / |k_n|^2 }  
	\sim
	\frac {2\beta_n |k|}   {|k_n|^2 } 
	   	  ,
\nonumber\\
&\hspace{3em}\text{as }{n\to\infty},
\displaybreak[0]\\
p_n(k)&\coloneqq \left| \left(   1 - \frac{k}{k_n}  \right)  
	\left(   1 + \frac{k}{\bar k_n}  \right)   e^{2i \beta_n k/ | k_n |^2}   -1  \right |
\nonumber
\\ 
\phantom{ \left|    e^{ 2i\beta_n k / |k_n|^2 }   -1   \right |}
&=
\left|  \left(   1 - \frac {k^2+2i\beta_n k}   {|k_n|^2}  \right)   e^{2i\beta_n k / |k_n|^2 }   -1  \right |
\nonumber\\ 
& \leq
\left|   e^{2i\beta_n k / |k_n|^2 }   -1    \right |
	+  \left|   \frac {k^2+2\beta_n k}   {|k_n|^2}   e^{2i\beta_n k / |k_n|^2 }  \right |
\nonumber\\ 
& \leq
\frac {|k|^2+4\beta_n |k|}   {|k_n|^2} 
e^{ 2\beta_n |\nu| / |k_n|^2 } 
	 \sim
\frac {4\beta_n |k|}   {|k_n|^2}  ,
\nonumber\\
&\hspace{3em}\text{as }{n\to\infty}.
\end{align}
For every compact $K\subset\C$ let $h\coloneqq \sup_{k\in K} |k|$, then there is a $n_K\in\N$ and a constant $C_K$ such that
\begin{equation}
\sup_{k\in K}    p_n(k)  < C_K \frac {\beta_n h}   {|k_n|^2} ,  
	\quad  \forall n \geq n_K,
\end{equation}
hence, 
\begin{equation}
\sum_{n=n_K}^\infty \
\sup_{k\in K}    p_n(k)
<
\sum_{n=n_K}^\infty \
C_K\frac {\beta_n h}   {|k_n|^2}
\leq \frac{C_K}{5} r_0
 < \infty .
\end{equation}
Similarly for the other products.

The continuity of the exponential allows us to write
\begin{equation}
\exp \left( \sum_{n=0}^\infty   \frac{2i\beta_n k}{|k_n|^2} \right) 
	=   \exp \left(\frac{2}{5} i r_0 k \right)   
	=  \prod_{n=0}^\infty    \exp  \left( \frac{2i\beta_n k}{|k_n|^2}  \right),
\end{equation}
indeed we can exchange it with the limit of the series.
Moreover, the products \eqref{eq:NormalConvergencyP0} and \eqref{eq:NormalConvergencyExp} are normally convergent and therefore also compactly convergent \citep[page 8]{Remmert1998} and we can multiply them factor by factor \citep[page 6]{Remmert1998} getting \eqref{eq:ConnectionProd0Prod1}.
\QED
\end{proof}

Although we do not know yet the genus of the Jost function, we can already write its Hadamard factorization.
Indeed, the genus $p$ and the convergence exponent $\rho_1$ of the zeros of $F$ satisfy the relation $\rho_1-1\leq p \leq \rho_1$  \cite[see][page 17]{Boas1954}, and Lemma~\ref{lem:ExpType} tells us that $\rho_1$ is one, hence the genus $p$ must be either one or zero.
In the next Lemma we will show that if the genus is zero, one can still write the Jost function in the same factorization form as if it had genus one.
Analogously, in Lemma~\ref{lem:GenusAndHadamard} we will prove that if the genus is one, we can write the Jost function in the same form as if it had genus zero.
Moreover, in Lemma~\ref{lem:GenusAndHadamard} we will be able to determine all constants that appear in the Hadamard factorization of the Jost function.

\begin{lemma}\label{lem:PreHadamard}
Let $a_1,b_1\in\R$, $C\in\C$, 
\begin{align}
\lambda & \coloneqq
	\begin{cases}
		0,&\mathrm{if}\;F(0)\neq0\\
		1,&\mathrm{if}\;F(0)=0,
	\end{cases}
\\
B(k) & \coloneqq   
	\prod_{m=0}^{N-1}  \left(   1 - \frac{k}{i \eta_m}  \right)   
	\
	\prod_{l=0}^{N'-1}  \left(   1 +  \frac{k}{i\kappa_l}  \right)  ,
\\
P_1(k) &\coloneqq  	\prod_{n=0}^\infty   
	\left(   1 - \frac{k}{k_n}  \right)  \left(   1 + \frac{k}{\bar k_n}  \right)  
	e^{  {2i\beta_n k}/{|k_n|^2}}  ,
\end{align}
then for the Jost function $F$ the following representation holds
\begin{equation}\label{eq:PreHadamard}
{F(k)}   =   \left( F(0) + \lambda C k \right)   e^{(a_1+ i b_1)k}   {B(k)}   P_1(k) .
\end{equation}
Moreover, the product $P_1$ is an entire function of exponential type.
\end{lemma}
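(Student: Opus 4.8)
\textbf{Proof plan for Lemma~\ref{lem:PreHadamard}.}
The plan is to start from the general Hadamard factorization~\eqref{eq:HadamardGeneral} applied to the Jost function $F$, using the data collected in the preceding Lemmas, and then to rewrite it in the form~\eqref{eq:PreHadamard}. By Lemma~\ref{lem:ExpType}, $F$ has order $\rho=1$, so the polynomial $Q$ in~\eqref{eq:HadamardGeneral} has degree at most one, say $Q(k)=q_0+q_1 k$; absorbing $e^{q_0}$ and the leading constant of the canonical product into a single constant, and splitting $q_1$ into real and imaginary parts $a_1+ib_1$, accounts for the factor $e^{(a_1+ib_1)k}$. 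The zero at the origin is either absent ($\lambda=0$) or simple ($\lambda=1$, since bound states on the positive imaginary axis are finitely many and $k=0$ can be at most a simple zero by the general theory and the asymptotics of Lemma~\ref{lem:LogFBigK}); writing $k^m$ with $m=\lambda$ and pulling the constant inside gives the prefactor $F(0)+\lambda C k$, where one checks that this linear polynomial indeed equals $F(0)$ at $k=0$ when $\lambda=0$, and vanishes at $k=0$ with the correct derivative when $\lambda=1$.

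Next I would organize the nonzero zeros of $F$ into the three groups listed in Definition~\ref{def:SKAndKTildeAndS}: the finitely many bound states $i\eta_m$ and virtual states $-i\kappa_l$, which assemble into the finite product $B(k)$ (no exponential convergence factors are needed for a finite product), and the infinitely many resonances, which come in conjugate-symmetric pairs $k_n=\alpha_n-i\beta_n$ and $-\bar k_n$ because of the symmetry relation $F(k)=\bar F(-\bar k)$. Since by Lemma~\ref{lem:ExpType} the convergence exponent of the zeros is one, the genus is either zero or one; in either case the canonical product over the resonances converges once we include the genus-one exponential factors $e^{k/z_n+k/(-\bar z_n)}=e^{2i\beta_n k/|k_n|^2}$, which is exactly $P_1(k)$ after pairing conjugate zeros. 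Here I would invoke Lemma~\ref{lem:NormalConvergency}: its hypothesis $r_0<\infty$ holds by Lemma~\ref{lem:R0}, and it gives the normal (hence locally uniform) convergence of the product defining $P_1$, so $P_1$ is a well-defined entire function. Collecting the prefactor, the exponential, $B(k)$, and $P_1(k)$ reproduces~\eqref{eq:HadamardGeneral} and therefore yields~\eqref{eq:PreHadamard}.

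Finally, to see that $P_1$ is of exponential type, I would argue that $F$ is of exponential type (Lemma~\ref{lem:ExpType}) and that dividing by the polynomial prefactor $F(0)+\lambda Ck$, by the finite polynomial $B(k)$, and by the exponential $e^{(a_1+ib_1)k}$ preserves exponential type: quotients of an exponential-type function by a polynomial are again of exponential type on the complement of the zeros, and since $P_1$ is entire this bound extends to all of $\C$ by the maximum principle applied on small circles around the removed zeros. Alternatively, one can bound $|P_1(k)|$ directly from the normal-convergence estimates in Lemma~\ref{lem:NormalConvergency} together with the crude bound $|1-k/k_n||1+k/\bar k_n|e^{2\beta_n|k|/|k_n|^2}\le \exp(C|k|\beta_n/|k_n|^2+\cdots)$ summed against $r_0<\infty$, giving $|P_1(k)|\le e^{c|k|}$ for some explicit $c$. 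The main obstacle I anticipate is bookkeeping rather than conceptual: one must be careful that the genus of the resonance zeros really permits both the genus-zero and genus-one groupings (this is the subtlety flagged in the introduction, attributed to Boas), and that the conjugate-pairing of resonances is done so that the spurious exponential factors combine into the single convergent $P_1$ rather than producing a divergent product of one-sided factors.
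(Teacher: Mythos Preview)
Your overall strategy matches the paper's, but there is a circularity you have not noticed: Lemma~\ref{lem:R0}, which you invoke to obtain $r_0<\infty$, itself uses the representation~\eqref{eq:PreHadamard} in its proof (see the introduction of $g_1(k^2)=P_1(k)P_1(-k)$ there and the appeal to Eq.~\eqref{eq:PreHadamard} to compare $\log g$ with $\log g_1$). So you cannot cite Lemma~\ref{lem:R0} inside the proof of Lemma~\ref{lem:PreHadamard}; this is exactly why, in the dependency diagram, $r_0<\infty$ feeds into Lemma~\ref{lem:GenusAndHadamard} but \emph{not} into Lemma~\ref{lem:PreHadamard}.

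The paper avoids this by an explicit case split on the genus $p\in\{0,1\}$. If $p=1$, then $P_1$ \emph{is} the canonical product $\pi_1$ over the paired resonance zeros, so its convergence and entireness of order one follow from standard Weierstrass/Hadamard theory (Boas, Theorem~2.6.5), with no mention of $r_0$. If $p=0$, Hadamard gives the factorization with $P_0=\pi_0$, and one must convert $P_0$ to $P_1$; this requires $r_0<\infty$, but here it follows directly from the genus-zero hypothesis $\sum_n 1/|k_n|<\infty$ via $5\beta_n/|k_n|^2\le 5/|k_n|$, not from Lemma~\ref{lem:R0}. Then Eq.~\eqref{eq:ConnectionProd0Prod1} of Lemma~\ref{lem:NormalConvergency} yields $P_0=e^{-\tfrac{2}{5}ir_0 k}P_1$, and the extra exponential is absorbed into $e^{(a_1+ib_1)k}$ by shifting $b_1$. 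Your sentence ``in either case the canonical product converges once we include the genus-one factors'' is true as a statement about convergence of $P_1$, but it does not by itself explain why the \emph{Hadamard} factorization of $F$---which a priori involves $\pi_p$---can be rewritten with $P_1$ while keeping $\deg Q\le 1$; that is precisely what the case split delivers.

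Two smaller remarks. The simplicity of a possible zero at $k=0$ is a classical fact about Jost functions (cf.\ the appendix on zero-resonances), not a consequence of the large-$|k|$ asymptotics in Lemma~\ref{lem:LogFBigK}. And your alternative route to exponential type for $P_1$ via a direct product bound again presupposes $r_0<\infty$, so it is subject to the same circularity; the division argument (which is what the paper uses) is the safe one here.
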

\begin{proof}
To write the Hadamard factorization given in Eq.\ \eqref{eq:HadamardGeneral} for $F$ we need to know its genus.
The genus $p$ and the convergence exponent $\rho_1$ of the zeros of $F$ satisfy the relation
$\rho_1-1\leq p \leq \rho_1$  \cite[see][page 17]{Boas1954},
hence by Lemma~\ref{lem:ExpType}, $p$ must be either one or zero.
We at first assume that  $p$ is one.
Using the fact that the Jost function can eventually have only a simple zero in $k=0$, and that the number of bound states and virtual states is finite, the Hadamard factorization given in Eq.~\eqref{eq:HadamardGeneral} yields directly Eq.~\eqref{eq:PreHadamard}.
In this case, $P_1$ is the canonical product of the zeros $\{k_n\}_{n\in\N^0}$, therefore it is an entire function of order one thanks to Theorem 2.6.5 of \citep[page 19]{Boas1954}.

Suppose now that  $p$ is zero.
Let $a_0, b_0 \in \R$, and
\begin{equation}
P_0(k) \coloneqq   \prod_{n=0}^\infty   
	\left(   1 - \frac{k}{k_n}  \right)  \left(   1 + \frac{k}{\bar k_n}  \right)  ,
\end{equation}
then Eq.~\eqref{eq:HadamardGeneral} gives
\begin{equation}\label{eq:PreHadamard0}
{F(k)}   =   \left( F(0) + \lambda C k \right)   e^{(a_0+ i b_ 0)k} {B(k)} P_0(k) .
\end{equation}
If the genus of the zeros of the Jost function is zero, then 
\begin{equation}
\sum_{n=0}^\infty \frac{1} {|k_n|}  <  \infty,
\end{equation}
and therefore
\begin{equation}
r_0
\coloneqq  \sum_{n=0}^\infty \frac{5\beta_n} {\alpha_n^2+\beta_n^2}
=  \sum_{n=0}^\infty  \frac{5} {|k_n|} \frac{\beta_n} {\sqrt{\alpha_n^2+\beta_n^2}}
\leq \sum_{n=0}^\infty \frac{5} {|k_n|}  <  \infty  .
\end{equation}
We can then use Eq.~\eqref{eq:ConnectionProd0Prod1} of Lemma~\ref{lem:NormalConvergency}, that gives
\begin{equation}
P_0(k) = e^{-\frac{2}{5} i r_0 k}   P_1(k) .
\end{equation}
Hence, Eq.~\eqref{eq:PreHadamard0} reduces to Eq.~\eqref{eq:PreHadamard} once we set
\begin{equation}
\begin{cases}
b_1 \coloneqq b_0 - \frac{2}{5}  r_0  ,
\\
a_1 \coloneqq a_0.
\end{cases}
\end{equation}
In this case the canonical product of the zeros $\{k_n\}_{n\in\N^0}$ is $P_0$, that is then an entire function of order one again thanks to Theorem 2.6.5 of \citep[page 19]{Boas1954}.
Moreover, $r_0<\infty$ therefore $e^{-\frac{2}{5} i r_0 k}$ is also an entire function of order one, and so $P_1$ is.

We have now only to show that $P_1$ is of finite type.
The Jost function is of order one and of finite type, moreover the function 
that multiplies $P_1$ in Eq.~\eqref{eq:PreHadamard} is clearly an entire function of order not greater than one and of finite type, therefore $P_1$ can only be of finite type, otherwise $F$ could not be so.
%
%
\QED
\end{proof}

To determine the coefficients of the polynomial $Q$ appearing in the Hadamard factorization of $F$, Eq.~\ref{eq:HadamardGeneral}, i.e.\ to determine the constants $a_1$ and $b_1$ in Eq.~\eqref{eq:PreHadamard}, we will use a result by Pfluger  \citep[Th.\ 6B, page 15; see also Th.\ 5, page 11]{Pfluger1943} (see also \citealp[8.4.20, page 147]{Boas1954}, and  \citealp[page 363]{Newton1966}).
We recall here the part of the theorem of interest to us.

\begin{lemma}[Pfluger]\label{lem:Pfluger}
Let  $z,z_n\in\C$  $\forall n\in\N^0$,
\begin{equation}
f(z) \coloneqq  	\prod_{n=0}^\infty   
	\left(   1 - \frac{z}{z_n}  \right)  \left(   1 + \frac{z}{\bar z_n}  \right)  
	e^{  {2i\beta_n z}/{|z_n|^2}}  
\end{equation}
be an entire function of exponential type such that
\begin{equation}\label{eq:PflugerIntCond}
\lim_{r\to\infty}  \int_{-r}^r   \frac {\log  |f( x )|}  {x^2}  \,  d x  <  \infty  ,
\end{equation}
and whose zeros have density $D$,
let $z=|z| e^{i\theta}$, and
\begin{equation}\label{eq:PflugerSigma}
\varsigma \coloneqq \sum_{n=0}^\infty  z_n^{-1} .
\end{equation}
If the density of the zeros of $f$ with positive real part is the same as the density of the zeros with negative real part, then  
\begin{equation}\label{eq:PflugerStatement}
\frac {\log | f(z) |}  {|z|} 
=
\Re  \varsigma  \cos \theta
- \Im \varsigma  \sin\theta
+ \frac{\pi}{2} D  |\sin\theta|
+\varepsilon(z),
\end{equation}
where  $\varepsilon$ is a function of $z$ such that 
\begin{equation}
\limsup_{|z|\to\infty}  \varepsilon(z)  = 0  .
\end{equation}
\end{lemma}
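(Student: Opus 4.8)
The statement is a specialization of Theorem~6B of \citep{Pfluger1943}, so the plan is to reduce to that theorem by checking its hypotheses and then invoking it. All the required hypotheses are already built into the statement: $f$ is of exponential type, the logarithmic integral \eqref{eq:PflugerIntCond} converges, the sum $\varsigma$ in \eqref{eq:PflugerSigma} converges, and the zeros of $f$ have a density $D$ with equal contributions from the right and left half-planes. Under these assumptions Pfluger's theorem delivers \eqref{eq:PflugerStatement} with the error function $\varepsilon$ satisfying $\limsup_{|z|\to\infty}\varepsilon(z)=0$, which is exactly what is claimed. In our intended application (Lemma~\ref{lem:GenusAndHadamard}), the companion factors $e^{2i\beta_n z/|z_n|^2}$ are precisely what makes $\varsigma$ converge, and the finiteness of $r_0$ proved in Lemma~\ref{lem:NormalConvergency} is what guarantees this.

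For the record, the mechanism behind Pfluger's result is the Levin--Pfluger theory of completely regular growth. An entire function $f$ of exponential type that satisfies \eqref{eq:PflugerIntCond} lies in the Cartwright class, and such functions have completely regular growth: the limit
\[
h_f(\theta)=\lim_{r\to\infty}\frac{\log|f(re^{i\theta})|}{r}
\]
exists for every $\theta$ provided $r$ is allowed to avoid an exceptional set of discs of vanishing relative density (a $C_0$-set in Levin's terminology), and $h_f$ is the indicator of $f$. It therefore suffices to identify $h_f(\theta)$ with the right-hand side of \eqref{eq:PflugerStatement}. By a Lindel\"of-type argument, the indicator of a canonical product with convergence factors as in $f$ is governed by the limiting angular distribution of its zeros: when all zeros accumulate on the real axis (which is the relevant case, since the resonances $k_n=\alpha_n-i\beta_n$ have $\alpha_n\to\infty$ while the $\beta_n$ stay bounded, forcing $\arg z_n\to 0$ or $\pi$), the zero density concentrates at $\varphi=0$ and $\varphi=\pi$, with weight $D/2$ at each point by the assumed left--right symmetry, and the indicator integral collapses to the term $\tfrac{\pi}{2}D|\sin\theta|$. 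The remaining contribution is linear in $\cos\theta$ and $\sin\theta$ and is read off from the first-order data of the product at the origin, namely $\Re\varsigma\cos\theta-\Im\varsigma\sin\theta$ with $\varsigma=\sum z_n^{-1}$.

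The delicate point, and the reason a self-contained proof would be lengthy, is the step from ``$f$ in the Cartwright class'' to ``$f$ of completely regular growth with an explicit exceptional $C_0$-set'': this rests on Lindel\"of's estimates relating the partial sums $\sum_{|z_n|\le r}z_n^{-1}$ to $\log|f|$ and on a careful handling of the discs where the radial limit may fail. Since this is classical material established in full by \citep{Pfluger1943} (see also \citealp[Ch.~8]{Boas1954}), the plan is to quote it directly rather than reproduce it, after checking, as above, that each of its hypotheses holds in the form in which we apply the Lemma.
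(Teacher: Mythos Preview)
Your approach matches the paper's: the paper does not prove this lemma either, but simply recalls it as (part of) Theorem~6B in \citep{Pfluger1943}, with cross-references to \citep{Boas1954} and \citep{Newton1966}. One small correction: the finiteness of $r_0$ that you invoke is established in Lemma~\ref{lem:R0}, not in Lemma~\ref{lem:NormalConvergency} (the latter \emph{assumes} $r_0<\infty$ to prove normal convergence).
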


In order to apply this Lemma to the product
\begin{equation}
P_1(k) \coloneqq  	\prod_{n=0}^\infty   
	\left(   1 - \frac{k}{k_n}  \right)  \left(   1 + \frac{k}{\bar k_n}  \right)  
	e^{  {2i\beta_n k}/{|k_n|^2}}  
\end{equation}
introduced in Lemma~\ref{lem:PreHadamard}, we need to prove the following Lemma, which is therefore of technical nature.
Nevertheless, the quantity $r_0$  will appear in the bounds of Theorems \ref{th:SBoundsK} and \ref{th:GlobalSBounds}.

\begin{lemma}\label{lem:R0}
Let $n(|k|)$ be the number of zeros of the Jost function $F$ within a ball of radius $|k|$, and let $k_n=\alpha_n-i\beta_n$, with $\alpha_n,\beta_n>0$, be the resonance zeros of $F$.
The limits
\begin{equation}
\lim_{r\to\infty}  \int_{-r}^r   \frac {\log  |P_1( \mu )|}  {\mu^2}  \,  d \mu  ,
\end{equation}
and
\begin{equation}\label{eq:LimD}
\lim_{|k|\to\infty} \frac {n(|k|)}  {|k|}  
\end{equation}
exist and are finite.
Moreover, 
\begin{equation}\label{eq:r0}
r_0  \coloneqq  \sum_{n=0}^\infty   \frac {5\beta_n}  { \alpha_n^2 + \beta_n^2}   <\infty .
\end{equation}
\end{lemma}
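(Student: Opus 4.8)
The plan is to prove the three assertions in turn, each resting on the fact that the Jost function $F$ is entire of exponential type (Lemma~\ref{lem:ExpType}) and, moreover, belongs to the Cartwright class. The latter holds because $F$ is continuous, it is nonvanishing on $\R\setminus\{0\}$ (by the classification of its zeros recalled around Definition~\ref{def:ZeroResVirtualStates}: only the finitely many bound and virtual states and a possible zero resonance lie on the imaginary axis, while the resonances lie in the open lower half-plane off the axis), and $\lim_{|\mu|\to\infty}F(\mu)=1$ on the real axis \citep[Th.~XI.58e]{RS3}; hence $\log^+|F(\mu)|$ is bounded on $\R$ and $\log^-|F(\mu)|$ is at worst logarithmically singular at $\mu=0$, so that $\int_{\R}\frac{|\log|F(\mu)||}{1+\mu^2}\,d\mu<\infty$.

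For \eqref{eq:r0} and \eqref{eq:LimD} I would invoke the classical theory of the Cartwright class. By Krein's theorem, an entire function of exponential type with finite logarithmic integral on $\R$ is of bounded type in each half-plane, so its zeros satisfy the corresponding Blaschke conditions, which together give $\sum_{z_n\neq0}\bigl|\Im(1/z_n)\bigr|<\infty$ (see \citep{Boas1954}). The non-zero zeros of $F$ are $i\eta_m$, $-i\kappa_l$ ($m<N$, $l<N'$), the resonances $k_n=\alpha_n-i\beta_n$ and their mirror points $-\bar k_n$; since $\Im(1/k_n)=\Im(1/(-\bar k_n))=\beta_n/(\alpha_n^2+\beta_n^2)$, this yields $\sum_m\eta_m^{-1}+\sum_l\kappa_l^{-1}+2\sum_n\beta_n/(\alpha_n^2+\beta_n^2)<\infty$, and in particular $r_0=5\sum_n\beta_n/(\alpha_n^2+\beta_n^2)<\infty$. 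The same membership in the Cartwright class gives, by the Cartwright–Levinson theorem, that $F$ is of completely regular growth; for such functions the zeros possess an angular density, so $\lim_{|k|\to\infty}n(|k|)/|k|$ exists, and its finiteness is already contained in Lemma~\ref{lem:BoundOnNumberOfZeros}.

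For the first limit I would use the factorization $F(k)=(F(0)+\lambda Ck)\,e^{(a_1+ib_1)k}B(k)P_1(k)$ of Lemma~\ref{lem:PreHadamard}. On the real axis $|P_1(\mu)|$ is even in $\mu$: each factor of the (locally uniformly convergent) product $P_1$ has even modulus there, the exponential factors having unit modulus, so $\log|P_1|$ is even. Taking logarithms of moduli in the factorization, $\log|P_1(\mu)|=\log|F(\mu)|-\log|F(0)+\lambda C\mu|-a_1\mu-\log|B(\mu)|$, and since $|F(-\mu)|=|F(\mu)|$ (from $F(k)=\bar F(-\bar k)$), $|F(0)+\lambda C\mu|$ is even (trivially if $\lambda=0$, and equal to $|C||\mu|$ if $\lambda=1$, when $F(0)=0$), and $|B(\mu)|$ is even, the odd part of the right-hand side is exactly $-a_1\mu$; evenness of $\log|P_1|$ forces $a_1=0$. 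It then remains to bound $\log|P_1(\mu)|$: near $\mu=0$ it is $O(\mu^2)$, because $|P_1|$ is even and real-analytic with $|P_1(0)|=1$, so $\mu^{-2}\log|P_1(\mu)|$ is bounded there; as $|\mu|\to\infty$, $\log|F(\mu)|\to0$ while $\log|F(0)+\lambda C\mu|$ and $\log|B(\mu)|$ grow only logarithmically, so $\log|P_1(\mu)|=O(\log|\mu|)$ and $\mu^{-2}\log|P_1(\mu)|$ is absolutely integrable at infinity. Hence $\lim_{r\to\infty}\int_{-r}^r\mu^{-2}\log|P_1(\mu)|\,d\mu$ exists and is finite.

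The main obstacle is importing correctly the relevant pieces of the theory of entire functions of exponential type in the Cartwright class — bounded type in the half-planes, the Blaschke conditions, completely regular growth and the existence of an angular density of zeros — and verifying that $F$ meets their hypotheses. Once that is in place, \eqref{eq:r0} and \eqref{eq:LimD} follow at once, and the convergence of the integral is elementary bookkeeping with the factorization of $F$ and the growth bounds already established.
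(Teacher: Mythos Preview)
Your argument is correct and genuinely different from the paper's. The paper proceeds in the reverse order: it first proves the convergence of the integral by introducing the auxiliary functions $g(k^2)=F(k)F(-k)$ and $g_1(k^2)=P_1(k)P_1(-k)$ (reusing the order-$1/2$ trick from Lemma~\ref{lem:ExpType}), handling the behaviour near zero via $g_1(0)=1$ and at infinity via $\log|F(\mu)|\to0$; it then invokes a single external result (Pfluger's Theorem~5 / Boas~8.4.1) stating that convergence of this integral is \emph{equivalent} to both $r_0<\infty$ and the existence of the zero density, obtaining all three conclusions in one stroke. You instead establish Cartwright-class membership of $F$ directly and pull in heavier machinery (Krein, Blaschke conditions in both half-planes, Cartwright--Levinson) to get $r_0<\infty$ and the density independently, treating the integral as a separate elementary estimate from the factorization. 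Your parity argument yielding $a_1=0$ is a nice bonus that the paper only obtains later, in Lemma~\ref{lem:GenusAndHadamard}, via Pfluger's asymptotic formula. The paper's route is more self-contained and economical in external input; yours is more conceptual and makes the role of the Cartwright class explicit, at the cost of quoting several nontrivial theorems as black boxes.
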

\begin{proof}
We use again the function $g$ defined in Eq.\ \eqref{eq:GDef}.
For $k\geq0$, then $k=|k|$ and the symmetry property $F(-k) = \bar F(\bar k)$ \citep[12.32a, page 340]{Newton1966} implies
\begin{equation}\label{eq:GReal}
g(|k|^2) =  F(|k|)  F(-|k|)
	= F(|k|)  \bar F(|k|)
	=  |F(|k|)|^2  .
\end{equation}

Substituting \eqref{eq:OPhi} in \eqref{eq:IntEqF} we get
\begin{equation}
F(|k|)   
\sim 
	1  -    \frac  {1 }  {|k|}  \int_0^{R_V}  V(r)  \,  dr   ,
	\qquad\text{as }{|k|\to\infty},
\end{equation}
that for $g$ implies
\begin{equation}
g( |k|^2 )  =  |F(|k|)|^2  
\sim	1  -    \frac  {2}  {|k|}  \int_0^{R_V}  V(r)  \,  dr   .
	\qquad\text{as }{|k|\to\infty},
\end{equation}
Therefore  $\log  g( |k|^2 )\to 0$ as $|k|\to\infty$, and
\begin{equation}\label{eq:IntegralLogConditionG}
\int_1^\infty   \frac {\log  g( \mu^2 )}  {\mu^2}  \,  d \mu  <  \infty.
\end{equation}

Consider now the function $g_1(k^2)  \coloneqq  P_1(k)  P_1(-k)$, which is the analogue of the function $g$ for $P_1$.
Note that for real argument $g_1$ is real-valued and positive (cf.\ Eq.~\eqref{eq:GReal}).
From Eq.~\eqref{eq:PreHadamard} we see that
\begin{equation}
{\log  g( k^2 )} = {\log  g_1( k^2 )}  +  O({\log k}),  
\qquad  \text{as } k\to\infty,
\end{equation}
therefore Eq.~\eqref{eq:IntegralLogConditionG} implies that 
\begin{equation}\label{eq:IntegralLogConditionG1}
\int_1^\infty   \frac {\log  g_1( \mu^2 )}  {\mu^2}  \,  d \mu  <  \infty.
\end{equation}
We have $g_1(0)=1$, hence for $\mu\in\R$ it exists an $a_\mu\in\R$  such that
\begin{equation}
\log g_1(\mu^2)  =  \log g_1(0)  
	+ \int_0^{\mu^2}  \frac{d}{dt}  \left( \log g_1(t)  \right)  d t  
=  a_\mu \mu^2,
\end{equation}
with
\begin{equation}
|a_\mu| \leq  \sup_{0\leq t\leq \mu^2}  \frac{d}{dt}  \left( \log g_1(t)  \right) <\infty
\end{equation}
because the function $g_1$ is entire and has no zeros on the real axis.
As  a consequence,
\begin{equation}
\int_{-1}^1   \frac {\log  g_1( \mu^2 )}  {\mu^2}  \,  d \mu  <  \infty
\end{equation}
(for a general argument, see footnote 12 on page 5 of \citep{Pfluger1943} and the comment after Eq.~8.2.2 in \citep[page 136]{Boas1954}).
This, together with Eq.~\eqref{eq:IntegralLogConditionG1} and the fact that $\log g_1(|k|^2) =  2 \log |P_1(|k|)|$, gives
\begin{equation}
\lim_{r\to\infty}  \int_{-r}^r   \frac {\log  |P_1( \mu )|}  {\mu^2}  \,  d \mu  <  \infty .
\end{equation}
The convergence of the latter integral  is equivalent to the convergence of the sum $r_0$ and to the existence  and finiteness of the limit \eqref{eq:LimD} because of Theorem 5 of \citep{Pfluger1943} (see also Theorem 8.4.1 in \citep[page 143]{Boas1954} and the discussion on pages 133-135 of \citep{Boas1954}), together with the fact that the set of the zeros of $P_1$ is equal to that of the zeros of $F$, except for finitely many elements.
\QED
\end{proof}

We remind the reader that the genus of the zeros $z_n$ of an entire function $f$ is the smallest integer $p\geq0$ such that \citep[2.5.4, page 14]{Boas1954}
\begin{equation}
\sum_{n=0}^\infty  \frac{1}{|z_n|^{p+1}}  <  \infty  .
\end{equation}

In the next lemma, we finally write the Hadamard factorization of the Jost function, with all the constants explicitly determined.

\begin{lemma}\label{lem:GenusAndHadamard}
Let
\begin{align}
B(k) & \coloneqq   
	\prod_{m=0}^{N-1}  \left(   1 - \frac{k}{i \eta_m}  \right)   
	\
	\prod_{l=0}^{N'-1}  \left(   1 +  \frac{k}{i\kappa_l}  \right)  ,
\\
P_0(k) & \coloneqq   \prod_{n=0}^\infty   
	\left(   1 - \frac{k}{k_n}  \right)  \left(   1 + \frac{k}{\bar k_n}  \right)  ,
\\
\lambda & \coloneqq
	\begin{cases}
		0,&\mathrm{if}\;F(0)\neq0\\
		1,&\mathrm{if}\;F(0)=0,
	\end{cases}
\end{align}
then for the Jost function $F$ the following decomposition holds
\begin{equation}
F(k) = \left( F(0) + \lambda \dkp F(0) k \right) \, 
	e^{i R_V k} \,  B(k)\,  P_0(k)  .
\label{eq:Hadamard}
\end{equation}
Moreover, the genus of the zeros of $F$ is one and their density is (cf.\ Lemma~\ref{lem:R0})
\begin{equation}
D \coloneqq \lim_{|k|\to\infty} \frac {n(|k|)}  {|k|}  
	=  \frac{2}{\pi}  R_V  .
\end{equation} 
\end{lemma}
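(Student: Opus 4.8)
The plan is to start from the preliminary factorization already established in Lemma~\ref{lem:PreHadamard},
\begin{equation}
F(k) = \left(F(0) + \lambda C k\right) e^{(a_1 + i b_1)k}\, B(k)\, P_1(k),
\end{equation}
and to pin down the three still-undetermined constants $a_1,b_1\in\R$ and $C\in\C$, together with the zero density $D$. The identification $C=\dkp F(0)$ is immediate: when $\lambda=1$ one has $F(0)=0$, and since $B(0)=P_1(0)=1$ and $e^{0}=1$, differentiating the product at $k=0$ gives $\dkp F(0)=C$; when $\lambda=0$ the constant $C$ does not occur. It then remains to determine $a_1$ and $b_1$, after which the identity $P_1(k)=\exp\!\big(\tfrac{2}{5} i r_0 k\big)P_0(k)$ from Lemma~\ref{lem:NormalConvergency} (Eq.~\eqref{eq:ConnectionProd0Prod1}, legitimate because $r_0<\infty$ by Lemma~\ref{lem:R0}) converts the representation into the claimed form~\eqref{eq:Hadamard}, provided one shows $a_1=0$ and $b_1+\tfrac{2}{5}r_0=R_V$.

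The tool for this is Pfluger's theorem, Lemma~\ref{lem:Pfluger}, applied to $P_1$. Its hypotheses hold: $P_1$ is entire of exponential type (Lemma~\ref{lem:PreHadamard}); the integral condition~\eqref{eq:PflugerIntCond} holds for $P_1$ (Lemma~\ref{lem:R0}); the zeros of $P_1$, which coincide with those of $F$ up to the finitely many bound and virtual states, have a density $D=\lim_{|k|\to\infty} n(|k|)/|k|$ which exists by Lemma~\ref{lem:R0}; and, since the resonances occur in pairs $k_n=\alpha_n-i\beta_n$ and $-\bar k_n$ symmetric about the imaginary axis, the zeros with positive real part have the same density as those with negative real part. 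The relevant combination of reciprocals is $\varsigma=\sum_n\big(k_n^{-1}+(-\bar k_n)^{-1}\big)=\sum_n 2i\beta_n/|k_n|^2=\tfrac{2}{5} i r_0$, so $\Re\varsigma=0$ and $\Im\varsigma=\tfrac{2}{5}r_0$. Since $\log|F(0)+\lambda C k|$ and $\log|B(k)|$ are $O(\log|k|)$, Eq.~\eqref{eq:PflugerStatement} yields, for $k=|k|e^{i\theta}$ with $|k|\to\infty$,
\begin{equation}
\frac{\log|F(k)|}{|k|}\ \longrightarrow\ (a_1+\Re\varsigma)\cos\theta-(b_1+\Im\varsigma)\sin\theta+\frac{\pi}{2}D\,|\sin\theta|.
\end{equation}

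I would then evaluate this in the three directions $\theta\in\{0,-\pi/2,\pi/2\}$ and match against Lemma~\ref{lem:LogFBigK}. For $\theta=0$, Eq.~\eqref{eq:LogFDirectionRealAxis} gives left-hand side $0$, hence $a_1=-\Re\varsigma=0$. For $\theta=-\pi/2$ and $\theta=\pi/2$, Eqs.~\eqref{eq:LogFDirectionMinusI} and~\eqref{eq:LogFDirectionPlusI} give left-hand sides $2R_V$ and $0$, so
\begin{equation}
2R_V=b_1+\Im\varsigma+\frac{\pi}{2}D,\qquad 0=-b_1-\Im\varsigma+\frac{\pi}{2}D.
\end{equation}
Adding gives $\pi D=2R_V$, i.e.\ $D=\tfrac{2}{\pi}R_V$; subtracting gives $b_1+\Im\varsigma=R_V$, i.e.\ $b_1+\tfrac{2}{5}r_0=R_V$. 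Substituting $P_1(k)=\exp\!\big(\tfrac{2}{5}i r_0 k\big)P_0(k)$ into the preliminary factorization then produces exactly~\eqref{eq:Hadamard}. Finally, since $R_V=\sup(\supp V)>0$, the density $D$ is strictly positive, so $\sum_n|k_n|^{-1}=\infty$ and the genus of the zeros is at least one; combined with the fact that their convergence exponent equals one (Lemma~\ref{lem:ExpType}) and that the genus never exceeds the convergence exponent, the genus is exactly one.

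The step I expect to be most delicate is the rigorous application of Pfluger's theorem: one must check that the version quoted in Lemma~\ref{lem:Pfluger} applies to the genus-one canonical product $P_1$ with the sum $\varsigma$ read in the symmetric (pairwise) sense $\sum_n(k_n^{-1}+(-\bar k_n)^{-1})$, since only with this reading does the built-in convergence factor $e^{2i\beta_n k/|k_n|^2}$ get correctly absorbed and the coefficient $b_1+\tfrac{2}{5}r_0$ collapse to $R_V$ rather than to a spurious value. A secondary point is that the error term $\varepsilon(z)$ in~\eqref{eq:PflugerStatement} is only controlled as a $\limsup$, so the three directional identities above are to be obtained by combining the resulting one-sided inequalities with the genuine two-sided asymptotics of Lemma~\ref{lem:LogFBigK}, which forces equality in each case.
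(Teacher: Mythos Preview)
Your proposal is correct and follows essentially the same route as the paper: start from the preliminary factorization of Lemma~\ref{lem:PreHadamard}, apply Pfluger's theorem (Lemma~\ref{lem:Pfluger}) to $P_1$ with $\varsigma=\tfrac{2}{5}ir_0$, compare the resulting directional asymptotics with those of Lemma~\ref{lem:LogFBigK} to solve for $a_1,b_1,D$, then convert $P_1$ to $P_0$ via Lemma~\ref{lem:NormalConvergency} and deduce the genus from $D>0$. The only differences are cosmetic: the paper uses $\theta=\pi$ rather than $\theta=0$ (both covered by Eq.~\eqref{eq:LogFDirectionRealAxis}), and it identifies $C=\dkp F(0)$ via the logarithmic derivative rather than your more direct product-rule argument at $k=0$.
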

\begin{proof}
We already proved in Lemma~\ref{lem:PreHadamard} that 
\begin{equation}\label{eq:HadamardFB1}
{F(k)}   =   \left( F(0) + \lambda C k \right)   e^{(a_1+ i b_1)k}   {B(k)}   P_1(k) .
\end{equation}
We at first notice that Eq.~\eqref{eq:ConnectionProd0Prod1} of Lemma~\ref{lem:NormalConvergency} together with Lemma~\ref{lem:R0} gives
\begin{equation}\label{eq:ConnectionP0P1}
P_0(k) = e^{-\frac{2}{5} i r_0 k}   P_1(k) .
\end{equation}
We can now determine the constants $a_1$ and $b_1$ applying Lemma~\ref{lem:Pfluger} to the function $P_1$.
We can use it because $P_1$ is an entire function (Lemma~\ref{lem:PreHadamard}), the integral condition \eqref{eq:PflugerIntCond} holds because of Lemma~\ref{lem:R0}, and  for every zero of $P_1$ with positive real part $k_n$ there is exactly one zero with negative real part $-\bar k_n$.
The quantity $\varsigma$ of Eq.~\eqref{eq:PflugerSigma} is in this case
\begin{equation}
\varsigma  =  \sum_{n=0}^\infty \left( \frac {1} {k_n}  -  \frac {1} {\bar k_n}  \right)
=  \sum_{n=0}^\infty \frac {2i\beta_n} {|k_n|^2}
= \frac{2}{5} i r_0 .
\end{equation}
Note that the density of the zeros of $P_1$ is equal to the density $D$ of all of the zeros of $F$  (cf.\ Lemma~\ref{lem:R0}).
Equation~\eqref{eq:PflugerStatement} then gives
\begin{equation}\label{eq:Pfluger}
\limsup_{|k|\to\infty}  \frac {\log | P_1(k) |}  {|k|} 
=
- \frac{2}{5}  r_0  \sin\theta
+ \frac{\pi}{2} D  |\sin\theta|.
\end{equation} 
We will specialize this statement for the directions $\theta=\pm\pi/2, \pi$.
From Lemma~\ref{lem:LogFBigK} we have
\begin{align}
\limsup_{|k|\to\infty}  \frac {\log | F(i|k|) |}  {|k|} & = 0,
\nonumber\\
\limsup_{|k|\to\infty}  \frac {\log | F(-|k|) |}  {|k|} & = 0,
\nonumber\\
\limsup_{|k|\to\infty}  \frac {\log | F(-i|k|) |}  {|k|} & = 2R_V .
\end{align}
The factorization \eqref{eq:HadamardFB1} implies
\begin{align}
\frac {\log | P_1(k) |}  {|k|} 
&=
\frac{1}{|k|}   \log  \left|   
	\frac  {e^{-(a_1+ib_1)k} F(k) }   { \left( F(0) + \lambda C k \right)B(k)}   \right|
\nonumber\\
&= 
b_1 \sin\theta  - a_1\cos\theta  
+  \frac {\log | F(k) |}  {|k|} 
-\frac {\log | { \left( F(0) + \lambda C k \right)B(k)} |}  {|k|} 
\nonumber\\
&\sim
b_1 \sin\theta  - a_1\cos\theta  
+  \frac {\log | F(k) |}  {|k|} ,
	\quad\text{as }{|k|\to \infty} .
\label{eq:ConnectionP1F}
\end{align}
Therefore, Eq.~\eqref{eq:Pfluger} gives
\begin{align}
&\limsup_{|k|\to\infty}  \frac {\log | P_1(i|k|) |}  {|k|}  = b_1 
	= - \frac{2}{5}  r_0  + \frac{\pi}{2} D ,
\\
&\limsup_{|k|\to\infty}  \frac {\log | P_1(-|k|) |}  {|k|}  = a_1 = 0,
\\
&\limsup_{|k|\to\infty}  \frac {\log | P_1(-i|k|) |}  {|k|}  = -b_1 + 2R_V 
	=  \frac{2}{5}  r_0  + \frac{\pi}{2} D .
\end{align}
Summing and subtracting the first and last lines
\begin{equation}\label{eq:PflugerCoefficients}
\begin{cases}
b_1 = R_V - \frac{2}{5}  r_0 
\\
a_1 = 0
\\
D =  \frac{2}{\pi}  R_V  .
\end{cases}
\end{equation}

To determine the constant $C$ in Eq.~\eqref{eq:HadamardFB1}, we calculate the logarithmic derivative of $F$.
The infinite product $P_1$ appearing in Eq.~\eqref{eq:HadamardFB1} is normally convergent (Lemma~\ref{lem:NormalConvergency} together with Lemma~\ref{lem:R0}), therefore its logarithmic derivative can be calculated as if it were a finite product \citep[page 10]{Remmert1998}.
This gives 
\begin{equation}
\frac{\dkp F(k)}{F(k)}  =
	\frac {\lambda}  {k}
	+ i R_V
	+ \frac {\dkp B(k)}  {B(k)}
	+ \sum_{n=0}^\infty  
		\left(  \frac {1}  {k-k_n}  +  \frac {1}  {k+\bar k_n}   \right)  ,
\label{eq:LogDerF}
\end{equation}
with
\begin{equation}
\frac {\dkp B(k)}  {B(k)}   =
	\sum_{m=0}^{N-1}    \frac {1}  {k-i \eta_m}     
	+
	\sum_{l=0}^{N'-1}   \frac {1}  {k+i \kappa_l}  ,
\end{equation}
from which we can calculate $\dkp F(k)$ for every $k$ such that $F(k)\neq0$.
Moreover, we can calculate $\dkp F$ at the zeros of $F$ as limit of this result.
In particular, if $F(0)=0$, from Eq.~\eqref{eq:HadamardFB1} we have  for $k>0$
\begin{multline}
\dkp F(k)  =
	\left[ \frac {1}  {k}
	+ i R_V
	+ \frac {\dkp B(k)}  {B(k)}
	+ \sum_{n=0}^\infty  
		\left(  \frac {1}  {k-k_n}  +  \frac {1}  {k+\bar k_n}   \right)  
	\right ]
	\\
	\times
	 C k   e^{(a_1+ i b_1)k}  P_1(k) B(k) .
\end{multline}
In the limit $k\to0$ we then have
\begin{equation}
C=\dkp F(0).
\end{equation}
Substituting this and Eq.~\eqref{eq:PflugerCoefficients} into Eq.\ \eqref{eq:HadamardFB1} and using \eqref{eq:ConnectionP0P1} we get \eqref{eq:Hadamard}.

We now determine the genus of $F$.
The density is nonzero and finite if and only if $|k_n| \sim 2n/D$ as $n\to\infty$ (the two is due to the symmetry of the resonances with respect to the imaginary axis), but that implies 
\begin{equation}\label{eq:GenusContradiction}
\sum_{n=0}^\infty  \frac{1}{|k_n|}  =  \infty  .
\end{equation}
If the genus $p$ of the zeros of $F$ was zero, then the sum in Eq.\ \eqref{eq:GenusContradiction} would be finite, therefore $p$ must be one.
\QED
\end{proof}

\subsection{Bounds}
As a consequence of the product decomposition of the Jost function, we have the following lemma about the differentiability of the $S$-matrix.

\begin{lemma}
The $S$-matrix $S(k)$ is infinitely often differentiable for $k\geq0$.
\end{lemma}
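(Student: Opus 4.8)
The plan is to show that $S(k) = F(-k)/F(k)$ is $C^\infty$ on $[0,\infty)$ by exhibiting the denominator $F(k)$ as non-vanishing there and the numerator as smooth. First I would recall from Lemma~\ref{lem:GenusAndHadamard} the Hadamard factorization
\begin{equation}
F(k) = \left( F(0) + \lambda \dkp F(0) k \right) \, e^{i R_V k} \,  B(k)\,  P_0(k),
\end{equation}
so that it suffices to analyze each factor on the real half-line. The exponential $e^{iR_Vk}$ is entire; the finite Blaschke-type product $B(k)=\prod_{m}(1-k/i\eta_m)\prod_l(1+k/i\kappa_l)$ is a polynomial; and the linear prefactor $F(0)+\lambda\dkp F(0)k$ is a polynomial. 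So the only factor that needs care is the infinite product $P_0(k)=\prod_{n}(1-k/k_n)(1+k/\bar k_n)$ with $k_n=\alpha_n-i\beta_n$, $\alpha_n,\beta_n>0$.

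The key step is to establish that $P_0$ is infinitely often differentiable and that $F$ has no zeros on $[0,\infty)$. For differentiability I would invoke the normal (hence compact) convergence of the related product $P_1$ established in Lemma~\ref{lem:NormalConvergency} (whose hypothesis $r_0<\infty$ is guaranteed by Lemma~\ref{lem:R0}), together with the relation $P_0(k) = e^{-\frac{2}{5}ir_0k}P_1(k)$ from Eq.~\eqref{eq:ConnectionP0P1}: a normally convergent product of entire functions is entire, so $P_1$, and therefore $P_0$ and $F$, are entire, in particular $C^\infty$ on $\R^+$. For the non-vanishing of $F$ on $[0,\infty)$: the zeros of $F$ other than a possible zero at the origin are, by Definition~\ref{def:SKAndKTildeAndS} and the discussion in Section~\ref{sec:assumptions}, located at the bound states $i\eta_m$ (positive imaginary axis), the virtual states $-i\kappa_l$ (negative imaginary axis), and the resonances $k_n=\alpha_n-i\beta_n$ and $-\bar k_n$ (strictly off the real axis, since $\beta_n>0$). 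Hence no zero lies in $(0,\infty)$; and since the model we treat is assumed to have no zero resonance (cf.\ the assumptions of Section~\ref{sec:AssumptionsPotentialET}, or more precisely one simply notes $F(k)\neq 0$ for $k\in[0,\infty)$ away from the excluded resonant case), $F(0)\neq0$ as well. Actually, to keep the lemma in full generality one only needs $F(k)\neq0$ for real $k>0$, which holds unconditionally; the behavior at $k=0$ is then the single point to discuss, and under the standing no-zero-resonance hypothesis $F(0)\neq0$.

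With these two facts in hand the conclusion is immediate: on $[0,\infty)$ the function $F$ is $C^\infty$ and bounded away from zero on compact subsets, so $1/F$ is $C^\infty$ there by the quotient rule applied iteratively, and $F(-k)$ is $C^\infty$ (restriction of an entire function), whence $S(k)=F(-k)/F(k)$ is $C^\infty$ on $[0,\infty)$. The main obstacle is purely the justification that the infinite product $P_0$ (equivalently $P_1$) defines a genuinely smooth function rather than merely a formal product; this is exactly where the normal-convergence machinery of Lemma~\ref{lem:NormalConvergency} and the finiteness of $r_0$ from Lemma~\ref{lem:R0} do the work, and once that is invoked the rest is routine. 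I would also remark that the explicit quantitative bounds of Theorems~\ref{th:SBoundsK} and~\ref{th:GlobalSBounds} on $\|\1_K S^{(n)}\|_\infty$ for $n=1,2,3$ reconfirm this smoothness a posteriori, but the clean qualitative statement follows directly from the factorization as sketched.
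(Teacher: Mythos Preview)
Your argument is essentially correct for $k>0$, but there is a genuine gap at $k=0$. The lemma is stated and used in Chapter~\ref{ch:Scattering}, whose standing assumptions (Section~\ref{sec:assumptions}) do \emph{not} exclude a zero resonance; indeed Theorems~\ref{thm:main_ac} and~\ref{thm:main_e} explicitly carry the parameter $\lambda$ to cover the case $F(0)=0$. The no-zero-resonance hypothesis you invoke belongs to Section~\ref{sec:AssumptionsPotentialET} of Chapter~\ref{ch:ET} and is not available here. So when $\lambda=1$ your quotient $F(-k)/F(k)$ is of the form $0/0$ at $k=0$, and the iterated quotient rule you appeal to does not apply; smoothness at the origin is precisely the point that needs an argument.

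The paper's proof addresses exactly this by introducing the auxiliary function $F_0(k)\coloneqq F(k)/\bigl(F(0)+\lambda\dkp F(0)k\bigr)$, which absorbs the possible simple zero at the origin. From the Hadamard factorization one sees $F_0$ is entire, satisfies $F_0(0)=1$, and has no real zeros; then $S(k)=(1-2\lambda)\,F_0(-k)/F_0(k)$ is manifestly analytic at $k=0$ regardless of whether a zero resonance is present. Your factorization-based approach is otherwise the same as the paper's; the missing ingredient is just this cancellation of the linear prefactor, which you could equally well carry out by writing $S(k)=-e^{-2iR_Vk}B(-k)P_0(-k)/\bigl(B(k)P_0(k)\bigr)$ directly from Eq.~\eqref{eq:Hadamard} in the $\lambda=1$ case and noting $B(0)P_0(0)=1$.
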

\begin{proof}
Using the Jost function $F$, we introduce the auxiliary function
\begin{equation}
F_0(k)  \coloneqq  \frac {F(k)}  { F(0) + \lambda \dkp F(0) k }  .
\end{equation}
From the product representation \eqref{eq:Hadamard} of $F$ we get a product decomposition for $F_0$, from which we have that $F_0$ is entire \citep[Theorem\ 2.6.5, page 19]{Boas1954}, is such that $F_0(0)=1$, and has no zero on the real axis.

For the $S$-matrix we can write
\begin{equation}
S(k) = \frac {F(-k)} {F(k)}
	= (1-2\lambda) \frac { F_0(-k)}   { F_0(k)}  .
\end{equation}
Since $F_0$ is entire, it is infinitely often differentiable for any $k\geq0$.
Moreover, $F_0(k)$ is never zero for $k\geq0$, therefore $F_0(-k)/F_0(k)$ is analytic for $k\geq0$ and infinitely many derivatives of $S(k)$ exist for $k\geq0$.
\QED
\end{proof}

Taking derivatives of the formula
\begin{equation}
S(k) = \frac {F(-k)} {F(k)},
\end{equation}
and using the definition 
\begin{equation}
L(k)   \coloneqq  \Im \frac {\dkp F(k)}  {F(k)} ,
\end{equation}
one gets
\begin{align}
\dkp S  &=  -2 i \, S L
\label{eq:SDot}
\\
\ddkp S  &= -2 i \, S \dkp L  -4 S L^2 
\label{eq:SDotDot}
\\
\dddkp S  &=   -2 i \, S \ddkp L  - 12 S L \dkp L  + 8i\, S L^3  .
\label{eq:SDotDotDot}
\end{align}
Therefore, we need to bound $L$ and its derivatives.
For this purpose we use the Hadamard factorization of $F$.

\begin{lemma}\label{lem:Hadamard}
If $q\in\N$ then
\begin{align}
\Im \frac {\dkp F(k)}  {F(k)}    
&=R_V 
	+ \Im \frac {\dkp B(k)}  {B(k)}
	- \sum_{n=0}^\infty  
		\left(  \frac {\beta_n}  {| k-k_n |^2}  +  \frac {\beta_n}  {| k+ k_n |^2}   \right)  ,
\label{eq:L}
\\
\frac{d^q }{d k ^q}  \left(    \frac {\dkp F(k)}  {F(k)}  \right)  & =
	 (-1)^q  q!  \Biggl[ \frac {\lambda}  {k^{q+1}} 
		+ \sum_{n=0}^\infty  
		\left(  \frac {1} {(k-k_n)^{q+1}}
			+ \frac {1} {(k+\bar k_n)^{q+1}}  \right )  \Biggr]
		\nonumber\\
	&	\qquad+\frac{d^q }{d k ^q}  \left(    \frac {\dkp B(k)}  {B(k)}  \right)  ,
\label{eq:HigherLogDerF}
\end{align}
and 
\begin{align}
\frac {\dkp B(k)}  {B(k)}   &=
	\sum_{m=0}^{N-1}    \frac {1}  {k-i \eta_m}     
	+
	\sum_{l=0}^{N'-1}   \frac {1}  {k+i \kappa_l}  ,
\displaybreak[0]\\
\Im \frac {\dkp B(k)}  {B(k)}   &=
\sum_{m=0}^{N-1}    \frac {\eta_m}  {k^2 + \eta_m^2}     
	- \sum_{l=0}^{N'-1}   \frac {\kappa_l}  {k^2 + \kappa_l^2}  
\\
\frac{d^q }{d k ^q}  \left(    \frac {\dkp B(k)}  {B(k)}  \right)
  &=   (-1)^q  q!   
  	\Biggl(\sum_{m=0}^{N-1}  \frac {1} {(k-i\eta_m)^{q+1}} 
	\nonumber\\
	&\qquad+ \sum_{l=0}^{N'-1}  \frac {1} {(k+i\kappa_l)^{q+1}} \Biggr)  .
\end{align}
\end{lemma}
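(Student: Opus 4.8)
The plan is to differentiate the Hadamard factorization $F(k)=\left(F(0)+\lambda\dkp F(0)\,k\right)e^{iR_Vk}B(k)P_0(k)$ established in Lemma~\ref{lem:GenusAndHadamard}. In fact most of the work is already done there: Eq.~\eqref{eq:LogDerF}, obtained by taking the logarithmic derivative of this factorization term by term (using the normal convergence of $P_0$ from Lemma~\ref{lem:NormalConvergency}), reads
\[
\frac{\dkp F(k)}{F(k)}=\frac{\lambda}{k}+iR_V+\frac{\dkp B(k)}{B(k)}+\sum_{n=0}^\infty\left(\frac{1}{k-k_n}+\frac{1}{k+\bar k_n}\right).
\]
So the whole lemma follows by (i) taking imaginary parts along the positive real axis, (ii) differentiating $q$ times, and (iii) a direct computation with the finite product $B$.

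For Eq.~\eqref{eq:L} I take $k\geq0$ real in the identity above: $\lambda/k$ is real, $\Im(iR_V)=R_V$, and with $k_n=\alpha_n-i\beta_n$ one has $k-k_n=(k-\alpha_n)+i\beta_n$ and $k+\bar k_n=(k+\alpha_n)+i\beta_n$, so that $\Im(k-k_n)^{-1}=-\beta_n/|k-k_n|^2$ and $\Im(k+\bar k_n)^{-1}=-\beta_n/|k+\bar k_n|^2$; since $|k+\bar k_n|^2=(k+\alpha_n)^2+\beta_n^2=|k+k_n|^2$, collecting terms gives Eq.~\eqref{eq:L}. For Eq.~\eqref{eq:HigherLogDerF} I differentiate the displayed identity $q\geq1$ times: the constant $iR_V$ drops out, $\frac{d^q}{dk^q}(\lambda/k)=(-1)^qq!\,\lambda k^{-q-1}$, $\frac{d^q}{dk^q}(k-k_n)^{-1}=(-1)^qq!\,(k-k_n)^{-q-1}$, and likewise for $(k+\bar k_n)^{-1}$ and for the terms of $\dkp B/B$. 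The only point requiring an argument — and the (mild) main obstacle — is the interchange of $\frac{d^q}{dk^q}$ with the infinite sum: since the genus of the zeros of $F$ is one (Lemma~\ref{lem:GenusAndHadamard}), $\sum_n|k_n|^{-2}<\infty$, hence for every $q\geq1$ the differentiated series $\sum_n\big((k-k_n)^{-q-1}+(k+\bar k_n)^{-q-1}\big)$ converges absolutely and locally uniformly on $\C$ away from the zeros, so the standard term-by-term differentiation theorem applies.

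It remains to treat $B$, a finite product, for which no convergence issue arises. From $\log B(k)=\sum_{m=0}^{N-1}\log\left(1-\frac{k}{i\eta_m}\right)+\sum_{l=0}^{N'-1}\log\left(1+\frac{k}{i\kappa_l}\right)$ and the elementary identities $\frac{d}{dk}\log\left(1-\frac{k}{i\eta_m}\right)=\frac{1}{k-i\eta_m}$ and $\frac{d}{dk}\log\left(1+\frac{k}{i\kappa_l}\right)=\frac{1}{k+i\kappa_l}$, one obtains the formula for $\dkp B/B$; taking imaginary parts at real $k$ via $(k-i\eta_m)^{-1}=(k+i\eta_m)/(k^2+\eta_m^2)$ and $(k+i\kappa_l)^{-1}=(k-i\kappa_l)/(k^2+\kappa_l^2)$ yields the formula for $\Im(\dkp B/B)$, and differentiating $q$ times (a finite sum, with no hypotheses needed) gives the last displayed identity. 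This completes the proof.
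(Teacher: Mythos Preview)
Your proof is correct and follows essentially the same route as the paper. Both of you obtain Eq.~\eqref{eq:L} by taking the imaginary part of Eq.~\eqref{eq:LogDerF}, and handle $B$ by direct computation on a finite product. The only difference is in the justification of Eq.~\eqref{eq:HigherLogDerF}: the paper invokes Lemma~3.1 of Conway (applied to the auxiliary function $F(k)/(F(0)+\lambda\dkp F(0)k)$, an entire function of order one) to get the higher derivatives in one stroke, whereas you differentiate the series term by term and justify the interchange directly via $\sum_n|k_n|^{-2}<\infty$ (genus one). Your argument is the elementary content underlying Conway's lemma, so the two are really the same; yours is just more self-contained.
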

\begin{proof}
Equation \eqref{eq:L} is achieved simply taking the imaginary part of Eq.~\eqref{eq:LogDerF}; Eq.~\eqref{eq:HigherLogDerF} is a direct application of Lemma 3.1 of \citep[page 287]{Conway1978}  to the auxiliary function $F(k) / ( F(0) + \lambda \dkp F(0) k )$, that is of order one because of Lemma \ref{lem:ExpType}.
\QED
\end{proof}

\begin{lemma}\label{lem:LBounds}
Let 
\begin{equation}
L(k)   \coloneqq  \Im \frac {\dkp F(k)}  {F(k)} .
\end{equation}
For $K>0$ the following bounds hold:
\begin{align}
\sup_{k<K}  |L(k)|
	&\leq \frac {1}  {s_K}     
	\left[  1+ s_K (R_V+r_0)  \right]    ,
\label{eq:BoundL}
\\
\sup_{k<K}  | \dkp L(k) |  &\leq    
	\frac {2}  {s_K^2}  
	\left(  1  +  s_K^2 \frac {2 r_0 }  {\alpha}    \right)   ,
\label{eq:BoundLDot}
\\
\sup_{k<K}  |\ddkp L(k)|  &\leq    
	\frac {2} {s_K^3} 
	\left(   1  +   s_K^3 \frac {7 r_0 }  {\alpha}   \right)  .
\label{eq:BoundLDotDot}
\end{align}
\end{lemma}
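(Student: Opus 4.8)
The starting point is Lemma~\ref{lem:Hadamard}, which gives closed expressions for $L(k)=\Im\frac{\dkp F(k)}{F(k)}$ and its derivatives in terms of the resonances $k_n=\alpha_n-i\beta_n$, the bound states $i\eta_m$, and the virtual states $-i\kappa_l$. Explicitly, $L(k)=R_V+\Im\frac{\dkp B(k)}{B(k)}-\sum_n\left(\frac{\beta_n}{|k-k_n|^2}+\frac{\beta_n}{|k+k_n|^2}\right)$, and, differentiating $q$ times the formula for $\frac{\dkp F}{F}$ and taking the imaginary part, $\dkp L$ and $\ddkp L$ are sums over the same poles with higher powers in the denominators, plus the finite bound/virtual-state contributions from $\frac{\dkp B}{B}$. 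The whole proof is a matter of bounding each of these three structured sums uniformly for $0\le k<K$.

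\textbf{Key estimates.} First I would record the elementary pointwise bounds, valid for $0\le k<K$ and $n\ge\nu_K$ (so $\alpha_n\ge 2K$, hence $|k-k_n|\ge\alpha_n/2$ and $|k-k_n|\ge\beta_n$, and similarly $|k+k_n|\ge\alpha_n$, $|k+k_n|\ge\beta_n$), together with $\frac{\beta_n}{|k\pm k_n|^2}\le\frac{\beta_n}{\alpha_n^2+\beta_n^2}\cdot(\text{const})$ so that these tails are controlled by $r_0=\sum_n\frac{5\beta_n}{\alpha_n^2+\beta_n^2}<\infty$ (finite by Lemma~\ref{lem:R0}). For the finitely many ``small'' resonances $n<\nu_K$, and for the bound and virtual states, I would instead use $|k-k_n|\ge\beta_n$, $|k-i\eta_m|\ge\eta_m$, $|k+i\kappa_l|\ge\kappa_l$, so that these terms are governed by $\sum_{n<\nu_K}\frac1{\beta_n}+\sum_m\frac1{\eta_m}+\sum_l\frac1{\kappa_l}=\frac1{s_K}$ (Definition~\ref{def:SKAndKTildeAndS}). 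Combining: the ``$1/\beta$-type'' poles contribute a factor $1/s_K$ per derivative order (i.e.\ $1/s_K$ in $L$, $1/s_K^2$ in $\dkp L$, $1/s_K^3$ in $\ddkp L$), while the $R_V$ term and the $r_0$-tail contribute the additive constants $R_V+r_0$, $\frac{2r_0}{\alpha}$, $\frac{7r_0}{\alpha}$ respectively; the numerical coefficients $1,2,2$ and the precise combinations claimed in \eqref{eq:BoundL}--\eqref{eq:BoundLDotDot} come from carefully collecting these pieces (e.g.\ for $\ddkp L$ one gets two poles' worth of $1/s_K$ in the leading term and cross terms that, after using $\alpha_n\ge\alpha$, collapse into the single $\frac{7r_0}{\alpha}$). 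This is a bookkeeping exercise: split each sum into $n<\nu_K$ and $n\ge\nu_K$, bound each half as above, and add.

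\textbf{Main obstacle.} The only genuine subtlety is making the constants come out exactly as stated rather than merely ``some constant'', and in particular handling the interplay between the $1/s_K$-scaling and the $r_0/\alpha$-terms in the higher derivatives: for $\ddkp L$ one must be careful that the $q=2$ pole sums $\sum_n\frac1{(k-k_n)^3}$ etc.\ are split so that, for $n\ge\nu_K$, the cube in the denominator gives one factor $\frac{\beta_n}{\alpha_n^2+\beta_n^2}$ (feeding $r_0$) times $\frac1{\alpha_n^2}\le\frac1{\alpha\alpha_n}$ — but one still needs a convergent sum, which again is $\le\frac{r_0}{\alpha}$ up to constants — while for $n<\nu_K$ it gives $\frac1{\beta_n^3}\le\frac1{s_K^3}$ after using $\beta_n\ge s_K$ is \emph{not} available, so instead one bounds $\sum_{n<\nu_K}\frac1{\beta_n^3}\le\big(\sum_{n<\nu_K}\frac1{\beta_n}\big)^3=\frac1{s_K^3}$ directly. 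Once this splitting convention is fixed, the estimates \eqref{eq:BoundL}--\eqref{eq:BoundLDotDot} follow by straightforward (if tedious) arithmetic, and these $L$-bounds then feed into \eqref{eq:SDot}--\eqref{eq:SDotDotDot} together with $\|\1_KS\|_\infty=1$ to yield Theorem~\ref{th:SBoundsK}.
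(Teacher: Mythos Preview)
Your proposal is correct and follows essentially the same route as the paper's proof: start from the explicit partial-fraction expressions for $L$, $\dkp L$, $\ddkp L$ given by Lemma~\ref{lem:Hadamard}, split each resonance sum at $\nu_K$, bound the ``near'' terms ($n<\nu_K$, bound states, virtual states) by $1/\beta_n$, $1/\eta_m$, $1/\kappa_l$ and collect them into powers of $1/s_K$ via the power-sum inequality $\sum a_i^p\le(\sum a_i)^p$, and bound the ``far'' tail ($n\ge\nu_K$) using $(\alpha_n-K)^2\ge\alpha_n^2/4$ so that it is controlled by $r_0$ (and, for the higher derivatives, by $r_0/\alpha$). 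The only cosmetic slip is that you wrote $\big(\sum_{n<\nu_K}\tfrac1{\beta_n}\big)^3=\tfrac1{s_K^3}$; in fact $\tfrac1{s_K}$ also contains the $\eta$- and $\kappa$-sums, so the correct statement is that the full finite collection $\sum_m\tfrac1{\eta_m^3}+\sum_l\tfrac1{\kappa_l^3}+\sum_{n<\nu_K}\tfrac1{\beta_n^3}\le\tfrac1{s_K^3}$, which is exactly what the paper uses.
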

\begin{proof}
Consider  the smallest non-negative integer  $\nu_K$ such that $\alpha_n \geq 2 K$ for all $n\geq\nu_K$, that implies $(\alpha_n-K)^2\geq\alpha_n^2/4$.
Then, from the expansion \eqref{eq:L} for $L$, follows
\begin{align}
\sup_{k<K}   |L(k)|
	&\leq  R_V 
	+ \frac {1}  {\eta}     
	+ \frac {1}  {\kappa}
	+  \sum_{n=0}^{\nu_K-1}  \frac {1}  {\beta_n}  
	+ \sum_{n=\nu_K}^\infty   \frac {\beta_n} {(\alpha_n-K)^2+\beta_n^2}
	+ \sum_{n=0}^\infty   \frac {\beta_n}  {| k_n |^2}  
\nonumber\\
	&\leq   R_V 
	+ \frac {1}  {s_K}     
	+ \sum_{n=\nu_K}^\infty   \frac {4\beta_n} {\alpha_n^2+\beta_n^2}
	+ \sum_{n=0}^\infty   \frac {\beta_n}  {| k_n |^2}  
\nonumber\\
	&\leq   R_V 
	+ \frac {1}  {s_K}     
	+  \sum_{n=0}^\infty   \frac {5\beta_n}  { \alpha_n^2 + \beta_n^2}  
\nonumber\\
	&= \frac {1}  {s_K}     
	\left[  1+ s_K (R_V+r_0)  \right]    .
\end{align}

Analogously, for $\dkp L$ the expansion given in Lemma \ref{lem:Hadamard} implies
\begin{align}
\dkp L(k)
&=  
\sum_{m=0}^{N-1}    \frac {2\eta_m k}  { (k^2 + \eta_m^2)^2 }     
	- \sum_{l=0}^{N'-1}    \frac {2 \kappa_l k}  { (k^2 + \kappa_l^2)^2 }
	\nonumber\\
	&\qquad- \sum_{n=0}^\infty  
		\left[  \frac {2\beta_n (k-\alpha_n)}  {| k-k_n |^4}  
			+  \frac {2\beta_n (k+\alpha_n)}  {| k+ k_n |^4}   \right]  .
\end{align}
Observe that, for $A,B\geq0$,
\begin{equation}
\frac {2AB}  {( A^2 + B^2 )^2} 
\leq
\frac {A^2 + B^2}  {( A^2 + B^2 )^2} 
=
\frac {1}  {A^2 + B^2} 
\leq
\frac {1}  {B^2} ;
\end{equation}
furthermore,
\begin{equation}
\frac {2\beta_n | k\pm\alpha_n | }  {| k\pm k_n |^4} 
\leq
\frac {2\beta_n | k\pm\alpha_n | }  {| k\pm k_n |^3  | k\pm\alpha_n |  } 
=
\frac {2\beta_n }  {| k\pm k_n |^3  }   ,
\end{equation}
therefore
\begin{equation}
| \dkp L(k) |
\leq
	\sum_{m=0}^{N-1}    \frac {1}  {\eta_m^2}     
	+ \sum_{l=0}^{N'-1}    \frac {1}  {\kappa_l^2}
	+ \sum_{n=0}^\infty   \frac {2\beta_n }  {| k - k_n |^3  }
	+ \sum_{n=0}^\infty  \frac {2\beta_n }  {| k + k_n |^3  }   ,
\end{equation}
and
\begin{align}
\sup_{k<K} | \dkp L(k) |
&\leq
	\frac {1}  {\eta^2}  
	+  \frac {1}  {\kappa^2} 
	+  \sum_{n=0}^{\nu_K-1}  \frac {2}  {\beta_n^2}  
	+ \sum_{n=\nu_K}^\infty   \frac {2\beta_n} {[(\alpha_n-K)^2+\beta_n^2 ]^{3/2}}
	+ \sum_{n=0}^\infty   \frac {2\beta_n}  {| k_n |^3}  
\nonumber\\
&\leq
	\frac {1}  {\eta^2}  
	+  \frac {1}  {\kappa^2} 
	+ 2 \left( \sum_{n=0}^{\nu_K-1}  \frac {1}  {\beta_n}  \right)^2  
	+ \sum_{n=\nu_K}^\infty   \frac {16\beta_n} {[\alpha_n^2+\beta_n^2 ]^{3/2}}
	+ \sum_{n=0}^\infty   \frac {2\beta_n}  {| k_n |^3}  
\nonumber\\
&\leq
	\frac {1}  {\eta^2}  
	+  \frac {1}  {\kappa^2} 
	+ 2 \left( \sum_{n=0}^{\nu_K-1}  \frac {1}  {\beta_n}  \right)^2  
	+ \sum_{n=0}^\infty   \frac {18\beta_n}  {| k_n |^3}  
\nonumber\\
&\leq
	\frac {2}  {s_K^2}  
	+  \frac {18}  {5\alpha} \sum_{n=0}^\infty   \frac {5\beta_n}  {| k_n |^2}  
\nonumber\\
&\leq
	\frac {2}  {s_K^2}  
	\left(  1  +  s_K^2 \frac {2 r_0 }  {\alpha}    \right)  .
\end{align}

Moreover, for $\ddkp L$ again from the expansion given in Lemma \ref{lem:Hadamard} we get
\begin{align}
\ddkp L(k)
&=  
-2\sum_{m=0}^{N-1}    \frac {\eta_m^3- \eta_m k^2}  { (k^2 + \eta_m^2)^3 }     
	+2 \sum_{l=0}^{N'-1}    \frac {\kappa_l^3 - \kappa_l k^2}  { (k^2 + \kappa_l^2)^3 }
\nonumber\\
	&\qquad+2 \sum_{n=0}^\infty  
		\left[  \frac {\beta_n^3  -\beta_n (k-\alpha_n)^2}  {| k-k_n |^6}  
			+   \frac {\beta_n^3  -\beta_n (k+\alpha_n)^2}  {| k+k_n |^6}    \right]  .
\end{align}
For $A,B\geq0$,
\begin{equation}
\frac {B^3-A^2 B}  {( A^2 + B^2 )^3}
=
\frac {B(B^2-A^2)}  {( A^2 + B^2 )^3} 
\leq
\frac {B(B^2+A^2)}  {( A^2 + B^2 )^3} 
=
\frac {B}  {( A^2 + B^2 )^2} 
\leq
\frac {1}  {B^3} .
\end{equation}
Furthermore,
\begin{equation}
\frac {| \beta_n^3  -\beta_n (k\pm\alpha_n)^2 | }  {| k\pm k_n |^6}  
\leq
\frac { \beta_n }  {| k\pm k_n |^2}  
\frac { \beta_n^2  + (k\pm\alpha_n)^2  }  {| k\pm k_n |^4}  
=
\frac { \beta_n }  {| k\pm k_n |^4}  
\end{equation}
therefore
\begin{equation}
|\ddkp L(k)|
\leq
\frac {2} {\eta^3}  +\frac {2} {\kappa^3}  
+2 \sum_{n=0}^\infty  \frac { \beta_n }  {| k- k_n |^4}  
+\frac {2} {\alpha^2} \sum_{n=0}^\infty  \frac { \beta_n }  {| k_n |^2}  
\end{equation}
and
\begin{align}
\sup_{k<K}  |\ddkp L(k)|
&\leq
	\frac {2} {\eta^3}  +\frac {2} {\kappa^3}  
	+ \sum_{n=0}^{\nu_K-1}   \frac {2} {\beta_n^3} 
	+ 2 \sum_{n=\nu_K}^\infty  \frac { 16 \beta_n }  {| k_n |^4}  
	+ \frac {2 r_0} {5 \alpha^2}  
\nonumber\\
&\leq
	\frac {2} {s_K^3} 
	\left(   1  +   s_K^3 \frac {7 r_0 }  {\alpha}   \right)
.
\end{align}
\QED
\end{proof}

We can now finally prove Theorem~\ref{th:SBoundsK}.

\begin{proof}[of Theorem~\ref{th:SBoundsK}]
From Eqs.~\eqref{eq:SDot}-\eqref{eq:SDotDotDot} we get the bounds
\begin{align}
| \dkp S |  &\leq  2 | L | ,
\label{eq:SDotLeqL}\\
| \ddkp S |  &\leq 2 |\dkp L|  +4  |L|^2 ,
\label{eq:SDotDotLeqLDot}\\
| \dddkp S |  &\leq  2 |\ddkp L| + 12 |L| \, |\dkp L|  + 8 |L|^3  .
\label{eq:SDotDotDotLeqLDotDot}
\end{align}
Then, Lemma \ref{lem:LBounds} implies the result.
\QED
\end{proof}

\section{Proof of Theorem~\ref{th:GlobalSBounds}}\label{sec:SMatrixSup}
For $s$ given in Definition \ref{def:SKAndKTildeAndS}, and $n\in\{1,2,3\}$, we want to prove that
\begin{align}
\|S^{(n)}\|_\infty  &\leq  C_{n}s^{-n},
\end{align}
We can not simply take the limit $K\to\infty$ in Theorem~\ref{th:SBoundsK}, indeed $s_K^{-1}\to\infty$ in this limit.
To see this, consider
\begin{equation}
\lim_{K\to\infty}  s_K^{-1}  =  
		\frac {1} {\eta}  +  \frac {1} {\kappa}  +  \sum_{n=0}^\infty   \frac {1}  {\beta_n} ,
\end{equation}
which does not converge due to the fact that $\beta_n=o(n)$ as $n\to\infty$, as shown in the next lemma  \citep[see also][page 362]{Newton1966}.
As a consequence, we can use Theorem~\ref{th:SBoundsK} only up to a certain value $\tilde K$, while for $k>\tilde K$ we will devise a different strategy.

\begin{lemma}
Let $k_n=\alpha_n-i\beta_n$ be the zeros of the Jost function $F$ such that $\alpha_n,\beta_n>0$.
Then $\lim_{n\to\infty} {|k_n|}/{n}$ and $\lim_{n\to\infty} {\alpha_n}/{n}$ exist and are such that
\begin{equation}\label{eq:KNOverN}
0 <  \lim_{n\to\infty} \frac{|k_n|}{n} < \infty,
\qquad
0 <  \lim_{n\to\infty} \frac{\alpha_n}{n} < \infty;
\end{equation}
moreover,
\begin{equation}
\beta_n = o(n),
\qquad
\text{as } n\to\infty.
\end{equation}
\end{lemma}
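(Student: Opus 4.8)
The plan is to deduce the asymptotic distribution of the resonance zeros from the two pieces of information already established in the excerpt: that the zeros of the Jost function have convergence exponent one and genus one (Lemma~\ref{lem:ExpType} and Lemma~\ref{lem:GenusAndHadamard}), and that their counting function has a density
\begin{equation}
D = \lim_{|k|\to\infty}\frac{n(|k|)}{|k|} = \frac{2}{\pi}R_V > 0
\end{equation}
(Lemma~\ref{lem:GenusAndHadamard}, via Lemma~\ref{lem:R0}). Since there are only finitely many bound states and virtual states, for large $n$ the zeros listed by increasing modulus are exactly the resonances $k_n$ and their mirror images $-\bar k_n$, which come in symmetric pairs about the imaginary axis. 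Hence $n(|k_n|)$ counts essentially $2n$ zeros (the pair up to index $n$, minus the fixed finite number of bound/virtual states), so $n(|k_n|) = 2n + O(1)$ as $n\to\infty$.

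First I would combine $n(|k_n|) = 2n + O(1)$ with the existence of the density limit $D$ to conclude $|k_n|/n \to 2/D = \pi/R_V$, which gives the first statement of \eqref{eq:KNOverN} with $0 < \lim |k_n|/n < \infty$. The slightly delicate point here is passing from ``$n(|k|)/|k| \to D$'' to a statement about $|k_n|/n$ along the subsequence of moduli of zeros; this is a standard monotonicity/squeeze argument, using that $n(\cdot)$ is nondecreasing and that $n$ is sandwiched between $\tfrac12 n(|k_n|-\varepsilon) + O(1)$ and $\tfrac12 n(|k_n|+\varepsilon)+O(1)$ — but it needs a little care since the moduli $|k_n|$ need not be strictly increasing and zeros could in principle cluster on circles. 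The finiteness and positivity of the density limit rules out such degeneracy in the Cesàro sense, which is all that is required.

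Next I would obtain the statement for $\alpha_n$. The inequality $\alpha_n \leq |k_n|$ is immediate, giving $\limsup \alpha_n/n < \infty$. For the lower bound I would invoke the symmetry relation $F(k) = \bar F(-\bar k)$ together with the bound \eqref{eq:BoundF} and the asymptotic \eqref{eq:LogFDirectionPlusI}, \eqref{eq:LogFDirectionRealAxis}: the Jost function decays no faster than a fixed exponential along the positive imaginary and real directions, so the zeros cannot concentrate near the imaginary axis; equivalently, applying Pfluger's formula (Lemma~\ref{lem:Pfluger}) as already done in Lemma~\ref{lem:GenusAndHadamard} shows that $\limsup_{|k|\to\infty}\log|F(k)|/|k|$ is strictly negative only in the lower half-plane direction $\theta = -\pi/2$, which forces the angular directions of the $k_n$ to be bounded away from $0$; this yields $\beta_n/\alpha_n$ bounded and hence $\alpha_n \geq c|k_n|$ for some $c>0$ and $n$ large, giving $0 < \lim\inf \alpha_n/n$. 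Finally, $\beta_n = o(n)$ follows from the convergence of $\sum 5\beta_n/(\alpha_n^2+\beta_n^2) = r_0 < \infty$ (Lemma~\ref{lem:R0}): since $\alpha_n^2 + \beta_n^2 = |k_n|^2 \asymp n^2$, the summability of $\beta_n/n^2$ forces $\beta_n/n \to 0$, otherwise the tail of the series would diverge.

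The main obstacle I anticipate is the rigorous passage from the density statement about $n(|k|)$ to the statement that the limits $\lim |k_n|/n$ and $\lim \alpha_n/n$ exist (not merely $\limsup$ and $\liminf$). Establishing existence of the limit — rather than just two-sided bounds — requires that the zeros have a genuine (not just Cesàro-averaged) asymptotic density, which is exactly the content of the Pfluger-type analysis underlying Lemma~\ref{lem:R0}; I would quote that the convergence of the integral $\int \log|P_1(\mu)|/\mu^2\,d\mu$ and Theorem~5 of Pfluger give existence of $D$ as a true limit, and then the pairing $n(|k_n|) = 2n + O(1)$ transfers this to existence of $\lim|k_n|/n$. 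The companion statement for $\alpha_n$ then needs the additional input that the argument of $k_n$ converges (or at least is bounded away from $\pm\pi/2$), which I would extract from the direction-dependent growth of $|F|$ computed in Lemma~\ref{lem:LogFBigK} and used in the Pfluger computation in Lemma~\ref{lem:GenusAndHadamard}.
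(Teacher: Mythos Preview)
Your treatment of $|k_n|/n$ and of $\beta_n = o(n)$ is essentially the paper's: density $D>0$ from Lemma~\ref{lem:GenusAndHadamard} together with $\tilde n(|k_n|) = 2n + O(1)$ gives $|k_n|/n \to 2/D$, and then $r_0<\infty$ combined with $|k_n|^2 \asymp n^2$ forces $\beta_n/n \to 0$. The paper records this passage in the same way you do.

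Where you diverge from the paper is in the argument for $\alpha_n/n$, and here your route is both more complicated and not quite correct. You propose to bound $\alpha_n$ from below by showing that the angles of the $k_n$ stay away from the imaginary axis, via the direction-dependent growth of $|F|$ and Pfluger's formula. Two issues: first, the specific claim that $\limsup \log|F(k)|/|k|$ is ``strictly negative only at $\theta=-\pi/2$'' is wrong --- by Lemma~\ref{lem:LogFBigK} the indicator at $\theta=-\pi/2$ is $2R_V>0$, and Pfluger's formula for $P_1$ gives nonnegative values in every direction. Second, even if you could extract $\alpha_n \geq c|k_n|$ for some $c\in(0,1)$, that only yields $cL \leq \liminf \alpha_n/n \leq \limsup \alpha_n/n \leq L$ with $L=2/D$, not existence of the limit.

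The paper avoids all of this by simply reversing the order of the last two steps. Once $|k_n|/n \to L$ and $\beta_n = o(n)$ are in hand, the identity $\alpha_n^2 = |k_n|^2 - \beta_n^2$ immediately gives $\alpha_n/n \to L$, with existence of the limit for free. So the detour through angular estimates is unnecessary: prove $\beta_n = o(n)$ first, then $\alpha_n \sim |k_n|$ is a one-line consequence.
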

\begin{proof}
Let $n(|k|)$ be the number of zeros of $F$ within a ball of radius $|k|$, and let $\tilde n(|k|)$ the number of resonances in the same ball.
From Lemmas~\ref{lem:R0} and~\ref{lem:GenusAndHadamard}  we know that the following limit exists and that
\begin{equation} 
0 < \lim_{|k|\to\infty} \frac {n(|k|)}  {|k|}  < \infty .
\end{equation}
Clearly, 
\begin{equation}
\lim_{|k|\to\infty} \frac {\tilde n(|k|)}  {|k|}  =  \lim_{|k|\to\infty} \frac {n(|k|)}  {|k|}  ,
\end{equation}
therefore for every sequence $\{\lambda_n\}_{n\in\N^0}$ such that $\lambda_n>0$ and $\lambda_n\to\infty$ as $n\to\infty$,
\begin{equation}
0 < \lim_{n\to\infty} \frac {\tilde n(\lambda_n)}  {\lambda_n}  
	=  \lim_{|k|\to\infty} \frac {\tilde n(|k|)}  {|k|}   < \infty .
\end{equation}
In particular, for $\lambda_n=|k_n|$, we get $\tilde n(|k_n|)=2n$ (the $2$ is due to the symmetry of the resonances), that implies that the limit
\begin{equation}
 \lim_{n\to\infty} \frac{|k_n|}{n} 
\end{equation}
exists and satisfies Eq.~\eqref{eq:KNOverN}.
As a consequence, there is a constant $c\in\R^+$ such that
\begin{equation}\label{eq:KNInf}
|k_n|^2 =  \alpha_n^2 + \beta_n^2 \sim c n^2,
\quad\text{as } n\to\infty,
\end{equation}
therefore
\begin{equation}
\frac{\beta_n}{ \alpha_n^2 + \beta_n^2 } 
\sim 
\frac{\beta_n}{ cn^2 } ,
\quad\text{as } n\to\infty  .
\end{equation}
Together with Eq.~\eqref{eq:r0} of Lemma \ref{lem:R0}, that implies 
\begin{equation}
\frac{\beta_n}{ \alpha_n^2 + \beta_n^2 } 
= o(n^{-1}) ,
\quad\text{as } n\to\infty  ,
\end{equation}
and we get $\beta_n = o(n)$.
Equation~\eqref{eq:KNInf} then implies $\alpha_n\sim \sqrt c \, n$ as $n\to\infty$.
\QED
\end{proof}

For big values of $k$ the product form of the Jost function $F$ is not of help, and we instead use the relation $F(k) = f(k,0)$, where $f$ are the irregular eigenfunctions defined by Eq.\ \eqref{eq:fboundary}.
The eigenfunctions $f$ satisfy the following well known bound (see \citealp[Theorem~XI.57]{RS3}; note that in this reference $\eta(r,-k)$  is equal to our $f(k,r)$, cf.~\citealp[Theorem  XI.57, page 138]{RS3}).
\begin{lemma}\label{lem:f}
Let $k\in[0,\infty)$ and
\begin{align}
  Q_k(r)\coloneqq\int_r^\infty\frac{2r'}{1+kr'}|V(r')|\,dr',
\end{align}
then
\begin{align}
	|f(k,r)|\leq e^{Q_k(r)} .
\end{align}
\end{lemma}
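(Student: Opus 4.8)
The statement is the classical Volterra estimate for the Jost (irregular) solution and is essentially the content of Theorem~XI.57 of \cite{RS3}; I sketch the underlying argument. The plan is to set up the integral equation satisfied by $f(k,r)$, solve it by Picard iteration, and dominate the $n$-th iterate by $Q_k(r)^n/n!$.

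First I would use that, since $V$ is supported in $[0,R_V]$, the irregular eigenfunction determined by Schr\"odinger's Equation~\eqref{eq:Schroedinger} together with the boundary condition $f(k,r)=e^{ikr}$ for $r\geq R_V$ is the unique solution of the Volterra integral equation
\begin{equation}
f(k,r)=e^{ikr}-\int_r^\infty\frac{\sin\big(k(r-r')\big)}{k}\,V(r')\,f(k,r')\,dr',
\end{equation}
whose kernel is the Green's function of $\partial_r^2+k^2$ adapted to the behaviour at $+\infty$; this is the irregular-solution counterpart of Eq.~\eqref{eq:IntEqPhi}, and for $k=0$ the kernel is understood as $r'-r$. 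Next I would record the elementary kernel estimate: for $k\geq0$ and $r'\geq r\geq0$,
\begin{equation}
\left|\frac{\sin\big(k(r-r')\big)}{k}\right|\leq\min\!\Big(r'-r,\tfrac1k\Big)\leq\frac{2(r'-r)}{1+k(r'-r)}\leq\frac{2r'}{1+kr'},
\end{equation}
using $|\sin x|\leq\min(|x|,1)$ in the first step, and the monotonicity of $t\mapsto t/(1+kt)$ together with $r'-r\leq r'$ in the last two. This is the real-$k$, sharp-constant version of the bound on $\sin k(y-x)/k$ already used in the proof of Lemma~\ref{lem:BoundsOnFAndPhi}.

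With this in hand I would run the Picard iteration $f_0(k,r)\coloneqq e^{ikr}$ and $f_{n+1}(k,r)\coloneqq-\int_r^\infty\frac{\sin(k(r-r'))}{k}V(r')f_n(k,r')\,dr'$, so that $f=\sum_{n\geq0}f_n$, and prove $|f_n(k,r)|\leq Q_k(r)^n/n!$ by induction. The case $n=0$ is $|e^{ikr}|=1$; for the inductive step note that $Q_k$ is absolutely continuous with $Q_k'(r')=-\frac{2r'}{1+kr'}|V(r')|$ and $Q_k(r)=0$ for $r\geq R_V$, so that by the kernel bound and the induction hypothesis
\begin{align}
|f_{n+1}(k,r)|
&\leq\int_r^\infty\frac{2r'}{1+kr'}\,|V(r')|\,\frac{Q_k(r')^n}{n!}\,dr'\nonumber\\
&=-\int_r^\infty Q_k'(r')\,\frac{Q_k(r')^n}{n!}\,dr'=\frac{Q_k(r)^{n+1}}{(n+1)!}.
\end{align}
Summing over $n$ yields $|f(k,r)|\leq\sum_{n\geq0}Q_k(r)^n/n!=e^{Q_k(r)}$.

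It remains to justify the interchanges: since $Q_k(r)\leq 2R_V\|V\|_1<\infty$ uniformly in $k\geq0$ and $r\geq0$, the Neumann series converges absolutely and locally uniformly, term-by-term integration is legitimate, and uniqueness for the Volterra equation identifies $\sum_n f_n$ with $f(k,r)$. The whole argument is routine; the only mild points of care are pinning down the constant $2$ in the kernel estimate so that the exponent is exactly the $Q_k$ of the statement, and invoking $\supp V\subset[0,R_V]$ to get $Q_k(\infty)=0$, which is what makes the telescoping in the inductive step collapse correctly. I do not expect a real obstacle.
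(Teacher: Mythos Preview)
Your proposal is correct and is exactly the standard Born--Picard iteration argument behind Theorem~XI.57 of \cite{RS3}, which is precisely what the paper cites in lieu of a proof. The paper does not spell out the argument for this lemma, but it reproduces the same induction scheme verbatim in the proofs of the subsequent Lemmas~\ref{lem:y'}--\ref{lem:y'''} (including the derivation of the kernel bound, Eq.~\eqref{eq:SimpleSinc}, and the telescoping via $\frac{d}{dr'}Q_k^{n+1}/(n+1)!$), so your sketch matches both the cited source and the paper's own methodology.
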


We need to handle up to the third derivative of the Jost function. 
Therefore, we will extend Lemma~\ref{lem:f} to $\partial_k^{n}f(k,r)$ with $n\leq3$, using a similar proof as that of Theorem~XI.57 in \citep{RS3}. 
Starting point is the Lippmann-Schwinger equation
\begin{align}\label{eq:Lipp}
  f(k,r)=e^{ikr}-\int_r^\infty\frac{1}{k}\sin(k(r-r'))V(r')f(k,r')\,dr'.
\end{align}
We want to expand $\partial_k^{n}f(k,r)$ in a Born series, so that once we have a global bound in $r$ for every summand, we get a global bound in $r$ for $\partial_k^{n}f(k,r)$ (assuming the series converges). But this will not work because the first summand of the Born series for $\dkp f$ is $ire^{ikr}$ for which there is no global bound in $r.$ We can solve the problem by looking at 
\begin{align}
	y(k,r)\coloneqq e^{-ikr}f(k,r)
\end{align}	
rather than $f(k,r).$

\begin{lemma}\label{lem:y'}
Let $k\in[0,\infty)$ and $q_k\coloneqq e^{Q_k(0)}\|rV(r)\|_1$,
then
  \begin{align}
	|\dkp y(k,r)|\leq3 \frac{e^{Q_k(r)}}{k}q_k  .
  \end{align}
\end{lemma}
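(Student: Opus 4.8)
The plan is to mimic the proof of Theorem~XI.57 in \cite{RS3} as reconstructed in Lemma~\ref{lem:f}, working with $y(k,r)=e^{-ikr}f(k,r)$ rather than $f(k,r)$ so that no factors $re^{ikr}$ appear in the zeroth-order term of the Born series. First I would derive from the Lippmann--Schwinger equation~\eqref{eq:Lipp} an integral equation for $y$: multiplying by $e^{-ikr}$ and writing $\sin(k(r-r'))=\tfrac{1}{2i}(e^{ik(r-r')}-e^{-ik(r-r')})$ gives
\begin{equation}
y(k,r)=1-\int_r^\infty\frac{1}{2ik}\Bigl(e^{-2ik(r-r')}-1\Bigr)e^{-ik(r'-r)}\,e^{ikr'}\,e^{-ikr}\,V(r')f(k,r')\,dr',
\end{equation}
which after inserting $f(k,r')=e^{ikr'}y(k,r')$ and simplifying becomes an equation of the schematic form $y(k,r)=1+\int_r^\infty \mathcal K(k,r,r')V(r')y(k,r')\,dr'$ with a kernel $\mathcal K$ satisfying $|\mathcal K(k,r,r')|\le \tfrac{2r'}{1+kr'}$ (the same bound on $|k^{-1}\sin(k(r-r'))|$ used in Lemma~\ref{lem:f}, together with $|e^{\text{imaginary}}|=1$). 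Differentiating this integral equation in $k$ then yields an inhomogeneous integral equation for $\dkp y$, whose inhomogeneous term collects $\partial_k\mathcal K$ acting on $V y$; the key point, and the reason for passing to $y$, is that $\partial_k\mathcal K$ still admits a bound of the form $|\partial_k\mathcal K(k,r,r')|\le C\,\tfrac{(r')^2}{1+kr'}$ uniformly — no unbounded $r$-growth survives.

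Next I would solve the resulting integral equation for $\dkp y$ by Born series: $\dkp y=\sum_{m\ge0}u_m$, where $u_0$ is the inhomogeneous term and $u_{m+1}(k,r)=\int_r^\infty\mathcal K(k,r,r')V(r')u_m(k,r')\,dr'$. Using the already established bound $|y(k,r')|=|e^{-ikr'}f(k,r')|\le e^{Q_k(r')}\le e^{Q_k(0)}$ from Lemma~\ref{lem:f}, one bounds $|u_0(k,r)|\le \tfrac{C}{k}e^{Q_k(0)}\int_r^\infty\tfrac{2r'}{1+kr'}|V(r')|\,dr'\cdot(\text{something involving }\|rV\|_1)$; the natural estimate here is $|u_0(k,r)|\le \tfrac{1}{k}e^{Q_k(0)}\|r V(r)\|_1 \cdot Q_k(r)$-type, and tracking constants carefully should give the factor $q_k=e^{Q_k(0)}\|rV(r)\|_1$. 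The higher iterates satisfy the familiar Volterra-type estimate $|u_m(k,r)|\le |u_0|_{\sup}\,\tfrac{Q_k(r)^m}{m!}$ (this is exactly the mechanism producing $e^{Q_k(r)}$ in Lemma~\ref{lem:f}), so summing the series gives $|\dkp y(k,r)|\le \tfrac{1}{k}e^{Q_k(0)}\|rV(r)\|_1\,e^{Q_k(r)}\cdot(\text{numerical constant})$, and the claimed numerical constant is $3$.

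The main obstacle I anticipate is bookkeeping of the numerical constants: one must be careful in (i) estimating $|\partial_k\mathcal K|$ — the derivative produces an extra factor that must be controlled by $\tfrac{2r'}{1+kr'}$ rather than by something worse like $r'$ alone, which is precisely where the choice of $y$ over $f$ pays off — and (ii) in how the inhomogeneous term's $r$-dependence interacts with the Volterra iteration, since $u_0$ is itself already an integral over $[r,\infty)$ and not merely the constant $1$. Establishing that everything collapses to the clean bound $3\,k^{-1}e^{Q_k(r)}q_k$ requires combining a bound on $\partial_k$ of the free term $\partial_k e^{ikr}=ire^{ikr}$ carefully (after the $e^{-ikr}$ twist this becomes $ir(e^{ikr}-e^{ikr})$-type cancellation, leaving a controlled remainder) with the standard Born-series resummation; matching the constant to $3$ is a matter of not being wasteful in the triangle inequalities. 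Aside from this, the argument is routine once the integral equation for $\dkp y$ is set up, and it extends verbatim (with larger constants) to $\partial_k^2 y$ and $\partial_k^3 y$, which is presumably what the subsequent lemmas in the paper do.
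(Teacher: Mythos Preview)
Your approach is essentially the paper's: write the integral equation for $y$ with kernel $g(k,r,r')=e^{-ik(r-r')}k^{-1}\sin(k(r-r'))$, differentiate in $k$ to obtain an inhomogeneous Volterra equation for $\dkp y$, bound the inhomogeneity using Lemma~\ref{lem:f}, and resum the Born series via the $Q_k(r)^m/m!$ mechanism. Two bookkeeping corrections: the kernel derivative is bounded as $|\dkp g(k,r,r')|\le 3r'/k$ (write $g=\tfrac{1}{2ik}(1-e^{-2ik(r-r')})$, differentiate, and use $|r-r'|\le r'$ together with~\eqref{eq:lem_y'3}), not $C(r')^2/(1+kr')$ as you propose --- your form actually fails for small $kr'$; and consequently the inhomogeneous term satisfies the \emph{$r$-independent} bound $|u_0(k,r)|\le \tfrac{3}{k}e^{Q_k(0)}\|r'V(r')\|_1=\tfrac{3}{k}q_k$, so the $e^{Q_k(r)}$ factor comes purely from the iteration, not from $u_0$.
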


\begin{proof}
 From Eq.~\eqref{eq:Lipp} we see that $y$ satisfies
  \begin{align}
	y(k,r)
	&=1-\int_r^\infty e^{-ik(r-r')}\frac{1}{k}\sin(k(r-r'))V(r')y(k,r')\,dr'\label{eq:lem_y'4}\\
	&\eqqcolon 1-\int_r^\infty g(k,r,r')V(r')y(k,r')\,dr',
  \end{align}
  so that
  \begin{align}
	\dkp y(k,r)
	&=-\int_r^\infty \dkp g(k,r,r')V(r')y(k,r')\,dr'
	\nonumber\\	
	&\qquad	-\int_r^\infty g(k,r,r')V(r')\dkp y(k,r')\,dr'\\
	&\eqqcolon x(k,r)-\int_r^\infty g(k,r,r')V(r')\dkp y(k,r')\,dr'.
  \end{align}
We want to use the inequality 
  \begin{align}\label{eq:lem_y'3}
	\left|\frac{1}{k}\sin\left(k(r-r')\right)\right|\leq\frac{2r'}{1+kr'},
  \end{align}
to prove which consider the following.  
Observe that for $x>0$ 
\begin{align}
	\left|\frac{\sin x}{x}\right|
	\leq\frac{2}{1+x},
	\label{eq:SimpleSinc}
\end{align}
indeed 
\begin{equation}
	\left|\frac{\sin x}{x}\right| (1+x) =	\left|\frac{\sin x}{x}\right|  + |\sin x| \leq 2.
\end{equation}
Choosing $x=k(r'-r)$, with $r'>r$, we get
  \begin{align}
  	\left|\frac{1}{k}\sin\left(k(r-r')\right)\right|
		\leq\frac{2(r'-r)}{1+k(r'-r)},
  \end{align}
that implies \eqref{eq:lem_y'3} because the function $2X/(1+kX)$ is monotonically increasing with $X>0$ for all $k$.
From Eq.~\eqref{eq:lem_y'3} it is then easy to verify
  \begin{align}\label{eq:lem_y'5}
	|\dkp g(k,r,r')|\leq\frac{3r'}{k}.
  \end{align}
  Together with Lemma~\ref{lem:f} and $|y|=|f|$ we thereby obtain
  \begin{align}\label{eq:lem_y'1}
	|x(k,r)|\leq\frac{3}{k}e^{Q_k(0)}\|r'V(r')\|_1.
  \end{align}
  Now, we expand $\dkp y$ in a Born series $\dkp y=\sum_{n=0}^\infty \dkp y_n$ with
  \begin{align}
	\dkp y_0&=x(k,r)\\
	\dkp y_{n+1}&=-\int_r^\infty g(k,r,r')V(r')\dkp y_n(k,r')\,dr'
  \end{align}
  and prove by induction that
  \begin{align}\label{eq:lem_y'2}
	|\dkp y_n(k,r)|\leq\frac{3}{k}e^{Q_k(0)}\|r'V(r')\|_1\frac{Q_k^n(r)}{n!}.
  \end{align}
  Due to Eq.~\eqref{eq:lem_y'1} the induction start is immediately evident. For the induction step assume that Eq.~\eqref{eq:lem_y'2} holds, then
  \begin{align}
	|\dkp y_{n+1}(k,r)|
	&\leq\int_r^\infty|g(k,r,r')V(r')||\dkp y_{n}(k,r)|\,dr'\\
	&\leq\frac{3}{k}e^{Q_k(0)}\|r''V(r'')\|_1\int_r^\infty\frac{2r'}{1+kr'}|V(r')|\frac{Q_k^n(r')}{n!}\,dr',
  \end{align}
  where we have used Eq.~\eqref{eq:lem_y'3}. From the definition of $Q_k(r)$ given in Lemma~\ref{lem:f} it is evident that
  \begin{align}
	\frac{d}{dr'}\frac{Q_k^{n+1}(r')}{(n+1)!}=\frac{2r'}{1+kr'}|V(r')|\frac{Q_k^n(r')}{n!},
  \end{align}
  so that
   \begin{align}
	|\dkp y_{n+1}(k,r)|
	&\leq\frac{3}{k}e^{Q_k(0)}\|r''V(r'')\|_1\int_r^\infty\frac{d}{dr'}\frac{Q_k^{n+1}(r')}{(n+1)!}\,dr'
	\\
	&=\frac{3}{k}e^{Q_k(0)}\|r'V(r')\|_1\frac{Q_k^{n+1}(r)}{(n+1)!}.
  \end{align}
  Hence, Eq.~\eqref{eq:lem_y'2} is proven. Plugging this bound into
  \begin{align}
	|\dkp y(k,r)|\leq\sum_{n=0}^\infty|\dkp y_{n}(k,r)|,
  \end{align}
  we obtain the assertion of the Lemma.
  \QED
\end{proof}

\begin{lemma}\label{lem:y''}
Let $k\in[0,\infty)$ and $q_k\coloneqq e^{Q_k(0)}\|rV(r)\|_1$, then
  \begin{align}
	|\ddkp y(k,r)|
	\leq 6\frac{e^{Q_k(r)}}{k}\left[\frac{1}{k}(1+3q_k)
	+R_V\right]q_k  .
  \end{align}
\end{lemma}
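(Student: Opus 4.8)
The plan is to upgrade the proof of Lemma~\ref{lem:y'} by one order. Starting from the integral equation~\eqref{eq:lem_y'4} for $y$, namely
\begin{equation}
y(k,r)=1-\int_r^\infty g(k,r,r')V(r')y(k,r')\,dr',\qquad g(k,r,r')\coloneqq e^{-ik(r-r')}\tfrac{1}{k}\sin\bigl(k(r-r')\bigr),
\end{equation}
I would differentiate twice in $k$ (the differentiation under the integral sign being justified exactly as for $\dkp y$, thanks to the compact support of $V$), which yields
\begin{equation}
\ddkp y(k,r)=w(k,r)-\int_r^\infty g(k,r,r')V(r')\ddkp y(k,r')\,dr',
\end{equation}
with inhomogeneity
\begin{equation}
w(k,r)\coloneqq-\int_r^\infty\ddkp g(k,r,r')V(r')y(k,r')\,dr'-2\int_r^\infty\dkp g(k,r,r')V(r')\dkp y(k,r')\,dr'.
\end{equation}
The strategy is then: (i) bound $\ddkp g$ pointwise; (ii) deduce a uniform bound $\sup_{r\geq0}|w(k,r)|\leq W$; (iii) expand $\ddkp y$ in a Born series with source term $w$ and propagate $W$ through, picking up the factor $e^{Q_k(r)}$.

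For step (i), writing $\tau=r-r'$ and differentiating $\tfrac1k e^{-ik\tau}\sin(k\tau)$ twice in $k$ produces terms of the schematic shapes $\tau^2\cos(k\tau)/k$, $\tau\sin(k\tau)/k^2$, $\tau^2\sin(k\tau)/k$, $\tau\cos(k\tau)/k^2$, $\sin(k\tau)/k^3$. In every integral above $r'$ ranges over $\supp V\subseteq[0,R_V]$ with $r'\geq r\geq0$, so $|\tau|=r'-r\leq r'$ and $|\tau|\leq R_V$, hence $\tau^2\leq R_V r'$; estimating $|\sin(k\tau)/k|\leq k^{-1}$ and $|\cos(k\tau)|\leq1$ I expect to arrive at
\begin{equation}
|\ddkp g(k,r,r')|\leq 5r'\Bigl(\frac{R_V}{k}+\frac{1}{k^2}\Bigr),\qquad r'\in[r,R_V],
\end{equation}
which reproduces the $R_V$- and $k^{-2}$-structure of the claimed bound. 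Recall also $|\dkp g|\leq3r'/k$ from~\eqref{eq:lem_y'5} and $|g|\leq2r'/(1+kr')$ from~\eqref{eq:lem_y'3}.

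For step (ii), I would feed into the two integrals defining $w$ the bounds $|y|=|f|\leq e^{Q_k(r)}\leq e^{Q_k(0)}$ (Lemma~\ref{lem:f}), $|\dkp y|\leq 3e^{Q_k(r)}q_k/k$ (Lemma~\ref{lem:y'}), the estimate on $\ddkp g$ just obtained, and $\int_0^\infty r'|V(r')|e^{Q_k(r')}\,dr'\leq e^{Q_k(0)}\|rV(r)\|_1=q_k$; the first integral is then $\leq 5(R_V/k+1/k^2)q_k$ and the second $\leq 18q_k^2/k^2$, and these sum to at most
\begin{equation}
\sup_{r\geq0}|w(k,r)|\leq W,\qquad W\coloneqq 6\frac{q_k}{k}\Bigl[\frac{1}{k}(1+3q_k)+R_V\Bigr].
\end{equation}
For step (iii), I would write $\ddkp y=\sum_{n\geq0}\ddkp y_n$ with $\ddkp y_0=w$ and $\ddkp y_{n+1}(k,r)=-\int_r^\infty g(k,r,r')V(r')\ddkp y_n(k,r')\,dr'$, and prove by induction, exactly as in Lemma~\ref{lem:y'}, that $|\ddkp y_n(k,r)|\leq W\,Q_k^n(r)/n!$, using $|g|\leq2r'/(1+kr')$ together with $\frac{d}{dr'}\frac{Q_k^{n+1}(r')}{(n+1)!}=-\frac{2r'}{1+kr'}|V(r')|\frac{Q_k^n(r')}{n!}$. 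Summing the series gives $|\ddkp y(k,r)|\leq W e^{Q_k(r)}$, which is precisely the assertion.

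The main obstacle is the explicit pointwise estimate on $\ddkp g$: all the other steps are mechanical repetitions of the $\dkp y$ argument, but to make the constants close under the stated inequality one has to split $|\sin(k\tau)|$ judiciously between the two bounds $|\sin(k\tau)|\leq1$ and $|\sin(k\tau)/k|\leq|\tau|$, and use $\tau^2\leq R_V r'$ and $|\tau|\leq r'$ in the right places, so that no term exceeds the $5r'(R_V/k+1/k^2)$ budget. Once that bookkeeping is done, assembling $W$ and running the Born-series induction is routine.
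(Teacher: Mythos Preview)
Your proposal is correct and follows the same strategy as the paper: differentiate the integral equation for $y$ twice, bound the inhomogeneous term, and run the same Born-series induction as in Lemma~\ref{lem:y'}. The only cosmetic difference is where the factor $R_V$ is extracted: you build it into the pointwise bound on $\ddkp g$ via $\tau^2\leq R_V r'$, whereas the paper keeps $|\ddkp g(k,r,r')|\leq \tfrac{6r'}{k^2}(1+kr')$ and pulls out $R_V$ only after integrating, using $\|r^2V\|_1\leq R_V\|rV\|_1$; both routes close with the same constant $W=6\tfrac{q_k}{k}\bigl[\tfrac{1}{k}(1+3q_k)+R_V\bigr]$.
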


\begin{proof}
We proceed in the same way as in the proof of Lemma~\ref{lem:y'}.
From Eq.~\eqref{eq:lem_y'4} we get
\begin{align}
  \ddkp y(k,r)
  &=-\int_r^\infty\ddkp g(k,r,r')V(r')y(k,r')\,dr'
  \nonumber\\
  &\quad-2\int_r^\infty\dkp g(k,r,r')V(r')\dkp y(k,r')\,dr'\nonumber\\
  &\quad-\int_r^\infty g(k,r,r')V(r')\ddkp y(k,r')\,dr'\\
  &\eqqcolon x(k,r)-\int_r^\infty g(k,r,r')V(r')\ddkp y(k,r')\,dr'
\end{align}
Using Eq.~\eqref{eq:lem_y'3} and Eq.~\eqref{eq:lem_y'5} it is straightforward to check that
\begin{align}
  |\ddkp g(k,r,r')|\leq\frac{6r'}{k^2}(1+kr').
\end{align}
Using this, inequality~\eqref{eq:lem_y'5} for $\dkp g$, and the bound on $y$ obtained from Lemma~\ref{lem:f}, we get
\begin{align}
  |x(k,r)|
  &\leq\frac{6}{k}\Biggl(\frac{3}{k}e^{2Q_k(0)}\|r'V(r')\|_1^2
  +\frac{1}{k}e^{Q_k(0)}\|r'V(r')\|_1
  \nonumber\\
  &\qquad+e^{Q_k(0)}\|r'^2V(r')\|_1\Biggr)\\
  &\leq\frac{6}{k}\left(\frac{3}{k}q_k
  +\frac{1}{k}
  +R_V\right)q_k,
\end{align}
where we have used that $\|r^2V(r)\|_1\geq R_V\|rV(r)\|_1.$ Now, we can proceed in exactly the same way as in the proof of Lemma~\ref{lem:y'} to arrive at the assertion.
\QED
\end{proof}

\begin{lemma}\label{lem:y'''}
  Let $k\in[0,\infty)$ and  $q_k\coloneqq e^{Q_k(0)}\|rV(r)\|_1$, then
  \begin{align}
	|\dddkp y(k,r)|
	\leq 18\frac{e^{Q_k(r)}}{k}\left[\frac{3}{k}(1+3q_k)+R_V\right]^2q_k .
  \end{align}
\end{lemma}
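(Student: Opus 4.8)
The statement to prove is Lemma~\ref{lem:y'''}: a bound on $|\dddkp y(k,r)|$ of the form $18\frac{e^{Q_k(r)}}{k}\left[\frac{3}{k}(1+3q_k)+R_V\right]^2q_k$, where $y(k,r)=e^{-ikr}f(k,r)$ and $q_k\coloneqq e^{Q_k(0)}\|rV(r)\|_1$. The pattern of the preceding two lemmas (Lemma~\ref{lem:y'} and Lemma~\ref{lem:y''}) makes the strategy essentially forced: differentiate the Lippmann--Schwinger-type integral equation for $y$ three times in $k$, isolate an inhomogeneous term $x(k,r)$ plus the self-similar integral operator $-\int_r^\infty g(k,r,r')V(r')\,\dddkp y(k,r')\,dr'$, bound $x(k,r)$ globally in $r$, and then run the Born-series-plus-induction argument exactly as before to convert the bound on $x$ into a bound on $\dddkp y$.

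Concretely, the plan is as follows. First I would differentiate $y(k,r)=1-\int_r^\infty g(k,r,r')V(r')y(k,r')\,dr'$ three times with respect to $k$ using the Leibniz rule, producing
\begin{align}
\dddkp y(k,r)
&= -\int_r^\infty \dddkp g\,V\,y\,dr'
 -3\int_r^\infty \ddkp g\,V\,\dkp y\,dr'
 -3\int_r^\infty \dkp g\,V\,\ddkp y\,dr'
 \nonumber\\
&\quad -\int_r^\infty g\,V\,\dddkp y\,dr'
 \eqqcolon x(k,r)-\int_r^\infty g(k,r,r')V(r')\dddkp y(k,r')\,dr',
\end{align}
where the first three integrals constitute $x(k,r)$. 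Then I need a pointwise bound on $\dddkp g(k,r,r')$; from the explicit form $g(k,r,r')=e^{-ik(r-r')}\frac{1}{k}\sin(k(r-r'))$ one differentiates directly and, using the elementary inequality~\eqref{eq:lem_y'3} together with the already-established bounds \eqref{eq:lem_y'5} on $|\dkp g|$ and $|\ddkp g|\leq\frac{6r'}{k^2}(1+kr')$, obtains something of the shape $|\dddkp g(k,r,r')|\leq \frac{C r'}{k^3}(1+kr')^2$ for a small explicit constant $C$. Combining this with Lemma~\ref{lem:f} for $|y|$, Lemma~\ref{lem:y'} for $|\dkp y|$, and Lemma~\ref{lem:y''} for $|\ddkp y|$, and using $\|r^nV(r)\|_1\leq R_V^{n-1}\|rV(r)\|_1$ to keep everything in terms of $q_k$, $R_V$, and $1/k$, one bounds $|x(k,r)|$ uniformly in $r$ by an expression that should collapse (after elementary algebraic majorization, dropping lower-order terms into the dominant ones) to $18\frac{1}{k}\left[\frac{3}{k}(1+3q_k)+R_V\right]^2 q_k$ times $e^{Q_k(0)}$-type factors arranged so the Born iteration produces the claimed $e^{Q_k(r)}$.

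The final step is the Born series: write $\dddkp y=\sum_{n=0}^\infty \dddkp y_n$ with $\dddkp y_0=x(k,r)$ and $\dddkp y_{n+1}=-\int_r^\infty g V \dddkp y_n\,dr'$, then prove by induction that $|\dddkp y_n(k,r)|\leq (\sup_r|x(k,r)| \text{ bound})\cdot \frac{Q_k^n(r)}{n!}$, using $\left|\frac{1}{k}\sin(k(r-r'))\right|\leq\frac{2r'}{1+kr'}$ and the identity $\frac{d}{dr'}\frac{Q_k^{n+1}(r')}{(n+1)!}=\frac{2r'}{1+kr'}|V(r')|\frac{Q_k^n(r')}{n!}$ — verbatim the mechanism of Lemmas~\ref{lem:y'} and~\ref{lem:y''}. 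Summing gives the factor $e^{Q_k(r)}$ and completes the proof. The main obstacle — really the only nontrivial part — is the bookkeeping in bounding $x(k,r)$: one must differentiate $g$ three times cleanly, track the powers of $1/k$ and $(1+kr')$, feed in the three previous lemmas' bounds, and then perform the algebraic majorization carefully enough that all the accumulated constants and $q_k$-powers actually fit under the stated clean form $18\frac{e^{Q_k(r)}}{k}\left[\frac{3}{k}(1+3q_k)+R_V\right]^2q_k$; this requires deliberately wasteful but consistent over-estimates (e.g.\ absorbing every constant factor and every lower-degree monomial in $q_k$ into the square bracket squared), and getting the constant exactly $18$ rather than merely $O(1)$ is a matter of choosing those over-estimates to match. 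Everything else is routine and parallels the two preceding proofs line by line.
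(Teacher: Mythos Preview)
Your proposal is correct and follows exactly the approach the paper intends: the paper in fact omits the proof entirely, stating only that ``it runs along the same lines as the proof of Lemma~\ref{lem:y''}.'' Your outline is precisely that parallel argument, with the expected Leibniz expansion, the bound on $\dddkp g$, the estimate of the inhomogeneous term $x(k,r)$ via the three preceding lemmas, and the Born-series induction to produce the $e^{Q_k(r)}$ factor.
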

We omit the proof of this Lemma because it runs along the same lines as the proof of Lemma~\ref{lem:y''}.

We can now prove some bounds on the imaginary part of the logarithmic derivative of the Jost function $F$.
Note that the bounds that we got for $|y^{(n)}|$  all depend on powers of $e^{Q_k(0)}\leq e^{2\|V(r)\|_1/k}$ (cf.\ the definition of $Q_k(r)$ in Lemma~\ref{lem:f}), therefore they will be useful only for $k\approx\|V\|_1$ or bigger.
\begin{lemma}\label{lem:LBoundsBigK}
Let  
\begin{equation}
\tilde K\coloneqq 6\|V\|_1,
\quad
L(k)   \coloneqq  \Im \frac {\dkp F(k)}  {F(k)} ,
\quad\text{and}\quad
q\coloneqq\frac{1}{2\|V\|_1}+6R_V .
\end{equation}
Then, for $k\geq \tilde K$,
  \begin{align}
	|L(k)|&
	\leq 2 R_V\frac{\tilde K}{k},
	\label{eq:L_largek}\\
	|\dkp L(k)|
	&\leq 4 R_V\frac{\tilde K}{k} q ,
	\label{eq:L'_largek}\\
	|\ddkp L(k)|&\leq 12 R_V\frac{\tilde K}{k} q^2   ,
	\label{eq:L''_largek}  
\intertext{and}
| \dkp S |  &\leq  4 R_V\frac{\tilde K}{k},
\\
| \ddkp S |  &\leq 8 R_V\frac{\tilde K}{k} q  
		+16 R_V^2 \frac{\tilde K^2}{k^2} ,
\\
| \dddkp S |  &\leq  24 R_V\frac{\tilde K}{k} q^2
		+ 96  R_V^2 \frac{\tilde K^2}{k^2}  q
		+ 64 R_V^3  \frac{\tilde K^3}{k^3}  .
\end{align}
\end{lemma}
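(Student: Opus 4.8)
The plan is to reduce everything to the irregular eigenfunctions and the bounds of Lemmas~\ref{lem:y'},~\ref{lem:y''} and~\ref{lem:y'''}. Recall from Eq.~\eqref{eq:Jost} evaluated at $r=0$ (using $\varphi(k,0)=0$ and $\dr\varphi(k,0)=1$) that $F(k)=f(k,0)$. Setting $y(k,r)\coloneqq e^{-ikr}f(k,r)$ as in Lemma~\ref{lem:y'}, one has $F(k)=y(k,0)$, and since every term produced by differentiating the prefactor $e^{ikr}$ in $k$ carries a positive power of $r$, it vanishes at $r=0$; hence $\dkp F(k)=\dkp y(k,0)$, and likewise for the second and third $k$-derivatives. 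Consequently $L(k)=\Im\frac{\dkp F(k)}{F(k)}=\Im\frac{\dkp y(k,0)}{y(k,0)}$, and differentiating this quotient gives $\dkp L(k)=\Im\bigl(\frac{\ddkp y(k,0)}{y(k,0)}-\frac{\dkp y(k,0)^2}{y(k,0)^2}\bigr)$ together with the analogous three-term expression for $\ddkp L(k)$ involving $\dddkp y$, $\dkp y\,\ddkp y/y^2$ and $\dkp y^3/y^3$.

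First I would establish a lower bound on $|F(k)|=|y(k,0)|$ for $k\geq\tilde K=6\|V\|_1$. Evaluating the integral equation $y(k,r)=1-\int_r^\infty g(k,r,r')V(r')y(k,r')\,dr'$ from the proof of Lemma~\ref{lem:y'} at $r=0$, using $|g(k,0,r')|=|\tfrac1k\sin(kr')|\leq\tfrac1k$ and $|y(k,r)|=|f(k,r)|\leq e^{Q_k(r)}\leq e^{Q_k(0)}$ from Lemma~\ref{lem:f}, yields $|y(k,0)-1|\leq\tfrac1k\|V\|_1 e^{Q_k(0)}$. Since $Q_k(0)\leq\tfrac{2\|V\|_1}{k}\leq\tfrac13$ when $k\geq\tilde K$, this gives $|F(k)|\geq 1-\tfrac{e^{1/3}}{6}>\tfrac34$. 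Next I would substitute the bounds of Lemmas~\ref{lem:y'},~\ref{lem:y''},~\ref{lem:y'''} for $|\dkp y(k,0)|$, $|\ddkp y(k,0)|$, $|\dddkp y(k,0)|$ into the quotient formulas above. The bookkeeping uses three facts valid for $k\geq\tilde K$: that $q_k=e^{Q_k(0)}\|rV(r)\|_1\leq e^{1/3}R_V\|V\|_1=e^{1/3}R_V\tilde K/6$; that the bracketed factors $\tfrac1k(1+3q_k)+R_V$ and $\tfrac3k(1+3q_k)+R_V$ occurring in Lemmas~\ref{lem:y''} and~\ref{lem:y'''} are each at most $q=\tfrac1{2\|V\|_1}+6R_V$; and that $\tilde K/k\leq1$ and $R_V\leq q/6$, which allows the excess powers of $\tilde K/k$ and $R_V$ generated by the quadratic and cubic terms to be reabsorbed into the target powers of $q$. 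Carrying this through produces the bounds $|L(k)|\leq 2R_V\tfrac{\tilde K}{k}$, $|\dkp L(k)|\leq 4R_V\tfrac{\tilde K}{k}q$ and $|\ddkp L(k)|\leq 12R_V\tfrac{\tilde K}{k}q^2$.

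The bounds on the derivatives of the $S$-matrix then follow at once from Eqs.~\eqref{eq:SDotLeqL}--\eqref{eq:SDotDotDotLeqLDotDot} (themselves consequences of Eqs.~\eqref{eq:SDot}--\eqref{eq:SDotDotDot}): $|\dkp S|\leq 2|L|$ gives $4R_V\tfrac{\tilde K}{k}$; $|\ddkp S|\leq 2|\dkp L|+4|L|^2$ gives $8R_V\tfrac{\tilde K}{k}q+16R_V^2\tfrac{\tilde K^2}{k^2}$; and $|\dddkp S|\leq 2|\ddkp L|+12|L|\,|\dkp L|+8|L|^3$ gives $24R_V\tfrac{\tilde K}{k}q^2+96R_V^2\tfrac{\tilde K^2}{k^2}q+64R_V^3\tfrac{\tilde K^3}{k^3}$, which are exactly the claimed estimates. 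The main obstacle is the middle step: the lower bound on $|F(k)|$ must be chosen sharp enough, and the second- and third-derivative estimates must be organised with care, because the quadratic and cubic remainders are a priori of the wrong homogeneity in $\tilde K/k$ and have to be converted without spoiling the numerical constants $2$, $4$ and $12$.
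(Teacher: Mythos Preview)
Your proposal is correct and follows essentially the same route as the paper: identify $F^{(n)}(k)=y^{(n)}(k,0)$, obtain a lower bound on $|F(k)|$ for $k\geq\tilde K$ from the Lippmann--Schwinger equation, feed in the bounds of Lemmas~\ref{lem:y'}--\ref{lem:y'''}, and then deduce the $S$-matrix estimates from Eqs.~\eqref{eq:SDotLeqL}--\eqref{eq:SDotDotDotLeqLDotDot}. The only noteworthy variation is that you bound $|g(k,0,r')|\leq 1/k$ directly and obtain $|F(k)|\geq 1-\tfrac{e^{1/3}}{6}>\tfrac34$, whereas the paper uses $|g|\leq 2r'/(1+kr')$ and gets $|F(k)|\geq 1-\tfrac{e^{1/3}}{3}\geq\tfrac12$; both suffice, and yours gives slightly more room in the subsequent constants.
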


\begin{proof}
  Since $|L|\leq |\dkp F|/|F|,$ we need an upper bound on $|\dkp F|$ and a lower bound on $|F|.$ For the upper bound we observe that $\dkp F(k)=\dkp f(k,0)=\dkp y(k,0)$ and hence due to Lemma~\ref{lem:y'}
  \begin{align}
	|\dkp F(k)|\leq3 \frac{e^{Q_k(0)}}{k}q_k=3 \frac{e^{2Q_k(0)}}{k}\|rV(r)\|_1.
  \end{align}
  Upon using
  \begin{align}\label{eq:lemL0}
	Q_k(0)=\int_0^\infty \frac{2r'}{1+kr'}|V(r')|\,dr'\leq\frac{2}{k}\|V\|_1\leq\frac{2}{\tilde K}\|V\|_1=\frac{1}{3},
  \end{align}
  $\|rV(r)\|_1\leq R_V\|V\|_1$ and $e^{2/3}<2,$ we obtain
  \begin{align}\label{eq:lemL2}
	|\dkp F(k)|\leq3 \frac{e^{\frac{2}{3}}}{k}R_V\|V\|_1\leq6R_V\frac{\|V\|_1}{k}.
  \end{align}
  We derive the lower bound for $|F(k)|$ from the Lippmann-Schwinger equation~\eqref{eq:Lipp}. For $F(k)=f(k,0)$ it reads
  \begin{align}
	F(k)=1+\int_0^\infty\frac{1}{k}\sin(kr')V(r')f(k,r').
  \end{align}
  With the help of the bound on $f(k,r)$ from Lemma~\ref{lem:f} and the bound on the sinus term in Eq.~\eqref{eq:lem_y'3} we obtain for the integral
  \begin{align}
	\left|\int_0^\infty\frac{1}{k}\sin(kr')V(r')f(k,r')\right|
	&\leq e^{Q_k(0)}\int_0^\infty|V(r')|\frac{2r'}{1+kr'}\,dr'
	\\
	&\leq e^{2\frac{\|V\|_1}{k}}2\frac{\|V\|_1}{k}.
  \end{align}
  Hence, for $k\geq\tilde K$ we have
  \begin{align}\label{eq:lemL3}
	|F(k)|
	&\geq\left|1-\left|\int_0^\infty\frac{1}{k}\sin(kr')V(r')f(k,r')\right|\right|
	\geq 1-e^{\frac{1}{3}}\frac{1}{3}
	\geq\frac{1}{2}.
  \end{align}
  This and the upper bound on $|\dkp F|$ imply the bound on $L$ in Eq.~\eqref{eq:L_largek}.
  
  To obtain the bound for $|\dkp L|$ in Eq.~\eqref{eq:L'_largek}, observe that with the help of the bounds for $|\dkp F|$ in Eq.~\eqref{eq:lemL2} and for $|F|$ in Eq.~\eqref{eq:lemL3} we have
  \begin{align}\label{eq:lemL1}
	|\dkp L|
	\leq\frac{|\ddkp F|}{|F|}+\frac{|\dkp F|^2}{|F|^2}
	\leq2|\ddkp F|+12^2R_V^2\frac{\|V\|_1^2}{k^2}.
  \end{align}
  We get an upper bound for $|\ddkp F|$ by using Lemma~\ref{lem:y''}, Eq.~\eqref{eq:lemL0} and $\|rV(r)\|_1\leq R_V\|V\|_1$ as follows
  \begin{align}
	|\ddkp F(k)|
	=|\ddkp y(k,0)|
	&\leq 6\frac{e^{Q_k(0)}}{k}\|rV(r)\|_1\left[\frac{1}{k}(1+3e^{Q_k(0)}\|rV(r)\|_1)+R_V\right]\\
	&\leq 12R_V\frac{\|V\|_1}{k}\left[\frac{1}{k}(1+6R_V\|V\|_1)+R_V\right]\\
	&\leq 12R_V\frac{\|V\|_1}{k}\left[\frac{1}{6\|V\|_1}+2R_V\right].\label{eq:lemL4}
  \end{align}
Plugging this into Eq.~\eqref{eq:lemL1} yields
\begin{align}
	|\dkp L|\leq 24R_V\frac{\|V\|_1}{k}\left[\frac{1}{6\|V\|_1}+2R_V+6R_V\frac{\|V\|_1}{k}\right] .
  \end{align}
For $k\geq\tilde K=6\|V\|_1$, we get 
\begin{align}
	|\dkp L|\leq 24R_V\frac{\|V\|_1}{k}\left[\frac{1}{6\|V\|_1}+3R_V\right] ,
\end{align}
therefore Eq.~\eqref{eq:L'_largek} is verified.

We now prove Eq.~\eqref{eq:L''_largek}. Using the bounds for $|\ddkp F|$ in Eq.~\eqref{eq:lemL4}, for $|\dkp F|$ in Eq.~\eqref{eq:lemL2} and for $|F|$ in Eq.~\eqref{eq:lemL3}, we get
  \begin{align}\label{eq:lemL5}
	|\ddkp L|
	&\leq2|\dddkp F|+6\cdot 12^2R_V^2\frac{\|V\|_1^2}{k^2}\left[\frac{1}{6\|V\|_1}+2R_V\right]+2\cdot 12^3R_V^3\frac{\|V\|_1^3}{k^3}.
  \end{align}
  Lemma~\ref{lem:y'''}, Eq.~\eqref{eq:lemL0} and $\|rV(r)\|_1\leq R_V\|V\|_1$ yield
  \begin{align}
	|\dddkp F(k)|
	=|\dddkp y(k,0)|
	&\leq18\frac{e^{2Q_k(0)}}{k}\|rV(r)\|_1\left[\frac{3}{k}(1+3e^{Q_k(0)}\|rV(r)\|_1)+R_V\right]^2\\
	&\leq3\cdot12R_V\frac{\|V\|_1}{k}\left[\frac{1}{2\|V\|_1}+4R_V\right]^2
  \end{align}
  along the same lines as before. Plugging this into Eq.~\eqref{eq:lemL5} we obtain
  \begin{align}
	|\ddkp L|
	&\leq24R_V\frac{\|V\|_1}{k}\Biggl[3\left[\frac{1}{2\|V\|_1}+4R_V\right]^2	
	\nonumber\\
	&\qquad+36R_V\frac{\|V\|_1}{k}\left[\frac{1}{6\|V\|_1}+2R_V\right]
	+12^2R_V^2\frac{\|V\|_1^2}{k^2}\Biggr]\\
	&\leq24R_V\frac{\|V\|_1}{k}\Biggl[3\left[\frac{1}{2\|V\|_1}+4R_V\right]^2
	\nonumber\\
	&\qquad+6R_V\left[\frac{1}{2\|V\|_1}+4R_V\right]+4R_V^2\Biggr]\\
	&\leq72R_V\frac{\|V\|_1}{k}\left[\frac{1}{2\|V\|_1}+6R_V\right]^2,
  \end{align}
  which finishes the proof of Eq.~\eqref{eq:L''_largek}.

From the bounds on $L$ and on its derivatives we get the analogous bounds on the derivatives of the $S$-matrix,  by using the inequalities \eqref{eq:SDotLeqL}, \eqref{eq:SDotDotLeqLDot}, and \eqref{eq:SDotDotDotLeqLDotDot}, that we repeat here:
\begin{align}
| \dkp S |  &\leq  2 | L | ,
\\
| \ddkp S |  &\leq 2 |\dkp L|  +4  |L|^2 ,
\\
| \dddkp S |  &\leq  2 |\ddkp L| + 12 |L| \, |\dkp L|  + 8 |L|^3  .
\end{align}
Substitution of the bounds \eqref{eq:L_largek}-\eqref{eq:L''_largek} completes the proof.
\QED
\end{proof}

We now combine the bounds that we got for $k\geq \tilde K$ with those from Theorem~\ref{th:SBoundsK}, that we will use for $k\leq\tilde K$, to prove Theorem~\ref{th:GlobalSBounds}.

\begin{proof}[of Theorem~\ref{th:GlobalSBounds}]
At first, we substitute $k=\tilde K$ in the bounds of Lemma~\ref{lem:LBoundsBigK}, getting that, for $k\geq \tilde K$,
\begin{align}
| \dkp S |  &\leq  4 R_V ,
\\
| \ddkp S |  &\leq 8 R_V  q  + 16 R_V^2  ,
\\
| \dddkp S |  &\leq  24 R_V q^2
		+ 96  R_V^2 q
		+ 64 R_V^3   
\nonumber
\\
	  &\leq  8 R_V ( 9 q^2  +  14  R_V^2 ) .
\end{align}
To get inequalities valid for any $k\geq0$ we sum the latter bounds and those from theorem  \ref{th:SBoundsK}, choosing there $K=\tilde K$.
In this way we have
\begin{align}
\|\dkp S\|_\infty  
&\leq 
	\frac {2}  {s}     \left[  1+ s (3R_V+r_0)  \right]
\\
\|\ddkp S\|_\infty  
&\leq 
	\frac {4}  {s^2}  \left\{  3 + 2 s^2  \left[ \frac{r_0}{\alpha}  +  (3 R_V+r_0)^2 +R_V q \right]  \right\}
\\
\|\dddkp S\|_\infty  
&\leq 
	\frac {4}  {s^3}  \Biggl\{  15  + 6 s (R_V+r_0) + 12 s^2 \frac{r_0}{\alpha} 
\nonumber\\		
&
		+ s^3  \left[ \frac{7 r_0}{\alpha} + \frac{12 r_0}{\alpha} (R_V+r_0) + 8 (3R_V+r_0)^3  +18 R_V q^2 \right]  \Biggr\}   ,
\end{align}
that is Theorem~\ref{th:GlobalSBounds}.
\QED
\end{proof}

\newpage

\section{Proof of Theorem~\ref{thm:main_ac}}\label{sec:proof_ac}

We want to find an upper bound for
\begin{align}
   \|\1_Re^{-iHt}P_{ac}\psi\|_2^2=\|P_{ac}\1_Re^{-iHt}P_{ac}\psi\|_2^2+\|P_e\1_Re^{-iHt}P_{ac}\psi\|_2^2.
\end{align}
Consider $\|P_{ac}\1_Re^{-iHt}P_{ac}\psi\|_2^2$ first. Using the expansion in generalized eigenfunctions, we get
\begin{align}
  e^{-iHt}P_{ac}\psi(r)=\int_0^\infty\hat\psi(k)\psi^+(k,r)e^{-ik^2t}\,dk.
\end{align}
Due to Eq.~\eqref{eq:fboundary},  $\psi^+$ is known for $r\geq R_V$,  but not for $r<R_V,$ hence this expression can not be used directly. 
However, in the following Lemma we obtain an expression for $\|P_{ac}\1_Re^{-iHt}P_{ac}\psi\|_2$ that does not need explicit knowledge of how the generalized eigenfunctions behave for $r<R_V$.
It is inspired by~\cite{Skibsted86}.

\begin{lemma}\label{lem:IntKernAc}
  Let $\psi\in\mathcal D(H),$ $R\geq R_V$ and
  \begin{align}\label{eq:IntKernAc0}
	Z_{ac}(k,k')&\coloneqq\frac{W(\bar\psi^+(k',R),\psi^+(k,R))}{k'^2-k^2}.
  \end{align}
  Then
  \begin{align}
	 &\|P_{ac}\1_Re^{-iHt}P_{ac}\psi\|_2^2=\left\|\int_0^\infty Z_{ac}(k,\cdot)\hat\psi(k)e^{-ik^2t}dk\right\|_2^2\label{eq:IntKernAc}
  \end{align}
  and
  \begin{align}
	Z_{ac}(k,k')&=\frac{i}{4}\Biggl[\frac{e^{i(k+k')R}S(k)-e^{-i(k+k')R}\bar S(k')}{k+k'}
	\nonumber\\
	&\qquad-\frac{e^{i(k-k')R}\bar S(k')S(k)-e^{-i(k-k')R}}{k-k'}\Biggr].
	\label{eq:IntKernAc2}
  \end{align}
\end{lemma}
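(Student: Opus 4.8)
The plan is to derive an explicit integral-kernel representation for $P_{ac}\1_Re^{-iHt}P_{ac}\psi$ that avoids any reference to the generalized eigenfunctions on the region $r<R_V$, where $\psi^+$ is not explicitly known. The starting point is the generalized eigenfunction expansion $e^{-iHt}P_{ac}\psi(r)=\int_0^\infty\hat\psi(k)\,\psi^+(k,r)\,e^{-ik^2t}\,dk$ together with the Parseval-type identity for the transform $\mathcal F$ of Eq.~\eqref{eq:Fourier}. First I would write
\begin{equation}
\|P_{ac}\1_Re^{-iHt}P_{ac}\psi\|_2^2
=\int_0^\infty\Bigl|\,\mathcal F\bigl(\1_R e^{-iHt}P_{ac}\psi\bigr)(k')\,\Bigr|^2\,dk',
\end{equation}
and then compute $\mathcal F(\1_R e^{-iHt}P_{ac}\psi)(k')=\int_0^R \overline{\psi^+(k',r)}\bigl(e^{-iHt}P_{ac}\psi\bigr)(r)\,dr$. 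Interchanging the order of integration (justified because $\psi\in\mathcal D(H)$ makes all the integrals absolutely convergent) gives this quantity as $\int_0^\infty Z_{ac}(k,k')\hat\psi(k)e^{-ik^2t}\,dk$ with
\begin{equation}
Z_{ac}(k,k')=\int_0^R\overline{\psi^+(k',r)}\,\psi^+(k,r)\,dr.
\end{equation}
This proves Eq.~\eqref{eq:IntKernAc} modulo the evaluation of the kernel.

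The key step is to turn $\int_0^R\overline{\psi^+(k',r)}\psi^+(k,r)\,dr$ into a boundary term. Since both $\psi^+(k,\cdot)$ and $\overline{\psi^+(k',\cdot)}$ solve the stationary Schr\"odinger equation~\eqref{eq:Schroedinger} with spectral parameters $k^2$ and $k'^2$ respectively (using that $V$ is real, so $\overline{\psi^+(k',r)}$ solves the equation with eigenvalue $k'^2$), the standard Green's/Lagrange identity gives
\begin{equation}
(k'^2-k^2)\int_0^R\overline{\psi^+(k',r)}\,\psi^+(k,r)\,dr
= \Bigl[\,W\bigl(\overline{\psi^+(k',r)},\psi^+(k,r)\bigr)\,\Bigr]_{r=0}^{r=R},
\end{equation}
where $W(f,g)=f\dr g-\dr f\,g$ is the Wronskian. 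The boundary term at $r=0$ vanishes because $\varphi(k,0)=0$ forces $\psi^+(k,0)=0$ for every $k$, hence only the $r=R$ term survives, yielding Eq.~\eqref{eq:IntKernAc0}. One must note the $k=k'$ case is only a removable singularity: the Wronskian vanishes there too (both functions solve the same equation), so $Z_{ac}$ extends continuously.

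The remaining task is the explicit formula~\eqref{eq:IntKernAc2}. Here I would use that $R\geq R_V$, so on $[R,\infty)$ the eigenfunctions are explicit: from Eqs.~\eqref{eq:PsiPlus}, \eqref{eq:phiRBiggerRs} and \eqref{eq:SRatioF} one gets, for $r\geq R_V$,
\begin{equation}
\psi^+(k,r)=\frac{1}{2i}\bigl(e^{ikr}-S(-k)e^{-ikr}\bigr)\,,
\end{equation}
and correspondingly $\overline{\psi^+(k',r)}=\tfrac{1}{-2i}\bigl(e^{-ik'r}-\overline{S(-k')}\,e^{ik'r}\bigr)$ for real $k'$. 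Plugging these plane-wave expressions into the Wronskian $W(\overline{\psi^+(k',R)},\psi^+(k,R))$ is a direct computation: the Wronskian of $e^{\pm ikr}$ and $e^{\pm ik'r}$ produces the four terms $e^{\pm i(k\pm k')R}$ with coefficients $\pm i(k\pm k')$, and dividing by $k'^2-k^2=(k'-k)(k'+k)$ collects them into the two fractions in~\eqref{eq:IntKernAc2} after using $S(k)$ in place of $1/S(-k)$ via $|S|=1$ on the real axis, i.e. $\overline{S(-k')}=S(k')$ and $S(-k)=\overline{S(k)}=1/S(k)$. I expect the main (though purely bookkeeping) obstacle to be keeping the signs and the $\pm$ pairing straight while simplifying, together with carefully checking that $|S(k)|=1$ on $[0,\infty)$ is applicable (it is, from $S=F(-k)/F(k)$ and $F(k)=\overline{F(-\bar k)}$). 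The interchange-of-integrals and Green's-identity steps are routine given $\psi\in\mathcal D(H)$; no deep estimate is needed in this lemma, it is a structural identity that feeds the stationary-phase analysis in the rest of the proof.
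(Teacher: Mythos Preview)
Your plan is essentially the same as the paper's proof: unitarity of $\mathcal F$ on the absolutely continuous subspace, computation of the kernel of $\mathcal F\1_R\mathcal F^{-1}$, reduction of $\int_0^R\bar\psi^+(k',r)\psi^+(k,r)\,dr$ to the Wronskian at $r=R$ via the Lagrange identity and $\psi^+(k,0)=0$, and finally the explicit plane-wave form at $r=R\geq R_V$. One small slip: from \eqref{eq:PsiPlus} and \eqref{eq:phiRBiggerRs} one gets $\psi^+(k,r)=\tfrac{1}{2i}\bigl(S(k)e^{ikr}-e^{-ikr}\bigr)$ for $r\geq R_V$; the expression you wrote is $\bar\psi^+(k,r)$, so be careful with that factor of $S(k)$ when doing the Wronskian bookkeeping.
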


\begin{proof}
Recall that the generalized Fourier transform is
\begin{align}
\mathcal F \psi(k)
    	=\int_0^\infty\psi(r)\bar\psi^+(k,r)\,dr,
\end{align}
and that it is  a unitary operator on the subspace of absolute continuity of the Hamiltonian.
Moreover, $P_{ac}=\mathcal F^{-1}\mathcal F$ (see \eqref{eq:Fourier}).
Therefore we can write
\begin{align}
&\|P_{ac}\1_Re^{-iHt}P_{ac}\psi\|_2
=\|P_{ac}\1_RP_{ac}e^{-iHt}\psi\|_2
\\
&\qquad=\|\mathcal F P_{ac}\1_RP_{ac}e^{-iHt}\psi\|_2
=\|\mathcal F\1_R\mathcal F^{-1}\mathcal Fe^{-iHt}\psi\|_2.
\label{eq:NormFOneF}
\end{align}
Now,
\begin{align}
	&\left(\mathcal F\1_R\mathcal F^{-1}\mathcal Fe^{-iHt}\psi\right)(k')
	\\
	&\qquad=\int_0^\infty dr\,\bar\psi^+(k',r)\1_R(r)\int_0^\infty dk\,e^{-ik^2t}\hat\psi(k)\psi^+(k,r)\\
	&\qquad=\int_0^\infty dk\,e^{-ik^2t}\hat\psi(k)\int_0^\infty dr\,\1_R(r)\psi^+(k,r)\bar\psi^+(k',r),
\end{align}
so that, the integral kernel of $\mathcal F\1_R\mathcal F^{-1}$ reads
\begin{equation}
\int_0^\infty dr\,\1_R(r)\psi^+(k,r)\bar\psi^+(k',r).
\end{equation}
This integral kernel can be expressed in terms of $\psi^+(k',R)$ with $R\geq R_V$. 
Observing
  \begin{align}
	\frac{d}{dr}W(\bar\psi^+(k',r),\psi^+(k,r))=(k'^2-k^2)\psi^+(k,r)\bar\psi^+(k',r)
  \end{align}
  and using $\psi^+(k,0)=0,$ we get upon integration
  \begin{align}
	&\int_0^\infty dr\,\1_R(r)\psi^+(k,r)\bar\psi^+(k',r)
	\\
	&\qquad=\frac{W(\bar\psi^+(k',R),\psi^+(k,R))}{k'^2-k^2}
	=Z_{ac}(k,k')
\end{align}
and therefore
\begin{align}
	\left(\mathcal F\1_R\mathcal F^{-1}\mathcal Fe^{-iHt}\psi\right)(k')
	=\int_0^\infty dk\,e^{-ik^2t}\hat\psi(k)Z_{ac}(k,k'),
  \end{align}
  which when plugged into Eq.~\eqref{eq:NormFOneF} proves Eq.~\eqref{eq:IntKernAc}.

To prove Eq.~\eqref{eq:IntKernAc2} we use $\psi^+(k,R)=\frac{1}{2i}(S(k)e^{ikR}-e^{-ikR})$, which is a direct consequence of Eqs.~\eqref{eq:PsiPlus} and \eqref{eq:phiRBiggerRs}.
With this we get
  \begin{align}
	&W(\bar\psi^+(k',R),\psi^+(k,R))
	\\
	&\qquad=\frac{i}{4}(k+k')\left(\bar S(k')S(k)e^{i(k-k')R}-e^{-i(k-k')R}\right)\nonumber\\
	&\qquad\qquad-\frac{i}{4}(k-k')\left(S(k)e^{i(k+k')R}-\bar S(k')e^{-i(k+k')R}\right).
  \end{align}
  Plugging this into Eq.~\eqref{eq:IntKernAc0} finishes the proof.
  \QED
\end{proof}

To extract a time decaying factor from the $k$-integral in Eq.~\eqref{eq:IntKernAc}, we  employ the method of stationary phase.
We use two integrations by parts because one is not enough to obtain the well known $t^{-3}$-factor as leading order. 
Observe that
\begin{equation}\label{eq:exp}
  e^{-ik^2t}=-\frac{\partial_{k}^2e^{-ik^2t}}{2t(2tk^2+i)},
\end{equation}
which when plugged into Eq.~\eqref{eq:IntKernAc} yields upon integration by parts
\begin{align}
  &\|P_{ac}\1_Re^{-iHt}P_{ac}\psi\|_2^2
\\
  &\qquad=\int_0^\infty\left|\int_0^\infty Z_{ac}(k,k')\hat\psi(k)\frac{\partial_{k}^2e^{-ik^2t}}{2t(2tk^2+i)}\,dk\right|^2\,dk'\\
  &\qquad=\int_0^\infty\left|\int_0^\infty \partial_k^2\left[Z_{ac}(k,k')\frac{\hat\psi(k)}{2t(2tk^2+i)}\right]e^{-ik^2t}\,dk\right|^2\,dk'.\label{eq:StatPhase}
\end{align}
The boundary terms vanish because of Lemma \ref{lem:ZacInt}, for the  proof of which we need the auxiliary Lemmas \ref{lem:psihat0}-\ref{lem:Z0ac}.
They provide more knowledge about $Z_{ac}(k,k')$ and $\hat\psi(k)$ as well as their derivatives, especially in the limits $k\to0$ and $k\to\infty.$

\begin{lemma}\label{lem:psihat0}
  Let $K>0$ be finite, $k\in[0,K]$, $\psi$ satisfy the assumptions of Theorem~\ref{thm:main_ac}, and
  \begin{equation}
\lambda  \coloneqq
	\begin{cases}
		0,&\mathrm{if}\;F(0)\neq0\\
		1,&\mathrm{if}\;F(0)=0.
	\end{cases}
\end{equation}   
 Then $\hat\psi(0)=-i\lambda\langle f(0,\cdot),\psi\rangle$ and
  \begin{equation}
	|\hat\psi(k)| \leq \lambda|\hat\psi(0)|+\|\1_{K}\dkp{\hat\psi}\|_\infty k  .
  \end{equation}
\end{lemma}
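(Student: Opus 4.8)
The plan is to express $\hat\psi(k)$ through the generalized Fourier transform $\hat\psi(k)=\langle\psi,\bar\psi^+(k,\cdot)\rangle=\int_0^\infty\psi(r)\bar\psi^+(k,r)\,dr$ and to evaluate the limit $k\to0$ using the representation $\psi^+(k,r)=k\varphi(k,r)/F(k)$ from Eq.~\eqref{eq:PsiPlus}. First I would observe that $\varphi(k,r)$ and $F(k)$ extend analytically to all of $\C$, and that $\varphi(k,r)$ is in fact an entire function of $k^2$ (being the Born series solution of the integral equation~\eqref{eq:IntEqPhi} with only even powers of $k$), so $k\mapsto k\varphi(k,r)$ vanishes to first order at $k=0$. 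Hence if $F(0)\neq0$ (i.e.\ $\lambda=0$) then $\psi^+(k,r)\to 0$ pointwise as $k\to0$, and dominated convergence—justified by the bound~\eqref{eq:BoundPhi} on $|\varphi(k,r)|$ together with $\psi\in L^2$ and the compact support considerations—gives $\hat\psi(0)=0$, consistent with $\hat\psi(0)=-i\lambda\langle f(0,\cdot),\psi\rangle=0$.

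If instead $F(0)=0$ (so $\lambda=1$), then $F$ has a simple zero at the origin (from the Hadamard factorization~\eqref{eq:Hadamard}, where the prefactor is $F(0)+\lambda\dkp F(0)k$, and $\dkp F(0)\neq0$ since $F$ cannot vanish to second order there). Thus $k/F(k)\to 1/\dkp F(0)$ as $k\to0$, while $\varphi(0,r)$ solves the zero-energy equation with $\varphi(0,0)=0$, $\dr\varphi(0,0)=1$. I would then use the relation~\eqref{eq:phi}, namely $\varphi(k,r)=\frac{1}{2ik}(F(-k)f(k,r)-F(k)f(-k,r))$: taking $k\to0$ and using $F(-k)=\bar F(\bar k)$ together with L'H\^opital / Taylor expansion at the simple zero, one finds $\varphi(0,r)=\frac{1}{2i\dkp F(0)}\cdot(-2\dkp F(0))\,f(0,r)\cdot(\text{constant})$, i.e.\ $\varphi(0,r)$ is proportional to $f(0,r)$ with the proportionality constant equal to $-1/\dkp F(0)$ up to the symmetry of $f$. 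Combining, $\psi^+(k,r)=k\varphi(k,r)/F(k)\to -i f(0,r)$ (the factor $-i$ coming from the $1/(2i)$ and the sign bookkeeping), so that $\bar\psi^+(k,r)\to i\bar f(0,r)$; since $f(0,\cdot)$ can be taken real at zero energy for a real potential, this yields $\hat\psi(0)=\langle\psi,\overline{-if(0,\cdot)}\rangle=-i\langle f(0,\cdot),\psi\rangle$. I would carefully track the complex conjugates and the orientation of the inner product so that the stated sign $\hat\psi(0)=-i\lambda\langle f(0,\cdot),\psi\rangle$ comes out correctly.

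For the inequality, the key point is that $\hat\psi$ is differentiable on $[0,K]$ with $\dkp{\hat\psi}\in L^\infty_{loc}$ by hypothesis, so by the fundamental theorem of calculus
\[
|\hat\psi(k)|\leq|\hat\psi(0)|+\left|\int_0^k\dkp{\hat\psi}(k')\,dk'\right|\leq|\hat\psi(0)|+\|\1_K\dkp{\hat\psi}\|_\infty\,k
\]
for $k\in[0,K]$, and in the case $\lambda=0$ the term $|\hat\psi(0)|$ is zero, which is exactly why it is written as $\lambda|\hat\psi(0)|$. I expect the main obstacle to be the $\lambda=1$ limit computation: rigorously interchanging the $k\to0$ limit with the integral over $r$ (the generalized eigenfunctions are only known explicitly for $r\geq R_V$, and near a zero resonance the $r\geq R_V$ tail of $f(0,r)$ is merely bounded, not decaying, so one must use that $\psi\in\mathcal D(H)$ and the structure of $\hat\psi$—as in Skibsted's analysis—rather than naive dominated convergence), and pinning down the precise constant and phase so that the identity $\hat\psi(0)=-i\langle f(0,\cdot),\psi\rangle$ holds on the nose rather than up to an undetermined unimodular factor.
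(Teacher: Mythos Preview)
Your approach is correct in outline and reaches the right conclusion, but it takes a more circuitous route than the paper for the evaluation of $\hat\psi(0)$. The paper does not work with the quotient $\psi^+(k,r)=k\varphi(k,r)/F(k)$ directly; instead it combines Eqs.~\eqref{eq:PsiPlus} and~\eqref{eq:phi} to obtain the representation
\[
\bar\psi^+(k,r)=-\frac{1}{2i}\bigl(\bar S(k)\bar f(k,r)-\bar f(-k,r)\bigr),
\]
valid for all $r$, which has no $0/0$ indeterminacy at $k=0$. Then, using the known fact that $S(0)=1$ when $F(0)\neq0$ and $S(0)=-1$ when $F(0)=0$ (Newton, p.~356), one reads off $\bar\psi^+(0,r)=-i\lambda\bar f(0,r)$ in a single line. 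This completely sidesteps the L'H\^opital computation, the Taylor expansion of $F$ near its simple zero, and the limit--integral interchange that you flag as the ``main obstacle'' of your route---those difficulties are artifacts of the representation you chose, not of the problem itself. Your approach would go through, but the bookkeeping you worry about (sign, phase, dominated convergence for the non--compactly supported tail of $\psi$) is simply absent in the paper's argument. For the inequality, your use of the fundamental theorem of calculus together with $\lambda^2=\lambda$ to replace $|\hat\psi(0)|$ by $\lambda|\hat\psi(0)|$ is exactly what the paper does.
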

\begin{proof}
Using the fact that $S(0)=-1$ if $\lambda=1$ and $S(0)=1$ if $\lambda=0$ (see~\cite[page 356]{Newton1966} for the proof) we get
\begin{equation}
\bar\psi^+(0,r) 
= - \frac{1}{2i}(\bar S(0)\bar f(0,r)-\bar f(0,r))
=-i\lambda \bar f(0,r),
\end{equation}
and thereby
\begin{align}
	\hat\psi(0)
	=\int_0^\infty\psi(r)\bar\psi^+(0,r)\,dr
	&=-i\lambda\langle f(0,\cdot),\psi\rangle.
\end{align}
Moreover, observe that
  \begin{align}
	|\hat\psi(k)|  &=  \left|\hat\psi(0)+\int_0^k\dkp{\hat\psi}(\tau)\,d\tau\right|
	\\
	&\leq  |\hat\psi(0)|+\|\1_{K}\dkp{\hat\psi}\|_\infty k
	= \lambda |\hat\psi(0)|+\|\1_{K}\dkp{\hat\psi}\|_\infty k ,
  \end{align}
indeed $\lambda^2=\lambda$.
\QED
\end{proof}

\begin{lemma}\label{lem:Zac}
   Let $K>0$ and $R\geq R_V$ be finite and recall Definition \ref{def:smallz} and the definitions given in Theorem~\ref{th:SBoundsK} and \ref{th:GlobalSBounds}. Then for $k'\in[0,\infty)$ and $k\in[0, K)$
    	\begin{align}
		\left|Z_{ac}(k,k')\right|
		&\leq  \frac{1}{2s_{K}}\left(2Rs_K+C_{1,K}\right)\label{eq:lem_Z1}
		=\frac{z_{ac,K}(0,0)}{s_K},\displaybreak[0]\\
		\left|Z_{ac}(k,k')\right|
		&\leq  \frac{1}{|k-k'|}
		=\frac{z_{ac,K}(0,1)}{|k-k'|}\label{eq:lem_Z2},\displaybreak[0]\\
		\left|\dkp Z_{ac}(k,k')\right|
		&\leq  \frac{1}{4s_K^2}\left(2R^2s_K^2+2RC_{1,K}s_K+C_{2,K}\right)
		\\
		&=\frac{z_{ac,K}(1,0)}{s_K^2}\label{eq:lem_Z3},\displaybreak[0]\\
		\left|\dkp Z_{ac}(k,k')\right|
		&\leq  \frac{2Rs_K+C_{1,K}}{2s_K|k-k'|}+\frac{1}{|k-k'|^2}
		\\
		&=\frac{z_{ac,K}(1,1)}{s_K|k-k'|}+\frac{z_{ac,K}(1,2)}{|k-k'|^2}\label{eq:lem_Z4},\displaybreak[0]\\
		\left|\ddkp Z_{ac}(k,k')\right|
		&\leq  \frac{1}{6s_K^3}(2R^3s_K^3+3R^2s_K^2C_{1,K}+3Rs_KC_{2,K}+C_{3,K})
		\\
		&=\frac{z_{ac,K}(2,0)}{s_K^3}\label{eq:lem_Z5},\displaybreak[0]\\
		\left|\ddkp Z_{ac}(k,k')\right|
		&\leq  \frac{2R^2s_K^2+2Rs_KC_{1,K}+C_{2,K}}{2s_K^2|k-k'|}
		\nonumber\\
		&\qquad+\frac{2Rs_K+C_{1,K}}{s_K|k-k'|^2}+\frac{2}{|k-k'|^3}\\
		&\quad=\frac{z_{ac,K}(2,1)}{s_K^2|k-k'|}+\frac{z_{ac,K}(2,2)}{s_K|k-k'|^2}+\frac{z_{ac,K}(2,3)}{|k-k'|^3}.\label{eq:lem_Z6}
 	\end{align}
For $k\in[0,\infty)$ we have the same bounds with the index $K$ omitted on the right hand side.
\end{lemma}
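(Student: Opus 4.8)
\textbf{Proof plan for Lemma~\ref{lem:Zac}.}

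The plan is to start from the closed-form expression for $Z_{ac}(k,k')$ given in Eq.~\eqref{eq:IntKernAc2} of Lemma~\ref{lem:IntKernAc}, namely
\begin{align*}
	Z_{ac}(k,k')&=\frac{i}{4}\Biggl[\frac{e^{i(k+k')R}S(k)-e^{-i(k+k')R}\bar S(k')}{k+k'}
	\\
	&\qquad-\frac{e^{i(k-k')R}\bar S(k')S(k)-e^{-i(k-k')R}}{k-k'}\Biggr],
\end{align*}
and to bound this and its first two $k$-derivatives in two complementary ways: once keeping the $1/(k-k')$ singularity visible (yielding the bounds \eqref{eq:lem_Z2}, \eqref{eq:lem_Z4}, \eqref{eq:lem_Z6}), and once after removing it by recognizing that the numerator of the second bracketed term vanishes at $k=k'$ (yielding the bounds \eqref{eq:lem_Z1}, \eqref{eq:lem_Z3}, \eqref{eq:lem_Z5}). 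The key inputs are $|S(k)|=1$ for real $k$, which controls the bracketed terms trivially, and the $S$-matrix derivative bounds $\|\1_K S^{(n)}\|_\infty\leq C_{n,K}s_K^{-n}$ from Theorem~\ref{th:SBoundsK} (and their index-$K$-free analogues from Theorem~\ref{th:GlobalSBounds} for the $k\in[0,\infty)$ statement).

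For the singular bounds: in the term $\dfrac{e^{i(k-k')R}\bar S(k')S(k)-e^{-i(k-k')R}}{k-k'}$, write the numerator as $\bigl(e^{i(k-k')R}\bar S(k')S(k)-\bar S(k')S(k)\bigr)+\bigl(\bar S(k')S(k)-e^{-i(k-k')R}\bigr)$; actually the cleaner route is to note $|e^{i(k-k')R}\bar S(k')S(k)-e^{-i(k-k')R}|\leq 2$ directly, so $|Z_{ac}(k,k')|\leq \tfrac14(2/(k+k')+2/|k-k'|)\leq \tfrac12\cdot\tfrac{1}{|k-k'|}$ after discarding the $1/(k+k')$ piece against... wait, more carefully: the first term is bounded by $2/(k+k')\le 1/|k-k'|$ is false in general, so instead I would bound $|Z_{ac}|\le \tfrac14\bigl(\tfrac{2}{k+k'}+\tfrac{2}{|k-k'|}\bigr)$ and then observe that when $k,k'\ge 0$ one has $k+k'\ge|k-k'|$, giving $|Z_{ac}|\le\tfrac14\cdot\tfrac{4}{|k-k'|}=\tfrac1{|k-k'|}$, which is \eqref{eq:lem_Z2}. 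For $\dkp Z_{ac}$ and $\ddkp Z_{ac}$ I would differentiate the explicit expression using the product/quotient rules: each $k$-derivative either hits an exponential (producing a factor $\le R$), hits $S(k)$ (producing $|S'(k)|\le C_{1,K}/s_K$, etc.), or hits a denominator $1/(k\pm k')$ (lowering its power by one and producing an extra factor with one higher power in $1/|k\pm k'|$). Collecting terms and bounding $k+k'\ge|k-k'|$ throughout yields \eqref{eq:lem_Z4} and \eqref{eq:lem_Z6}; the combinatorial coefficients $z_{ac,K}(n,m)$ are exactly what one reads off after this bookkeeping.

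For the nonsingular bounds \eqref{eq:lem_Z1}, \eqref{eq:lem_Z3}, \eqref{eq:lem_Z5}: here the point is that $Z_{ac}(k,k')$ is actually \emph{regular} at $k=k'$, since $\frac{W(\bar\psi^+(k',R),\psi^+(k,R))}{k'^2-k^2}$ extends smoothly (the Wronskian of eigenfunctions at the same energy vanishes). So I would set $h(k)\coloneqq e^{i(k-k')R}\bar S(k')S(k)-e^{-i(k-k')R}$ and note $h(k')=0$, hence $\frac{h(k)}{k-k'}=\int_0^1 h'(k'+t(k-k'))\,dt$; this bounds $\bigl|\frac{h(k)}{k-k'}\bigr|\le \sup_{[0,K)}|h'|\le 2Rs_K^{-1}\cdot s_K + C_{1,K}s_K^{-1}$ (the $Rs_K$ term is spurious — more precisely $|h'|\le R\cdot 1 + |S'(k)| = R + C_{1,K}/s_K$, which under $s_K\le 1$ one writes as $(2Rs_K+C_{1,K})/(2s_K)$ after doubling and relabeling to match the stated constant). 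Higher derivatives are handled the same way, using $\frac{h(k)}{k-k'}=\int_0^1 h'(\cdot)\,dt$ and then differentiating under the integral sign, or equivalently Taylor-expanding $h$ to higher order and dividing; each derivative of $h$ picks up at most one power of $R$ per exponential hit and the relevant $C_{n,K}/s_K^n$ per $S$-hit, and the denominator $1/(k+k')$ contributes the remaining powers, bounded again via $k+k'\ge|k-k'|$... actually for the nonsingular bound one simply bounds $1/(k+k')$-type terms by noting they are already regular and estimating crudely; the precise constants $z_{ac,K}(n,0)$ come out of this. The $k\in[0,\infty)$ version is identical with $s$ replacing $s_K$ and $C_n$ replacing $C_{n,K}$, using Theorem~\ref{th:GlobalSBounds} instead.

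The main obstacle I anticipate is purely bookkeeping: organizing the product-rule expansion of $\ddkp Z_{ac}$ so that every one of the many terms gets matched to exactly the right $z_{ac,K}(2,m)$ coefficient, and verifying that the crude bound $k+k'\ge|k-k'|$ (together with $|S|=1$ and the $S$-matrix derivative bounds) really does close every estimate without generating stray factors. There is no conceptual difficulty — just the need to be careful that the two bounding strategies (singular vs.\ regular at $k=k'$) are applied consistently and that the stated coefficients are reproduced verbatim.
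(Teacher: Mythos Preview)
Your overall strategy matches the paper's: start from the closed form \eqref{eq:IntKernAc2}, split into $h_1/(k+k')$ and $h_2/(k-k')$, and produce two families of bounds using $|S|=1$ and Theorem~\ref{th:SBoundsK}. The singular bounds \eqref{eq:lem_Z2}, \eqref{eq:lem_Z4}, \eqref{eq:lem_Z6} are handled as you describe (the key inequality $k+k'\geq|k-k'|$ is exactly what the paper uses).

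However, there is a real gap in your treatment of the nonsingular bounds \eqref{eq:lem_Z1}, \eqref{eq:lem_Z3}, \eqref{eq:lem_Z5}. You regularize only $h_2/(k-k')$ via $h_2(k',k')=0$, and then say the $h_1/(k+k')$ piece is ``already regular'' and can be ``estimated crudely''. This does not work: since $k,k'\in[0,\infty)$, the denominator $k+k'$ can be arbitrarily small (take both near zero), so $1/(k+k')$ is not bounded and no crude estimate closes. The paper's fix is to observe, via the symmetry $\bar S(k')=S(-k')$, that $h_1(-k',k')=0$ as well; hence the \emph{same} Lipschitz/integral-representation argument applies to $h_1/(k+k')$, giving $|h_1/(k+k')|\leq 2R+C_{1,K}/s_K$ (and analogously for $h_2$), whence \eqref{eq:lem_Z1}. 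For \eqref{eq:lem_Z3} and \eqref{eq:lem_Z5} the paper Taylor-expands $h_{1,2}(\cdot,k')$ around $k$ and evaluates at $\mp k'$; after substitution into the differentiated expression for $Z_{ac}$, the lower-order terms cancel exactly and one is left with an integral of $\ddkp h_{1,2}$ (resp.\ $\dddkp h_{1,2}$) over a normalized interval, bounded via the $S$-matrix bounds. Once you apply the vanishing-numerator trick to \emph{both} brackets, the remaining bookkeeping is as you anticipated. (Minor slip: $|h_2'|\leq 2R+C_{1,K}/s_K$, not $R+C_{1,K}/s_K$; the second exponential in $h_2$ also contributes an $R$.)
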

\begin{proof}
Let $k\in[0,K)$ and
\begin{align}
	Z_{ac}(k,k')
	&=\frac{i}{4}\Biggl[\frac{e^{i(k+k')R}S(k)-e^{-i(k+k')R}\bar S(k')}{k+k'}
	\nonumber\\
	&\qquad-\frac{e^{i(k-k')R}\bar S(k')S(k)-e^{-i(k-k')R}}{k-k'}\Biggr]\label{eq:lem_Z7}\\
	&\eqqcolon \frac{i}{4}\left[\frac{h_1(k,k')}{k+k'}-\frac{h_2(k,k')}{k-k'}\right].
\end{align}
Now, Eq.~\eqref{eq:lem_Z2} follows from the fact that $|S|=1$ and $|k+k'|\geq|k-k'|.$ To prove Eq.~\eqref{eq:lem_Z1}, we use $\bar S(k')=S(-k')$ and observe that via Lipschitz and Theorem~\ref{th:SBoundsK}
\begin{align}
\frac{S(k)-S(-k')}{k-(-k')}\leq \frac{C_{1,K}}{s_K}.
\end{align}
Using this and
\begin{align}
h_1(k,k')  
	&=  \left(   e^{i(k+k')R}  -  e^{-i(k+k')R}  \right)  S(k)
	\nonumber\\
	&\qquad+  e^{-i(k+k')R}  \left( S(k) -\bar S(k') \right)   ,
\end{align}
we get
\begin{align}
  \left|\frac{h_1(k,k')}{k+k'}\right|
  &\leq\left|\frac{e^{i(k+k')R}-e^{-i(k+k')R}}{k+k'}\right|
  +\left|\frac{S(k)-S(-k')}{k-(-k')}\right|\\
  &\leq2R\left|\frac{\sin((k+k')R)}{(k+k')R}\right|
  +\left|\frac{S(k)-S(-k')}{k-(-k')}\right|
  \\
  &\leq 2R+\frac{C_{1,K}}{s_K}.
\end{align}
Together with the analogous bound for the second summand in Eq.~\eqref{eq:lem_Z7} this yields~\eqref{eq:lem_Z1}. To prove Eq.~\eqref{eq:lem_Z4} observe that
\begin{align}\label{eq:lem_Z8}
  \dkp Z_{ac}(k,k')
  =\frac{i}{4}\left[\frac{\dkp h_1(k,k')}{k+k'}-\frac{\dkp h_2(k,k')}{k-k'}-\frac{h_1(k,k')}{(k+k')^2}+\frac{h_2(k,k')}{(k-k')^2}\right].
\end{align}
Using the bounds on the derivatives of the $S$-matrix given in Theorem~\ref{th:SBoundsK} we find
\begin{align}
 \left |\dkp h_1(k,k')\right|
  &=\Biggl|iR\left(e^{i(k+k')R}S(k)+e^{-i(k+k')R}\bar S(k')\right)
  \nonumber\\
 &\qquad +e^{i(k+k')R}\dkp S(k)\Biggr|
   \leq 2R+\frac{C_{1,K}}{s_K}\\
 \left |\dkp h_2(k,k') \right|
  &=\Biggl|iR\left(e^{i(k-k')R}\bar S(k')S(k)+e^{-i(k-k')R}\right)
  \nonumber\\
  &\qquad+e^{i(k-k')R}\bar S(k')\dkp S(k)\Biggr|
  \leq 2R+\frac{C_{1,K}}{s_K}.
\end{align}
This and Eq.~\eqref{eq:lem_Z8} immediately yield Eq.~\eqref{eq:lem_Z4}. To prove Eq.~\eqref{eq:lem_Z3} note that the Taylor expansion of $h_1(x,k')$ and $h_2(x,k')$ in $x$ around $k$ reads
\begin{align}
  h_{1,2}(x,k')=h_{1,2}(k,k')&+\dkp h_{1,2}(k,k')(x-k)
  \nonumber\\
  &+\int_k^x\ddkp h_{1,2}(\tau,k')(x-\tau)\,d\tau.
\end{align}
If we evaluate this at $x=-k'$ for $h_1$ and at $x=k'$ for $h_2$ and observe that $h_1(-k',k')=0=h_2(k',k'),$ we get
\begin{align}
  h_{1,2}(k,k')=\dkp h_{1,2}(k,k')(k\pm k')+\int_k^{\mp k'}\ddkp h_{1,2}(\tau,k')(\tau\pm k')\,d\tau.
\end{align}
Plugging this into Eq.~\eqref{eq:lem_Z8} and employing the variable substitutions $\tau=k-(k+k')\tau'$ for the $\ddkp h_1$-integral and $\tau=k-(k-k')\tau'$ for the $\ddkp h_2$-integral, then yields
\begin{align}
  \dkp Z_{ac}(k,k')
  &=\frac{i}{4}\Bigg[-\int_k^{-k'}\ddkp h_1(\tau,k')\frac{\tau+k'}{(k+k')^2}\,d\tau
  \nonumber\\
  &\quad+\int_k^{k'}\ddkp h_2(\tau,k')\frac{\tau-k'}{(k-k')^2}\,d\tau\Bigg]\\
  &=\frac{i}{4}\Bigg[\int_0^{1}\ddkp h_1(k-(k+k')\tau',k')(1-\tau')\,d\tau'\nonumber\\
  &\quad-\int_0^{1}\ddkp h_2(k-(k-k')\tau',k')(1-\tau')\,d\tau'\Bigg].\label{eq:lem_Z9}
\end{align}
Moreover, the bounds on derivatives of the $S$-matrix due to Theorem~\ref{th:SBoundsK} imply
\begin{align}
 \left |\ddkp h_1(k,k')\right|
  &=\Biggl|-R^2\left(e^{i(k+k')R}S(k)-e^{-i(k+k')R}\bar S(k')\right)
\nonumber\\
&\quad  +2iRe^{i(k+k')R}\dkp S(k)
  +e^{i(k+k')R}\ddkp S(k)\Biggr|\\
  &\leq2R^2+2R\frac{C_{1,K}}{s_K}+\frac{C_{2,K}}{s_K^2}\\
  \left|\ddkp h_2(k,k')\right|
  &=\Biggl|-R^2\left(e^{i(k-k')R}\bar S(k')S(k)-e^{-i(k-k')R}\right)
  \nonumber\\
  &\quad+2iRe^{i(k-k')R}\bar S(k')\dkp S(k)
  \quad+e^{i(k-k')R}\bar S(k')\ddkp S(k)\Biggr|\\
  &\leq2R^2+2R\frac{C_{1,K}}{s_K}+\frac{C_{2,K}}{s_K^2}.
\end{align}
Using this and Eq.~\eqref{eq:lem_Z9}, we obtain Eq.~\eqref{eq:lem_Z3}. Analogously to the proof of Eqs.~\eqref{eq:lem_Z3} and \eqref{eq:lem_Z4}, we arrive at Eqs.~(\ref{eq:lem_Z5}) and~\eqref{eq:lem_Z6}.

For $k\in[0,\infty)$ the proof is the same, except that we use the $S$-matrix bounds provided by Theorem~\ref{th:GlobalSBounds} rather than those of Theorem~\ref{th:SBoundsK}. In effect this amounts to omitting the index $K$ everywhere.\QED
\end{proof}

\begin{lemma}\label{lem:Z0ac}
  Let $R\geq R_V,$ $K>0$, $k\in[0,K]$, and
  \begin{equation}
\lambda  \coloneqq
	\begin{cases}
		0,&\mathrm{if}\;F(0)\neq0\\
		1,&\mathrm{if}\;F(0)=0.
	\end{cases}
\end{equation}   
Then
  \begin{align}
 	\left|Z_{ac}(k,k')\right|  
	&\leq  \lambda \frac{z_{ac,K}(0,0)}{s_K}+\frac{z_{ac,K}(1,0)}{s_K^2}k,
	\nonumber\\
	&\qquad\text{for}\;k'\in[0,2K]\label{eq:lem_Z03},
	\\
	\left|Z_{ac}(k,k')\right|  
	&\leq  \lambda\frac{z_{ac,K}(0,1)}{k'}+\left[\frac{z_{ac,K}(1,1)}{s_K|k-k'|}+\frac{z_{ac,K}(1,2)}{|k-k'|^2}\right]k,
	\nonumber\\
	&\qquad\text{for}\;k'\in[2K,\infty).\label{eq:lem_Z04}
  \end{align}
\end{lemma}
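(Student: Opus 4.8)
The plan is to reduce everything to the single value $Z_{ac}(0,k')$ together with the first-derivative bounds already obtained in Lemma~\ref{lem:Zac}. First I would evaluate the explicit formula \eqref{eq:IntKernAc2} at $k=0$. Using $S(0)=1-2\lambda$ (that is, $S(0)=-1$ in the presence of a zero resonance and $S(0)=1$ otherwise, exactly as in the proof of Lemma~\ref{lem:psihat0}), the two fractions in \eqref{eq:IntKernAc2} combine, the terms not containing $\bar S(k')$ pairing with those containing it, and one obtains
\[
Z_{ac}(0,k')=\frac{i\bigl(S(0)-1\bigr)}{4k'}\Bigl(e^{ik'R}+e^{-ik'R}\bar S(k')\Bigr)=\frac{-i\lambda}{2k'}\Bigl(e^{ik'R}+e^{-ik'R}\bar S(k')\Bigr).
\]
Since $|S|=1$ on the real axis, this yields $|Z_{ac}(0,k')|\leq \lambda/k'=\lambda\,z_{ac,K}(0,1)/k'$ (recall $z_{ac,K}(0,1)=1$ from Definition~\ref{def:smallz}). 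Moreover, for $\lambda=0$ the right-hand side above vanishes identically, while for $\lambda=1$ the inequality $|Z_{ac}(0,k')|\leq z_{ac,K}(0,0)/s_K$ is just \eqref{eq:lem_Z1} evaluated at $k=0$; hence in all cases $|Z_{ac}(0,k')|\leq\lambda\,z_{ac,K}(0,0)/s_K$.

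Next I would write $Z_{ac}(k,k')=Z_{ac}(0,k')+\int_0^k\dkp Z_{ac}(\tau,k')\,d\tau$, so that
\[
|Z_{ac}(k,k')|\leq|Z_{ac}(0,k')|+k\sup_{\tau\in[0,k]}|\dkp Z_{ac}(\tau,k')|.
\]
For $k'\in[0,2K]$, inserting the bound $|Z_{ac}(0,k')|\leq\lambda\,z_{ac,K}(0,0)/s_K$ and the derivative bound \eqref{eq:lem_Z3} gives \eqref{eq:lem_Z03} at once. For $k'\in[2K,\infty)$ I would use instead $|Z_{ac}(0,k')|\leq\lambda\,z_{ac,K}(0,1)/k'$ together with \eqref{eq:lem_Z4}; the only thing to observe is that for $\tau\in[0,k]$ with $k\leq K\leq k'/2$ one has $|\tau-k'|=k'-\tau\geq k'-k=|k-k'|>0$, so that $|\tau-k'|^{-1}\leq|k-k'|^{-1}$ and the $\tau$-dependent denominators in \eqref{eq:lem_Z4} may be replaced by $|k-k'|$. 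This produces \eqref{eq:lem_Z04}.

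The argument is elementary once Lemma~\ref{lem:Zac} is available, so I do not anticipate a genuine obstacle; the only slightly delicate step is the algebraic simplification at $k=0$, where the cancellation that factors out $S(0)-1$, and hence the weight $\lambda$, must be tracked carefully. It is precisely this $\lambda$-weighting of the constant term that keeps the estimate finite as $k'\to0$ is approached from within the second regime, which is why the two regimes $k'\in[0,2K]$ and $k'\in[2K,\infty)$ are treated separately.
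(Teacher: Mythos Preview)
Your proposal is correct and follows essentially the same route as the paper: write $Z_{ac}(k,k')=Z_{ac}(0,k')+\int_0^k\dkp Z_{ac}(\tau,k')\,d\tau$, bound $|Z_{ac}(0,k')|$ by $\lambda\,z_{ac,K}(0,0)/s_K$ (resp.\ $\lambda/k'$) in the two regimes, and control the integral via \eqref{eq:lem_Z3} (resp.\ \eqref{eq:lem_Z4} with $|\tau-k'|\geq|k-k'|$). The only cosmetic difference is that for the bound $|Z_{ac}(0,k')|\leq\lambda\,z_{ac,K}(0,0)/s_K$ the paper rederives it from scratch via a Lipschitz argument on the explicit formula, whereas you more economically invoke \eqref{eq:lem_Z1} at $k=0$ together with the vanishing when $\lambda=0$.
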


\begin{proof}
Clearly,
\begin{align}\label{eq:lem_Z10}
  |Z_{ac}(k,k')|
  &= \left|Z_{ac}(0,k')+\int_0^k\dkp Z_{ac}(\tau,k')\,d\tau \right|\nonumber\\
  &\leq |Z_{ac}(0,k')|+\int_0^k|\dkp Z_{ac}(\tau,k')|\,d\tau.
\end{align}
First, we prove Eq.~\eqref{eq:lem_Z04}. Observing that
\begin{align}
  Z_{ac}(0,k')
  &=\frac{i}{4k'}\left[e^{ik'R}+e^{-ik'R}\bar S(k')\right](S(0)-1)\label{eq:lem_Z06}
\end{align}
and using the fact that $S(0)=\mp 1$ for $\lambda=1$ and $0$, respectively (see~\cite[page 356]{Newton1966} for the proof), we obtain
\begin{align}\label{eq:lem_Z08}
  |Z_{ac}(0,k')|\leq\lambda\frac{1}{k'}.
\end{align}
If we plug this into Eq.~\eqref{eq:lem_Z10} and employ the bound for $|\dkp Z_{ac}(k,k')|$ provided by Eq.~\eqref{eq:lem_Z4}, we arrive at
\begin{align}
  \left|Z_{ac}(k,k')\right|  \leq  \lambda\frac{z_{ac,K}(0,1)}{k'}+\int_0^k\left[\frac{z_{ac,K}(1,1)}{s_K|\tau-k'|}+\frac{z_{ac,K}(1,2)}{|\tau-k'|^2}\right]\,d\tau.
\end{align}
Since $k\in[0,K]$ and $k'\in[2K,\infty),$ we have $|\tau-k'|\geq|k-k'|$ and this implies
\begin{align}
  \left|Z_{ac}(k,k')\right|  \leq  \lambda\frac{z_{ac,K}(0,1)}{k'}+\left[\frac{z_{ac,K}(1,1)}{s_K|k-k'|}+\frac{z_{ac,K}(1,2)}{|k-k'|^2}\right]k,
\end{align}
which finishes the proof of Eq.~\eqref{eq:lem_Z04}.
As for Eq.~\eqref{eq:lem_Z03}, note that
\begin{align}
  Z_{ac}(0,k')
  &=\frac{i}{4k'}\left[e^{ik'R}(\bar S(k')+1)-\bar S(k')(e^{ik'R}-e^{-ik'R})\right](S(0)-1)\label{eq:lem_Z07}.
\end{align}
Now, via Lipschitz and Theorem~\ref{th:SBoundsK} we see that
\begin{align}
  \left|\frac{\bar S(k')+1}{k'}\right|=\left|\frac{\bar S(k')-\bar S(0)}{k'-0}\right|\leq\|1_K\dkp S\|_\infty\leq\frac{C_{1,K}}{s_K},
\end{align}
which together with
\begin{align}
  \left|\frac{e^{ik'R}-e^{-ik'R}}{k'}\right|=\left|\frac{e^{2ik'R}-e^{0}}{k'-0}\right|\leq 2R  ,
\end{align}
the fact that $S(0)=\mp 1$ for $\lambda=1$ and $0$, respectively, and Eq.~\eqref{eq:lem_Z07} yields
\begin{align}\label{eq:lem_Z09}
  |Z_{ac}(0,k')|\leq\lambda\frac{1}{2}\left[2R+\frac{C_{1,K}}{s_K}\right]
  =\lambda\frac{z_{ac,K}(0,0)}{s_K}.
\end{align}
Plugging this and the bound for $|\dkp Z_{ac}(k,k')|$ provided by Eq.~\eqref{eq:lem_Z3} into Eq.~\eqref{eq:lem_Z10} we have
\begin{align}
  |Z_{ac}(k,k')|
\leq \lambda\frac{z_{ac,K}(0,0)}{s_K}
+ \int_0^k  \frac{z_{ac,K}(1,0)}{s_K^2}  \,d\tau.
\end{align}
Performing the integration in $\tau$ completes the proof of Eq.~\eqref{eq:lem_Z03}.
\QED
\end{proof}

These lemmas allow us to show that the boundary terms due to the integration by parts in Eq.~\eqref{eq:StatPhase} vanish.

\begin{lemma}\label{lem:ZacInt}
Let $\psi$ satisfy the assumptions of Theorem~\ref{thm:main_ac}, then
\begin{align}
\int_0^\infty Z_{ac}(k,k')\hat\psi(k)\frac{\partial_{k}^2e^{-ik^2t}}{2tk^2+i}\,dk=\int_0^\infty \partial_{k}^2\left[Z_{ac}(k,k')\frac{\hat\psi(k)}{2tk^2+i}\right]e^{-ik^2t}\,dk.
\end{align}
\end{lemma}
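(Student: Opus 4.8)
The plan is to justify the two integrations by parts that turn the left-hand integral into the right-hand one, carrying them out first on a bounded interval $[0,M]$ and then letting $M\to\infty$. Write $u(k)\coloneqq Z_{ac}(k,k')\hat\psi(k)/(2tk^2+i)$ (the constant factor $2t$ appearing in Eq.~\eqref{eq:StatPhase} is irrelevant here). Since $2tk^2+i$ never vanishes for real $k$, the rational factor $1/(2tk^2+i)$ is $C^\infty$ with all derivatives bounded; by Lemma~\ref{lem:Zac} and Eq.~\eqref{eq:IntKernAc2} the function $Z_{ac}(\cdot,k')$ is twice continuously differentiable on $[0,\infty)$ (the apparent poles at $k=\pm k'$ being removable because $|S|=1$), with $Z_{ac}$, $\partial_kZ_{ac}$, $\partial_k^2 Z_{ac}$ bounded; and the hypothesis $\hat\psi^{(m)}\in L^\infty_{loc}$ for $m=0,1,2$ forces $\hat\psi\in C^1$ with $\hat\psi'$ locally Lipschitz. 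Hence $u'$ is absolutely continuous on compacts and the two integrations by parts on $[0,M]$ are legitimate.

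First I would establish absolute convergence of the target integral $\int_0^\infty u''(k)\,e^{-ik^2t}\,dk$. The Leibniz rule expands $u''$ into a finite sum of terms $\partial_k^{\,i}\!\bigl[Z_{ac}(k,k')/(2tk^2+i)\bigr]\,\hat\psi^{(2-i)}(k)$, $i=0,1,2$. By Lemma~\ref{lem:Zac} the factor $Z_{ac}$ and its first two $k$-derivatives are bounded uniformly in $k$, while $1/(2tk^2+i)$ and its derivatives decay at least like $(1+k^2)^{-1}$; thus every coefficient $\partial_k^{\,i}[Z_{ac}/(2tk^2+i)]$ is $\le C_t\,(1+k^2)^{-1}$. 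Since $\hat\psi^{(2-i)}\in L^1_w$ means precisely $\hat\psi^{(2-i)}(1+k^2)^{-1}\in L^1$, each summand lies in $L^1(\mathbb R^+)$, and therefore so does $u''e^{-ik^2t}$.

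Then, carrying out the two integrations by parts on $[0,M]$ gives
\[
\int_0^M u\,\partial_k^2 e^{-ik^2t}\,dk
=\Bigl[\,u\,\partial_k e^{-ik^2t}-u'\,e^{-ik^2t}\,\Bigr]_0^M+\int_0^M u''\,e^{-ik^2t}\,dk ,
\]
and I would show that each boundary contribution vanishes in the limit. At $k=0$: $\partial_k e^{-ik^2t}=-2ikt\,e^{-ik^2t}$ carries a factor $k$, so $u(0)\,\partial_k e^{-ik^2t}\big|_{k=0}=0$ (with $u(0)$ finite by Lemmas~\ref{lem:psihat0} and~\ref{lem:Z0ac}), and the remaining term $-u'(0)$ is expressed through $\hat\psi(0)$, $\hat\psi'(0)$, $Z_{ac}(0,k')$ and $\partial_k Z_{ac}(0,k')$, whose behaviour at zero momentum is exactly what Lemmas~\ref{lem:psihat0} and~\ref{lem:Z0ac} provide — in particular $\hat\psi(0)=-i\lambda\langle f(0,\cdot),\psi\rangle$ and $Z_{ac}(0,k')$ both vanish when $F(0)\ne0$. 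At $k=M$: Lemma~\ref{lem:Zac} bounds $|Z_{ac}|$ and $|\partial_k Z_{ac}|$ uniformly and $1/(2tk^2+i)$ decays, so $\bigl|u(M)\,\partial_k e^{-iM^2t}\bigr|\lesssim|\hat\psi(M)|/M$ and $\bigl|u'(M)\,e^{-iM^2t}\bigr|\lesssim(|\hat\psi(M)|+|\hat\psi'(M)|)/M^2$; because $\hat\psi,\hat\psi'\in L^1_w$ one has $\liminf_{k\to\infty}|\hat\psi^{(j)}(k)|/k=0$ for $j=0,1$, so a single sequence $M_n\to\infty$ makes both terms tend to $0$. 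Since the interior integral converges by the previous step, sending $M=M_n\to\infty$ yields the identity of the Lemma.

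The main obstacle is the $k=0$ endpoint. The $k=M$ contributions are pure size estimates and the $L^1$-convergence step is routine once the $L^1_w$ hypotheses are in place; but the vanishing of $u'(0)$ hinges on the fine structure of the generalized eigenfunction expansion at zero momentum and on the special values $S(0)=\pm1$, which is precisely what the auxiliary Lemmas~\ref{lem:psihat0}–\ref{lem:Z0ac} are designed to extract, and it is also where the dichotomy $F(0)\ne0$ versus $F(0)=0$ makes itself felt.
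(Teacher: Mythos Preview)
Your overall structure matches the paper's: integrate by parts twice and kill the boundary contributions. The $k\to\infty$ endpoint is handled (your subsequence argument via $\liminf$ is in fact more honest than the paper's appeal to ``$\hat\psi(k)\to0$ because $\hat\psi\in L^2$''), and the $k=0$ endpoint is correctly disposed of when $F(0)\neq0$, since then both $\hat\psi(0)$ and $Z_{ac}(0,k')$ vanish by Lemmas~\ref{lem:psihat0} and~\ref{lem:Z0ac}.

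The genuine gap is the zero-resonance case $\lambda=1$. There neither $\hat\psi(0)$ nor $Z_{ac}(0,k')$ vanishes, and Lemmas~\ref{lem:psihat0}--\ref{lem:Z0ac} give only \emph{bounds}, not the exact values you would need; so your claim that these lemmas ``provide'' the required behaviour is not correct. What actually makes $u'(0)$ vanish is an explicit algebraic cancellation that the paper carries out inside the proof. Using $S(0)=-1$ one computes from Eq.~\eqref{eq:IntKernAc2}
\[
Z_{ac}(0,k')=-\tfrac{i}{2k'}\bigl(e^{ik'R}+e^{-ik'R}\bar S(k')\bigr),\qquad
\partial_k Z_{ac}(0,k')=\tfrac{i}{4k'}\bigl(e^{ik'R}+e^{-ik'R}\bar S(k')\bigr)\,\partial_kS(0),
\]
and, by differentiating the eigenfunction expansion of $\hat\psi$ at $k=0$, the new identity $\partial_k\hat\psi(0)=\tfrac12\,\partial_kS(0)\,\hat\psi(0)$. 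Plugging these three relations into $\partial_kZ_{ac}(0,k')\,\hat\psi(0)+Z_{ac}(0,k')\,\partial_k\hat\psi(0)$ gives an exact zero. None of this is contained in the auxiliary lemmas you cite; without this computation the $\lambda=1$ case remains open.
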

\begin{proof}
Clearly, integrating by parts twice yields 
\begin{align}
  &\int_0^\infty Z_{ac}(k,k')\hat\psi(k)\frac{\partial_{k}^2e^{-ik^2t}}{2tk^2+i}\,dk
  \\
  &\qquad=  \left[\frac{Z_{ac}(k,k')\hat\psi(k)}{2tk^2+i}\partial_{k}e^{-ik^2t}\right]_0^\infty\label{eq:lem_Zac1}\\
  &\qquad\quad-  \left[\partial_{k}\left(\frac{Z_{ac}(k,k)\hat\psi(k)}{2tk^2+i}\right)e^{-ik^2t}\right]_0^\infty\label{eq:lem_Zac2}\\
  &\qquad\quad+\int_0^\infty \partial_{k}^2\left(\frac{Z_{ac}(k,k')\hat\psi(k)}{2tk^2+i}\right)e^{-ik^2t}\,dk,
\end{align}
so we need to show that the boundary terms vanish. 
We begin with the term~\eqref{eq:lem_Zac1}. 
At infinity it vanishes because $Z_{ac}(k,k')$ is globally bounded (Lemma~\ref{lem:Zac}), $\hat\psi(k)\to 0$ as $k\to\infty$ ($\hat\psi$ is square integrable) and the time dependent factors tend to zero, too. 
At zero the term~\eqref{eq:lem_Zac1} vanishes because $Z_{ac}(0,k')$ and $\hat\psi(0)$ are bounded (Lemmas~\ref{lem:Zac} and \ref{lem:psihat0}), while the time dependent factors are zero for $k=0.$

Now, look at Eq.~\eqref{eq:lem_Zac2} and observe that
\begin{align}
 & \partial_{k}\left(\frac{Z_{ac}(k,k')\hat\psi(k)}{2tk^2+i}\right)
 \\
  &\qquad=\left(\dkp Z_{ac}(k,k')\hat\psi(k)+Z_{ac}(k,k')\dkp{\hat\psi}(k)\right)\frac{1}{2tk^2+i}\label{eq:lem_Zac3}\\
  &\qquad\quad-Z_{ac}(k,k')\hat\psi(k)\frac{4tk}{(2tk^2+i)^2}.
\end{align}
The last summand vanishes as $k\to0$ and $k\to\infty$ for the same reasons \eqref{eq:lem_Zac1} vanished, so let us focus on \eqref{eq:lem_Zac3}. For $k\to\infty$ it vanishes because $Z_{ac}(k,k')$ as well as $\dkp Z_{ac}(k,k')$ are bounded (Lemma~\ref{lem:Zac}), $\hat\psi(k)$ tends to zero ($\hat\psi$ is square integrabel), and $\dkp{\hat\psi}(k)$ can only diverge slower than $k$  ($\|\dkp{\hat\psi}w\|_1<\infty$ by assumption). 
In case there is no zero resonance ($\lambda=0$) the term~\eqref{eq:lem_Zac3} evaluates to zero at $k=0$ because $|Z_{ac}(0,k')|\leq k\,z_{ac,K}(1,0)/s_K^2$ (Lemma~\ref{lem:Z0ac}),  
$\dkp{\hat\psi}(k)$ can only diverge slower than $1/k$ as $k\to0$ ($\|\dkp{\hat\psi}w\|_1<\infty$ by assumption), $|\dkp Z_{ac}(0,k')|$ is bounded (Lemma~\ref{lem:Zac}) and $\hat\psi(0)=0$ (Lemma~\ref{lem:psihat0}). 
In case there is a zero resonance ($\lambda=1$), then $S(0)=-1$ (see~\cite[page 356]{Newton1966} for the proof), hence
\begin{align}\label{eq:lem_Zac4}
  Z_{ac}(0,k')&=-\frac{i}{2k'}(e^{ik'R}+e^{-ik'R}\bar S(k')) .
\end{align}
Furthermore,
\begin{align}
  &\dkp Z_{ac}(k,k')
  =\frac{i}{4}\Bigg[
  -\frac{e^{i(k+k')R}S(k)-e^{-i(k+k')R}\bar S(k')}{(k+k')^2}
  \nonumber\\
 & +\frac{e^{i(k-k')R}\bar S(k')S(k)-e^{-i(k-k')R}}{(k-k')^2}\nonumber\\
  &+\frac{1}{k+k'}\left(iR(e^{i(k+k')R}S(k)+e^{-i(k+k')R}\bar S(k'))+e^{i(k+k')R}\dkp S(k)\right)\nonumber\\
  &-\frac{1}{k-k'}\left(iR(e^{i(k-k')R}\bar S(k')S(k)+e^{-i(k-k')R})+e^{i(k-k')R}\bar S(k')\dkp S(k)\right)
  \Bigg],
\end{align}
implies that
\begin{align}\label{eq:lem_Zac7}
  \dkp Z_{ac}(0,k')
  &=\frac{i}{4k'}(e^{ik'R}+e^{-ik'R}\bar S(k'))\dkp S(0).
\end{align}
Moreover, from $\bar f(k,r)=f(-k,r)$ and $\bar S(k)=S(-k)$ we get
\begin{align}\label{eq:lem_Zac5}
  {\hat\psi}(k)
  &=\int_0^\infty\psi(r)\bar\psi^+(k,r)\,dr
 \\
 & =-\frac{1}{2i}\int_0^\infty\psi(r)(S(-k)f(-k,r)-f(k,r))\,dr.
\end{align}
Employing $\hat\psi(0)=-i\int_0^\infty\psi(r)f(0,r)\,dr$ (Lemma~\ref{lem:psihat0}) we then obtain
\begin{align}\label{eq:lem_Zac6}
  \dkp{\hat\psi}(0)
  &=-\frac{1}{2i}\int_0^\infty\psi(r)(-\dkp{S}(0)f(k,r)-S(0)\dkp {f}(0,r)-\dkp{f}(0,r))\,dr
 \\& =\frac{1}{2}\dkp S(0)\hat\psi(0)
\end{align}
and by plugging Eqs.~\eqref{eq:lem_Zac4}, \eqref{eq:lem_Zac7}, \eqref{eq:lem_Zac5}, and \eqref{eq:lem_Zac6} into the term~\eqref{eq:lem_Zac3}, we see that \eqref{eq:lem_Zac3} evaluates to zero at $k=0.$ This finishes the proof.
\QED
\end{proof}

We are now in the position to prove Theorem~\ref{thm:main_ac}.
\begin{proof}[Theorem~\ref{thm:main_ac}]
  Let $t>0.$ We start from Eq.~\eqref{eq:StatPhase}, which reads
  \begin{align}\label{eq:StatPhase2}
  &  \|P_{ac}\1_Re^{-iHt}P_{ac}\psi\|_2^2
    \\
    &\qquad=  \frac{1}{4t^2}\int_0^\infty\left|\int_0^\infty \partial_{k}^2\left[Z_{ac}(k,k')\hat\psi(k)\frac{1}{2tk^2+i}\right]e^{-ik^2t}\,dk\right|^2\,dk'.
  \end{align}
For $A,B,C\in\R$
\begin{equation}\label{eq:el_ineq}
    (A+B+C)^2\leq3(A^2+B^2+C^2),
  \end{equation}
therefore with the shorthands
  \begin{align}
	&g_1(k,k')  \coloneqq   \ddkp Z_{ac}(k,k')\hat\psi(k)+2\dkp Z_{ac}(k,k')\dkp {\hat\psi}(k)+Z_{ac}(k,k')\ddkp {\hat\psi}(k),\\
	&g_2(k,k')  \coloneqq   \dkp Z_{ac}(k,k')\hat\psi(k)+Z_{ac}(k,k')\dkp{\hat\psi}(k),
  \end{align}
we get
  \begin{align}
 &\|P_{ac}\1_Re^{-iHt}P_{ac}\psi\|_2^2
 \\
    &\leq  \frac{3}{4t^2}\int_0^\infty\left|\int_0^\infty g_1(k,k')\frac{1}{2tk^2+i}e^{-ik^2t}\,dk\right|^2\,dk'\label{eq:thm_main1}\\
    &\quad+  \frac{3}{t^2}\int_0^\infty\left|\int_0^\infty g_2(k,k')\frac{4tk}{(2tk^2+i)^2}e^{-ik^2t}\,dk\right|^2\,dk'\label{eq:thm_main2}\\
    &\quad+  \frac{3}{4t^2}\int_0^\infty\left|\int_0^\infty Z_{ac}(k,k')\hat\psi(k)\frac{4t(i-6tk^2)}{(2tk^2+i)^3}e^{-ik^2t}\,dk\right|^2\,dk'  .
    \label{eq:thm_main3}
  \end{align}
  Note that Eq.~\eqref{eq:el_ineq} as well as $(A+B)^2\leq 2(A^2+B^2)$ will be used repeatedly throughout the proof, sometimes without mention.

  \begin{figure}
  	\centering
  	\includegraphics[width=.5\textwidth]{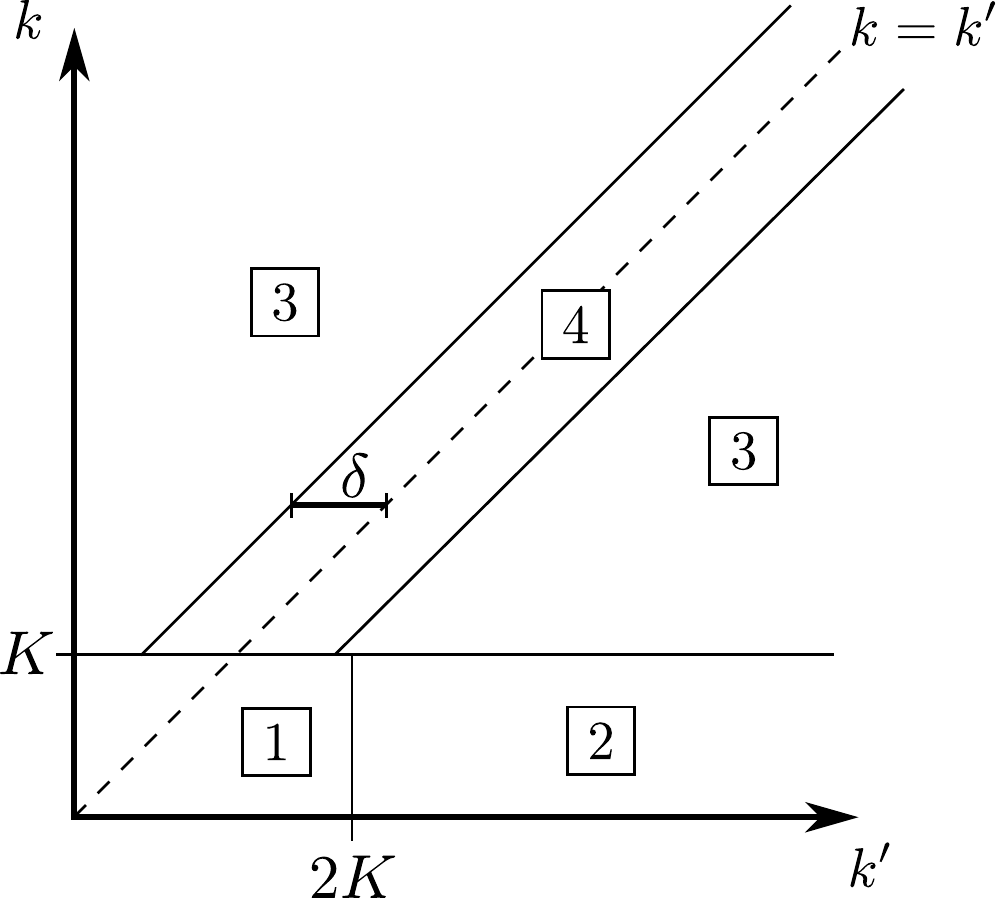}
  	\caption{Division of the $k$-$k'$-plane used to estimate $\|P_{ac}\1_Re^{-iHt}P_{ac}\psi\|_2.$}
  	\label{fig:regions}
  \end{figure}
Although we are not dealing with double integrals, it is useful to think of the \mbox{$k$-$k'$-plane} as it was the integration region, that we will divide as depicted in Fig.~\ref{fig:regions}. 
Let us explain why.
First, for a suitable  $c_K>0$, we can write by a change of variable
  \begin{align}
  	&\left[\int_0^\infty\frac{1}{2t|2tk^2+i|}\,dk\right]^2
	=\left[\frac{1}{2t^{3/2}}\int_0^\infty\frac{1}{|2k^2+i|}\,dk\right]^2
	\leq\frac{1}{2t^3},
\\
  	&\left[\int_K^\infty\frac{1}{2t|2tk^2+i|}\,dk\right]^2
	\leq\frac{1}{t^4}\left[\int_K^\infty\frac{1}{2k^2}\,dk\right]^2
	=\frac{c_K}{t^4},
  \end{align}
which suggests that the term \eqref{eq:thm_main1} contains a  $t^{-3}$ contribution that comes from the part of the integration region where $k<K$, while the $t^{-4}$ contribution comes from $k\geq K.$ Second, $Z_{ac}(k,k')$ has an apparent singularity at $k=k'$ (see Eq.~\eqref{eq:IntKernAc2}). It is apparent in the sense that by performing the limit $k\to k'$ on the right hand side of Eq.~\eqref{eq:IntKernAc2} a finite quantity depending on derivatives of the $S$-matrix is obtained. Therefore, we will use a Taylor expansion in the stripe around $k=k'$, while we use a different strategy in the remaining regions.

  Now, we split the integrals in Eqs.~(\ref{eq:thm_main1}-\ref{eq:thm_main3}) according to Fig.~\ref{fig:regions}. Let $h(k,k')$ be a placeholder for the integrands in Eqs.~(\ref{eq:thm_main1}-\ref{eq:thm_main3}) and let the indicator functions $\1_{3,k'}(k)$ and $\1_{4,k'}(k)$ be one only on the regions 3 and 4, respectively. Then we obtain
  \begin{align}
	\int_0^\infty &\left|\int_0^\infty h(k,k')\,dk\right|^2\,dk'
	\\
	&\leq  2\int_0^{2K}\left[\int_0^{K} |h(k,k')|\,dk\right]^2\,dk'\label{eq:thm_main_r1}\\
	&\quad+  2\int_{2K}^\infty \left[\int_0^{K} |h(k,k')|\,dk\right]^2\,dk'\label{eq:thm_main_r2}\\
	&\quad+  4\int_0^\infty \left[\int_0^\infty  \1_{3,k'} |h(k,k')|\,dk\right]^2\,dk'\label{eq:thm_main_r3}\\
	&\quad+  4\int_0^\infty \left[\int_0^\infty  \1_{4,k'} |h(k,k')|\,dk\right]^2\,dk'.\label{eq:thm_main_r4}
  \end{align}

  First, let us look at integral~\eqref{eq:thm_main1} in region 1. 
Using the bounds on $Z_{ac}(k,k')$ provided by Lemma~\ref{lem:Zac} and 
  \begin{equation}\label{eq:thm_main1.2}
	\left[\int_0^{K}dk \frac{1}{\sqrt{4t^2k^4+1}}\right]^2
	\leq\frac{1}{t}\left[\int_0^{\infty}dk \frac{1}{\sqrt{4k^4+1}}\right]^2\leq\frac{2}{t},
  \end{equation}
obtained by change of variable,  we get
  \begin{align}
	&\frac{3}{2t^2}\int_0^{2K}dk'\left[\int_0^{K}dk |g_1(k,k')|\frac{1}{\sqrt{4t^2k^4+1}}\right]^2\\
	&\quad\leq  \frac{6K}{t^3}\Biggl(\frac{z_{ac,K}(2,0)}{s_K^3}\|\1_{K}\hat\psi\|_\infty
	+2\frac{z_{ac,K}(1,0)}{s_K^2}\|\1_{K}\dkp{\hat\psi}\|_\infty
	\nonumber\\
	&\qquad+\frac{z_{ac,K}(0,0)}{s_K}\|\1_{K}\ddkp{\hat\psi}\|_\infty\Biggr)^2\\
	&\quad\leq  \frac{18K}{t^3s_K^6}\Bigl(z_{ac,K}^2(2,0)\|\1_{K}\hat\psi\|_\infty^2
	\nonumber\\
	&\qquad+4s_K^2z_{ac,K}^2(1,0)\|\1_{K}\dkp{\hat\psi}\|_\infty^2
	+s_K^4z_{ac,K}^2(0,0)\|\1_{K}\ddkp{\hat\psi}\|_\infty^2\Bigr).\label{eq:thm_main1.1}
  \end{align}

  In region 2 integral~\eqref{eq:thm_main1} takes the form
  \begin{align}\label{eq:thm_main1.3}
	\frac{3}{2t^2}&\int_{2K}^\infty dk'\left[\int_0^{K}dk |g_1(k,k')|\frac{1}{\sqrt{4t^2k^4+1}}\right]^2.
  \end{align}
  With the help of the bounds on $Z_{ac}(k,k')$ given in Lemma~\ref{lem:Zac} and the fact that $|k'-k|\geq|k'-K|$ in region 2, we see that
  \begin{align}
	&|g_1(k,k')|
	\nonumber\\
	&\leq \frac{z_{ac,K}(2,1)\|\1_{K}\hat\psi\|_\infty
	+2s_Kz_{ac,K}(1,1)\|\1_{K}\dkp{\hat\psi}\|_\infty
	+s_K^2z_{ac,K}(0,1)\|\1_{K}\ddkp{\hat\psi}\|_\infty}{s_K^2|k'-K|}\nonumber\\
	&+\frac{z_{ac,K}(2,2)\|\1_{K}\hat\psi\|_\infty
	+2s_Kz_{ac,K}(1,2)\|\1_{K}\dkp{\hat\psi}\|_\infty}{s_K|k'-K|^2}
	\nonumber\\
	&+\frac{z_{ac,K}(2,3)\|\1_{K}\hat\psi\|_\infty}{|k'-K|^3}.
  \end{align}
  Employing this in Eq.~\eqref{eq:thm_main1.3} together with the elementary inEq.~\eqref{eq:el_ineq} and Eq.~\eqref{eq:thm_main1.2}, we obtain
  \begin{align}
	&\frac{3}{2t^2}\int_{2K}^\infty dk'\left[\int_0^{K}dk |g_1(k,k')|\frac{1}{\sqrt{4t^2k^4+1}}\right]^2\\
	&\leq  \frac{9}{t^3}\Bigg[\frac{3}{Ks_K^4}\Bigl(z_{ac,K}^2(2,1)\|\1_{K}\hat\psi\|_\infty^2
	\nonumber\\
	&\quad\quad+4s_K^2z_{ac,K}^2(1,1)\|\1_{K}\dkp{\hat\psi}\|_\infty^2
	+s_K^4z_{ac,K}^2(0,1)\|\1_{K}\ddkp{\hat\psi}\|_\infty^2\Bigr)\nonumber\\
	&\quad+\frac{2}{3K^3s_K^2}\left(z_{ac,K}^2(2,2)\|\1_{K}\hat\psi\|_\infty^2
	+4s_K^2z_{ac,K}^2(1,2)\|\1_{K}\dkp{\hat\psi}\|_\infty^2\right)
	\\
	&\quad+\frac{z_{ac,K}^2(2,3)\|\1_{K}\hat\psi\|_\infty^2}{5K^5}\Bigg].\label{eq:thm_main1.4}
  \end{align}

  In region 3 the integral~\eqref{eq:thm_main1} reads (see Eq.~\eqref{eq:thm_main_r3})
  \begin{align}
	&\frac{3}{t^2}  \int_0^\infty dk'  \left[\int_0^\infty dk \1_{3,k'} |g_1(k,k')|\frac{1}{\sqrt{4t^2k^4+1}}\right]^2.
  \end{align}
  Now the $k$-integral ranges up to infinity and we could use $\|\hat\psi^{(n)}\|_\infty$ to handle the $\hat\psi$ dependency of $g_1.$ 
However, the suprema can get big.
In particular, as mentioned in Section~\ref{sec:Discussion}, our bounds are most relevant for wave functions describing meta-stable systems.
In this case, if $\alpha-i\beta$ is the resonance corresponding to the meta-stable state under consideration, then $\hat\psi$  will resemble a Breit Wigner function centered around $k=\alpha$, with width $2\beta$ and height $1/\sqrt{\beta}.$ 
For small $\beta$, i.e.\ for long lifetime, the supremum of such a $\hat\psi$ is big, whereas the integral over $|\hat\psi|$ around $k=\alpha$ will be of order $\sqrt\beta,$ which is small. 
Therefore, for states of physical interest, bounds depending on $L^1$-norms are more convenient than bounds involving suprema.

Since we are now in a region such that $k$ can not be zero, we can pull the time dependency out of the integral via
\begin{equation}\label{eq:thm_main1.5}
	\frac{1}{\sqrt{4t^2k^4+1}}  \leq  \frac{1}{2tk^2}
	=\frac{1}{2t}\left(1+\frac{1}{k^2}\right)w(k)
	\leq  \frac{1}{2t}\left(1+\frac{1}{K^2}\right)w(k)
\end{equation}
with $w(k)=(1+k^2)^{-1}.$ It is useful to keep the weight function $w$ as part of the integrand because $\hat\psi$ might not decay fast enough at infinity for $\|\hat\psi\|_1$ to be finite. Then
  \begin{align}
  	&\frac{3}{t^2}  \int_0^\infty dk'  \left[\int_0^\infty dk \1_{3,k'} |g_1(k,k')|\frac{1}{\sqrt{4t^2k^4+1}}\right]^2\\
	&\quad\leq  \frac{3}{4t^4}\left(1+\frac{1}{K^2}\right)^2  \int_0^\infty dk'  \left[\int_0^\infty dk \1_{3,k'} |g_1(k,k')|w(k)\right]^2\\
	&\quad\leq  \frac{27}{4t^4}\left(1+\frac{1}{K^2}\right)^2 
	\int_0^\infty dk'  \bigg(\bigg[\int_0^\infty dk \1_{3,k'} |\ddkp Z_{ac}(k,k')\hat\psi(k)|w(k)\bigg]^2\nonumber\\
	&\qquad\qquad+\bigg[\int_0^\infty dk \1_{3,k'}|2\dkp Z_{ac}(k,k')\dkp{\hat\psi}(k)|w(k)\bigg]^2\nonumber\\
	&\qquad\qquad+\bigg[\int_0^\infty dk \1_{3,k'}|Z_{ac}(k,k')\ddkp{\hat\psi}(k)|w(k)\bigg]^2\bigg),
  \end{align}
  where we have used the elementary inequality~\eqref{eq:el_ineq}. At this point we employ Jensen's inequality to pull the square into the $k$-integrals, using $|\hat\psi w|, |\dkp{\hat\psi} w|$ and $|\ddkp{\hat\psi} w|,$ respectively, as measures. Hence,
  \begin{align}
	&\frac{3}{t^2}  \int_0^\infty dk'  \left[\int_0^\infty dk \1_{3,k'} |g_1(k,k')|\frac{1}{\sqrt{4t^2k^4+1}}\right]^2
	\displaybreak[0]\\
	&\leq\frac{27}{4t^4}\left(1+\frac{1}{K^2}\right)^2
	\bigg(\|\hat\psi w\|_1\int_0^\infty dk'\int_0^\infty dk \1_{3,k'} |\ddkp Z_{ac}(k,k')|^2|\hat\psi(k)|w(k)\nonumber\\
	&\qquad\qquad+4\|\dkp{\hat\psi} w\|_1\int_0^\infty dk'\int_0^\infty dk \1_{3,k'}|\dkp Z_{ac}(k,k')|^2|\dkp{\hat\psi}(k)|w(k)\nonumber\\
	&\qquad\qquad+\|\ddkp{\hat\psi} w\|_1\int_0^\infty dk'\int_0^\infty dk \1_{3,k'}|Z_{ac}(k,k')|^2|\ddkp{\hat\psi}(k)|w(k)\bigg)
	\displaybreak[0]\\
	&=  \frac{27}{4t^4}\left(1+\frac{1}{K^2}\right)^2
	\int_{K}^\infty dk  \left[\int_0^{k-\delta} dk'+\int_{k+\delta}^\infty dk'\right]
	\nonumber\\
	&\quad\times\bigg(\|\hat\psi w\|_1|\ddkp Z_{ac}(k,k')|^2|\hat\psi(k)|w(k)\nonumber\\
	&\quad+4\|\dkp{\hat\psi} w\|_1|\dkp Z_{ac}(k,k')|^2|\dkp{\hat\psi}(k)|w(k)
	\nonumber\\
	&\quad+\|\ddkp{\hat\psi} w\|_1|Z_{ac}(k,k')|^2|\ddkp{\hat\psi}(k)|w(k)\bigg),\label{eq:thm_main1.6}
  \end{align}
  where $\delta$ is defined in Fig.~\ref{fig:regions} and will be determined later. From the bounds on $Z_{ac}(k,k')$ in Lemma~\ref{lem:Zac} we get 
  \begin{align}
	&|\ddkp Z_{ac}(k,k')|^2
	\leq  3\left(\frac{z_{ac}^2(2,1)}{s^4(k'-k)^2}
	+\frac{z_{ac}^2(2,2)}{s^2(k'-k)^4}
	+\frac{z_{ac}^2(2,3)}{(k'-k)^6}\right),\\
	&|\dkp Z_{ac}(k,k')|^2
	\leq  2\left(\frac{z_{ac}^2(1,1)}{s^2(k'-k)^2}
	+\frac{z_{ac}^2(1,2)}{(k'-k)^4}\right),\label{eq:thm_main1.7}\\
	&|Z_{ac}(k,k')|^2  \leq  \frac{z_{ac}^2(0,1)}{(k'-k)^2},\label{eq:thm_main1.8}
  \end{align}
  which when plugged into Eq.~\eqref{eq:thm_main1.6} together with $\int_{K}^\infty dk |{\hat\psi}^{(n)}(k)|w(k)\leq\|\hat\psi^{(n)} w\|_1$ yield
  \begin{align}
	\frac{3}{t^2}  &\int_0^\infty dk'  \left[\int_0^\infty dk \1_{3,k'} |g_1(k,k')|\frac{1}{\sqrt{4t^2k^4+1}}\right]^2\\
	&\leq  \frac{27}{2t^4}\left(1+\frac{1}{K^2}\right)^2
	\bigg[3\|\hat\psi w\|_1^2\left(\frac{z_{ac}^2(2,1)}{s^4\delta}
	+\frac{z_{ac}^2(2,2)}{3s^2\delta^3}
	+\frac{z_{ac}^2(2,3)}{5\delta^5}\right)+\nonumber\\
	&\quad+8\|\dkp{\hat\psi} w\|_1^2\left(\frac{z_{ac}^2(1,1)}{s^2\delta}
	+\frac{z_{ac}^2(1,2)}{3\delta^3}\right)
	+\|\ddkp{\hat\psi} w\|_1^2\frac{z_{ac}^2(0,1)}{\delta}\bigg].\label{eq:thm_main1.9}
  \end{align}

  In region 4 we  employ Jensen's inequality in the same way as we did for region 3 and we use the bounds on $Z_{ac}(k,k')$ given in Lemma~\ref{lem:Zac}:
  \begin{align}
  	&\frac{3}{t^2}\int_0^\infty dk'\left[\int_0^\infty dk \1_{4,k'} |g_1(k,k')|\frac{1}{\sqrt{4t^2k^4+1}}\right]^2\\
	&\quad\leq  \frac{27}{4t^4}\left(1+\frac{1}{K^2}\right)^2\int_{K}^\infty dk \int_{k-\delta}^{k+\delta} dk'
	\bigg(\|\hat\psi w\|_1|\ddkp Z_{ac}(k,k')|^2|\hat\psi(k)|w(k)
	\nonumber\\
	&\qquad+4\|\dkp{\hat\psi} w\|_1|\dkp Z_{ac}(k,k')|^2|\dkp{\hat\psi}(k)|w(k)
	+\|\ddkp{\hat\psi} w\|_1|Z_{ac}(k,k')|^2|\ddkp{\hat\psi}(k)|w(k)\bigg)\nonumber\\
	&\quad\leq   \frac{27}{2t^4}\left(1+\frac{1}{K^2}\right)^2\delta\bigg[\|{\hat\psi} w\|_1^2\frac{z_{ac}^2(2,0)}{s^6}
	+\|\dkp{\hat\psi} w\|_1^2\frac{z_{ac}^2(1,0)}{s^4}
	+\|\ddkp{\hat\psi} w\|_1^2\frac{z_{ac}^2(0,0)}{s^2}\bigg].\label{eq:thm_main1.10}
  \end{align}
  Summing up Eqs.~\eqref{eq:thm_main1.1}, \eqref{eq:thm_main1.4}, \eqref{eq:thm_main1.9}, and~\eqref{eq:thm_main1.10}, we obtain the following bound for the integral~\eqref{eq:thm_main1}
  \begin{align}\label{eq:thm_main1.11}
	\frac{3}{4t^2}\int_0^\infty\left|\int_0^\infty g_1(k,k')\frac{e^{-ik^2t}}{2tk^2+i}\,dk\right|^2\,dk'  \leq  t^{-3}C_1+t^{-4}C_2
  \end{align}
  with
  \begin{align}
	C_1&\leq
	\bigg(\frac{18K}{s_K^2}z_{ac,K}^2(0,0)
	+\frac{27}{K}z_{ac,K}^2(0,1)\bigg)\|\1_{K}\ddkp{\hat\psi}\|_\infty^2\nonumber\\
	&\quad+\bigg(\frac{72K}{s_K^4}z_{ac,K}^2(1,0)
	+\frac{108}{Ks_K^2}z_{ac,K}^2(1,1)
	+\frac{24}{K^3}z_{ac,K}^2(1,2)\bigg)\|\1_{K}\dkp{\hat\psi}\|_\infty^2\nonumber\\
	&\quad+\bigg(\frac{18K}{s_K^6}z_{ac,K}^2(2,0)
	+\frac{27}{Ks_K^4}z_{ac,K}^2(2,1)
	\nonumber\\
	&\qquad
	+\frac{18}{3K^3s_K^2}z_{ac,K}^2(2,2)
	+\frac{9}{5K^5}z_{ac,K}^2(2,3)\bigg)\|\1_{K}\hat\psi\|_\infty^2,
	\displaybreak[0]\\
	C_2&\leq
	\frac{27}{2}\left(1+\frac{1}{K^2}\right)^2
	\bigg(\delta\frac{z_{ac}^2(0,0)}{s^2}
	+\frac{z_{ac}^2(0,1)}{\delta}\bigg)\|\ddkp{\hat\psi}w\|_1^2\nonumber\\
	&\quad+ \frac{27}{2}\left(1+\frac{1}{K^2}\right)^2
	\bigg(\delta\frac{z_{ac}^2(1,0)}{s^4}
	+8\frac{z_{ac}^2(1,1)}{s^2\delta}
	+8\frac{z_{ac}^2(1,2)}{3\delta^3}\bigg)\|\dkp{\hat\psi}w\|_1^2\nonumber\\
	&\quad+ \frac{27}{2}\left(1+\frac{1}{K^2}\right)^2
	\bigg(\delta\frac{z_{ac}^2(2,0)}{s^6}
	+3\frac{z_{ac}^2(2,1)}{s^4\delta}
	\nonumber\\
	&\qquad
	+\frac{z_{ac}^2(2,2)}{s^2\delta^3}
	+3\frac{z_{ac}^2(2,3)}{5\delta^5}\bigg)\|{\hat\psi}w\|_1^2.
  \end{align}
  Now, $\delta=s$ is seen to be the optimal choice in the sense that $C_2$ is, to leading order, proportional to $s^{-5},$ which is the best possible $s$ dependence if $s\ll1.$

  The strategy we have followed to estimate integral~\eqref{eq:thm_main1} will be repeated for the remaining integrals. For better readability, we give the results now and the proofs later.
  \begin{align}
	&\frac{3}{t^2}\int_0^\infty\left|\int_0^\infty g_2(k,k')\frac{4tk}{(2tk^2+i)^2}e^{-ik^2t}\,dk\right|^2\,dk'
	\\
	&\qquad\leq   \lambda t^{-2}C_3+  t^{-3}C_4+  t^{-4}C_5\label{eq:thm_main2.0}
	\\
  	&\frac{3}{4t^2}\int_0^\infty\left|\int_0^\infty Z_{ac}(k,k')\hat\psi(k)\frac{4t(i-6tk^2)}{(2tk^2+i)^3}e^{-ik^2t}\,dk\right|^2\,dk'
	\\
	&\qquad
	\leq  \lambda (t^{-1}C_6+ t^{-2}C_7)+t^{-3}C_8+t^{-4}C_9\label{eq:thm_main3.0}
  \end{align}
  with
  \begin{align}
	C_3&\leq
	\frac{12\pi^2}{s_K^4}\left(K z_{ac,K}^2(1,0)
	+\frac{s_K^2}{K}z_{ac,K}^2(1,1)
	+\frac{s_K^4}{6K^3}z_{ac,K}^2(1,2)\right)|\hat\psi(0)|^2\nonumber\\
	&\quad+\frac{12\pi^2}{s_K^2}\left(K z_{ac,K}^2(0,0)
	+\frac{s_K^2}{K}z_{ac,K}^2(0,1)\right)\|\1_{K}\dkp{\hat\psi}\|_\infty^2,
	\displaybreak[0]\\
	C_4&\leq
	\frac{6\pi^2}{s_K^4}\left(K z_{ac,K}^2(1,0)
	+\frac{s_K^2}{K}z_{ac,K}^2(1,1)
	+\frac{s_K^4}{3K^3}z_{ac,K}^2(1,2)\right)\|\1_{K}\dkp{\hat\psi}\|_\infty^2
	\displaybreak[0]\\
	C_5&\leq
	24\left(1+\frac{1}{K^2}\right)^3
	\left(\delta\frac{z_{ac}^2(1,0)}{s^4}
	+4\frac{z_{ac}^2(1,1)}{s^2\delta}
	+4\frac{z_{ac}^2(1,2)}{3\delta^3}\right)\|\hat\psi w\|_1^2\nonumber\\
	&\quad +24\left(1+\frac{1}{K^2}\right)^3
	\left(\delta\frac{z_{ac}^2(0,0)}{s^2}
	+2\frac{z_{ac}^2(0,1)}{\delta}\right)\|\dkp{\hat\psi} w\|_1^2
  \end{align}
  and
  \begin{align}
	C_6&\leq\frac{81\pi^2}{s_K^2}
	\left(K z_{ac,K}^2(0,0)
	+\frac{s_K^2}{2K}z_{ac,K}^2(0,1)\right)|\hat\psi(0)|^2
	\displaybreak[0]\\
	C_7&\leq\frac{81\pi^2}{2s_K^4}
	\left(K z_{ac,K}^2(1,0)
	+\frac{s_K^2}{K}z_{ac,K}^2(1,1)
	+\frac{s_K^4}{6K^3}z_{ac,K}^2(1,2)\right)|\hat\psi(0)|^2\nonumber\\
	&\quad+\frac{81\pi^2}{2s_K^2}\left(K z_{ac,K}^2(0,0)
	+\frac{s_K^2}{K}z_{ac,K}^2(0,1)\right)\|\1_{K}\dkp{\hat\psi}\|_\infty^2
	\displaybreak[0]\\
	C_8&\leq\frac{81\pi^2}{16s_K^4}
	\left(K z_{ac,K}^2(1,0)
	+\frac{s_K^2}{K}z_{ac,K}^2(1,1)
	+\frac{s_K^4}{3K^3}z_{ac,K}^2(1,2)\right)\|\1_{K}\dkp{\hat\psi}\|_\infty^2
	\displaybreak[0]\\
	C_9&\leq54\left(1+\frac{1}{K^2}\right)^4
	\left(\delta\frac{z_{ac}^2(0,0)}{s^2}+\frac{z_{ac}^2(0,1)}{\delta}\right)\|\hat\psi w\|_1^2.
  \end{align}
  As before $\delta=s$ is seen to be the optimal choice in the sense $C_5$ and $C_9$ are, to leading order, proportional to $s^{-3}$ and $s^{-1},$ respectively. 

Summing up Eqs.~\eqref{eq:thm_main1.11},\eqref{eq:thm_main2.0}, and \eqref{eq:thm_main3.0} we get
\begin{align}
	&\|P_{ac}\1_Re^{-iHt}P_{ac}\psi\|_2^2
	\\
	&\qquad\leq \lambda C_6 t^{-1}+ \lambda (C_3+C_7)t^{-2}
		+(C_1+C_4+C_8)\,t^{-3}
		+(C_2+C_5+C_9)\,t^{-4} .
\end{align}
Calculating the constants in front of the time factors we find
\begin{align}
	c_1&\leq81\pi^2\frac{|\hat\psi(0)|^2}{s_K^2}
	\left(K z_{ac,K}^2(0,0)
	+\frac{s_K^2}{2K}z_{ac,K}^2(0,1)\right),
	\label{eq:C1Pure}
	\displaybreak[0]\\
	c_2&\leq
	53\pi^2\frac{|\hat\psi(0)|^2}{s_K^4}\left(K z_{ac,K}^2(1,0)
	+\frac{s_K^2}{K}z_{ac,K}^2(1,1)
	+\frac{s_K^4}{6K^3}z_{ac,K}^2(1,2)\right)\nonumber\\
	&\quad+53\pi^2\frac{\|\1_{K}\dkp{\hat\psi}\|_\infty^2}{s_K^2}\left(K z_{ac,K}^2(0,0)
	+\frac{s_K^2}{K}z_{ac,K}^2(0,1)\right),	
	\label{eq:C2Pure}
	\displaybreak[0]\\
	c_3&\leq
	9\frac{\|\1_{K}\ddkp{\hat\psi}\|_\infty^2}{s_K^2}\bigg(2K z_{ac,K}^2(0,0)
	+\frac{3s_K^2}{K}z_{ac,K}^2(0,1)\bigg)\nonumber\\
	&+23\pi^2\frac{\|\1_{K}\dkp{\hat\psi}\|_\infty^2}{s_K^4}\bigg(K z_{ac,K}^2(1,0)
	+\frac{s_K^2}{K}z_{ac,K}^2(1,1)
	+\frac{s_K^4}{3K^3}z_{ac,K}^2(1,2)\bigg)\nonumber\\
	& +9\frac{\|\1_{K}\hat\psi\|_\infty^2}{s_K^6}\bigg(2Kz_{ac,K}^2(2,0)
	+\frac{3s_K^2}{K}z_{ac,K}^2(2,1)
	\nonumber\\
	&\quad+\frac{2s_K^4}{3K^3}z_{ac,K}^2(2,2)
	+\frac{s_K^6}{5K^5}z_{ac,K}^2(2,3)\bigg),
	\label{eq:C3Pure}
	\displaybreak[0]\\
	c_4&\leq
	\frac{27}{2}\frac{\|\ddkp{\hat\psi}w\|_1^2}{s}\left(1+\frac{1}{K^2}\right)^2
	\bigg(z_{ac}^2(0,0)+z_{ac}^2(0,1)\bigg)\nonumber\\
	&\quad+38\frac{\|\dkp{\hat\psi}w\|_1^2}{s^3}\left(1+\frac{1}{K^2}\right)^3
	\bigg[z_{ac}^2(1,0)+8z_{ac}^2(1,1)+\frac{8}{3}z_{ac}^2(1,2)\nonumber\\
	&\qquad\qquad+s^2(z_{ac}^2(0,0)+2z_{ac}^2(0,1))\bigg]\nonumber\\
	&\quad+92\frac{\|{\hat\psi}w\|_1^2}{s^5}\left(1+\frac{1}{K^2}\right)^4
	\bigg[z_{ac}^2(2,0)+3z_{ac}^2(2,1)+z_{ac}^2(2,2)+\frac{3}{5}z_{ac}^2(2,3)\nonumber\\
	&\qquad\qquad+s^2\left(z_{ac}^2(1,0)+4z_{ac}^2(1,1)+\frac{4}{3}z_{ac}^2(1,2)\right)\nonumber\\
	&\qquad\qquad+s^4(z_{ac}^2(0,0)+z_{ac}^2(0,1))\bigg].
	\label{eq:C4Pure}
\end{align}
Using the assumption $s,s_K,K\leq1$ and straightforward simplifications we obtain the proposition.
  \QED
  \end{proof}

  \begin{proof}[of Eq.~(\ref{eq:thm_main2.0})]
We will follow similar lines as for the proof of Eq.~\eqref{eq:thm_main1.11}, with one notable difference, namely the time dependent factor in integral~\eqref{eq:thm_main2.0} is $4k/(2tk^2+i)^2,$ whereas in integral~\eqref{eq:thm_main1.11} it was $t^{-1}(2tk^2+i)^{-1}.$ This difference crucially influences the $t$-behavior coming from regions 1 and 2 because
  \begin{align}
	&\left[\int_0^\infty\left|\frac{4k}{(2tk^2+i)^2}\right|\,dk\right]^2
	=\frac{1}{t^2}\left[\int_0^\infty\frac{4k}{4k^4+1}\,dk\right]^2
	=\frac{\pi^2}{4 t^2}.
\label{eq:thm_main2.1}
\end{align}
The expected $t^{-3}$ behavior can be recovered if $g_2(k,k') \sim c k$, $c\in\C$, as $k\to0$ because
\begin{align}
	&\left[\int_0^\infty k\left|\frac{4k}{(2tk^2+i)^2}\right|\,dk\right]^2
	=\frac{1}{t^3}\left[\int_0^\infty\frac{4k^2}{4k^4+1}\,dk\right]^2
	=\frac{\pi^2}{16t^3}.\label{eq:thm_main2.2}
  \end{align}
The function $g_2$ consists only of $\hat\psi,$ $Z_{ac}$, and their derivatives, whose behavior for $k\to0$ was determined in Lemmas~\ref{lem:Zac}, \ref{lem:Z0ac} and~\ref{lem:psihat0}. These lemmas show that in region 1
  \begin{align}
  	|g_2(k,k')|
	&\leq \lambda\left(\frac{z_{ac,K}(0,0)}{s_K}\|\1_{K}\dkp{\hat\psi}\|_\infty
	+\frac{z_{ac,K}(1,0)}{s_K^2}|\hat\psi(0)|\right)
	\nonumber\\
	&\quad+2\frac{z_{ac,K}(1,0)}{s_K^2}\|\1_{K}\dkp{\hat\psi}\|_\infty k ,
\label{eq:thm_main2.3}
  \end{align}
while in region 2
    \begin{align}
  	|g_2(k,k')|
	&\leq \lambda\left(\frac{\|\1_{K}\dkp{\hat\psi}\|_\infty z_{ac,K}(0,1)s_K+|\hat\psi(0)|z_{ac,K}(1,1)}{s_K|k-k'|}
	+\frac{|\hat\psi(0)|z_{ac,K}(1,2)}{|k-k'|^2}\right)\nonumber\\
	&\quad+2\|\1_{K}\dkp{\hat\psi}\|_\infty\left(\frac{z_{ac,K}(1,1)}{s_K|k-k'|}
	+\frac{z_{ac,K}(1,2)}{|k-k'|^2}\right)k.\label{eq:thm_main2.4}
  \end{align}
We split the integral \eqref{eq:thm_main2} following Eq.~\eqref{eq:thm_main_r1}-\eqref{eq:thm_main_r4}. 
Using Eqs.~\eqref{eq:thm_main2.1}, \eqref{eq:thm_main2.2}, and \eqref{eq:thm_main2.3}, we see that the contribution to the integral~\eqref{eq:thm_main2} from region 1 satisfies 
  \begin{align}
	&\frac{6}{t^2}\int_0^{2K}dk'\left[\int_0^{K}dk |g_2(k,k')|\frac{4tk}{4t^2k^4+1}\right]^2\\
	&\leq  \lambda\frac{12\pi^2K}{t^2}
	\left(\frac{z_{ac,K}^2(0,0)}{s_K^2}\|\1_{K}\dkp{\hat\psi}\|_\infty^2
	+\frac{z_{ac,K}^2(1,0)}{s_K^4}|\hat\psi(0)|^2\right)
	\nonumber\\
	&\qquad+\frac{6\pi^2K}{t^3}\frac{z_{ac,K}^2(1,0)}{s_K^4}\|\1_{K}\dkp{\hat\psi}\|_\infty^2.\label{eq:thm_main2.5}
  \end{align}
In region 2 we use Eq.~\eqref{eq:thm_main2.4} and the fact that $k\leq K$ and $k'\geq 2K,$ to get that
  \begin{align}
	\frac{6}{t^2}&\int_{2K}^\infty dk'\left[\int_0^{K}dk |g_2(k,k')|\frac{4tk}{4t^2k^4+1}\right]^2\\
	&\leq  \lambda\frac{3\pi^2}{t^2}\int_{2K}^\infty dk'
	\Biggl(\frac{\|\1_{K}\dkp{\hat\psi}\|_\infty z_{ac,K}(0,1)s_K+|\hat\psi(0)|z_{ac,K}(1,1)}{s_K|k'-K|}
	\nonumber\\
	&\qquad+\frac{|\hat\psi(0)|z_{ac,K}(1,2)}{|k'-K|^2}\Biggr)^2
	\nonumber\\
	&\quad+\frac{3\pi^2}{t^3}\|\1_{K}\dkp{\hat\psi}\|_\infty^2  \int_{2K}^\infty dk'
	\left(\frac{z_{ac,K}(1,1)}{s_K|k'-K|}
	+\frac{z_{ac,K}(1,2)}{|k'-K|^2}\right)^2\\
	&=  \lambda\frac{6\pi^2}{t^2}
	\Biggl(\frac{1}{Ks_K^2}\left(\|\1_{K}\dkp{\hat\psi}\|_\infty z_{ac,K}(0,1)s_K
	+|\hat\psi(0)|z_{ac,K}(1,1)\right)^2
	\nonumber\\
	&\qquad	+\frac{|\hat\psi(0)|^2z_{ac,K}^2(1,2)}{3K^3}\Biggr)
	\nonumber\\
	&\quad+\frac{6\pi^2}{t^3}\|\1_{K}\dkp{\hat\psi}\|_\infty^2
	\left(\frac{z_{ac,K}^2(1,1)}{Ks_K^2}
	+\frac{z_{ac,K}^2(1,2)}{3K^3}\right).\label{eq:thm_main2.6}
  \end{align}
  Now, we turn to region 3 and observe that for $k\geq K$
  \begin{equation}\label{eq:thm_main2.7}
	\frac{4tk}{4t^2k^4+1}
	\leq  \frac{1}{tk^3}
	\leq\frac{1}{tK}\left(1+\frac{1}{k^2}\right)w(k)
	\leq\frac{1}{tK}\left(1+\frac{1}{K^2}\right)w(k).
  \end{equation}
  Hence, integral~\eqref{eq:thm_main2} in region 3 satisfies
  \begin{align}
	&\frac{12}{t^2}  \int_0^\infty dk'  \left[\int_0^\infty dk \1_{3,k'} |g_2(k,k')|\frac{4tk}{4t^2k^4+1}\right]^2\\
	&\leq\frac{12}{t^4}\frac{1}{K^2}\left(1+\frac{1}{K^2}\right)^2
	\int_0^\infty dk'  \left[\int_0^\infty dk \1_{3,k'} |g_2(k,k')|w(k)\right]^2\\
	&\leq  \frac{24}{t^4}  \left(1+\frac{1}{K^2}\right)^3  \int_0^\infty dk'
	\bigg(\left[\int_0^\infty dk \1_{3,k'} |\dkp Z_{ac}(k,k')\hat\psi(k)|w(k)\right]^2\nonumber\\
	&+\left[\int_0^\infty dk \1_{3,k'} Z_{ac}(k,k')\dkp{\hat\psi}(k)|w(k)\right]^2\bigg).
  \end{align}
  Employing Jensen's inequality with $|\hat\psi w|$ and $|\dkp{\hat\psi} w|$ as measures for the respective $k$-integrals, then yields
  \begin{align}
	&\frac{12}{t^2}  \int_0^\infty dk'  \left[\int_0^\infty dk \1_{3,k'} |g_2(k,k')|\frac{4tk}{4t^2k^4+1}\right]^2\\
	&\leq\frac{24}{t^4}\left(1+\frac{1}{K^2}\right)^3
	\int_{K}^\infty dk  \left[\int_0^{k-\delta} dk'+\int_{k+\delta}^\infty dk'\right]\nonumber\\
	&\times\bigg(\|{\hat\psi} w\|_1|\dkp Z_{ac}(k,k')|^2|{\hat\psi}(k)|w(k)
	+\|\dkp {\hat\psi} w\|_1|Z_{ac}(k,k')|^2|\dkp{\hat\psi}(k)|w(k)\bigg).\label{eq:thm_main2.8}
  \end{align}
  Plugging in the bounds for $|Z_{ac}|^2$ and $|\dkp Z_{ac}|^2$ provided by Eqs.~\eqref{eq:thm_main1.7}, and \eqref{eq:thm_main1.8}, we obtain
  \begin{align}
&	\frac{12}{t^2}  \int_0^\infty dk'  \left[\int_0^\infty dk \1_{3,k'} |g_2(k,k')|\frac{4tk}{4t^2k^4+1}\right]^2\\
	&\leq  \frac{24}{t^4}  \left(1+\frac{1}{K^2}\right)^3
	\left[4\|{\hat\psi} w\|_1^2\left(\frac{z_{ac}^2(1,1)}{s^2\delta}
	+\frac{z_{ac}^2(1,2)}{3\delta^3}\right)
	+\|\dkp {\hat\psi} w\|_1^2\frac{2z_{ac}^2(0,1)}{\delta}\right].\label{eq:thm_main2.9}
  \end{align}
  In region 4, we can again use Jensen's inequality and and the bounds for $Z_{ac}$ given in Lemma~\ref{lem:Zac}. Thereby we see that integral~\eqref{eq:thm_main2} satisfies
  \begin{align}
  &	\frac{12}{t^2}\int_0^\infty dk'\left[\int_0^\infty dk \1_{4,k'} |g_2(k,k')|\frac{4tk}{4t^2k^4+1}\right]^2\\
	&\leq  \frac{12}{t^4}\left(1+\frac{1}{K^2}\right)^3
	\int_{K}^\infty dk\int_{k-\delta}^{k+\delta} dk'\nonumber\\
	&\times\bigg[\|{\hat\psi} w\|_1|\dkp Z_{ac}(k,k')|^2|{\hat\psi}(k)|w(k)
	+\|\dkp{\hat\psi} w\|_1|Z_{ac}(k,k')|^2|\dkp{\hat\psi}(k)|w(k)\bigg]\\
	&\leq  \frac{24}{t^4}\left(1+\frac{1}{K^2}\right)^3\delta
	\left[\|{\hat\psi} w\|_1^2\frac{z_{ac}^2(1,0)}{s^4}
	+\|\dkp{\hat\psi} w\|_1^2\frac{z_{ac}^2(0,0)}{s^2}\right].\label{eq:thm_main2.10}
  \end{align}
  Summing up the contributions for all regions,  Eqs.~\eqref{eq:thm_main2.5}, \eqref{eq:thm_main2.6}, \eqref{eq:thm_main2.9}, and \eqref{eq:thm_main2.10}, we obtain the desired result given in Eq.~\eqref{eq:thm_main2.0}.
\QED
\end{proof}

\begin{proof}[of Eq.~(\ref{eq:thm_main3.0})]
  With the help of the elementary inequality
  \begin{equation}\label{eq:thm_main3.1}
	4\frac{|i-6tk^2|}{|2tk^2+i|^3}  =  \frac{4}{4t^2k^4+1}\frac{\sqrt{(6tk^2)^2+1}}{\sqrt{(2tk^2)+1}}  \leq  \frac{12}{4t^2k^4+1} ,
  \end{equation}
  we see that the time dependent factor in integral~\eqref{eq:thm_main3} satisfies
  \begin{align}
	\left[\int_0^\infty 4\frac{|i-6tk^2|}{|2tk^2+i|^3}\,dk\right]^2
	&\leq\left[\int_0^\infty \frac{12}{4t^2k^4+1}\,dk\right]^2
	\nonumber\\
	&=\frac{1}{t}\left[\int_0^\infty \frac{12}{4k^4+1}\,dk\right]^2
	=\frac{9\pi^2} {t},\label{eq:thm_main3.2}
	\displaybreak[0]\\
	\left[\int_0^\infty 4\frac{|i-6tk^2|}{|2tk^2+i|^3}k\,dk\right]^2
	&\leq\left[\int_0^\infty \frac{12k}{4t^2k^4+1}\,dk\right]^2
		\nonumber\\
	&=\frac{1}{t^2}\left[\int_0^\infty \frac{12k}{4k^4+1}\,dk\right]^2
	=\frac{9\pi^2}{4t^2},\label{eq:thm_main3.3}
	\displaybreak[0]\\
  	\left[\int_0^\infty 4\frac{|i-6tk^2|}{|2tk^2+i|^3}k^2\,dk\right]^2
	&\leq\left[\int_0^\infty \frac{12k^2}{4t^2k^4+1}\,dk\right]^2
	\nonumber\\
	&=\frac{1}{t^3}\left[\int_0^\infty \frac{12k^2}{4k^4+1}\,dk\right]^2
	=\frac{9\pi^2}{16 t^3},
	\label{eq:thm_main3.4}
  \end{align}
therefore we need $Z_{ac}(k,k')\hat\psi(k)\sim c k^2$, $c\in\C$, as $k\to0$ to obtain the expected  $t^{-3}$-decay from integral~\eqref{eq:thm_main3.0} in regions 1 and 2. The behavior of $Z_{ac}$ and $\hat\psi$ for $k\to0$ was determined in Lemmas~\ref{lem:Z0ac} and~\ref{lem:psihat0} and they imply that in region 1
  \begin{align}
	|Z_{ac}(k,k')\hat\psi(k)|
	&\leq  \lambda \frac{z_{ac,K}(0,0)}{s_K}|\hat\psi(0)|
	\nonumber\\
	&+\lambda\left(\frac{z_{ac,K}(0,0)}{s_K}\|\1_{K}\dkp{\hat\psi}\|_\infty+\frac{z_{ac,K}(1,0)}{s_K^2}|\hat\psi(0)|\right)k\nonumber\\
	&+\frac{z_{ac,K}(1,0)}{s_K^2}\|\1_{K}\dkp{\hat\psi}\|_\infty k^2\label{eq:thm_main3.5}
  \end{align}
and in region 2 we get the following bound by using the fact that $k'\leq2K$ whereas $k\geq K$
  \begin{align}
	&|Z_{ac}(k,k')\hat\psi(k)|
	\leq  \lambda\frac{z_{ac,K}(0,1)}{|k'-K|}|\hat\psi(0)|\nonumber\\
	&\quad+\lambda\left(\frac{z_{ac,K}(0,1)s_K\|\1_{K}\dkp{\hat\psi}\|_\infty
	+z_{ac,K}(1,1)|\hat\psi(0)|}{s_K|k'-K|}
	+\frac{z_{ac,K}(1,2)|\hat\psi(0)|}{|k'-K|^2}\right)k\nonumber\\
	&\quad+\left(\frac{z_{ac,K}(1,1)}{s_K|k'-K|}
	+\frac{z_{ac,K}(1,2)}{|k'-K|^2}\right)\|\1_{K}\dkp{\hat\psi}\|_\infty k^2.\label{eq:thm_main3.6}
  \end{align}
  As before we will now follow the strategy used in the proof of Eq.~\eqref{eq:thm_main2.0}. Using Eqs.~\eqref{eq:thm_main3.1}-\eqref{eq:thm_main3.5}, we see that in region~1 integral~\eqref{eq:thm_main3} satisfies
  \begin{align}
	\frac{3}{2t^2}&\int_0^{2K}dk'\left[\int_0^{K}dk |Z_{ac}(k,k'){\hat\psi}(k)|4t\frac{|i-6tk^2|}{|2tk^2+i|^3}\right]^2\\
	&\leq  \frac{81\pi^2K}{t}
	\Bigg[\lambda \frac{z_{ac,K}^2(0,0)}{s_K^2}|\hat\psi(0)|^2
	\nonumber\\
	&+\frac{\lambda}{2ts_K^4}(z_{ac,K}^2(0,0)s_K^2\|\1_{K}\dkp{\hat\psi}\|_\infty^2
	+z_{ac,K}^2(1,0)|\hat\psi(0)|^2)\nonumber\\
	&+\frac{z_{ac,K}^2(1,0)\|\1_{K}\dkp{\hat\psi}\|_\infty^2}{16t^2s_K^4}\Bigg].\label{eq:thm_main3.7}
  \end{align}
Similarly, with the help of Eq.~\eqref{eq:thm_main3.6}, we get for integral~\eqref{eq:thm_main3} in region 2 that
  \begin{align}
	\frac{3}{2t^2}&\int_{2K}^\infty dk'\left[\int_0^{K}dk |Z_{ac}(k,k'){\hat\psi}(k)|4t\frac{|i-6tk^2|}{|2tk^2+i|^3}\right]^2\\
	&\leq  \frac{81\pi^2}{2t}  \int_{2K}^\infty dk'
	\bigg[\lambda\frac{z_{ac,K}^2(0,1)}{|k'-K|^2}|\hat\psi(0)|^2\nonumber\\
	&\quad+\frac{\lambda}{4t}
	\Biggl(\frac{z_{ac,K}(0,1)s_K\|\1_{K}\dkp{\hat\psi}\|_\infty
	+z_{ac,K}(1,1)|\hat\psi(0)|}{s_K|k'-K|}
	\nonumber\\
	&\quad+\frac{z_{ac,K}(1,2)|\hat\psi(0)|}{|k'-K|^2}\Biggr)^2
	+\frac{1}{16t^2}
	\left(\frac{z_{ac,K}(1,1)}{s_K|k'-K|}
	+\frac{z_{ac,K}(1,2)}{|k'-K|^2}\right)^2\|\1_{K}\dkp{\hat\psi}\|_\infty^2\bigg]\\
	&=\frac{81\pi^2}{2t}
	\bigg[\lambda\frac{1}{K}z_{ac,K}^2(0,1)|\hat\psi(0)|^2\nonumber\\
	&\quad+\frac{\lambda}{2t}\Biggl(\frac{2}{Ks_K^2}(z_{ac,K}^2(0,1)s_K^2\|\1_{K}\dkp{\hat\psi}\|_\infty^2
	+z_{ac,K}^2(1,1)|\hat\psi(0)|^2)
	\nonumber\\
	&\quad+\frac{z_{ac,K}^2(1,2)|\hat\psi(0)|^2}{3K^3}\Biggr)
	+\frac{1}{8t^2}\left(\frac{z_{ac,K}^2(1,1)}{Ks_K^2}
	+\frac{z_{ac,K}^2(1,2)}{3K^3}\right)\|\1_{K}\dkp{\hat\psi}\|_\infty^2\bigg].\label{eq:thm_main3.8}
  \end{align}
  Now, observe that due to Eq.~\eqref{eq:thm_main3.1}  and~\eqref{eq:thm_main1.5}
  \begin{equation}\label{eq:thm_main3.9}
	4\frac{|i-6tk^2|}{|2tk^2+i|^3}
	\leq  \frac{3}{t^2}\left(1+\frac{1}{K^2}\right)^2w(k)^2
	\leq  \frac{3}{t^2}\left(1+\frac{1}{K^2}\right)^2w(k).
  \end{equation}
  Employing Eq.~\eqref{eq:thm_main3.9}, Jensen's inequality with $|{\hat\psi} w|$ as measure and the bound for $|Z_{ac}|$ given in Eq.~\eqref{eq:thm_main1.8} we then see that in region 3
  \begin{align}
	&\frac{3}{t^2}  \int_0^\infty dk'  \left[\int_0^\infty dk \1_{3,k'} |Z_{ac}(k,k'){\hat\psi}(k)|4t\frac{|i-6tk^2|}{|2tk^2+i|^3}\right]^2\\
	&\quad\leq\frac{27}{t^4}\left(1+\frac{1}{K^2}\right)^4\|{\hat\psi} w\|_1
	\nonumber\\
	&\quad\times\int_{K}^\infty dk  \left[\int_0^{k-\delta} dk'+\int_{k+\delta}^\infty dk'\right] |Z_{ac}(k,k')|^2|{\hat\psi}(k)|w(k)\\
	&\quad\leq\frac{54}{\delta t^4}\left(1+\frac{1}{K^2}\right)^4z_{ac}^2(0,1)\|{\hat\psi} w\|_1^2\label{eq:thm_main3.10}
  \end{align}
  with $\delta$ to be determined later.
  Again using Jensen's inequality with $|{\hat\psi}w|$ as measure and the bounds for $Z_{ac}$ provided by Lemma~\ref{lem:Zac}, it becomes clear that in region 4 integral~\eqref{eq:thm_main3} satisfies
  \begin{align}
  	&\frac{3}{t^2}\int_0^\infty dk'\left[\int_0^\infty dk \1_{4,k'} |Z_{ac}(k,k'){\hat\psi}(k)|4t\frac{|i-6tk^2|}{|2tk^2+i|^3}\right]^2\\
	&\quad\leq  \frac{27}{t^4}\left(1+\frac{1}{K^2}\right)^4\|{\hat\psi} w\|_1
	\nonumber\\
	&\quad\times\int_{K}^\infty dk\int_{k-\delta}^{k+\delta} dk' |Z_{ac}(k,k')|^2|{\hat\psi}(k)|w(k)\\
	&\quad\leq  \frac{54\delta}{t^4}\frac{z_{ac}^2(0,0)}{s^2}\left(1+\frac{1}{K^2}\right)^4\|{\hat\psi} w\|_1^2.\label{eq:thm_main3.11}
  \end{align}
Summing up the contributions from all regions, Eqs.~\eqref{eq:thm_main3.7}, \eqref{eq:thm_main3.8}, \eqref{eq:thm_main3.10}, and \eqref{eq:thm_main3.11}, we obtain the desired result in Eq.~\eqref{eq:thm_main3.0}.
  \QED
\end{proof}

\vfill
\newpage
\section{Proof of Theorem~\ref{thm:main_e}}\label{sec:proof_e}

We proceed in the same way as in the proof of Theorem~\ref{thm:main_ac}, which was given in the previous section. First we prove the analogue of Lemma~\ref{lem:IntKernAc}.

\begin{lemma}\label{lem:ZeIntKern}
  Let $R\geq R_V$ and $\psi\in\mathcal D(H),$ then
  \begin{align}
	 &\|P_{e}\1_Re^{-iHt}P_{ac}\psi\|_2^2\leq\sum_{n=0}^{N-1}\left|\int_0^\infty Z_{e}(k,n)\hat\psi(k)e^{-ik^2t}dk\right|^2\label{eq:IntKernE}
  \end{align}
  with
  \begin{align}
	Z_{e}(k,n)&
	\coloneqq\sqrt{\frac{\eta_n}{2}}\left[S(k)\frac{e^{ikR}}{k+i\eta_n}+\frac{e^{-ikR}}{k-i\eta_n}\right].\label{eq:IntKernE2}
  \end{align}
\end{lemma}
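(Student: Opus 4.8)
\textbf{Proof proposal for Lemma~\ref{lem:ZeIntKern}.}

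The plan is to mirror the structure of the proof of Lemma~\ref{lem:IntKernAc}, replacing the projection $P_{ac}$ onto the continuous subspace by the projection $P_e$ onto the span of the eigenvectors of $H$. Since $V$ has no angular momentum barrier and compact support, the eigenvalues of $H$ are $-\eta_n^2$ for $n=0,\dots,N-1$, with normalized eigenfunctions that, for $r\geq R_V$, are proportional to $e^{-\eta_n r}$ (they are the bound states identified with the zeros $i\eta_n$ of the Jost function, cf.\ Definition~\ref{def:ZeroResVirtualStates}). First I would write $P_e=\sum_{n=0}^{N-1}\ket{\phi_n}\bra{\phi_n}$ with $\phi_n$ the normalized eigenfunctions, so that
\begin{align}
\|P_e\1_Re^{-iHt}P_{ac}\psi\|_2^2
=\sum_{n=0}^{N-1}\left|\langle\phi_n,\1_R e^{-iHt}P_{ac}\psi\rangle\right|^2
=\sum_{n=0}^{N-1}\left|\int_0^R \bar\phi_n(r)\,(e^{-iHt}P_{ac}\psi)(r)\,dr\right|^2.
\end{align}
Then I would insert the generalized eigenfunction expansion $(e^{-iHt}P_{ac}\psi)(r)=\int_0^\infty\hat\psi(k)\psi^+(k,r)e^{-ik^2t}\,dk$ and exchange the order of integration, obtaining $\|P_e\1_Re^{-iHt}P_{ac}\psi\|_2^2=\sum_n\left|\int_0^\infty \hat\psi(k)e^{-ik^2t}\,\widetilde Z_e(k,n)\,dk\right|^2$ with $\widetilde Z_e(k,n)\coloneqq\int_0^R\bar\phi_n(r)\psi^+(k,r)\,dr$.

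The key computational step is to evaluate $\widetilde Z_e(k,n)$ in closed form using the same Wronskian trick as in Lemma~\ref{lem:IntKernAc}. Since $\phi_n$ solves $(-\partial_r^2+V)\phi_n=-\eta_n^2\phi_n$ and $\psi^+(k,\cdot)$ solves $(-\partial_r^2+V)\psi^+=k^2\psi^+$, one has $\frac{d}{dr}W(\phi_n(r),\psi^+(k,r))=(k^2+\eta_n^2)\phi_n(r)\psi^+(k,r)$; integrating from $0$ to $R$ and using $\phi_n(0)=0=\psi^+(k,0)$ gives $\widetilde Z_e(k,n)=W(\phi_n(R),\psi^+(k,R))/(k^2+\eta_n^2)$ (here $\phi_n$ is real so $\bar\phi_n=\phi_n$). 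Now I would evaluate the Wronskian at $r=R\geq R_V$ using $\psi^+(k,R)=\tfrac{1}{2i}(S(k)e^{ikR}-e^{-ikR})$ (from Eqs.~\eqref{eq:PsiPlus} and~\eqref{eq:phiRBiggerRs}) together with $\phi_n(R)=c_n e^{-\eta_n R}$, $\dr\phi_n(R)=-\eta_n c_n e^{-\eta_n R}$, where $c_n>0$ is the normalization constant. A short calculation of $W(\phi_n,\psi^+)$ then produces a factor $(k^2+\eta_n^2)$ that cancels the denominator, leaving $\widetilde Z_e(k,n)$ proportional to $c_n e^{-\eta_n R}\big[S(k)\frac{e^{ikR}}{k+i\eta_n}+\frac{e^{-ikR}}{k-i\eta_n}\big]$. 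Matching this against the stated $Z_e(k,n)$ shows that $Z_e(k,n)$ is exactly $\widetilde Z_e(k,n)$ up to replacing $c_n e^{-\eta_n R}$ by $\sqrt{\eta_n/2}$; since $\|\phi_n\|_2=1$ forces $c_n e^{-\eta_n R}\leq\sqrt{2\eta_n}$ by the elementary bound $\int_R^\infty c_n^2 e^{-2\eta_n r}\,dr\leq 1$ hence $c_n^2 e^{-2\eta_n R}\leq 2\eta_n$, we get $|\widetilde Z_e(k,n)|\leq|Z_e(k,n)|$, which is why the lemma is stated as an inequality rather than an identity.

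The main obstacle I anticipate is not conceptual but bookkeeping: one must be careful that the eigenfunctions' tails outside $[0,R_V]$ are handled correctly — the $\1_R$ truncation at $R\geq R_V$ means the Wronskian is evaluated precisely in the region where both $\phi_n$ and $\psi^+$ are explicitly known, which is exactly what makes the Wronskian identity usable without any knowledge of the functions for $r<R_V$. A secondary point to get right is the sign and the factor of $2i$ in $\psi^+(k,R)$ and the relative signs of the two terms in the Wronskian, so that the $(k^2+\eta_n^2)=(k+i\eta_n)(k-i\eta_n)$ cancellation comes out cleanly and reproduces the two-term structure of $Z_e(k,n)$ in Eq.~\eqref{eq:IntKernE2}. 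Once $Z_e(k,n)$ is in hand, Eq.~\eqref{eq:IntKernE} follows immediately from the bound $|\widetilde Z_e|\leq|Z_e|$ and the expansion above.
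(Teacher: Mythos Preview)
Your approach is essentially the paper's: expand $P_e$ in the bound-state eigenfunctions, use the Wronskian identity to evaluate $\int_0^R\phi_n\psi^+\,dr$ at $r=R\geq R_V$, and control the normalization via the tail bound $\int_R^\infty|\phi_n|^2\leq\|\phi_n\|_2^2$. One small correction to your last step: the inequality \eqref{eq:IntKernE} does not follow from the pointwise bound $|\widetilde Z_e|\leq|Z_e|$ alone (a pointwise bound on a complex integrand does not control the modulus of its integral), but it does follow from the stronger fact you actually established, namely $\widetilde Z_e(k,n)=\lambda_n Z_e(k,n)$ with a $k$-independent constant $|\lambda_n|=c_n e^{-\eta_n R}/\sqrt{2\eta_n}\leq1$, so that $\bigl|\int\widetilde Z_e\hat\psi e^{-ik^2t}\,dk\bigr|^2=|\lambda_n|^2\bigl|\int Z_e\hat\psi e^{-ik^2t}\,dk\bigr|^2$.
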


\begin{proof}
  Let $\phi_n$ denote the bound states. Then
  \begin{align}
	&\|P_{e}\1_Re^{-iHt}P_{ac}\psi\|_2^2
	\\
	&=\sum_{n=0}^{N-1}\frac{1}{\|\phi_n\|_2^2}
		\left\langle   \1_R  e^{-iHt} P_{ac}  \psi ,  \phi_n \right\rangle
		\left\langle   \phi_n  ,  \1_R  e^{-iHt} P_{ac}  \psi  \right\rangle
	\\
	&=\sum_{n=0}^{N-1}\frac{1}{\|\phi_n\|_2^2}\left|\int_0^\infty dk\,e^{-ik^2t}\hat\psi(k)\int_0^\infty dr\1_R\bar\phi_n(r)\psi^+(k,r)\right|^2.
  \end{align}
  Observing
  \begin{align}
	\frac{d}{dr}W(\bar\phi_n(r),\psi^+(k,r))=((i\eta_n)^2-k^2)\bar\phi_n(r)\psi^+(k',r).
  \end{align}
  and using $\psi^+(k,0)=0=\phi_n(0),$ we get upon integration
  \begin{align}
	\int_0^\infty dr\,\1_R(r)\bar \phi_n(r)\psi^+(k,r)=\frac{W(\bar\phi_n(R),\psi^+(k,R))}{(i\eta_n)^2-k^2}.
  \end{align}
  This and the fact that $\|\phi_n\|_2\geq\|\1_{[R,\infty)}\phi_n\|_2$ implies
  \begin{align}
  	&\|P_{e}\1_Re^{-iHt}P_{ac}\psi\|_2^2
	\\
	&\leq\sum_{n=0}^{N-1}\frac{1}{\|\1_{[R,\infty)}\phi_n\|_2^2}\left|\int_0^\infty dk\,e^{-ik^2t}\hat\psi(k)\frac{W(\bar\phi_n(R),\psi^+(k,R))}{(i\eta_n)^2-k^2}\right|^2,\label{eq:lem_IntKern2}
  \end{align}
  To calculate $\|\1_{[R,\infty)}\phi_n\|_2$ observe that $\phi_n(r)=e^{-\eta_nr}$ for $r\geq R_V,$ which yields
  \begin{align}
	\|\1_{[R,\infty)}\phi_n\|_2^2=\int_R^\infty e^{-2\eta_nr}\,dr=\frac{e^{-2\eta_nR}}{2\eta_n}.
  \end{align}
  Using this, $\phi_n(R)=e^{-\eta_nR}$ and $\psi^+(k,R)=\tfrac{1}{2i}(S(k)e^{ikR}-e^{-ikR})$ we calculate
  \begin{align}
	&\frac{1}{\|\1_{[R,\infty)}\phi_n\|_2}\frac{W(\bar\phi_n(R),\psi^+(k,R))}{(i\eta_n)^2-k^2}
	\\
	&\qquad=\sqrt{\frac{\eta_n}{2}}\frac{(k-i\eta_n)S(k)e^{ikR}+(k+i\eta_n)e^{-ikR}}{(i\eta_n)^2-k^2}\\
	&\qquad=-\sqrt{\frac{\eta_n}{2}}\left[S(k)\frac{e^{ikR}}{k+i\eta_n}+\frac{e^{-ikR}}{k-i\eta_n}\right],
  \end{align}
  which when plugged into Eq.~\eqref{eq:lem_IntKern2} yields Eq.~\eqref{eq:IntKernE}.
  \QED
\end{proof}

Next we want to show that the boundary terms due to partial integration in the stationary phase argument vanish, but for this we need more knowledge about how $Z_e(k,n)$ behaves for $k\to0$ and $k\to\infty.$ This is the purpose of the following lemmas.

\begin{lemma}\label{lem:Ze}
  Let $K>0$ and $R\geq R_V.$ For $k\in[0,K),$
  \begin{align}
	|Z_e(k,n)|
	&\leq  \sqrt{\frac{2}{\eta_0}}
	\eqqcolon \frac{z_{e,K}(0)}{\eta_0^{1/2}},\label{eq:lem_Ze1}\\
	|\dkp Z_e(k,n)|
	&\leq \frac{1}{\sqrt 2s_K\eta_0^{3/2}}
	\bigg[2s_K+(2Rs_K+C_{1,K})\eta_0\bigg]
	\eqqcolon \frac{z_{e,K}(1)}{s_K\eta_0^{3/2}},\label{eq:lem_Ze2}\\
	|\ddkp Z_e(k,n)|
	&\leq  \frac{1}{\sqrt 2s_K^2\eta_0^{5/2}}
	\left(C_{2,K}\eta_0^2+2\eta_0s_K\left(C_{1,K}+Rs_K\right)\left(R\eta_0+2\right)+4s_K^2\right)
	\\
	&\eqqcolon \frac{z_{e,K}(2)}{s_K^2\eta_0^{5/2}}.\label{eq:lem_Ze3}
  \end{align}
For $k\in[0,\infty)$ we have the same bounds with the index $K$ omitted on the right hand side.
\end{lemma}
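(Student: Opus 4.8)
The plan is to bound $Z_e(k,n)$ and its first two $k$-derivatives directly from the explicit formula \eqref{eq:IntKernE2} by the triangle inequality, using $|S(k)|=1$, the bounds on $|\dkp S|$ and $|\ddkp S|$ supplied by Theorems~\ref{th:SBoundsK} and~\ref{th:GlobalSBounds}, and the elementary estimates $|k\pm i\eta_n|\geq\eta_n\geq\eta_0$, which hold for every real $k\geq0$ since the $\eta_n$ are numbered with increasing modulus. This is a computational lemma with no conceptual difficulty; the work is bookkeeping.

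First I would dispatch the zeroth-order bound \eqref{eq:lem_Ze1}: since $|S(k)e^{ikR}|=1$ and $|e^{-ikR}|=1$, the triangle inequality gives $|Z_e(k,n)|\leq\sqrt{\eta_n/2}\,(|k+i\eta_n|^{-1}+|k-i\eta_n|^{-1})\leq\sqrt{\eta_n/2}\cdot 2/\eta_n=\sqrt{2/\eta_n}\leq\sqrt{2/\eta_0}$. Next, for the derivatives I would introduce the shorthand $A_\pm(k)\coloneqq e^{\pm ikR}/(k\pm i\eta_n)$, for which $|A_\pm|\leq\eta_n^{-1}$, $|\dkp A_\pm|\leq R\eta_n^{-1}+\eta_n^{-2}$, and $|\ddkp A_\pm|\leq R^2\eta_n^{-1}+2R\eta_n^{-2}+2\eta_n^{-3}$, all immediate from $|k\pm i\eta_n|\geq\eta_n$. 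Then $\dkp Z_e=\sqrt{\eta_n/2}\,(\dkp S\,A_++S\,\dkp A_++\dkp A_-)$ and $\ddkp Z_e=\sqrt{\eta_n/2}\,(\ddkp S\,A_++2\dkp S\,\dkp A_++S\,\ddkp A_++\ddkp A_-)$; inserting the above estimates together with $\|\1_K\dkp S\|_\infty\leq C_{1,K}/s_K$ and $\|\1_K\ddkp S\|_\infty\leq C_{2,K}/s_K$ (read off from Theorem~\ref{th:SBoundsK} via $C_{2,K}/s_K^2$) and replacing every $\eta_n$ by $\eta_0$, one collects the powers of $\eta_0$ and $s_K$ to obtain \eqref{eq:lem_Ze2} and \eqref{eq:lem_Ze3}. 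Some of the numerical coefficients produced this way come out no larger than those displayed, which is harmless since only upper bounds are claimed.

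Finally, the $k\in[0,\infty)$ version with the subscript $K$ suppressed follows from the identical computation, with the local $S$-matrix bounds of Theorem~\ref{th:SBoundsK} replaced by the global ones $\|\dkp S\|_\infty\leq C_1/s$ and $\|\ddkp S\|_\infty\leq C_2/s^2$ of Theorem~\ref{th:GlobalSBounds}. The only points deserving a moment's care are that the inequalities $|k\pm i\eta_n|\geq\eta_n$ require no restriction on $K$ (the pole of $A_\pm$ never lies on the positive real axis), and that one must systematically majorize $1/\eta_n$ by $1/\eta_0$ everywhere; beyond this there is no obstacle, so I do not anticipate any genuinely hard step.
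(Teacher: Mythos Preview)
Your approach is correct and essentially identical to the paper's: both differentiate the explicit formula \eqref{eq:IntKernE2}, apply the triangle inequality with $|S|=1$, $|k\pm i\eta_n|\geq\eta_n\geq\eta_0$, and the $S$-matrix derivative bounds of Theorems~\ref{th:SBoundsK} and~\ref{th:GlobalSBounds}; your shorthand $A_\pm$ is merely a notational repackaging of what the paper writes out term by term. One small slip: you wrote $\|\1_K\ddkp S\|_\infty\leq C_{2,K}/s_K$ where you meant $C_{2,K}/s_K^2$ (your parenthetical shows you know this), and indeed the $s_K^2$ is what feeds into the final bound \eqref{eq:lem_Ze3}.
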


\begin{proof}
Let $k\in[0,K).$ Equation~\eqref{eq:IntKernE2} for $Z_{e}$ then immediately gives
\begin{align}
  |Z_e(k,n)|\leq\sqrt{\frac{2}{\eta_n}},
\end{align}
from which we obtain Eq.~\eqref{eq:lem_Ze1} by using the fact that $\eta_n\geq\eta_0.$ Now, due to the bounds on the derivatives of $S$ given in Theorem~\ref{th:SBoundsK},
\begin{align}
  |\dkp Z_e(k,n)|
  &=\sqrt{\frac{\eta_n}{2}}
  \bigg|(\dkp S(k)+iRS(k))\frac{e^{ikR}}{k+i\eta_n}-iR\frac{e^{-ikR}}{k-i\eta_n}\nonumber\\
  &\quad-\frac{e^{-ikR}}{(k-i\eta_n)^2}-S(k)\frac{e^{ikR}}{(k+i\eta_n)^2}\bigg|\\
  &\leq\frac{1}{\sqrt 2s_K}\bigg[\frac{2s_K}{\eta_n^{3/2}}+(2Rs_K+C_{1,K})\frac{1}{\eta_n^{1/2}}\bigg],
\end{align}
and this implies Eq.~\eqref{eq:lem_Ze2} again using the fact that $\eta_n\geq\eta_0.$ Similarly, we have due to Theorem~\ref{th:SBoundsK}
\begin{align}
  |\ddkp Z_e(k,n)|
  &=\sqrt{\frac{\eta_n}{2}}\bigg|
  -R^2 \bigg(S(k)\frac{e^{ikR}}{k+i\eta_n}+\frac{e^{-ikR}}{k-i\eta_n}\bigg)+2iR\dkp S(k)\frac{e^{ikR}}{k+i\eta_n}\nonumber\\
  &\quad-2iRS(k)\frac{e^{ikR}}{(k+i\eta_n)^2}+2iR\frac{e^{-ikR}}{(k-i\eta_n)^2}-2\dkp S(k)\frac{e^{ikR}}{(k+i\eta_n)^2}\nonumber\\
  &\quad+2S(k)\frac{e^{ikR}}{(k+i\eta_n)^3}+2\frac{e^{-ikR}}{(k-i\eta_n)^3}+\ddkp S(k)\frac{e^{ikR}}{k+i\eta_n}\bigg|\\
  &\leq\frac{1}{\sqrt{2}}\bigg[
  \frac{2R^2}{\eta_n^{\frac{1}{2}}}+2R\frac{C_{1,K}}{s_K\eta_n^{\frac{1}{2}}}+4R\frac{1}{\eta_n^{\frac{3}{2}}}+2\frac{C_{1,K}}{s_K\eta_n^{\frac{3}{2}}}+\frac{4}{\eta_n^{\frac{5}{2}}}+\frac{C_{2,K}}{s_K^2\eta_n^{\frac{3}{2}}}\bigg],
\end{align}
from which we get Eq.~\eqref{eq:lem_Ze3} with the help of $\eta_n\geq\eta_0.$

Let $k\in[0,\infty).$ In this case the proof is exactly the same with the only difference that we use the $S$-matrix bounds provided by Theorem~\ref{th:GlobalSBounds} rather than those in Theorem~\ref{th:SBoundsK}. In effect this amounts to omitting the index $K$ in the bounds \eqref{eq:lem_Ze1}-\eqref{eq:lem_Ze3}.
\QED
\end{proof}

\begin{lemma}\label{lem:Z0e}
  Let $R\geq R_V$ and $K>0$ be finite. Then for $k\in[0,K],$
  \begin{align}
 	&\left|Z_{e}(k,n)\right|  \leq  \lambda\frac{z_{e,K}(0)}{\eta_0^{1/2}}+\frac{z_{e,K}(1)}{s_K\eta_0^{3/2}}k\label{eq:lem_Z05}.
  \end{align}
\end{lemma}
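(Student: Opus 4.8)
The statement to prove is Lemma~\ref{lem:Z0e}: a bound on $|Z_e(k,n)|$ for $k \in [0,K]$ that decomposes into a term proportional to $\lambda$ (present only in the zero-resonance case) plus a term linear in $k$. This is the exact analogue, for the bound-state kernel $Z_e$, of Lemma~\ref{lem:Z0ac} for the absolutely continuous kernel $Z_{ac}$, so I would mimic that proof essentially verbatim.

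The plan is to write, for $k \in [0,K]$,
\begin{align}
  |Z_e(k,n)|
  &= \left| Z_e(0,n) + \int_0^k \dkp Z_e(\tau,n)\,d\tau \right|
  \leq |Z_e(0,n)| + \int_0^k |\dkp Z_e(\tau,n)|\,d\tau.
\end{align}
The second term is handled directly by Lemma~\ref{lem:Ze}, Eq.~\eqref{eq:lem_Ze2}, which gives $|\dkp Z_e(\tau,n)| \leq z_{e,K}(1)/(s_K\eta_0^{3/2})$ for all $\tau \in [0,K)$; integrating over $[0,k]$ contributes $z_{e,K}(1)\, k/(s_K\eta_0^{3/2})$, which is the linear-in-$k$ term in the claimed bound. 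So the only real work is the first term, $|Z_e(0,n)|$, and the point is to show it is $O(\lambda)$, i.e.\ that it vanishes in the absence of a zero resonance.

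For that, evaluate Eq.~\eqref{eq:IntKernE2} at $k=0$:
\begin{align}
  Z_e(0,n) = \sqrt{\frac{\eta_n}{2}}\left[ S(0)\frac{1}{i\eta_n} + \frac{1}{-i\eta_n} \right]
  = \sqrt{\frac{\eta_n}{2}}\,\frac{S(0)-1}{i\eta_n}
  = \frac{S(0)-1}{i\sqrt{2\eta_n}}.
\end{align}
Now invoke the fact used repeatedly in the paper (from \cite[page 356]{Newton1966}) that $S(0) = -1$ when there is a zero resonance ($\lambda = 1$) and $S(0) = 1$ otherwise ($\lambda = 0$). Hence $|S(0)-1| = 2\lambda$, giving $|Z_e(0,n)| \leq \lambda\sqrt{2/\eta_n} \leq \lambda\sqrt{2/\eta_0} = \lambda\, z_{e,K}(0)/\eta_0^{1/2}$, where I use $\eta_n \geq \eta_0$ (the bound states are numbered with increasing modulus, per Definition~\ref{def:SKAndKTildeAndS}). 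Combining the two estimates yields exactly Eq.~\eqref{eq:lem_Z05}.

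There is no genuine obstacle here; the lemma is a routine corollary of Lemma~\ref{lem:Ze} together with the explicit value of $S(0)$. The only thing to be careful about is getting the constant right: one must check that the $z_{e,K}(0)$ and $z_{e,K}(1)$ appearing in Definition~\ref{def:smallz} are precisely the constants produced by this computation (they are, by inspection of Eqs.~\eqref{eq:lem_Ze1}--\eqref{eq:lem_Ze2}), and that the $\lambda^2 = \lambda$ identity is silently used so that the $\lambda$ multiplying $|Z_e(0,n)|$ survives cleanly — just as in the proof of Lemma~\ref{lem:psihat0}. For $k \in [0,\infty)$ the identical argument with the index $K$ omitted gives the corresponding unindexed bound, but the statement as written only requires the $k \in [0,K]$ case.
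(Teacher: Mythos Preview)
Your proposal is correct and follows essentially the same argument as the paper: the fundamental theorem of calculus decomposition $|Z_e(k,n)| \leq |Z_e(0,n)| + \int_0^k|\dkp Z_e(\tau,n)|\,d\tau$, the explicit evaluation of $Z_e(0,n)$ via Eq.~\eqref{eq:IntKernE2} and the value of $S(0)$, and the bound on $|\dkp Z_e|$ from Lemma~\ref{lem:Ze}.
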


\begin{proof}
  Clearly
  \begin{align}
	|Z_e(k,n)|
	&=\left|Z_e(0,n)+\int_0^k\dkp Z(k',n)\,dk'\right|
	\\
	&\leq|Z_e(0,n)|+\int_0^k|\dkp Z(k',n)|\,dk'.
	\label{eq:lem_Z0e1}
  \end{align}
  From Eq.~\eqref{eq:IntKernE2} for $Z_e$ and the fact that $S(0)=\mp 1$ for $\lambda=1$ and $0$ respectively (see~\cite[page 356]{Newton1966} for the proof) we easily calculate
  \begin{align}
	|Z_e(0,n)|=\frac{1}{\sqrt{2\eta_n}}|S(0)-1|=\lambda\sqrt{\frac{2}{\eta_n}}\leq\lambda\sqrt{\frac{2}{\eta_0}}=\lambda\frac{z_{e,K}(0)}{\eta_0^{1/2}}.
  \end{align}
  Plugging this and the bound for $|\dkp Z_e|$ provided by Eq.~\eqref{eq:lem_Ze2} into Eq.~\eqref{eq:lem_Z0e1} finishes the proof.
\QED
\end{proof}

Now, we are able to prove that the boundary terms due to partial integration in the stationary phase argument vanish.

\begin{lemma}\label{lem:ZeInt}
Let $\psi$ satisfy the assumptions stated in Theorem~\ref{thm:main_e}, then
\begin{align}
\int_0^\infty Z_{e}(k,n)\hat\psi(k)\frac{\partial_{k}^2e^{-ik^2t}}{2tk^2+i}\,dk=\int_0^\infty \partial_{k}^2\left(\frac{Z_{e}(k,n)\hat\psi(k)}{2tk^2+i}\right)e^{-ik^2t}\,dk.
\end{align}
\end{lemma}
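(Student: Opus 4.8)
The plan is to mimic exactly the proof of the analogous statement in the absolutely continuous case, namely Lemma~\ref{lem:ZacInt}, since the structure of $Z_e(k,n)$ is entirely parallel to that of $Z_{ac}(k,k')$. First I would write out the two integrations by parts, producing the two boundary terms
\begin{align}
&\left[\frac{Z_{e}(k,n)\hat\psi(k)}{2tk^2+i}\,\partial_{k}e^{-ik^2t}\right]_0^\infty
\quad\text{and}\quad
\left[\partial_{k}\!\left(\frac{Z_{e}(k,n)\hat\psi(k)}{2tk^2+i}\right)e^{-ik^2t}\right]_0^\infty,
\end{align}
and then argue that each vanishes at both endpoints.

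At $k\to\infty$ the argument is straightforward: $Z_e(k,n)$ and $\dkp Z_e(k,n)$ are globally bounded by Lemma~\ref{lem:Ze} (used with the index $K$ omitted), $\hat\psi(k)\to0$ because $\hat\psi$ is square integrable, $\dkp{\hat\psi}(k)$ can diverge no faster than $k$ since $\|\dkp{\hat\psi}w\|_1<\infty$, and the explicit time-dependent rational factors $(2tk^2+i)^{-1}$, $k(2tk^2+i)^{-2}$, $\partial_k e^{-ik^2t}=-2ikt\,e^{-ik^2t}$ decay (or are dominated) appropriately as $k\to\infty$; combining these gives decay of both bracketed expressions. At $k=0$ the time-dependent factor $\partial_k e^{-ik^2t}$ vanishes, so the first boundary term is killed provided $Z_e(0,n)$ and $\hat\psi(0)$ are finite, which holds by Lemmas~\ref{lem:Ze} and~\ref{lem:psihat0}.

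The only delicate point, exactly as in Lemma~\ref{lem:ZacInt}, is the second boundary term at $k=0$, where one must distinguish the two cases. If $F(0)\neq0$ (so $\lambda=0$), then by Lemma~\ref{lem:Z0e} we have $|Z_e(0,n)|\leq k\,z_{e,K}(1)/(s_K\eta_0^{3/2})$, i.e.\ $Z_e(0,n)=0$; moreover $\hat\psi(0)=0$ by Lemma~\ref{lem:psihat0}, $|\dkp Z_e(0,n)|$ is bounded by Lemma~\ref{lem:Ze}, and $\dkp{\hat\psi}$ can diverge no faster than $1/k$ as $k\to0$ because $\|\dkp{\hat\psi}w\|_1<\infty$; inspecting the expansion of $\partial_k\big(Z_e(k,n)\hat\psi(k)(2tk^2+i)^{-1}\big)$ term by term shows it evaluates to zero at $k=0$. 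If instead $F(0)=0$ (so $\lambda=1$), then $S(0)=-1$ \citep[page 356]{Newton1966}, and I would derive explicit expressions for $Z_e(0,n)$ and $\dkp Z_e(0,n)$ from Eq.~\eqref{eq:IntKernE2} — schematically $Z_e(0,n)$ proportional to $(e^{i\eta_nR\cdot 0}+\cdots)$ and $\dkp Z_e(0,n)$ proportional to $\dkp S(0)$ times a bounded factor — together with $\hat\psi(0)=-i\langle f(0,\cdot),\psi\rangle$ and $\dkp{\hat\psi}(0)=\tfrac12\dkp S(0)\hat\psi(0)$ (the latter being Eq.~\eqref{eq:lem_Zac6}, whose derivation is independent of which integral kernel one uses). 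Substituting these four relations into the boundary expression and checking that the resulting linear combination cancels — again the main obstacle, since it is the one step requiring a genuine identity rather than a size estimate — completes the proof. I expect this cancellation to work out essentially verbatim as in Lemma~\ref{lem:ZacInt}, with $k+i\eta_n$ and $k-i\eta_n$ playing the role that $k\pm k'$ played there.
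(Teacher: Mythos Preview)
Your proposal is correct and follows essentially the same approach as the paper's own proof: integrate by parts twice, dispose of the easy boundary terms by the same size estimates (Lemmas~\ref{lem:Ze}, \ref{lem:Z0e}, \ref{lem:psihat0} and $\hat\psi^{(m)}\in L^\infty_{loc}\cap L^1_w$), and in the $\lambda=1$ case compute $Z_e(0,n)=i\sqrt{2/\eta_n}$ and $\dkp Z_e(0,n)=-i\,\dkp S(0)/\sqrt{2\eta_n}$ directly from Eq.~\eqref{eq:IntKernE2}, which together with $\dkp{\hat\psi}(0)=\tfrac12\dkp S(0)\hat\psi(0)$ makes $\dkp Z_e(0,n)\hat\psi(0)+Z_e(0,n)\dkp{\hat\psi}(0)$ vanish identically. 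The cancellation you anticipate does indeed work out verbatim.
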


\begin{proof}
Clearly,
\begin{align}
  &\int_0^\infty Z_{e}(k,n)\hat\psi(k)\frac{\partial_{k}^2e^{-ik^2t}}{2tk^2+i}\,dk
  \\
  &\quad=  \left[\frac{Z_{e}(k,n)\hat\psi(k)}{2tk^2+i}\partial_{k}e^{-ik^2t}\right]_0^\infty\label{eq:lemp_Ze1}\\
  &\quad-  \left[\partial_{k}\left(\frac{Z_{e}(k,n)\hat\psi(k)}{2tk^2+i}\right)e^{-ik^2t}\right]_0^\infty\label{eq:lemp_Ze2}\\
  &\quad+\int_0^\infty \partial_{k}^2\left(\frac{Z_{e}(k,n)\hat\psi(k)}{2tk^2+i}\right)e^{-ik^2t}\,dk
\end{align}
and
\begin{align}
  &\partial_{k}\left(\frac{Z_{e}(k,n)\hat\psi(k)}{2tk^2+i}\right)
  \\
  &\quad=\left(\dkp Z_{e}(k,n)\hat\psi(k)+Z_{e}(k,n)\dkp{\hat\psi}(k)\right)\frac{1}{2t(2tk^2+i)}\label{eq:lemp_Ze3}\\
  &\quad-Z_{e}(k,n)\hat\psi(k)\frac{4tk}{2t(2tk^2+i)^2}.\label{eq:lemp_Ze4}
\end{align}
The same arguments given in the proof of Lemma~\ref{lem:ZacInt} also apply to Eq.~\eqref{eq:lemp_Ze1} and to Eq.~\eqref{eq:lemp_Ze4}, so we are left with handling Eq.~\eqref{eq:lemp_Ze3}. For $k\to\infty$ Eq.~\eqref{eq:lemp_Ze3} tends to zero because the time dependent factor tends to zero like $k^2$, while $Z_{e}(k,n)$ and $\dkp Z_{e}(k,n)$ are bounded for all $k$ (see Lemma~\ref{lem:Ze}), $\hat\psi(k)\to0$ as $k\to\infty$ ($\hat\psi$ is square integrable) and $\dkp{\hat\psi}(k)$ can only diverge slower than $k$ at infinity ($\|\dkp{\hat\psi}w\|_1<\infty$ by assumption). Let us now look at Eq.~\eqref{eq:lemp_Ze3} for $k\to0$.
In case there is no zero resonance ($\lambda=0$), Eq.~\eqref{eq:lemp_Ze3} tends to zero for $k\to0$ because $\hat\psi(0)=0$ (Lemma~\ref{lem:psihat0}), $|\dkp Z_e(k,n)|$ is bounded (Lemma~\ref{lem:Ze}), $Z_e(k,n)\to0$ at least like $k$ (Lemma~\ref{lem:Z0e}), and $\dkp{\hat\psi}(k)$ can only diverge slower than $1/k$ ($\|\dkp{\hat\psi}w\|_1<\infty$ by assumption). In case there is a zero resonance ($\lambda=1$), $S(0)=-1$ (see~\cite[page 356]{Newton1966} for the proof). Hence,
\begin{align}
	Z_e(0,n)&=-i\frac{1}{\sqrt{2\eta_n}}(S(0)-1)=i\sqrt{\frac{2}{\eta_n}}
	\quad\text{and}\label{eq:lemp_Ze5}\\
	\dkp Z_e(0,n)&=\sqrt{\frac{\eta_n}{2}}\left[\left(\frac{R}{\eta_n}+\frac{1}{\eta_n^2}\right)(S(0)+1)+\dkp S(0)\frac{1}{i\eta_n}\right]
	\\
	&=-i\frac{\dkp S(0)}{\sqrt{2\eta_n}}.\label{eq:lemp_Ze6}
\end{align}
On the other hand, we know from the proof of Lemma~\ref{lem:ZacInt} that  (Eq.~\eqref{eq:lem_Zac6})
\begin{align}\label{eq:lemp_Ze7}
  \dkp{\hat\psi}(0)=\frac{1}{2}\dkp S(0)\hat\psi(0).
\end{align}
Plugging Eqs.~\eqref{eq:lemp_Ze5}, \eqref{eq:lemp_Ze6}, and~\eqref{eq:lemp_Ze7} into Eq.~\eqref{eq:lemp_Ze3} evaluated at $k=0$ shows that it vanishes also when a zero resonance is present.\QED
\end{proof}

Finally we are in the position to prove Theorem~\ref{thm:main_e}.

\begin{proof}[of Theorem~\ref{thm:main_e}]
Combining Lemma~\ref{lem:ZeIntKern} with Lemma~\ref{lem:ZeInt} and using
\begin{align}
  e^{-ik^2t}=-\frac{\partial_{k}^2e^{-ik^2t}}{2t(2tk^2+i)},
\end{align}
we obtain
\begin{align}
  \|P_{e}\1_Re^{-iHt}P_{ac}\psi\|_2^2
  &\leq\sum_{n=0}^{N-1}\left|\int_0^\infty \partial_k^2\left(Z_e(k,n)\hat\psi(k)\frac{1}{2t(2tk^2+i)}\right)e^{-ik^2t}\,dk\right|^2.
\end{align}
For $A,B,C\in\R$
\begin{align}\label{eq:elineq2}
    (A+B+C)^2\leq3(A^2+B^2+C^2),
\end{align}
therefore  with the shorthands
  \begin{align}
	&g_1(k,n)  \coloneqq   \ddkp Z_{e}(k,n)\hat\psi(k)+2\dkp Z_{e}(k,n)\dkp {\hat\psi}(k)+Z_{e}(k,n)\ddkp {\hat\psi}(k)  ,
	\\
	&g_2(k,n)  \coloneqq   \dkp Z_{e}(k,n)\hat\psi(k)+Z_{e}(k,n)\dkp{\hat\psi}(k) ,
  \end{align}
we get
\begin{align}
    &\|P_{e}\1_Re^{-iHt}P_{ac}\psi\|_2^2
    \\
    &\quad\leq  \frac{3}{4t^2}\sum_{n=0}^{N-1}\left[\int_0^\infty |g_1(k,n)|\frac{1}{\sqrt{4t^2k^4+1}}\,dk\right]^2\label{eq:thm_maine1}\\
    &\quad+  \frac{3}{t^2}\sum_{n=0}^{N-1}\left[\int_0^\infty |g_2(k,n)|\frac{4tk}{4t^2k^4+1}\,dk\right]^2\label{eq:thm_maine2}\\
    &\quad+  \frac{3}{4t^2}\sum_{n=0}^{N-1}\left[\int_0^\infty |Z_{e}(k,n)\hat\psi(k)|\frac{4t|i-6tk^2|}{|2tk^2+i|^3}\,dk\right]^2  .
    \label{eq:thm_maine3}
  \end{align}

  Note that we will use Eq.~\eqref{eq:elineq2} and $(A+B)^2\leq 2A^2+2B^2$ throughout the proof often without mentioning it. Let $h(k,n)$ be a placeholder for the integrands in Eqs.~\eqref{eq:thm_maine1}-\eqref{eq:thm_maine3}, then the integration region of each of the above integrals will be divided as follows
  \begin{align}\label{eq:thm_maine4}
	\left|\int_0^\infty h(k,n)\,dk\right|^2
	&\leq  2\left[\int_0^{K}|h(k,n)|\,dk\right]^2
	+  2\left[\int_{K}^\infty |h(k,n)|\,dk\right]^2.
  \end{align}
  In contrast to the proof of Theorem~\ref{thm:main_ac} there is no need to handle the region around the diagonal separately because as one can see from Eq.~\eqref{eq:IntKernE2} $Z_e$ has no apparent singularity for $k\geq0$.
  First consider Eq.~\eqref{eq:thm_maine1}. Using the bounds on $Z_e$ and its derivatives given in Lemma~\ref{lem:Ze} and 
  \begin{equation}\label{eq:thm_maine1.2}
	\left[\int_0^{K}dk \frac{1}{\sqrt{4t^2k^4+1}}\right]^2
	\leq\frac{1}{t}\left[\int_0^{\infty}dk \frac{1}{\sqrt{4k^4+1}}\right]^2\leq\frac{2}{t},
  \end{equation}
  we find
  \begin{align}
	&\left[\int_0^{K}dk |g_1(k,n)|\frac{1}{\sqrt{4t^2k^4+1}}\right]^2\\
	&\quad\leq\frac{6}{t}\left[\frac{z_{e,K}^2(2)}{\eta_0^5s_K^4}\|\1_{K}\hat\psi\|_\infty^2
	+\frac{4z_{e,K}^2(1)}{\eta_0^3s_K^2}\|\1_{K}\dkp{\hat\psi}\|_\infty^2
	+\frac{z_{e,K}^2(0)}{\eta_0}\|\1_{K}\ddkp{\hat\psi}\|_\infty^2\right].\label{eq:thm_maine5}
  \end{align}
Equation \eqref{eq:thm_main1.5} provides the bound
\begin{equation}
	\frac{1}{\sqrt{4t^2k^4+1}}  
	\leq  \frac{1}{2t}\left(1+\frac{1}{K^2}\right)w(k),
\end{equation}
that together with Lemma~\ref{lem:Ze} implies
  \begin{align}
	&\left[\int_{K}^\infty dk |g_1(k,n)|\frac{1}{\sqrt{4t^2k^4+1}}\right]^2\\
	&\quad\leq  \frac{1}{4t^2}\left(1+\frac{1}{K^2}\right)^2
	\nonumber\\
	&\quad\times
	\left[\int_0^\infty dk \left(\|\ddkp Z_e(\cdot,n)\|_\infty|\hat\psi|
	+2\|\dkp Z_e(\cdot,n)\|_\infty|\dkp{\hat\psi}|
	+\|Z_e(\cdot,n)\|_\infty|\ddkp{\hat\psi}|\right)w\right]^2\\
	&\quad\leq  \frac{3}{4t^2}\left(1+\frac{1}{K^2}\right)^2
	\nonumber\\
	&\quad\times
	\left[\|\ddkp Z_e(\cdot,n)\|_\infty^2\|\hat\psi w\|_1^2
	+4\|\dkp Z_e(\cdot,n)\|_\infty^2\|\dkp{\hat\psi}w\|_1^2
	+\|Z_e(\cdot,n)\|_\infty^2\|\ddkp{\hat\psi}w\|_1^2\right]\\
	&\quad\leq  \frac{3}{4t^2}\left(1+\frac{1}{K^2}\right)^2
	\bigg[\frac{z_{e}^2(2)}{s^4\eta_0^5}\|\hat\psi w\|_1^2
	+4\frac{z_{e}^2(1)}{s^2\eta_0^3}\|\dkp{\hat\psi}w\|_1^2
	+\frac{z_{e}^2(0)}{\eta_0}\|\ddkp{\hat\psi} w\|_1^2\bigg].\label{eq:thm_maine6}
  \end{align}
  Now consider Eq.~\eqref{eq:thm_maine2}. We use Lemma~4 that gives a bound on $\hat\psi(k)$ for small $k,$ Lemma~\ref{lem:Z0e} that gives a bound on $Z_e(k,n)$ for small $k,$ the bound on $\dkp Z_e$ provided by Lemma~\ref{lem:Ze} and
  \begin{align}
	&\left[\int_0^\infty\left|\frac{4tk}{(2tk^2+i)^2}\right|\,dk\right]^2
	=\left[\int_0^\infty\frac{4k}{4k^4+1}\,dk\right]^2
	=\frac{\pi^2}{4},
 	\\
	&\left[\int_0^\infty k\left|\frac{4tk}{(2tk^2+i)^2}\right|\,dk\right]^2
	=\frac{1}{t}\left[\int_0^\infty\frac{4k^2}{4k^4+1}\,dk\right]^2
	=\frac{\pi^2}{16t}
  \end{align}
  to obtain
  \begin{align}
	&\left[\int_0^{K} |g_2(k,n)|\frac{4tk}{4t^2k^4+1}\,dk\right]^2\\
	&\quad\leq  \bigg[\int_0^\infty \bigg(\lambda\bigg(\frac{z_{e,K}(1)}{\eta_0^{3/2}s_K}|\hat\psi(0)|
	+\frac{z_{e,K}(0)}{\eta_0^{1/2}}\|\1_{K}\dkp{\hat\psi}\|_\infty\bigg)
	\nonumber\\
	&\qquad+2\frac{z_{e,K}(1)}{\eta_0^{3/2}s_K}\|\1_{K}\dkp{\hat\psi}\|_\infty k\bigg)\frac{4tk}{4t^2k^4+1}\,dk\bigg]^2\\
    &\quad\leq  \frac{\pi^2}{2}\bigg[2\lambda\bigg(\frac{z_{e,K}^2(1)}{\eta_0^3s_K^2}|\hat\psi(0)|^2
    +\frac{z_{e,K}^2(0)}{\eta_0}\|\1_{K}\dkp{\hat\psi}\|_\infty^2\bigg)
	+\frac{z_{e,K}^2(1)}{\eta_0^3s_K^2t}\|\1_{K}\dkp{\hat\psi}\|_\infty^2\bigg].\label{eq:thm_maine7}
  \end{align}
We also use the bound
\begin{equation}
	\frac{4tk}{4t^2k^4+1}
	\leq\frac{1}{tK}\left(1+\frac{1}{K^2}\right)w(k),
\end{equation}
provided by Eq.~\eqref{eq:thm_main2.7}, which gives
  \begin{align}
	&\left[\int_{K}^\infty |g_2(k,n)|\frac{4tk}{4t^2k^4+1}\,dk\right]^2\\
	&\quad\leq  \frac{1}{t^2}\left(1+\frac{1}{K^2}\right)^3\left[\int_{K}^\infty
	\left(|\hat\psi(k)|\frac{z_{e}(1)}{\eta_0^{3/2}s}
	+|\dkp{\hat\psi}(k)|\frac{z_{e}(0)}{\eta_0^{1/2}}\right)w(k)\,dk\right]^2\\
	&\quad\leq  \frac{2}{t^2}\left(1+\frac{1}{K^2}\right)^3
	\left[\|\hat\psi w\|_1^2\frac{z_{e}^2(1)}{\eta_0^3s^2}
	+\|\dkp{\hat\psi}w\|_1^2\frac{z_{e}^2(0)}{\eta_0}\right].\label{eq:thm_maine8}
  \end{align}
  Finally consider Eq.~\eqref{eq:thm_maine3}. In addition to the bounds on $Z_e$ and $\hat\psi$ provided by Lemmas~\ref{lem:Z0e} and \ref{lem:psihat0} respectively, that we have used to treat Eq.~\eqref{eq:thm_maine2}, we now need to use the bound 
  \begin{equation}
	4\frac{|i-6tk^2|}{|2tk^2+i|^3}  \leq  \frac{12}{4t^2k^4+1} ,
  \end{equation}
from Eq.~\eqref{eq:thm_main3.1}, that gives
  \begin{align}
	\left[\int_0^\infty 4t\frac{|i-6tk^2|}{|2tk^2+i|^3}\,dk\right]^2
	&\leq\left[\int_0^\infty \frac{12t}{4t^2k^4+1}\,dk\right]^2
	\\
	&=t\left[\int_0^\infty \frac{12}{4k^4+1}\,dk\right]^2
	=9\pi^2 t,\label{eq:thm_maine3.2}
	\displaybreak[0]\\
	\left[\int_0^\infty 4t\frac{|i-6tk^2|}{|2tk^2+i|^3}k\,dk\right]^2
	&\leq\left[\int_0^\infty \frac{12tk}{4t^2k^4+1}\,dk\right]^2
	\\
	&=\left[\int_0^\infty \frac{12k}{4k^4+1}\,dk\right]^2
	=\frac{9\pi^2}{4},\label{eq:thm_maine3.3}
	\displaybreak[0]\\
  	\left[\int_0^\infty 4t\frac{|i-6tk^2|}{|2tk^2+i|^3}k^2\,dk\right]^2
	&\leq\left[\int_0^\infty \frac{12tk^2}{4t^2k^4+1}\,dk\right]^2
	\\
	&=\frac{1}{t}\left[\int_0^\infty \frac{12k^2}{4k^4+1}\,dk\right]^2
	=\frac{9\pi^2}{16 t},
	\label{eq:thm_maine3.4}
  \end{align}
from which we obtain
\begin{align}
	&\left[\int_0^K |Z_{e}(k,n)\hat\psi(k)|\frac{4t|i-6tk^2|}{|2tk^2+i|^3}\,dk\right]^2\\
	&\quad\leq\bigg[\int_0^\infty \bigg(\lambda|\hat\psi(0)|\frac{z_{e,K}(0)}{\eta_0^{1/2}}
	+\lambda\bigg(\frac{z_{e,K}(0)}{\eta_0^{1/2}}\|\1_{K}\dkp{\hat\psi}\|_\infty
	+\frac{z_{e,K}(1)}{\eta_0^{3/2}s_K}|\hat\psi(0)|\bigg)k\nonumber\\
	&\qquad+\frac{z_{e,K}(1)}{\eta_0^{3/2}s_K}\|\1_{K}\dkp{\hat\psi}\|_\infty k^2\bigg)\frac{12t}{4t^2k^4+1}\,dk\bigg]^2\\
	&\quad\leq  27\pi^2 t\bigg[\lambda|\hat\psi(0)|^2\frac{z_{e,K}^2(0)}{\eta_0}
	+\frac{\lambda}{2t}\bigg(\frac{z_{e,K}^2(0)}{\eta_0}\|\1_{K}\dkp{\hat\psi}\|_\infty^2
	+\frac{z_{e,K}^2(1)}{\eta_0^3s_K^2}|\hat\psi(0)|^2\bigg)\nonumber\\
	&\qquad+\frac{z_{e,K}^2(1)}{4t^2\eta_0^3s_K^2}\|\1_{K}\dkp{\hat\psi}\|_\infty^2\bigg].\label{eq:thm_maine9}
  \end{align} 
If we also use the bounds 
  \begin{equation}
	4t\frac{|i-6tk^2|}{|2tk^2+i|^3}
	\leq  \frac{3}{t}\left(1+\frac{1}{K^2}\right)^2w(k)
  \end{equation}
from Eq.~\eqref{eq:thm_main3.9}, and those on $Z_e$ given in Lemma~\ref{lem:Ze}, we get
  \begin{align}
	&\left[\int_K^\infty |Z_{e}(k,n)\hat\psi(k)|\frac{4t|i-6tk^2|}{|2tk^2+i|^3}\,dk\right]^2
	\leq\frac{9}{t^2}\left(1+\frac{1}{K^2}\right)^4\frac{z_{e}^2(0)}{\eta_0}\|\hat\psi w\|^2_1.\label{eq:thm_maine10}
  \end{align}
Making use of Eq.~\eqref{eq:thm_maine4}, we plug Eq.~\eqref{eq:thm_maine5} and Eq.~\eqref{eq:thm_maine6} into Eq.~\eqref{eq:thm_maine1}, Eq.~\eqref{eq:thm_maine7} and Eq.~\eqref{eq:thm_maine8} into Eq.~\eqref{eq:thm_maine2},  Eq.~\eqref{eq:thm_maine9} and Eq.~\eqref{eq:thm_maine10} into Eq.~\eqref{eq:thm_maine3} respectively.
That completes the proof.
\QED
\end{proof}

\section*{Appendix: Physical Meaning of Resonances, Virtual States, and Zero-Resonance}
\label{sec:PhysMeaningZeros}
\addcontentsline{toc}{section}{Appendix: Physical Meaning of Resonances,\\Virtual States, and Zero-Resonance}

\begin{figure}
  \centering
  \includegraphics[width=.5\textwidth]{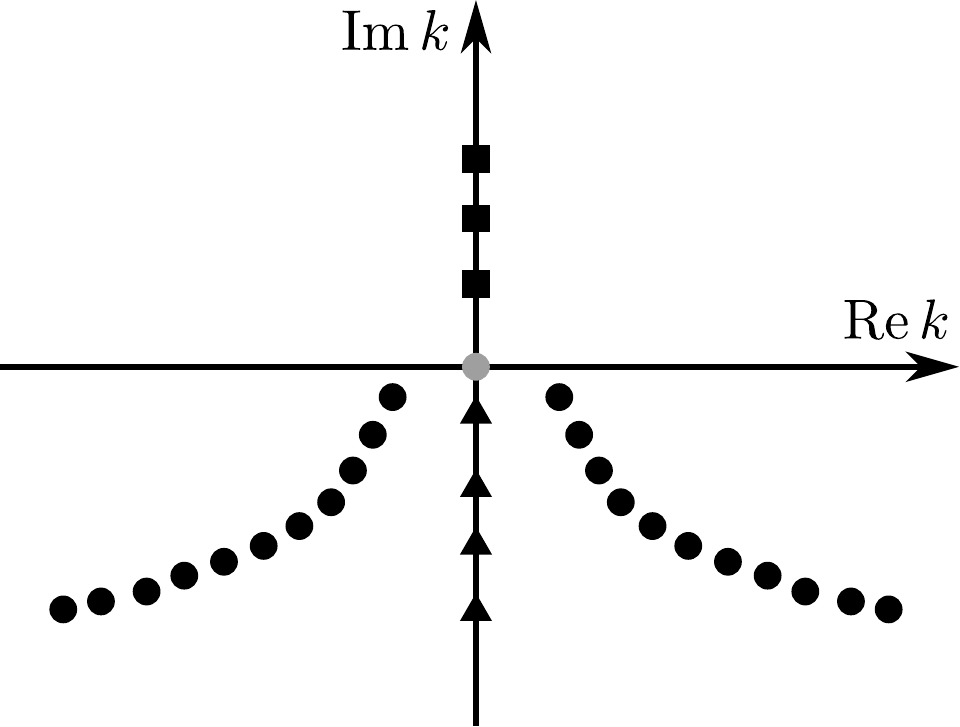}
  \caption{Location of the zeros of the Jost function $F(k)$, distinguished in bound states ($\blacksquare$), virtual states ($\blacktriangle$), resonances (\protect\raisebox{-.3ex}{\large$\bullet$}), and zero-resonance (\protect\raisebox{-.3ex}{\large\textcolor{gray}{$\bullet$}}).}
  \label{fig:Zeros}
\end{figure}

The zeros of the Jost function have important physical meaning, that we will now briefly discuss  (see also Fig.\ \ref{fig:Zeros}).
In the following, we will use the symbols $\mu$ and $\nu$ to denote strictly positive real numbers.

Consider at first a zero of the form $i \nu$.
It corresponds to a \emph{bound state}, indeed the function $f(i\nu,r)$ (see Eq.\ \eqref{eq:fboundary}) is a solution of the Schrödinger equation \eqref{eq:Schroedinger} such that $f(i\nu,r)=e^{-\nu r}$ for $r\geq R_V$, therefore it is square integrable.
In other words, $f(i\nu,r)$ is the eigenfunction corresponding to the eigenvalue $-\nu^2$.
We assumed that the potential had compact support, therefore every state with positive energy can tunnel away, and there can be only bound states with negative energy.
The zeros that  correspond to bound states are simple  \citep[Th.\ XI.58d, page 140]{RS3} and finitely many. 
The latter property can be easily established from Eq.\ \eqref{eq:LogFDirectionPlusI}, that implies that $|F(i\nu)|\to 1$ as $\nu\to\infty$, therefore the Jost function is non-zero from a certain value of $\nu$ on, and  the zeros of a non-zero  entire function can not have finite accumulation points  \citep[see also][page 361]{Newton1966}.
The exact number of eigenstates is given by Levinson's theorem \citep[Theorem\ XI.59, page 142]{RS3}.

A zero of the form $\pm\mu+i\nu$ would  correspond to a square integrable eigenfunction with eigenvalue $(\pm\mu+i\nu)^2\in\C$, but this cannot be the case because the Hamiltonian is self-adjont and has therefore only real eigenvalues.
As a consequence, the bound states are the only zeros in the positive imaginary half-plane.

The zeros in the negative imaginary half-plane correspond to functions $f$ that increase exponentially in $r$ as $r\to\infty$, and are therefore not square integrable.
They are not physical states, but have nevertheless a dynamical meaning.
Consider a zero of the form $\pm\mu-i\nu$; the property $\bar F(\bar k)=F(-k)$ \citep[12.32a, page 340]{Newton1966} implies that these zeros come in couples symmetric with respect to the imaginary axis.
The time evolution of the $f$ corresponding to such a zero is given by the factor 
\begin{equation}
e^{-i(\pm\mu-i\nu)^2t}=e^{-i(\mu^2-\nu^2)t} e^{\mp 2\mu\nu t} ,
\end{equation}
 i.e.\ $f$  exponentially increases and decreases in time for $-\mu$ and $\mu$, respectively.
Therefore, the $f$ corresponding to a zero of the form $\mu-i\nu$ can be a good model for a meta-stable state: a normalizable state in some sense close to this $f$ will have a time evolution similar to it, and can be used to describe a decaying system \citep{Gamow1928,Skibsted86}.
These zeros are called \emph{resonances}.
Given a potential, the resonances can be found through a scattering experiment: when the projectile has energy $\mu^2$ there is a chance that it forms the meta-stable state and is later released in a random direction, generating a peak in the cross section.
The width of the peak can be shown to be related to $\nu$  \citep[see for example][]{Bohm1993}.

Besides the resonances, in the negative imaginary plane there can be zeros of the form $-i\nu$ too, that also correspond to functions $f$ not square integrable. 
They are called \emph{virtual states}.
Their time evolution is expected to be given by a phase, therefore a physical state similar to a virtual state will evolve for some time almost only by a phase.%
\footnote{The time evolution of a physical state similar to a virtual state or to a resonance will after some time diverge from the multiplication by $e^{-ik^2t}$ because of the accumulating error.}
As a consequence, they can  also be considered meta-stable \citep[page 487]{Bohm1993}.
In scattering experiments they manifest as a peak at zero energy.
The virtual states are finitely many, and this can be proven in analogy with the bound states, using Eq.\ \eqref{eq:LogFDirectionMinusI}.

The only place on the real axis where there can be a zero is the origin \citep[page 346]{Newton1966}; such a zero is called \emph{zero-resonance}, and it must be simple \citep[pages 327, 328]{Newton1960}.
The corresponding $f$ is not square integrable, and does not change at all in time.
A zero-resonance is  a meta-stable state, and leads to a peak at zero energy in the scattering cross section, but the presence of a zero in $k=0$ has also a strong influence on the long-time behavior of \emph{any} wave function \citep{JensenKato1979}.
This circumstance can be understood in terms of the stationary phase argument: long time corresponds to $k=0$.
It should be noted, that the presence of a zero-resonance is very untypical.

We observe that the resonances are infinitely many, indeed the function $g$ defined in  \eqref{eq:GDef} is of fractional order, and has therefore infinitely many zeros (see \citep[][page 361]{Newton1966}).
Moreover, there are finitely many resonances below any half-line contained in the negative imaginary half-plane that goes through the origin, and inside any stripe in the negative imaginary half-plane \citep[page 361]{Newton1966}.
That implies that, denoting the resonances by $\alpha_n-i\beta_n$ and ordering them with growing modulus, then as $n\to\infty$
\begin{equation}
\beta_n \to \infty,
\qquad
\alpha_n \to \infty,
\qquad
\frac{\beta_n}{\alpha_n} \to 0  .
\end{equation}
This implies also that the sets $\{\alpha_n\}_{n\in\N^0}$ and $\{\beta_n\}_{n\in\N^0}$ have a minimum.

\begin{remark}
If the potential has a shape like a single barrier, then we expect that $\min_n \beta_n=\beta_0$, indeed states with higher energy impinge more often on the barrier than states with lower energy, and therefore have more occasions to tunnel out.
On the other side, if the potential has a more complicated shape this simple expectation could be wrong; for example if the potential has several barriers, then a state with higher energy after having passed the first barrier has more occasions to go back inside the first barrier than a state with lower energy.
\end{remark}

%
%
\bibliography{biblio}
\bibliographystyle{bibPhDN.bst}
\mbox{}
\cleardoublepage
\pagestyle{empty}%
\mbox{}
\includepdf{./Cover/Back}%

\end{document}